\documentclass[11pt]{article}
\usepackage{amsthm,amssymb,physics,bm,fullpage,mathtools}
\usepackage[numbers,sort]{natbib}
\usepackage{graphicx}
\usepackage{subcaption}
\usepackage[dvipsnames]{xcolor}
\usepackage{algpseudocode,times,thmtools,thm-restate,booktabs,float}
\usepackage[ruled,vlined,linesnumbered]{algorithm2e}
\SetKwProg{Procedure}{Procedure}{}{}
\MakeRobust{\Call}

\usepackage{enumitem}
\setlist{listparindent=\parindent,parsep=0pt,itemsep=1em}
\setlist[itemize]{label=$-$,noitemsep}
\setlist[enumerate]{itemsep=1mm}
\setlist[description]{leftmargin=\parindent}

\usepackage{tocloft}
\newcommand{\SwapSens}{\mathsf{SwapSens}}
\newcommand{\SwapClo}{\mathsf{SwapClo}}
\newcommand{\SwapDist}{\mathsf{SwapDist}}

\newcommand{\val}{\mathsf{val}}
\newcommand{\opt}{\mathsf{opt}}
\newcommand{\Ham}{\mathsf{Ham}}
\newcommand{\EMD}{\mathsf{EMD}}

\newcommand{\dist}{d}
\newcommand{\Var}{\mathrm{Var}}
\DeclareMathOperator{\poly}{poly}

\theoremstyle{definition}
\newtheorem{theorem}{Theorem}[section]
\newtheorem{lemma}[theorem]{Lemma}
\newtheorem{corollary}[theorem]{Corollary}
\newtheorem{proposition}[theorem]{Proposition}
\newtheorem{claim}[theorem]{Claim}
\theoremstyle{definition}
\newtheorem{definition}[theorem]{Definition}

\makeatletter

\providecommand*{\theHlemma}{}

\renewcommand*{\theHlemma}{\theHtheorem}

\makeatother

\usepackage{hyperref}
\definecolor{newcolor}{hsb}{0.6,1,0.75}
\hypersetup{
	colorlinks=true,
	urlcolor=newcolor,
	linkcolor=newcolor,
	citecolor=newcolor,
	unicode
}
\usepackage{cleveref}
\crefname{claim}{claim}{claims}
\Crefname{claim}{Claim}{Claims}

\usepackage[textwidth=1.25in]{todonotes}

\newcommand{\E}{\mathop{\mathbf E}}
\newcommand{\F}{\mathbb{F}}
\newcommand{\Tau}{\mathsf{T}}

\allowdisplaybreaks

\title{Non-Signaling Locality Lower Bounds for Dominating Set}
\author{
Noah Fleming\thanks{Supported by the Swedish Research Council under grant number 2025-06762.} \\
Lund \& Columbia \\
\texttt{noah.fleming@cs.lth.se}
\and
Max Hopkins\thanks{Supported by NSF Grant Number DMS-2424441} \\
IAS \\
\texttt{nmhopkin@ias.edu}
\and
Yuichi Yoshida\thanks{Supported by JSPS KAKENHI Grant Number JP24K02903.} \\
National Institute of Informatics \\
\texttt{yyoshida@nii.ac.jp}
}

\begin{document}

\pagenumbering{roman}
\maketitle

\begin{abstract}
Minimum dominating set is a basic local covering problem and a core task in distributed computing. Despite extensive study, in the classic LOCAL model there exist significant gaps between known algorithms and lower bounds. Chang and Li (PODC '23) prove an $\Omega(\log n)$-locality lower bound for a constant factor approximation, while Kuhn--Moscibroda--Wattenhofer (JACM '16) provide an algorithm beating this bound beyond $\log \Delta$-approximation, along with a weaker lower bound for this degree-dependent setting scaling roughly with $\min\{\log \Delta/\log\log \Delta,\sqrt{\log n/ \log\log n}\}$. Unfortunately, this latter bound is very weak for small $\Delta$, and never recovers the Chang--Li bound, leaving central questions: does $O(\log \Delta)$-approximation require $\Omega(\log n)$ locality, and do such bounds extend beyond LOCAL?

In this work, we take a major step toward answering these questions in the non-signaling model, which strictly subsumes the LOCAL, quantum-LOCAL, and bounded-dependence settings. We prove every $O(\log \Delta)$-approximate non-signaling distribution for dominating set requires locality $\Omega(\log n / (\log \Delta \cdot \poly\log\log \Delta))$. Further, we show for some $\beta \in (0,1)$, every $O(\log^\beta \Delta)$-approximate non-signaling distribution requires locality $\Omega(\log n / \log \Delta)$, which combined with the KMW bound yields a degree-independent $\Omega(\sqrt{\log n/\log\log n})$ quantum-LOCAL lower bound for $O(\log^\beta \Delta)$-approximation algorithms, removing the weak degree-dependent term from KMW's classic result.

The proof is based on two new low-soundness sensitivity lower bounds for label cover, one via Impagliazzo--Kabanets--Wigderson-style parallel repetition (SICOMP '12) with degree reduction and one from a sensitivity-preserving reworking of the Dinur--Harsha framework (SICOMP '13), together with the reductions from label cover to set cover to dominating set and the sensitivity-to-locality transfer theorem of Fleming and Yoshida (SODA '26).
\end{abstract}

\thispagestyle{empty}
\newpage
\tableofcontents
\thispagestyle{empty}
\clearpage
\pagestyle{plain}
\pagenumbering{arabic}

\section{Introduction}

Minimum dominating set is a classic local covering problem asking for the smallest set of vertices in a graph such that every vertex is either selected, or has a selected neighbor. The problem is a core optimization task in distributed computing, especially in the LOCAL model, where the trade-off between approximation quality and locality has been studied extensively \cite{ghaffari2017complexity,chang2023complexity,kuhn2016local,coupette2021breezing,dory2022near,kuhn2003constant,jia2002efficient}, yet remains far from fully understood.

State of the art bounds for dominating set in the LOCAL model come in two flavors. On the one hand, Chang and Li \cite{chang2023complexity} show a strong $\Omega(\log n)$ locality lower bound against some constant-factor approximation. On the other hand, allowing a \textit{degree-dependent} approximation factor, in seminal work Kuhn--Moscibroda--Wattenhofer \cite{kuhn2016local} prove this lower bound can be circumvented, achieving a $(1+\varepsilon)\log \Delta$-approximation with locality $O(\frac{\log n}{\log(1+\varepsilon)})$ for any $\varepsilon>0$, and complement this with a corresponding degree-dependent lower bound showing any $\log \Delta$-approximation requires locality at least
\[
    \Omega\!\left(\min\left\{\frac{\log \Delta}{\log \log \Delta}, \sqrt{\frac{\log n}{\log \log n}}\right\}\right).
\]

Unfortunately, especially for smaller $\Delta$, this latter bound is extremely weak, scaling just with $\log \Delta$, and \textit{never} achieves the $\Omega(\log n)$ scale of Chang and Li. This has left open two central problems: 
\begin{enumerate}
    \item Can Chang and Li's $\Omega(\log n)$ lower bound be extended to $O(\log \Delta)$-approximation to match existing algorithms?
    \item Do such bounds extend beyond the LOCAL model to modern distributed settings such as quantum-LOCAL and bounded-dependence? \cite{gavoille2009can,holroyd2017finitary,akbari2025online,coiteux2024no}
\end{enumerate}
In particular, while the KMW bound is known to extend to quantum-LOCAL and beyond \cite{balliu2025new}, Chang and Li's is not, leaving an $\Omega(\log n)$ bound open even for constant factor approximation.

In this work, we take a major step toward resolving these problems. We prove two new lower bounds for minimum dominating set in the non-signaling model, where an algorithm has locality $t$ if its output distribution has the property that the marginals on any vertex set depend only on the radius-$t$ neighborhood of that set; this model strictly subsumes the LOCAL, quantum-LOCAL, and bounded-dependence models at the same locality~\cite{linial1992locality,gavoille2009can,holroyd2017finitary,akbari2025online,coiteux2024no}. Our first main result extends Chang and Li's $\Omega(\log n)$ lower bound all the way to $O(\log \Delta)$-approximation against non-signaling, albeit with locality that decays with larger $\Delta$:
\begin{theorem}\label{thm:intro-locality-dominating-set-IKW}
    There exists an integer $\Delta_0\ge 2$ and an absolute constant $C>0$ such that for all sufficiently large $n$ and every integer $\Delta \geq \Delta_0$, any $C \log \Delta$-approximate non-signaling distribution for the dominating set problem on $n$-vertex graphs with maximum degree at most $\Delta$ requires locality at least
    \[
    \Omega\left(\frac{\log n}{\log \Delta}\cdot\frac{1}{\log\log^3 \Delta}\right)
    \]
\end{theorem}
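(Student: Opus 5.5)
The plan follows the pipeline described in the abstract: a sensitivity lower bound for low-soundness label cover, the classical reductions label cover $\to$ set cover $\to$ dominating set (both carried out so that they preserve sensitivity up to the blowup in instance size), and finally the Fleming--Yoshida sensitivity-to-locality transfer theorem. The transfer theorem converts a lower bound on the sensitivity of every $\alpha$-approximate (randomized, Earth-mover-stable) algorithm for dominating set on bounded-degree graphs into a lower bound on the locality of every $\alpha$-approximate non-signaling distribution. Its content is the following. A non-signaling distribution of locality $t$ on a graph of maximum degree $\Delta$ yields an algorithm whose output changes, in expectation, on at most $\Delta^{O(t)}$ coordinates under a single edit. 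Hence sensitivity $\Omega(n^{c})$ forces $\Delta^{O(t)}\ge n^{c}$, that is, $t=\Omega(c\log n/\log\Delta)$. The whole task is therefore to obtain polynomial sensitivity $n^{c}$ with $c \ge 1/\poly\log\log\Delta$ against $C\log\Delta$-approximation on degree-$\Delta$ graphs.

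First, I would construct label cover instances with soundness $\varepsilon$ and small degree, together with a sensitivity lower bound. Concretely, every algorithm that distinguishes (or approximately solves) them has swap-sensitivity $\Omega(n^{c})$, where the constant $c$ degrades only like $1/\poly\log(1/\varepsilon)$. I would start from the constant-soundness sensitivity lower bound of Fleming--Yoshida for a bounded-degree CSP, say 3-coloring or Max-3LIN on expanders. I would then apply derandomized direct-product / IKW-style parallel repetition with $k\approx \poly\log(1/\varepsilon)$ to push the soundness down to $\varepsilon$, followed by degree reduction so that the label cover graph has degree $\poly(1/\varepsilon)$. The key point is that each repetition step is a local gadget: one edge of the base instance influences only $\poly(1/\varepsilon)$ constraints of the new one. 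Sensitivity therefore transfers with only a polynomial loss in the size of the instance, and the exponent $c$ shrinks by a factor $\poly(k)$.

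Next, I would apply the Lund--Yannakakis/Feige reduction from label cover to set cover using partition systems. Soundness $\varepsilon\approx 1/\poly\log\Delta$ then gives a gap of $C\log\Delta$, with set sizes and degrees bounded by $\Delta$. The standard set cover $\to$ dominating set reduction (sets and elements as vertices, with a clique or biclique on the sets) must be done with care to keep the maximum degree bounded by $\Delta$; for example, one can connect set-vertices through an auxiliary bounded-degree structure that does not change $\opt$ by more than a constant factor. Both reductions are local, so a $C\log\Delta$-approximate dominating set algorithm with low sensitivity yields a low-sensitivity approximate solution for the label cover instance. The solution is decoded by randomly picking labels from the covering sets, and this decoding preserves sensitivity up to $\poly(\Delta)$ factors. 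Choosing $1/\varepsilon=\poly\log\Delta$ makes $k=O(\log\log\Delta)$, so $c=1/\poly(\log\log\Delta)$; tracking the exponent carefully should give the claimed $\log^3\log\Delta$. Plugging this into the transfer theorem yields $t=\Omega\bigl(\log n/(\log\Delta\cdot\log\log^3\Delta)\bigr)$.

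The main obstacle is the first step. Parallel repetition analyses in the style of IKW are about the value of the repeated instance, not about the stability of algorithms. I need a sensitivity-preserving argument: from a low-sensitivity algorithm that finds an $\varepsilon$-good labeling of the repeated instance, build a low-sensitivity algorithm for the base instance. My first attempt would be to make the IKW decoding ("restrict to a random subset and take a plurality vote") stable. I would do this by using shared randomness that is independent of the input and by replacing plurality with a smoothed, randomized selection. This way, a single edit to the input perturbs the decoded assignment on only a small number of coordinates in expectation.
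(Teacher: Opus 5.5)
\textbf{There is a gap: the sensitivity transfer through parallel repetition does not work the way you describe.}

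Your overall pipeline matches the paper's: the Fleming--Yoshida base family, IKW repetition, right-degree reduction, Feige/Lund--Yannakakis with $l=\Theta(\log\Delta)$, a bounded-degree helper layer for dominating set, and then the transfer theorem. Your stabilized decoder is also in the spirit of the paper's localized decoder with random-threshold selection.

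The step you call the key point is false for IKW. With $U=\binom{E}{k}$, a base edge lies in a $\Theta(k/N)$ fraction of the left vertices. So one base swap changes about $|U|k/N=N^{\Theta(k)}$ left predicates, not $\poly(1/\varepsilon)$ constraints. (Genuinely derandomized, HDX-based direct products, where your locality claim would hold, need a routing/embedding step that the paper notes destroys sensitivity.)

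What saves the left side is not locality. It is that the decoder reads only $O(\log N/\varepsilon)$ nearly uniform left labels per base coordinate. Its Lipschitz constant, about $N\log N/(\varepsilon|U|)$, therefore cancels the huge transformation constant.

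No such cancellation is available on the right side. The decoder reads a single right label $\pi_V(A)$, and that label gates the output at every base coordinate. Meanwhile $|V|=\binom{N}{k'}$ with $k'=\Theta(\sqrt k)$ is far smaller than $|U|k/N$. So even a $1$-sensitive algorithm may rewrite the entire right labeling between the two encoded instances. A scalar pullback through the IKW instance loses a factor of order $|U|/|V|=N^{\Theta(k)}$, which swamps $N^{c}$.

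\textbf{Missing idea: track left and right perturbations separately, so that degree reduction does real work.}
\begin{itemize}
\item The paper proves only a forbidden budget point for IKW. No algorithm is simultaneously $N^{c-\tau}$-sensitive on the left and $\frac{|V|}{|E_{\mathrm{IKW}}|}N^{c-\tau}$-sensitive on the right, where the right budget is below $1$.
\item The degree-reduction recovery copies left labels and reads a uniformly random copy in each cloud of size $|E_{\mathrm{IKW}}|/|V|$. It therefore divides right-side perturbations by exactly that right degree.
\item This turns the forbidden point into the balanced scalar bound $N^{c-\tau}/d_\star=n^{\Omega(1/k)}$.
\item Equivalently, you can analyze repetition and degree reduction as one scalar reduction. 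A change at a cloud vertex then moves the decoded base assignment by only $O(N/|E_{\mathrm{IKW}}|)$ in expectation.
\end{itemize}
So degree reduction is not just degree cleanup; it is where the right-side sensitivity is recovered.

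\textbf{Parameter slip.} With IKW soundness $\exp(-\Theta(\sqrt k))$, reaching $\varepsilon\le 1/l^2$ for $l=\Theta(\log\Delta)$ needs $k=\Omega((\log\log\Delta)^2)$, not $O(\log\log\Delta)$. The paper takes $k=\Theta((\log\log\Delta)^3)$. The exponent degrades by exactly a factor $k$, because $n=N^{\Theta(k)}$, and this produces the $\log\log^3\Delta$. You also need to work in the regime $\log\Delta(\log\log\Delta)^3\le c_0\log n$, so that the $\Gamma^{1+o(1)}$ loss in the label cover to set cover step is absorbed. Outside that regime the stated bound is trivial.
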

Since \Cref{thm:intro-locality-dominating-set-IKW} is proved for non-signaling, it also applies to LOCAL. 
Thus for every fixed constant $\Delta$ with $\Delta \ge \Delta_0$, we get an $\Omega(\log n)$ lower bound matching the scale of Chang--Li while upgrading the approximation factor from some fixed constant to $O(\log \Delta)$, essentially matching KMW's LOCAL algorithm.

\Cref{thm:intro-locality-dominating-set-IKW} is weaker for large $\Delta$, but can be improved in combination with KMW's bound to achieve a \textit{degree-unconditional} lower bound of $\tilde \Omega(\sqrt{\log n})$ against $O(\log \Delta)$-approximation in quantum-LOCAL. 
In fact, if one could remove the $\poly\log\log \Delta$ term from \Cref{thm:intro-locality-dominating-set-IKW}, the KMW bound becomes stronger only when $\Delta \geq 2^{\Theta(\sqrt{\log n \log\log n})}$, \textit{exactly} the point at which the classic bound's second term dominates. In other words, removing the $\poly\log\log \Delta$ term in \Cref{thm:intro-locality-dominating-set-IKW}, we could \textit{strictly} upgrade KMW's bound to
\[
\Omega\left(\sqrt{\frac{\log n}{\log\log n}}\right)
\]
against $O(\log \Delta)$-approximation, removing the degree-based term entirely. Our second main result shows this is indeed possible if one is willing to settle for a slightly weaker approximation factor:

\begin{theorem}\label{thm:intro-locality-dominating-set}
    There exists $\beta\in (0,1)$ and an integer $\Delta_0\ge 2$ such that for all sufficiently large $n$ and every integer $\Delta \geq \Delta_0$, any $O(\log^\beta \Delta)$-approximate non-signaling distribution for the dominating set problem on $n$-vertex graphs with maximum degree at most $\Delta$ requires locality at least
    \[
    \Omega\left(\frac{\log n}{\log \Delta}\right)
    \]
\end{theorem}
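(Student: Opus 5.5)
We will prove \Cref{thm:intro-locality-dominating-set} by composing three ingredients already outlined in the abstract: a low-soundness sensitivity lower bound for label cover obtained from a sensitivity-preserving reworking of the Dinur--Harsha framework; the standard gap-preserving reductions from label cover to set cover to dominating set; and the sensitivity-to-locality transfer theorem of Fleming and Yoshida.

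\textbf{Step 1: label cover sensitivity bound.} The first goal is a family of label cover instances $\mathcal{L}_n$ on $N=\Theta(n)$ vertices with the following parameters:
\begin{itemize}
\item constant left/right degree $D$;
\item alphabet size $|\Sigma|=2^{\Theta(k)}$;
\item completeness $1$ and soundness $\varepsilon=2^{-\Omega(k)}$;
\item every algorithm that distinguishes satisfiable from $\varepsilon$-sound instances (or outputs near-optimal labelings) has swap-sensitivity $\Omega(N)$.
\end{itemize}
The plan is to start from an expander-based CSP whose $\Omega(N)$ sensitivity lower bound is inherited from Chang--Li-style constructions and Fleming--Yoshida. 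We then run the Dinur--Harsha composition: an outer PCP with a decodable inner PCP (Hadamard/long-code style), iterated a constant number of times. The key requirement is that each composition step is \emph{local}. Each new constraint must depend on $O(1)$ old ones, and each old vertex must blow up into $O_k(1)$ new ones. A single swap then changes only $O(1)$ constraints of the composed instance, and solutions can be decoded back locally. This lets sensitivity lower bounds pass through each step with only constant loss.

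\textbf{Step 2: reduction to dominating set.} We apply the Lund--Yannakakis/Feige reduction, using a set system with $\log$-sized partition structure, to get set cover instances with gap roughly $\log^{\beta}$ of the universe. Sets have size, and elements have frequency, bounded by $2^{O(k)}$. We then apply the standard reduction from set cover to dominating set:
\begin{itemize}
\item take the bipartite incidence graph;
\item make the set side a clique, or use a bounded-degree substitute gadget that keeps $\Delta=2^{\Theta(k)}$.
\end{itemize}
We choose $k=\Theta(\log\Delta)$. The approximation gap then becomes $\Theta(\log^\beta\Delta)$, where $\beta$ is fixed by the soundness-to-alphabet tradeoff of Step 1 ($\varepsilon=|\Sigma|^{-\Omega(1)}$). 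Each reduction is $r$-local with $r=O(1)$ hops in the label cover graph, and $n$ grows by at most a $\poly(\Delta)$ factor.

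\textbf{Step 3: sensitivity-to-locality transfer.} We now invoke the Fleming--Yoshida transfer theorem. A non-signaling distribution with locality $t$ on graphs of maximum degree $\Delta$ yields an algorithm whose sensitivity is bounded by the neighborhood size $\Delta^{O(t)}$. Combined with the $\Omega(n)$ sensitivity lower bound transported through the reductions, this gives $\Delta^{O(t)}\ge n^{\Omega(1)}$, hence $t=\Omega(\log n/\log\Delta)$.

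\textbf{Main obstacle.} The hard step is Step 1: making the Dinur--Harsha composition sensitivity-preserving while reaching soundness polynomially small in the alphabet with constant degree. Any degree blow-up from composition or from degree reduction would surface as extra $\operatorname{poly}\log\log\Delta$ losses, as in \Cref{thm:intro-locality-dominating-set-IKW}. The first approach to try is to keep only constantly many composition rounds, with $k$-dependent but sensitivity-bounded gadgets. We would then prove a local decoding lemma: any labeling of the composed instance of value above $\varepsilon$ locally decodes, via list decoding with $O(1)$-radius dependence, to a labeling of the outer instance of non-trivial value. This decoding must commute with swaps.
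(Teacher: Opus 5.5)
\textbf{The gap is your Step~1, and it is not a technicality.} You postulate a label cover family with all of the following at once, uniformly for $k=\Theta(\log\Delta)$:
perfect completeness; soundness $2^{-\Omega(k)}$; \emph{both} alphabets of size $2^{O(k)}$; size $\Theta(n)$; and $\Omega(n)$ sensitivity.

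No known construction gives this.
\begin{itemize}
\item The near-linear-size low-soundness constructions (Moshkovitz--Raz, Dinur--Harsha, and the paper's \Cref{thm:intro-label-cover} built on them) have left alphabet $\exp(\poly(1/\varepsilon))$, and they only reach $\varepsilon\ge 1/\mathrm{polylog}\,n$.
\item A polynomial soundness--alphabet tradeoff comes from parallel repetition, at size $N^{\Theta(k)}$. That size blow-up is exactly what costs the $\poly\log\log\Delta$ in \Cref{thm:intro-locality-dominating-set-IKW}.
\item For $\Delta=2^{(\log n)^{\Omega(1)}}$, your requirement is a linear-size case of the sliding scale conjecture.
\end{itemize}

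Your hypothesis also proves too much. With $\varepsilon=|\Sigma|^{-\Omega(1)}$, the Feige reduction can take $l=\Theta(\log\Delta)$: the set system has size $2^{O(l)}\le\Delta$, and soundness below $1/l^2$ is available. That gives a $\Theta(\log\Delta)$ gap, i.e.\ $\beta=1$, which is the ``best of both worlds'' bound the paper leaves open. So your account of where a small $\beta$ comes from contradicts your own Step~1.

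The mechanism you propose cannot produce Step~1 either.
\begin{itemize}
\item Composition does not amplify soundness: the composed soundness $\delta+\mathsf L\Delta$ is worse than the outer $\Delta$. So $O(1)$ rounds of $O(1)$-local composition on a constant-soundness expander CSP stay at constant soundness.
\item The low soundness must therefore come from an algebraic outer PCP. In the known constructions a single source swap then touches polynomially many constraints (in the paper, about $q^{8m}$ left predicates per clause swap). Your requirement that a swap change only $O(1)$ constraints cannot be met.
\item Constant right degree $D$ is incompatible with a sensitivity lower bound at soundness $\ll 1/D$. Choosing an admissible left label locally, and the plurality projection at each right vertex, is an $O(1)$-sensitive algorithm of value at least $1/D$ on satisfiable instances.
\end{itemize}

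\textbf{What the paper does instead.} It accepts a weaker label cover theorem and pays for it in $\beta$. \Cref{thm:intro-label-cover} gives soundness $1/g$ for $g\le(\log n)^B$, with right alphabet and degrees $g^{O(1)}$ but left alphabet $\exp(g^{O(1)})$. Its ingredients are:
\begin{itemize}
\item A stage-$0$ Reed--Muller construction in which the formula enters only through local decode/reject left predicates. A clause swap changes many left predicates (a vanishing fraction of them) while the graph and projections stay fixed.
\item Stability restored through edge-averaged agreement scores, with a random-threshold choice among the $O(1/\theta)$ heavy list-decoded candidates. This, not an $O(1)$-radius local decoder, is how list decoding is made compatible with swaps.
\item A Dinur--Harsha iteration on a padded common graph with frozen cleanup gadgets, whose per-stage losses multiply to only $\mathrm{polylog}\,n$.
\end{itemize}

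In the Feige reduction an element can lie in up to $|\Sigma_U|+|\Sigma_V|$ sets. The degree must therefore accommodate $|\Sigma_U|=\exp(g^{O(1)})$, which forces $g\le(\log\Delta)^{c}$ for a small constant $c$. With $l=\Theta(\sqrt g)$, this yields only a $\Theta((\log\Delta)^{\beta})$ gap. This holds for $\Delta\le\exp(C(\log n)^\kappa)$; larger $\Delta$ are handled by proving the bound at that smaller degree cap, which replaces $\beta$ by $\beta\kappa$.

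Your Steps~2 and~3 essentially match the paper: $(m,l)$-set systems, a helper-vertex degree gadget rather than a clique, and the transfer via sensitivity at most $\Delta^{t}$. But they must be fed this weaker theorem. The small $\beta$ is a consequence of its exponential left alphabet, not of a polynomial soundness--alphabet tradeoff.
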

This implies the discussed corollary for quantum-LOCAL, removing KMW's degree-dependent term:
\begin{corollary}\label{cor:intro-locality-dominating-set-KMW}
    There exists $\beta\in (0,1)$ and an integer $\Delta_0\ge 2$ such that for all sufficiently large $n$ and every integer $\Delta \geq \Delta_0$, any quantum-LOCAL algorithm $O(\log^\beta \Delta)$-approximating dominating set on $n$-vertex graphs with maximum degree at most $\Delta$ requires locality at least
    \[
    \Omega\left(\sqrt{\frac{\log n}{\log\log n}}\right)
    \]
\end{corollary}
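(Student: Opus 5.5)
The plan is to combine \Cref{thm:intro-locality-dominating-set} with the KMW lower bound, using the fact that quantum-LOCAL algorithms induce non-signaling output distributions of the same locality. Fix $\beta$ and $\Delta_0$ from \Cref{thm:intro-locality-dominating-set}, and set $\alpha(\Delta)=O(\log^\beta\Delta)$. Since $\beta<1$, for $\Delta\ge\Delta_0$ (enlarging $\Delta_0$ if needed) we have $\alpha(\Delta)\le\log\Delta$. Hence any $\alpha$-approximation is in particular a $\log\Delta$-approximation. The KMW bound therefore applies; as noted in the introduction, it extends to quantum-LOCAL~\cite{balliu2025new}. It gives a lower bound on the locality $t$ of
\[
t=\Omega\!\left(\min\left\{\frac{\log \Delta}{\log \log \Delta}, \sqrt{\frac{\log n}{\log \log n}}\right\}\right).
\]
Simultaneously, \Cref{thm:intro-locality-dominating-set} gives $t=\Omega(\log n/\log\Delta)$.

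The two bounds are combined by splitting into cases according to $\Delta$. Write $L=\log n$ and $T=\sqrt{L\log\log n}$.

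If $\log\Delta\le T$, then \Cref{thm:intro-locality-dominating-set} gives
\[
t=\Omega\!\left(\frac{L}{\sqrt{L\log\log n}}\right)=\Omega\!\left(\sqrt{\frac{L}{\log\log n}}\right).
\]

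If $\log\Delta\ge T$, then $\log\Delta/\log\log\Delta\ge T/\log T$. Because $\log T=\Theta(\log L)=\Theta(\log\log n)$, this quantity is $\Omega(\sqrt{L\log\log n}/\log\log n)=\Omega(\sqrt{L/\log\log n})$. Both terms of the KMW minimum are thus $\Omega(\sqrt{L/\log\log n})$.

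In either case we obtain the claimed bound. The graph-size and degree ranges need to be compatible: $\Delta\le n$ always, so the analysis goes through. The only subtle point is that the KMW bound is stated for graphs of max degree $\Delta$ and an $n$-vertex family. I would need to check that its lower-bound instances can be taken with maximum degree at most the given $\Delta$, possibly by padding with isolated vertices or disjoint copies to reach exactly $n$ vertices without changing locality or degree.

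The main obstacle is precisely this compatibility of the KMW hard instances with the parameter regime. KMW's construction naturally ties $\Delta$ to $n$, and the bound with the $\log\Delta/\log\log\Delta$ term holds for their chosen $\Delta$. For the large-$\Delta$ case I would invoke their instance with degree $\Delta'\le\Delta$, taking $\Delta'$ as large as allowed. The approximation guarantee $O(\log^\beta\Delta)$ may exceed $\log\Delta'$ if $\Delta'$ is much smaller, so I would choose $\Delta'=\Delta$ when possible. Otherwise I would choose $\Delta'$ with $\log\Delta'\ge T$, which already suffices for the second case, using that only $\log\Delta'\ge\sqrt{L\log\log n}$ is needed. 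Then $\log^\beta\Delta$ versus $\log\Delta'$ is handled by taking $\Delta'$ polynomially related to $\Delta$, or by embedding a disjoint copy of a high-degree star gadget. With this, the estimate follows by elementary arithmetic.
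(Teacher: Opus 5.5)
Your proposal is correct and matches the argument the paper intends: \Cref{thm:intro-locality-dominating-set} (valid for quantum-LOCAL since non-signaling subsumes it) handles $\log\Delta\lesssim\sqrt{\log n\log\log n}$, and the KMW bound handles larger $\Delta$, which is exactly the threshold the introduction identifies. The loose end in your last paragraph closes cleanly as follows. Assume $\Delta\le n$; the statement genuinely needs this, since once $\log^\beta\Delta\gtrsim n$, taking all vertices is a $0$-round solution. Assume also $\beta\le 1/2$, as in the paper's choice of $\beta$. Then for any $\Delta'$ with $\log\Delta'\in[T,\log\Delta]$ one has $O(\log^\beta\Delta)\le\sqrt{\log n\log\log n}\le\log\Delta'$. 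So KMW's $\log\Delta'$-approximation bound at degree $\Delta'$ applies directly, and no star gadget or polynomial relation between $\Delta$ and $\Delta'$ is needed.
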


\Cref{tab:intro-tradeoffs} summarizes the resulting bounds.

\begin{table}[h!]
\centering
\caption{Trade-offs between approximation ratio and number of rounds by model. Here $n$ is the number of vertices and $\Delta$ is the maximum degree.  }
\label{tab:intro-tradeoffs}
\small
\setlength{\tabcolsep}{4pt}
\begin{tabular}{@{}p{0.27\linewidth}p{0.33\linewidth}p{0.17\linewidth}p{0.17\linewidth}@{}}
\toprule
Approximation ratio & Rounds & Model & Reference \\
\midrule
\multicolumn{4}{c}{Lower bounds} \\
\addlinespace[0.25em]
Fixed constant $c>1$ & $\Omega(\log n)$ & LOCAL & \cite{chang2023complexity} \\
$O(\log \Delta)$ & $\Omega\left(\log n\cdot \frac{1}{\log \Delta\poly\log\log \Delta}\right)$ & Non-signaling & \Cref{thm:intro-locality-dominating-set-IKW} \\
$\log^\beta \Delta$ for some $\beta>0$ & $\Omega\left(\log n\cdot \frac{1}{\log \Delta}\right)$ & Non-signaling & \Cref{thm:intro-locality-dominating-set} \\
$\log^\beta \Delta$ for any $\beta > 0$ & $\Omega \left(\min\left\{\frac{\log \Delta}{\beta \log\log \Delta}, \sqrt{\frac{\log n}{ \beta \log \log n}}\right\}\right)$ & quantum-LOCAL & \cite{coupette2021breezing,kuhn2016local} \\
$\log^\beta \Delta$ for some $\beta>0$ & $\Omega\left(\sqrt{\frac{\log n}{\log \log n}}\right)$ & quantum-LOCAL & \Cref{thm:intro-locality-dominating-set} + \cite{kuhn2016local} \\
\midrule
\multicolumn{4}{c}{Upper bounds} \\
\addlinespace[0.25em]
$1+\varepsilon$ for any $\varepsilon > 0$ & $\mathrm{poly}(\log n/\log(1+\varepsilon))$ & LOCAL & \cite{ghaffari2017complexity} \\
$(1+\varepsilon)\log \Delta$ for any $\varepsilon > 0$ & $O(\log n/\log(1+\varepsilon))$ & LOCAL & \cite{kuhn2016local} \\
\bottomrule
\end{tabular}
\end{table}
We note that the KMW bound in fact holds for even stronger approximation ratios beyond $\log \Delta$, in particular for any $\beta>0$ one gets a lower bound of $\Omega \left(\min\left\{\frac{\log \Delta}{\beta \log\log \Delta}, \sqrt{\frac{\log n}{\beta \log \log n}}\right\}\right)$ against $\log^{\beta} \Delta$-approximation. We leave as an open problem whether our results can be extended to this regime.

\subsection{Proof Overview}

\paragraph{Transfer principle.}
Our lower bounds are proved via a reduction to \textit{algorithmic sensitivity}, which measures how much an algorithm's output distribution can change after a local perturbation of the input, e.g., a single changed constraint in a constraint satisfaction problem (CSP). In particular, we use a sensitivity-to-locality transfer principle from \cite{fleming2026sensitivity} which converts sensitivity lower bounds for graph problems into lower bounds for non-signaling algorithms. This transfer principle is simply the observation that, for any graph problem on graphs of degree at most $\Delta$, any non-signaling algorithm with locality $t$, when run on two input instances perturbed in one location, must have the same output outside a radius-$t$ ball surrounding the change. This immediately implies any such algorithm has sensitivity at most $\Delta^t$, so if one can show that no, say, $\poly(n)$-sensitive algorithm exists for graphs of size $n$ and degree $\Delta$, then this implies a lower bound against any locality $t=\log_{\Delta}(n)$ non-signaling algorithm.

We defer the formal definitions of sensitivity to the preliminaries and here only overview our methods.

\paragraph{Reduction chain.}
\Cref{thm:intro-locality-dominating-set-IKW} and \Cref{thm:intro-locality-dominating-set} are proved via the following chain of reductions:
\begin{figure}[H]
    \centering
        \begin{tikzpicture}[scale=1.1]
            \tikzset{inner sep=0,outer sep=3}
            
            \tikzstyle{a}=[inner sep=6pt, inner ysep=6pt,outer sep=0.5pt,
            draw=black!40!white, fill=Cerulean!10!white, very thick, rounded corners=6pt, align=center]
            \tikzstyle{b}=[inner sep=4pt, inner ysep=4pt,outer sep=0.5pt,
            draw=black!20!white, fill=Cerulean!10!white, thick, align=center]
            \large
                \node[a] (Sens) at (-0.7,0) {\small Low-soundness label \\ \small cover sensitivity};
                \node[a] (Mapping) at (2.8,0) {\small Set cover \\ \small sensitivity };
                \draw[a, ->] (1,0) -- (1.8,0);
                \node[a] (SC) at (5.9,0) {\small Dominating set \\ \small sensitivity};

                \draw[a, ->] (3.8,0) -- (4.6,0);
                
                \node[a] (Mapping) at (9.8,0) {\small Non-signaling \\ \small locality lower bounds};

                \draw[a, ->] (7.2,0) -- (8.1,0);
    
                \node at (-0.7, -0.8) {\small \autoref{thm:intro-label-cover}};
                
                \node at (2.8,-0.8) {\small \autoref{thm:intro-set-cover}};
                
                \node at (9.8,-0.8) {\small \autoref{thm:intro-locality-dominating-set}};

                \node at (5.9,-0.8) {\small \autoref{thm:intro-dominating-set}};

                 \node at (-0.7, 0.8) {\small \autoref{thm:intro-label-cover-sensitivity-IKW}};

                 \node at (2.8,0.8) {\small \autoref{thm:app-set-cover-IKW}};

                 \node at (9.8,0.8) {\small \autoref{thm:intro-locality-dominating-set-IKW}};

                 \node at (5.9,0.8) {\small \autoref{thm:intro-dominating-set-IKW}};
            \end{tikzpicture}
\end{figure}

\noindent 
The first arrow is the classical reduction from label cover (see \Cref{sec:overview-1} for an overview of the label cover problem) to set cover, the second is the standard reduction from set cover to dominating set, and the last is the sensitivity-to-locality transfer theorem. Together they yield set-cover and dominating-set sensitivity lower bounds and, through the transfer principle of \cite{fleming2026sensitivity}, the locality theorems above. For the Dinur--Harsha route, these steps are packaged as the black-box chain shown in the lower row. For the IKW route, the upper row is now the direct chain
\[
\Cref{thm:intro-label-cover-sensitivity-IKW}
\to
\Cref{thm:app-set-cover-IKW}
\to
\Cref{thm:intro-dominating-set-IKW}
\to
\Cref{thm:intro-locality-dominating-set-IKW}.
\]
Thus \Cref{thm:intro-label-cover-sensitivity-IKW} supplies the low-soundness label-cover hardness, \Cref{thm:app-set-cover-IKW} turns it into a set-cover lower bound, \Cref{thm:intro-dominating-set-IKW} lifts this to bounded-degree dominating set, and the transfer principle yields \Cref{thm:intro-locality-dominating-set-IKW}.

\paragraph{Key Ingredients.}
Existing label cover sensitivity lower bounds~\cite{fleming2026sensitivity} only handle the high-soundness regime, roughly a $(1-\varepsilon)$-approximation on satisfiable instances for a fixed constant $\varepsilon>0$. This is far too weak for the reductions to set cover and dominating set, which require low-soundness label cover instances of the kind used in classical PCP hardness reductions~\cite{haastad2001some,feige1998threshold}. Unfortunately, low soundness PCPs are famously harder to construct and analyze than their high soundness counterpart. The following two theorems, our main technical contribution, supply sensitivity lower bounds for such label cover instances, which we use to prove \Cref{thm:intro-locality-dominating-set-IKW} and \Cref{thm:intro-locality-dominating-set} respectively.

Before stating the results, we briefly note that both results below are stated for \emph{left-predicate} label cover.

\begin{definition}[Label Cover] Recall that a $2$-CSP consists of an underlying graph $G=(V,E)$, an alphabet $\{\Sigma_v\}$ of possible values for each vertex, and, for each edge $e=(v,v')$, a constraint $f_e$ defining a set of satisfying assignments in $\Sigma_v \times \Sigma_{v'}$. The value of the instance is the maximum over assignments $\pi$ to $V$ of the total fraction of satisfied constraints. A \textit{label cover} instance is a special $2$-CSP consisting of a tuple $(U,V,E,\Sigma_U,\Sigma_V,F)$ where $G=(U,V,E)$ is a bipartite graph, $\Sigma_U$ is a fixed alphabet for the left-hand side, $\Sigma_V$ is a fixed alphabet for the right-hand side, and $F=\{f_e: \Sigma_U \to \Sigma_V\}$ (the `edge projections') determines the constraints by requiring $\pi(v)=f_{e}(\pi(u))$ for any edge $e=(u,v)$.
\end{definition}

The left-predicate label cover problem is a slight generalization of the above in which each left vertex $u$ carries a unary predicate; if this predicate evaluates to $0$, then all projection constraints incident to $u$ are violated. Ordinary label cover is recovered by taking all left predicates to be identically $1$.
We use this language because, in the reductions used later, a local change in the instance before the reduction can be encoded by changing only the left-side predicates in the resulting label cover instance, while keeping the underlying graph and edge projections fixed; this is exactly the form needed for the sensitivity argument. For this overview, we suggest that the reader think of this simply as label cover. 

\begin{restatable}{theorem}{IKW}
    \label{thm:intro-label-cover-sensitivity-IKW}
    There exists $C>0$ such that for all sufficiently large $N \in \mathbb{N}$ and $k \leq o(\log N)$, any algorithm that, on satisfiable left-predicate label cover instances on $n=N^{\Theta(k)}$ vertices, outputs an $\exp~ (-C\sqrt{k})$-satisfying solution must have sensitivity at least $n^{\Omega(1/k)}$.
    Moreover, the hard instances can be taken to be biregular with alphabet and degree parameters at most $2^{O(k)}$.
\end{restatable}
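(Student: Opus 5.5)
Our plan is to obtain the theorem by IKW-style derandomized direct-product parallel repetition applied to a high-soundness label cover sensitivity lower bound, with degree reduction so that the graph stays biregular with degree and alphabet $2^{O(k)}$. The starting point is the high-soundness sensitivity lower bound of \cite{fleming2026sensitivity}. On satisfiable (left-predicate) label cover instances $\mathcal{L}_0$ on $N$ vertices, with constant degree and alphabet, any algorithm outputting a $(1-\varepsilon_0)$-satisfying assignment has sensitivity $N^{\Omega(1)}$. We first make $\mathcal{L}_0$ constant-degree biregular, replacing vertices by expander clouds and encoding constraint changes as left-predicate changes. This is done so that a single local change in the base instance corresponds to a single predicate flip.

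Next we build $\mathcal{L}_k$ following Impagliazzo--Kabanets--Wigderson. Left vertices are $k$-tuples of base constraints; rather than all $k$-sets, we take walks or tuples drawn from a bounded-degree family (e.g.\ length-$k$ walks in the constant-degree constraint graph), so the count is $N\cdot 2^{O(k)} \le N^{\Theta(k)}$ and degrees stay $2^{O(k)}$. Right vertices are the corresponding sub-tuples (the $k/2$-intersections in IKW's ``$A$ vs.\ $B$'' test), and the projections are restrictions. Left predicates are the conjunction of the base predicates on the tuple's constraints. Alphabet is $|\Sigma|^{O(k)} = 2^{O(k)}$, and completeness is preserved.

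The sensitivity transfer proceeds as follows. Given an algorithm $A_k$ on $\mathcal{L}_k$ with sensitivity $s$ and value at least $\exp(-C\sqrt{k})$, define $A_0$ on $\mathcal{L}_0$ by running $A_k$ on the lifted instance. Then $A_0$ decodes each base vertex by the IKW list-decoding procedure: pick a random right vertex containing it, restrict, and use a consistency/plurality rule. Shared randomness is used across vertices, so $A_0$ is a fixed randomized function of $A_k$'s output. The IKW direct-product testing theorem (derandomized version, soundness $\exp(-\Omega(\sqrt{k}))$ for walk/subspace-type families) says that $\delta=\exp(-C\sqrt k)$ acceptance implies a global base assignment agreeing with many local views. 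Hence the decoded assignment satisfies $1-\varepsilon_0$ of the base constraints, by the usual ``if it violates $\varepsilon_0$ fraction then a random $k$-tuple contains a violation w.p.\ $1-e^{-\Omega(\varepsilon_0 k)}$'' argument. Since one base change flips $2^{O(k)}$ lifted predicates, we get $\mathrm{sens}(A_0)\le 2^{O(k)} s\cdot(\text{decoding blowup})$. Therefore $s\ge N^{\Omega(1)}/2^{O(k)} = n^{\Omega(1/k)}$, using $k=o(\log N)$.

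The main obstacle is the decoding step. IKW's decoding is only a \emph{list} decoding: with soundness $\delta$ there may be many candidate global functions, and the decoder picks one at random. Picking randomly across runs on neighboring instances could make $A_0$ highly sensitive. We would try to handle this by choosing the decoding seed (a random right vertex/restriction) as shared randomness independent of the instance. We would then couple the runs of $A_k$ on neighboring inputs via the sensitivity coupling, so that the same seed yields the same decoded function except where $A_k$'s outputs differ. This bounds the expected Hamming change by $2^{O(k)}\cdot s/\delta$. The $1/\delta=\exp(C\sqrt k)$ loss is absorbed into $n^{\Omega(1/k)}$. A secondary difficulty is verifying that the derandomized (degree-reduced) tuple family still satisfies the IKW agreement theorem at soundness $\exp(-\Omega(\sqrt k))$; we would take the family from IKW's derandomized construction with $k$-dimensional subspaces replaced by walks on an expander and cite its testing theorem.
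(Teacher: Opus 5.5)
There is a genuine gap at the foundation of your argument: the bounded-degree walk family you lift to does not have the soundness you need.

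\textbf{The lifted family.} IKW's $\exp(-\Omega(\sqrt{k}))$-soundness two-query PCP is the \emph{non-derandomized} construction. Its left vertices are all $k$-subsets of base constraints, and its right vertices are all $k'$-subsets of variables with $k'=\Theta(\sqrt{k})$. The available substitutes do not help:
\begin{itemize}
\item IKW's derandomized (subspace-based) test only works down to acceptance $k^{-\Omega(1)}$.
\item Agreement tests on walks in ordinary expanders fail in the small-acceptance regime because of covering (coboundary) obstructions.
\item The HDX-based substitutes have a poor alphabet--soundness trade-off and a routing step that destroys sensitivity.
\end{itemize}
This is why the theorem has $n=N^{\Theta(k)}$. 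If an $N\cdot 2^{O(k)}$-size family worked, your own calculation would give $n^{\Omega(1)}$ sensitivity at soundness $\exp(-\sqrt{k})$ with alphabet $2^{O(k)}$. That is the ``best of both worlds'' bound the paper leaves open.

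\textbf{The transfer step.} With the actual IKW instance, your transfer fails. One base swap changes roughly a $k/N$ fraction of the $N^{\Theta(k)}$ lifted constraints, not $2^{O(k)}$ of them. Because the right degree is $N^{\Theta(k)}$, there are far fewer right vertices than changed constraints. So an algorithm of scalar sensitivity $s$ may rewrite all of $\pi_V$ between neighboring lifted instances, leaving $\pi_V(A)$ uncontrolled.

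The paper therefore proves only a two-sided statement for the IKW family: the budget $\bigl(N^{c_1},\tfrac{|V|}{|E_{\mathrm{IKW}}|}N^{c_1}\bigr)$ is forbidden in a left/right sensitivity region. It then performs right-degree reduction \emph{after} the lifting, using expander clouds with output degree $d_\star=e^{O(\sqrt{k})}$. Its recovery reads one uniformly random cloud copy. This multiplies the admissible right budget by the right degree $|E_{\mathrm{IKW}}|/|V|$ and turns the forbidden point into a forbidden square of side $N^{c_1}/d_\star$. Your expander clouds at the base level do not address this.

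\textbf{The decoding.} Write $\varepsilon=\exp(-C\sqrt{k})$, and $\varepsilon_0$ for the base gap. There are two further holes.

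First, a single shared seed $(A,\pi_V(A))$ yields a $(1-O(\varepsilon_0))$-satisfying base assignment only with probability $\Omega(\varepsilon)$. So your $A_0$ does not reach value $1-\varepsilon_0$, and the Fleming--Yoshida bound cannot be invoked. You need $O(\log(1/\varepsilon_0)/\varepsilon)$ independent decodings and a \emph{stable} rule for choosing a good one; taking the best is not stable. The paper draws a random threshold $t\in[c',2c']$ and outputs the first candidate of value at least $1-t\varepsilon_0$. Coupled runs then diverge only if $t$ separates the values of some coupled pair, which costs at most the sum of their distances divided by $c'\varepsilon_0$.

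Second, your claim that the same seed yields the same decoded function except where $A_k$'s outputs differ is false for IKW's plurality decoder. Here $g_A(x)$ is a plurality over a huge, overlapping set of left labels, so a tiny fraction of changed labels can flip near-ties at many $x$ at once. The paper instead decodes locally: for each $x$ it samples an edge $e\ni x$ and $O(\log N/\varepsilon)$ left vertices containing $e$, and outputs the label of one that is consistent with $(A,a)$. Every sampled object is marginally uniform, so the disagreement probability is bounded by the normalized left/right distances. Correctness then needs a separate argument that this decoder is $O(\varepsilon_0)$-close to the plurality decoder on good seeds. The paper gets this from the sampler property of the $k$-set inclusion graph, together with IKW's guarantee that almost all consistent left vertices nearly agree with $g_A$.
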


Roughly speaking, to prove lower bounds for max-degree $\Delta$ dominating set, one should think of setting $k=\poly\log\log \Delta$ above, resulting in an $n^{1/\poly\log\log \Delta}$-sensitivity lower bound. Traced through the reduction, the exponent becomes the additional $\poly\log\log \Delta$-term in \Cref{thm:intro-locality-dominating-set-IKW} that we'd like to remove. With this in mind, we give the following alternate truly polynomial sensitivity lower bound:

\begin{restatable}{theorem}{DH}\label{thm:intro-label-cover}
    There exist absolute constants $B,\delta > 0$ such that for every function $1 < g(n) \le (\log n)^B$ and all sufficiently large $n$, any algorithm that, given a satisfiable left-predicate label cover instance with $n=|U|$-many left vertices, outputs an assignment that satisfies at least a $1/g(n)$ fraction of the constraints must have sensitivity at least $\Omega(n^\delta)$.
    Moreover, the hard instances can be chosen so that $|U| = n$, $|V| = n \cdot (\log n)^{O(1)}$, the left and right degrees are $g(n)^{O(1)}$, $|\Sigma_V| = g(n)^{O(1)}$, and $|\Sigma_U| \le \exp(g(n)^{O(1)})$.
 \end{restatable}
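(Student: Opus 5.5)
We would prove \Cref{thm:intro-label-cover} by making the Dinur--Harsha low-soundness PCP composition sensitivity-preserving. We start from a high-soundness sensitivity lower bound for label cover (or for a constant-alphabet 2-CSP), as in \cite{fleming2026sensitivity}. That result gives instances on $n$ vertices where any algorithm outputting a $(1-\varepsilon)$-satisfying assignment must have sensitivity $\Omega(n^{\delta_0})$. The hard family there is a pair of satisfiable instances, or a chain of them, that differ in a single constraint, such that any low-sensitivity algorithm's outputs must ``drift'' to a bad assignment on one of them. We then apply a gap-amplification transformation $\mathcal{T}$ and check three properties: it takes $(1-\varepsilon)$-gap instances to $1/g(n)$-gap instances; it maps a single changed constraint to a change confined to left predicates, with graph and projections fixed; and it admits a \emph{decoding} map with bounded sensitivity blow-up. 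The decoding map sends any $1/g$-satisfying assignment of $\mathcal{T}(I)$ to a $(1-\varepsilon)$-satisfying assignment of $I$.

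The key steps, in order, are as follows.

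\textbf{Step 1.} Fix the base instance and encode it so that every constraint is a left predicate. Left vertices are constraints (or tuples of constraints), whose labels are satisfying local assignments. Right vertices are variables. A changed constraint then just changes the unary predicate of the corresponding left vertex.

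\textbf{Step 2.} Apply the Dinur--Harsha route. First perform a derandomized or low-degree parallel repetition, taking left vertices to be $k$-subsets or curves/subspaces with $k = \poly\log g$, to push soundness to $1/\poly(g)$ with alphabet $\exp(g^{O(1)})$. Then apply their composition with a Hadamard/long-code-style decodable inner PCP to shrink $|\Sigma_V|$ to $g^{O(1)}$ and degrees to $g^{O(1)}$. Each base left predicate appears inside $g^{O(1)}$-many (polylog $n$) new left vertices. We therefore define the new left predicate as the conjunction of the underlying base predicates. A single base change then alters predicates on a set of left vertices whose size is bounded in terms of the degree, and the graph and projections stay fixed.

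\textbf{Step 3.} Prove the list-decoding soundness in a \emph{local} form. Given an assignment $\pi$ satisfying a $1/g$ fraction, we decode by randomly picking, for each base variable, a label from the short list induced by $\pi$ on a random neighbor (the standard randomized decoding). This produces a randomized assignment of $I$ whose expected value is at least $1-\varepsilon$, after composing with the base's error reduction. The map is local: each base variable's label depends on $\pi$ restricted to its $\poly\log n$ neighborhood. Hence changing $\pi$ by Hamming/EMD distance $s$ changes the decoded distribution by at most $s\cdot(\log n)^{O(1)}$.

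\textbf{Step 4.} Combine the pieces. Suppose an algorithm $A$ on the new instances has sensitivity $S$. Then decode$\circ A\circ\mathcal{T}$ is an algorithm for the base problem with sensitivity at most $S\cdot (\log n)^{O(1)}$, since one base change yields $(\log n)^{O(1)}$ predicate changes, each costing $S$, followed by decoding blow-up. The base lower bound gives $S\ge n^{\delta_0}/(\log n)^{O(1)} = \Omega(n^{\delta})$. The size bounds $|V| = n(\log n)^{O(1)}$ and $|U| = n$ follow, after reindexing, from the parameter choice $g\le(\log n)^B$.

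\textbf{Main obstacle.} We expect Step 3 to be the hardest: the gap amplification must be \emph{almost-linear size} and \emph{decodable with local, low-sensitivity decoding}. Standard parallel repetition blows up size polynomially, and list decoding inherently loses a factor, so the decoded assignment is only $1/\poly(g)$-good on a $1/\poly(g)$ fraction. We would first handle this by using Dinur--Harsha's ``decodable PCP'' property. Its soundness is stated as an explicit short-list decoding that succeeds on most tests, and we would feed it into a base instance with robust soundness, so that even the list-decoded, randomized assignment achieves the $(1-\varepsilon)$ threshold. The derandomized (subspace) repetition is what keeps $|V|$ quasi-linear; this in turn is what caps $g$ at $(\log n)^B$.
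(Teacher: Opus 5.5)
\textbf{There is a genuine gap, and it sits in your Step~2.} The step claims that after amplification each base predicate lies inside only $g^{O(1)}$ new left vertices, whose predicate is a conjunction of base predicates. Neither reading of your amplification delivers this.

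With $k$-subset repetition, a base constraint lies in $\binom{|E|-1}{k-1}$ new left vertices, and the instance has size $N^{\Theta(k)}$. Since $k$ must grow with $g$, and $g$ can be $(\log n)^B$, the blow-up is superpolynomial. What you can pull back is then $N^{\Omega(1)}=n^{\Omega(1/k)}$. This is exactly the paper's separate IKW result (\Cref{cor:IKW-label-cover}), which already needs left/right sensitivity regions plus degree reduction and still does not reach $n^{\delta}$.

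With the algebraic reading (curves/subspaces), the situation is different. This is what the Dinur--Harsha route actually uses for its low-soundness outer layer; there is no parallel repetition in it. Here the formula enters through the low-degree extension $\widetilde B_\Phi$ of the clause-indicator table. One clause swap changes $\widetilde B_\Phi$ at almost every point of $\F^{3m+3}$, and hence changes the check at essentially every left vertex. A left vertex does not ``contain'' a few base constraints, so there is no conjunction structure to exploit.

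The missing idea is the paper's stage-$0$ construction. The algebraic layer only certifies the low-degree structure of the assignment and witness polynomials. The value $\widetilde B_\Phi(z)$ at the distinguished point is obtained through a local Reed--Muller decode/reject gadget (\Cref{thm:rm-eval}). Its bounded locality confines a clause swap to $2q^{8m+c_{\mathrm{lc}}}$ left predicates, a $q^{-3m+O(1)}$ fraction, with graph and projections fixed.

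Note also that the quasi-linear derandomized repetition you invoke is not an available sensitivity-preserving tool, and it is not needed. The paper tolerates a polynomial blow-up by starting from a base instance of size $n^{1/B_*}$ (\Cref{prop:init-bootstrap}); this is why only some small $\delta$ is obtained.

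\textbf{Your Steps~3--4 also skip ingredients the paper must supply.}

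\emph{Decoding is not local.} The point-object theorem (\Cref{thm:point-object-seeded}) yields a short list of global degree-$D$ tuples that depend on the whole labeling. From a labeling of noticeable value, one is only guaranteed a satisfying CNF assignment with probability $\varepsilon_0^{O(1)}$, not an assignment of expected value $1-\varepsilon$. The paper stabilizes this in four moves:
\begin{enumerate}
  \item It scores tuples by outer-edge agreement; a one-coordinate change moves every score by $q^{-3m+O(1)}$ (\Cref{lem:score-perturbation}).
  \item It selects a tuple above a random threshold.
  \item It repeats this $\poly(1/\varepsilon_0)$ times.
  \item It applies a random-threshold selector on source values (\Cref{lem:stable-threshold-selector}).
\end{enumerate}
Appealing to ``robust soundness of the base'' does not replace this amplification.

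\emph{Composition changes the graph unless you pad it.} The decoder's query pattern depends on the outer circuit, so a naive composed graph changes under a swap. The paper handles this as follows:
\begin{itemize}
  \item It composes on a fixed padded graph, with queries moved into left predicates (\Cref{thm:composition}).
  \item It uses the polynomial-length decoder of \Cref{lem:dinur-harsha-decoder}, not a Hadamard/long-code one, whose proof length would be exponential in the circuit size.
  \item It freezes codes and expanders in the cleanup step (\Cref{rem:red-stagewise-choices}).
  \item It iterates $\varepsilon_{i+1}=20\varepsilon_i^{1/4}$ from soundness $(\log n)^{-O(1)}$ up to $1/g$, so that all parameters end at $g^{O(1)}$.
\end{itemize}

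\emph{Recovery depends on the source.} The recovery maps depend on the source instance, for example through the distributions $p_v$ in alphabet reduction. Your Step~4 estimate therefore needs the additive drift term $D$ of \Cref{lem:sensitivity-pullback}.
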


Unfortunately, \Cref{thm:intro-label-cover} cannot be used to prove a lower bound against $O(\log \Delta)$-approximation for dominating set due to the large left-alphabet size, which restricts us to $O(\log^\beta \Delta)$-approximation for some $\beta>0$. More generally, the degree $\Delta$ of the final dominating set instance in our reduction is at least the degree and alphabet size of the starting low soundness label cover, so it is critical we keep these parameters as small as possible. Improving the alphabet-size bound in the PCP underlying \Cref{thm:intro-label-cover} is a central open problem in PCP theory, related to the so-called `sliding scale conjecture', and doing so would have consequences far beyond just improving our dominating set lower bound \cite{moshkovitz2019sliding}.

Below, we give an overview of the proofs of \Cref{thm:intro-label-cover-sensitivity-IKW} and \Cref{thm:intro-label-cover}. Before this, we describe the challenges that go into developing a PCP theorem that preserves sensitivity. 

\paragraph{The Challenge of Maintaining Sensitivity.}
Classical PCP theorems for hardness of approximation track how the completeness, soundness, and structural parameters of a label cover instance change as it proceeds through the PCP. For sensitivity, this is not enough. Because sensitivity compares the behavior of an algorithm on two \emph{nearby} source instances rather than a single instance in isolation, we need to ensure the reduction remains stable applied to instances related by any one local modification, which we call a \emph{source swap}.
In other words, instead of working with single instance, formally we will have to work with and control the behavior of an entire \emph{swap-closed family}\footnote{That is, once an instance is in the family, every instance obtained from it by one source swap also remains in the family; see \Cref{sec:pre} for the formal definition.} of instances throughout the reduction.

Recall our goal is really to transfer a sensitivity lower bound backward through the PCP reduction, which we call \emph{sensitivity pullback}. For this, each stage of the PCP reduction must come with a recovery map that converts a labeling of the transformed instance $I_{i+1}$ into a labeling of the source instance $I_i$. Once such a recovery map is available, there are three effects of a source swap that must be controlled:
\begin{itemize}
    \item how much the transformed instance itself can change;
    \item how much the recovery map amplifies a perturbation of a labeling on the transformed instance; and
    \item how much the recovery map itself changes when the source instance changes.
\end{itemize}
This is a substantially stronger requirement than the usual completeness/soundness bookkeeping required by an ordinary PCP theorem. To make it possible, we design our reductions so that all instances in the family live on a common underlying graph, and a source swap changes only local predicates, not the graph or the edge projections. %

\subsubsection{Proof Overview of \Cref{thm:intro-label-cover-sensitivity-IKW}}\label{sec:overview-1}

The proof of \Cref{thm:intro-label-cover-sensitivity-IKW} consists of two main steps: 1) a new ``sensitivity-preserving'' analysis and recovery map for (combinatorial) parallel repetition \cite{impagliazzo2009new}, which amplifies soundness (approximation factor) of a base label-cover instance at the cost of massively blowing up its size and (right)-degree, and 2) a new ``sensitivity-amplifying'' degree-reduction procedure that simultaneously reduces this right-degree while improving the right-sensitivity.\footnote{Here, `right'-sensitivity corresponds to the number of changes seen on the right-hand side of the label cover due to a single local perturbation in the input instance.} We overview these two steps below along with some corresponding background.

\paragraph{Combinatorial Parallel Repetition.} 

The low-soundness combinatorial parallel repetition theorem of Impagliazzo, Kabanets, and Wigderson (IKW) \cite{impagliazzo2009new} amplifies a base 2-CSP $\Phi$ with alphabet $\Sigma$ on underlying graph $G_{\Phi}=(V,E)$ to a label-cover instance $\Phi' =(U',V',E',\Sigma_U',\Sigma'_V,F')$ defined as follows.

The underlying graph of $\Phi'$ has left vertex set $U'=\binom{E}{k}$ and right vertex set $V'=\binom{V}{k'}$. Edges in $E'$ are generated via the following process:
\begin{enumerate}
    \item Sample a size $k'$ subset $A \subset [V]$ (i.e.\ a random right-vertex in $V'$)
    \item Sample a random edge for each vertex in $A$, and call the resulting edge-set $E_A$
    \item Sample $k-k'$ additional random edges $E_B \subseteq E$ without replacement.
    \item Output the edge $(\{E_A,E_B\},A)$
\end{enumerate}
In particular, every sampled pair $\{E_A,E_B\}$ determines a left vertex $S=\{E_A,E_B\} \in U'$, while the right endpoint $A$ lies in $V'=\binom{V}{k'}$.
The left alphabet is $\Sigma_{U'} = \Sigma^{2k}$ (one symbol for each of the $2k$ original vertices in $\{E_A,E_B\}$), and the right alphabet is $\Sigma_{V'} = \Sigma^{k'}$ (one symbol per original vertex in $A$). For an assignment $(\pi_{U'},\pi_{V'})$ to satisfy the edge $(\{E_A,E_B\},A) \in E_{\mathrm{IKW}}$, it must hold that\footnote{We note this is technically a left-predicate label cover instance, with the first condition below being the `left-predicate'. See the formal definitions in \Cref{sec:pre}.}
\begin{enumerate}
    \item The left label $\pi_U(E_A,E_B)$ satisfies all constraints in $E_B$ in the original CSP, and
    \item $\pi_U$'s vertex assignments on $A$ agree with $\pi_V(A)$.
\end{enumerate}

We would like to show that if the base-CSP $\Phi$ on $N$ vertices has no (say) $N^{1/2}$-sensitive algorithm that outputs a $(1-\delta)$-satisfying solution on satisfiable inputs, then $\Phi'$ likewise has no $N^{1/4}=n^{\Omega(1/k)}$-sensitive algorithm that, on satisfiable inputs, outputs even an $\varepsilon =\exp(-O(k'))$-satisfying solution. This can then be instantiated with a base CSP of this form \cite{fleming2026sensitivity}.

\paragraph{Imbalanced Graphs and Sensitivity.}

Unfortunately, we will not be able to show such a strong sensitivity lower bound directly due to the fact that the label-cover instance described above is extremely imbalanced. To illustrate the problem, let us first outline what the above reduction should look like.

As described at a high level above, our basic approach is to start with neighboring instances $I$ and $\tilde{I}$ of the original $2$-CSP that differ in one constraint, and map both instances to their corresponding IKW label cover instances. To prove a lower bound against $T$-sensitive algorithms for the resulting label cover, we'll assume toward contradiction we in fact have such an algorithm. Our goal is to run this algorithm on the embeddings of $I$ and $\tilde{I}$ to produce solutions $\bm\pi$ and $\tilde{\bm\pi}$, then give a `recovery map' that maps these solutions back to highly satisfying solutions to the base CSP $R(\bm\pi)$ and $R(\tilde{\bm\pi})$ whose Hamming distance is $\ll \sqrt{N}$ in expectation, violating the sensitivity lower bound for the base CSP.

What actually happens when we run this approach? Well, if $I$ and $\tilde{I}$ differ by one constraint, the corresponding label cover instances differ in roughly a $\Theta(k/N)$ fraction of constraints, the probability we sample a modified constraint in steps (1-3) above. Running a $T$-sensitive algorithm on these embedded neighboring instances would therefore produce solutions $\bm\pi$ and $\tilde{\bm\pi}$ that differ in up to $T|E'|\frac{k}{N}$ locations. This is a problem: due to the imbalance of the graph, even if $T=1$, $T|E'|\frac{k}{N} \gg |V'|$. In other words, the entire right-hand side of the two embedded label cover instances could be completely different!

We'll handle this by proving a weaker lower bound which separately controls the sensitivity on the left and right parts of the label-cover instance. In particular, we will show there is no algorithm which is $N^{1/4}$-sensitive on the left-hand side, and $\frac{|E'|}{|V'|}N^{1/4}$-sensitive on the right-hand side, normalizing for this degree imbalance. Note that this latter quantity is much less than $1$, which seems like an extremely weak lower bound. Nevertheless, we will see this guarantee, which controls how much the right-hand side changes \textit{in expectation}, is in fact still strong enough to achieve the final desired $\poly(N)$-bound after degree reduction, which roughly amplifies this term by a corresponding $\frac{|V'|}{|E'|}$ factor.

\paragraph{The Recovery Procedure.}

With this in mind, assuming the embedded instances have a $(N^{1/4}, \frac{|E'|}{|V'|}N^{1/4})$-sensitive algorithm, running the above approach indeed results in assignments $\bm\pi$ and $\tilde{\bm\pi}$ that agree with each other on both sides of the label cover except a $o(1/\sqrt{N})$ fraction of the time. We therefore need to give a recovery map which, given such labelings, returns highly-satisfying $\sqrt{N}$-close solutions to the base CSP. We describe such a recovery procedure below based on an assignment $\bm\pi = (\bm\pi_{U'},\bm\pi_{V'})$:
\begin{enumerate}
    \item Draw $A \in \binom{V}{k'}$ uniformly and set $a=\bm\pi_{V'}(A) \in \Sigma^{k'}$
    \item For each $x \in V$: 
    \begin{itemize}
        \item Draw a uniformly random edge $e \in E$ of the original CSP with $x \in e$
        \item Draw $O(\frac{\log N}{\varepsilon})$ independent $S_E=(E_A,E_B) \in U'$ uniformly conditioned on $e \subseteq S_E$
        \item If $\bm\pi_{U'}(S_E)|_A = a$ for any sampled $S_E$, output $\bm\pi_{U'}(S_E)|_x$\footnote{If there are multiple such $S_E$, pick one at random.}
        \item Else: Output a fixed symbol in $\Sigma$.
    \end{itemize}
\end{enumerate}

We first note that, essentially by design, the above procedure has low-sensitivity. Namely, run on $\bm\pi$ and $\tilde{\bm\pi}$, the decoding only differs if we draw some $A$ or $S_E$ on which $\bm\pi \neq \tilde{\bm\pi}$. However, since, over the randomness of $x$, $A$ and $S_E$ are marginally uniform, the probability we see such a disagreement scales with the normalized Hamming distance of $\bm\pi$ and $\tilde{\bm\pi}$, and in particular is $o(1/\sqrt{N})$, so union bounding over these events the expected (un-normalized) distance between the final decodings remains $o(\sqrt{N})$ as desired.

The main technical component of the argument is then to show the above recovery procedure actually outputs a, say, $(1-\delta/2)$-satisfying solution to the original CSP $\Phi$ with non-trivial probability. Once we have this, it is easy to produce a good decoding using an additional ``low-sensitivity selection'' procedure inspired by similar algorithms in replicable learning \cite{ImpLPS22}:
\begin{enumerate}
    \item Repeat the recovery procedure several times independently
    \item Pick a random $t \in [1/2,1]$
    \item Output the first solution with value at least $1-t\delta$
\end{enumerate}

Repeating the procedure independently ensures at least one good output with high probability. Since the recovered solutions from $\bm\pi$ and $\tilde{\bm\pi}$ are very close, their values will be close, and as a result the randomized thresholding procedure in Steps 2/3 will almost certainly select close solutions to the base CSP as desired.

It remains to argue the initial recovery procedure indeed produces a high value solution to the base CSP. This is the most technical part of the argument, and is deferred to \Cref{sec:ikw}. The high level idea is to show that the decoding procedure is typically close to the \textit{majority} decoding value of $x$ conditioned on $\pi_{V'}(A)=a$. This majority-based decoding is already known to have high value due to the PCP reduction of \cite{impagliazzo2009new}, but may have poor sensitivity. The proof of closeness to majority follows via concentration-of-measure properties of the graph underlying the IKW label-cover, which control the probability that the label of a random $x \in V$ is far from its expectation, even after conditioning on various events like $\pi_{V'}(A)=a$.

\paragraph{Degree Reduction}

The produced label-cover instance above has poor right-sensitivity and corresponding high right degree; the second key step is to give a degree-reduction procedure rectifying this. Toward this end, we use a procedure from Dinur and Harsha~\cite{dinur2013composition} that reduces the right degree to $\exp(O(k'))$ (polynomial in the soundness), and, critically, give a recovery map which actually \textit{amplifies} the corresponding right-sensitivity lower bound by a factor of $\frac{|E'|}{|V'|}$, the original right-degree. Plugging this into the sensitivity bound above gives the desired $\poly(N)$-sensitivity lower bound for low-soundness label cover of \Cref{thm:intro-label-cover-sensitivity-IKW}. 

In slightly more detail, degree-reduction proceeds by replacing each right vertex $v$, with degree $D_v$, by a cloud of $D_v$-many vertices. An expander is then used to wire these new vertices to the old neighborhood of $v$. If $u$ was a neighbor of $v$ and it is now a neighbor of a vertex $v'$ in the cloud, then the edge $(u,v')$ inherits the constraint of $(u,v)$. The expander ensures that the constraints of $v$'s edges are dispersed uniformly among the cloud, making them a good approximation to the original constraint-set. Critically this introduces some extra redundancy: the assignment for $v$ is duplicated among the vertices in its cloud. We design a recovery procedure which takes advantage of this redundancy to increase the sensitivity lower bound by a corresponding factor in the size of the cloud (the original right-degree) as desired.

\subsubsection{Proof Overview of \Cref{thm:intro-label-cover}}

The proof of \Cref{thm:intro-label-cover} has two main steps. First, we construct an initial label-cover family, which we call the \emph{stage-$0$} construction, with very small soundness but very large alphabet, degree, and local description complexity, an additional measure needed for composition roughly capturing the complexity of computing whether a given assignment to the neighborhood of some left vertex $u$ has a potential satisfying label for $u$. Second, we apply a Dinur--Harsha style iterated composition theorem \cite{dinur2013composition} to move to the target soundness scale $1/g(n)$ while reducing these structural parameters to the corresponding scale.

As we adapt the Dinur--Harsha framework, we recall only the part relevant below. Stage-$0$ already gives a soundness threshold much smaller than the one needed in the final theorem, but its alphabet size, degrees, and local description complexity are tuned to that overly strong regime. The iteration weakens the soundness guarantee in a controlled way and, at the same time, reshapes the local structure so that these parameters can be reduced to the new scale.

\paragraph{Dinur--Harsha Composition.}
A label cover instance asks, at each left vertex $u$, whether the labels on the neighboring right vertices form an admissible tuple, namely one that can be extended to some left label satisfying the incident projections and the left predicate.
Composition replaces this admissibility check by a local proof-verification task: one supplies a short auxiliary proof together with a local decoder that probes only a few proof positions and either recovers the relevant component of an admissible tuple or rejects at the cost of increasing the soundness threshold from $\varepsilon_i$ to the weaker scale $\varepsilon_{i+1}$.

After composition, the resulting instance has a simpler local certification, but its structural parameters are still not in the form needed for the next round. The cleanup steps---alphabet reduction and degree reduction---convert this composed instance back into a left-predicate label cover whose degree, alphabet, and local-description parameters are polynomial in $1/\varepsilon_{i+1}$. Thus, one stage has the form
\[
    I_i \xrightarrow{\text{compose}} J_{i+1}
    \xrightarrow{\text{alphabet/degree reduction}} I_{i+1}.
\]
The intermediate instance $J_{i+1}$ exposes the proof-based certification created by composition, while the cleanup step restores the interface needed for the next stage. Repeating these two steps moves the construction from the very small soundness of stage-$0$ to the target scale $1/g(n)$ while keeping the structural parameters under control. The rest of the overview explains how we stabilize this template for sensitivity.

\paragraph{Stage-$0$: Keeping Clause-Swaps Local.}
The first obstacle already appears in the initialization $I_0$ of the iteration. The hard family that  we ultimately care about starts as bounded-degree satisfiable $2$-CSP instances (we take the instance constructed in the lower bound proof of Fleming and Yoshida~\cite{fleming2026sensitivity}), but before the Dinur--Harsha iteration begins, we must first convert that family---via a constant-size gadget---into a bounded-degree satisfiable CNF family; the instance $I_0$ for stage-$0$ is the robust algebraic encoding of this CNF family into a left-predicate label cover instance. In the standard Dinur--Harsha template, the formula is represented by its clause-indicator table. It is then extended to a global low-degree object, and then checked by the usual algebraic consistency tests, which verify its degree and that it does indeed represent the clause-indicator table. This is excellent for classical soundness, because the low-degree structure gives strong global control. For sensitivity, however, it creates exactly the wrong kind of dependence on the source formula: swapping one clause for another (a \emph{clause swap}) changes one entry of the clause table before algebraization, but after low-degree extension the encoded object typically changes at many points, so one local clause swap becomes a global change in the label-cover instance.

Our new stage-$0$ construction breaks this dependence into two separate tasks. The outer algebraic layer certifies only the low-degree structure of the assignment and auxiliary witness polynomials. The question ``what is the value of the clause table at this particular location?'' is delegated to a small Reed--Muller decode/reject gadget that either returns the requested table value or rejects. Because the CNF now enters only through this local lookup, the formula-dependent part of the construction is confined to the left predicates. Consequently, one clause swap changes only a controlled number of left predicates, while the underlying graph and edge projections remain fixed. This yields the left-predicate label cover under clause swaps that serves as the stage-$0$ starting family for the rest of the argument.

Finally, stage-$0$ must also provide a stable way to recover a satisfying assignment to the initial CNF formula for sensitivity pullback. Given an assignment to the stage-$0$ label cover instance, we extract a tuple of candidate low-degree polynomials which claim to encode the low-degree extension (as well as some auxiliary witnesses), and score them by the fraction of local checks they explain. We then select one of the high-scoring candidates using a random threshold. This produces a satisfying assignment with noticeable probability, in a form that remains stable under local perturbations for the later pullback argument.

\paragraph{Stable Composition: Padding the Graph.} Next, we modify the composition procedure from Dinur-Harsha~\cite{dinur2013composition} to preserve sensitivity. After composition, the constraints on the edges incident to a left vertex $u$ check whether the joint labeling of the right vertices that are adjacent to $u$ are compatible with $u$'s label and all incident projections.
The inner decoder is asked to certify exactly this local compatibility condition, rather than to reason about the full left alphabet directly.

In analyzing sensitivity, this standard approach once again hits a barrier: in a naive composed instance, the decoder's chosen proof locations determine which new edges appear in the composed graph. A local source change can then alter which locations are queried, and hence alter the graph of the composed instance itself. To prevent this, we implement the composition on a single padded graph shared by all instances in the family. Each composed left label is viewed as a fixed-size matrix whose rows correspond to the potential neighboring constraints of the underlying outer right vertex and whose columns correspond to positions in the inner decoder's proof string. The rows corresponding to actual neighbors are active, and the remaining rows are dummy padding. All coordinate-projection edges from each active row to every column are placed in advance. The decoder's actual queries are then used only inside the left predicate rather than being encoded by the edge set. With this padding, a source swap may change which rows and columns are semantically relevant, but it does not change the underlying graph.

\paragraph{Stable Cleanup: Freezing the Auxiliary Choices.}
After composition, one still has to repair the local parameters before the next round. This is the role of the alphabet-then-degree reduction. In a classical hardness proof these gadgets are used only as parameter-improving transformations, so one is free to choose whatever code or expander is convenient at that moment. In the sensitivity setting, that freedom is dangerous: if nearby source instances were sent through different auxiliary choices, then the reduction itself could vary globally even when the source change was local.

We therefore freeze these auxiliary choices on each swap-closed family. In particular, the code used in alphabet reduction and the expanders used in the subsequent degree-reduction step are fixed once-and-for-all on that family. This matters because the alphabet-reduction recovery samples each recovered right label from the empirical distribution induced by the incident transformed edges at that vertex, and that distribution depends not only on the transformed labeling but also on the source instance through the underlying projections. Hence ``stability under perturbing a labeling on a fixed transformed instance'' and ``stability under changing the source instance'' are different statements. Freezing the auxiliary choices keeps the reduction itself fixed, so that only the predicates and the resulting recovery maps vary across the family.

\paragraph{How Iterations Fit Together.}
With these modifications in place, the rest of the proof can be read as a stabilized version of Dinur--Harsha iteration. We start from the stage-$0$ family described above, and that stage-$0$ construction begins with a smaller soundness threshold than the one ultimately required in the theorem. At stage $i$, composition weakens the soundness guarantee from $\varepsilon_i$ to $\varepsilon_{i+1}=20\varepsilon_i^{1/4}$. The cleanup step then restores the degree, alphabet, and local-description bounds needed to run the next round at the new scale.
As the starting point for the iteration, we use the high-soundness hard family of Fleming and Yoshida~\cite{fleming2026sensitivity}, namely a swap-closed family of satisfiable bounded-degree $2$-CSP instances with constant alphabet and degree.
A constant-size gadget first converts this family into the bounded-degree satisfiable CNF family used by the stage-$0$ construction above.

At stage $i$ we maintain a family $I_i$ on one common underlying graph, together with an instance map from that base family and a recovery map back to it, and we track the three stability quantities above. The stage parameter is denoted $\varepsilon_i$: smaller $\varepsilon_i$ means stronger soundness, while the left and right degrees, alphabet sizes, local description complexity, and vertex blow-up at stage $i$ are polynomial in $1/\varepsilon_i$. One stage has the form
\[
    I_i \xrightarrow{\text{compose}} J_{i+1}
    \xrightarrow{\text{alphabet/degree reduction}} I_{i+1},
\]
where the composition step updates the parameter from $\varepsilon_i$ to $\varepsilon_{i+1}=20\varepsilon_i^{1/4}$, and the cleanup step restores the same polynomial-in-$1/\varepsilon_{i+1}$ bounds needed for stage $i+1$. Each layer loses only a polynomial factor in $1/\varepsilon_i$ or $1/\varepsilon_{i+1}$, and the recursion $\varepsilon_{i+1}=20\varepsilon_i^{1/4}$ keeps the total product of these losses polylogarithmic. Consequently, the iteration reaches the target $1/g(n)$ soundness threshold while still preserving a polynomial sensitivity lower bound.
\subsection{Further Related Work}

\paragraph{Dominating Set, Label Cover, and the Non-Signaling Model}

The non-signaling model (also known as the $\phi$-LOCAL or causal model) has emerged as the gold standard for distributed computing lower bounds, automatically extending to the popular LOCAL, quantum-LOCAL, and bounded-dependence models~\cite{arfaoui2014can,gavoille2009can,akbari2025online,coiteux2024no}.
Fleming and Yoshida~\cite{fleming2026sensitivity} first connected non-signaling locality to sensitivity, obtaining non-signaling lower bounds for problems such as minimum vertex cover and maximum cut, but their methods (based on high-soundness label cover) do not extend directly to other classic distributed problems like dominating set. For dominating set, $\Omega(\log n)$ lower bounds were only known in the LOCAL model and only for constant-factor approximation. In particular, Chang and Li~\cite{chang2023complexity} prove an $\Omega(\log n)$ lower bound for some constant approximation, while Kuhn, Moscibroda, and Wattenhofer~\cite{kuhn2016local} give a $O(\frac{\log n}{\log(1+\varepsilon)})$-round LOCAL algorithm for $(1+\varepsilon)\log \Delta$-approximate dominating set along with a weaker degree-dependent $\Omega(\min\{\log \Delta/\log\log \Delta,\sqrt{\log n/\log\log n}\})$-round lower bound against $\mathrm{polylog}(\Delta)$-approximation. The latter bound was also known to hold for quantum-LOCAL \cite{coupette2021breezing,kuhn2016local,balliu2025new}, but has remained open for the bounded-dependence and non-signaling settings.

Reductions from low soundness label cover to set cover and dominating set are a classical approach in hardness of approximation. 
Lund and Yannakakis~\cite{lund1994hardness} introduced the set-system framework underlying hardness for minimization problems, Feige~\cite{feige1998threshold} established the threshold hardness of set cover via reduction from label cover, and Chleb\'{i}k and Chleb\'{i}kov\'{a}~\cite{chlebik2008approximation} proved bounded-degree dominating-set hardness. These are the reductions that underlie \Cref{sec:set-cover,sec:dominating-set}; our use differs in that we reconstruct them in a form compatible with the later sensitivity pullback, rather than using them only to transfer approximation gaps.

\paragraph{Sensitivity}

The notion of sensitivity was introduced by Murai and Yoshida~\cite{murai2019sensitivity} to study the stability of network centrality measures. Subsequent work focused largely on problem-specific sensitivity analyses and low-sensitivity algorithms for broader graph problems~\cite{varma2023average,yoshida2021sensitivity,yoshida2026low}. 
The notion is also related to average sensitivity~\cite{yoshida2022average,hara2023average,kumabe2022average,ebbens2026average}, where the average is taken over deleted elements, as well as to differential privacy \cite{dwork2006calibrating} and stability in learning \cite{ImpLPS22}. We draw on some techniques from this literature.

On the lower-bound side, Fleming and Yoshida~\cite{fleming2026sensitivity} established the first lower bounds on the sensitivity of approximation algorithms for label cover, but only in the high-soundness regime. Since our non-signaling lower bounds for dominating set proceed through the aforementioned classic reductions from label cover to set cover and dominating set, the missing ingredient was analogous lower bounds in the delicate low-soundness regime that are strong enough to survive those reductions; this paper supplies them.

\paragraph{Low Soundness PCPs} Low-soundness label cover is a central problem in PCP-based hardness of approximation~\cite{haastad2001some}. 
The relevant constructions in this regime include the low-soundness two-query PCP of Moshkovitz and Raz~\cite{moshkovitz2008two}, the closely related PCP composition framework of Dinur and Harsha~\cite{dinur2013composition}, parallel repetition and direct product testing \cite{raz1995parallel,impagliazzo2009new}, and new derandomized direct product theorems~\cite{dinur2011derandomized,bafna2024coboundary,bafna2024constant,dikstein2024agreement,dikstein2024swap,dikstein2024low,bafna2024quasilinear,o2025low} based on high dimensional expanders. It is worth noting that latter give a promising approach towards a ``best of both worlds'' bound (i.e.\ combining the approximation factor of \Cref{thm:intro-locality-dominating-set-IKW} and locality bound of \Cref{thm:intro-locality-dominating-set}), in part due to a new local recovery algorithm for related instances \cite{dikstein2026high}, but major barriers to this approach remain including 1) a `routing/embedding' step in such reductions that ruins sensitivity, and 2) the poor alphabet-soundness trade-off achieved by current HDX-based constructions.

We emphasize that the contribution of this work is not a new PCP theorem; it is a method for proving sensitivity lower bounds that can exploit these often highly involved low-soundness label cover constructions while retaining the parameters needed for downstream applications to set cover and dominating set.

\subsection{Organization of the Paper}

\Cref{sec:degree-reduction-region,subsection:degree-reduction-region,sec:ikw} develop the Impagliazzo--Kabanets--Wigderson route to low-soundness label cover and prove \Cref{thm:intro-label-cover-sensitivity-IKW}. \Cref{sec:sens-reductions,sec:degree-reduction,sec:robust,sec:alphabet-reduction,sec:dh-composition,sec:dh-iteration} develop the Dinur--Harsha route and prove \Cref{thm:intro-label-cover}. \Cref{sec:set-cover,sec:dominating-set} then organize the covering reductions by problem and prove \Cref{thm:intro-locality-dominating-set-IKW,thm:intro-locality-dominating-set}.
\section{Preliminaries}\label{sec:pre}

For a positive integer $n$, let $[n]$ denote the set $\{1,2,\ldots,n\}$.
We use bold symbols such as $\bm\pi$ to denote random assignments, random labelings, and other random output objects produced by algorithms or recovery maps; deterministic assignments and labelings are written without boldface, and symbols such as $\Pi$ are reserved for probability distributions and couplings.

\subsection{Label Cover}

We work with a slightly generalized label cover formalism.
A label cover instance is a bipartite projection game, and here we allow a unary left predicate on each left vertex.
Ordinary label cover is recovered by taking all of these predicates to be identically~$1$.

\begin{definition}[Left-predicate label cover instance]\label{def:admissible-game}
    A \emph{left-predicate label cover instance} is a tuple
    \[
        I=(U,V,E,\Sigma_U,\Sigma_V,\mathcal P,F),
        \qquad
        \mathcal P=\{P_u:\Sigma_U\to\{0,1\}\}_{u\in U},
        \qquad
        F=\{f_e:\Sigma_U\to\Sigma_V\}_{e\in E},
    \]
    where $G=(U,V,E)$ is a bipartite graph, $\Sigma_U$ and $\Sigma_V$ are finite alphabets, each $P_u$ is a unary predicate on left labels, and each $f_e$ is a projection.
    An assignment is a pair $\pi=(\pi_U,\pi_V)$ with $\pi_U:U\to\Sigma_U$ and $\pi_V:V\to\Sigma_V$.
    An edge $e=(u,v)$ is satisfied by $\pi$ if
    \[
        P_u(\pi_U(u))=1
        \qquad\text{and}\qquad
        f_{(u,v)}(\pi_U(u))=\pi_V(v).
    \]
    Thus a left label that fails its predicate satisfies none of its incident edges.
    For an assignment $\pi$, let
    \[
        \val_I(\pi)
        :=
        \frac{1}{|E|}
        \cdot
        \#\bigl\{(u,v)\in E : P_u(\pi_U(u))=1 \text{ and } f_{(u,v)}(\pi_U(u))=\pi_V(v)\bigr\}
    \]
    denote the fraction of satisfied edges, and write $\val(I):=\max_\pi \val_I(\pi)$.
    When every predicate $P_u$ is identically~$1$, we recover the usual label cover / projection-game notion.
    We will refer to this special case simply as \emph{ordinary label cover}.
    When the predicates are clear from context, we may abbreviate the tuple by omitting $\mathcal P$.
    For $\delta\in(0,1)$, $\mathsf{LabelCover}_\delta$ is the problem that, given a satisfiable instance, asks for an assignment of value at least~$\delta$.
\end{definition}

For two assignments $\pi,\tilde \pi:X  \to \Sigma$ over the same domain $X$, let $d_{\mathrm H}(\pi,\tilde \pi) = \#\{x \in X : \pi(x) \neq \tilde \pi(x)\}$ denote their Hamming distance.
For two assignments $\pi=(\pi_U,\pi_V)$ and $\tilde \pi = (\tilde \pi_U,\tilde \pi_V)$ for a left-predicate label cover instance $I=(U,V,E,\Sigma_U,\Sigma_V,\mathcal P,F)$, we define $d_{\mathrm H}(\pi,\tilde \pi) := d_{\mathrm H}(\pi_U,\tilde \pi_U) + d_{\mathrm H}(\pi_V,\tilde \pi_V)$.

\begin{definition}[Earth mover's distance]
Let $(\mathcal X,d)$ be a metric space and let $P,Q$ be probability distributions on $\mathcal X$.
The \emph{earth mover's distance} (a.k.a.\ $1$-Wasserstein distance) between $P$ and $Q$ is
$\EMD(P,Q) := \min_{\Gamma \in \Pi(P,Q)} \E_{(X,Y)\sim \Gamma}\bigl[d(X,Y)\bigr]$,
where $\Pi(P,Q)$ denotes the set of couplings of $P$ and $Q$.
Throughout the paper, $\mathcal X$ will be the set of assignments and $d$ will be the Hamming distance $d_{\mathrm H}$.
\end{definition}

For randomized algorithms for minimization problems, we use the following convention throughout:
an algorithm is an $\alpha$-approximation on feasible instances if it outputs a feasible solution almost surely and its expected objective value is at most $\alpha$ times optimum.
Equivalently, if the objective function of an instance $I$ is denoted by $\operatorname{cost}_I$, then
\[
    \E[\operatorname{cost}_I(A(I))]\le \alpha\,\opt(I)
\]
for every feasible $I$.
When the problem is clear from context, we simply say that $A$ satisfies the $\alpha$-approximation guarantee.

\subsection{Sensitivity}
We define a variant of sensitivity that is more convenient when studying the label cover problem, called swap sensitivity.
Let $I=(U,V,E,\Sigma_U,\Sigma_V,\mathcal P,F)$ be a left-predicate label cover instance.
For $e \in E$ and $f: \Sigma_U \to \Sigma_V$, let $I^{e \gets f}$ denote the instance obtained from $I$ by replacing the projection $f_e$ with $f$.
For $u \in U$ and a predicate $Q:\Sigma_U \to \{0,1\}$, let $I^{u \gets Q}$ denote the instance obtained from $I$ by replacing $P_u$ with $Q$.
Then, the \emph{swap sensitivity} of a randomized algorithm $A$ on $I$ is defined as
\[
    \SwapSens(A, I)
    :=
    \max\left\{
        \begin{aligned}
            &\max_{e \in E}\max_{f:\Sigma_U\to \Sigma_V}
            \EMD\qty(A(I), A(I^{e\gets f})), \\
            &\max_{u \in U}\max_{Q:\Sigma_U\to\{0,1\}}
            \EMD\qty(A(I), A(I^{u\gets Q}))
        \end{aligned}
    \right\}.
\]
For a family of instances $\mathcal I$, set $\SwapSens(A,\mathcal I) := \max_{I \in \mathcal I}\SwapSens(A,I)$.
For $s \in (0,1)$ and a family of instances $\mathcal I$, let $\mathcal I_{\mathrm{sat}}\subseteq \mathcal I$ denote the satisfiable instances. We define $\SwapSens_s(\mathcal I)$ to be the minimum of $\SwapSens(A,\mathcal I)$ over all algorithms $A$ such that for every $I\in \mathcal I_{\mathrm{sat}}$, if $\bm \pi \sim A(I)$ then $\E[\val_I(\bm \pi)] \ge s$.

We say that a family of left-predicate label cover instances $\mathcal I$ is \emph{swap-closed} if for any instance 
$I=(U,V,E,\Sigma_U,\Sigma_V,\mathcal P,F)\in \mathcal I$, $e \in E$, and $u \in U$,
every projection swap $I^{e\gets f}$ and every predicate swap $I^{u\gets Q}$ also belongs to $\mathcal I$.
For instances $I$ and $\tilde I$ on the same underlying graph and the same alphabets, the \emph{swap distance} $\SwapDist(I,\tilde I)$ is the minimum number of projection substitutions and left-predicate substitutions needed to transform $I$ into $\tilde I$.
Note that for any algorithm $A$ and $I,\tilde I \in \mathcal I$ for a swap-closed family of instances $\mathcal I$, we have $\EMD(A(I),A(\tilde I)) \le \SwapSens(A,\mathcal I) \cdot \SwapDist(I,\tilde I)$.
For a family of instances $\mathcal I$, we define $\SwapClo(\mathcal I)$ as the \emph{swap-closure} of $\mathcal I$, i.e., the family of instances obtained by repeatedly swapping projections and/or left predicates in instances from~$\mathcal I$.

\begin{lemma}[Neighboring witness extraction]\label{lem:neighboring-witness}
    Let $\mathcal I$ be a family of instances equipped with $\SwapDist$.
    Let $A$ be an algorithm on $\mathcal I$.
    If
    $\frac{\EMD(A(I),A(\widetilde I))}{\SwapDist(I,\widetilde I)}\ge \lambda$
    for some $I,\widetilde I\in \mathcal I$ with $\SwapDist(I,\widetilde I)=k>0$, then along every shortest path
    $I=I^{(0)},I^{(1)},\ldots,I^{(k)}=\widetilde I$
    satisfying $\SwapDist(I^{(t)},I^{(t+1)})=1$ for every $t\in\{0,\ldots,k-1\}$, there exists $t$ with
    $\EMD\bigl(A(I^{(t)}),A(I^{(t+1)})\bigr)\ge \lambda$.
    In particular, any sensitivity lower bound can be witnessed by a neighboring pair.
\end{lemma}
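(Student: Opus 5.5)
The plan is to combine the triangle inequality for $\EMD$ with an averaging (pigeonhole) argument along the path. First, I will check that $\EMD$, taken with respect to the Hamming metric $d_{\mathrm H}$ on assignments, is a genuine metric on output distributions. Symmetry and nonnegativity are immediate from the definition. For the triangle inequality I will use the standard gluing argument.

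\begin{itemize}
\item Let $\Gamma_1$ be an optimal coupling of $(P,Q)$ and $\Gamma_2$ an optimal coupling of $(Q,R)$.
\item Since all distributions here are supported on a finite set of assignments, I can glue them along the common marginal $Q$. This gives a joint law of $(X,Y,Z)$ whose $(X,Y)$-marginal is $\Gamma_1$ and whose $(Y,Z)$-marginal is $\Gamma_2$.
\item Then $\E[d_{\mathrm H}(X,Z)]\le \E[d_{\mathrm H}(X,Y)]+\E[d_{\mathrm H}(Y,Z)]$.
\item The $(X,Z)$-marginal is a coupling of $P$ and $R$, so $\EMD(P,R)\le \EMD(P,Q)+\EMD(Q,R)$.
\end{itemize}

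Next, fix any shortest path $I=I^{(0)},\ldots,I^{(k)}=\widetilde I$ with consecutive swap distance $1$. Applying the triangle inequality $k-1$ times gives
\[
\EMD\bigl(A(I),A(\widetilde I)\bigr)\le \sum_{t=0}^{k-1}\EMD\bigl(A(I^{(t)}),A(I^{(t+1)})\bigr).
\]
The left side is at least $\lambda k$ by hypothesis, since $\SwapDist(I,\widetilde I)=k$. So the sum of $k$ nonnegative terms is at least $\lambda k$, and at least one term must be at least $\lambda$. That term is the required index $t$.

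For the ``in particular'' clause, take any pair $(I,\widetilde I)$ witnessing a normalized ratio of at least $\lambda$. The pair $(I^{(t)},I^{(t+1)})$ produced above has swap distance $1$ and $\EMD$ at least $\lambda$, so it is a neighboring pair witnessing the same bound. Such a path exists because swap distance is defined as a minimum number of single substitutions. This also requires each intermediate instance to lie in the family on which $A$ is defined. That holds if $\mathcal I$ is swap-closed, and otherwise we pass to $\SwapClo(\mathcal I)$; I would note this as an implicit assumption.

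The only point needing care is the gluing step in the triangle inequality. Over a finite space it is elementary: define $\Pr[X=x,Y=y,Z=z]=\Gamma_1(x,y)\Gamma_2(y,z)/Q(y)$ for $Q(y)>0$, and set it to zero otherwise. Everything else is bookkeeping.
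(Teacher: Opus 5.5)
Your proposal is correct and follows the paper's proof: apply the triangle inequality for $\EMD$ along the shortest swap path, then average the $k$ summands to find one that is at least $\lambda$. The paper leaves two points implicit that you make explicit: the gluing argument behind the $\EMD$ triangle inequality, and the requirement that the intermediate instances lie in the domain of $A$ (for example, that the family is swap-closed).
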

\begin{proof}
    Let $I=I^{(0)},I^{(1)},\ldots,I^{(k)}=\widetilde I$
    be a shortest path as above, where $k=\SwapDist(I,\widetilde I)$.
    By repeated use of the triangle inequality,
    \[
        \EMD(A(I),A(\widetilde I))
        \le
        \sum_{t=0}^{k-1}
        \EMD\bigl(A(I^{(t)}),A(I^{(t+1)})\bigr).
    \]
    Hence, the average of the $k$ summands is at least
    $\frac{\EMD(A(I),A(\widetilde I))}{k}\ge \lambda$,
    so at least one summand is at least $\lambda$.
\end{proof}

For comparison, if one restricts attention to projection swaps while keeping the left-predicates fixed, define the \emph{edge-update sensitivity at $I$} by
$\mathsf{EdgeUpdateSens}(A,I) := \max_{\tilde I}\EMD\!\left(A(I),A(\tilde I)\right)$,
where $\tilde I$ ranges over instances obtained from $I$ by one edge deletion or one edge insertion (with an arbitrary projection on the inserted edge). If $I^{e\setminus}$ denotes deleting an edge $e$ from $I$, then a swap $I \to I^{e\gets f}$ can be realized as the two-edge-update path $I \to I^{e\setminus} \to I^{e\gets f}$; hence by the triangle inequality for EMD,
\[
    \max_{e \in E}\max_{f:\Sigma_U\to \Sigma_V}\EMD\qty(A(I), A(I^{e\gets f}))
    \le 2\cdot \max_{J \in \{I\}\cup\{I^{e\setminus}:e\in E\}} \mathsf{EdgeUpdateSens}(A,J).
\]
In particular, every swap-sensitivity lower bound that is proved using only projection swaps implies the same lower bound in the edge-deletion/update model up to a factor~$2$.

\part{Low-Soundness Label Cover via Impagliazzo--Kabanets--Wigderson}
In this chapter we prove \Cref{thm:intro-label-cover-sensitivity-IKW}, a sensitivity lower bound for low-soundness label cover based on the combinatorial parallel repetition theorem of Impagliazzo, Kabanets, and Wigderson (IKW) \cite{impagliazzo2009new}. The section is split into three components. In \Cref{sec:degree-reduction-region}, we introduce left/right sensitivity regions and the reduction lemma that uses them to transfer sensitivity lower bounds through reductions. In \Cref{subsection:degree-reduction-region}, we analyze a right-degree reduction procedure that reduces the right-degree of label cover while simultaneously amplifying its right-sensitivity by a corresponding factor. Finally, in \Cref{sec:ikw}, we introduce the IKW construction and a new corresponding low-sensitivity recovery procedure that combined with degree-reduction completes the proof.

\section{Left/Right Sensitivity Regions}\label{sec:degree-reduction-region}

In this section, we set up an abstract framework for transferring sensitivity lower bounds through reductions.
Concretely, we consider an instance transformation $\mathcal T$ together with a recovery map $\mathcal R$ that converts a labeling of the transformed instance back to a labeling of the source instance, and we identify conditions under which such a pair lets us pull a sensitivity lower bound back to the source family.

To make this framework work in label cover, we need to track perturbations on the left and right separately, since a reduction may amplify them by different factors.
If a label cover algorithm has sensitivity $k$, then swapping a single constraint can alter at most $k_L$ labels on the left and $k_R$ labels on the right, with $k_L + k_R \le k$.
It is therefore more informative to reason about the pair $(k_L,k_R)$ than only their sum.
For our reduction arguments, we package this information as an upward-closed region of feasible left/right budget vectors, which is the form that composes cleanly when couplings are chained along a swap path.

To formalize this idea, we record which expected left/right perturbation budgets are attainable under couplings.

For points $x,\tilde x \in \mathbb R^2$, we write $x \preceq \tilde x$ if $x_1 \leq \tilde x_1$ and $x_2 \leq \tilde x_2$ (the coordinate-wise order).
Let $S,S' \subseteq \mathbb R^2$ be sets of points.
We write $S \preceq_{\mathrm H} S'$ when for any $x \in S$, there exists $x' \in S'$ such that $x \preceq x'$ (\emph{Hoare order}).
We also write
\[
    \uparrow S := \{x \in \mathbb R^2 : \exists y \in S \text{ such that } y \preceq x\}
\]
for the upward closure of $S$.
For a scalar $c \in \mathbb R_{\ge 0}$ and a set $S \subseteq \mathbb R^2$, we write $c \cdot S := \{cx : x \in S\}$.
For a matrix $M\in \mathbb R_{\ge 0}^{2\times 2}$ and a set $S\subseteq \mathbb R^2$, we write $M\cdot S := \{Mx : x\in S\}$.

For two assignments $\pi = (\pi_U,\pi_V),\tilde \pi = (\tilde \pi_U,\tilde \pi_V)$ to a label cover instance, we define their \emph{distance vector} $\vec d(\pi,\tilde \pi)$ as $(d_{\mathrm H}(\pi_U,\tilde \pi_U),d_{\mathrm H}(\pi_V,\tilde \pi_V))$.
Let $\Pi,\tilde \Pi$ be distributions over assignments of a label cover instance.
We define the \emph{attainable expected distance} set as:
\[
    \mathsf{Att}(\Pi,\tilde \Pi)
    :=
    \Big\{
        \E_{(\bm\pi,\tilde{\bm\pi}) \sim \mathcal D}\vec{d}(\bm\pi,\tilde{\bm\pi})
        \;:\;
        \mathcal D \text{ ranges over all couplings of } \Pi \text{ and } \tilde \Pi
    \Big\},
\]
Its upward closure $\uparrow \mathsf{Att}(\Pi,\tilde \Pi)$ is the set of all budget vectors $x$ for which there exists a coupling of $\Pi$ and $\tilde \Pi$ whose expected distance vector is at most $x$.

We now aggregate these attainable budgets into a single region that must work uniformly over every source swap.
For an ordinary label cover instance $I = (U,V,E,\Sigma_U,\Sigma_V,F)$, define the \emph{feasible swap-sensitivity region} as
\[
    \mathsf{SSR}(A, I)
    :=
    \bigcap_{e \in E,\, f:\Sigma_U\to \Sigma_V}
    \uparrow \mathsf{Att}(A(I), A(I^{e\gets f})).
\]
Thus, $x \in \mathsf{SSR}(A,I)$ if and only if for every single-edge swap of $I$, there exists a coupling of the two output distributions whose expected distance vector is at most $x$.
If $I = (U,V,E,\Sigma_U,\Sigma_V,P,F)$ is a left-predicate label cover instance, we extend this definition by also intersecting over swaps of a single left predicate:
\[
    \mathsf{SSR}(A, I)
    :=
    \bigg(
        \bigcap_{e \in E,\, f:\Sigma_U\to \Sigma_V}
        \uparrow \mathsf{Att}(A(I), A(I^{e\gets f}))
    \bigg)
    \cap
    \bigg(
        \bigcap_{u \in U,\, Q:\Sigma_U\to\{0,1\}}
        \uparrow \mathsf{Att}(A(I), A(I^{u\gets Q}))
    \bigg).
\]
For a family of instances $\mathcal I$, define
\[
    \mathsf{SSR}(A,\mathcal I) := \bigcap_{I \in \mathcal I}\mathsf{SSR}(A,I).
\]
For $s \in (0,1)$ and an arbitrary family of label cover instances $\mathcal I$, let
$\mathcal I_{\mathrm{sat}} := \{I \in \mathcal I : I \text{ is satisfiable}\}$.
We define the optimal feasible swap-sensitivity region by
\[
    \mathsf{SSR}_s(\mathcal I)
    :=
    \bigcup_{A \text{ achieves soundness } s \text{ on } \mathcal I_{\mathrm{sat}}}
    \mathsf{SSR}(A,\mathcal I),
\]
and note that, by construction, every set $\mathsf{SSR}(\cdot)$ and $\mathsf{SSR}_s(\cdot)$ is upward closed.
The following reduction lemma is the mechanism that transfers sensitivity lower bounds through such reductions: feasible budgets compose directly when couplings are glued along a swap path.

We now record how these feasible regions behave under reductions.
\begin{definition}[Sensitivity-Region-Preserving Reduction]\label{def:sensitivity-region-reduction}
    Let $\mathcal I$ be a family of ordinary label cover instances or left-predicate label cover instances on the same underlying graph and domain.
    Let $\mathcal T$ be a procedure that transforms an instance $I \in \mathcal I$ into another instance $I'$, and let $\mathcal R$ be a (possibly randomized) procedure that transforms an assignment $\pi'=(\pi'_U,\pi'_V)$ for $I'$ to an assignment $\bm \pi=(\bm\pi_U,\bm\pi_V)$ for $I$ ($\mathcal R$ might depend on $I$).
    For a distribution $\Pi'$ over assignments for $I'$, we write $\mathcal R(\Pi')$ for the distribution obtained by sampling $\bm\pi'\sim \Pi'$ and then sampling $\bm\pi\sim \mathcal R(\bm\pi')$.
    For $s,s' \in [0,1]$, $C_{\mathcal T}>0$, $C_{\mathcal R} \in \mathbb R_{\ge 0}^{2\times 2}$, we say that the pair $(\mathcal T,\mathcal R)$ is a $(s,s',C_{\mathcal T},C_{\mathcal R})$-sensitivity-region-preserving reduction for $\mathcal I$ if the following hold:
    \begin{enumerate}
        \item If $I$ is satisfiable, then $I'=\mathcal T(I)$ is also satisfiable. \label{item:sensitivity-region-reduction-completeness}
        \item If a (possibly random) assignment $\bm \pi'$ for $I'$ satisfies $\E[\mathsf{val}_{I'}(\bm \pi')] \geq s'$, then the assignment $\bm \pi = \mathcal R(\bm \pi')$ satisfies $\E[\mathsf{val}_{I}(\bm \pi)] \geq s$. \label{item:sensitivity-region-reduction-soundness}
        \item Let $I,\tilde I \in \mathcal I$ be two instances.
        Then, we have $\mathsf{SwapDist}(\mathcal T(I),\mathcal T(\tilde I)) \leq C_{\mathcal T} \cdot \mathsf{SwapDist}(I,\tilde I)$.
        In particular, this implies that $\mathcal T$ outputs instances on a common underlying graph and domain across all $I\in\mathcal I$, so that $\mathsf{SwapDist}(\cdot,\cdot)$ is well-defined on $\mathcal T(\mathcal I)$.
        \label{item:sensitivity-region-reduction-instance-distance}
        \item Let $I,\tilde I \in \mathcal I$ be two instances and let $\Pi',\tilde \Pi'$ be distributions over assignments for $\mathcal T(I)$ and $\mathcal T(\tilde I)$, respectively.
        Then, we have
        \[
            C_{\mathcal R}\cdot \uparrow \mathsf{Att}(\Pi',\tilde \Pi')
            \subseteq
            \uparrow \mathsf{Att}(\mathcal R(\Pi'),\mathcal R(\tilde \Pi')).
        \]
        \label{item:sensitivity-region-reduction-assignment-distance}
    \end{enumerate}
\end{definition}

\begin{lemma}\label{lem:sensitivity-region-reduction-transfer}
    Let $\mathcal I$ be a swap-closed family of instances on the same underlying graph and domain.
    Suppose there exists an $(s,s',C_{\mathcal T},C_{\mathcal R})$-sensitivity-region-preserving reduction $(\mathcal{T},\mathcal{R})$ for $\mathcal I$.
    Write $t := \lceil C_{\mathcal T}\rceil$.
    Then we have
    \[
        C_{\mathcal R} \cdot \left( t \cdot \mathsf{SSR}_{s'}(\SwapClo(\mathcal I')) \right)
        \subseteq
        \mathsf{SSR}_{s}(\mathcal I),
    \]
    where $\mathcal I' = \mathcal T(\mathcal I) := \{\mathcal T(I) : I \in \mathcal I\}$.
\end{lemma}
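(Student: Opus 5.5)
The proof lifts any good algorithm on the transformed family to one on the source family, and pushes its sensitivity region back through $\mathcal R$. Fix a point $y\in t\cdot\mathsf{SSR}_{s'}(\SwapClo(\mathcal I'))$. Then $y=tx$ with $x\in\mathsf{SSR}(A',\SwapClo(\mathcal I'))$ for some algorithm $A'$ achieving expected value $s'$ on the satisfiable instances of $\SwapClo(\mathcal I')$. Define the algorithm $A$ on $\mathcal I$ by $A(I):=\mathcal R(A'(\mathcal T(I)))$, using the recovery map for $I$. The goal is to show $C_{\mathcal R}y\in\mathsf{SSR}(A,\mathcal I)$ and that $A$ achieves soundness $s$ on $\mathcal I_{\mathrm{sat}}$.

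\emph{Soundness.} Let $I\in\mathcal I$ be satisfiable. By item~\ref{item:sensitivity-region-reduction-completeness}, $\mathcal T(I)$ is satisfiable, and it lies in $\mathcal I'\subseteq\SwapClo(\mathcal I')$. Hence $\E[\val_{\mathcal T(I)}(\bm\pi')]\ge s'$ for $\bm\pi'\sim A'(\mathcal T(I))$. Item~\ref{item:sensitivity-region-reduction-soundness} then gives $\E[\val_I(A(I))]\ge s$.

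\emph{Sensitivity.} Fix $I\in\mathcal I$ and a single swap $\tilde I$ of it (projection or predicate). Since $\mathcal I$ is swap-closed, $\tilde I\in\mathcal I$ and $\SwapDist(I,\tilde I)\le 1$. Item~\ref{item:sensitivity-region-reduction-instance-distance} gives $\SwapDist(\mathcal T(I),\mathcal T(\tilde I))\le C_{\mathcal T}$. Because swap distance is an integer, it is at most $t$. So there is a path $\mathcal T(I)=J^{(0)},\dots,J^{(m)}=\mathcal T(\tilde I)$ with $m\le t$, consecutive instances differing by one swap, and every $J^{(j)}\in\SwapClo(\mathcal I')$.

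For each step, $x\in\mathsf{SSR}(A',J^{(j)})$ provides a coupling $\mathcal D_j$ of $A'(J^{(j)})$ and $A'(J^{(j+1)})$ with expected distance vector $\preceq x$. This holds for both kinds of swap, since $\mathsf{SSR}$ intersects over both. Glue these couplings along the path by the gluing lemma: sample $\bm\pi^{(0)}$, then each $\bm\pi^{(j+1)}$ from the conditional law of $\mathcal D_j$ given $\bm\pi^{(j)}$. This is possible because consecutive couplings share the marginal $A'(J^{(j)})$. Apply the coordinate-wise triangle inequality for the left and right Hamming distances, which are separately metrics. The resulting coupling of $A'(\mathcal T(I))$ and $A'(\mathcal T(\tilde I))$ has expected distance vector $\preceq m x\preceq tx=y$, using $x\succeq 0$. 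Hence $y\in\uparrow\mathsf{Att}(A'(\mathcal T(I)),A'(\mathcal T(\tilde I)))$.

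Item~\ref{item:sensitivity-region-reduction-assignment-distance} then puts $C_{\mathcal R}y$ in $\uparrow\mathsf{Att}(A(I),A(\tilde I))$. Intersecting over all swaps and all $I\in\mathcal I$ gives $C_{\mathcal R}y\in\mathsf{SSR}(A,\mathcal I)\subseteq\mathsf{SSR}_s(\mathcal I)$.

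\emph{Main obstacle.} The delicate step is the gluing: it must preserve the expected distance vector coordinate-wise, and it needs the monotonicity $mx\preceq tx$. For the latter I would note that every upward-closed $\mathsf{SSR}$ meets the nonnegative quadrant, and restrict attention to points with $x\succeq 0$. If needed, I would replace $x$ by $\max(x,0)$ coordinate-wise, which stays in the upward-closed set. A minor point is that when $\SwapDist(\mathcal T(I),\mathcal T(\tilde I))=0$ the path is trivial and the claim is immediate.
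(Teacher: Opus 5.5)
Your proposal is correct and follows the paper's proof essentially step for step. Like the paper, you lift $A'$ through $\mathcal T$ and $\mathcal R$, take a swap path of length $m\le t$ inside $\SwapClo(\mathcal I')$, glue the per-step couplings, apply the coordinatewise triangle inequality, and finish with item~4.

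One small cleanup: the $\max(x,0)$ replacement is unnecessary, and it would point the wrong way, since membership of a larger point in an upward-closed set says nothing about $C_{\mathcal R}(tx)$. It is unnecessary because every point of $\uparrow\mathsf{Att}(\cdot,\cdot)$ dominates a nonnegative expected distance vector, so any $x\in\mathsf{SSR}(A',\SwapClo(\mathcal I'))$ is automatically $\succeq 0$ and $mx\preceq tx$ holds directly.
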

\begin{proof}
    Let $A'$ be any algorithm on $\SwapClo(\mathcal I')$ that achieves soundness $s'$ on its satisfiable members.
    We build an algorithm $A$ for $\mathcal I$ by simulating $A'$ on the image of each instance under $\mathcal T$.
    Given $I \in \mathcal I$, construct $I' := \mathcal T(I)$ and run $A'$ on $I'$ to obtain a (possibly random) assignment $\bm \pi'$ for $I'$.
    Finally, return the assignment $\bm \pi \sim \mathcal R(\bm \pi')$ for $I$.

    We first verify the approximation guarantee of $A$.
    If $I$ is satisfiable, then Item~\ref{item:sensitivity-region-reduction-completeness} of \Cref{def:sensitivity-region-reduction} ensures that $I'$ is satisfiable as well.
    Since $I' \in \mathcal I' \subseteq \SwapClo(\mathcal I')$, the assumption on $A'$ yields $\E[\mathsf{val}_{I'}(\bm \pi')] \ge s'$, and Item~\ref{item:sensitivity-region-reduction-soundness} implies $\E[\mathsf{val}_{I}(\bm \pi)] \ge s$.

    We next bound the feasible sensitivity region of $A$.
    Fix any $x \in \mathsf{SSR}(A',\SwapClo(\mathcal I'))$.
    We claim that $C_{\mathcal R}(t x) \in \mathsf{SSR}(A,\mathcal I)$.

    Let $I,\tilde I \in \mathcal I$ satisfy $\mathsf{SwapDist}(I,\tilde I) = 1$, and write $I' = \mathcal T(I)$ and $\tilde I' = \mathcal T(\tilde I)$.
    By Item~\ref{item:sensitivity-region-reduction-instance-distance}, we have $\mathsf{SwapDist}(I',\tilde I') \le C_{\mathcal T}$.
    Hence there exists a swap path
    $I' = I'_0, I'_1,\ldots, I'_m = \tilde I'$ with $m \le \lceil C_{\mathcal T}\rceil = t$,
    where each consecutive pair differs by one swap and each $I'_j\in \SwapClo(\mathcal I')$.

    Let $\Pi_j := A'(I'_j)$ denote the output distribution of $A'$ on $I'_j$.
    Because $x \in \mathsf{SSR}(A',\SwapClo(\mathcal I'))$, for each $j \in [m]$ there exists a coupling $\Gamma_j$ of $\Pi_{j-1}$ and $\Pi_j$ such that
    \[
        \E_{(\bm\pi_{j-1},\bm\pi_j)\sim \Gamma_j}\vec d(\bm\pi_{j-1},\bm\pi_j)
        \preceq x.
    \]
    By repeated gluing of these couplings, we obtain a joint distribution on $(\bm\pi_0,\ldots,\bm\pi_m)$ whose $(j-1,j)$ marginal is $\Gamma_j$ for every $j$.
    Under this joint distribution, the triangle inequality for Hamming distance on each side gives
    \[
        \vec d(\bm\pi_0,\bm\pi_m)
        \preceq
        \sum_{j=1}^m \vec d(\bm\pi_{j-1},\bm\pi_j).
    \]
    Taking expectations yields
    \[
        \E[\vec d(\bm\pi_0,\bm\pi_m)]
        \preceq
        \sum_{j=1}^m \E_{(\bm\pi_{j-1},\bm\pi_j)\sim \Gamma_j}\vec d(\bm\pi_{j-1},\bm\pi_j)
        \preceq
        m x
        \preceq
        t x.
    \]
    Hence $t x \in \uparrow \mathsf{Att}(\Pi_0,\Pi_m)$.
    Applying Item~\ref{item:sensitivity-region-reduction-assignment-distance} gives
    \[
        C_{\mathcal R}(t x)
        \in
        \uparrow \mathsf{Att}(\mathcal R(\Pi_0),\mathcal R(\Pi_m))
        =
        \uparrow \mathsf{Att}(A(I),A(\tilde I)).
    \]
    Since the source one-swap pair $(I,\tilde I)$ was arbitrary, this proves $C_{\mathcal R}(t x) \in \mathsf{SSR}(A,\mathcal I)$, and hence the claim.

    Finally, the claim holds for every admissible $A'$ (namely, every $A'$ that achieves soundness $s'$ on the satisfiable members of $\SwapClo(\mathcal I')$).
    Taking the union over all such $A'$ yields
    \[
        C_{\mathcal R} \cdot \left( t \cdot \mathsf{SSR}_{s'}(\SwapClo(\mathcal I')) \right)
        \subseteq
        \mathsf{SSR}_{s}(\mathcal I),
    \]
    as required.
\end{proof}

\section{Right-Degree Reduction}\label{subsection:degree-reduction-region}

In this section we describe a procedure which converts a label cover instance to one in which every right-vertex has the same low degree, while preserving any sensitivity lower bound that we have on the original instance.

\begin{definition}[Expanders]\label{def:expander}
    A $d$-regular graph $G = (V,E)$ on $n$ vertices is said to be an $[n,d,\lambda]$-expander if every eigenvalue of its normalized adjacency matrix other than the top one is at most $\lambda$ in absolute value (equivalently, the second-largest absolute eigenvalue is at most $\lambda$).
\end{definition}
\begin{proposition}[Explicit Quasi-Ramanujan graphs, \cite{alon2021explicit}]\label{pro:ramanujan}
    There exists a uniform algorithm that, for every $d>3$ and every sufficiently large $n \in \mathbb{N}$ with $nd$ even, constructs an $[n,d,\lambda]$-expander with $\lambda = O(d^{-1/2})$ in time polynomial in $n$ and $d$.
\end{proposition}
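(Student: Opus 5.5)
This proposition is an external result, and the plan is to invoke it from \cite{alon2021explicit} rather than reprove it. What has to be checked is that the cited statement matches ours. Alon's theorem gives, for every fixed $d\ge 3$ and every sufficiently large $n$ with $nd$ even, an explicit $d$-regular $n$-vertex graph whose nontrivial adjacency eigenvalues are at most $2\sqrt{d-1}+o(1)$ in absolute value. It is constructible in time polynomial in $n$ and $d$.

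Normalizing by $d$ gives second-largest absolute eigenvalue at most
\[
\frac{2\sqrt{d-1}+o(1)}{d}=O(d^{-1/2}).
\]
This is exactly the $[n,d,\lambda]$-expander condition of \Cref{def:expander} with $\lambda=O(d^{-1/2})$. We only need the weaker hypothesis $d>3$.

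If a self-contained route were wanted, I would build it in three steps.

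\begin{enumerate}
\item Handle special degrees and sizes. Use the Lubotzky--Phillips--Sarnak or Morgenstern Ramanujan graphs, which are $(q+1)$-regular with $\lambda\le 2\sqrt q/(q+1)$ for prime powers $q$. By prime density, there is such a $q$ within a constant factor of any target degree.

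\item Adjust the degree to exactly $d$. Take edge-disjoint unions of a constant number of such graphs on a common vertex set, plus a perfect matching or a cycle to fix the parity of $d$. Normalized eigenvalue bounds add by the triangle inequality (Weyl), so the union still has $\lambda=O(d^{-1/2})$.

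\item Adjust the vertex count to exactly $n$. Start from a graph on $m\ge n$ vertices with $m=(1+o(1))n$ and remove $m-n$ vertices. Then repair the resulting degree deficits with a sparse correcting graph of bounded degree, whose normalized contribution is $O(1/d)$.
\end{enumerate}

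The hardest step is the third one. Deleting a set of vertices is not a small perturbation in spectral norm, so the $O(d^{-1/2})$ bound is not automatic. I would first try interlacing (Cauchy), which controls the induced subgraph's spectrum on the space orthogonal to its near-top eigenvector, and then bound the effect of the repair edges by their maximum degree. If that fails, the fallback is Alon's actual technique: approximate the target size by a product or lift of explicit graphs with controlled spectrum.

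For the purposes of this paper, the citation suffices, and the proposition is used only as a black box in the degree-reduction step.
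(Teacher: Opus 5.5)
Your proposal is correct and takes the same approach as the paper, which also just invokes Alon's construction as a black box. The only real content is the normalization you carry out: divide the adjacency bound $2\sqrt{d-1}+\epsilon$ by $d$, with $\epsilon$ held fixed (say $\epsilon=1$) so the constant in $\lambda=O(d^{-1/2})$ is uniform in $d$, and this gives the normalized condition of \Cref{def:expander}.

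One small caution: the cited $\mathrm{poly}(n)$-time theorem is stated for each fixed $d$ with $n>n_0(d,\epsilon)$, so polynomial dependence of the running time on $d$ is a paraphrase rather than part of it. Likewise, your optional self-contained sketch is not needed, and its vertex-count step is exactly where Alon's actual work lies.
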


Let $d \geq 4$ be right-degree that we would like our label cover instance to have. Degree reduction proceeds by replacing each right-vertex $v$, with degree $D_v$, by a ``cloud'' of $D_v$-many new vertices. We replace the edges to $v$ by edges from the $D_v$-many vertices to the neighborhood of $v$ by using a $d$-regular expander graph as a temple; this disperses the information stored at $v$ among the $D_v$ vertices.  We describe this procedure, which can be seen pictorially in \Cref{fig:DR}, formally next.

Because \Cref{pro:ramanujan} only guarantees expanders for pairs $(n,d)$ satisfying its hypotheses, we formulate the degree-reduction map relative to a fixed choice of expanders.
Fix an even integer $d\ge 4$ and a parameter $\lambda \in [0,1)$.
Let $I = (U,V,E,\Sigma_U,\Sigma_V,P,F)$ be a left-predicate label cover instance whose right degrees satisfy $\deg(v) \ge d$ for every $v \in V$.
For each right vertex $v \in V$, let $D_v$ denote its degree and set $\delta_V := \min_{v\in V} D_v$.
Since $d$ is even, the parity condition $D_vd$ even from \Cref{pro:ramanujan} is automatic for every $v\in V$.
A \emph{$(d,\lambda)$-expander package} for $I$ is a choice, for each $v\in V$, of a $[D_v,d,\lambda]$-expander $H_v = ([D_v],E'')$.
Once such a package is fixed, we define the transformation $\mathcal T_{\mathrm{DR}}(I; d)$ as follows.
We fix an arbitrary ordering of the neighbors of each right vertex $v\in V$, and denote the $i$-th neighbor of $v$ in the bipartite graph $G=(U,V,E)$ by $\Gamma_G(v,i)$.
Let $\Gamma_{H_v}(i,k)$ denote the $k$-th neighbor of $i$ in $H_v$.

\begin{algorithm}[H]
\caption{Right-Degree Reduction ${\cal T}_{\mathrm{DR}}$} \label{alg:degree_reduction}
    \KwIn{A left-predicate label cover instance $I=(U,V,E,\Sigma_U,\Sigma_V,P,F)$, an integer $d \geq 4$, a $(d, \lambda)$-expander package, and an ordering on the neighbors of each vertex}
    \KwOut{An instance $I'=(U,V',E',\Sigma_U,\Sigma_V,P,F')$ with right-degree $d$.}
    \For{$v \in V$}{
        Let $D_v$ be the degree of $v$\;
        Let $H_v$ be the $[D_v,d, \lambda]$-expander for $v$ in the expander package\;
        Add to $V'$ a vertex $(v,i)$ for each $i \in [D_v]$\;
        \For{$i \in [D_v]$}{
            Define the edges in $E'$ that are incident to $(v,i)$ as follows\;
            The $k$-th neighbor of $(v,i)$ is the $j$-th neighbor of $v$ in $G$\;
        }
    }
    \For{$e'=(u,(v,i)) \in E'$}{
        The constraint $f_{e'}: \Sigma_U \rightarrow \Sigma_V \in F'$ for $e'$ is $f_{(u,v)}$\;
    }
    \Return $I'$
\end{algorithm}

We observe some properties of the instance output by degree-reduction. First, $|V'|=|E|$. For $G'=(V',E')$, each neighborhood satisfies $\Gamma_{G'}((v,i),k) := \Gamma_G(v,\Gamma_{H_v}(i,k))$. Finally, observe that the degree of each vertex $u \in U$ in $G'$ becomes $d\cdot \deg_G(u)$.

The recovery map $\mathcal R_{\mathrm{DR}}$ converts an assignment $\pi'=(\pi'_U,\pi'_V)$ for $I'=\mathcal T_{\mathrm{DR}}(I; d)$ into a random assignment $\bm \pi$ for $I=(U,V,E,\Sigma_U,\Sigma_V,P,F)$.
It copies the left labels, i.e., sets $\bm\pi_U(u):=\pi'_U(u)$ for each $u\in U$.
For each right vertex $v\in V$, it samples an index $i_v\in [D_v]$ uniformly and independently, and sets $\bm\pi_V(v):=\pi'_V(v,i_v)$.

\begin{figure}
    \centering
        \begin{subfigure}[b]{0.4\textwidth}
    \centering
        \begin{tikzpicture}[scale=1.1]
            \tikzset{inner sep=0,outer sep=3}
            
            \tikzstyle{a}=[inner sep=4pt, inner ysep=4pt,outer sep=0.5pt,
            draw=black!40!white, fill=Cerulean!10!white, very thick, rounded corners=6pt, align=center]

            \tikzstyle{g}=[inner sep=4pt, inner ysep=4pt,outer sep=0.5pt,
            draw=black!40!white, fill=green!10!white, very thick, rounded corners=6pt, align=center]
            
            \tikzstyle{b}=[inner sep=4pt, inner ysep=4pt,outer sep=0.5pt,
            draw=black!20!white, fill=Cerulean!10!white, thick, align=center]
            \tikzstyle{e}=[inner sep=1pt, inner ysep=1pt,outer sep=0.5pt,
            draw=Orange!70!white, fill=yellow!10!white, dashed, thick, rounded corners=6pt, align=center]

             \tikzstyle{f}=[inner sep=1pt, inner ysep=1pt,outer sep=0.5pt,
            draw=Orange!70!white, fill=red!4!white, dashed, thick, rounded corners=6pt, align=center]
                \draw[e] (-4,-0.5) -- (-4,0.5) -- (-3,0.5) -- (-3,-0.5) -- cycle;

                \draw[e] (-4,0.7) -- (-4,1.7) -- (-3,1.7) -- (-3,0.7) -- cycle;

                \draw[e] (-4,-1.7) -- (-4,-0.7) -- (-3,-0.7) -- (-3,-1.7) -- cycle;

                \draw[e] (-4,-2.9) -- (-4,-1.9) -- (-3,-1.9) -- (-3,-2.9) -- cycle;
                
                \node[a, circle] (L1) at (-2.5,1.2) {$u_1$};

                \tikzstyle{c}=[inner sep=3pt, inner ysep=2pt,outer sep=0.5pt,
            draw=black!40!white, fill=Cerulean!10!white, very thick, rounded corners=6pt, align=center]

                \node[a, circle] (L2) at (-2.5,0) {$u_2$};

                \node[a, circle] (L3) at (-2.5,-1.2) {$u_3$};

                \node[a, circle] (L4) at (-2.5,-2.4) {$u_4$};
                
                \node[a, circle] (R1) at (0,0) {$v$};

                \draw[a] (L1) -- (R1);
                \draw[a] (L2) -- (R1);
                \draw[a] (L3) -- (R1);
                \draw[a] (L4) -- (R1);

                \draw[f] (0,-3) -- (-2,-3) -- (-2,-5) -- (0,-5) -- cycle;
                
                \node at (-1.5,0.2) {\tiny $f_{(u_2,v)}$};

                \node[rotate=29] at (-1.5,-0.55) {\tiny $f_{(u_3,v)}$};

                \node[rotate=45] at (-1.5,-1.25) {\tiny $f_{(u_4,v)}$};

                \node[rotate=-25] at (-1.5,0.9) {\tiny $f_{(u_1,v)}$};

                \node at (-3.5,1.2) {$P_{u_1}$};

                \node at (-3.5,0) {$P_{u_2}$};

                \node at (-3.5,-1.2) {$P_{u_3}$};

                \node at (-3.5,-2.4) {$P_{u_4}$};

                \node[c, circle] (E2) at (-0.5,-3.5) {\tiny $2$};
                \node[c, circle] (E1) at (-1.5,-3.5) {\tiny $1$};

                \node[c, circle] (E4) at (-0.5,-4.5) {\tiny $4$};

                \node[c, circle] (E3) at (-1.5,-4.5) {\tiny $3$};

                \draw[a] (E1) --(E2);
                \draw[a] (E2) --(E3);
                \draw[a] (E3) --(E4);
                \draw[a] (E4) --(E1);

                \node at (-2.4,-4) {$H_v$};

                \draw [draw=black!80!white,  thick] plot [smooth, tension=1.4] coordinates {(-0.3, 0.3) (-0.5, -0.1) (-0.2, -0.5)};

                \node at (-0.3,0.6) {$D_v$};
            \end{tikzpicture} 
            \caption{The neighborhood of $v$ before degree-reduction and auxiliary expander graph $H_v$.}
    \end{subfigure}
    \begin{subfigure}[b]{0.4\textwidth}
    \centering
        \begin{tikzpicture}[scale=1.1]
            \tikzset{inner sep=0,outer sep=3}
            
            \tikzstyle{a}=[inner sep=4pt, inner ysep=4pt,outer sep=0.5pt,
            draw=black!40!white, fill=Cerulean!10!white, very thick, rounded corners=6pt, align=center]

            \tikzstyle{c}=[inner sep=1pt, inner ysep=2pt,outer sep=0.5pt,
            draw=black!40!white, fill=Cerulean!10!white, very thick, rounded corners=6pt, align=center]

            \tikzstyle{g}=[inner sep=4pt, inner ysep=4pt,outer sep=0.5pt,
            draw=black!40!white, fill=green!10!white, very thick, rounded corners=6pt, align=center]
            
            \tikzstyle{b}=[inner sep=4pt, inner ysep=4pt,outer sep=0.5pt,
            draw=black!20!white, fill=Cerulean!10!white, thick, align=center]
            \tikzstyle{e}=[inner sep=1pt, inner ysep=1pt,outer sep=0.5pt,
            draw=Orange!70!white, fill=yellow!10!white, dashed, thick, rounded corners=6pt, align=center]

             \tikzstyle{f}=[inner sep=1pt, inner ysep=1pt,outer sep=0.5pt,
            draw=Orange!70!white, fill=red!4!white, dashed, thick, rounded corners=6pt, align=center]
                 \draw[e] (-4,-0.5) -- (-4,0.5) -- (-3,0.5) -- (-3,-0.5) -- cycle;

                \draw[e] (-4,0.7) -- (-4,1.7) -- (-3,1.7) -- (-3,0.7) -- cycle;

                \draw[e] (-4,-1.7) -- (-4,-0.7) -- (-3,-0.7) -- (-3,-1.7) -- cycle;

                \draw[e] (-4,-2.9) -- (-4,-1.9) -- (-3,-1.9) -- (-3,-2.9) -- cycle;
                
                \node[a, circle] (L1) at (-2.5,1.2) {$u_1$};

                \node[a, circle] (L2) at (-2.5,0) {$u_2$};

                \node[a, circle] (L3) at (-2.5,-1.2) {$u_3$};

                \node[a, circle] (L4) at (-2.5,-2.4) {$u_4$};

                \draw[f] (1.1,1.8) -- (-0.1,1.8) -- (-0.1,-3) -- (1.1,-3) -- cycle;

                \node at (-3.5,1.2) {$P_{u_1}$};

                \node at (-3.5,0) {$P_{u_2}$};

                \node at (-3.5,-1.2) {$P_{u_3}$};

                \node at (-3.5,-2.4) {$P_{u_4}$};

                \node[c, circle] (E2) at (0.5,0) {\tiny $(v,2)$};
                \node[c, circle] (E1) at (0.5,1.2) {\tiny $(v,1)$};

                \node[c, circle] (E3) at (0.5,-1.2) {\tiny $(v,3)$};

                \node[c, circle] (E4) at (0.5,-2.4) {\tiny $(v,4)$};

                \draw[a] (E1) -- (L2);
                \draw[a] (E1) -- (L4);

                \draw[a] (E2) -- (L1);
                \draw[a] (E2) -- (L3);

                \draw[a] (E3) -- (L2);
                \draw[a] (E3) -- (L4);

                \draw[a] (E4) -- (L1);
                \draw[a] (E4) -- (L3);

                 \node[rotate=-21] at (-1.6,1) {\tiny $f_{(u_1,v)}$};

                \node[rotate=-48] at (-1.8,0.6) {\tiny $f_{(u_1,v)}$};

                \node[rotate=21] at (-.3,1.05) {\tiny $f_{(u_2,v)}$};

                \node[rotate=21] at (-.3,-0.15) {\tiny $f_{(u_3,v)}$};

                \node[rotate=21] at (-.3,-1.35) {\tiny $f_{(u_3,v)}$};
                
                \node[rotate=50] at (-.1,0.7) {\tiny $f_{(u_4,v)}$};

                \node[rotate=-21] at (-1.6,-0.2) {\tiny $f_{(u_2,v)}$};

                \node[rotate=-21] at (-1.6,-1.4) {\tiny $f_{(u_3,v)}$};
            \end{tikzpicture}
            \vspace{3em}
            \caption{The portion of the label-cover instance corresponding to $v$ after degree-reduction.}
    \end{subfigure}
    \caption{Before and after applying degree-reduction on a right vertex $v$ with graph $H_v$.}
    \label{fig:DR}
\end{figure}
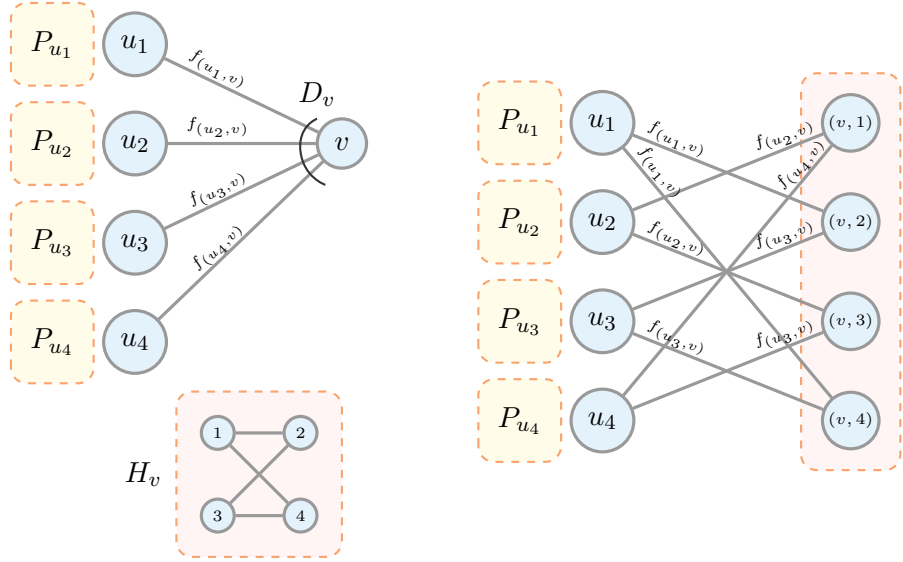

\begin{lemma}\label{lem:degree-reduction-region}
    Let $d\ge 4$ be even, and let $\lambda \in [0,1)$.
    Let $\mathcal I$ be a swap-closed family of left-predicate label cover instances on a common underlying graph and domain, whose minimum right degree is at least $\delta_V\ge d$.
    For each right vertex $v$ of this common underlying graph, let $D_v$ denote its (common) degree across the family, and fix a $[D_v,d,\lambda]$-expander $H_v$.
    Let $\mathcal I' = \SwapClo(\mathcal T_{\mathrm{DR}}(\mathcal{I}; d))$.
    Then $\mathcal I'$ consists of right-regular left-predicate label cover instances with right degree $d$, and for any $\varepsilon>0$, we have 
    \[
        C_{\mathcal R} \cdot \left( d \cdot \mathsf{SSR}_{\varepsilon+O(\lambda)}(\mathcal I') \right)
        \subseteq
        \mathsf{SSR}_{\varepsilon}(\mathcal I),
    \]
    where $C_{\mathcal R} = \begin{pmatrix} 1 & 0 \\ 0 & 1/\delta_V \end{pmatrix}$.
\end{lemma}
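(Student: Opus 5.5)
The plan is to verify that $(\mathcal T_{\mathrm{DR}},\mathcal R_{\mathrm{DR}})$ is an $(\varepsilon,\varepsilon+O(\lambda),d,C_{\mathcal R})$-sensitivity-region-preserving reduction for $\mathcal I$ in the sense of \Cref{def:sensitivity-region-reduction}, and then to invoke \Cref{lem:sensitivity-region-reduction-transfer} with $t=\lceil C_{\mathcal T}\rceil = d$. Right-regularity is immediate from the construction: each $(v,i)$ has exactly $d$ neighbors $\Gamma_G(v,\Gamma_{H_v}(i,k))$. Since swaps never alter the graph, the expander package, or the right degrees, every member of the swap closure is also right-regular. We now check the four items in turn.

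\textbf{Completeness and instance distance.} Let $\pi$ satisfy $I$. Setting $\pi'_U=\pi_U$ and $\pi'_V(v,i)=\pi_V(v)$ satisfies every edge of $I'$, because each edge $(u,(v,i))$ carries $f_{(u,v)}$ and $P_u$ is unchanged. For the distance item, a predicate swap at $u$ becomes a single predicate swap in $I'$. A projection swap on $e=(u,v)$ changes exactly the edges $(u,(v,i))$ with $u=\Gamma_G(v,\Gamma_{H_v}(i,k))$. If $u$ is the $j$-th neighbor of $v$, these are the edges joining $u$ to the $d$ neighbors of $j$ in $H_v$, so there are $d$ of them (counted with multiplicity). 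Hence $C_{\mathcal T}=d$.

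\textbf{Assignment-distance item.} Take any coupling of $\Pi',\tilde\Pi'$ with expected distance vector $(a,b)$. We couple the recovery maps by copying the left labels and using the \emph{same} random index $i_v$ for each $v$ on both sides. The left distance is then exactly $a$. The expected right distance is
\[
\sum_v \Pr[\pi'_V(v,i_v)\ne\tilde\pi'_V(v,i_v)] = \sum_v \frac{\#\{i:\pi'_V(v,i)\ne\tilde\pi'_V(v,i)\}}{D_v} \le \frac{b}{\delta_V}.
\]
This gives $C_{\mathcal R}\cdot\uparrow\mathsf{Att}\subseteq\uparrow\mathsf{Att}$, and upward closure handles dominated points. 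Note that $\mathcal R_{\mathrm{DR}}$ does not depend on the instance beyond the shared graph, so no extra instance-dependence term arises.

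\textbf{Soundness, the main obstacle.} We must show that if $\E[\val_{I'}(\bm\pi')]\ge\varepsilon+O(\lambda)$, then $\E[\val_I(\mathcal R(\bm\pi'))]\ge\varepsilon$. Fix $\pi'$ and a right vertex $v$. For each label $\sigma$, let $S_\sigma\subseteq[D_v]$ be the set of neighbors $j$ of $v$ (indexed as vertices of $H_v$) with $P_{u_j}=1$ and $f_{(u_j,v)}(\pi'_U(u_j))=\sigma$, and let $T_\sigma=\{i:\pi'_V(v,i)=\sigma\}$. The satisfied edges of $I'$ in $v$'s cloud number $\sum_\sigma E_{H_v}(T_\sigma,S_\sigma)$, where $E_{H_v}$ counts edges between the sets. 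The expected number of satisfied edges at $v$ after recovery is $\sum_\sigma |T_\sigma||S_\sigma|/D_v$.

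The expander mixing lemma gives
\[
\bigl|E(T,S)-d|T||S|/D_v\bigr|\le \lambda d\sqrt{|T||S|}.
\]
Summing over $\sigma$ and applying Cauchy--Schwarz, $\sum_\sigma\sqrt{|T_\sigma||S_\sigma|}\le\sqrt{\sum|T_\sigma|\sum|S_\sigma|}\le D_v$. Therefore
\[
\frac{1}{dD_v}\,\#\mathrm{sat}_{I'}(v) \le \frac{1}{D_v}\,\E\bigl[\#\mathrm{sat}_I(v)\bigr] + \lambda .
\]
Every vertex's contribution is weighted by $D_v$ in both $|E|$ and $|E'|=d|E|$, so averaging over $v$ yields $\val_{I'}(\pi')\le\E[\val_I(\mathcal R(\pi'))]+\lambda$. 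The bookkeeping here (multi-edge counting and the normalization) is where care is needed. Taking expectations over $\bm\pi'$ completes the soundness item, with the $O(\lambda)$ loss coming from the mixing lemma.
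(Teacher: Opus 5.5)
Your proposal is correct and follows the paper's overall structure. Like the paper, you show that $(\mathcal T_{\mathrm{DR}},\mathcal R_{\mathrm{DR}})$ is an $(\varepsilon,\varepsilon+O(\lambda),d,C_{\mathcal R})$-sensitivity-region-preserving reduction and then apply \Cref{lem:sensitivity-region-reduction-transfer}. Three ingredients coincide with the paper's: $d$ copied edges change per projection swap, a predicate swap stays a single predicate swap, and the shared-index coupling gives $C_{\mathcal R}=\mathrm{diag}(1,1/\delta_V)$.

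The one real difference is the soundness item. The paper does not prove it. It cites the degree-reduction analysis of Dinur--Harsha~\cite[Theorem 5.4]{dinur2013composition} and remarks that this argument involves only the copied projections, so the left predicates do not affect it.

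You instead give a self-contained proof using the expander mixing lemma together with Cauchy--Schwarz over the label classes $T_\sigma,S_\sigma$. Because you fold the predicate condition $P_{u_j}=1$ into $S_\sigma$, compatibility with left predicates comes out of the computation directly, rather than from a remark about another paper's proof. Your argument also gives an explicit additive loss of exactly $\lambda$ in place of an unspecified $O(\lambda)$.

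The only cost is routine bookkeeping with the multi-edge and self-loop conventions of $H_v$, so that the number of satisfied cloud edges equals $\mathbf 1_T^{\top}A\,\mathbf 1_S$ with $A$ symmetric and $d$-regular. You correctly flagged this.
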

\begin{proof}
    We show that the pair $(\mathcal T_{\mathrm{DR}},\mathcal R_{\mathrm{DR}})$ is a $(\varepsilon,\varepsilon+O(\lambda),C_{\mathcal T},C_{\mathcal R})$-sensitivity-region-preserving reduction with $C_{\mathcal T}=d$ and $C_{\mathcal R}=\begin{pmatrix} 1 & 0 \\ 0 & 1/\delta_V \end{pmatrix}$.
    Hence, the claim follows by \Cref{lem:sensitivity-region-reduction-transfer}.

    Item~\ref{item:sensitivity-region-reduction-completeness} of \Cref{def:sensitivity-region-reduction} is immediate.
    If $I$ is satisfiable by $(\pi_U,\pi_V)$, then setting $\pi'_U:=\pi_U$ and $\pi'_V(v,i):=\pi_V(v)$ for every $v\in V$ and $i\in[D_v]$ satisfies every copied edge projection in $I'$, while the left predicates are unchanged.

    For Item~\ref{item:sensitivity-region-reduction-soundness}, we use the approximation guarantee proved in~\cite[Theorem 5.4]{dinur2013composition} for the degree-reduction gadget.
    Their argument depends only on the copied edge projections inside each cloud, while our transformation leaves the left predicates unchanged.
    Therefore the same recovery map $\mathcal R_{\mathrm{DR}}$ yields the stated $\varepsilon+O(\lambda)$ versus $\varepsilon$ guarantee in the left-predicate setting as well.

    For Item~\ref{item:sensitivity-region-reduction-instance-distance}, consider a single swap in the source instance.
    If it is an edge-projection swap on an original edge $(u,v)$, then exactly the $d$ copied edges incident with the cloud of $v$ change, so the target instances differ by at most $d$ edge-projection swaps.
    If it is a predicate swap at a left vertex $u$, then only the single left predicate at $u$ changes, because $\mathcal T_{\mathrm{DR}}$ leaves the family $P$ unchanged.
    Thus $C_{\mathcal T}=d$.

    It remains to verify $C_{\mathcal R}$.
    Fix any two deterministic assignments $\pi'=(\pi'_U,\pi'_V)$ and $\tilde \pi'=(\tilde\pi'_U,\tilde\pi'_V)$ for $I'$.
    Consider the natural coupling of $\mathcal R_{\mathrm{DR}}(\pi')$ and $\mathcal R_{\mathrm{DR}}(\tilde\pi')$ that uses the \emph{same} random index $i_v\sim [D_v]$ (uniformly and independently over $v\in V$) in both recovered assignments.
    Under this coupling, the left labels are copied deterministically, so
    $d_{\mathrm H}(\bm\pi_U,\tilde{\bm\pi}_U)=d_{\mathrm H}(\pi'_U,\tilde\pi'_U)$.
    On the right side, for each $v\in V$ we have
    \[
        \Pr\big[\bm\pi_V(v)\neq \tilde{\bm\pi}_V(v)\big]
        = \frac{\big|\{i\in[D_v]:\pi'_V(v,i)\neq \tilde\pi'_V(v,i)\}\big|}{D_v}
        \le \frac{1}{\delta_V}\cdot \big|\{i\in[D_v]:\pi'_V(v,i)\neq \tilde\pi'_V(v,i)\}\big|.
    \]
    Summing over all $v\in V$ and taking expectation shows
    \[
        \E\big[d_{\mathrm H}(\bm\pi_V,\tilde{\bm\pi}_V)\big]
        \le \frac{1}{\delta_V}\cdot d_{\mathrm H}(\pi'_V,\tilde\pi'_V).
    \]
    Hence this coupling yields
    \[
        \E\big[\vec d(\bm\pi,\tilde{\bm\pi})\big]
        \preceq
        C_{\mathcal R}\cdot \vec d(\pi',\tilde\pi').
    \]

    Let $\Pi',\tilde\Pi'$ be arbitrary distributions over assignments for $I'$ and fix any
    \[
        x \in \uparrow \mathsf{Att}(\Pi',\tilde\Pi').
    \]
    By definition of upward closure, there exists a coupling $\mathcal D$ of $\Pi'$ and $\tilde\Pi'$ such that
    \[
        \E_{(\bm\pi',\tilde{\bm\pi}')\sim \mathcal D}\vec d(\bm\pi',\tilde{\bm\pi}')
        \preceq x.
    \]
    Sample $(\bm\pi',\tilde{\bm\pi}')\sim \mathcal D$, and conditioned on this pair, apply the shared-index coupling of the recovery randomness described above.
    Averaging the deterministic bound gives
    \[
        \E\big[\vec d(\bm\pi,\tilde{\bm\pi})\big]
        \preceq
        C_{\mathcal R}\cdot
        \E_{(\bm\pi',\tilde{\bm\pi}')\sim \mathcal D}\vec d(\bm\pi',\tilde{\bm\pi}')
        \preceq
        C_{\mathcal R}\cdot x.
    \]
    Hence $C_{\mathcal R}\cdot x \in \uparrow \mathsf{Att}(\mathcal R_{\mathrm{DR}}(\Pi'),\mathcal R_{\mathrm{DR}}(\tilde\Pi'))$.
    Since $x$ was arbitrary, this verifies Item~\ref{item:sensitivity-region-reduction-assignment-distance}.
\end{proof}

\begin{corollary}[Forbidden-point consequence]\label{cor:forbidden-point-consequence}
    In the setting of \Cref{lem:degree-reduction-region}, let
    $y=(y_L,y_R)\in \mathbb{R}_{\ge 0}^2$ satisfy
    \[
        y \notin \mathsf{SSR}_{\varepsilon}(\mathcal I).
    \]
    Then
    \[
        \mathsf{SSR}_{\varepsilon+O(\lambda)}(\mathcal I')
        \cap
        \{x\in \mathbb{R}_{\ge 0}^2 : C_{\mathcal R}(d x)\preceq y\}
        =
        \emptyset.
    \]
    Equivalently,
    \[
        \mathsf{SSR}_{\varepsilon+O(\lambda)}(\mathcal I')
        \cap
        \left([0,y_L/d]\times [0,\delta_V y_R/d]\right)
        =
        \emptyset.
    \]
\end{corollary}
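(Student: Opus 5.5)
The corollary follows from the containment in \Cref{lem:degree-reduction-region} by contraposition, together with the fact that $\mathsf{SSR}_{\varepsilon}(\mathcal I)$ is upward closed. We suppose for contradiction that some $x\in \mathsf{SSR}_{\varepsilon+O(\lambda)}(\mathcal I')$ satisfies $C_{\mathcal R}(dx)\preceq y$.

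First, $dx\in d\cdot \mathsf{SSR}_{\varepsilon+O(\lambda)}(\mathcal I')$ by definition of scalar multiplication of sets. Hence $C_{\mathcal R}(dx)\in C_{\mathcal R}\cdot\bigl(d\cdot \mathsf{SSR}_{\varepsilon+O(\lambda)}(\mathcal I')\bigr)$. \Cref{lem:degree-reduction-region} then gives $C_{\mathcal R}(dx)\in \mathsf{SSR}_{\varepsilon}(\mathcal I)$.

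Second, as noted after its definition, every set of the form $\mathsf{SSR}_s(\cdot)$ is upward closed. Each $\mathsf{SSR}(A,I)$ is an intersection of upward closures, and unions and intersections of upward-closed sets remain upward closed. Since $C_{\mathcal R}(dx)\preceq y$, it follows that $y\in \mathsf{SSR}_{\varepsilon}(\mathcal I)$, contradicting the hypothesis. This proves the first displayed statement.

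For the equivalent box formulation, we compute $C_{\mathcal R}(dx)=(dx_1,\,dx_2/\delta_V)$. So, for $x\in\mathbb R^2_{\ge 0}$, the condition $C_{\mathcal R}(dx)\preceq y$ is exactly $x_1\le y_L/d$ and $x_2\le \delta_V y_R/d$. Since $d>0$ and $\delta_V>0$, the set $\{x\in\mathbb R_{\ge0}^2 : C_{\mathcal R}(dx)\preceq y\}$ equals the box $[0,y_L/d]\times[0,\delta_V y_R/d]$.

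No step here is a real obstacle. The only point needing care is the upward closure of $\mathsf{SSR}_\varepsilon(\mathcal I)$, which lets us pass from $C_{\mathcal R}(dx)$ to the coordinatewise larger point $y$, and this is immediate from the definitions.
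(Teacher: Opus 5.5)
Your proposal is correct and follows essentially the same argument as the paper: take $x$ in the intersection, use \Cref{lem:degree-reduction-region} to get $C_{\mathcal R}(dx)\in\mathsf{SSR}_{\varepsilon}(\mathcal I)$, and conclude $y\in\mathsf{SSR}_{\varepsilon}(\mathcal I)$ by upward closure, a contradiction. Your identification of the box via the diagonal form $C_{\mathcal R}=\mathrm{diag}(1,1/\delta_V)$ also matches the paper.
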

\begin{proof}
    Suppose for contradiction that
    $x\in \mathsf{SSR}_{\varepsilon+O(\lambda)}(\mathcal I')$
    and
    $C_{\mathcal R}(d x)\preceq y$.
    Then \Cref{lem:degree-reduction-region} gives
    \[
        C_{\mathcal R}(d x)\in \mathsf{SSR}_{\varepsilon}(\mathcal I).
    \]
    Since $\mathsf{SSR}_{\varepsilon}(\mathcal I)$ is upward closed and
    $C_{\mathcal R}(d x)\preceq y$, it follows that
    $y\in \mathsf{SSR}_{\varepsilon}(\mathcal I)$, contradicting the hypothesis.
    The displayed rectangle is exactly the set
    $\{x\in \mathbb{R}_{\ge 0}^2 : C_{\mathcal R}(d x)\preceq y\}$
    because
    $C_{\mathcal R}=\begin{pmatrix}1&0\\0&1/\delta_V\end{pmatrix}$.
\end{proof}
\section{Low-Soundness Label Cover Route via Direct Product Testing}\label{sec:ikw}

\subsection{The IKW Construction and Its Label-Cover Consequence}\label{subsec:ikw-consequence}

We are now ready to prove the main sensitivity lower bound of this chapter, \Cref{thm:intro-label-cover-sensitivity-IKW}. We re-state the result for convenience.

\begin{theorem}\label{cor:IKW-label-cover}
    There exists $C>0$ such that for all sufficiently large $N_{\mathrm{IKW}} \in \mathbb{N}$ and $k \leq o(\log N_{\mathrm{IKW}})$, any algorithm that, on satisfiable left-predicate label cover instances on $n=N_{\mathrm{IKW}}^{\Theta(k)}$ vertices, outputs an $\exp~ (-C\sqrt{k})$-satisfying solution must have sensitivity at least $n^{\Omega(1/k)}$.
    Moreover, the hard instances can be taken to be biregular with alphabet and degree parameters at most $2^{O(k)}$.
\end{theorem}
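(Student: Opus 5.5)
We plan to prove \Cref{cor:IKW-label-cover} by composing two sensitivity-region-preserving reductions and pulling a base sensitivity lower bound back through them with \Cref{lem:sensitivity-region-reduction-transfer}. The base is the swap-closed family $\mathcal I_0$ of satisfiable bounded-degree $2$-CSPs on $N=N_{\mathrm{IKW}}$ vertices with constant alphabet from \cite{fleming2026sensitivity}. For this family, some constant $\delta>0$ has the property that no algorithm achieving value $1-\delta$ on satisfiable members has swap sensitivity below $N^{c}$, for an absolute constant $c>0$; below we take $c=1/2$ for concreteness. In region language, the point $(N^{c},N^{c})$ is forbidden: it does not lie in $\mathsf{SSR}_{1-\delta}(\mathcal I_0)$, after viewing the CSP as label cover in the standard way. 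We fix $k' = \Theta(\sqrt{k})$ and apply the IKW map $\mathcal T_{\mathrm{IKW}}$ with parameters $(k,k')$. This gives a left-predicate label cover family on $U'=\binom{E}{k}$ and $V'=\binom{V}{k'}$ with a common graph. Swapping one base constraint changes only left predicates and projections at left vertices $S\ni e$, and each such vertex carries one predicate. So $C_{\mathcal T}$ is about $\binom{|E|-1}{k-1}$, a $\Theta(k/N)$ fraction of $U'$. We then apply right-degree reduction (\Cref{lem:degree-reduction-region}) with $d=2^{O(k)}$ and $\lambda=O(d^{-1/2})\le \exp(-C\sqrt k)/2$, using \Cref{pro:ramanujan}.

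The key new step is an IKW recovery map $\mathcal R_{\mathrm{IKW}}$ with an explicit matrix $C_{\mathcal R}$. It follows the overview. First draw $A$ and set $a=\pi_{V'}(A)$. Then for each $x$, draw an edge $e\ni x$ and $O(\log N/\varepsilon)$ samples $S\supseteq e$, and output $\pi_{U'}(S)|_x$ for the first sample that agrees with $a$ on $A$. Finally, run a low-sensitivity selection: repeat independently, pick a random threshold $t\in[1/2,1]$, and return the first candidate of value at least $1-t\delta$.

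For sensitivity, couple the randomness on both sides. Each queried $A$ and $S$ is marginally uniform, so the expected number of recovered coordinates that differ is at most
\[
N\cdot O\!\left(\tfrac{\log N}{\varepsilon}\right)\cdot\frac{d_{\mathrm H}(\pi_{U'},\tilde\pi_{U'})}{|U'|} \;+\; N\cdot \frac{d_{\mathrm H}(\pi_{V'},\tilde\pi_{V'})}{|V'|}.
\]
This gives $C_{\mathcal R}$ with entries about $N\log N/(\varepsilon|U'|)$ and $N/|V'|$. The thresholding step adds only a factor $\poly(1/\varepsilon,1/\delta)$, because nearby candidates have values within $O(\mathrm{dist}/|E|)$ of each other.

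Now chain the two reductions. The degree-reduction factor $1/\delta_V$ on the right cancels the imbalance $|E'|/|V'|$. The left factor $C_{\mathcal T}\cdot N/|U'|$ is $O(k)$ and the right factor is comparable, so a point $(T,T)$ in the final region forces a base point of order $(T\,k\,\poly(1/\varepsilon)\log N,\;T\,k\,\poly(1/\varepsilon))$. Since $1/\varepsilon=2^{O(k)}=N^{o(1)}$, the forbidden base point gives $T\ge N^{c-o(1)}=n^{\Omega(1/k)}$, using $n=N^{\Theta(k)}$. Swap sensitivity dominates both coordinates of the SSR, which yields the stated bound. Biregularity follows from the symmetry of IKW on the left and from degree reduction on the right, and alphabet and degrees are $2^{O(k)}$.

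The main obstacle is the soundness half of the IKW recovery. We must show that if $\E[\val(\pi')]\ge\varepsilon=\exp(-C\sqrt k)$, then with probability $\poly(\varepsilon)$ over $A$ the sampled decoding is $(1-\delta/2)$-satisfying. The first attempt is to reuse the IKW direct-product decoding analysis: their majority decoder conditioned on $\pi_{V'}(A)=a$ is good on a $\poly(\varepsilon)$-fraction of $A$. We would then show that our sampled decoder agrees with that majority on most $x$. This uses sampler and concentration properties of the inclusion graph $\binom{E}{k}\supseteq e$ and of the conditioned sets consistent with $a$. The $O(\log N/\varepsilon)$ samples let us locate a consistent $S$ whenever the consistent density is at least $\varepsilon$, with a union bound over $x$.
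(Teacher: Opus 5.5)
Your route is the paper's route. You use the Fleming--Yoshida base family, then the IKW instance with the same localized decoder and random-threshold selection, then right-degree reduction whose $1/\delta_V$ factor cancels the $|V|/|E_{\mathrm{IKW}}|$ weakness on the right, with all losses $2^{O(k)}=N^{o(1)}$. The difference is packaging. You treat the IKW step as a sensitivity-region-preserving reduction and chain it through \Cref{lem:sensitivity-region-reduction-transfer}. The paper instead proves \Cref{thm:IKW} directly as a single forbidden point, and uses the region machinery only for degree reduction (\Cref{cor:forbidden-point-consequence}). As written, your packaging fails in two places.

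\textbf{Soundness is only in expectation.} The soundness item of \Cref{def:sensitivity-region-reduction}, like the definition of $\mathsf{SSR}_s$, only assumes $\E[\val(\bm\pi')]\ge\varepsilon$ for a random $\bm\pi'$. Your recovery repeats only the decoder, always on the same single sample of $\bm\pi'$.
\begin{itemize}
\item Suppose $\bm\pi'$ is good with probability about $\varepsilon$ and worthless otherwise. Then every repetition can fail on the worthless part, and the selector falls back to a default assignment.
\item That default cannot have value $\ge 1-\delta$ on every base instance, or a constant algorithm would beat the base lower bound. So the expected recovered value need not reach $1-\delta$.
\item No map applied to one sample can amplify across the randomness of $\bm\pi'$.
\end{itemize}
There are two repairs. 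You can run the label-cover algorithm itself $\poly(1/\varepsilon)\log(1/\delta)$ times independently, pooling all decoded candidates before thresholding. This costs that factor in sensitivity, and the argument sits outside the transfer lemma. Alternatively, as \Cref{thm:IKW} does, assume the algorithm is $\varepsilon$-satisfying with probability $1-O(\delta)$.

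\textbf{The recovery depends on the source instance.} The selection step compares $\val_\Phi$ of each candidate for $I$ with $\val_{\tilde\Phi}$ for $\tilde I$. So item 4 of \Cref{def:sensitivity-region-reduction} cannot hold with any matrix $C_{\mathcal R}$.
\begin{itemize}
\item Identical target distributions put $0$ on the left-hand side, which would force the recovered distributions to be identical.
\item That fails, because the random threshold can land in the width-$1/|E|$ window between a candidate's two values.
\end{itemize}
The correct accounting is an additive drift of order $mN/(\delta|E|)=O(m/\delta)=\poly(1/\varepsilon)$ per source swap. This is the $D$ term of \Cref{def:sensitivity-reduction}, not something a $2\times 2$ matrix can capture.

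Both repairs cost only $N^{o(1)}$, so your $n^{\Omega(1/k)}$ bound survives once the IKW step is argued directly. Two smaller bookkeeping points:
\begin{itemize}
\item The forbidden base point should be $(N^c/2,N^c/2)$. A point $(x_L,x_R)$ of $\mathsf{SSR}$ only bounds the scalar earth mover's distance by $x_L+x_R$.
\item Your final chain drops the factor $d$ from the degree-reduction swap blow-up. This is harmless, since $d=2^{O(k)}=N^{o(1)}$.
\end{itemize}
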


\Cref{cor:IKW-label-cover} is proved using a modification of the combinatorial parallel repetition (direct product) theorem of Impagliazzo, Kabanets, and Wigderson (IKW) \cite{impagliazzo2009new}. We note above and throughout this section, we write $N_{\mathrm{IKW}}$ for the size of the base $2$CSP used by the direct-product construction, and $n$ in the statement for the final label cover size. We start by overviewing the construction and relevant parameters below.
\paragraph{Parameter guide.}
Let $N_{\mathrm{IKW}}$ denotes the number of variables in the base $2$CSP. This separates the base-instance size used by the IKW construction from the final label cover left-side size $n_{\mathrm{LC}}$ used in later sections.

We first recall the label cover instance analyzed by IKW \cite{impagliazzo2009new} in the perfect completeness regime. The construction starts with a `slightly hard' (constant soundness) base $2$-CSP $\Phi$ on a $d$-regular graph $G=([N_{\mathrm{IKW}}],E)$ with alphabet $\Sigma$. We first describe the constraint graph of IKW's amplified instance, which we denote by $\mathrm{IKW}^\Phi_k$. Fix $k \in \mathbb{N}$ and some $k'=\Theta(\sqrt{k})$ following \cite{impagliazzo2009new}.\footnote{We remark that \cite{dinur2023exponentially,dikstein2024chernoff} study a related test for $k'=O(k)$, which leads to improved soundness. However, the analysis, and hence implicitly the decoding, in these works requires a `smoothing' operation over the original assignment that may ruin the sensitivity of the decoding on the original instance.} The underlying graph of $\mathrm{IKW}^\Phi_k$ has left vertex set $U=\binom{E}{k}$ and right vertex set $V=\binom{[N_{\mathrm{IKW}}]}{k'}$, and we view it as a left-predicate label cover instance $(U,V,E_{\mathrm{IKW}},\Sigma_U,\Sigma_V,P, F)$. Edges in $E_{\mathrm{IKW}}$ are generated via the following process:
\begin{enumerate}
    \item Sample a size $k'$ subset $A \subset [N_{\mathrm{IKW}}]$ (i.e.\ a random vertex in $V$)
    \item Sample a random edge for each vertex in $A$, and call the resulting edge-set $A_{E,2}$\footnote{We remark this notation is due to a different CSP analyzed earlier in IKW's work. We follow their notation regardless to make references to their proof clear.} where $A_{E,2} \subseteq E$ and $|A_{E,2}|=k'$
    \item Sample $k-k'$ additional random edges $B_{E,2} \subseteq E$ without replacement\footnote{In fact we will assume in the analysis that the vertices in this set are entirely disjoint. This does not change the argument in the regime where $k \leq o(\log(N_{\mathrm{IKW}}))$ due to the low probability of re-sampling. The distinction of sampling with or without replacement is not made in \cite{impagliazzo2009new} for this reason.} so that $A_{E,2} \cap B_{E,2}=\emptyset$.
    \item Output the edge $(\{A_{E,2},B_{E,2}\},A)$
\end{enumerate}
In particular, every sampled pair $(A_{E,2},B_{E,2})$ determines a left vertex $S=(A_{E,2},B_{E,2}) \in U$, while the right endpoint $A$ lies in $V=\binom{[N_{\mathrm{IKW}}]}{k'}$.
The left alphabet is $\Sigma_U = \Sigma^{2k}$ (one symbol for each of the $2k$ original vertices in $(A_{E,2},B_{E,2})$), and the right alphabet is $\Sigma_V = \Sigma^{k'}$ (one symbol per original vertex in $A$). For $(\pi_U,\pi_V)$ to satisfy the edge $(\{A_{E,2},B_{E,2}\},A) \in E_{\mathrm{IKW}}$, it must hold that
\begin{enumerate}
    \item The left label $\pi_U(A_{E,2},B_{E,2})$ satisfies all constraints in $B_{E,2}$ in the original CSP, and
    \item $\pi_U$'s vertex assignments on $A$ agree with $\pi_V(A)$.
\end{enumerate}
The left-predicates $P_u$ enforce the first of these constraints, and the associated projection $f_{\{A_{E,2},B_{E,2}\},A} \in F$ enforces the second.
To align with our global notation, we denote the left labeling by $\pi_U$ and write $\tilde \pi_U$ for the corresponding labeling produced on a neighboring instance. Formally, $\pi_U:U \to \Sigma_U$ and $\tilde \pi_U:U \to \Sigma_U$, and similarly $\pi_V:V \to \Sigma_V$ and $\tilde \pi_V:V \to \Sigma_V$.

IKW \cite{impagliazzo2009new} prove the above CSP has soundness $\text{exp}(-\Omega(\sqrt{k}))$, which is quasi-polynomially related to the alphabet-size and left-degree $2^{O(k)}$. This will us to push to lower soundness without ruining the sensitivity. In particular, the main result of this section is the following sensitivity lower bound for $\mathrm{IKW}^\Phi_{k}$:
\begin{theorem}[IKW Sensitivity]\label{thm:IKW}
    For every constant $c>0$ there are constants $C_1,C_2>0$ such that for any $\delta,\tau>0$, any large enough $N_{\mathrm{IKW}} \in \mathbb{N}$, any $k \leq o(\log N_{\mathrm{IKW}})$, and any regular $2$-CSP $\Phi$ on $N_{\mathrm{IKW}}$ variables such that Max-$\Phi(1,1-\delta)$ has no $N_{\mathrm{IKW}}^c$-sensitive algorithm, Max-$\mathrm{IKW}^\Phi_{k}(1,e^{-C_1\delta \sqrt{k}})$ has no $(N_{\mathrm{IKW}}^{c-\tau},\frac{|V|}{|E_{\text{IKW}}|}N_{\mathrm{IKW}}^{c-\tau})$-sensitive algorithm that succeeds with probability more than $1-C_2\delta$.
\end{theorem}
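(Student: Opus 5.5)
We argue by contrapositive. Suppose $A'$ is an algorithm for $\mathrm{IKW}^\Phi_k$ that, on satisfiable instances, outputs with probability more than $1-C_2\delta$ an assignment of value at least $\varepsilon:=e^{-C_1\delta\sqrt k}$. Suppose further that $A'$ is $(N_{\mathrm{IKW}}^{c-\tau},\frac{|V|}{|E_{\mathrm{IKW}}|}N_{\mathrm{IKW}}^{c-\tau})$-sensitive. We will build an $N_{\mathrm{IKW}}^c$-sensitive algorithm for Max-$\Phi(1,1-\delta)$, which contradicts the hypothesis.

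\textbf{Transfer of a source swap.} The algorithm runs $A'$ on the image of $\Phi$ and then applies a randomized recovery map. First note what a single constraint swap in $\Phi$ does to the image. It changes only the left predicates of those $S=(A_{E,2},B_{E,2})$ whose $B_{E,2}$ contains the swapped edge. This is roughly a $k/|E|$ fraction of $U$, and neither the graph nor the projections change. Chaining couplings along this predicate-swap path, exactly as in the proof of \Cref{lem:sensitivity-region-reduction-transfer}, gives a coupling of the two outputs $\bm\pi,\tilde{\bm\pi}$ with two properties:
\begin{itemize}
\item the expected normalized left disagreement is at most $O(k/|E|)\cdot|U|N^{c-\tau}/|U|$, suitably normalized;
\item the right disagreement is at most $\frac{|V|}{|E_{\mathrm{IKW}}|}N^{c-\tau}\cdot\#\text{swaps}$.
\end{itemize}
Since $\#\text{swaps}\approx k|U|/|E|$ and $|E_{\mathrm{IKW}}|\ge|U|$, the normalized right disagreement is also $O(kN^{c-\tau}/N)$. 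Hence both normalized distances are $N^{-1+c-\tau}\cdot\mathrm{poly}(k)$.

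\textbf{Recovery and its sensitivity.} Next apply the recovery from the overview. Draw $A$ uniformly and set $a=\bm\pi_V(A)$. For each $x$, draw a random edge $e\ni x$ and $O(\log N/\varepsilon)$ samples $S\ni e$. Output $\bm\pi_U(S)|_x$ for a sample with $\bm\pi_U(S)|_A=a$, or a default symbol if there is none. Couple the recovery randomness identically on both runs. Each queried $S$ and the queried $A$ are marginally uniform, so a union bound gives expected Hamming distance at most
\[
N\cdot O\!\Big(\tfrac{\log N}{\varepsilon}\Big)\cdot N^{-1+c-\tau}\mathrm{poly}(k)=N^{c-\tau}e^{O(\sqrt k)}\log N .
\]
This is $\le N^{c-\tau/2}$, because $k=o(\log N)$. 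On top of this, repeat the recovery $r=O(\log(1/\delta)/\varepsilon')$ times with shared randomness. Then pick a random threshold $t\in[1/2,1]$ and output the first candidate of value at least $1-t\delta$ (low-sensitivity selection). The two selections differ only when some candidate value lies within the value gap of the threshold, which has probability $O(\text{distance}/(\delta|E|))$, or when the candidates themselves differ. Either way the added cost is another $r\cdot N^{c-\tau/2}$ factor, which is still below $N^c$ for large $N$. If no candidate passes, output a fixed assignment; this happens with probability $O(C_2\delta)$ plus the failure probability, so the expected value stays above $1-\delta$ once the constants are tuned.

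\textbf{Main obstacle: soundness of the recovery.} The crux is to show that, whenever $\val(\bm\pi)\ge\varepsilon$, a single run of the recovery outputs a $(1-\delta/2)$-satisfying assignment with probability $\Omega(\varepsilon)$. My plan is to reuse IKW's analysis. IKW show that with probability $\Omega(\varepsilon)$ over $A$, conditioned on the event $\mathcal E_A=\{S:\bm\pi_U(S)|_A=\bm\pi_V(A)\}$ (of density $\ge\varepsilon/2$), the plurality decoding $g_A(x)=\mathrm{plur}_{S\in\mathcal E_A,\,S\ni x}\bm\pi_U(S)|_x$ has value $1-\delta/2$. They prove this via a sampler/concentration property of the inclusion graph and the $\Theta(\sqrt k)$-choice of $k'$, which makes $\mathcal E_A$ "consistent" for most $x$. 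I would then prove that our sampled decoding agrees with $g_A$ on all but a $\delta/10$ fraction of $x$ in expectation. The key fact from IKW's consistency lemma is that, for a good $A$, a random $S\in\mathcal E_A$ containing $x$ agrees with $g_A(x)$ with probability $1-\delta/100$ for most $x$. Two further things are needed:
\begin{itemize}
\item Conditioning on a random edge $e\ni x$ instead of just $x$ only perturbs these densities by the regularity and sampler bounds.
\item $\mathcal E_A$ restricted to $S\ni e$ has density $\Omega(\varepsilon)$ for most $e$, so $O(\log N/\varepsilon)$ samples hit it with high probability.
\end{itemize}
Proving these conditional concentration statements with the sampled edge $e$ in place of $x$ is where I expect the most technical work.
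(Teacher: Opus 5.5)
Your proposal is correct and follows the paper's proof essentially step for step. It uses the same localized decoder (uniform $A$, a random base edge $e\ni x$, and $O(\log N/\varepsilon)$ sampled sets checked for membership in $\mathrm{Cons}_{A,a}$), the same union-bound sensitivity estimate from marginal uniformity of $A$ and of the sampled sets, the same random-threshold selection, and the same reduction of soundness to IKW's plurality decoder $g_A$ through the two facts you isolate. The differences are cosmetic: you count a source swap as predicate swaps rather than changed label-cover edges, which gives a slightly tighter left bound; and the step you flag as the main technical work is carried out in the paper via the flipped-sampler property of the inclusion graph between $E\setminus A$ and $\binom{E\setminus A}{k-k'}$, which needs $C_1$ small, combined with Property~3 of IKW's analysis.
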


The key point for the later degree-reduction step is that \Cref{thm:IKW} already gives exactly the kind of \emph{single forbidden budget point} required by \Cref{cor:forbidden-point-consequence}; one does not need to characterize the entire source feasible region. Concretely, if
\[
    y_{N_{\mathrm{IKW}},k}
    :=
    \left(
        N_{\mathrm{IKW}}^{c-\tau},
        \frac{|V|}{|E_{\mathrm{IKW}}|}N_{\mathrm{IKW}}^{c-\tau}
    \right),
\]
then \Cref{thm:IKW} says that this point is forbidden for the IKW family. Since the IKW graph is right-regular, its common right degree is
\[
    \delta_V = \frac{|E_{\mathrm{IKW}}|}{|V|}.
\]
Thus, after right-degree reduction, \Cref{cor:forbidden-point-consequence} multiplies the weak second coordinate by exactly the compensating factor $\delta_V$, turning the forbidden point above into a balanced forbidden square.

\begin{proof}[Proof of \Cref{cor:IKW-label-cover}]
    Let $\mathcal F_{N_{\mathrm{IKW}}}$ be the constant-alphabet, bounded-degree, regular base $2$-CSP family from Fleming--Yoshida \cite{fleming2026sensitivity}. Thus there exist constants $c,\delta_0>0$ such that Max-$\Phi(1,1-\delta_0)$ has no $N_{\mathrm{IKW}}^{c}$-sensitive algorithm for $\Phi\in \mathcal F_{N_{\mathrm{IKW}}}$.
    For each $k$, let
    \[
        \mathcal I^{\mathrm{IKW}}_{N_{\mathrm{IKW}},k}
        :=
        \SwapClo\!\left(\{\mathrm{IKW}_k^\Phi : \Phi\in \mathcal F_{N_{\mathrm{IKW}}}\}\right)
    \]
    be the swap-closed family generated by the corresponding IKW instances.

    Applying \Cref{thm:IKW} with $\delta=\delta_0$ and any fixed $\tau \in (0,c/2)$, we obtain a forbidden budget point
    \[
        y_{N_{\mathrm{IKW}},k}
        :=
        \left(
            N_{\mathrm{IKW}}^{c-\tau},
            \frac{|V|}{|E_{\mathrm{IKW}}|}N_{\mathrm{IKW}}^{c-\tau}
        \right)
    \]
    for the family $\mathcal I^{\mathrm{IKW}}_{N_{\mathrm{IKW}},k}$ at soundness
    \[
        \varepsilon_k := \exp(-C_1\delta_0\sqrt{k}).
    \]
    In the notation of \Cref{sec:degree-reduction-region}, this simply means that the point $y_{N_{\mathrm{IKW}},k}$ is not contained in the feasible sensitivity region of any algorithm achieving soundness $\varepsilon_k$ on this family.

    Next choose an even degree parameter
    \[
        d_\star = \Theta\!\left(\exp(C_\star \sqrt{k})\right)
    \]
    for a sufficiently large absolute constant $C_\star$, and let $\lambda_\star=O(d_\star^{-1/2})$ be the corresponding expander parameter from \Cref{pro:ramanujan}. By increasing $C_\star$ if necessary, we may ensure that
    \[
        O(\lambda_\star)\le \frac{1}{2}\varepsilon_k.
    \]
    Since the original IKW right degree is $N_{\mathrm{IKW}}^{\Theta(k)}$, it is in particular at least $d_\star$ for all sufficiently large $N_{\mathrm{IKW}}$, so the right-degree reduction from \Cref{subsection:degree-reduction-region} applies to the family $\mathcal I^{\mathrm{IKW}}_{N_{\mathrm{IKW}},k}$.

    Let
    \[
        \mathcal I'_{N_{\mathrm{IKW}},k}
        :=
        \SwapClo\!\left(
            \mathcal T_{\mathrm{DR}}(\mathcal I^{\mathrm{IKW}}_{N_{\mathrm{IKW}},k};d_\star)
        \right)
    \]
    be the resulting degree-reduced family.
    Because the IKW graph is right-regular with
    $\delta_V = |E_{\mathrm{IKW}}|/|V|$,
    \Cref{cor:forbidden-point-consequence} gives
    \[
        \mathsf{SSR}_{\varepsilon_k+O(\lambda_\star)}(\mathcal I'_{N_{\mathrm{IKW}},k})
        \cap
        \left[0,\frac{N_{\mathrm{IKW}}^{c-\tau}}{d_\star}\right]^2
        =
        \emptyset.
    \]
    Equivalently, no algorithm on $\mathcal I'_{N_{\mathrm{IKW}},k}$ that achieves soundness
    $\varepsilon_k+O(\lambda_\star)$
    can have scalar sensitivity at most $N_{\mathrm{IKW}}^{c-\tau}/d_\star$, since scalar sensitivity $t$ implies pair sensitivity at most $(t,t)$.
    By the choice of $d_\star$, the target soundness remains
    \[
        \varepsilon_k+O(\lambda_\star)=\exp(-\Omega(\sqrt{k})).
    \]

    The alphabet sizes are unchanged by degree reduction. Since the base alphabet is constant, the IKW alphabets are
    $|\Sigma|^{2k}$ on the left and $|\Sigma|^{k'}$ on the right, hence both are at most $2^{O(k)}$.
    The transformed right degree is exactly $d_\star = 2^{O(\sqrt{k})}$.
    The transformed left degree is the original IKW left degree multiplied by $d_\star$, and is therefore at most $2^{O(k)}$ as well.

    Finally, the left side of the IKW graph has size
    \[
        |U|=\binom{|E(\Phi)|}{k}=N_{\mathrm{IKW}}^{\Theta(k)},
    \]
    because the base CSP has bounded degree and hence $|E(\Phi)|=\Theta(N_{\mathrm{IKW}})$.
    After right-degree reduction, the right side has size
    \[
        |V'| = |E_{\mathrm{IKW}}| = N_{\mathrm{IKW}}^{\Theta(k)},
    \]
    since the IKW left degree is at most $2^{O(k)}$.
    Thus the final total number of vertices satisfies
    \[
        n = N_{\mathrm{IKW}}^{\Theta(k)}.
    \]

    Moreover,
    \[
        \frac{N_{\mathrm{IKW}}^{c-\tau}}{d_\star}
        =
        N_{\mathrm{IKW}}^{c-\tau-o(1)}
        =
        N_{\mathrm{IKW}}^{\Omega(1)}
        =
        n^{\Omega(1/k)},
    \]
    because $d_\star=\exp(O(\sqrt{k}))$ and $k\le o(\log N_{\mathrm{IKW}})$.
    This proves \Cref{cor:IKW-label-cover}.
\end{proof}

\subsection{Proof of the IKW Sensitivity Theorem}\label{subsec:ikw-sensitivity}

We briefly outline the proof of \Cref{thm:IKW}. We start with two neighboring instances $I_\Phi$ and $\tilde I_\Phi$ of the original CSP. We map both instances through the IKW encoding and show that the outputs remain reasonably close. Assuming a low-sensitivity algorithm for the IKW CSP, we run it on the encodings of $I_{\Phi}$ and $\tilde I_{\Phi}$ to obtain two solutions to the IKW CSP that remain reasonably close. We then give a simple local decoding algorithm for the IKW CSP that preserves sensitivity while outputting a highly satisfying solution to the base CSP with non-trivial probability. Finally, we run this decoder sufficiently many times to ensure that we obtain a highly satisfying solution with high probability, and apply a low-sensitivity selection rule to output a good, low-sensitivity solution to the original CSP, contradicting the base CSP's lower bound.

\begin{proof}[Proof of \Cref{thm:IKW}]
We start with the first claim that the IKW encoding has relatively low (normalized) sensitivity. In particular, given an instance $I_\Phi$, write $\mathrm{IKW}_{k}(I_\Phi)$ to denote its encoded form. We claim:
\begin{claim}[Encoding Blowup]\label{claim:IKW-blowup}
    Let $I_\Phi,\tilde I_\Phi$ be one-swap neighboring instances of a $d$-regular $2$-CSP on $N_{\mathrm{IKW}}$ variables. The fraction of constraints differing between $\mathrm{IKW}_{k}(I_\Phi)$ and $\mathrm{IKW}_{k}(\tilde I_\Phi)$ is at most $O(\frac{k}{dN_{\mathrm{IKW}}})$.
\end{claim}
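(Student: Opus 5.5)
The plan is to count, under the edge-generating distribution of $\mathrm{IKW}_k$, the probability that a sampled IKW constraint depends on the swapped base constraint. Since $I_\Phi$ and $\tilde I_\Phi$ share the same underlying graph $G$, the IKW graphs (and left/right vertex sets) coincide, and they differ only in projections and left predicates. Let $e^\star\in E$ be the base edge whose constraint was swapped. We will show that the IKW edge $(S,A)$ with $S=(A_{E,2},B_{E,2})$ has the same left predicate $P_S$ and the same projection in both encodings unless $e^\star\in A_{E,2}\cup B_{E,2}$.

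First, the projection $f_{(S,A)}$ only restricts the left label to the coordinates of $A$. It does not depend on the CSP constraints at all, so it is identical in both encodings. Second, the left predicate $P_S$ checks exactly the base constraints on the edges of $B_{E,2}$. It therefore changes only if $e^\star\in B_{E,2}$, and in particular only if $e^\star\in S$. So the fraction of differing IKW constraints is at most $\Pr[e^\star\in A_{E,2}\cup B_{E,2}]$ under the sampling process. Here a "constraint" means an edge together with its predicate and projection. If one instead counts differing left predicates relative to $|U|$, the same bound applies, since $U=\binom{E}{k}$ and we have $\Pr_{S}[e^\star\in S]=k/|E|$.

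Third, we bound this probability by a union bound over the $k$ sampled edges. Each of the $k-k'$ edges of $B_{E,2}$ is marginally uniform in $E$ (sampling without replacement preserves uniform marginals), so it equals $e^\star$ with probability $1/|E|$. Each edge in $A_{E,2}$ is a uniform edge incident to a uniform vertex of $A$. Because $G$ is $d$-regular, this is again a uniform edge of $E$, so it also equals $e^\star$ with probability $1/|E|=2/(dN_{\mathrm{IKW}})$. Summing over the $k$ edges gives $2k/(dN_{\mathrm{IKW}})=O(k/(dN_{\mathrm{IKW}}))$.

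The only mildly delicate step is the marginal uniformity of the $A_{E,2}$ edges and of $B_{E,2}$ under the conditioning that the two sets are disjoint. The plan there is to note that the conditioning changes the marginals by at most a $1+o(1)$ factor when $k=o(\log N_{\mathrm{IKW}})$. Alternatively, we can absorb the resulting $O(k^2/N_{\mathrm{IKW}})$-probability collision event into the constant, since it affects only the multiplicative factor.
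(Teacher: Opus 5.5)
Your proposal is correct and follows the paper's argument: the projections do not depend on the base CSP, so an IKW constraint can change only when the swapped edge $e^\star$ lies in $B_{E,2}$, which happens with probability $(k-k')/|E| = O(k/(dN_{\mathrm{IKW}}))$. The paper computes this hypergeometric probability for $B_{E,2}$ directly. You instead union-bound over all $k$ edges of $S$, using $d$-regularity to get uniform marginals for $A_{E,2}$; this is slightly looser but also covers the reading where the left predicate checks every constraint in $S$.
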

\begin{proof}
The proof is immediate from the above description of the construction. Namely, the constraints in the new CSP only differ if the changed constraint is sampled in $B_{E,2}$. Since $B_{E,2}$ is a random subset of $k-k'$ edges, the probability we sample a set including the changed edge is $1-\frac{\binom{|E|-1}{k-k'}}{\binom{|E|}{k-k'}} \leq O(\frac{k}{dN_{\mathrm{IKW}}})$.
\end{proof}

Now, fix $\varepsilon \coloneqq e^{-C_1\delta \sqrt{k}}$ for some sufficiently small constant $C_1>0$, and assume for the sake of contradiction there exists a $(N_{\mathrm{IKW}}^{c-\tau},\frac{|V|}{|E_{\text{IKW}}|}N_{\mathrm{IKW}}^{c-\tau})$-sensitive algorithm which, given a satisfiable instance of $\mathrm{IKW}_k^\Phi$, outputs a (possibly randomized) solution which is $\varepsilon$-satisfying with probability at least $1-C_2\delta$ for some sufficiently small $C_2>0$. We will show this implies a $o(N_{\mathrm{IKW}}^c)$-sensitive algorithm for the original CSP $\Phi$ that is $(1-O(\delta))$-satisfying in expectation, violating our original lower bound.\footnote{Technically we should achieve $1-\delta$ expected accuracy as written in the statement, but this simply amounts to shifting the constant factors $C_1$ and $C_2$ appropriately.}

Let $I_\Phi$ be a satisfiable instance to the original CSP and $\tilde I_{\Phi}$ a one-swap neighboring instance. It is clear from construction that the encoding $\mathrm{IKW}_{k}(I_\Phi)$ is satisfiable, and by \Cref{claim:IKW-blowup} that $\mathrm{IKW}_k(I_\Phi)$ and $\mathrm{IKW}_{k}(\tilde I_\Phi)$ differ in at most $O(|E_{\text{IKW}}|\frac{k}{dN_{\mathrm{IKW}}})$ constraints. Thus, running our $(N_{\mathrm{IKW}}^{c-\tau},\frac{|V|}{|E_{\text{IKW}}|}N_{\mathrm{IKW}}^{c-\tau})$-sensitive algorithm $\mathcal{A}_{\mathrm{IKW}}$ for $\mathrm{IKW}_{k}^\Phi$ on these instances, we are guaranteed output solutions $\bm \pi=(\bm\pi_U,\bm\pi_V)$ and $\tilde{\bm \pi} =(\tilde{\bm\pi}_U,\tilde{\bm\pi}_V)$ satisfying
\begin{enumerate}
    \item \textbf{$U$-Close:} The expected normalized Hamming distance between $\bm\pi_U$ and $\tilde{\bm\pi}_U$ is at most 
    \[
    O\left(\frac{k}{dN_{\mathrm{IKW}}^{1-c+\tau}}\cdot \frac{|E_{\text{IKW}}|}{|U|}\right) \leq \frac{2^{O(k)}}{dN_{\mathrm{IKW}}^{1-c+\tau}}
    \]
    \item \textbf{$V$-Close:} The expected normalized Hamming distance between $\bm\pi_V$ and $\tilde{\bm\pi}_V$ is at most 
    \[
    \frac{1}{|V|}\cdot|E_{\text{IKW}}|\frac{k}{dN_{\mathrm{IKW}}}\cdot \frac{|V|}{|E_{IKW}|}N_{\mathrm{IKW}}^{c-\tau} \leq 
    O\left(\frac{k}{N_{\mathrm{IKW}}^{1-c+\tau}}\right)
    \]
    \item \textbf{High value:} $\bm\pi$ is $\varepsilon$-satisfying with probability at least $1-C_2\delta$.
\end{enumerate}

We now present a low-sensitivity decoding algorithm for the above solutions that outputs close, highly satisfying solutions to $\Phi$ in expectation. We begin by introducing some notation.

Let us focus for now on a fixed realization $\pi=(\pi_U,\pi_V)$ of $\bm\pi=(\bm\pi_U,\bm\pi_V)$ (the analysis for a realization $\tilde \pi=(\tilde \pi_U,\tilde \pi_V)$ of $\tilde{\bm\pi}=(\tilde{\bm\pi}_U,\tilde{\bm\pi}_V)$ is identical). Recall that the constraints of our CSP are defined by first selecting a random subset $A \subset \binom{[N_{\mathrm{IKW}}]}{k'}$ of $k'$ vertices in the original CSP. For a given assignment $a \in \Sigma^{k'}$ to these vertices, define the set of `consistent hyperedges' in $U=\binom{E}{k}$ as
\[
\text{Cons}_{A,a} \coloneqq \{(A_{E,2},B_{E,2}) \in U: A_{E,2} \supset A \land \pi_U(A_{E,2},B_{E,2})|_{A}=a\}.
\]
and define the set of consistent hyperedges (vertices in $U$) that contain a particular $x \in [N_{\mathrm{IKW}}] \setminus A$ as
\[
\text{Cons}^x_{A,a} \coloneqq \{(A_{E,2},B_{E,2}) \in U:A_{E,2} \supset A \land \pi_U(A_{E,2},B_{E,2})|_{A}=a \land x \in (A_{E,2},B_{E,2})\}.
\]
IKW define a decoding for every such pair $(A,a)$ by taking the plurality response conditioned on being in Cons. Namely, they consider the following randomized decoding family $\{g_{A}:[N_{\mathrm{IKW}}] \to \Sigma\}$:
\begin{enumerate}
    \item Sample a right vertex $A \in \binom{[N_{\mathrm{IKW}}]}{k'}$ and set $a=\pi_V(A) \in \Sigma^{k'}$
    \item Define $g_{A}(x)$ as the plurality value over $S_E=(A_{E,2},B_{E,2})$ (left vertices) in Cons containing $x$:\footnote{Ties may be broken arbitrarily. If $\text{Cons}^x_{A,a}$ is empty, we fix $g_{A,a}(x)$ to be some arbitrary symbol in $\Sigma$, say $0$.}
\[
g_{A}(x) = \text{maj}_{S_E \in \text{Cons}^x_{A,a}}(\pi_U(S_E)|_x).
\]
\end{enumerate}
We define $\tilde{g}_A$ similarly using $\tilde{\pi}_V$.

A direct inspection of the proof of \cite[Theorem 6.4]{impagliazzo2009new} gives the following useful facts about the above randomized decoding:\footnote{In particular, see the explanation at the bottom of Pg.\ 43 of \cite{impagliazzo2009new} along with the discussion of soundness for \cite[Theorem 6.1]{impagliazzo2009new} in the second paragraph of Pg.\ 42.}
\begin{theorem}\label{thm:ikw-analysis}
    If $\pi=(\pi_U,\pi_V)$ has value at least $\varepsilon$, then an $\Omega(\varepsilon)$ fraction of $A \in V$ satisfy:
    \begin{enumerate}
        \item $g_{A}$ is a $(1-\delta)$-satisfying solution to the base CSP $I_\Phi$
        \item $\text{Cons}_{A,a}$ has measure at least $\Omega(\varepsilon)$
        \item Almost all consistent sets nearly compute $g_{A}(x)$:
    \[
    \Pr_{S_E=(A_E,B_{E,2}) \in U}\left[\pi_U(S_E)|_{B_{E,2}} \overset{O(\delta)}{\neq} g_{A}(B_{E,2})~\Big |~S_E \in \text{Cons}_{A,a}\right] \leq O(\varepsilon^3),
    \]
    where $\overset{O(\delta)}{\neq}$ denotes the two functions disagreeing on more than an $O(\delta)$ fraction of the domain.
    \end{enumerate}
\end{theorem}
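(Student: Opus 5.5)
Since the statement is presented as a direct inspection of the proof of \cite[Theorem 6.4]{impagliazzo2009new}, the plan is to reconstruct IKW's soundness argument for the $(A,B)$-test and check that the three properties are exactly the intermediate conclusions it produces. We start from an assignment $\pi=(\pi_U,\pi_V)$ of value at least $\varepsilon$. Let $\mathrm{Good}$ be the set of right vertices $A$ for which the conditional acceptance probability, over the random edge $(S_E,A)$, is at least $\varepsilon/2$. Averaging gives $\Pr[A\in\mathrm{Good}]\ge\varepsilon/2$. For $A\in\mathrm{Good}$ with $a=\pi_V(A)$, acceptance forces $S_E\in\mathrm{Cons}_{A,a}$. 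Hence $\mathrm{Cons}_{A,a}$ has measure $\Omega(\varepsilon)$ among left vertices containing $A$, which gives item~2 directly.

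For item~3, we use the sampling (concentration) property of the IKW graph. Condition on $A\subseteq A_{E,2}$ and $S_E\in\mathrm{Cons}_{A,a}$. The remaining $k-k'$ edges $B_{E,2}$ form a nearly uniform random sample, because $\mathrm{Cons}_{A,a}$ has measure $\Omega(\varepsilon)$ and conditioning on an event of that measure distorts the distribution by at most a factor $O(1/\varepsilon)$.

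\begin{itemize}
\item We compare $\pi_U(S_E)|_x$ with the plurality $g_A(x)$ using the standard two-sample (``$S_1,S_2$ intersecting in $A$'') argument of IKW.
\item Suppose a noticeable fraction of consistent $S_E$ disagree with $g_A$ on more than an $O(\delta)$ fraction of $B_{E,2}$. Then two independent consistent sets sharing $A$ would disagree on many common coordinates.
\item With $k'=\Theta(\sqrt k)$, such disagreements would be detected on a fresh random $k'$-subset with probability $1-\exp(-\Omega(\delta\sqrt k))$.
\item The constant $C_1$ is chosen so that this failure probability is $\le O(\varepsilon^3)$.
\end{itemize}

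This Chernoff/hypergeometric-type tail bound under conditioning is the step we expect to be the main obstacle. We would follow IKW's ``excellent $A$'' lemma: most $A$ with large $\mathrm{Cons}$ are excellent, meaning the consistent sets agree with $g_A$ outside an $O(\delta)$ fraction with probability $1-\varepsilon^3$. We may need to shrink the good set of $A$ by a constant factor, which keeps it $\Omega(\varepsilon)$.

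Finally, item~1 follows from item~3 together with the left predicates. Every accepted $S_E$ has $\pi_U(S_E)$ satisfying all constraints in $B_{E,2}$. By item~3, for a consistent accepted $S_E$, $g_A$ agrees with $\pi_U(S_E)$ on all but an $O(\delta)$ fraction of $B_{E,2}$, so $g_A$ violates at most an $O(\delta)$ fraction of the constraints in $B_{E,2}$. Since $B_{E,2}$ is a near-uniform sample of $E$ even after this conditioning, $g_A$ violates at most an $O(\delta)$ fraction of all base constraints; otherwise the sample would reveal the violations except with probability $\exp(-\Omega(\delta k))\ll\varepsilon$. Rescaling $\delta$ by constants, absorbed into $C_1,C_2$, then gives $(1-\delta)$-satisfaction for an $\Omega(\varepsilon)$ fraction of $A$.
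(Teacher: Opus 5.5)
Your arguments for items 2 and 1 are fine. Averaging gives an $\varepsilon/2$ fraction of right vertices $A$ whose conditional acceptance is at least $\varepsilon/2$, which yields item 2. Given item 3, deriving item 1 via the left predicates plus a Chernoff bound over the random $B_{E,2}$ is a valid route, since its $e^{-\Omega(\delta k)}$ failure probability easily beats the $O(1/\varepsilon)$ conditioning loss.

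The gap is item 3, the only part with real content; the paper does not reprove it but imports all three items from IKW's proof of their Theorem 6.4. The mechanism you sketch cannot work. Two \emph{independent} members of $\mathrm{Cons}_{A,a}$ have essentially disjoint $B_{E,2}$-parts: independent random $(k-k')$-subsets of $\Theta(N_{\mathrm{IKW}})$ edges meet with probability $O(k^2/N_{\mathrm{IKW}})$, and conditioning on $\mathrm{Cons}_{A,a}$ inflates this by at most $O(\varepsilon^{-2})=N_{\mathrm{IKW}}^{o(1)}$. On $A$ they agree by the very definition of $\mathrm{Cons}_{A,a}$. So there are no common coordinates in $B_{E,2}$ on which disagreement could show up. Moreover, a ``fresh'' random $k'$-subset $A'$ would only bear on consistency with $\pi_V(A')$, which nothing in your argument controls.

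The missing idea is the symmetry that gives the sampling argument its teeth. You must compare consistent left vertices $S_1,S_2$ that share a large block $T$ of edges beyond those at $A$. They should be sampled so that each $(S_i,A)$ is still a uniform edge of the label-cover graph, but, conditioned on $(S_1,S_2)$, the right vertex $A$ is itself a uniformly random choice of $k'$ edges of $T$ (one endpoint each). Since $S_1,S_2\in\mathrm{Cons}_{A,a}$ forces $\pi_U(S_1)|_A=a=\pi_U(S_2)|_A$, consider the event that both are consistent while $\pi_U(S_1)$ and $\pi_U(S_2)$ differ on more than a $\delta$ fraction of $T$. This event has probability at most $e^{-\Omega(\delta k')}=e^{-\Omega(\delta\sqrt k)}$. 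It is $A$ itself, not a fresh set, that serves as the random sample.

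You must then still pass from this agreement of overlapping pairs to agreement with the pointwise plurality $g_A$. Only after dividing by $\mu(\mathrm{Cons}_{A,a})^2=\Omega(\varepsilon^2)$ and applying Markov over $A$ do you reach the $O(\varepsilon^3)$ tail while keeping an $\Omega(\varepsilon)$ fraction of good $A$. This is exactly where $\varepsilon=e^{-C_1\delta\sqrt k}$ with $C_1$ small is used. If you simply cite IKW's excellence lemma for item 3, your write-up collapses to the paper's pointer to IKW, but the justification you offer for that lemma does not go through as written.
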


Now, given the above, note that a random choice of $g_{A}$ gives a successful decoding with probability $\Omega(\varepsilon)$. The natural strategy would be to sample roughly $O(\frac{\log \frac{1}{\delta}}{\varepsilon})$ independent $(A,a)$ pairs to ensure that one satisfies the above properties with high probability, and apply a low-sensitivity procedure to pick one of the outputs $g_{A,a}$ with high value (indeed we will perform a variant of this approach shortly). 

However, there is a problem with this approach: as defined above, the value of $g_{A,a}(x)$ depends on a large and dependent subset of $\pi_U$. A priori, this means the (expected) $\frac{2^{O(k)}}{dN_{\mathrm{IKW}}^{1-c+\tau}}$ fraction of changes between $\pi_U$ and $\tilde \pi_U$ might actually result in our decodings $g_{A}$ and $\tilde g_{A}$ being far, ruining the sensitivity.

We avoid this issue by introducing a simpler randomized decoding for $\mathrm{IKW}_{k}^\Phi$ that is highly local and uncorrelated. We will argue that these properties allow us to preserve sensitivity, while items~(2) and~(3) of \Cref{thm:ikw-analysis} ensure that the decoding remains $1-O(\delta)$ satisfying with non-trivial probability. The resulting randomized decoding family is given by the following process:
\begin{enumerate}
    \item Sample $(A,a)$ as described before, i.e.\ draw $A \in \binom{[N_{\mathrm{IKW}}]}{k'}$ uniformly and set $a=\pi_V(A) \in \Sigma^{k'}$
    \item For each $x \in [N_{\mathrm{IKW}}]$: 
    \begin{itemize}
        \item Draw a uniformly random edge $e \in E$ of the original CSP with $x \in e$
        \item Draw $O(\frac{\log N_{\mathrm{IKW}}}{\varepsilon})$ independent samples $S_E=(A_{E,2},B_{E,2}) \in U$ uniformly conditioned on $e \subseteq S_E$
        \item If any of the $S_E \in \text{Cons}_{A,a}$, pick a random such $S_E$ and output $\pi_U(S_E)|_x$
        \item Else: Output 0.
    \end{itemize}
\end{enumerate}
We remark while it would be more natural to skip the first step and simply sample $O(\frac{\log N_{\mathrm{IKW}}}{\varepsilon})$ random $(A_{E,2},B_{E,2})$ containing $x$, sampling the common edge $e$ turns out to simplify the correctness analysis.

We will denote the above randomized decoding as $\bm{g}_{A}^{\mathrm{rand}}$, where $\mathrm{rand}$ stands for the internal randomness over the choices of $e$ and $\{S_E\}$. We first argue that $\bm{g}_{A}^{\mathrm{rand}}$ and $\tilde{\bm{g}}^{\mathrm{rand}}_{A}$ are very close in expectation over a random choice of $(A,a)$ and the internal randomness of the above decoding:
\begin{claim}[Sensitivity of $\bm{g}_{A}^{\mathrm{rand}}$]\label{claim:IKW-sensitivity}
    Over the randomness of $\mathcal{A}_{\mathrm{IKW}}$, $A \in V$, and choices of $e,S_E$:
    \[
    \E\left[\mathrm{dist}(\bm{g}_{A}^{\mathrm{rand}},\tilde{\bm{g}}^{\mathrm{rand}}_{A})\right] \leq O\left(\frac{\log N_{\mathrm{IKW}}}{\varepsilon} \cdot \frac{2^{O(k)}}{N_{\mathrm{IKW}}^{1-c+\tau}}\right)
    \]
    where $\mathrm{dist}(\cdot,\cdot)$ is the normalized hamming distance.
\end{claim}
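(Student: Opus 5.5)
We will couple the two decodings by running them on the \emph{same} internal randomness: the same $A$, and for each $x$ the same edge $e\ni x$ and the same list of samples $S_E^{(1)},\dots,S_E^{(m)}$ with $m=O(\frac{\log N_{\mathrm{IKW}}}{\varepsilon})$. We also use the same random choice among consistent samples. Concretely, fix a random permutation of $[m]$ and take the first index in that order whose sample is consistent. The algorithm outputs $\bm\pi,\tilde{\bm\pi}$ are coupled via a coupling witnessing the assumed left/right sensitivity. Under this coupling we apply linearity of expectation over $x$. The normalized distance $\mathrm{dist}(\bm g_A^{\mathrm{rand}},\tilde{\bm g}_A^{\mathrm{rand}})$ is the average over $x$ of the indicator that the two outputs at $x$ differ. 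So it suffices to bound, for each fixed $x$, the probability of a disagreement at $x$.

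Fix $x$. The two outputs at $x$ can differ only under two events. The first is $\pi_V(A)\neq\tilde\pi_V(A)$, which changes $a$ and hence membership in $\mathrm{Cons}_{A,a}$. The second is $\pi_U(S_E^{(j)})\neq\tilde\pi_U(S_E^{(j)})$ for some $j\in[m]$. If neither occurs, the set of consistent indices is identical, the shared selection rule picks the same $S_E$, and the output label $\pi_U(S_E)|_x$ is identical; in the ``else'' case both output $0$. We union bound over these $m+1$ events. $A$ is uniform on $V$, so the first event has probability equal to the expected normalized distance between $\bm\pi_V$ and $\tilde{\bm\pi}_V$, which is $O(k/N_{\mathrm{IKW}}^{1-c+\tau})$ by the $V$-Close property.

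For the second event the key point is marginal uniformity. Over $x$ uniform in $[N_{\mathrm{IKW}}]$ and $e$ uniform among edges containing $x$, the edge $e$ is uniform in $E$ because $\Phi$ is $d$-regular. A uniform left vertex conditioned on containing a uniform edge $e$ is then, up to the harmless with/without-replacement and disjointness conventions already adopted for $k=o(\log N_{\mathrm{IKW}})$, distributed approximately uniformly on $U$. More precisely, its density relative to uniform is bounded by $2^{O(k)}$, since each left vertex contains between $k$ and $2k$ edges in the relevant sense. Hence, averaged over $x$, each $\Pr[\pi_U(S_E^{(j)})\neq\tilde\pi_U(S_E^{(j)})]$ is at most $2^{O(k)}$ times the normalized left distance. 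That distance is $\le 2^{O(k)}/(dN_{\mathrm{IKW}}^{1-c+\tau})$ by the $U$-Close property. Summing over the $m$ samples, adding the $A$-term, and absorbing $k$ and $1/d$ into $2^{O(k)}$ gives
\[
O\!\left(\frac{\log N_{\mathrm{IKW}}}{\varepsilon}\cdot\frac{2^{O(k)}}{N_{\mathrm{IKW}}^{1-c+\tau}}\right).
\]
Finally we take expectation over the algorithm's coupling.

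The main obstacle is justifying the marginal-uniformity step carefully. Conditioning on $e\subseteq S_E$ biases toward left vertices containing more edges at a given location, and the $(A_{E,2},B_{E,2})$ structure is not symmetric. I would handle this by directly computing the ratio between the conditional density and the uniform density on $U$, and showing it is at most $2^{O(k)}$ (in fact $O(k)$). That bound is exactly what the $2^{O(k)}$ slack in the claim allows. A secondary point is ensuring the ``pick a random consistent $S_E$'' step is coupled through shared randomness, which the shared permutation handles.
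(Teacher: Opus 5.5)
Your proposal is correct and follows essentially the same argument as the paper: a shared-randomness coupling, a union bound over the event that $\pi_V(A)$ changes or that one of the $O(\log N_{\mathrm{IKW}}/\varepsilon)$ sampled left vertices changes label, and marginal uniformity of $A$ and of each $S_E$ to convert these into the $V$-Close and $U$-Close bounds. The only difference is that the paper asserts exact uniformity of each sampled $S_E$ on $U$, while you allow a $2^{O(k)}$ density ratio that the bound absorbs. Exact uniformity does hold for $U=\binom{E}{k}$: $e$ is uniform by $d$-regularity and every left vertex contains exactly $k$ edges, so your extra slack is harmless but not needed.
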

\begin{proof}
    Observe that by construction, $g_A(x)$ and $\tilde{g}_A$ only disagree at $x$ if either
    \begin{enumerate}
        \item $\pi_V(A) \neq \tilde \pi_V(A)$
        \item $\exists S_E: \pi_U(S_E) \neq \tilde{\pi}_U(S_E)$
    \end{enumerate}
    over the $S_E \ni x$ sampled in defining $g^\mathrm{rand}_A(x)$. Furthermore, since $\mathrm{dist}(\bm{g}_{A}^{\mathrm{rand}},\tilde{\bm{g}}^{\mathrm{rand}}_{A})=\mathbb{E}_{x,\{S_E\}}[\mathbf{1}[\bm{g}_{A}^{\mathrm{rand}}(x)=\tilde{\bm{g}}^{\mathrm{rand}}_{A}(x)]]$, we may then upper bound the expected disagreement as
    \[
    \E\left[\mathrm{dist}(\bm{g}_{A}^{\mathrm{rand}},\tilde{\bm{g}}^{\mathrm{rand}}_{A})\right] \leq \Pr_{\mathcal{A}_{\mathrm{IKW}},A,x,\{S_E\}}\bigg[\pi_V(A) \neq \tilde{\pi}_V(A) \lor \exists S_E: \pi_U(S_E) \neq \tilde{\pi}_U(S_E) \bigg].
    \]
    To bound the righthand side, note that even conditioned on the randomness of $\mathcal{A}_{\mathrm{IKW}}$, the distribution of $A$ and each individual $S_E$ are marginally uniform over $U$ and $V$ respectively (where the marginalization is over $x$, a uniformly random vertex of the base 2-CSP). Thus, by a union bound, we finally have
    \begin{align*}
    \E\left[\mathrm{dist}(\bm{g}_{A}^{\mathrm{rand}},\tilde{\bm{g}}^{\mathrm{rand}}_{A})\right] & \leq 
    \underset{\mathcal{A}_{IKW}}{\E}[\text{dist}(\pi_V,\tilde \pi_V)]+O(\frac{\log N_{\mathrm{IKW}}}{\varepsilon})\underset{\mathcal{A}_{IKW}}{\E}[\text{dist}(\pi_U,\tilde \pi_U)]\\ 
    &\leq \frac{\log(N_{\mathrm{IKW}})2^{O(k)}}{\varepsilon N_{\mathrm{IKW}}^{1-c+\tau}}
    \end{align*}
    as desired.
\end{proof}

Our second claim is that a random $\bm{g}^{\mathrm{rand}}_{A}$ still gives a highly satisfying assignment to the original CSP with non-trivial probability. Namely, we argue that for any $(A,a)$ satisfying the properties of \Cref{thm:ikw-analysis}, a random $\bm{g}^{\mathrm{rand}}_{A}$ is likely to be $1-O(\delta)$ satisfying:
\begin{claim}[Correctness of $\bm{g}^{\mathrm{rand}}_{A}$]\label{claim:ikw-analysis}
    If $\pi=(\pi_U,\pi_V)$ and $A$ satisfy the properties in \Cref{thm:ikw-analysis}, then:
    \[
    \Pr\left[\bm{g}^{\mathrm{rand}}_{A}\text{ is $1-O(\delta)$ satisfying}\right] \geq 0.99
    \]
    where the probability is over $\bm{g}_A^{\mathrm{rand}}$'s internal randomness.
\end{claim}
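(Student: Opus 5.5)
The plan is to compare $\bm g^{\mathrm{rand}}_A$ pointwise with the plurality decoding $g_A$ from \Cref{thm:ikw-analysis}, and then use the base CSP's bounded degree to transfer closeness of assignments into closeness of values. Fix $\pi$ and $A$ satisfying items (1)--(3) of \Cref{thm:ikw-analysis}. Since $g_A$ is $(1-\delta)$-satisfying, it suffices to show that with probability $0.99$ the random assignment $\bm g^{\mathrm{rand}}_A$ disagrees with $g_A$ on at most an $O(\delta)$ fraction of vertices. Because the base CSP is $d$-regular, changing an $O(\delta)$ fraction of vertices violates at most an $O(\delta)$ fraction of additional constraints. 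By Markov, it is therefore enough to prove
\[
\E_{\mathrm{rand}}\bigl[\mathrm{dist}(\bm g^{\mathrm{rand}}_A,g_A)\bigr]\le O(\delta)
\]
with a suitably small implied constant, absorbing the factor $100$ into the $O(\cdot)$.

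For a fixed $x$, the decoder samples an edge $e\ni x$ and then $m=O(\log N_{\mathrm{IKW}}/\varepsilon)$ sets $S_E\supseteq e$. I will bound the failure probability by two terms.

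\emph{(a) No sample lands in $\mathrm{Cons}_{A,a}$.} For a random $x$ and $e\ni x$, the law of $e$ is uniform over $E$ by regularity. The measure of $\mathrm{Cons}_{A,a}$ among sets containing $e$ averages to $\Omega(\varepsilon)\cdot\Omega(k/|E|)$ normalization, roughly $\Omega(\varepsilon)$ after conditioning, by item (2). The edges $e$ for which this conditional density drops below $\varepsilon^2$ carry an $O(\delta)$ fraction. Here I would use the IKW sampler and concentration property, namely that a $k$-set containing a random $e$ is nearly a uniform $k$-set, so the conditional density is concentrated. For the remaining $e$, the $m$ samples hit $\mathrm{Cons}$ except with probability $N_{\mathrm{IKW}}^{-\Omega(1)}$, taking the constant in $m$ large enough. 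Adjusting $m$ to $O(\log N/\varepsilon^2)$ if needed is harmless for \Cref{claim:IKW-sensitivity}.

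\emph{(b) The chosen consistent $S_E$ reports $\pi_U(S_E)|_x\neq g_A(x)$.} Conditioned on $S_E\in\mathrm{Cons}_{A,a}$ and $S_E\ni x$, I want to use item (3). Item (3) controls agreement on $B_{E,2}$ for a uniform consistent $S_E$. Since $x$ in our sampling lies in $B_{E,2}$ except with probability $O(k'/k)$, and the slot of $x$ is exchangeable within $B_{E,2}$, a uniform consistent $S_E$ together with a uniform $x\in B_{E,2}$ has $\pi_U(S_E)|_x\ne g_A(x)$ with probability at most $O(\delta)+O(\varepsilon^3)$. The selected sample is uniform among those consistent sets containing $e$. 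Reweighting from ``uniform consistent set, uniform point in it'' to ``uniform $x$, uniform $e\ni x$, uniform consistent set containing $e$'' costs a factor equal to the density ratio. This ratio is bounded on the good $e$ from (a), since densities there are within constant factors of $\mu(\mathrm{Cons})$.

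Summing (a) and (b) gives expected disagreement $O(\delta)$, which concludes the argument. The main obstacle is the density control in (a) and (b). Item (2) is only a global statement, whereas the decoder conditions on a specific edge $e$, so I need that for most $e$ the fraction of $S_E\supseteq e$ lying in $\mathrm{Cons}_{A,a}$ is within a constant factor of $\mu(\mathrm{Cons}_{A,a})$. My first attempt would be a second-moment and sampler bound on the inclusion graph between $E$ and $\binom{E}{k}$, conditioned on $A\subseteq A_{E,2}$. Outlier edges contribute at most $\mu(\mathrm{Cons})^{-1}\cdot O(1/k)$ by Chebyshev, and $\varepsilon=e^{-C_1\delta\sqrt k}$ keeps $\varepsilon^{-2}/k$ smaller than $\delta$ once $C_1$ is small. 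This is the step I expect to require care.
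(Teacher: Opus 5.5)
Your skeleton matches the paper's. Markov reduces the claim to $\Pr_{\mathrm{rand},x}[\bm g^{\mathrm{rand}}_A(x)\neq g_A(x)]\le O(\delta)$, and the loss splits into missing $\mathrm{Cons}_{A,a}$ entirely versus picking a consistent set that disagrees with $g_A$. The second term is handled by item~(3) of \Cref{thm:ikw-analysis} after reweighting. That reweighting is legitimate on edges $e$ whose $\mathrm{Cons}$-density $f(e)$ (the fraction of sets containing $e$ that lie in $\mathrm{Cons}_{A,a}$) is at least $\mu/2$, where $\mu$ is the relative size of $\mathrm{Cons}_{A,a}$.

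The gap is in the step you flagged, and the resolution you propose fails quantitatively. On the inclusion graph between edges and $(k-k')$-sets, the up--down walk has second eigenvalue about $1/k$. So a second-moment argument gives $\mathrm{Var}_e[f(e)]\lesssim \mu/k$, and Chebyshev bounds the fraction of edges with $f(e)<\mu/2$ only by $O(1/(\mu k))$. With $\mu=\Omega(\varepsilon)$ and $\varepsilon=e^{-C_1\delta\sqrt k}$, this is of order $e^{C_1\delta\sqrt k}/k$ (your $\varepsilon^{-2}/k$ is even larger). That tends to infinity as $k\to\infty$ for \emph{every} fixed $C_1>0$. So no choice of the constant $C_1$ makes it at most $\delta$ uniformly over $k\le o(\log N_{\mathrm{IKW}})$, and $k$ does grow in the application. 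Thresholding at $\varepsilon^2$ instead of $\mu/2$ in (a) does not help either: Chebyshev gives the same bound, and the reweighting factor in (b) could then be as large as $\mu/\varepsilon^2$.

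What is needed is an exponential tail, which is what the paper uses. The inclusion graph, stacked over the choices of $A_{E,2}\supseteq A$, is a sampler with hypergeometric Chernoff bounds. For any set $T$ of edges of density $b$, a random $(k-k')$-set sees $T$ in fraction below $3b/4$ with probability $e^{-\Omega(bk)}$. Apply this to $T=\{e: f(e)<\mu/2\}$ and count incidences between $T$ and $\mathrm{Cons}_{A,a}$:
\begin{itemize}
\item By definition of $T$, the incidence mass is less than $b\mu/2$.
\item If $e^{-\Omega(bk)}\le \mu/3$, then at least two thirds of $\mathrm{Cons}_{A,a}$ sees $T$ in fraction at least $3b/4$. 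This forces the incidence mass to be at least $b\mu/2$, a contradiction.
\end{itemize}
Since $\mu\ge\Omega(e^{-C_1\delta\sqrt k})$, taking $b$ a suitably large multiple of $1/\sqrt k$ makes the tail $e^{-\Omega(bk)}=e^{-\Omega(\sqrt k)}$ smaller than $\mu/3$. Hence all but an $O(1/\sqrt k)$ fraction of edges are good. This flipped-sampler step is where the exponent $\sqrt k$ in $\varepsilon$ is genuinely matched against the sampler, which a polynomial $1/k$ variance bound can never do. With this bound in place of your Chebyshev step, the rest of your argument goes through, and the $O(1/\sqrt k)$ outlier mass is absorbed into $O(\delta)$.
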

We defer the proof of the claim, which is technical and relies on sampling properties of $\mathrm{IKW}_k^\Phi$.

We now complete the proof assuming the above. As discussed, we need a low-sensitivity method for generating a highly satisfying solution. We use the following randomized thresholding trick inspired by similar methods in replicable machine learning (see e.g.\ \cite{ImpLPS22,bun2023stability,karbasi2023replicability,eaton2024replicable,hopkins2024replicability,liu2024replicable,hopkins2025generative,hopkins2025role,aamand2025structure,diakonikolas2025replicable}). Let $c'$ be the promised constant such that $\bm{g}_{A}^{\mathrm{rand}}$ is $1-c'\delta$ satisfying with $99\%$ probability in \Cref{claim:ikw-analysis}, then:
\begin{enumerate}
    \item Pick $m=O(\frac{\log \frac{1}{\delta}}{\varepsilon})$ random $\{\bm{g}_{A,i}^{\mathrm{rand}}\}_{i \in [m]}$
    \item Pick a random threshold $t \in [c',2c']$
    \item Output the first $\bm{g}^{\mathrm{rand}}_{A,i}$ which is at least $1-t\delta$ satisfying (if one exists, else output all 0).
\end{enumerate}
We now analyze the correctness and sensitivity of the above algorithm individually:

\paragraph{Correctness} By \Cref{claim:ikw-analysis}, as long as $\mathcal{A}_{IKW}$ indeed produced a $\varepsilon$-satisfying solution (which occurs except with probability $O(\delta)$), a random choice of $\bm{g}^{\mathrm{rand}}_{A}$ is $1-t\delta$ correct with probability at least $\Omega(\varepsilon)$. Thus, conditioned on the above, the probability one of the sampled $\bm{g}^{\mathrm{rand}}_{A}$ is at least $1-t\delta$ correct is at least $1-O(\delta)$. The expected value of the output assignment is therefore also at least $1-O(\delta)$ as well (since $t$ is at most some constant $2c'$).

\paragraph{Sensitivity} Denote the $m=O(\frac{\log \frac{1}{\delta}}{\varepsilon})$ sampled decoder outputs on the starting instances $I_\Phi$ and $\tilde I_\Phi$ as $\{\bm{g}_{A,i}^{\mathrm{rand}}\}_{i \in [m]}$ and $\{\tilde{\bm{g}}_{A,i}^{\mathrm{rand}}\}_{i \in [m]}$ respectively. Let $T_{bad}$ denote the event that the random threshold $t$ falls \textit{between} the value of any pair $(\bm{g}_{A,i}^{\mathrm{rand}},\tilde{\bm{g}}_{A,i}^{\mathrm{rand}})$. We claim the expected hamming distance between our final solutions $\bm{g}_{\mathrm{out}}$ and $\tilde{\bm{g}}_{\mathrm{out}}$ is at most
\[
\E[\mathrm{dist}(\bm{g}_{\mathrm{out}},\tilde{\bm{g}}_{\mathrm{out}})] \leq 
\E\left[T_{bad} + \sum\limits_{i \in [m]} \mathrm{dist}(\bm{g}_{A,i}^{\mathrm{rand}},\tilde{\bm{g}}_{A,i}^{\mathrm{rand}})\right]
\]
where the expectation is over $\mathcal{A}_{\mathrm{IKW}}$, the choice of $(A,a)$, the choice of randomness for $\bm{g}^{\mathrm{rand}},\tilde{\bm{g}}^{\mathrm{rand}}$, and the choice of $t$. This follows since as long as the event $T_{bad}$ does not occur, randomized thresholding always picks $\bm{g}_{\mathrm{out}}$ and $\tilde{\bm{g}}_{\mathrm{out}}$ to be some pair $(\bm{g}_{A,i}^{\mathrm{rand}},\tilde{\bm{g}}_{A,i}^{\mathrm{rand}})$, in which case we pay the distance between these two functions. 

Moreover, since $t$ is a random independent threshold in $[c',2c']$, the probability the threshold $1-t\delta$ lands between any two of our functions is at most the sum of the pairwise distances divided by $\delta c'$, so we get
\begin{align*}
\E[\mathrm{dist}(\bm{g}_{\mathrm{out}},\tilde{\bm{g}}_{\mathrm{out}})] &\leq \frac{2}{\delta c'}\E\left[ \sum\limits_{i \in [m]} \mathrm{dist}(\bm{g}_{A,a,i}^{\mathrm{rand}},\tilde{\bm{g}}_{A,a,i}^{\mathrm{rand}})\right]\\
&\leq \frac{2}{\delta c'}\sum\limits_{i \in [m]}\E\left[  \mathrm{dist}(\bm{g}_{A,a,i}^{\mathrm{rand}},\tilde{\bm{g}}_{A,a,i}^{\mathrm{rand}})\right]\\
&\leq \frac{2^{O(k)}\log \frac{1}{\delta}\log N_{\mathrm{IKW}}}{\delta \varepsilon^2} \cdot \frac{1}{N_{\mathrm{IKW}}^{1-c+\tau}}.
\end{align*}
Noting that we have assumed $k \leq o(\log(N_{\mathrm{IKW}}))$, the leading term is sub-polynomial in $n$, and in particular the entire expression is $o(\frac{1}{N_{\mathrm{IKW}}^{1-c}})$. Thus the resulting un-normalized expected hamming distance is $o(N_{\mathrm{IKW}}^c)$ and we get the desired contradiction.
\end{proof}

\subsection{Analysis of the Localized Decoder}\label{subsec:ikw-local-decoder}

We now prove the deferred soundness claim for the localized decoder. This is the sampler-based concentration step invoked in the proof of \Cref{thm:IKW}.
\begin{proof}[Proof of \Cref{claim:ikw-analysis}]
    It is enough to upper bound the \textit{expected} distance of $\bm{g}^{\mathrm{rand}}_{A}$ from IKW's decoding $g_{A}$ as
    \begin{equation}\label{eq:expected-g-agreement}
    \Pr_{r,x}[\bm{g}^{\mathrm{rand}}_{A}(x) \neq g_{A}(x)] \leq O(\delta)
    \end{equation}
    where $x$ is distributed uniformly. Markov's inequality then implies that with (say) $99\%$ probability, our random decoding is $1-O(\delta)$ close to $g_{A}$, and therefore $1-O(\delta)$ satisfying.

Recall our algorithm operates on $x$ by first sampling $e \ni x$, then $O(\frac{\log N_{\mathrm{IKW}}}{\varepsilon})$ random $S_E \ni e$, and outputting the value of $\pi_U(S_E)$ on a random such $S_E \in \text{Cons}^x_{A,a}$ (if one exists), and otherwise outputting $0$. We can therefore upper bound \Cref{eq:expected-g-agreement} by the probability of `losing' the following game:
    \begin{enumerate}
        \item Sample a uniformly random edge $e$
        \item Sample $\{S_E\} \ni e$ as described
        \item If $\{S_E\}$ has no face in $\text{Cons}^x_{A,a}$, lose. Otherwise
        \begin{itemize}
            \item Pick a random $T_E \subset \{S_E\}$ that's in $\text{Cons}^x_{A,a}$
            \begin{itemize}
                \item If $\pi_U(T_E)|_e \neq g_{A}|_e$, lose
                \item Else, win.
            \end{itemize}
        \end{itemize}
    \end{enumerate}
    To analyze this game, we consider its underlying graph structure. Write $E \setminus A$ to denote the edges in the induced subgraph of $G$ on $[N_{\mathrm{IKW}}] \setminus A$.\footnote{We can ignore the case where $x \in A$, as by definition we always match $g_{A}$ on these vertices, so it is enough to work on $E \setminus A$.} We are interested in the bipartite graph $G'=(L,R,E)$ with
    \[
    L=E \setminus A \quad \quad \text{and} \quad \quad R= \bigsqcup_{A_E} \binom{E \setminus A}{k-k'}
    \]
    with edges given by inclusion. In other words, the right-hand side of this graph consists of disjoint copies of $\binom{E \setminus A}{k-k'}$ corresponding to the choice of $B_E$ for every choice of $A_E$ (itself a choice of edges incident to $A$), and thus are in one-to-one correspondence with choices of $S_E=(A_E,B_E)$, i.e.\ the neighborhood of $A$ in the IKW graph.\footnote{Technically, we should restrict in each component of $R$ to choices in $E \setminus A$ that do not intersect $A_E$. However the fraction of such nodes is $O(k/N_{\mathrm{IKW}})$, so this may be safely ignored in the analysis.}

    Now, call an edge in this graph corresponding to $(e,S_E)$ `good' if $\pi_U(S_E)|_e=g_{A}(e)$ (and `bad' otherwise), and write $\mu=\frac{|\text{Cons}_{A,a}|}{|R|}$ to be the relative size of Cons (recall we are promised $\mu \geq \Omega(\varepsilon)$). We will prove the following two facts which together bound the probability of losing this game by $O(\delta)$:
    \begin{enumerate}
        \item At most an $O(\frac{1}{\sqrt{k}})$-fraction of $e \in L$ have fewer than $\frac{\mu}{2}$ or more than $2\mu$ of their edges into $\text{Cons}_{A,a}$
        \item At most an $O(\delta)$ fraction of edges $(e,S_E)$ in the induced subgraph on $\text{Cons}_{A,a}$ are bad
    \end{enumerate}
    We first explain why these two facts imply the bound. Assume that in the first step of the game we sample $e$ that avoids the bad event in Fact 1 above. Then $e$ has at least an $\varepsilon/2$-fraction of its neighbors in $\text{Cons}_{A,a}$, and the probability that we completely miss $\text{Cons}_{A,a}$ when sampling the $\{S_E\}$ is vanishingly small (at most $1/\poly(N_{\mathrm{IKW}})$), and may therefore be ignored. In particular, this means that the output of this process is $1/\poly(N_{\mathrm{IKW}})$-close in total variation distance to drawing a random `good' $e$, then a random $S_E$ conditioned on landing in $\text{Cons}_{A,a}$. 
    
    Finally, combining Facts 1 and 2, we know the total fraction of bad edges in the induced subgraph between $\text{Cons}_{A,a}$ and the `good' $e$ is at most $O(\delta)$. Moreover, since all the degrees of good $e$ in the induced subgraph are within a constant factor, the distribution over edges in our game generated by picking a uniformly random good $e$, then a random edge in the induced subgraph is close to uniform in the sense that no edge has more than a constant factor times its measure in the uniform distribution (since the degree of all good $e$ are within a constant factor of each other). As a result, the probability we hit a bad edge (of which there are at most $O(\delta)$ by Fact 2) and lose is at most $O(\delta)$. Finally, adding back the probability we lose from sampling a `bad' $e$ in the first step gives a total failure probability of $O(\delta)+O(\frac{1}{\sqrt{k}}) \leq  O(\delta)$ as desired
    
    It is left to prove the two claimed facts. The second fact is immediate from Property 3 of \Cref{thm:ikw-analysis}, which states that almost all $S_E$ in Cons have at most an $O(\delta)$ fraction of bad edges. The first follows from the fact that our described graph $G'$ is an excellent \textit{sampler}. A regular bipartite graph $G=(L,R,E)$ is called a (multiplicative) $(\mu,\beta,\delta)$-sampler if for every subset $T \subset L$ of density at least $\mu$:
    \[
    \Pr_{v \in R}\left[\left|\E_{w \in L}{}[1_T(w)~|~w \sim v] - \E_{w \in L}[1_T]\right| > \delta \E_{w \in L}[1_T]\right] \leq \beta.
    \]
    In other words $G$ is a sampler the neighborhood of a random $v \in R$ almost always sees the set $T$ in roughly the correct proportion.

    We first claim it is easy to see $G'$ is an $(\mu,e^{-\Omega(\delta^2 \mu k)},\delta)$-sampler for $\delta \in (0,1)$. This is essentially immediate from the fact that $G'$ is isomorphic to (several stacked copies of) the inclusion graph of $\left([m],\binom{[m]}{k-k'}\right)$ for the appropriate choice of $m$. Sampling on this latter object is exactly Chernoff without replacement (we have a `marked' subset $T \subset [m]$, sample $t$ random elements of $[m]$ without replacement, and ask what fraction of these $t$ elements lie in $T$). Thus standard Chernoff-Hoeffding bounds imply the stated bounds. Finally, notice that `stacking' these graphs in the sense that we have many copies of the right-hand side with edges to the same $[m]$ maintains the sampling guarantee.

    Now, the above is not quite what we want. In particular, it promises that if we have any set of size roughly $\mu$ on $L$ is `seen' in the correct fraction by $R$ with extremely high probability $e^{-\Omega( \mu k)}$. However, the guarantee we need to show is the reverse of this, that all but an $O(\frac{1}{\sqrt{k}})$ fraction of $L$ sees Cons in roughly the right proportion (i.e.\ setting $\delta=1/2$ is sufficient), where Cons is a set of measure at least $\exp(-O(\sqrt{k}))$. Thankfully, it is a standard fact that whenever a graph $G'=(L,R,E)$ is an $(\delta,\beta,\mu)$-sampler, the flipped graph $G'_{\text{flip}}=(R,L,E)$ is an $(O(\beta),O(\mu),O(\delta))$ sampler (see, e.g., \cite{impagliazzo2009new}). In particular, appropriately setting $\mu = O(\frac{1}{\sqrt{k}})$ in the original bound, we get that the flip graph is a $(\mu(\text{Cons}_A),O(1/\sqrt{k}),1/2)$-sampler which immediately implies the first fact and completes the proof.
\end{proof}

\part{Low-Soundness Label Cover via Dinur--Harsha}

In this chapter we prove  our sensitivity lower bound for low-soundness label cover based on the composition theorem of Dinur and Harsha~\cite{dinur2013composition}, which we restate next. 
\DH*
This section is organized as follows. First, in \Cref{sec:sens-reductions} we introduce a notion of sensitivity-preserving reduction similar to the one used in \Cref{subsec:ikw-sensitivity}; the key differences are that we now only track sensitivity, rather than sensitive regions, and we allow the  recovery map to depend on the source instance. In \Cref{sec:robust}, from the input family of CNF formulas, we construct the label cover instance that will be the basis for the ``stage-$0$'' instance of our iterative construction of the final label cover instance that is done in \Cref{sec:dh-iteration}. Each stage of this iterative construction proceeds by the composition procedure adapted from Dinur--Harsha~\cite{dinur2013composition}, which we describe in \Cref{sec:dh-composition}, followed by a right-degree reduction which we design in \Cref{sec:degree-reduction}, and an right-alphabet reduction procedure given in  \Cref{sec:alphabet-reduction}. The stage-$0$ instance is constructed, these procedures are put together, and the final iterative construction is done in \Cref{sec:dh-iteration}.

\section{Sensitivity Reduction with Source-Dependent Recovery}\label{sec:sens-reductions}

In Part~I we introduced a left/right budget-valued framework for transferring sensitivity lower bounds through reductions.
For the Dinur--Harsha route we only need its scalar analogue: we collapse the pair $(k_L,k_R)$ to total Hamming sensitivity, and we additionally allow the recovery map to depend on the source instance.
This contributes an additive term $D$ when the source instance changes but the transformed assignment is held fixed.
The resulting scalar sensitivity-reduction interface is parameterized by three constants:
\begin{itemize}
    \item $C_{\mathcal T}$: A transformation constant which bounds how far the transformed instance can move under one source swap,
    \item $C_{\mathcal R}$: A fixed-source Lipschitz constant for the recovery map, and
    \item $D$: An additive source-swap term, measuring how much the recovery map itself can change when the source instance changes but the transformed assignment is held fixed.
\end{itemize}
We record their effects in the following definition.

\begin{definition}[Sensitivity-Preserving Reduction]\label{def:sensitivity-reduction}
    Let $\mathcal I$ be a family of ordinary label cover instances or left-predicate label cover instances on the same underlying graph and domain.
    Let $\mathcal T$ be a procedure that transforms an instance $I \in \mathcal I$ into another instance $I'$, and for each $I\in\mathcal I$ let $\mathcal R_I$ be a (possibly randomized) procedure that transforms an assignment $\pi'$ for $I'$ into an assignment $\bm\pi$ for $I$.
    For $s,s' \in [0,1]$ and $C_{\mathcal T},C_{\mathcal R},D \ge 0$, we say that $(\mathcal T,\{\mathcal R_I\}_{I\in\mathcal I})$ is an $(s,s',C_{\mathcal T},C_{\mathcal R},D)$-sensitivity-preserving reduction for $\mathcal I$ if the following hold:
    \begin{enumerate}
        \item If $I$ is satisfiable, then $I'=\mathcal T(I)$ is also satisfiable. \label{item:sensitivity-reduction-completeness}
        \item If a (possibly random) assignment $\bm \pi'$ for $I'$ satisfies $\E[\mathsf{val}_{I'}(\bm \pi')] \geq s'$, then the assignment $\bm \pi = \mathcal R_I(\bm \pi')$ satisfies $\E[\mathsf{val}_{I}(\bm \pi)] \geq s$. \label{item:sensitivity-reduction-soundness}
        \item The transformed instances $\mathcal T(I)$, for $I\in\mathcal I$, all live on one common underlying graph and common alphabets, and for every source pair $I,\tilde I \in \mathcal I$ with $\mathsf{SwapDist}(I,\tilde I)=1$ we have
        \[
            \mathsf{SwapDist}(\mathcal T(I),\mathcal T(\tilde I)) \leq C_{\mathcal T}.
        \]
        \label{item:sensitivity-reduction-instance-distance}
        \item For every fixed $I\in\mathcal I$ and every two assignments $\pi',\tilde \pi'$ for $\mathcal T(I)$,
        \[
            \mathsf{EMD}(\mathcal R_I(\pi'),\mathcal R_I(\tilde \pi')) \leq C_{\mathcal R}\, d_{\mathrm H}(\pi',\tilde \pi').
        \]
        \label{item:sensitivity-reduction-assignment-distance}
        \item For every source pair $I,\tilde I \in \mathcal I$ with $\mathsf{SwapDist}(I,\tilde I)=1$ and every assignment $\pi'$ on the common domain of $\mathcal T(I)$ and $\mathcal T(\tilde I)$,
        \[
            \mathsf{EMD}(\mathcal R_I(\pi'),\mathcal R_{\tilde I}(\pi')) \leq D.
        \]
        \label{item:sensitivity-reduction-recovery-drift}
    \end{enumerate}
\end{definition}

\begin{lemma}\label{lem:sensitivity-reduction-transfer}
    Let $\mathcal I$ be a swap-closed family of instances on the same underlying graph and domain.
    Suppose there exists an $(s,s',C_{\mathcal T},C_{\mathcal R},D)$-sensitivity-preserving reduction $(\mathcal T,\{\mathcal R_I\}_{I\in\mathcal I})$ for $\mathcal I$.
    Write $\mathcal I' = \mathcal T(\mathcal I) := \{\mathcal T(I) : I \in \mathcal I\}$.
    Then every algorithm $A'$ on $\mathsf{SwapClo}(\mathcal I')$ that achieves soundness $s'$ on its satisfiable members induces an algorithm $A$ on $\mathcal I$ that achieves soundness $s$ and satisfies
    \[
        \mathsf{SwapSens}(A,\mathcal I)
        \leq
        C_{\mathcal T} C_{\mathcal R}\, \mathsf{SwapSens}(A',\mathsf{SwapClo}(\mathcal I')) + D.
    \]
    Consequently,
    \[
        \mathsf{SwapSens}_{s'}\big(\mathsf{SwapClo}(\mathcal I') \big)
        \geq
        \frac{\mathsf{SwapSens}_{s}(\mathcal I)-D}{C_{\mathcal T} C_{\mathcal R}}.
    \]
\end{lemma}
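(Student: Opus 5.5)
The plan mirrors the proof of \Cref{lem:sensitivity-region-reduction-transfer}, now in scalar form and with an extra drift term. Given $A'$, define $A$ on $I\in\mathcal I$ as follows: form $I'=\mathcal T(I)$, sample $\bm\pi'\sim A'(I')$, and output $\bm\pi\sim\mathcal R_I(\bm\pi')$. The soundness statement is immediate. If $I$ is satisfiable, then $I'$ is satisfiable by Item~\ref{item:sensitivity-reduction-completeness}. Since $I'\in\mathsf{SwapClo}(\mathcal I')$, we get $\E[\val_{I'}(\bm\pi')]\ge s'$, and Item~\ref{item:sensitivity-reduction-soundness} then gives $\E[\val_I(\bm\pi)]\ge s$.

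For sensitivity, fix a one-swap pair $I,\tilde I\in\mathcal I$. Write $\sigma:=\mathsf{SwapSens}(A',\mathsf{SwapClo}(\mathcal I'))$, and let $I'=\mathcal T(I)$, $\tilde I'=\mathcal T(\tilde I)$. I would bound $\EMD(A(I),A(\tilde I))$ by the triangle inequality through the intermediate distribution $\mathcal R_I(A'(\tilde I'))$. This intermediate step is legitimate because Item~\ref{item:sensitivity-reduction-instance-distance} guarantees that $I'$ and $\tilde I'$ share a common domain, so $\mathcal R_I$ can be applied to labelings of $\tilde I'$.

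The first term is $\EMD(\mathcal R_I(A'(I')),\mathcal R_I(A'(\tilde I')))$. By Item~\ref{item:sensitivity-reduction-instance-distance} there is a swap path $I'=I'_0,\dots,I'_m=\tilde I'$ in $\mathsf{SwapClo}(\mathcal I')$ with $m\le C_{\mathcal T}$. Taking near-optimal couplings along consecutive steps and gluing them gives a coupling of $A'(I')$ and $A'(\tilde I')$ with expected Hamming distance at most $m\sigma\le C_{\mathcal T}\sigma$; equivalently, one can apply the EMD triangle inequality. If the minimum in EMD is only approached, an $\eta$-slack argument handles it. Now condition on a coupled pair $(\pi',\tilde\pi')$. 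Item~\ref{item:sensitivity-reduction-assignment-distance} provides a coupling of $\mathcal R_I(\pi')$ and $\mathcal R_I(\tilde\pi')$ with expected distance at most $C_{\mathcal R}d_{\mathrm H}(\pi',\tilde\pi')$. Composing these conditional couplings (a measurable selection, which is trivial here because the spaces are finite) and averaging yields the bound $C_{\mathcal R}C_{\mathcal T}\sigma$. The second term is $\EMD(\mathcal R_I(A'(\tilde I')),\mathcal R_{\tilde I}(A'(\tilde I')))$. Here I sample $\tilde\pi'\sim A'(\tilde I')$ and apply the coupling from Item~\ref{item:sensitivity-reduction-recovery-drift} conditionally, which gives at most $D$.

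Adding the two terms and maximizing over all one-swap pairs gives $\mathsf{SwapSens}(A,\mathcal I)\le C_{\mathcal T}C_{\mathcal R}\sigma+D$. Here it matters that $\mathsf{SwapSens}$ is defined via single projection and predicate swaps, which are exactly the pairs with $\mathsf{SwapDist}=1$ inside the swap-closed family $\mathcal I$. For the consequence, choose $A'$ attaining (or approaching) $\mathsf{SwapSens}_{s'}(\mathsf{SwapClo}(\mathcal I'))$. Then the induced $A$ is admissible for soundness $s$, so $\mathsf{SwapSens}_s(\mathcal I)\le C_{\mathcal T}C_{\mathcal R}\mathsf{SwapSens}_{s'}(\mathsf{SwapClo}(\mathcal I'))+D$, and rearranging gives the stated inequality. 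If $C_{\mathcal T}C_{\mathcal R}=0$, the inequality is read as vacuous or trivial.

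I expect no real obstacle. The only delicate point is handling the integrality of the path length $m$ versus $C_{\mathcal T}$: since $m$ is an integer with $m\le C_{\mathcal T}$ there is no ceiling loss, unlike in the earlier lemma. The other point needing care is justifying that the mixture of conditional couplings is a valid coupling of the output distributions.
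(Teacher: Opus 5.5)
Your proposal is correct and follows essentially the same route as the paper. The paper likewise composes $A'$ with $\mathcal R_I$, splits $\EMD(A(I),A(\tilde I))$ through the intermediate distribution $\mathcal R_I(A'(\tilde I'))$, bounds the first term by the swap-path estimate $C_{\mathcal T}\,\mathsf{SwapSens}(A',\mathsf{SwapClo}(\mathcal I'))$ followed by Item~4 under an optimal coupling, and bounds the second term by applying Item~5 pointwise, before minimizing over $A'$ and rearranging. Your remarks about no ceiling loss (the path length is an integer at most $C_{\mathcal T}$) and about the EMD minimum being attained on finite spaces are valid refinements, but they do not change the argument.
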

\begin{proof}
    Let $A'$ be any algorithm on $\mathsf{SwapClo}(\mathcal I')$ that achieves soundness $s'$ on its satisfiable members.
    Define an algorithm $A$ on $\mathcal I$ as follows: on input $I \in \mathcal I$, let $I' := \mathcal T(I)$, run $A'$ on $I'$ to obtain a (possibly random) assignment $\bm \pi'$ for $I'$, and output $\bm \pi \sim \mathcal R_I(\bm \pi')$.

    If $I$ is satisfiable, then Item~\ref{item:sensitivity-reduction-completeness} of \Cref{def:sensitivity-reduction} ensures that $I'$ is satisfiable as well.
    Since $I' \in \mathcal I' \subseteq \mathsf{SwapClo}(\mathcal I')$, the assumption on $A'$ yields $\E[\mathsf{val}_{I'}(\bm \pi')] \ge s'$, and Item~\ref{item:sensitivity-reduction-soundness} implies $\E[\mathsf{val}_{I}(\bm \pi)] \ge s$.

    Let $S := \mathsf{SwapSens}(A',\mathsf{SwapClo}(\mathcal I'))$.
    Fix any source pair $I,\tilde I \in \mathcal I$ with $\mathsf{SwapDist}(I,\tilde I)=1$.
    Write $I' := \mathcal T(I)$ and $\tilde I' := \mathcal T(\tilde I)$.
    By the same swap-path argument used in the proof of \Cref{lem:sensitivity-region-reduction-transfer}, Item~\ref{item:sensitivity-reduction-instance-distance} implies
    \[
        \mathsf{EMD}(A'(I'),A'(\tilde I')) \le C_{\mathcal T}S.
    \]

    We decompose the perturbation as
    \begin{align*}
        \mathsf{EMD}(A(I),A(\tilde I))
        &\le
        \mathsf{EMD}(\mathcal R_I(A'(I')),\mathcal R_I(A'(\tilde I')))
        +
        \mathsf{EMD}(\mathcal R_I(A'(\tilde I')),\mathcal R_{\tilde I}(A'(\tilde I'))) .
    \end{align*}
    Couple $A'(I')$ and $A'(\tilde I')$ optimally and apply Item~\ref{item:sensitivity-reduction-assignment-distance} pointwise under that coupling to bound the first term by
    \[
        C_{\mathcal R}\, \mathsf{EMD}(A'(I'),A'(\tilde I'))
        \le C_{\mathcal T} C_{\mathcal R} S.
    \]
    For the second term, sample $\bm\pi' \sim A'(\tilde I')$ and apply Item~\ref{item:sensitivity-reduction-recovery-drift} pointwise to obtain the bound $D$.
    Thus,
    \[
        \mathsf{EMD}(A(I),A(\tilde I)) \le C_{\mathcal T} C_{\mathcal R} S + D.
    \]
    Since the source one-swap pair $(I,\tilde I)$ was arbitrary, this proves
    \[
        \mathsf{SwapSens}(A,\mathcal I)
        \le
        C_{\mathcal T} C_{\mathcal R}\, \mathsf{SwapSens}(A',\mathsf{SwapClo}(\mathcal I')) + D.
    \]
    Minimizing the left-hand side over all algorithms $A$ of this form and rearranging gives the final inequality.
\end{proof}

\section{Right-Degree Reduction in the Scalar Setting}\label{sec:degree-reduction}

We use the same right-degree reduction gadget as in \Cref{subsection:degree-reduction-region}.
Concretely, we use the same transformation $\mathcal T_{\mathrm{DR}}$, detailed in \Cref{alg:degree_reduction}, the same recovery map $\mathcal R_{\mathrm{DR}}$, and the same fixed auxiliary choices of expanders and neighbor orderings across a common family.
Here we record only the scalar consequence of that gadget for the sensitivity-reduction framework above.

\begin{lemma}\label{lem:degree-reduction}
    Let $d\ge 4$ be even and $\lambda \in [0,1)$.
    Let $\mathcal I$ be a swap-closed family of left-predicate label cover instances on a common underlying graph and common alphabets, whose minimum right degree is at least $\delta_V\ge d$.
    For each right vertex $v$ of this common underlying graph, let $D_v$ denote its common degree across the family, and fix a $[D_v,d,\lambda]$-expander $H_v$.
    Let $\mathcal I' = \mathsf{SwapClo}(\mathcal T_{\mathrm{DR}}(\mathcal{I}; d))$.
    Then $\mathcal I'$ consists of right-regular left-predicate label cover instances with right degree $d$, and for any $\varepsilon>0$, we have
    \[
        \mathsf{SwapSens}_{\varepsilon+O(\lambda)}(\mathcal I')
        \ge
        \frac{1}{d}\, \mathsf{SwapSens}_{\varepsilon}(\mathcal I).
    \]
\end{lemma}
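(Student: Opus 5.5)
The plan is to show that the pair $(\mathcal T_{\mathrm{DR}},\{\mathcal R_I\})$ is an $(\varepsilon,\varepsilon+O(\lambda),C_{\mathcal T},C_{\mathcal R},D)$-sensitivity-preserving reduction in the sense of \Cref{def:sensitivity-reduction}. The parameters are $C_{\mathcal T}=d$, $C_{\mathcal R}=1$ and $D=0$, with $\mathcal R_I:=\mathcal R_{\mathrm{DR}}$ for every $I$. The claimed bound then follows immediately from \Cref{lem:sensitivity-reduction-transfer}, since $\mathsf{SwapSens}_{\varepsilon}(\mathcal I)-0$ divided by $d\cdot 1$ is exactly the stated right-hand side. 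Right-regularity with degree $d$ is immediate from \Cref{alg:degree_reduction}: each cloud vertex $(v,i)$ receives exactly $d$ edges from $H_v$. Swap-closure preserves the underlying graph, so this property persists in $\mathcal I'$.

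Items~\ref{item:sensitivity-reduction-completeness}, \ref{item:sensitivity-reduction-soundness} and \ref{item:sensitivity-reduction-instance-distance} are verified exactly as in the proof of \Cref{lem:degree-reduction-region}.

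\begin{itemize}
\item For completeness, copy a satisfying assignment to every cloud vertex.
\item For soundness, invoke the Dinur--Harsha degree-reduction guarantee \cite[Theorem 5.4]{dinur2013composition}. It concerns only the copied projections, and $\mathcal T_{\mathrm{DR}}$ leaves the left predicates untouched.
\item For instance distance, note that a projection swap on $(u,v)$ changes only the copies of that edge inside the cloud of $v$. There are at most $d$ of these, one per expander edge of the relevant cloud vertex; this is the same count used in \Cref{lem:degree-reduction-region}. A predicate swap at $u$ changes one predicate. Since all expanders and orderings are fixed across the family, the transformed instances share a common graph, so $C_{\mathcal T}=d$.
\end{itemize}

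For item~\ref{item:sensitivity-reduction-recovery-drift}, observe that $\mathcal R_{\mathrm{DR}}$ does not depend on the source instance at all. It copies left labels and samples a uniform index $i_v\in[D_v]$, and $D_v$ is common to the family. Hence $\mathcal R_I(\pi')=\mathcal R_{\tilde I}(\pi')$ as distributions, and $D=0$.

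It remains to verify item~\ref{item:sensitivity-reduction-assignment-distance} with $C_{\mathcal R}=1$. Use the shared-index coupling from the proof of \Cref{lem:degree-reduction-region}. Left labels agree exactly where $\pi'_U$ and $\tilde\pi'_U$ agree. On the right, $\Pr[\bm\pi_V(v)\ne\tilde{\bm\pi}_V(v)]$ equals the fraction of the cloud of $v$ on which $\pi'_V$ and $\tilde\pi'_V$ differ. This fraction is at most $\frac{1}{\delta_V}$ times the number of differing cloud vertices, which is at most that number itself. Summing over $v$ gives
\[
\mathsf{EMD}(\mathcal R(\pi'),\mathcal R(\tilde\pi'))\le d_{\mathrm H}(\pi'_U,\tilde\pi'_U)+d_{\mathrm H}(\pi'_V,\tilde\pi'_V)=d_{\mathrm H}(\pi',\tilde\pi').
\]
In fact this holds with the stronger right-side factor $1/\delta_V$, but the scalar statement only needs $1$.

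The main obstacle, if any, is item~\ref{item:sensitivity-reduction-instance-distance}: checking that one source projection swap really induces at most $d$ target swaps. I would re-derive this directly from \Cref{alg:degree_reduction}, using that the edge $(u,v)$ corresponds to a single position $j$ in the neighbor list of $v$. Its copies are the edges from $(v,i)$ to $u$ over $i$ adjacent to $j$ in $H_v$, and there are exactly $d$ of these by $d$-regularity. Everything else is bookkeeping inherited from the region version.
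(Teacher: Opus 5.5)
Your proposal is correct and follows the paper's proof essentially step for step. The paper also verifies the hypotheses of \Cref{lem:sensitivity-reduction-transfer} with $C_{\mathcal T}=d$, $C_{\mathcal R}=1$, $D=0$: it inherits completeness, soundness and the $d$-swap count from \Cref{lem:degree-reduction-region}, uses the shared-index coupling (where the right-side factor $1/\delta_V\le 1$) for the Lipschitz bound, and notes that $\mathcal R_{\mathrm{DR}}$ is independent of the source instance, which gives the zero drift term.
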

\begin{proof}
    The structural statement that $\mathcal I'$ consists of right-regular left-predicate label cover instances with right degree $d$ follows directly from the construction in \Cref{subsection:degree-reduction-region}.

    We verify the hypotheses of \Cref{lem:sensitivity-reduction-transfer} with
    \[
        C_{\mathcal T}=d,
        \qquad
        C_{\mathcal R}=1,
        \qquad
        D=0.
    \]
    We use exactly the same transformation $\mathcal T_{\mathrm{DR}}$ and recovery map $\mathcal R_{\mathrm{DR}}$ as in \Cref{lem:degree-reduction-region}.

    Completeness, soundness, and the bound $C_{\mathcal T}=d$ for how the transformed instance changes under one source swap are verified exactly as in the proof of \Cref{lem:degree-reduction-region}.
    In particular, the soundness loss in passing from $\mathcal I$ to $\mathcal I'$ is the same $O(\lambda)=O(d^{-1/2})$ loss used there.

    For Item~\ref{item:sensitivity-reduction-assignment-distance}, use the same shared-index coupling as in the proof of \Cref{lem:degree-reduction-region}.
    That proof shows, coordinate-wise, that the recovered left labels differ in expectation by at most $d_{\mathrm H}(\pi'_U,\tilde\pi'_U)$ and the recovered right labels differ in expectation by at most $\delta_V^{-1} d_{\mathrm H}(\pi'_V,\tilde\pi'_V)$.
    Summing the two coordinates and using $\delta_V\ge 1$ gives
    \[
        \mathsf{EMD}(\mathcal R_{\mathrm{DR}}(\pi'),\mathcal R_{\mathrm{DR}}(\tilde\pi'))
        \le
        d_{\mathrm H}(\pi'_U,\tilde\pi'_U)
        +
        \frac{1}{\delta_V} d_{\mathrm H}(\pi'_V,\tilde\pi'_V)
        \le
        d_{\mathrm H}(\pi',\tilde\pi').
    \]
    Thus $C_{\mathcal R}=1$.

    For Item~\ref{item:sensitivity-reduction-recovery-drift}, fix a source one-swap pair $I,\tilde I \in \mathcal I$ and an assignment $\pi'$ on the common target domain.
    Because the expanders $H_v$ and the neighbor orderings are fixed across the family, the recovery map $\mathcal R_{\mathrm{DR}}$ depends only on $\pi'$ and these auxiliary choices, not on the source predicates or source edge projections.
    Thus
    \[
        \mathcal R_{\mathrm{DR},I}(\pi') = \mathcal R_{\mathrm{DR},\tilde I}(\pi')
    \]
    as distributions, and so the drift term is $D=0$.

    Applying \Cref{lem:sensitivity-reduction-transfer} now yields
    \[
        \mathsf{SwapSens}_{\varepsilon+O(\lambda)}(\mathcal I')
        \ge
        \frac{\mathsf{SwapSens}_{\varepsilon}(\mathcal I)}{d},
    \]
    as claimed.
\end{proof}
\section{Label Cover under Clause Swaps and Stable Recovery}\label{sec:robust}

For each CNF formula $\Phi$, we construct a left-predicate label cover instance $I_\Phi$ with low soundness and explicit control over how $I_\Phi$ changes when clauses of the underlying CNF are swapped; this is the construction used later to initialize the stage-$0$ family in \Cref{sec:dh-iteration}. For the satisfiable CNF family used in that stage-$0$ initialization, we also prove a recovery theorem with perturbation guarantees, showing how to recover a satisfying assignment of $\Phi$ from a labeling of $I_\Phi$ with noticeable value. The first subsection constructs the instances $I_\Phi$, and the second proves the recovery statement needed for stage~$0$.

\subsection{Left-Predicate Label Cover under Clause Swaps}\label{sec:robust-label-cover}

We begin with the left-predicate label cover, focusing on how the resulting label cover instance changes under clause swaps in the original CNF. The obstacle is that the standard algebraic robust-PCP template hardwires the clause predicate into one global low-degree polynomial. That is exactly what makes the classical soundness argument work, but it is incompatible with locality: a single clause swap is highly local in the source CNF's table of clause indicators, but typically becomes global after low-degree extension, ruining sensitivity.

Our construction separates the algebraic soundness task from the formula lookup task. The outer algebraic layer certifies only the low-degree structure of the assignment and witness polynomials, while a separate local Reed--Muller evaluation gadget supplies the relevant value of the source CNF's clause-indicator table at the distinguished point. This ensures that the formula-dependent part of the verifier stays inside the left predicates, so local changes in the source formula appear as local predicate swaps while the underlying graph and edge projections remain fixed. The result is a left-predicate label cover instance that serves as the stage-$0$ family for the later construction.

The main result of the section is the following left-predicate label cover under clause swaps for sufficiently large $n$.

\begin{theorem}[Left-predicate label cover under clause swaps]\label{thm:left-predicate-label-cover-under-clause-swaps}
Fix any constant $c>0$. There exist $n_0=n_0(c)$ and an absolute constant $c_{\mathrm{lc}}\ge 1$ such that the following holds for every integer $n\ge n_0$.
Let $\Phi$ be any CNF formula on $n$ variables, and let $a:=4c+7$, $m:=\frac{\log n}{\log\log n}$, $q:=\Theta((\log n)^a)$, and $\varepsilon\ge \log^{-c}n$. Then there is an explicit reduction $\mathcal T_{\mathrm{Robust}}$ that maps $\Phi$ to a left-predicate label cover instance
$I_\Phi=(U,V,E,\Sigma_U,\Sigma_V,\{P_u^\Phi\}_{u\in U},F=\{f_e\}_{e\in E})$
with the following properties.
\begin{enumerate}
    \item \textbf{Completeness.} If $\Phi$ is satisfiable, then $\val(I_\Phi)=1$.
    \item \textbf{Soundness.} If $\Phi$ is unsatisfiable, then $\val(I_\Phi)\le \varepsilon$.
    \item \textbf{Change under clause swaps.} If $\Phi$ and $\Phi'$ differ by one clause swap, then $I_\Phi$ and $I_{\Phi'}$ have the same underlying graph, the same left and right alphabets, and the same edge projections, and only
    $2q^{8m+c_{\mathrm{lc}}}$
    left predicates change.
    \item \textbf{Parameters.} Every instance $I_\Phi$ satisfies
    \[
        |V|\le q^{3m+c_{\mathrm{lc}}},
        \qquad
        q^{11m-c_{\mathrm{lc}}}\le |U|\le q^{11m+c_{\mathrm{lc}}},
        \qquad
        |\Sigma_V|=q^{3m+5},
        \qquad
        |\Sigma_U|\le q^{q^{c_{\mathrm{lc}}}},
    \]

    \[
         d_u\le q^{c_{\mathrm{lc}}}\text{ for every }u\in U,
        \qquad
        q^{8m+6}\le d_v\le q^{8m+c_{\mathrm{lc}}}\text{ for every }v\in V,
    \]
\end{enumerate}
\end{theorem}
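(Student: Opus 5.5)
I would build $I_\Phi$ as a standard algebraic (Reed--Muller based) robust PCP over a field $\mathbb F_q$ of size $q=\Theta((\log n)^a)$, with $m=\log n/\log\log n$ variables, so that $q^m\ge n$ and a subcube $H^m$ with $|H|=\log n$ indexes the variables and clauses of $\Phi$. The plan is to separate the algebraic part from the formula lookup. The outer verifier checks only low-degree structure of an assignment polynomial $\hat x:\mathbb F_q^m\to\mathbb F_q$ and auxiliary witness polynomials (a sumcheck/zero-on-subcube witness for the constraint ``every clause is satisfied''), read via low-degree curves and lines. The clause table $C_\Phi:H^{3m}\to\{0,1\}$ (indicator that $(i,j,k)$, with signs, forms a clause) is not algebraized into a shared polynomial. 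Instead, whenever the verifier needs the value of $C_\Phi$ at a point, that value is computed inside the left predicate by a small Reed--Muller decode/reject gadget.

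Concretely, right vertices are points/low-dimensional subspaces of $\mathbb F_q^{3m}$ carrying restrictions of the witness polynomials; this gives $|V|\le q^{3m+O(1)}$ and $|\Sigma_V|=q^{3m+5}$ if labels are low-degree evaluations tables on planes. Left vertices are the verifier's random strings, namely tuples of random points and curves, roughly $q^{11m\pm O(1)}$ of them. Each left vertex queries $q^{O(1)}$ right vertices, and its label is the full local view, giving $|\Sigma_U|\le q^{q^{O(1)}}$. Projections are coordinate restrictions, independent of $\Phi$. The predicate $P_u^\Phi$ checks local low-degree consistency and the sumcheck identity with $C_\Phi$ evaluated at the relevant point of $H^{3m}$. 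Since only points of $H^{3m}$ ever need $C_\Phi$, the predicate can depend on $\Phi$ only through these point values.

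Completeness follows by taking honest low-degree extensions of a satisfying assignment and its witnesses. For soundness $\le\varepsilon$, I would invoke a standard robust low-degree test (a plane-vs-point or curve test with soundness $\mathrm{poly}(m\cdot\deg/q)$) via list decoding. Any labeling of value more than $\varepsilon$ correlates with one of $\mathrm{poly}(1/\varepsilon)$ low-degree polynomials. The zero-on-subcube/sumcheck argument then shows that this polynomial encodes a satisfying assignment, since the clause values read by the predicates are the true values of $C_\Phi$. The choice $a=4c+7$ is made so that $m\deg/q\le\log^{-c-1}n$, which beats $\varepsilon\ge\log^{-c}n$ after union bounds over the list. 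This is the main obstacle: reproving robust soundness when the clause values enter through the gadget rather than through a global polynomial, and checking that the decode/reject gadget either returns the correct value or rejects with high probability.

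For clause swaps, changing one clause alters $C_\Phi$ at two points of $H^{3m}$ (the removed and added clause). A predicate $P_u^\Phi$ changes only if $u$'s random string looks up one of these points. Counting the left vertices whose sampled curves pass through a fixed point gives at most $q^{8m+O(1)}$ per point, hence $2q^{8m+c_{\mathrm{lc}}}$ in total. The graph and projections never depend on $\Phi$. The right-degree bounds $q^{8m+6}\le d_v\le q^{8m+c_{\mathrm{lc}}}$ follow by double counting $|U|d_u/|V|$, using the symmetry of the random-curve distribution; this symmetry also makes the degrees nearly uniform. Choosing $c_{\mathrm{lc}}$ large enough absorbs all the $O(1)$ exponents.
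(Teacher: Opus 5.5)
Your plan follows the paper's architecture closely: a Reed--Muller outer layer, a clause table kept raw and consulted only through a decode/reject gadget inside the left predicates, list-decoding soundness, and a count of affected predicates. There is, however, a genuine gap at exactly the step you flag as the main obstacle, and it cannot be closed, because items~2 and~3 of the statement are incompatible. Take a satisfiable $\Phi'$ and an unsatisfiable $\Phi$ that differ in one clause. For instance, let $\Phi'$ contain $x_1\vee x_1\vee x_1$ and $x_2\vee x_2\vee x_2$, and let $\Phi$ replace the latter by $\bar x_1\vee\bar x_1\vee\bar x_1$. By item~3 the two instances share graph, alphabets and projections. So a perfect labeling of $I_{\Phi'}$ still satisfies, in $I_\Phi$, every edge at a left vertex whose predicate did not change. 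With uniform left degrees (as in your construction and the paper's) and $|U|\ge q^{11m-c_{\mathrm{lc}}}$, the remaining edges are at most a $2q^{-3m+2c_{\mathrm{lc}}}$ fraction. Hence $\val(I_\Phi)\ge 1-2q^{-3m+2c_{\mathrm{lc}}}$, which exceeds $\varepsilon$ and contradicts item~2. Iterating over the $O(n)$ clauses of a bounded-degree formula, swapping each to a distinct all-positive clause, even gives $\val(I_\Phi)\ge 1-n\,q^{-3m+O(1)}=1-o(1)$ for every such $\Phi$. With these parameters the instance essentially cannot see $\Phi$. Any provable variant must let far-apart formulas differ on a non-negligible fraction of predicates, and can only promise soundness for formulas far from satisfiable.

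In your argument the gap surfaces in the sentence ``the clause values read by the predicates are the true values of $C_\Phi$.'' The raw entries are true, but a zero-on-subcube check needs $\widetilde C_\Phi(z)$ at random $z\in\F^{3m+3}$, not values on $H^{3m}$, and that decoded value is not certified. The honest auxiliary word for $\Phi'$ is accepted by every gadget test that does not read one of the two changed entries. By the very read-locality your swap count needs, that is all but a $q^{-3m+O(1)}$ fraction of tests. Those tests decode $\widetilde C_{\Phi'}(z)$, which differs from $\widetilde C_\Phi(z)$ for most $z$, because $\widetilde C_{\Phi'}-\widetilde C_\Phi$ is a nonzero polynomial of degree $O(m|H|)\ll q$. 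So no gadget, with perfect or merely high-probability correctness, delivers both properties.

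Following the paper does not help. Its proof rests on \Cref{thm:rm-eval}, which asserts that every accepted view decodes exactly $\widetilde B(z)$ from at most $q^{c_{\mathrm{in}}}$ raw reads. Changing $B$ at one unread point refutes this: the honest view for the modified table has the same raw answers, is accepted, and decodes a different value whenever no coordinate of $z$ lies in $H$. \Cref{clm:bad-type-3} depends on that assertion.

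Separately, your parameters are asserted rather than derived. Right vertices carrying low-degree tables on planes would give $|V|\approx q^{9m}$ and an alphabet far beyond $q^{3m+5}$. The stated bounds need point right vertices labeled by the values of the $3m+5$ outer polynomials. Also, a predicate depends on $\Phi$ through the raw entries its gadget reads, not through whether its curve meets a changed point.
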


We now begin the proof of this theorem; the construction of the label-cover instance proceeds in several stages.

\paragraph{Preprocessing.}
Fix a CNF formula $\Phi$. Choose a field $\F$ of size $q:=|\F|=\Theta((\log n)^a)$ and a subset $H\subseteq\F$ with $\{0,1\}\subseteq H$ and $|H|=\Theta(\log n)$. After the usual padding, we identify the variable set with $H^m$, where $m=\log n/\log\log n$.

An assignment $a\in\{0,1\}^n$ is now viewed as a function $a:H^m\to\{0,1\}$, and we let
$A:\F^m\to\F$
be its low-degree extension. Its degree is at most $D_A:=m(|H|-1)$.
Define the \emph{raw clause table}
$B_\Phi:H^{3m+3}\to\{0,1\}$
by
\[
    B_\Phi(u,v,w,b_1,b_2,b_3)=1
    \iff
    x_u^{b_1}\vee x_v^{b_2}\vee x_w^{b_3}
    \text{ is a clause of }\Phi.
\]
for $b_1,b_2,b_3 \in \{0,1\}$, and $0$ otherwise. In other words, $B_\Phi$ is supported on clause-encoding points in $H^{3m}\times\{0,1\}^3$, and one clause swap changes $B_\Phi$ at exactly two points. Let
$\widetilde B_\Phi:\F^{3m+3}\to\F$
be the low-degree extension of $B_\Phi$; its degree is at most $D_B:=(3m+3)(|H|-1)$.
We use the cyclic permutation $\rho(i,j,k,b_1,b_2,b_3):=(j,k,i,b_1,b_2,b_3)$.
Thus for a clause point $x=(u,v,w,b_1,b_2,b_3)$, the first $m$ coordinates of $x,\rho(x),\rho^2(x)$ are $u,v,w$ respectively, matching the literal order in the clause-check polynomial below.
Fix a common degree bound $D:=8m|H|$,
which dominates all outer polynomials appearing below.

Given a polynomial $A:\F^m\to\F$, define its clause-check polynomial
\[
    P_{\Phi,A}(x)
    :=
    \widetilde B_\Phi(x)
    \bigl(A(x)-x_{3m+1}\bigr)
    \bigl(A(\rho(x))-x_{3m+2}\bigr)
    \bigl(A(\rho^2(x))-x_{3m+3}\bigr),
\]
where, in expressions such as $A(x)$, $A(\rho(x))$, and $A(\rho^2(x))$, the polynomial $A$ is evaluated on the first $m$ coordinates of the corresponding point of $\F^{3m+3}$. Since $\widetilde B_\Phi$ agrees with $B_\Phi$ on $H^{3m+3}$, the restriction $A|_{H^m}$ satisfies $\Phi$ if and only if $P_{\Phi,A}$ vanishes on $H^{3m+3}$.

We use the standard algebraic characterization of vanishing on $H^m$.

\begin{lemma}\label{lem:polyCharacterization}
For every field $\F$, subset $H\subseteq\F$, and integer $r\ge1$, a polynomial $P:\F^r\to\F$ of degree at most $D$ vanishes on $H^r$ if and only if there exist polynomials $P_1,\ldots,P_r$ of degree at most $D-|H|$ such that
$P(x)=\sum_{j\in[r]} g_H(x_j)P_j(x)$,
where $g_H(t):=\prod_{h\in H}(t-h)$.
\end{lemma}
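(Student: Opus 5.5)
The reverse direction is immediate. If $P=\sum_{j\in[r]} g_H(x_j)P_j$, then for every $h\in H^r$ each factor $g_H(h_j)=\prod_{h'\in H}(h_j-h')$ vanishes, so $P(h)=0$. For the forward direction I plan to run multivariate division by the monic univariate polynomials $g_H(x_1),\ldots,g_H(x_r)$, one variable at a time. Throughout I treat $P$ as a formal polynomial of total degree at most $D$.

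\textbf{Division step.} Set $R_0:=P$. For $j=1,\ldots,r$, view $R_{j-1}$ as a polynomial in $x_j$ with coefficients in $\F[x_1,\ldots,x_{j-1},x_{j+1},\ldots,x_r]$. Since $g_H$ is monic of degree $|H|$, long division is possible and gives
\[
    R_{j-1}=g_H(x_j)\,P_j+R_j,\qquad \deg_{x_j}R_j<|H|.
\]
It suffices to divide monomial by monomial. For a monomial $x^\alpha$, write $x_j^{\alpha_j}=g_H(x_j)q(x_j)+s(x_j)$ with $\deg q=\alpha_j-|H|$ (or $q=0$ when $\alpha_j<|H|$) and $\deg s<|H|$. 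Multiplying through by the remaining variables of $x^\alpha$ shows three things:
\begin{itemize}
    \item the quotient $P_j$ has total degree at most $\deg R_{j-1}-|H|$;
    \item the remainder $R_j$ has total degree at most $\deg R_{j-1}$;
    \item the division touches only the variable $x_j$, so the degree of $R_j$ in each $x_i$, $i\ne j$, is at most that of $R_{j-1}$.
\end{itemize}
By induction, $\deg R_j\le D$ for all $j$, hence $\deg P_j\le D-|H|$. If $D<|H|$, every quotient is $0$, which is consistent with the convention that $\deg P_j\le D-|H|<0$ means $P_j=0$. After all $r$ steps,
\[
    P=\sum_{j\in[r]} g_H(x_j)P_j+R,
\]
where $R:=R_r$ has individual degree less than $|H|$ in every variable.

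\textbf{Remainder step.} Since $P$ and every $g_H(x_j)$-term vanish on $H^r$, so does $R$. It remains to show that a polynomial with individual degrees less than $|H|$ that vanishes on $H^r$ is the zero polynomial. I argue by induction on $r$. For $r=1$, a nonzero univariate polynomial of degree less than $|H|$ cannot have $|H|$ roots. For the inductive step, write $R=\sum_{i<|H|} c_i(x_1,\ldots,x_{r-1})\,x_r^i$. Fix any $(h_1,\ldots,h_{r-1})\in H^{r-1}$. The univariate polynomial in $x_r$ then vanishes on all of $H$, so each $c_i(h_1,\ldots,h_{r-1})=0$. Applying the induction hypothesis to each $c_i$, whose individual degrees are also less than $|H|$, gives $c_i\equiv 0$. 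Hence $R=0$, and the decomposition follows.

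\textbf{Main obstacle.} The only care needed is the degree bookkeeping in the division step. One must confirm that the quotient loses exactly $|H|$ in total degree, and that dividing in $x_j$ never raises the degrees in the other variables. Both points follow from the monomial-by-monomial computation above, since $g_H$ is monic and univariate.
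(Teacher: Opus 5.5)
Your proof is correct and complete. The paper gives no proof of its own, invoking this as the standard characterization of vanishing on $H^r$ (it is Alon's Combinatorial Nullstellensatz, Theorem~1.1). Your argument is precisely that standard proof: reduce $P$ modulo the monic polynomials $g_H(x_j)$ one variable at a time, tracking total degree, and then show that a polynomial with all individual degrees below $|H|$ that vanishes on $H^r$ is identically zero.
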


\paragraph{Objects and reference charts.}
An \emph{object} is a tuple $\Omega=(y,y',z,z')$ with $y,y',z,z'\in\F^{3m+3}$ such that:
\begin{enumerate}
    \item $z_i=z_i'$ for all $i\in[m]$, and
    \item the six points
    $z,\ y,\ y',\ z',\ \rho(z),\ \rho^2(z)$
    are affinely independent.
\end{enumerate}
Let ${\cal O}$ denote the set of all objects. For $\Omega=(y,y',z,z')\in{\cal O}$ we write
$L(\Omega):=\operatorname{aff}(z,y,y',z',\rho(z),\rho^2(z))$,
which is now always a $5$-dimensional affine subspace. For any function $Q$ we denote by $Q |_{L(\Omega)}$ the restriction of $Q$'s domain to the points in $L(\Omega)$. We also fix the reference affine chart
$\psi_\Omega:\F^5\to L(\Omega)$
defined by
\[
    \psi_\Omega(0)=z,
    \quad
    \psi_\Omega(e_1)=y,
    \quad
    \psi_\Omega(e_2)=y',
    \quad
    \psi_\Omega(e_3)=z',
    \quad
    \psi_\Omega(e_4)=\rho(z),
    \quad
    \psi_\Omega(e_5)=\rho^2(z),
\]
where $e_1,\ldots,e_5$ are the standard basis vectors of $\F^5$. 

Equivalently, start from the product distribution in which $z,y,y'$ are uniform in $\F^{3m+3}$ and $z'$ is uniform among points sharing the first $m$ coordinates with $z$, and then condition on the affine-independence event in item~\textup{(2)}; this conditioned distribution is exactly the uniform distribution on ${\cal O}$.
There is an absolute constant $C_{\mathrm{geom}}\ge 1$ such that this conditioning changes each of the marginals of $z$, of $y$, and of $(z,z')$ by total-variation distance at most $C_{\mathrm{geom}}/q$ from the corresponding unconditioned uniform distribution on $\F^{3m+3}$ or on the pairs sharing their first $m$ coordinates.
This is exactly the geometric regime used in the Dinur--Harsha \cite{dinur2013composition} object analysis.

Two standard ingredients enter next. The \emph{point-object theorem} converts agreement among local object restrictions into a short list of global low-degree candidates. The \emph{local evaluation theorem} gives local access to $\widetilde B_\Phi(z)$ from the raw clause-indicator table while keeping the effect of clause swaps confined to a controlled set of left predicates. We use both only as black boxes. 

For two functions $P,Q$ over the same domains, we denote by $P \equiv Q$ their point-wise equivalence.
An object oracle should be viewed as a table of local algebraic hypotheses: for each pair $(\Omega,\tau)$ it returns a $k$-tuple of degree-$D$ polynomials on the $5$-flat $L(\Omega)$. Different objects are not assumed to be globally consistent a priori. The point-object theorem says that if, for many random triples $(\Omega,\tau,x)$ with $x\in L(\Omega)$, this local tuple agrees with an ambient function $f$ at the sampled point $x$, then except for a small error the entire local tuple must already equal, on all of $L(\Omega)$, the restriction of one of a short list of global degree-$D$ tuples. The extra seed $\tau$ is only an external index for which local view is exposed; it does not change the distribution of the sampled object $\Omega$ or of the sampled point $x\in L(\Omega)$.

\begin{theorem}[Point-object theorem with an external seed]\label{thm:point-object-seeded}
There exist absolute constants $c_{\mathrm{PO}},C_{\mathrm{PO}},C_{\mathrm{err}}\ge 1$ such that the following holds.
Let $k\ge1$, let $\Tau$ be any finite set, let
$f:\F^{3m+3}\to\F^k$
be any function, and assume
$\delta\ge c_{\mathrm{PO}}\, m\left(\frac{mD}{q}\right)^{1/4}$.
Then there exist
$t\le \frac{C_{\mathrm{PO}}}{\delta}$
polynomial maps
$\vec Q_1,\ldots,\vec Q_t:\F^{3m+3}\to\F^k$
of coordinate-wise degree at most $D$ such that, for every object oracle
\[
    O_{\mathrm{obj}}:{\cal O}\times\Tau\to
    \{\text{$k$-tuples of degree-$D$ polynomials on }L(\Omega)\},
\]
we have
\[
    \Pr_{\Omega\sim{\cal O},\,\tau\sim\Tau,\,x\sim L(\Omega)}
    \Bigl[
        O_{\mathrm{obj}}(\Omega,\tau)(x)\neq f(x)
        \,\lor\,
        \exists i\in[t],\; \vec Q_i|_{L(\Omega)}\equiv O_{\mathrm{obj}}(\Omega,\tau)
    \Bigr]
    \ge
    1-\delta-C_{\mathrm{err}}\frac{1+tD}{q}.
\]
\end{theorem}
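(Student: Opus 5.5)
The plan is to reduce to the unseeded point-object (``plane-vs-point'' type) list-decoding theorem underlying the Dinur--Harsha object analysis \cite{dinur2013composition}. The central move is to fix the list $\vec Q_1,\ldots,\vec Q_t$ \emph{canonically from $f$ alone}, before any oracle or seed is chosen. Concretely, I would take $\{\vec Q_i\}$ to be the set of all $k$-tuples of degree-$D$ polynomial maps $\F^{3m+3}\to\F^k$ that agree with $f$ on at least a $\delta/2$ fraction of $\F^{3m+3}$, where agreement means equality of the whole $k$-tuple.

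The size bound $t\le C_{\mathrm{PO}}/\delta$ then comes from a Johnson-type counting argument. Two distinct degree-$D$ tuples agree on at most a $D/q$ fraction of points, by Schwartz--Zippel applied to any differing coordinate. An inclusion--exclusion bound on the agreement sets then gives $t\le 4/\delta$ as long as $\delta^2\gg D/q$. This regime is implied by the hypothesis $\delta\ge c_{\mathrm{PO}}m(mD/q)^{1/4}$.

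Because the list depends only on $f$, the seed is harmless. For each fixed $\tau\in\Tau$, the map $\Omega\mapsto O_{\mathrm{obj}}(\Omega,\tau)$ is an ordinary object oracle. Since $\tau$ is drawn independently of $(\Omega,x)$, the seeded probability is exactly the average over $\tau$ of the unseeded probabilities. It therefore suffices to prove the unseeded statement with this canonical list, uniformly over all oracles.

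For the unseeded statement, I would bound the bad event $\mathcal B$: the local tuple $g:=O_{\mathrm{obj}}(\Omega)$ agrees with $f$ at the sampled $x$, yet $g\not\equiv\vec Q_i|_{L(\Omega)}$ for every $i$. I would split $\mathcal B$ according to how much $g$ agrees with $f$ on $L(\Omega)$.

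First, suppose the agreement of $g$ with $f$ on $L(\Omega)$ is less than $\delta/2$. Then $\Pr_x[g(x)=f(x)]<\delta/2$, which contributes at most $\delta/2$ to the bad probability.

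Second, suppose $g$ agrees with $f$ on at least a $\delta/2$ fraction of $L(\Omega)$. Here I would run the Dinur--Harsha/Raz--Safra consistency argument. The random $5$-flats $L(\Omega)$ are good samplers for subsets of $\F^{3m+3}$: the points $z,y,y'$ are nearly uniform and nearly pairwise independent, up to the $C_{\mathrm{geom}}/q$ total-variation correction. As a result, for most $\Omega$, every degree-$D$ tuple on $L(\Omega)$ with $\delta/2$ agreement with $f$ extends to a global degree-$D$ tuple that has noticeable agreement with $f$, and hence belongs to the canonical list. The failure probability of this step is absorbed into the terms $\delta/2$ and $C_{\mathrm{err}}(1+tD)/q$. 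Here the $tD/q$ piece is a union bound over the $t$ list members of the Schwartz--Zippel collision probability on a random flat, and the $1/q$ piece is the geometric conditioning error.

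The main obstacle is this second case. The number of degree-$D$ tuples on a $5$-flat is far too large for a union bound, so the extension argument cannot proceed by enumerating local tuples. Instead it must use the algebraic structure. My first attempt would be to import the Dinur--Harsha object/point list-decoding lemma as a black box and check two things. The first is that its decoded list can be taken to be the $f$-determined list above, which is how such lists arise via the ``agree with $f$ on a $\delta/2$-fraction'' pruning step. The second is that its error terms match $\delta+C_{\mathrm{err}}(1+tD)/q$ under the stated lower bound on $\delta$. If the black-box list turns out to be oracle-dependent, the fallback is to prune it: discard every candidate with agreement below $\delta/2$ and add the missing canonical members. Each discarded or added candidate costs only $O(D/q)$ probability, by the same Schwartz--Zippel estimate restricted to random flats.
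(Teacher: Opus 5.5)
Your core argument matches the paper's proof. The paper cites the Dinur--Harsha point-object theorem and ``folds'' the uniform seed into the oracle's randomness. Your version makes that remark precise: for each fixed $\tau$ the slice $\Omega\mapsto O_{\mathrm{obj}}(\Omega,\tau)$ is an unseeded oracle, $\tau$ is independent of $(\Omega,x)$, and the list is fixed from $f$ before the oracle is quantified. The seeded probability is therefore an average of quantities, each bounded by the unseeded theorem.

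Everything beyond that is an unnecessary detour, and part of it is wrong as written.

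\textbf{The detour is not needed.} The point-object theorem you are importing, as the paper uses it, already has the quantifier order of the statement: a list determined by $f$ that is valid for every oracle. So you need neither the canonical $\delta/2$-agreement list, nor a check that the black-box list coincides with it, nor the case split. Oracle-independence of the list is all the seed argument uses.

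\textbf{The fallback pruning fails as stated.} Discarding a candidate $Q$ does not cost $O(D/q)$. For the oracle $\Omega\mapsto Q|_{L(\Omega)}$, the bad event after discarding $Q$ has probability about $\Pr_x[Q(x)=f(x)]$, which can be just below $\delta/2$.

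\textbf{The case-2 claim is false with a single threshold.} If you use $\delta/2$ both for list membership and for the case split, such a $Q$ can have local agreement at least $\delta/2$ on a constant fraction of objects without belonging to the list. Your claim that for most $\Omega$ every $\delta/2$-agreeing local tuple comes from the canonical list then fails. Repairing it would need a gap between the two thresholds plus a sampling estimate on $L(\Omega)$, union-bounded over the candidates.

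Using the black-box list directly avoids all of this.
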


\begin{proof}
This is the standard point-object theorem from Dinur--Harsha~\cite{dinur2013composition} and Harsha's notes on the topic~\cite{HarshaNotes}. We simply fold the uniform seed $\tau\in\Tau$ into the oracle's internal randomness. This does not change the distribution of the sampled object $\Omega$ or of the sampled point $x\in L(\Omega)$, so the same point-object analysis applies. We introduce the extra seed only for the later local-evaluation gadget: there the left vertices will be pairs $(\Omega,\tau)$, with $\tau$ indexing the local decode/reject test attached to the distinguished point of $\Omega$.
\end{proof}

The next ingredient is a standard Reed--Muller decodable-reader gadget. We use it in a form specialized to the raw clause-indicator table $B$: this is a convenient repackaging of the local decode/reject procedure implicit in the Reed--Muller-based dPCP and zero-on-subcube constructions of Dinur--Harsha~\cite[Section~6.1.2 and Appendix~A]{dinur2013composition}, Harsha's notes~\cite[Lecture~10, Section~10.2.3]{HarshaNotes}, and the closely related modern presentation in~\cite[Section~7.3 and Appendix~B.3]{bafna2024quasilinear}.

Morally, this gadget supplies a locally checkable certificate that lets us recover the value $\widetilde B(z)$ of the low-degree extension from the raw table $B$ without ever writing $\widetilde B$ itself into the label-cover instance. The auxiliary word $W_{\mathrm{in}}(B)$ is that certificate, its coordinate index set $V_{\mathrm{in}}$ will later become the family of inner right vertices, and for each pair $(z,\tau)$ the reader $\mathsf{Eval}_{z,\tau}$ chooses one local decoding/checking path for the distinguished point $z$.

Given a table $B:H^{3m+3}\to\F$ and a pair $(z,\tau)$, the \emph{local view} of $\mathsf{Eval}_{z,\tau}$ consists of two ordered answer strings: the \emph{raw-answer string}, formed by the answers to its predetermined raw-table queries in $B$, and the \emph{auxiliary-answer string}, formed by the answers to its queries in $W_{\mathrm{in}}(B)$.

\begin{theorem}[Uniform local evaluation of the raw clause-indicator table]\label{thm:rm-eval}
There exist an absolute constant $c_{\mathrm{in}}\ge1$ and an explicit construction, depending only on $(m,q,H)$, with the following properties.
\begin{enumerate}
    \item It outputs a universal index set $V_{\mathrm{in}}$ for the coordinates of the auxiliary word with
    $|V_{\mathrm{in}}|=q^{3m+O(1)}$.
    \item It outputs a seed set $\Tau$ of size $q^{O(1)}$. %
    \item For each table $B:H^{3m+3}\to\F$ it outputs an auxiliary word
    $W_{\mathrm{in}}(B)\in\F^{V_{\mathrm{in}}}$.
    \item For each point $z\in\F^{3m+3}$ and seed $\tau\in\Tau$, it specifies a local decode/reject test $\mathsf{Eval}_{z,\tau}$ that reads at most
    $Q_{\mathrm{in}}:=q^{c_{\mathrm{in}}}$
    symbols from the auxiliary word $W_{\mathrm{in}}(B)$ together with the values of $B$ at a predetermined set of at most $Q_{\mathrm{in}}$ entries of the raw clause-indicator table, and outputs either reject or a decoded value in $\F$.
\end{enumerate}
These tests satisfy:
\begin{itemize}
    \item \textbf{Completeness.} Every local view of $\mathsf{Eval}_{z,\tau}$ induced by the pair $\bigl(W_{\mathrm{in}}(B),B\bigr)$ is accepted.
    \item \textbf{Local correctness.} Every accepted local view of $\mathsf{Eval}_{z,\tau}$ has a well-defined decoded value, and that value is $\widetilde B(z)$.
    \item \textbf{Bounded locality.} Every raw table point of $H^{3m+3}$ participates in at most $q^{c_{\mathrm{in}}}$ tests $\mathsf{Eval}_{z,\tau}$, and every coordinate of $V_{\mathrm{in}}$ participates in at least one and at most $q^{c_{\mathrm{in}}}$ such tests.
\end{itemize}
\end{theorem}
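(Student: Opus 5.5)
The plan is to build the gadget from two layers: a Reed--Muller encoding of the raw table, and a zero-on-subcube style certificate that this encoding really extends $B$. The auxiliary word $W_{\mathrm{in}}(B)$ will consist of (i) the evaluation table of $\widetilde B$ on all of $\F^{3m+3}$, and (ii) the tables of the witness polynomials certifying that $\widetilde B - B$ has the required structure on $H^{3m+3}$. Both are indexed by universal coordinates that depend only on $(m,q,H)$. This already gives $|V_{\mathrm{in}}|=O(q^{3m+3})\cdot(3m+O(1))$. To get the exact form $q^{3m+O(1)}$, we can index by points of $\F^{3m+3}$ with a constant number of fields per point, or absorb the factor $m\le q$ by enlarging the constant.

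For each distinguished point $z$ we use a seed $\tau$ from a set of size $q^{O(1)}$. The seed picks a constant-dimensional flat (or line) through $z$ from a fixed explicit small family; a seeded subfamily such as the curves through $z$ parametrized by $\tau\in\F^{O(1)}$ would do. $\mathsf{Eval}_{z,\tau}$ reads the restriction of the claimed $\widetilde B$ table to that flat, which is $q^{O(1)}$ symbols. It then rejects unless the following all hold.
\begin{itemize}
\item The restriction has degree at most $D_B$.
\item It is consistent with the witness tables along the flat.
\item The restriction, on the points of the flat lying in $H^{3m+3}$, matches the raw values of $B$ read at those predetermined entries.
\end{itemize}
If all checks pass, it outputs the value at $z$.

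This makes completeness immediate. Local correctness, understood as deterministic correctness of every accepted view, is the delicate part. Degree checks alone only give probabilistic soundness. I would therefore design each test so that the predetermined raw queries together with the auxiliary queries determine $\widetilde B(z)$ exactly. Concretely, restrict $z$'s coordinates via the interpolation identity $\widetilde B(z)=\sum_{h}\prod_i \chi_{h_i}(z_i)B(h)$, organized coordinate by coordinate as a sum-check. Each partial-sum polynomial is stored in $W_{\mathrm{in}}$ and checked against the next one at the seed-chosen point. The final layer is checked directly against raw entries of $B$. An accepted view then forces the decoded value to equal the true partial sums, provided each sum-check round reads the entire univariate partial polynomial, which costs $O(m|H|)$ symbols per round. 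A round that does so can be checked exactly rather than at a random point. The cost is that only a bounded number of raw entries is read per test, which limits how many coordinates can be handled. So I would batch the $3m+3$ coordinates into $O(1)$ blocks of $\approx m$ coordinates, with $|H|^{\text{block}}\le q^{O(1)}$, so that the total query count is $q^{c_{\mathrm{in}}}$.

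Bounded locality follows by a counting argument over the explicit query pattern. Each raw point and each auxiliary coordinate is touched by at most $q^{O(1)}$ pairs $(z,\tau)$ because the seed family is symmetric under the relevant affine maps. To guarantee that every auxiliary coordinate is touched at least once, we discard any coordinates never queried.

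I expect the main obstacle to be reconciling deterministic local correctness with a raw-query budget of $q^{c_{\mathrm{in}}}$. The value $\widetilde B(z)$ genuinely depends on all $|H|^{3m+3}$ raw entries. To make sure each accepted view pins down the correct value, the design must rely on exact rather than probabilistic checks of the stored partial-sum tables, and those tables are themselves consistent only because of $W_{\mathrm{in}}(B)$. I would first try to follow the Dinur--Harsha zero-on-subcube decodable reader, verifying that its rejection condition is exact on every view. If that fails, I would weaken ``local correctness'' to the seed-averaged form that the later pullback actually uses.
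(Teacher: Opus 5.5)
The step you flag as delicate is in fact impossible in the form you adopt. That is also the form the paper needs, since in \Cref{clm:bad-type-3} the string $\lambda_{\mathrm{in}}$ is an arbitrary piece of an adversarial left label. In the paper's regime $|H|^{3m+3}=n^{3+o(1)}$, while each test reads only $Q_{\mathrm{in}}=(\log n)^{O(1)}$ raw entries.

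Fix $z\in(\F\setminus H)^{3m+3}$ and any $\tau$. Let $R$ be the raw query set of $\mathsf{Eval}_{z,\tau}$, choose $h^*\in H^{3m+3}\setminus R$, and let $B'$ differ from $B$ only at $h^*$. Then $\widetilde{B'}(z)-\widetilde B(z)=(B'(h^*)-B(h^*))\chi_{h^*}(z)$, where $\chi_{h^*}(z)=\prod_i\prod_{h\in H\setminus\{h^*_i\}}\frac{z_i-h}{h^*_i-h}\neq 0$. By completeness the honest view of $B'$ is accepted and decodes to $\widetilde{B'}(z)$. Its raw part equals $B|_R$, so it is also an accepted view for $B$, and it decodes to the wrong value.

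Your sum-check fails exactly where this argument says it must. Reading the whole round polynomial $g_j$ lets you check $\sum_{h\in H}g_j(h)$ exactly, but the next round certifies $g_j$ at only one point, so a false $g_j$ that agrees there survives. Making every round exact means recursing on all of $H$ in every coordinate, i.e.\ reading all $|H|^{3m+3}$ raw entries. Recursing instead at a field point chosen by $\tau$ ends with a claim about $\widetilde B$ at a point outside $H^{3m+3}$, which is not a raw entry. The batching does not help: $|H|^m=n^{1+o(1)}$, so blocks with $|H|^{\mathrm{block}}\le q^{O(1)}$ have $O(1)$ coordinates, and there are $\Theta(m)$ of them.

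Your fallback does not work either. Take the single global word $W_{\mathrm{in}}(B')$. By bounded locality, all but $q^{c_{\mathrm{in}}}$ pairs $(z,\tau)$ avoid $h^*$, so their views are accepted and decode to $\widetilde{B'}(z)$. By Schwartz--Zippel this differs from $\widetilde B(z)$ for all but an $O(m|H|/q)$ fraction of $z$. So no seed-averaged correctness relative to $B$ itself holds. The most one can hope for is a proximity-type guarantee: decoded values come from $\widetilde{B'}$ for some $B'$ close to $B$. Because $B_\Phi$ is sparse in $H^{3m+3}$, that closeness would have to be measured relative to clause positions.

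The paper gives no proof of this theorem. It invokes it as a repackaging of the Dinur--Harsha decodable reader and zero-on-subcube test. Those provide probabilistic list/robust soundness with the formula entering through the verifier itself, so the citation does not yield per-view exactness. Read narrowly, with only the view induced by $(W_{\mathrm{in}}(B),B)$ counting, the statement is trivial: take $|\Tau|=1$, let $W_{\mathrm{in}}(B)$ be the table of $\widetilde B$ on $\F^{3m+3}$, and output its entry at $z$. That version cannot support \Cref{clm:bad-type-3}.

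Items~2 and~3 of \Cref{thm:left-predicate-label-cover-under-clause-swaps} already conflict. A clause swap that turns a satisfiable formula into an unsatisfiable one changes predicates at only a $q^{-3m+O(1)}$ fraction of the equal-degree left vertices, so the value stays at least $1-q^{-3m+O(1)}>\varepsilon$. The missing ingredient is therefore not a cleverer gadget but a proximity-type restatement, with the downstream soundness claims weakened to match.
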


In the label-cover instance, a left vertex $u=(\Omega,\tau)$ will carry two kinds of information: the restriction of the outer algebraic objects to the object $\Omega$, and the auxiliary-answer string from one accepted local view certifying the local evaluation at the distinguished point $z$ of $\Omega$. The inner edges simply project the slots of that auxiliary-answer string to the corresponding coordinates of $V_{\mathrm{in}}$, so the auxiliary coordinates are exactly the right-vertex locations used to distribute this local certificate across the label-cover instance.

The label-cover encoding stores only auxiliary-answer strings, not the full local views. This is enough because for fixed $(z,\tau)$ the raw-answer string is determined by the predetermined raw-table queries and will be checked directly against $B_\Phi$ inside the left predicate, while the coordinates in $V_{\mathrm{in}}$ will later serve as the inner right vertices.

To keep the auxiliary-answer string length uniform, fix for each pair $(z,\tau)$ the ordered auxiliary-answer string of $\mathsf{Eval}_{z,\tau}$: the genuine auxiliary queries appear in their natural order, and if fewer than $Q_{\mathrm{in}}$ such queries are used, we pad by repeating auxiliary query positions.
Thus every local view determines a padded auxiliary-answer string in $\F^{Q_{\mathrm{in}}}$, and the same auxiliary coordinate may occur in several transcript slots.

\paragraph{The instance $I_\Phi$.}
We now define the left-predicate label cover instance $I_\Phi$.
\begin{figure}
    \centering
    \begin{tikzpicture}[scale=1]
            \tikzset{inner sep=0,outer sep=3}
            
            \tikzstyle{a}=[inner sep=0.5pt, inner ysep=0.5pt,outer sep=0pt,
            draw=black!40!white, fill=Cerulean!10!white, very thick, rounded corners=6pt, align=center]

             \tikzstyle{e}=[inner sep=1pt, inner ysep=1pt,outer sep=0.5pt,
            draw=Orange!70!white, fill=yellow!10!white, dashed, thick, rounded corners=6pt, align=center]

            \tikzstyle{b}=[inner sep=8pt, inner ysep=8pt,outer sep=0pt,
            draw=black!40!white, fill=Cerulean!10!white, very thick, rounded corners=6pt, align=center]

            \tikzstyle{c}=[inner sep=4pt, inner ysep=4pt,outer sep=0pt,
            draw=black!40!white, fill=Cerulean!10!white, very thick, rounded corners=6pt, align=center]

            \tikzstyle{d}=[inner sep=2pt, inner ysep=2pt,outer sep=0pt,
            draw=black!40!white, fill=Cerulean!10!white, very thick, rounded corners=6pt, align=center]

                \node[b, circle] (L1) at (-8,4) {};
                \node[b, circle] (L2) at (-8,3) {};
                \node at (-8,2.3) {\large $\vdots$};
            
                \node[a, circle] (L3) at (-8,1.4) {\tiny $(\Omega, \tau)$};

                \node at (-8,0.7) {\large $\vdots$};
                \node[b, circle] (L4) at (-8,-0.2) {};
                \node[b, circle] (L5) at (-8,-1.2) {};

                \node[b, circle] (R8) at (0,-2.2) {};
                \node at (0,-1.35) {\large $\vdots$};
                \node[a, circle] (R7) at (0,-0.6) {$v_{\Omega(t)}^{\mathrm{out}}$};

                \node[b, circle] (R5) at (0,1) {};
                \node[a, circle] (R4) at (0,2) {$w_{Q_{\mathrm{in}}}$};
                \node at (0,2.8) {\large $\vdots$};
                \node[c, circle] (R3) at (0,3.5) {$w_i$};
                \node at (0,4.3) {\large $\vdots$};
                \node[c, circle] (R1) at (0,5) {$w_1$};
                \draw[b] (L3) -- (R3);
                \draw[b] (L3) -- (R7);

                \node at (0,0.3) {\large $\vdots$};

                \draw [draw=black!40!white, very thick] plot [smooth, tension=1] coordinates {(L3) (-4.5,-0.3) (R7)};
                \draw [draw=black!40!white, very thick] plot [smooth, tension=1] coordinates {(L3) (-4.2,-0.1) (R7)};

                \draw [draw=black!80!white, very thick] plot [smooth, tension=1.4] coordinates {(-7.5, 1.4) (-7.48, 1.1) (-7.7, 1)};

                \node at (-7.6, 0.8) {\small $K$};
                
                \node[a, circle] (R7) at (0,-0.6) {$v_{\Omega(t)}^{\mathrm{out}}$};

                \node[d, circle] (L3) at (-8,1.4) {\tiny $(\Omega, \tau)$};

                \draw[b] (-8.5,-1.7) -- (-8.5, 4.5);
                \draw[b] (-8.6,-1.7) -- (-8.4,-1.7);
                \draw[b] (-8.6,4.5) -- (-8.4,4.5);
                
                \draw[b] (0.5,-2.7) -- (0.5, 1.4);
                \draw[b] (0.4,-2.7) -- (0.6,-2.7);
                \draw[b] (0.4, 1.4) -- (0.6, 1.4);

                \draw[a] (0.5,1.6) -- (0.5, 5.5);
                \draw[a] (0.4,1.6) -- (0.6,1.6);
                \draw[a] (0.4, 5.5) -- (0.6, 5.5);

                \node[rotate=14.7] at (-4,2.7) {\tiny $(\hat \Pi_A, \ldots, \hat \Pi_{3m+3}; \lambda_{\mathrm{in}}) \rightarrow \left( \lambda_{\mathrm{in}} \right)_i$};

                \node[rotate=-14] at (-3.8,0.6) {\tiny $(\hat \Pi_A, \ldots, \hat \Pi_{3m+3}; \lambda_{\mathrm{in}} ) \rightarrow \left( \Pi_A(t), \ldots, \Pi_{3m+3}(t) \right)$};
               
               \node at (-8.45,4.8) {$U$};
               \node at (0.5,5.8) {$V_{\mathrm{in}}$};
                \node at (0.5,-3) {$V_{\mathrm{out}}$};

                \draw[e] (-9.9,1) -- (-9.9,2) -- (-8.7,2) -- (-8.7,1) -- cycle;

                \node at (-9.3,1.5) {$P_{(\Omega,\tau)}$};

            \end{tikzpicture}

    \caption{The stage-$0$ left-predicate label cover instance.}
    \label{fig:placeholder}
\end{figure}
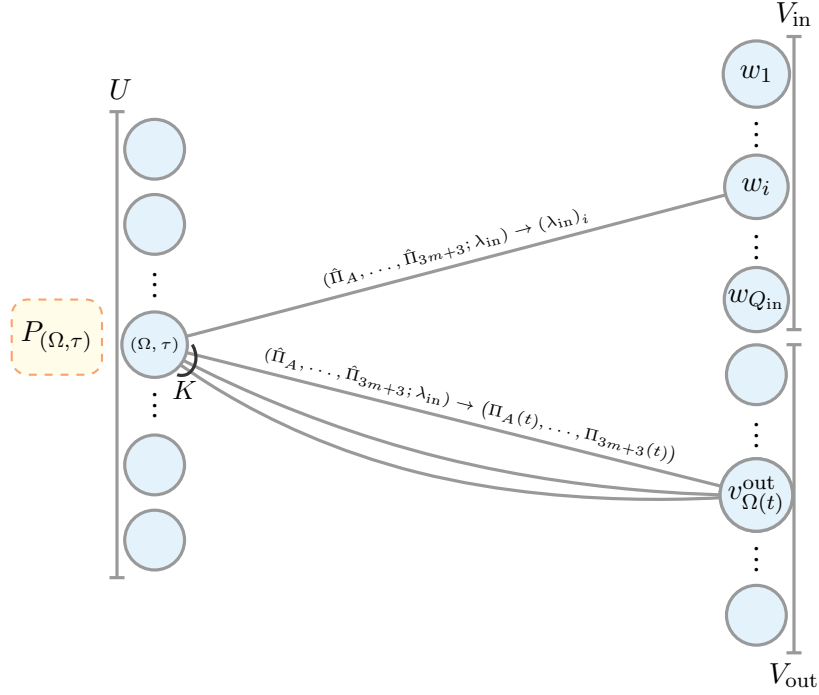

\medskip
\noindent\emph{Right vertices.}
Let
\[
    V_{\mathrm{in}}^{\mathrm{qry}}
    :=
    \bigl\{
        w\in V_{\mathrm{in}}
        :
        \exists (\Omega,\tau)\in{\cal O}\times\Tau
        \text{ such that $w$ is queried by }\mathsf{Eval}_{z,\tau}
    \bigr\},
\]
where $z$ denotes the distinguished point of $\Omega= (y,y',z,', z)$.
Discarding never-queried auxiliary coordinates does not affect completeness or soundness, and the locality bounds from \autoref{thm:rm-eval} remain valid on the surviving coordinates.
Henceforth we relabel $V_{\mathrm{in}}^{\mathrm{qry}}$ as $V_{\mathrm{in}}$.
By abuse of notation, we use $V_{\mathrm{in}}$ both for this surviving set of auxiliary-coordinate indices and for the corresponding family of ``inner'' right vertices in the below description of the ``two-part'' right vertex set $V$:
\begin{itemize}
    \item For every point $x\in\F^{3m+3}$ there is an \emph{outer} right vertex $v_x^{\mathrm{out}}$.
    \item For every auxiliary coordinate $w\in V_{\mathrm{in}}$ there is an \emph{inner} right vertex $v_w^{\mathrm{in}}$.
\end{itemize}
Thus $V:=V_{\mathrm{out}}\sqcup V_{\mathrm{in}}$ and $|V|=q^{3m+O(1)}$.
The right alphabet is the common alphabet $\Sigma_V:=\F^{3m+5}$.
Outer right labels use all coordinates; inner right labels are padded into the first coordinate.

\medskip
\noindent\emph{Left vertices.}
A left vertex is a pair $u=(\Omega,\tau)\in{\cal O}\times\Tau$.
Hence $U:={\cal O}\times\Tau$ and $|U|=q^{11m+O(1)}$.

\medskip
\noindent\emph{Left alphabet.}
An element of the global left alphabet $\Sigma_U$ is a tuple
\[
    \lambda=
    \bigl(\widehat\Pi_A,\widehat\Pi_P,\widehat\Pi_1,\ldots,\widehat\Pi_{3m+3};\lambda_{\mathrm{in}}\bigr),
\]
where each $\widehat\Pi_\star$ is a degree-$D$ polynomial on $\F^5$ and $\lambda_{\mathrm{in}}$ is a padded auxiliary-answer string of length $Q_{\mathrm{in}}$ over $\F$. In words, a left label consists of (a tuple of polynomials claiming to be) the outer polynomials restricted to an object, together with one padded auxiliary-answer string. We view the outer component as a tuple of polynomials on the reference chart $\F^5$, not directly on the object's affine space $L(\Omega)$. Since the number of coefficients of a degree-$D$ polynomial in five variables is $q^{O(1)}$, we have
$|\Sigma_U|\le q^{q^{O(1)}}$.

For a fixed left vertex $u=(\Omega,\tau)$, the admissible subset $\Sigma_u^{\mathrm{adm}}\subseteq\Sigma_U$ consists of those tuples $\lambda$ for which the following conditions hold.
\begin{enumerate}
    \item {\bf Accepted Local View:} $\lambda_{\mathrm{in}}$, interpreted using the fixed $Q_{\mathrm{in}}$-slot convention above and combined with the corresponding raw-answer string from $B_\Phi$, is an accepted local view of $\mathsf{Eval}_{z,\tau}$, where $z$ is the distinguished point of $\Omega$.
    By item~(1), this local view has a well-defined decoded value; write
    \[
        \operatorname{Dec}(\lambda_{\mathrm{in}})\in\F
    \]
    for that value.
    \item \textbf{Dependence only on the first $m$ coordinates at the distinguished pair:}
    $\widehat\Pi_A(0)=\widehat\Pi_A(e_3)$.
    \item \textbf{Vanishing witness identity on the object:}
    \[
        \widehat\Pi_P(t)=\sum_{j\in[3m+3]} g_H\bigl((\psi_\Omega(t))_j\bigr)\widehat\Pi_j(t)
        \qquad \forall t\in\F^5.
    \]
    \item \textbf{Clause check at the distinguished point:}
    \begin{align*}
        \widehat\Pi_P(0)
        &=
        \operatorname{Dec}(\lambda_{\mathrm{in}})
        \bigl(\widehat\Pi_A(0)-z_{3m+1}\bigr)
        \bigl(\widehat\Pi_A(e_4)-z_{3m+2}\bigr)
        \bigl(\widehat\Pi_A(e_5)-z_{3m+3}\bigr).
    \end{align*}
\end{enumerate}
The left predicate $P_u^\Phi:\Sigma_U\to\{0,1\}$ is the indicator of these conditions.

\medskip
\noindent\emph{Edges and projections.}
Fix $K:=q^{c_{\mathrm{in}}+4}$.
Fix $u=(\Omega,\tau)$. For every $t\in\F^5$ and every $\kappa\in[K]$, add an outer edge from $u$ to $v_{\psi_\Omega(t)}^{\mathrm{out}}$. The corresponding projection sends
\[
    \lambda=
    (\widehat\Pi_A,\widehat\Pi_P,\widehat\Pi_1,\ldots,\widehat\Pi_{3m+3};\lambda_{\mathrm{in}})
\]
to
\[
    \bigl(
        \widehat\Pi_A(t),\,
        \widehat\Pi_P(t),\,
        \widehat\Pi_1(t),\ldots,\widehat\Pi_{3m+3}(t)
    \bigr)
    \in\Sigma_V.
\]
Thus each outer edge is repeated with multiplicity $K$. This is used only to make the outer edges dominate the inner ones in the edge distribution.
Let $w_1,\ldots,w_{Q_{\mathrm{in}}}\in V_{\mathrm{in}}$
be the ordered auxiliary query positions of $\mathsf{Eval}_{z,\tau}$ under the fixed padding convention. For each slot $\ell\in[Q_{\mathrm{in}}]$, add one inner edge from $u$ to $v_{w_\ell}^{\mathrm{in}}$, and let the projection send $\lambda$ to the $\ell$th entry of $\lambda_{\mathrm{in}}$, padded with zeros to land in $\Sigma_V$. If the same auxiliary coordinate appears in several slots, these are parallel inner edges.

Thus every left degree is exactly
$d_u=Kq^5+Q_{\mathrm{in}}=q^{O(1)}$.

\paragraph{Completeness.}
Assume that $\Phi$ is satisfiable. Let $A:\F^m\to\F$ be the low-degree extension of a satisfying assignment. Define the ambient assignment polynomial
$T_A(x):=A(x_1,\ldots,x_m)
    \qquad (x\in\F^{3m+3})$,
and let
$T_P:=P_{\Phi,A}$.
Since $A|_{H^m}$ satisfies every clause of $\Phi$, the polynomial $T_P$ vanishes on $H^{3m+3}$. By \autoref{lem:polyCharacterization}, there exist polynomials $T_1,\ldots,T_{3m+3}$ such that
\[
    T_P(x)=\sum_{j\in[3m+3]} g_H(x_j)T_j(x)
    \qquad \forall x\in\F^{3m+3}.
\]
By \autoref{thm:rm-eval}, the pair $\bigl(W_{\mathrm{in}}(B_\Phi),B_\Phi\bigr)$ induces for every evaluator $\mathsf{Eval}_{z,\tau}$ an accepted local view whose decoded value is the correct value $\widetilde B_\Phi(z)$.

Label every outer right vertex $v_x^{\mathrm{out}}$ by
\[
    \bigl(T_A(x),T_P(x),T_1(x),\ldots,T_{3m+3}(x)\bigr),
\]
and every inner right vertex by the corresponding padded symbol of $W_{\mathrm{in}}(B_\Phi)$. For a left vertex $u=(\Omega,\tau)$, label it with the tuple obtained by restricting $T_A,T_P,T_1,\ldots,T_{3m+3}$ to the chart $\psi_\Omega$ together with the accepting padded auxiliary-answer string of $\mathsf{Eval}_{z,\tau}$. All admissibility conditions are satisfied, and every incident edge is satisfied. Therefore
$\val(I_\Phi)=1$.

\paragraph{Soundness.}
Now assume that $\Phi$ is unsatisfiable, and let $\pi=(\pi_U,\pi_V)$ be an arbitrary labeling of $I_\Phi$. Write $p:=\val_{I_\Phi}(\pi)$.
The soundness proof has five steps. We first turn the labeling into an object oracle (in the sense of \autoref{thm:point-object-seeded}) and apply \autoref{thm:point-object-seeded} to obtain a short list of global candidate tuples. We then define a set of ``good tuples'' satisfying three key properties (\autoref{def:good-tuple}) and show that any good tuple would induce a satisfying assignment to $\Phi$. Finally, we bound the contribution from tuples that fail each of the three properties. Summing these bounds and comparing outer and inner edges yields $p\le \varepsilon$.

We begin by extracting an object oracle $\bar O$ from the labeling $\pi$. From the outer right labels define a function
$\varphi_\pi:\F^{3m+3}\to\F^{3m+5}$
by letting $\varphi_\pi(x)$ be the label of $v_x^{\mathrm{out}}$.
For the soundness analysis only, augment it by an admissibility coordinate:
$\bar\varphi_\pi(x):=\bigl(\varphi_\pi(x),1\bigr)\in\F^{3m+6}$.

For a left vertex $u=(\Omega,\tau)$, write
$b_\pi(u):=P_u^\Phi\bigl(\pi_U(u)\bigr)\in\{0,1\}$ for the evaluation of the left-predicate $P^\Phi_u$ associated with $u$ on $\pi$.
The label $\pi_U(u)$ determines a degree-$D$ tuple of polynomials on $L(\Omega)$; denote its algebraic part by
\[
    O_\pi^{\mathrm{alg}}(\Omega,\tau)
    :=
    \big(\Pi_A,\Pi_P,\Pi_1,\ldots,\Pi_{3m+3} \big),
\]
where $\Pi_\star(x):=\widehat\Pi_\star(\psi_\Omega^{-1}(x))$ for $x\in L(\Omega)$.
Define the augmented object oracle
\[
    \bar O_\pi(\Omega,\tau)
    :=
    \big(\Pi_A,\Pi_P,\Pi_1,\ldots,\Pi_{3m+3},b_\pi(u) \big),
\]
where the last coordinate is the constant polynomial with value $b_\pi(u)$ on $L(\Omega)$.
Thus $\bar O_\pi$ is an object oracle of the type required by \autoref{thm:point-object-seeded}.

Since every left vertex has exactly $Kq^5$ outer edges and $Q_{\mathrm{in}} := q^{c_{\mathrm{in}}}$ inner edges, our choice $K=q^{c_{\mathrm{in}}+4}$ yields
$\frac{Q_{\mathrm{in}}}{Kq^5+Q_{\mathrm{in}}}\le q^{-4}$.
Hence, recalling that $p:=\val_{I_\Phi}(\pi)$, 
\begin{equation}\label{eq:p-vs-pout}
   p\le p_{\mathrm{out}}+q^{-4},
\end{equation}
where
\[
    p_{\mathrm{out}}
    :=
    \Pr_{\Omega\sim{\cal O},\,\tau\sim\Tau,\,x\sim L(\Omega)}
    \bigl[\bar O_\pi(\Omega,\tau)(x)=\bar\varphi_\pi(x)\bigr].
\]
By construction, $p_{\mathrm{out}}$ is exactly the satisfaction probability of a uniformly random outer edge.

We now apply \autoref{thm:point-object-seeded} to $\bar\varphi_\pi$ and $\bar O_\pi$ with some parameter $\delta$ to be fixed later to obtain
$t\le \frac{C_{\mathrm{PO}}}{\delta}$
global polynomial tuples
\[
    \bar Q_i=(Q_{i,A},Q_{i,P},Q_{i,1},\ldots,Q_{i,3m+3},Q_{i,\mathrm{adm}})
    \qquad (i\in[t])
\]
of coordinate-wise degree at most $D$ such that
\begin{equation}\label{eq:point-object-master}
    p_{\mathrm{out}}
    \le
    \delta + C_{\mathrm{err}}\frac{1+tD}{q}
    +
    \sum_{i\in[t]} \alpha_i,
\end{equation}
where
\[
    \alpha_i
    :=
    \Pr_{\Omega\sim{\cal O},\,\tau\sim\Tau,\,x\sim L(\Omega)}
    \bigl[
        \bar Q_i|_{L(\Omega)}\equiv \bar O_\pi(\Omega,\tau)
        \text{ and }
        \bar O_\pi(\Omega,\tau)(x)=\bar\varphi_\pi(x)
    \bigr].
\]
The second conjunct forces $b_\pi(u)=1$, so every $\alpha_i$ counts only admissible left labels.

For each $i\in[t]$, let
$\vec Q_i:=(Q_{i,A},Q_{i,P},Q_{i,1},\ldots,Q_{i,3m+3})$
be the tuple obtained from $\bar Q_i$ by omitting the admissibility coordinate.
We next show that any such tuple contributing non-negligibly would yield a satisfying assignment to $\Phi$, contradicting unsatisfiability.

\begin{definition}\label{def:good-tuple}
A tuple
$\vec Q=(Q_A,Q_P,Q_1,\ldots,Q_{3m+3})$
of coordinate-wise degree-$D$ polynomials on $\F^{3m+3}$ is \emph{good} if the following hold.
\begin{enumerate}
    \item $Q_A$ depends only on the first $m$ coordinates.
    \item For every $ x\in\F^{3m+3}$,
    \[
        Q_P(x)=\sum_{j\in[3m+3]} g_H(x_j)Q_j(x).
    \]
    \item For every $x\in\F^{3m+3}$,
    \[
        Q_P(x)
        =
        \widetilde B_\Phi(x)
        \bigl(Q_A(x)-x_{3m+1}\bigr)
        \bigl(Q_A(\rho(x))-x_{3m+2}\bigr)
        \bigl(Q_A(\rho^2(x))-x_{3m+3}\bigr).
    \]
\end{enumerate}
\end{definition}

\begin{claim}\label{clm:good-implies-sat}
Every good tuple induces a satisfying Boolean assignment to $\Phi$.
\end{claim}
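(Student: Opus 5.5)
The plan is to read off a Boolean assignment from $Q_A$ and then use the three good-tuple conditions to show that $Q_P$ vanishes on $H^{3m+3}$ and hence forces every clause to be satisfied.

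\textbf{Defining the assignment.} By condition~1 of \Cref{def:good-tuple}, $Q_A(x)=\hat A(x_1,\ldots,x_m)$ for some polynomial $\hat A:\F^m\to\F$. For each $h\in H^m$ we set $a(h):=\hat A(h)\in\F$. The obstacle is that a priori $\hat A(h)$ need not lie in $\{0,1\}$.

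\textbf{Vanishing of $Q_P$.} Condition~2 gives $Q_P(x)=\sum_j g_H(x_j)Q_j(x)$. Since $g_H$ vanishes on $H$, every term is zero at any $x\in H^{3m+3}$, so $Q_P$ vanishes on $H^{3m+3}$. We do not need the converse direction of \Cref{lem:polyCharacterization}.

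\textbf{Clause check at clause points.} Fix a clause $x_u^{b_1}\vee x_v^{b_2}\vee x_w^{b_3}$ of $\Phi$ and its point $x=(u,v,w,b_1,b_2,b_3)\in H^{3m+3}$. Recall $\{0,1\}\subseteq H$. The low-degree extension $\widetilde B_\Phi$ agrees with $B_\Phi$ on $H^{3m+3}$, so $\widetilde B_\Phi(x)=1$. By the cyclic convention for $\rho$, the first $m$ coordinates of $x,\rho(x),\rho^2(x)$ are $u,v,w$. Condition~3 then yields
\[
0=Q_P(x)=(\hat A(u)-b_1)(\hat A(v)-b_2)(\hat A(w)-b_3).
\]
Since $\F$ is a field, for each clause some literal index $i$ has $\hat A(\cdot)=b_i$ at the corresponding variable.

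\textbf{Booleanizing.} Define $\sigma(h):=\hat A(h)$ if $\hat A(h)\in\{0,1\}$, and $\sigma(h):=0$ otherwise. In every clause, some variable has $\hat A$ equal to $b_i\in\{0,1\}$, so $\sigma$ takes the value $b_i$ there as well. We then interpret the literal convention $x^{b}$ exactly as in the definition of $B_\Phi$: the factor $(A(x)-x_{3m+1})$ was designed so that $A=b$ makes the clause true. Under that convention the clause is satisfied, and $\sigma$ satisfies every clause of $\Phi$.

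\textbf{Padding.} Padded variables in $H^m$ appear in no clause, so they are unconstrained and any value works for them.

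\textbf{Main obstacle.} The only delicate point is consistency of the literal convention together with the non-Boolean values of $\hat A$. Both are handled by the observation that vanishing of a product in a field pins $\hat A$ to a Boolean value $b_i$ exactly on the witnessing literal.
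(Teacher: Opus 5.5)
Your proposal is correct and follows essentially the same route as the paper. The paper also restricts $Q_A$ to $H^m$ via $Q_A^{\flat}(y)=Q_A(y\circ 0^{2m+3})$, uses property~(2) to get that $Q_P$ vanishes on $H^{3m+3}$, and uses property~(3) at each clause point to force some factor $(Q_A^{\flat}(\cdot)-b_i)$ to vanish; your Booleanization $\sigma$ is just a specific instance of the paper's fixed map $\beta:\F\to\{0,1\}$ with $\beta(0)=0$ and $\beta(1)=1$.
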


\begin{proof}
Because $Q_A$ depends only on the first $m$ coordinates, it induces a well-defined function on $H^m$ via the rule
$Q_A^{\flat}(y):=Q_A(y\circ 0^{2m+3})$ for $y\in H^m$.
Fix once and for all a map $\beta:\F\to\{0,1\}$ with $\beta(0)=0$ and $\beta(1)=1$,
and define the Boolean assignment $a(y):=\beta(Q_A^{\flat}(y))$ for $y\in H^m$.
Let $x=(u,v,w,b_1,b_2,b_3)\in H^{3m+3}$
be any point which indicates a clause of $\Phi$. Then $B_\Phi(x)=1$, hence $\widetilde B_\Phi(x)=1$. By good-property~(2),
$Q_P(x)=\sum_{j\in[3m+3]} g_H(x_j)Q_j(x)$,
and each $g_H(x_j)$ vanishes on $H$, so $Q_P$ vanishes on $H^{3m+3}$. Therefore good-property~(3) gives
\[
    0
    =
    \bigl(Q_A^{\flat}(u)-b_1\bigr)
    \bigl(Q_A^{\flat}(v)-b_2\bigr)
    \bigl(Q_A^{\flat}(w)-b_3\bigr).
\]
At least one factor is zero; suppose $Q_A^{\flat}(u)=b_1$. Since $b_1\in\{0,1\}$,
$a(u)=\beta(Q_A^{\flat}(u))=\beta(b_1)=b_1$,
so the literal $x_u^{b_1}$ is satisfied. The same argument applies if the vanishing factor is the second or third one. Hence every clause of $\Phi$ is satisfied by $a$.
\end{proof}

Since $\Phi$ is unsatisfiable, every tuple $\vec Q_i$ is bad. We now bound the corresponding quantities $\alpha_i$.

\begin{claim}\label{clm:bad-type-1}
If $\vec Q_i$ violates good-property~\textup{(1)}, then
$\alpha_i\le (1+C_{\mathrm{geom}})\frac{D}{q}$.
\end{claim}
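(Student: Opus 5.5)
We need to bound $\alpha_i$ when $Q_{i,A}$ does not depend only on the first $m$ coordinates. The event defining $\alpha_i$ requires $\bar Q_i|_{L(\Omega)}\equiv\bar O_\pi(\Omega,\tau)$ and an accepted (admissible) label at $u=(\Omega,\tau)$, so $b_\pi(u)=1$. Admissibility condition~(2) then gives $\widehat\Pi_A(0)=\widehat\Pi_A(e_3)$, i.e.\ $\Pi_A(z)=\Pi_A(z')$. Under the identity $\Pi_A=Q_{i,A}|_{L(\Omega)}$, this becomes $Q_{i,A}(z)=Q_{i,A}(z')$. The plan is therefore to drop the agreement-with-$\bar\varphi_\pi$ conjunct and bound
\[
\alpha_i\le \Pr_{\Omega\sim{\cal O}}\bigl[Q_{i,A}(z)=Q_{i,A}(z')\bigr],
\]
where $(z,z')$ is the distinguished pair of $\Omega$. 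The seed $\tau$ and point $x$ become irrelevant.

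Next we pass from the conditioned distribution on ${\cal O}$ to the unconditioned one. By the stated geometric fact, the marginal of $(z,z')$ under the uniform distribution on ${\cal O}$ is within total variation $C_{\mathrm{geom}}/q$ of the uniform distribution on pairs sharing their first $m$ coordinates. So it suffices to show $\Pr[Q_{i,A}(z)=Q_{i,A}(z')]\le D/q$ for $z$ uniform in $\F^{3m+3}$ and $z'$ obtained by resampling the last $2m+3$ coordinates uniformly. Write $z=(a,b)$ and $z'=(a,b')$ with $a\in\F^m$ and $b,b'\in\F^{2m+3}$. Define
\[
R(a,b,b'):=Q_{i,A}(a,b)-Q_{i,A}(a,b'),
\]
a polynomial on $\F^{m+2(2m+3)}$ of total degree at most $D$. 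If $Q_{i,A}$ genuinely depends on some coordinate among the last $2m+3$, then $R$ is not the zero polynomial: substituting $b'=$ const leaves $Q_{i,A}(a,b)-Q_{i,A}(a,\mathrm{const})$, which is nonzero as a polynomial in $b$. We need to treat ``depends'' at the level of formal polynomials. Since the degree $D<q$ in the regime $q=\Theta((\log n)^a)$ and $D=8m|H|$, formal and functional dependence coincide. Schwartz--Zippel then gives $\Pr[R=0]\le D/q$. Adding the total variation term yields $\alpha_i\le D/q+C_{\mathrm{geom}}/q\le(1+C_{\mathrm{geom}})D/q$, using $D\ge1$.

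The main care point is the reduction from the $\alpha_i$ event to $Q_{i,A}(z)=Q_{i,A}(z')$. We must check that $\Pi_A=Q_{i,A}|_{L(\Omega)}$, with the chart identification $\psi_\Omega(0)=z$ and $\psi_\Omega(e_3)=z'$, transports condition~(2) exactly. We must also check that $b_\pi(u)=1$ is forced. That follows because the admissibility coordinate of $\bar\varphi_\pi$ is identically $1$, and the $\alpha_i$ event includes agreement with $\bar\varphi_\pi$ at $x$. The rest is a routine Schwartz--Zippel computation.
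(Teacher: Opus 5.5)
Your proposal is correct and follows the paper's proof essentially step for step. Agreement with the constant-$1$ admissibility coordinate of $\bar\varphi_\pi$ forces $b_\pi(u)=1$, so $\widehat\Pi_A(0)=\widehat\Pi_A(e_3)$ and hence $Q_{i,A}(z)=Q_{i,A}(z')$; Schwartz--Zippel applied to $R(s,r,r')=Q_{i,A}(s,r)-Q_{i,A}(s,r')$ under the uniform pair distribution, plus the $C_{\mathrm{geom}}/q$ total-variation correction, then gives the bound. Your remark that $D<q$ makes formal and functional dependence on the last $2m+3$ coordinates coincide fills in a point the paper leaves implicit when it asserts $R\neq 0$.
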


\begin{proof}
Whenever the event inside $\alpha_i$ occurs, the second conjunct gives $b_\pi(u)=1$, so the left label at $u=(\Omega,\tau)$ is admissible.
The first conjunct gives
$\vec Q_i|_{L(\Omega)}\equiv O_\pi^{\mathrm{alg}}(\Omega,\tau)$. Let $O_\pi^{\mathrm{alg}}(\Omega,\tau)_A$ be the part of $O_\pi^{\mathrm{alg}}(\Omega,\tau)$ corresponding to $\Pi_A$.
Because every admissible left label satisfies $\widehat\Pi_A(0)=\widehat\Pi_A(e_3)$ and
$O_\pi^{\mathrm{alg}}(\Omega,\tau)_A(x)=\widehat\Pi_A(\psi_\Omega^{-1}(x))$
for $x\in L(\Omega)$, we have
$O_\pi^{\mathrm{alg}}(\Omega,\tau)_A(z)=O_\pi^{\mathrm{alg}}(\Omega,\tau)_A(z')$.
Therefore the event inside $\alpha_i$ implies $Q_{i,A}(z)=Q_{i,A}(z')$.
Write points of $\F^{3m+3}$ as $(s,r)$ with $s\in\F^m$ and $r\in\F^{2m+3}$, and abbreviate $Q_{i,A}(s,r):=Q_{i,A}(s\circ r)$. Since $\vec Q_i$ violates property~(1), the polynomial
$R(s,r,r'):=Q_{i,A}(s,r)-Q_{i,A}(s,r')$
on $\F^m\times\F^{2m+3}\times\F^{2m+3}$ is nonzero and has degree at most $D$.
By Schwartz--Zippel,
$\Pr_{s,r,r'}[R(s,r,r')=0]\le D/q$.
The actual marginal of $(z,z')$ is within total-variation distance at most $C_{\mathrm{geom}}/q$ of the uniform distribution, by the definition of our objects, on pairs sharing the first $m$ coordinates, namely pairs of the form $((s,r),(s,r'))$. Hence the same bound becomes at most $(1+C_{\mathrm{geom}})D/q$.
\end{proof}

\begin{claim}\label{clm:bad-type-2}
If $\vec Q_i$ violates good-property~\textup{(2)}, then
$\alpha_i\le (1+C_{\mathrm{geom}})\frac{D+|H|}{q}$.
\end{claim}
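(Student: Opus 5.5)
The argument mirrors \Cref{clm:bad-type-1}: we turn the event inside $\alpha_i$ into a single polynomial identity at a nearly uniform random point, and then apply Schwartz--Zippel.

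Whenever the event inside $\alpha_i$ occurs, the augmented coordinate forces $b_\pi(u)=1$. So the left label at $u=(\Omega,\tau)$ is admissible, and in particular satisfies admissibility condition~(3), the vanishing witness identity on the object:
\[
    \widehat\Pi_P(t)=\sum_{j\in[3m+3]} g_H\bigl((\psi_\Omega(t))_j\bigr)\widehat\Pi_j(t)
    \qquad \forall t\in\F^5 .
\]
The first conjunct gives $\vec Q_i|_{L(\Omega)}\equiv O^{\mathrm{alg}}_\pi(\Omega,\tau)$, that is, $\Pi_\star=Q_{i,\star}|_{L(\Omega)}$ for every $\star$. Pulling the identity back through $\psi_\Omega$ (writing $x=\psi_\Omega(t)$) shows that the polynomial
\[
    R(x):=Q_{i,P}(x)-\sum_{j\in[3m+3]} g_H(x_j)\,Q_{i,j}(x)
\]
vanishes identically on $L(\Omega)$. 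In particular $R(z)=0$ at the distinguished point $z=\psi_\Omega(0)$. Any other single point of $L(\Omega)$ whose marginal is close to uniform would serve equally well.

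Since $\vec Q_i$ violates good-property~(2), $R$ is a nonzero polynomial on $\F^{3m+3}$. Each $Q_{i,\star}$ has degree at most $D$ and each $g_H(x_j)$ has degree $|H|$, so $\deg R\le D+|H|$. Schwartz--Zippel then gives
\[
    \Pr_{x\sim\F^{3m+3}}[R(x)=0]\le \frac{D+|H|}{q}.
\]

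It remains to pass from a uniform $x$ to the actual distribution of $z$. By the geometric fact recorded with the definition of objects, the marginal of $z$ under the uniform distribution on ${\cal O}$ (the seed $\tau$ and the point sampled from $L(\Omega)$ do not affect it) is within total-variation distance $C_{\mathrm{geom}}/q$ of uniform on $\F^{3m+3}$. Therefore
\[
    \alpha_i\le \Pr[R(z)=0]\le \frac{D+|H|}{q}+\frac{C_{\mathrm{geom}}}{q}\le (1+C_{\mathrm{geom}})\frac{D+|H|}{q},
\]
where the last step uses $D+|H|\ge 1$.

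No step here is a genuine obstacle. The only points needing care are the degree bookkeeping, since the extra $|H|$ comes from the $g_H$ factors, and the transfer of the polynomial identity from the reference chart $\F^5$ to $L(\Omega)$, which holds because $\psi_\Omega$ is an affine bijection onto $L(\Omega)$.
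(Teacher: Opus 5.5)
Your proof is correct and follows the paper's argument. Admissibility, transported through the chart $\psi_\Omega$, forces the nonzero polynomial $Q_{i,P}-\sum_j g_H(x_j)Q_{i,j}$ of degree at most $D+|H|$ to vanish on $L(\Omega)$, and Schwartz--Zippel at a near-uniform point of $L(\Omega)$ finishes the bound. The only difference is that the paper evaluates at $y$ rather than at $z$, which changes nothing because the stated geometric fact gives the $C_{\mathrm{geom}}/q$ total-variation bound for both marginals.
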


\begin{proof}
Let $\Delta_i(x):=Q_{i,P}(x)-\sum_{j\in[3m+3]} g_H(x_j)Q_{i,j}(x)$.
This is a nonzero polynomial of degree at most $D+|H|$.
Whenever the event inside $\alpha_i$ occurs, the second conjunct gives $b_\pi(u)=1$, so the left label at $u=(\Omega,\tau)$ is admissible, while the first conjunct gives
$\vec Q_i|_{L(\Omega)}\equiv O_\pi^{\mathrm{alg}}(\Omega,\tau)$.
Every admissible left label satisfies the witness identity on all of $L(\Omega)$, so $\Delta_i$ vanishes on $L(\Omega)$ and, in particular, at the point $y\in L(\Omega)$.
The marginal of $y$ is within total-variation distance at most $C_{\mathrm{geom}}/q$ of the uniform distribution on $\F^{3m+3}$, so Schwartz--Zippel yields the claimed bound.
\end{proof}

\begin{claim}\label{clm:bad-type-3}
If $\vec Q_i$ violates good-property~\textup{(3)}, then
$\alpha_i\le (4+C_{\mathrm{geom}})\frac{D}{q}$.
\end{claim}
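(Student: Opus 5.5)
The plan follows the template of \Cref{clm:bad-type-1} and \Cref{clm:bad-type-2}: turn the event inside $\alpha_i$ into the vanishing of a fixed nonzero low-degree polynomial at a point whose marginal is nearly uniform, then apply Schwartz--Zippel. Let
\[
    \Gamma_i(x):=Q_{i,P}(x)-\widetilde B_\Phi(x)\bigl(Q_{i,A}(x)-x_{3m+1}\bigr)\bigl(Q_{i,A}(\rho(x))-x_{3m+2}\bigr)\bigl(Q_{i,A}(\rho^2(x))-x_{3m+3}\bigr).
\]
Since $\vec Q_i$ violates good-property~(3), $\Gamma_i$ is a nonzero polynomial on $\F^{3m+3}$. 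Its degree is at most $D_B+3D+3$. Because $D_B=(3m+3)(|H|-1)\le D$ for $D=8m|H|$, this is at most $4D$ up to an additive constant that is absorbed. This degree count is the source of the constant $4$ in the claimed bound.

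The key step is to show that the event inside $\alpha_i$ forces $\Gamma_i(z)=0$ at the distinguished point $z$ of $\Omega$.
\begin{itemize}
\item As in the previous claims, the second conjunct gives $b_\pi(u)=1$, so $\pi_U(u)$ is admissible. The first conjunct gives $\vec Q_i|_{L(\Omega)}\equiv O^{\mathrm{alg}}_\pi(\Omega,\tau)$.
\item Admissibility condition~4 reads, via the chart $\psi_\Omega$ (with $\psi_\Omega(0)=z$, $\psi_\Omega(e_4)=\rho(z)$, $\psi_\Omega(e_5)=\rho^2(z)$, all in $L(\Omega)$):
\[
    Q_{i,P}(z)=\operatorname{Dec}(\lambda_{\mathrm{in}})\bigl(Q_{i,A}(z)-z_{3m+1}\bigr)\bigl(Q_{i,A}(\rho(z))-z_{3m+2}\bigr)\bigl(Q_{i,A}(\rho^2(z))-z_{3m+3}\bigr).
\]
\item Admissibility condition~1 says $\lambda_{\mathrm{in}}$, together with the true raw-answer string from $B_\Phi$, is an accepted local view of $\mathsf{Eval}_{z,\tau}$. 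By local correctness in \Cref{thm:rm-eval}, $\operatorname{Dec}(\lambda_{\mathrm{in}})=\widetilde B_\Phi(z)$.
\item Combining the two gives exactly $\Gamma_i(z)=0$.
\end{itemize}
This is the main conceptual point: the decoded value is forced to be the true $\widetilde B_\Phi(z)$ even though no right vertex stores it. That holds precisely because the raw-table answers are checked against $B_\Phi$ inside the predicate.

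To finish, bound $\alpha_i\le\Pr_\Omega[\Gamma_i(z)=0]$. Under uniform $z\in\F^{3m+3}$, Schwartz--Zippel gives at most $\deg(\Gamma_i)/q\le 4D/q$. The marginal of $z$ under the object distribution is within total-variation distance $C_{\mathrm{geom}}/q\le C_{\mathrm{geom}}D/q$ of uniform. Together these give $\alpha_i\le(4+C_{\mathrm{geom}})D/q$.

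The only delicate step I expect is the degree bookkeeping: checking that $D_B+3D+3$ indeed fits under $4D$, or otherwise absorbing the slack into the constant. That depends on the definition $D=8m|H|$ and on $\deg Q_{i,A}\le D$ composed with the linear maps $\rho$ and $\rho^2$, which do not increase degree.
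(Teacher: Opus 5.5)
Your proof is correct and matches the paper's argument step for step. You define $\Gamma_i$, use admissibility (the clause identity at $z$ together with local correctness of $\mathsf{Eval}_{z,\tau}$) to force $\Gamma_i(z)=0$, and finish with Schwartz--Zippel plus the $C_{\mathrm{geom}}/q$ total-variation bound on the marginal of $z$. The degree bookkeeping is cleaner than you feared: each factor $Q_{i,A}(\rho^j(x))-x_{3m+1+j}$ has degree at most $D$ and $D_B\le D$, so $\deg\Gamma_i\le 4D$ with no additive slack.
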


\begin{proof}
Define
\begin{align*}
    \Gamma_i(x)
    :=
    Q_{i,P}(x)
    &-
    \widetilde B_\Phi(x)
    \bigl(Q_{i,A}(x)-x_{3m+1}\bigr)
        \bigl(Q_{i,A}(\rho(x))-x_{3m+2}\bigr)
        \bigl(Q_{i,A}(\rho^2(x))-x_{3m+3}\bigr).
\end{align*}
Since $\vec Q_i$ violates property~(3), the polynomial $\Gamma_i$ is nonzero and has degree at most $4D$.
Whenever the event inside $\alpha_i$ occurs, the second conjunct gives $b_\pi(u)=1$, so the left label at $u=(\Omega,\tau)$ is admissible, while the first conjunct gives
$\vec Q_i|_{L(\Omega)}\equiv O_\pi^{\mathrm{alg}}(\Omega,\tau)$.
The admissibility of the left label at $(\Omega,\tau)$ gives the local clause identity at the distinguished point $z$.
By \autoref{thm:rm-eval}, the accepted local view, and hence its auxiliary-answer string, decodes exactly $\widetilde B_\Phi(z)$.
Hence $\Gamma_i(z)=0$.
The marginal of $z$ is within total-variation distance at most $C_{\mathrm{geom}}/q$ of the uniform distribution on $\F^{3m+3}$, so Schwartz--Zippel gives the stated bound.
\end{proof}

Combining \autoref{clm:bad-type-1}, \autoref{clm:bad-type-2}, and \autoref{clm:bad-type-3}, we obtain for every $i\in[t]$,
\[
    \alpha_i
    \le
    \frac{(6+3C_{\mathrm{geom}})D + (1+C_{\mathrm{geom}})|H|}{q}
    \le C_{\mathrm{bad}}\frac{D+|H|}{q},
\]
where
$C_{\mathrm{bad}}:=6+3C_{\mathrm{geom}}$.
Substituting this into \eqref{eq:point-object-master} and then using \eqref{eq:p-vs-pout}, we conclude that
\[
    p
    \le
    \delta + q^{-4} + C_{\mathrm{err}}\frac{1+tD}{q} + C_{\mathrm{bad}}\,t\frac{D+|H|}{q}.
\]
Now choose
$\delta:=\varepsilon/10$.
Since $\varepsilon\ge \log^{-c}n$, $D=8m|H|$, $q=(\log n)^a$ with $a=4c+7$, and all hidden constants above are absolute, the threshold $n_0(c)$ in the theorem statement can be chosen so that every $n\ge n_0(c)$ satisfies
$c_{\mathrm{PO}}\,m\left(\frac{mD}{q}\right)^{1/4}\le \delta$
and
\[
    q^{-4}
    + \frac{C_{\mathrm{err}}}{q}
    + \frac{10C_{\mathrm{PO}}\bigl(C_{\mathrm{err}}D + C_{\mathrm{bad}}(D+|H|)\bigr)}{\varepsilon q}
    \le \frac{9\varepsilon}{10}.
\]
Hence the point-object theorem applies, the last three terms above total at most $9\varepsilon/10$, and therefore $p\le \varepsilon$.
This proves soundness.

\paragraph{Swap locality and parameter bounds.}
We now verify the remaining claims of \autoref{thm:left-predicate-label-cover-under-clause-swaps}.

If $\Phi$ and $\Phi'$ differ by one clause swap, then the raw tables $B_\Phi$ and $B_{\Phi'}$ differ at exactly two points of $H^{3m+3}$.
The graph $(U,V,E)$, the alphabets $\Sigma_U,\Sigma_V$, and the projections $f_e$ depend only on $(m,q,H)$ and on the fixed object system ${\cal O}$ together with the associated affine sets $L(\Omega)$ and distinguished points, not on the particular formula.
The only formula-dependent part of the construction is the left predicate at a vertex $u=(\Omega,\tau)$, namely the requirement that the auxiliary-answer string of $\mathsf{Eval}_{z,\tau}$, together with the corresponding raw-answer string from $B_\Phi$, form an accepted local view.

By \autoref{thm:rm-eval}, each changed raw table point participates in at most $q^{c_{\mathrm{in}}}$ evaluator tests $\mathsf{Eval}_{z,\tau}$.
Fix one affected pair $(z,\tau)$.
The number of objects with distinguished point $z$ is at most
\[
    q^{2m+3}\cdot q^{3m+3}\cdot q^{3m+3}=q^{8m+9},
\]
since one first chooses $z'$ with the same first $m$ coordinates as $z$ and then chooses $y,y'$ arbitrarily subject to affine independence.
Therefore at most
\[
    2\cdot q^{c_{\mathrm{in}}}\cdot q^{8m+9}
    =
    2q^{8m+c_{\mathrm{in}}+9}
\]
left predicates can change under one clause swap.
This is the claimed transformation-sensitivity bound after choosing $c_{\mathrm{lc}}$ so that $c_{\mathrm{lc}}\ge c_{\mathrm{in}}+9$.

For the size parameters,
\[
    |V|=|\F|^{3m+3}+|V_{\mathrm{in}}|=q^{3m+O(1)}
\]
by construction, and
\[
    |U|=|{\cal O}|\cdot|\Tau|.
\]
We already observed that every left degree is exactly
\[
    d_u=Kq^5+Q_{\mathrm{in}}=q^{c_{\mathrm{in}}+9}+q^{c_{\mathrm{in}}}.
\]
Because $n\ge n_0(c)$, we may also assume $q\ge 4$, so each crude lower bound of the form $q^r-q^s\ge q^{r-1}$ used below is valid.

To lower-bound $|{\cal O}|$, consider the condition that
$z,\rho(z),\rho^2(z)$
are affinely independent.
This is the nonvanishing of a certain $2\times 2$ minor of the matrix with columns
$\rho(z)-z$
and
$\rho^2(z)-z$;
for instance, using the first coordinates of the first two variable blocks gives
\[
    \det
    \begin{pmatrix}
        z_{m+1}-z_1 & z_{2m+1}-z_1\\
        z_{2m+1}-z_{m+1} & z_1-z_{m+1}
    \end{pmatrix},
\]
which is not identically zero.
By Schwartz--Zippel, there are at least
\[
    q^{3m+2}
\]
points $z\in\F^{3m+3}$ for which
$z,\rho(z),\rho^2(z)$
are affinely independent.
Fix such a point $z$.

The number $N_z$ of objects with distinguished point $z$ therefore satisfies
\[
    (q^{2m+3}-q^2)(q^{3m+3}-q^3)(q^{3m+3}-q^4)
    \le N_z\le q^{8m+9},
\]
because one chooses $z'$ with the same first $m$ coordinates as $z$ outside the affine plane $\operatorname{aff}(z,\rho(z),\rho^2(z))$, then chooses $y$ outside the resulting $3$-flat, and finally chooses $y'$ outside the resulting $4$-flat.
In particular,
\[
    N_z\ge q^{8m+6}.
\]
Summing this lower bound over the at least $q^{3m+2}$ admissible choices of $z$ gives
\[
    |{\cal O}|
    \ge
    q^{3m+2}\cdot q^{8m+6}
    =
    q^{11m+8}.
\]
Since $|\Tau|\ge 1$, it follows that
\[
    |U|=|{\cal O}|\cdot|\Tau|\ge q^{11m+8}.
\]
The crude counting bound $|{\cal O}|\le q^{11m+O(1)}$ together with $|\Tau|=q^{O(1)}$ also gives
\[
    |U|=|{\cal O}|\cdot|\Tau|\le q^{11m+O(1)}.
\]
This lower bound on $|U|$ will be used again later to convert counts of affected left vertices into outer-edge mass bounds.

For an outer right vertex $v_x^{\mathrm{out}}$, fix $y':=x$.
The condition that $x,z,\rho(z),\rho^2(z)$ are affinely independent is the nonvanishing of a certain degree-$3$ polynomial in $z$; for instance, one $3\times 3$ minor of the matrix with columns $\rho(z)-z$, $\rho^2(z)-z$, and $x-z$ is
\[
    (x_{3m+1}-z_{3m+1})
    \det
    \begin{pmatrix}
        z_{m+1}-z_1 & z_{2m+1}-z_1\\
        z_{2m+1}-z_{m+1} & z_1-z_{m+1}
    \end{pmatrix},
\]
which is not identically zero.
By Schwartz--Zippel there are at least $q^{3m+2}$ valid choices of $z$.
For each such $z$, choose $z'$ with the same first $m$ coordinates as $z$ outside the $3$-flat $\operatorname{aff}(x,z,\rho(z),\rho^2(z))$, and then choose $y$ outside the resulting $4$-flat.
This yields at least
\[
    q^{3m+2}\cdot q^{2m+2}\cdot q^{3m+2}=q^{8m+6}
\]
objects with $x\in L(\Omega)$.
For the upper bound, choose $z\in\F^{3m+3}$ in $q^{3m+3}$ ways, choose $z'$ with the same first $m$ coordinates as $z$ in at most $q^{2m+3}$ ways, and choose $y\in\F^{3m+3}$ in $q^{3m+3}$ ways. Once $x,z,z',y$ are fixed, the condition $x\in L(\Omega)$ forces $y'$ to lie in the $5$-flat $\operatorname{aff}(x,z,y,z',\rho(z),\rho^2(z))$, so there are at most $q^5$ choices for $y'$. Thus at most $q^{8m+14}$ objects $\Omega$ satisfy $x\in L(\Omega)$. Multiplying by the seed factor $|\Tau|=q^{O(1)}$ and by the parallel-edge factor $K=q^{c_{\mathrm{in}}+4}$ shows that
\[
    q^{8m+6}\le d\bigl(v_x^{\mathrm{out}}\bigr)\le q^{8m+O(1)}.
\]

For an inner right vertex $v_w^{\mathrm{in}}$, our pruning of $V_{\mathrm{in}}$ in the definition of the right vertices guarantees that $w$ is queried by at least one pair $(z,\tau)$ arising from some left vertex.
Fix such a pair.
Every object with distinguished point $z$ and the same seed $\tau$ contributes at least one edge to $v_w^{\mathrm{in}}$, so the lower bound on $N_z$ gives
\[
    q^{8m+6}\le d\bigl(v_w^{\mathrm{in}}\bigr).
\]
The bounded-locality statement in \autoref{thm:rm-eval} still gives at most $q^{c_{\mathrm{in}}}$ such pairs $(z,\tau)$. For each such pair and each object with distinguished point $z$, the slot model contributes at most $Q_{\mathrm{in}}=q^{c_{\mathrm{in}}}$ parallel edges to $v_w^{\mathrm{in}}$, and each pair extends to at most $q^{8m+9}$ objects.
Hence
\[
    d\bigl(v_w^{\mathrm{in}}\bigr)\le q^{8m+2c_{\mathrm{in}}+9}.
\]
After enlarging $c_{\mathrm{lc}}$ to absorb the bound on $|V|$, the lower and upper bounds on $|U|$, the outer-right upper bound, the inner-right upper bound, and the coefficient count in $|\Sigma_U|$, all of the parameter bounds in the theorem follow.
Finally,
\[
    |\Sigma_V|=q^{3m+5},
    \qquad
    |\Sigma_U|\le q^{q^{c_{\mathrm{lc}}}}.
\]
This completes the proof of \autoref{thm:left-predicate-label-cover-under-clause-swaps}.

\paragraph{Remark.}
This construction is used later because it combines low soundness, explicit control of how the instance changes under clause swaps, and very large minimum right degree.
A decodable lift at this stage would collapse the right degree to the decoder proof degree and would forfeit that gain.

\subsection{Recovering Satisfying Assignments under Local Perturbations}

This subsection records the recovery theorem with perturbation guarantees that we need beyond \Cref{thm:left-predicate-label-cover-under-clause-swaps}.
Continue in the setting of \Cref{thm:left-predicate-label-cover-under-clause-swaps}. For a satisfiable CNF formula $\Phi$ on $n$ variables, write $I_\Phi:=\mathcal T_{\mathrm{Robust}}(\Phi)$ for the associated left-predicate label cover instance.

\begin{theorem}[Recovery from large-score tuples]\label{prop:robust-heavy-hit}
There exists an absolute constant $C_{\mathrm{hit}}\ge 1$ such that the following holds.
Fix a parameter $\Delta\in(0,1]$.
Assume that
\[
    \Delta
    \ge
    C_{\mathrm{hit}}\max\!\left\{
        m\left(\frac{mD}{q}\right)^{1/4},
        \sqrt{\frac{D+|H|}{q}},
        q^{-2}
    \right\}.
\]
Then there is an explicit randomized map
\[
    \mathcal H_{\mathrm{lc},\Phi,\Delta}:
    \{\text{labelings of }I_\Phi\}
    \to
    \{0,1\}^{\mathrm{Var}(\Phi)}
\]
with the following properties.
\begin{enumerate}
    \item If $\pi$ is any labeling of $I_\Phi$ with
    \[
        \val_{I_\Phi}(\pi)\ge \Delta,
    \]
    then
    \[
        \Pr\bigl[\mathcal H_{\mathrm{lc},\Phi,\Delta}(\pi)
        \text{ satisfies }\Phi\bigr]
        \ge
        \Delta^{C_{\mathrm{hit}}}.
    \]
    \item If two labelings of $I_\Phi$ differ in one left or right coordinate, then under the natural coupling their outputs differ in expected Hamming distance at most
    \[
        q^{C_{\mathrm{hit}}}.
    \]
    \item If $\Phi$ and $\Phi'$ differ by one clause swap, then for every common-domain labeling $\pi$ of $I_\Phi$ and $I_{\Phi'}$,
    \[
        \EMD\bigl(
            \mathcal H_{\mathrm{lc},\Phi,\Delta}(\pi),
            \mathcal H_{\mathrm{lc},\Phi',\Delta}(\pi)
        \bigr)
        \le
        q^{C_{\mathrm{hit}}}.
    \]
\end{enumerate}
\end{theorem}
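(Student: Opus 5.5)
We will build $\mathcal H_{\mathrm{lc},\Phi,\Delta}$ out of the soundness argument for \Cref{thm:left-predicate-label-cover-under-clause-swaps}, together with the ``candidate list plus random threshold'' selection described in the overview. The key point is that the list guaranteed by \Cref{thm:point-object-seeded} is only existential and may depend globally on $\pi$. So the map will not try to find that list. It will \emph{sample} candidates locally:
\begin{enumerate}
\item Draw a random left vertex $u=(\Omega,\tau)$.
\item Read its label and interpolate from it a global degree-$D$ tuple. Concretely, decode by plurality over random objects through random points: for each point $x$, pick a random object $\Omega'\ni x$ and use its local polynomial, or equivalently take the Reed--Muller interpolation of the outer right labels along random lines through the points of $L(\Omega)$.
\item Compute the \emph{score} of the candidate $\vec Q$: an empirical estimate, over $\mathrm{poly}(1/\Delta)\log q$ random outer edges, of the fraction of objects on which $\vec Q|_{L(\Omega)}$ equals the admissible label and agrees with $\bar\varphi_\pi$.
\end{enumerate}
We repeat this $M=\Delta^{-O(1)}$ times, draw a random threshold $\theta\in[\Delta^{C}/2,\Delta^{C}]$, and output the Boolean assignment $a(y)=\beta(Q_A(y\circ 0^{2m+3}))$ of the first candidate whose score exceeds $\theta$. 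If no candidate clears the threshold we output $0$.

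\textbf{Step 1 (item 1).} From $\val(\pi)\ge\Delta$ and \eqref{eq:p-vs-pout}, \eqref{eq:point-object-master} with $\delta=\Delta/10$, some $i$ has $\alpha_i\ge\Delta^2/C$; the hypothesis on $\Delta$ absorbs the error terms. A random left vertex lies in the event defining $\alpha_i$ with probability $\alpha_i$, and then its local label equals $\vec Q_i|_{L(\Omega)}$. We must show that the sampled candidate then equals $\vec Q_i$ with constant probability, i.e.\ that a local label agreeing with a heavy list element extends uniquely to it. Two facts give this: distinct degree-$D$ tuples agree on a random $5$-flat with probability at most $D/q$, and the plurality/line decoding of $\varphi_\pi$ restricted to the region explained by $\vec Q_i$ recovers $\vec Q_i$. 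We then need the converse, that high-score candidates are good. Running Claims~\ref{clm:bad-type-1}--\ref{clm:bad-type-3} with $\widetilde B_\Phi$ available (here $\Phi$ is satisfiable, so the argument is no longer used to derive a contradiction) shows that any tuple with score $\gg (D+|H|)/q$ is good. Claim~\ref{clm:good-implies-sat} then says the output satisfies $\Phi$. The condition $\Delta\gtrsim\sqrt{(D+|H|)/q}$ is exactly what makes these two scales separate. Combining, we get success probability at least $\Delta^{C_{\mathrm{hit}}}$.

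\textbf{Step 2 (items 2 and 3).} Under the natural coupling we share all randomness: sampled vertices, points, lines, score samples, and $\theta$. Changing one label of $\pi$ alters a given candidate or score sample only if that coordinate is touched. Each candidate reads $q^{O(1)}$ positions per evaluated point. Since we only need $a$ on $H^m$, each output bit costs $q^{O(1)}$ reads, and the total expected change is bounded by $\mathrm{poly}(q)$ reads divided by the domain sizes, times $n$ outputs. The point to check is that $|V|,|U|\ge q^{3m}$ dominates $n=|H|^m$ up to $q^{O(1)}$. The thresholding loses a factor $M/\Delta^{C}$, which is $q^{O(1)}$ because $\Delta\ge q^{-2}/C_{\mathrm{hit}}$. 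For item 3, the map depends on $\Phi$ only through the admissibility bit $b_\pi(u)$ used in scoring, and one clause swap changes at most $2q^{8m+c_{\mathrm{lc}}}$ predicates, a $q^{-3m-O(1)}$ fraction of $U$. The same union bound therefore applies.

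\textbf{Main obstacle.} The main obstacle is Step 1's local list decoding. We must show that a candidate extracted from one local view, through a procedure reading only $q^{O(1)}$ positions per output point, coincides with the heavy global $\vec Q_i$ on nearly all of $H^m$, rather than merely on a random flat. The first attempt will be to define the candidate by self-correction along random lines through $L(\Omega)$: for each target point $x$, take the line through $x$ and a random point of $L(\Omega)$, decode the univariate restriction of $\varphi_\pi$ using its agreement with the known value on $L(\Omega)$, and bound the error via the sampling properties of lines with the $\Delta$-dependent agreement.
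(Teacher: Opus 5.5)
There is a genuine gap at the step you flag as the main obstacle, and it is structural, not merely unproven. Item~1 requires the output to satisfy \emph{every} clause. The only certification you have is the bad-type bounds (score $\gg (D+|H|)/q$ implies good) followed by \Cref{clm:good-implies-sat}, and both apply only to global degree-$D$ \emph{polynomial} tuples. Your candidates come out of a local self-corrector, so they are just functions.

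A high score for such a function shows only that it agrees with some heavy polynomial tuple on all but a tiny fraction of $\F^{3m+3}$. It says nothing about the points $y\circ 0^{2m+3}$, $y\in H^m$, that your output actually reads, and these have density at most $(|H|/q)^m=n^{-\Omega(1)}$. So a candidate can clear the threshold while being wrong on a few points of $H^m$, and one wrong bit can falsify a clause.

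Suppose you instead force each candidate to be an exact polynomial, by global interpolation or unique decoding of the self-corrected oracle. Then Step~2 collapses. A single changed label can change the decoded polynomial, its agreement with charts on whole $5$-flats, and hence the selected index, all at once. The bound ``each output bit costs $q^{O(1)}$ reads'' then no longer controls the expected Hamming change. Item~1 needs exactness and items~2--3 rely on read-locality, and your plan does not reconcile the two. Note also that a chart on one $5$-flat does not extend uniquely among all degree-$D$ tuples; uniqueness holds only among heavy tuples.

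The missing idea is that no local decoding is needed. $\mathcal H_{\mathrm{lc},\Phi,\Delta}$ may depend on the entire labeling as long as the dependence is Lipschitz, so your concern that the point-object list is existential and global is a non-issue. The paper assigns the exact score $s_{\Phi,\pi}(\vec Q)$ to \emph{every} degree-$D$ tuple and proves three facts.
\begin{enumerate}
\item Bad tuples have score at most $C_{\mathrm{light}}(D+|H|)/q$ (\Cref{lem:bad-tuples-light}).
\item The score events of two distinct tuples overlap in mass $O(D/q)$, so at most $6/\theta$ tuples have score $\ge\theta$ (\Cref{lem:pairwise-chart-overlap,lem:few-heavy-tuples}).
\item Every score moves by at most $q^{-3m+O(1)}$ under one label change or one clause swap, since each such change touches only that fraction of outer edges (\Cref{lem:score-perturbation}).
\end{enumerate}
The map samples $\theta\in[\theta_-,2\theta_-]$ with $\theta_-=\Delta^2/(200C_{\mathrm{PO}})$, together with one random outer edge. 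It outputs the assignment induced by the lexicographically first tuple of score $\ge\theta$ whose restriction matches the sampled chart.

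Item~1 holds because the point-object theorem yields a tuple of score $\ge 8\theta_-$, while every tuple above $\theta_-$ is good. For items~2--3, couple the two runs by sharing $\theta$ and the sampled edge. The output then changes only if the sampled edge is touched or one of the $O(1/\theta)$ relevant scores crosses $\theta$. This gives expected distance $q^m\cdot O(q^{-3m+O(1)}\theta^{-2})\le q^{O(1)}$, with no locality of reads required.
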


The idea is to score global degree-$D$ tuples by their outer-edge agreement probability, note that only a few tuples can have large agreement, prove that these scores are stable under local perturbations, and then select among the surviving tuples using a random threshold. This yields the explicit randomized map from \Cref{prop:robust-heavy-hit}.
For the proof, let $\pi$ denote a labeling of $I_\Phi$.
Write
\[
    \varphi_\pi:\F^{3m+3}\to \F^{3m+5}
\]
for the outer right labeling induced by $\pi$, and for $u=(\Omega,\tau)$ write
\[
    b_\pi(u):=P_u^\Phi(\pi_U(u))\in\{0,1\},
    \qquad
    O^{\mathrm{alg}}_\pi(\Omega,\tau)
\]
for the admissibility bit and the algebraic chart determined by the left label at $u$, exactly as in the soundness proof.
We also use the augmented objects
\[
    \bar\varphi_\pi(x):=(\varphi_\pi(x),1),
    \qquad
    \bar O_\pi(\Omega,\tau):=(O^{\mathrm{alg}}_\pi(\Omega,\tau),b_\pi(\Omega,\tau)),
\]
which are the direct analogues of $\bar\varphi_\pi$ and $\bar O_\pi$ from the soundness argument.
For every coordinatewise degree-$D$ tuple
\[
    \vec Q=(Q_A,Q_P,Q_1,\ldots,Q_{3m+3})
\]
define its outer-edge agreement probability against $(\Phi,\pi)$, which we will call its \emph{score}, by
\[
    s_{\Phi,\pi}(\vec Q)
    :=
    \Pr_{\Omega\sim{\cal O},\,\tau\sim\Tau,\,x\sim L(\Omega)}
    \bigl[
        b_\pi(\Omega,\tau)=1,
        \ \vec Q|_{L(\Omega)}\equiv O^{\mathrm{alg}}_\pi(\Omega,\tau),
        \ \vec Q(x)=\varphi_\pi(x)
    \bigr].
\]
Thus $s_{\Phi,\pi}(\vec Q)$ is the probability that a uniformly random outer edge is simultaneously admissible, satisfied, and explained by the global tuple $\vec Q$.

\begin{lemma}[Bad tuples have low outer-edge agreement]\label{lem:bad-tuples-light}
There exists an absolute constant $C_{\mathrm{light}}\ge 1$ such that for every labeling $\pi$ of $I_\Phi$ and every coordinatewise degree-$D$ tuple $\vec Q$, if $\vec Q$ is not good, then
\[
    s_{\Phi,\pi}(\vec Q)
    \le
    C_{\mathrm{light}}\frac{D+|H|}{q}.
\]
\end{lemma}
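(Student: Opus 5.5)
The plan is to reuse, essentially verbatim, the three Schwartz--Zippel arguments from \Cref{clm:bad-type-1}, \Cref{clm:bad-type-2} and \Cref{clm:bad-type-3}. The point is that the score $s_{\Phi,\pi}(\vec Q)$ is dominated by a simpler probability that no longer involves the sampled point $x$ at all. Specifically, drop the third conjunct $\vec Q(x)=\varphi_\pi(x)$ and bound
\[
    s_{\Phi,\pi}(\vec Q)\le \Pr_{\Omega\sim{\cal O},\,\tau\sim\Tau}\bigl[b_\pi(\Omega,\tau)=1 \text{ and } \vec Q|_{L(\Omega)}\equiv O^{\mathrm{alg}}_\pi(\Omega,\tau)\bigr].
\]
This is exactly the event used inside each of the three claims; there the second conjunct was used only to force admissibility, and here admissibility is given directly by $b_\pi=1$. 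Since $\vec Q$ is not good, it violates at least one of the three properties of \Cref{def:good-tuple}, and I split into the corresponding three cases.

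In each case, admissibility of the left label at $u=(\Omega,\tau)$, together with $\vec Q|_{L(\Omega)}\equiv O^{\mathrm{alg}}_\pi(\Omega,\tau)$, forces a fixed nonzero polynomial built from $\vec Q$ to vanish at a specific point of the object.
\begin{itemize}
\item[(1)] If property~(1) fails, admissibility condition~(2) gives $Q_A(z)=Q_A(z')$. So $R(s,r,r')=Q_A(s,r)-Q_A(s,r')$, a nonzero polynomial of degree at most $D$, vanishes at $(z,z')$.
\item[(2)] If property~(2) fails, admissibility condition~(3) holds on all of $L(\Omega)$, so $\Delta(x):=Q_P(x)-\sum_j g_H(x_j)Q_j(x)$ vanishes at $y$. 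This polynomial has degree at most $D+|H|$.
\item[(3)] If property~(3) fails, admissibility conditions~(1) and~(4) apply at the distinguished point. Local correctness in \Cref{thm:rm-eval} gives $\operatorname{Dec}(\lambda_{\mathrm{in}})=\widetilde B_\Phi(z)$, so $\Gamma(z)=0$ for the nonzero polynomial $\Gamma$ of degree at most $4D$.
\end{itemize}
One small point needs checking: the chart values $\widehat\Pi_A(e_4),\widehat\Pi_A(e_5)$ must equal $Q_A(\rho(z)),Q_A(\rho^2(z))$. This holds because $\psi_\Omega(e_4)=\rho(z)$ and $\psi_\Omega(e_5)=\rho^2(z)$.

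Each of these probabilities is then bounded by Schwartz--Zippel under the uniform distribution, plus a total-variation correction $C_{\mathrm{geom}}/q$ for the true marginals of $z$, $y$, or $(z,z')$ under the uniform distribution on ${\cal O}$. This correction is the geometric fact already recorded after the definition of objects. The seed $\tau$ is independent of $\Omega$ and plays no role. The three cases give bounds of $(1+C_{\mathrm{geom}})D/q$, $(1+C_{\mathrm{geom}})(D+|H|)/q$ and $(4+C_{\mathrm{geom}})D/q$ respectively, so $C_{\mathrm{light}}:=C_{\mathrm{bad}}=6+3C_{\mathrm{geom}}$ suffices.

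The only step needing real care is the uniformity over all tuples $\vec Q$ and all labelings $\pi$. The earlier claims were stated for the list $\vec Q_i$ produced by \Cref{thm:point-object-seeded}, but none of their arguments used any property of that list beyond coordinatewise degree at most $D$. The arguments also never used unsatisfiability of $\Phi$: that was only needed to conclude that every listed tuple is bad, whereas here badness is the hypothesis. So the bound holds pointwise for every non-good $\vec Q$, which is what the selection procedure later requires.
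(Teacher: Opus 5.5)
Your proposal is correct and is essentially the paper's proof. The paper likewise notes that the score event (admissibility plus chart agreement; the extra pointwise conjunct only shrinks it) is all the three bad-type claims from the soundness proof ever used. It then reruns those three Schwartz--Zippel arguments with the $C_{\mathrm{geom}}/q$ total-variation correction to obtain $C_{\mathrm{light}}=6+3C_{\mathrm{geom}}$. Your extra checks — the chart identities $\psi_\Omega(e_4)=\rho(z)$ and $\psi_\Omega(e_5)=\rho^2(z)$, and the remark that neither the point-object list nor unsatisfiability of $\Phi$ is needed — spell out what the paper's one-line proof leaves implicit.
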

\begin{proof}
This is the combination of the three bad-type bounds from the soundness proof of \Cref{thm:left-predicate-label-cover-under-clause-swaps}; see \Cref{clm:bad-type-1,clm:bad-type-2,clm:bad-type-3}.
Each proof only uses the event being bounded, so the same arguments apply here with that event replaced by the score event defining $s_{\Phi,\pi}(\vec Q)$.
The same computation gives
\[
    s_{\Phi,\pi}(\vec Q)
    \le
    \frac{(6+3C_{\mathrm{geom}})D + (1+C_{\mathrm{geom}})|H|}{q}
    \le
    C_{\mathrm{light}}\frac{D+|H|}{q}
\]
once we take
\[
    C_{\mathrm{light}}:=6+3C_{\mathrm{geom}}.
\]
This recovers the claimed bound.
\end{proof}

\begin{lemma}[Pairwise outer-chart overlap]\label{lem:pairwise-chart-overlap}
There exists an absolute constant $C_{\mathrm{ov}}\ge 1$ such that for every labeling $\pi$ of $I_\Phi$ and every two distinct coordinatewise degree-$D$ tuples $\vec Q\neq \vec Q'$, we have
\[
    \Pr_{\Omega\sim{\cal O},\,\tau\sim\Tau,\,x\sim L(\Omega)}
    \bigl[
        b_\pi(\Omega,\tau)=1,
        \ \vec Q|_{L(\Omega)}\equiv O^{\mathrm{alg}}_\pi(\Omega,\tau),
        \ \vec Q'|_{L(\Omega)}\equiv O^{\mathrm{alg}}_\pi(\Omega,\tau)
    \bigr]
    \le
    C_{\mathrm{ov}}\frac{D}{q}.
\]
\end{lemma}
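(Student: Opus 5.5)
The plan is to drop the labeling-dependent conditions and reduce to a Schwartz--Zippel bound on a single random affine $5$-flat. If both $\vec Q|_{L(\Omega)}\equiv O^{\mathrm{alg}}_\pi(\Omega,\tau)$ and $\vec Q'|_{L(\Omega)}\equiv O^{\mathrm{alg}}_\pi(\Omega,\tau)$ hold, then $\vec Q|_{L(\Omega)}\equiv \vec Q'|_{L(\Omega)}$. Dropping also $b_\pi=1$ and the seed $\tau$, the probability in question is at most
\[
\Pr_{\Omega\sim\mathcal O}\bigl[(\vec Q-\vec Q')|_{L(\Omega)}\equiv 0\bigr].
\]
This bound no longer depends on $\pi$ or $\Phi$, which is why it is uniform over all labelings.

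Since $\vec Q\neq\vec Q'$, some coordinate $R:=Q_\star-Q'_\star$ is a nonzero polynomial on $\F^{3m+3}$ of degree at most $D$. Vanishing of $R$ on all of $L(\Omega)$ implies in particular $R(y)=0$, where $y$ is the second entry of $\Omega=(y,y',z,z')$, and $y\in L(\Omega)$ by definition. As recorded after the definition of objects, the marginal of $y$ under the uniform distribution on $\mathcal O$ is within total-variation distance $C_{\mathrm{geom}}/q$ of uniform on $\F^{3m+3}$. Schwartz--Zippel gives $\Pr_{y}[R(y)=0]\le D/q$ for uniform $y$. Adding the total-variation error then gives the bound $(D+C_{\mathrm{geom}})/q\le (1+C_{\mathrm{geom}})D/q$, using $D\ge1$. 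So one may take $C_{\mathrm{ov}}:=1+C_{\mathrm{geom}}$, mirroring the structure of \Cref{clm:bad-type-2}.

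The argument is mostly routine. The main care point is justifying that the TV-closeness of the $y$-marginal survives the conditioning built into the object distribution. That conditioning is the affine-independence constraint, which was already asserted for the marginals of $z$, $y$, and $(z,z')$, so the geometric fact can be reused directly. If one preferred, one could instead evaluate at the distinguished point $z$; its marginal is also near-uniform, so the same bound follows.
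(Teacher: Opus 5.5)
Your proposal is correct and follows the paper's proof essentially step for step. Both pick a coordinate where $\vec Q$ and $\vec Q'$ differ and observe that the nonzero difference $R$ must vanish on $L(\Omega)$, hence at the point $y$. Both then combine Schwartz--Zippel with the $C_{\mathrm{geom}}/q$ total-variation bound on the $y$-marginal to get $C_{\mathrm{ov}}=1+C_{\mathrm{geom}}$. The only difference is that you state explicitly that, once the two conditions are combined, the event no longer depends on $\pi$, $\tau$, or $x$; the paper leaves this implicit.
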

\begin{proof}
Choose a coordinate on which the degree-$D$ polynomials for $\vec Q$ and $\vec Q'$ on this coordinate differ, and let $R$ be the nonzero difference polynomial on $\F^{3m+3}$.
If the event whose probability we are bounding in the lemma statement occurs, then $R$ vanishes on all of $L(\Omega)$, hence in particular at the point $y\in L(\Omega)$ from the object definition.
The marginal of $y$ is within total-variation distance $C_{\mathrm{geom}}/q$ of the uniform distribution on $\F^{3m+3}$, so Schwartz--Zippel gives
\[
    \Pr[R(y)=0]
    \le
    (1+C_{\mathrm{geom}})\frac{D}{q}.
\]
This proves the claim with $C_{\mathrm{ov}}:=1+C_{\mathrm{geom}}$.
\end{proof}

\begin{lemma}[Only $O(1/\theta)$ tuples can have large outer-edge agreement]\label{lem:few-heavy-tuples}
There exists an absolute constant $C_{\mathrm{pack}}\ge 1$ such that the following holds.
Fix $\theta\in(0,1]$ with
\[
    \theta^2\ge 20 C_{\mathrm{ov}}\frac{D}{q}.
\]
Then, for every labeling $\pi$ of $I_\Phi$, the number of tuples $\vec Q$ with
\[
    s_{\Phi,\pi}(\vec Q)\ge \theta
\]
is at most
\[
    \frac{C_{\mathrm{pack}}}{\theta}.
\]
\end{lemma}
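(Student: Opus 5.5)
The plan is a standard inclusion--exclusion packing argument (second Bonferroni inequality), with \Cref{lem:pairwise-chart-overlap} as the only nontrivial input. For a coordinatewise degree-$D$ tuple $\vec Q$, write $E_{\vec Q}$ for the event over $(\Omega,\tau,x)$ that $b_\pi(\Omega,\tau)=1$ and $\vec Q|_{L(\Omega)}\equiv O^{\mathrm{alg}}_\pi(\Omega,\tau)$. By definition, $s_{\Phi,\pi}(\vec Q)\le \Pr[E_{\vec Q}]$, since the score event additionally requires $\vec Q(x)=\varphi_\pi(x)$. Hence every heavy tuple satisfies $\Pr[E_{\vec Q}]\ge\theta$.

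Suppose, for contradiction, that there are at least $M+1$ heavy tuples, where $M:=\lceil 2/\theta\rceil$. Pick any $r:=\min\{M,\lceil 2/\theta\rceil\}$ distinct heavy tuples $\vec Q_1,\ldots,\vec Q_r$; concretely, $r=\lceil 2/\theta\rceil$. Applying the Bonferroni inequality to the union of the $E_{\vec Q_j}$ gives
\[
1\ \ge\ \Pr\Bigl[\bigcup_{j\le r}E_{\vec Q_j}\Bigr]\ \ge\ \sum_{j}\Pr[E_{\vec Q_j}]-\sum_{j<j'}\Pr[E_{\vec Q_j}\cap E_{\vec Q_{j'}}]\ \ge\ r\theta-\binom r2 C_{\mathrm{ov}}\frac Dq .
\]
Here the pairwise intersection bound is exactly \Cref{lem:pairwise-chart-overlap}, because $E_{\vec Q}\cap E_{\vec Q'}$ is precisely the event bounded there.

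It remains to plug in $r\le 2/\theta+1\le 3/\theta$. Then $\binom r2 C_{\mathrm{ov}}D/q\le \frac{9}{2\theta^2}\cdot\frac{\theta^2}{20}<\frac14$, using the hypothesis $\theta^2\ge 20C_{\mathrm{ov}}D/q$. Meanwhile $r\theta\ge2$, so the right-hand side exceeds $2-\tfrac14>1$, which is a contradiction. Therefore the number of heavy tuples is at most $\lceil 2/\theta\rceil\le 3/\theta$, so $C_{\mathrm{pack}}=3$ suffices.

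The only point needing care is that we truncate to $r\approx 2/\theta$ tuples rather than applying inclusion--exclusion to all heavy tuples at once. Without truncation, the quadratic pairwise term would dominate. With it, the hypothesis on $\theta^2$ is exactly what keeps the pairwise term bounded by a constant. No other obstacle is expected, since \Cref{lem:pairwise-chart-overlap} holds uniformly for every labeling $\pi$ and every pair of distinct tuples.
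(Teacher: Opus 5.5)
Your argument is correct and is the same second-order inclusion--exclusion (Bonferroni) packing argument the paper uses, with \Cref{lem:pairwise-chart-overlap} bounding the pairwise terms; you get $C_{\mathrm{pack}}=3$ where the paper gets $6$. Your explicit truncation to $\lceil 2/\theta\rceil$ heavy tuples is the right way to run it. The paper instead applies $1\ge k\theta-\binom{k}{2}\theta^2/20$ to the whole heavy set and claims a contradiction for every $k\ge 6/\theta$. That silently requires passing to a subfamily of size about $6/\theta$, because the right-hand side is not monotone in $k$.
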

\begin{proof}
Let $\mathcal S$ be the set of tuples with score at least $\theta$, and write $k:=|\mathcal S|$.
For $\vec Q\in\mathcal S$, let $E_{\vec Q}$ denote the score event in the definition of $s_{\Phi,\pi}(\vec Q)$, so $\Pr[E_{\vec Q}]\ge\theta$.
By \Cref{lem:pairwise-chart-overlap}, every two distinct tuples in $\mathcal S$ satisfy
\[
    \Pr[E_{\vec Q}\wedge E_{\vec Q'}]
    \le
    C_{\mathrm{ov}}\frac{D}{q}
    \le
    \frac{\theta^2}{20}.
\]
Applying the first two terms of inclusion--exclusion gives
\[
    1
    \ge
    \Pr\Bigl[\bigcup_{\vec Q\in\mathcal S} E_{\vec Q}\Bigr]
    \ge
    k\theta-\binom{k}{2}\frac{\theta^2}{20}.
\]
If $k\ge 6/\theta$, then the right-hand side is at least
\[
    6-\frac{36}{40}>1,
\]
a contradiction.
Thus $k<6/\theta$, and the lemma follows with $C_{\mathrm{pack}}:=6$.
\end{proof}

\begin{lemma}[Perturbation bounds for outer-edge agreement]\label{lem:score-perturbation}
There exists an absolute constant $C_{\mathrm{drift}}\ge 1$ such that the following holds.
Let $\vec Q$ be any coordinatewise degree-$D$ tuple.
\begin{enumerate}
    \item If two labelings of $I_\Phi$ differ in one left or right coordinate, then
    \[
        \bigl|s_{\Phi,\pi}(\vec Q)-s_{\Phi,\widetilde\pi}(\vec Q)\bigr|
        \le
        q^{-3m+C_{\mathrm{drift}}}.
    \]
    \item If $\Phi$ and $\Phi'$ differ by one clause swap and $\pi$ is a common labeling of $I_\Phi$ and $I_{\Phi'}$, then
    \[
        \bigl|s_{\Phi,\pi}(\vec Q)-s_{\Phi',\pi}(\vec Q)\bigr|
        \le
        q^{-3m+C_{\mathrm{drift}}}.
    \]
\end{enumerate}
\end{lemma}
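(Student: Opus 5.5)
The plan is to write $s_{\Phi,\pi}(\vec Q)$ as an average over outer edges of an indicator, and to show that a single perturbation alters that indicator only on a $q^{-3m+O(1)}$ fraction of outer edges. The score event at $(\Omega,\tau,x)$ depends on three pieces of data: the left label $\pi_U(\Omega,\tau)$, through $b_\pi$ and $O^{\mathrm{alg}}_\pi$; the single outer right label $\varphi_\pi(x)$; and, through $b_\pi$, the predicate $P^\Phi_{(\Omega,\tau)}$. The proof then reduces to bounding the measure of sampled triples that involve the perturbed coordinate.

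\textbf{Item 1, left coordinate.} Suppose $\pi$ and $\widetilde\pi$ differ at a left vertex $u_0$. The indicator can change only when $(\Omega,\tau)=u_0$, which happens with probability $1/|U|$. Theorem~\ref{thm:left-predicate-label-cover-under-clause-swaps} gives $|U|\ge q^{11m-c_{\mathrm{lc}}}$, so this is at most $q^{-11m+c_{\mathrm{lc}}}$, well below the target.

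\textbf{Item 1, outer right vertex.} Suppose they differ at $v^{\mathrm{out}}_{x_0}$. The indicator can change only when the sampled point $x$ equals $x_0$. Since $(\Omega,\tau,x)$ is a uniform outer edge (each object has the same $Kq^5$ outer edges), this probability equals $d(v^{\mathrm{out}}_{x_0})/|E_{\mathrm{out}}|$. The degree bound gives $d(v^{\mathrm{out}}_{x_0})\le q^{8m+c_{\mathrm{lc}}}$, and $|E_{\mathrm{out}}|=|U|Kq^5\ge q^{11m-c_{\mathrm{lc}}}$. The ratio is therefore $q^{-3m+O(1)}$.

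\textbf{Item 1, inner right vertex.} The score never reads inner right labels: $b_\pi$ depends only on the left label and the raw table. So a change at an inner vertex leaves the score unchanged, with drift $0$.

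\textbf{Item 2.} A clause swap alters $B_\Phi$ at two points. Only left predicates change, and the proof of Theorem~\ref{thm:left-predicate-label-cover-under-clause-swaps} shows at most $2q^{8m+c_{\mathrm{lc}}}$ of them do. The objects and the algebraic chart $O^{\mathrm{alg}}_\pi$ are formula-independent. So the indicator can differ only when $(\Omega,\tau)$ is one of these affected left vertices. Since $(\Omega,\tau)$ is uniform on $U$, this has probability at most $2q^{8m+c_{\mathrm{lc}}}/q^{11m-c_{\mathrm{lc}}}=q^{-3m+O(1)}$. Taking $C_{\mathrm{drift}}$ to be about $2c_{\mathrm{lc}}+1$ covers all cases.

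\textbf{Main obstacle.} The only delicate point is the sampling distribution. I must check that "$\Omega\sim\mathcal O$, $\tau\sim\Tau$, $x\sim L(\Omega)$" really is uniform over outer edges, so that the probability that $x$ hits a fixed point is bounded by its degree divided by the total edge count. I must also check that $|L(\Omega)|=q^5$ for every object, so the marginal of $(\Omega,\tau)$ is exactly uniform on $U$. Both follow from the $5$-flat structure and the uniform multiplicity $K$. Everything else is direct counting using the parameter bounds already established.
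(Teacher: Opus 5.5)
Your proposal is correct and follows essentially the same route as the paper. You write the score as an average over uniformly random outer edges and bound the mass of affected edges case by case:
\begin{itemize}
\item $1/|U|$ for a left change;
\item $d(v^{\mathrm{out}}_{x_0})/(|U|Kq^5)=q^{-3m+O(1)}$ for an outer right change;
\item $0$ for an inner right change;
\item $2q^{8m+c_{\mathrm{lc}}}/|U|$ for a clause swap.
\end{itemize}
All four bounds rest on the parameter bounds of Theorem~\ref{thm:left-predicate-label-cover-under-clause-swaps}. Your explicit check that $(\Omega,\tau,x)$ is uniform over outer edges, via the bijective chart $\psi_\Omega$ and the uniform multiplicity $K$, is the same fact the paper uses implicitly.
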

\begin{proof}
The score event for $\vec Q$ is determined by a uniformly random outer edge.
Every left vertex contributes exactly $Kq^5$ outer edges, so the total number of outer edges is $|U|Kq^5$.
Changing one left coordinate affects only outer edges incident to that left vertex, whose total mass is exactly
\[
    \frac{Kq^5}{|U|Kq^5}
    =
    \frac{1}{|U|}.
\]
By the lower bound on $|U|$ in item~\textup{(4)} of \Cref{thm:left-predicate-label-cover-under-clause-swaps}, this is at most
\[
    q^{-11m+c_{\mathrm{lc}}}
    =
    q^{-11m+O(1)}.
\]
Changing one inner right coordinate does not affect any outer edge at all.
Changing one outer right coordinate can affect only the outer edges incident to the corresponding outer right vertex. The outer-right degree upper bound and the same lower bound on $|U|$ from item~\textup{(4)} of \Cref{thm:left-predicate-label-cover-under-clause-swaps} show that this mass is at most
\[
    \frac{q^{8m+O(1)}}{|U|Kq^5}
    =
    q^{-3m+O(1)}.
\]
This proves item~\textup{(1)}.
For item~\textup{(2)}, a one-clause swap changes only the left predicates on at most $2q^{8m+c_{\mathrm{lc}}}$ left vertices by item~\textup{(3)} of \Cref{thm:left-predicate-label-cover-under-clause-swaps}.
Since the score uses the formula only through the bit $b_\pi(u)$, the score can change only when the sampled outer edge is incident to one of those vertices. Using again that each left vertex contributes exactly $Kq^5$ outer edges and that item~\textup{(4)} of \Cref{thm:left-predicate-label-cover-under-clause-swaps} gives $|U|\ge q^{11m-c_{\mathrm{lc}}}$, this has total mass at most
\[
    \frac{2q^{8m+c_{\mathrm{lc}}}Kq^5}{|U|Kq^5}
    =
    \frac{2q^{8m+c_{\mathrm{lc}}}}{|U|}
    =
    q^{-3m+O(1)}.
\]
After enlarging $C_{\mathrm{drift}}$, both bounds take the stated form.
\end{proof}

\begin{proof}[Proof of \Cref{prop:robust-heavy-hit}]
Fix $\Delta$ as in the statement and set
\[
    \delta:=\Delta/20.
\]
Because $\Delta\ge 20c_{\mathrm{PO}}m((mD)/q)^{1/4}$ after enlarging $C_{\mathrm{hit}}$, the Point-object theorem (\autoref{thm:point-object-seeded}) applies with this choice of $\delta$.
Moreover, the lower bound on $\Delta$ also gives
\[
    q^{-4}
    +
    C_{\mathrm{err}}\frac{1+tD}{q}
    +
    C_{\mathrm{light}}\frac{D+|H|}{q}
    \le
    \Delta/20
\]
for all sufficiently large $n$, uniformly over the later choice of $t\le C_{\mathrm{PO}}/\delta$.
Now fix a labeling $\pi$ with $\val_{I_\Phi}(\pi)\ge \Delta$.
By \eqref{eq:p-vs-pout}, the outer-edge satisfaction probability obeys
\[
    p_{\mathrm{out}}\ge \Delta-q^{-4}\ge 19\Delta/20.
\]
Applying \Cref{thm:point-object-seeded} to the augmented pair $(\bar\varphi_\pi,\bar O_\pi)$ from the soundness proof, with parameter $\delta$, yields a list of size
\[
    t\le \frac{20C_{\mathrm{PO}}}{\Delta}
\]
and tuples $\vec Q_1,\ldots,\vec Q_t$ whose scores satisfy
\[
    \sum_{i=1}^{t} s_{\Phi,\pi}(\vec Q_i)
    \ge
    p_{\mathrm{out}}-\delta-C_{\mathrm{err}}\frac{1+tD}{q}
    \ge
    \frac{4\Delta}{5}.
\]
Therefore some tuple in the list has score at least
\[
    \frac{4\Delta/5}{20C_{\mathrm{PO}}/\Delta}
    =
    \frac{\Delta^2}{25C_{\mathrm{PO}}}.
\]
After enlarging $C_{\mathrm{hit}}$ once more, the bad-tuple bound from \Cref{lem:bad-tuples-light} is strictly smaller than
\[
    \theta_-:=\frac{\Delta^2}{200C_{\mathrm{PO}}},
\]
so every tuple with score at least $\theta_-$ is good.
Fix the threshold interval
\[
    I_\Delta:=[\theta_-,\theta_+],
    \qquad
    \theta_+:=2\theta_-,
\]
and let $a_{\mathrm{def}}\in\{0,1\}^{\mathrm{Var}(\Phi)}$ be an arbitrary fixed ``default'' assignment.
Given a labeling $\pi$, the map $\mathcal H_{\mathrm{lc},\Phi,\Delta}(\pi)$ samples
\[
    \theta\sim I_\Delta
    \qquad\text{and}\qquad
    (\Omega,\tau,x)\sim ({\cal O},\Tau,L(\Omega))
\]
uniformly.
If the sampled left label is inadmissible or the sampled outer edge is unsatisfied, it outputs $a_{\mathrm{def}}$.
Otherwise, it considers the set of tuples $\vec Q$ such that
\[
    s_{\Phi,\pi}(\vec Q)\ge \theta
    \qquad\text{and}\qquad
    \vec Q|_{L(\Omega)}\equiv O^{\mathrm{alg}}_\pi(\Omega,\tau),
\]
and outputs the Boolean assignment to $\Phi$ induced from the lexicographically first such tuple via \Cref{clm:good-implies-sat}; if this set is empty, it again outputs $a_{\mathrm{def}}$.
Because every score-$\theta$ tuple is good, every nondefault output of this procedure satisfies $\Phi$.
The tuple found above has score at least $\theta_+$, so whenever the sampled outer edge lands in its score event, the output is nondefault and therefore satisfying.
Hence
\[
    \Pr\bigl[\mathcal H_{\mathrm{lc},\Phi,\Delta}(\pi)
    \text{ satisfies }\Phi\bigr]
    \ge
    \theta_+
    \ge
    \Delta^{C_{\mathrm{hit}}}
\]
after enlarging $C_{\mathrm{hit}}$ if necessary.
This proves item~\textup{(1)}.

For item~\textup{(2)}, couple the two runs of $\mathcal H_{\mathrm{lc},\Phi,\Delta}$, one for $\pi$ and one for $\pi'$, by using the same random threshold $\theta$ and the same sampled outer edge $(\Omega,\tau,x)$.
Let
\[
    \nu_{\mathrm{fix}}:=q^{-3m+C_{\mathrm{drift}}}.
\]
By \Cref{lem:score-perturbation}, every tuple score changes by at most $\nu_{\mathrm{fix}}$ under a one-coordinate perturbation of the labeling.
Under the coupling, the outputs can differ only if either the sampled outer edge is itself affected by the changed coordinate, or else some tuple crosses the threshold $\theta$.
The first event has probability at most $\nu_{\mathrm{fix}}$ by the proof of \Cref{lem:score-perturbation}.
For the second event, note that only tuples with score at least $\theta/2$ can matter.
Because $\theta\in I_\Delta$ and $\theta_- = \Delta^2/(200C_{\mathrm{PO}})$, one further enlargement of the constant $C_{\mathrm{hit}}$ in the hypothesis ensures that
\[
    (\theta/2)^2
    \ge
    20C_{\mathrm{ov}}\frac{D}{q}.
\]
Therefore \Cref{lem:few-heavy-tuples}, applied with threshold $\eta=\theta/2$, shows that there are at most $2C_{\mathrm{pack}}/\theta$ such tuples.
A union bound over these tuples therefore shows that the probability that some tuple crosses the threshold is at most
\[
    O\!\left(\frac{\nu_{\mathrm{fix}}}{\theta^2}\right).
\]
If neither event occurs, the set of tuples of score at least $\theta/2$ consistent with the sampled chart is unchanged and the output is identical.
Since every output assignment lives on $|\mathrm{Var}(\Phi)|=|H|^m\le q^m$ coordinates, the expected Hamming distance is at most
\[
    q^m\cdot O\!\left(\nu_{\mathrm{fix}}+\frac{\nu_{\mathrm{fix}}}{\theta^2}\right)
    \le
    q^{C_{\mathrm{hit}}}
\]
after enlarging $C_{\mathrm{hit}}$.
This proves item~\textup{(2)}.

For item~\textup{(3)}, couple the two runs in the same way.
A one-clause swap changes only the admissibility bit on at most $2q^{8m+c_{\mathrm{lc}}}$ left vertices, so by \Cref{lem:score-perturbation} every tuple score changes by at most
\[
    \nu_{\mathrm{swap}}:=q^{-3m+C_{\mathrm{drift}}}.
\]
Exactly the same argument as above---using again that $(\theta/2)^2\ge 20C_{\mathrm{ov}}D/q$ and hence the bound of \Cref{lem:few-heavy-tuples} at threshold $\theta/2$---shows that the outputs can differ only with probability
\[
    O\!\left(\nu_{\mathrm{swap}}+\frac{\nu_{\mathrm{swap}}}{\theta^2}\right),
\]
and multiplying by $|\mathrm{Var}(\Phi)|\le q^m$ gives the bound $q^{C_{\mathrm{hit}}}$ after one final enlargement of $C_{\mathrm{hit}}$.
This proves item~\textup{(3)}.
\end{proof}
\section{Alphabet Reduction}\label{sec:alphabet-reduction}
This section shrinks the right alphabet of the label cover instances obtained from the left-predicate label cover under clause swaps in \Cref{sec:robust-label-cover}, while keeping enough locality to preserve the later sensitivity lower bound. 
Throughout this section, we refer to the input label cover instance to the alphabet-reduction step as the source instance.
The main subtlety is that the recovery map for this step depends on the source instance. To recover the label of a right vertex $v$ in the source instance, we look at the labels suggested by the neighboring left vertices through the maps $f_{(u,v)}$ on edges incident to $v$, and sample from the resulting empirical distribution. Because this distribution depends both on the assignment to the reduced instance and on the maps $f_{(u,v)}$ in the source instance, we must separately track perturbations of the assignment and swaps of the source instance itself.

After recalling the required coding-theoretic facts, we define the transformation $\mathcal T_{\mathrm{AR}}$ and its instance-specific recovery maps $\mathcal R_{\mathrm{AR},I}$, then we prove the resulting bounds on size, degrees, alphabets, soundness, and the three locality quantities needed later in \Cref{sec:dh-iteration}: transformation under source swaps, recovery under assignment perturbations, and recovery under source swaps.
If the source instance carries left predicates in the sense of \Cref{def:admissible-game}, then the alphabet-reduction gadget simply transports them unchanged; to avoid clutter we suppress them from the tuple notation.

\subsection{The Alphabet Reduction Gadget}
A mapping $\mathcal C: \Sigma \to \hat \Sigma^k$ is called a \emph{code} with minimum relative distance $1-\delta$ if for every $a \neq b\in\Sigma$ the strings $\mathcal C(a)$ and $\mathcal C(b)$ differ in at least $(1-\delta)k$ coordinates.

\begin{proposition}[Fact 5.5 of \cite{dinur2013composition}]\label{pro:dinur-fact-5.5}
    Suppose $\mathcal C: \Sigma \to \hat \Sigma^k$ is a code with (relative) distance at least $1-\delta$, and let $\eta > 2\sqrt{\delta}$. 
    Then, for every word $w \in \hat \Sigma^k$ there are at most $2/\eta$ symbols $a\in\Sigma$ such that $\mathcal C(a)$ agrees with $w$ on at least an $\eta$ fraction of the coordinates.
\end{proposition}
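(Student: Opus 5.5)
This is the standard list-decoding (Johnson-type) argument: inclusion--exclusion over agreement sets, using that distinct codewords agree with one another on at most a $\delta$ fraction of coordinates. Fix $w \in \hat\Sigma^k$. Suppose for contradiction that there are $L+1$ symbols $a_0,\dots,a_L$, with $L := \lfloor 2/\eta \rfloor$, each of whose codewords agrees with $w$ on at least $\eta k$ coordinates. Let $S_i \subseteq [k]$ be the set of coordinates where $\mathcal C(a_i)$ agrees with $w$, so $|S_i| \ge \eta k$.

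The key observation is that for $i \neq j$, any $\ell \in S_i \cap S_j$ satisfies $\mathcal C(a_i)_\ell = w_\ell = \mathcal C(a_j)_\ell$. Since the code has relative distance at least $1-\delta$, this gives $|S_i\cap S_j| \le \delta k$.

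Let $s := L+1$ denote the number of symbols, and apply the first two terms of inclusion--exclusion (Bonferroni):
\[
k \ge \Bigl|\bigcup_{i} S_i\Bigr| \ge \sum_i |S_i| - \sum_{i<j}|S_i\cap S_j| \ge s\eta k - \binom{s}{2}\delta k .
\]
Dividing by $k$, we need a contradiction from $1 \ge s\eta - \binom{s}{2}\delta$.

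The main care is in the choice of how many symbols to compare. Plugging in $s = L+1$ directly can fail when $\delta$ is near its maximal allowed value $\eta^2/4$, because the quadratic term then dominates. Instead, restrict to a sub-collection of exactly $s := \lceil 2/\eta\rceil$ symbols. This is possible because $s \le L+1$: if $2/\eta$ is not an integer then $\lceil 2/\eta\rceil = L+1$, and otherwise $s = 2/\eta = L$. With this choice $s\eta \ge 2$ and $s < 2/\eta + 1$, so
\[
\binom{s}{2}\delta < \frac{s^2}{2}\cdot\frac{\eta^2}{4} < \frac{(2+\eta)^2}{8} \le \frac{9}{8}.
\]
The upper bound on $s$ uses the strict inequality $\eta > 2\sqrt\delta$, i.e.\ $\delta < \eta^2/4$, together with $\eta \le 1$.

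The resulting bound $s\eta - \binom{s}{2}\delta > 2 - 9/8$ is not quite above $1$, so the last step is to tighten the quadratic term: use $s(s-1)/2$ rather than $s^2/2$ and apply $s-1 < 2/\eta$. This gives
\[
\binom{s}{2}\delta < \frac{s}{2}\cdot\frac{2}{\eta}\cdot\frac{\eta^2}{4} = \frac{s\eta}{4}.
\]
Hence
\[
s\eta - \binom{s}{2}\delta > \frac{3s\eta}{4} \ge \frac32 > 1,
\]
which contradicts $1 \ge s\eta - \binom{s}{2}\delta$. Therefore at most $2/\eta$ symbols $a\in\Sigma$ have $\mathcal C(a)$ agreeing with $w$ on an $\eta$ fraction of coordinates.
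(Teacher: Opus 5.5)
Your argument is correct. It is the standard inclusion--exclusion (Bonferroni) count behind the cited Fact~5.5; the paper gives no proof of its own beyond the citation to Dinur--Harsha, and it uses the same counting in \Cref{lem:few-heavy-tuples}.

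The detour through $s=\lceil 2/\eta\rceil$ is unnecessary. Your worry that $s=L+1$ can fail applies only to the crude $s^2/2$ estimate. With $s=L+1$ you already have $s-1=L\le 2/\eta$ and $s\eta>2$, so your tightened bound $\binom{s}{2}\delta<s\eta/4$ gives the contradiction directly.
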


\begin{proposition}[Remark 5.6 of \cite{dinur2013composition}]\label{pro:dinur-remark-5.6}
    For every $0 < \delta < 1$ and alphabet $\Sigma$, there exists a code $\mathcal C: \Sigma \to \hat \Sigma^k$ with relative distance $1-\delta$ where $|\hat \Sigma|= O(\delta^{-2})$ and $k= O(\delta^{-2}\log |\Sigma|)$.    
\end{proposition}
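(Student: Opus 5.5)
We prove the statement by the probabilistic method: a uniformly random code with the stated alphabet and length has the required relative distance with positive probability. Since the paper only needs the code $\mathcal C$ to be fixed once and for all on each swap-closed family (the ``frozen auxiliary choices''), explicitness is not required. If it is wanted, one can derandomize by exhaustive search or by the method of conditional expectations.

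\emph{Setup.} Let $\hat\Sigma$ have size $\hat q:=\lceil 2/\delta\rceil=O(\delta^{-1})\subseteq O(\delta^{-2})$. Let $k:=\lceil C\delta^{-1}\ln(|\Sigma|+1)\rceil$ for a large absolute constant $C$; this is within the claimed $O(\delta^{-2}\log|\Sigma|)$. For each $a\in\Sigma$, choose $\mathcal C(a)\in\hat\Sigma^k$ uniformly at random, independently over all $a$ and all coordinates.

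\emph{Single-pair bound.} Fix $a\neq b$. The indicators $X_i:=\mathbf 1[\mathcal C(a)_i=\mathcal C(b)_i]$, $i\in[k]$, are independent Bernoulli variables with mean $1/\hat q\le\delta/2$. Hence the agreement count $X=\sum_i X_i$ satisfies $\E[X]=\mu\le\delta k/2$. The code fails the distance requirement for this pair exactly when $X>\delta k$, which is at least twice the mean. The multiplicative Chernoff bound, after monotonically raising $\mu$ to $\delta k/2$ (the tail $\Pr[X\ge t]$ only increases with the mean), gives
\[
\Pr[X\ge \delta k]\le \exp(-\delta k/6).
\]

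\emph{Union bound.} There are fewer than $|\Sigma|^2$ pairs, so the failure probability is less than $|\Sigma|^2\exp(-\delta k/6)$. This is below $1$ once $C\ge 12$ (with slack absorbing the ceilings). Hence some fixed choice of codewords works, and it then automatically has relative distance at least $1-\delta$.

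\emph{Main obstacle.} There is no real obstacle here. The only care needed is in the Chernoff step: using the upper bound on the mean rather than its exact value, and handling the trivial case $\delta\ge 1/2$, where we may take $\hat q=O(1)$. As an alternative route, one could concatenate a Reed--Solomon outer code with a small random inner code to obtain an explicit construction with the same asymptotic parameters. We do not pursue this, since the probabilistic argument suffices for its use in $\mathcal T_{\mathrm{AR}}$.
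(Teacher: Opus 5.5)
Your proof is correct. The paper gives no argument for this statement: it imports it from Dinur--Harsha (Remark~5.6) and notes only that a code fixed once and for all suffices, with no explicitness needed. Your random-code argument is therefore a self-contained replacement for a citation rather than a reconstruction of the paper's route.

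The steps check out:
\begin{itemize}
\item For a fixed pair, the agreement indicators are independent with mean $1/\hat q\le\delta/2$. This holds for every $\delta\in(0,1)$ since $\hat q=\lceil 2/\delta\rceil$, so the separate case $\delta\ge 1/2$ is unnecessary.
\item Chernoff at twice the upper-bounded mean gives $e^{-\delta k/6}$.
\item With $C\ge 12$ and $\ln(|\Sigma|+1)$, the union bound over fewer than $|\Sigma|^2$ pairs is strictly below $1$.
\end{itemize}

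You actually obtain $|\hat\Sigma|=O(\delta^{-1})$ and $k=O(\delta^{-1}\log|\Sigma|)$. These are stronger than, and contained in, the stated $O(\delta^{-2})$ bounds because $\delta<1$.

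The citation buys brevity. Your argument buys self-containment and better parameters at the cost of explicitness, which the paper says it does not need.

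One small compatibility remark. Elsewhere the paper writes $|\sigma|=\Theta(\varepsilon^{-6})$ and $k=\Theta(\varepsilon^{-6}\log|\Sigma_V|)$. Every downstream use needs only upper bounds. If you want those $\Theta$'s literally, you can enlarge $\hat\Sigma$ with unused symbols and repeat coordinates; neither operation changes the relative distance.
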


Later constructions will require a deterministic instantiation of such a code for a fixed source alphabet $\Sigma$ and a fixed distance parameter $\delta$. To handle this, we will fix one such code once and for all.
No algorithmic explicitness beyond the existence statement above is needed for that use.

Suppose $\mathcal C: \Sigma_V \to \hat \Sigma_V^k$ is a code with (relative) distance $1-\eta^3$ for some $\eta>0$ satisfying the guarantees of \autoref{pro:dinur-remark-5.6} for $\delta=\eta^3$.
We now describe our alphabet reduction transformation, which we denote by $\mathcal T_{\mathrm{AR}}$.
The transformation maps label cover instances $I = (U,V,E,\Sigma_U,\Sigma_V,F)$ to label cover instances $I'= (U,V \times [k],E',\Sigma_U,\hat \Sigma_V,F')$, where the set of edges $E'$ and the set of projections $F'$ are defined as follows: $E'= \{(u,(v,i)) \mid (u,v) \in E,i\in[k]\}$ and for each $e = (u,(v,i)) \in E'$, the function $f'_e : \Sigma_U \to \hat \Sigma_V$ is defined as $f'_e(a) = \mathcal C(f_{(u,v)}(a))_i$.

\begin{figure}[H]
    \centering
        \begin{subfigure}[b]{0.4\textwidth}
    \centering
        \begin{tikzpicture}[scale=1.1]
            \tikzset{inner sep=0,outer sep=3}
            
            \tikzstyle{a}=[inner sep=4pt, inner ysep=4pt,outer sep=0.5pt,
            draw=black!40!white, fill=Cerulean!10!white, very thick, rounded corners=6pt, align=center]
            \tikzstyle{b}=[inner sep=4pt, inner ysep=4pt,outer sep=0.5pt,
            draw=black!20!white, fill=Cerulean!10!white, thick, align=center]
            \tikzstyle{e}=[inner sep=1pt, inner ysep=1pt,outer sep=0.5pt,
            draw=Orange!70!white, fill=yellow!10!white, dashed, thick, rounded corners=6pt, align=center]
                \draw[e] (-4,-0.5) -- (-4,0.5) -- (-3,0.5) -- (-3,-0.5) -- cycle;
                \node[a, circle] (L1) at (-2.5,0) {$u$};
                \node[a, circle] (R1) at (0,0) {$v$};

                \draw[a] (L1) -- (R1);
                \node at (-1.3,0.2) {\small $f_{(u,v)}$};

                \node at (-3.5,0) {$P_u$};
            \end{tikzpicture} \vspace{4em}
            \caption{Edge $(u,v)$ prior to${\cal T}_{AR}$}
    \end{subfigure}
    \begin{subfigure}[b]{0.4\textwidth}
    \centering
        \begin{tikzpicture}[scale=1.1]
            \tikzset{inner sep=0,outer sep=3}
            \tikzstyle{e}=[inner sep=1pt, inner ysep=1pt,outer sep=0.5pt,
            draw=Orange!70!white, fill=yellow!10!white, dashed, thick, rounded corners=6pt, align=center]
            \tikzstyle{a}=[inner sep=4pt, inner ysep=4pt,outer sep=0.5pt,
            draw=black!40!white, fill=Cerulean!10!white, very thick, rounded corners=6pt, align=center]
            \tikzstyle{b}=[inner sep=1pt, inner ysep=1pt,outer sep=0.5pt,
            draw=black!40!white, fill=Cerulean!10!white, very thick, rounded corners=6pt, align=center]

                 \draw[e] (-4,1) -- (-4,2) -- (-3,2) -- (-3,1) -- cycle;

                \node at (-3.5,1.5) {$P_u$};
                 
                \node[a, circle] (L1) at (-2.5,1.5) {$u$};
                \node[b, circle] (R1) at (0,0) {\tiny $(v,1)$};
                \node[b, circle] (R2) at (0,1) {\tiny $(v,2)$};
                \node at (0,2.1) {\large $\vdots$};
                \node[b, circle] (R3) at (0,3) {\tiny $(v,k)$};
                \draw[a] (L1) -- (R1);
                \draw[a] (L1) -- (R2);
                \draw[a] (L1) -- (R3);

                \node[rotate = 33] at (-1.4,2.5) {\tiny ${\cal C} \left(f_{(u,v)} \right)_k$};
                \node[rotate = -7] at (-1.2,1.5) {\tiny ${\cal C} \left(f_{(u,v)} \right)_2$};
                \node[rotate = -30] at (-1.4,0.5) {\tiny ${\cal C} \left(f_{(u,v)} \right)_1$};
            \end{tikzpicture}
            \caption{Edge $(u,v)$ after  ${\cal T}_{AR}$}
    \end{subfigure}
    \caption{The transformation of the alphabet reduction procedure ${\cal T}_{AR}$ on one edge.}
\end{figure}
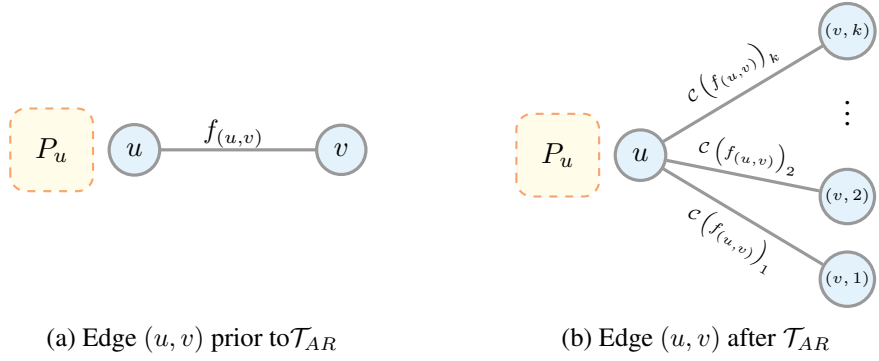

For an instance $I=(U,V,E,\Sigma_U,\Sigma_V,F)$ we write
$\bar d_U := \frac{|E|}{|U|}$, $\bar d_V := \frac{|E|}{|V|}$, $\Delta_V := \max_{v\in V} \deg(v)$, $\Delta_U := \max_{u\in U} \deg(u)$, and $\delta_V := \min_{v\in V} \deg(v)$
for the average left and right degrees, the maximum left degree, and the minimum right degree, respectively. 

For each fixed source instance $I$ and assignment $\pi'$ for $I'=\mathcal T_{\mathrm{AR}}(I)$, define the randomized recovery map $\mathcal R_{\mathrm{AR},I}(\pi')$, which outputs a random assignment $\bm \pi$ for $I$.
When the source instance is clear from context we suppress the subscript $I$. $\mathcal R_{\mathrm{AR},I}(\pi')$ is defined as follows: 
for each $v \in V$ and $b \in \Sigma_V$, define
$p_v(b) := \Pr_{e=(u,v)\text{ incident to } v}[f_{(u,v)}(\pi'(u)) = b]$.
Here the probability is over a uniformly random edge incident to $v$ (counting multiplicity).
We set
\[
    \bm \pi(u) = \pi'(u)
    \qquad\text{for every }u \in U.
\]
For each $v \in V$, we then sample $\bm \pi(v)$ independently from the distribution $p_v$.\footnote{This randomized recovery is sufficient for our sensitivity analysis; choosing a maximizer instead would only sharpen the constants.}

\begin{lemma}\label{lem:alphabet-reduction}
    Let $\mathcal I$ be a swap-closed family of label cover instances $I=(U,V,E,\Sigma_U,\Sigma_V,F)$ with average left degree $\bar d_U$, average right degree $\bar d_V$, maximum left degree $\Delta_U$, and minimum right degree $\delta_V$, and let $\mathcal I' = \mathcal T_{\mathrm{AR}}(\mathcal I)$.
    Then for any $\varepsilon>0$ and $\eta \in (0,1/4)$, the right alphabet of instances in $\mathcal I'$ is $\hat \Sigma_V$ and satisfies $|\hat \Sigma_V| = O(\eta^{-6})$.
    Moreover, writing
    \[
        s_{\mathrm{AR}}(\varepsilon,\eta) := \frac{\varepsilon}{\sqrt{\eta}} + 2\sqrt{\eta}
        \qquad\text{and}\qquad
        k := O(\eta^{-6}\log |\Sigma_V|),
    \]
    the following hold.
    \begin{enumerate}
        \item If $I \in \mathcal I$ is satisfiable, then $\mathcal T_{\mathrm{AR}}(I)$ is satisfiable.
        \item If $I \in \mathcal I$ and an assignment $\pi'$ for $\mathcal T_{\mathrm{AR}}(I)$ satisfies $\mathrm{val}_{\mathcal T_{\mathrm{AR}}(I)}(\pi') \ge s_{\mathrm{AR}}(\varepsilon,\eta)$, then the recovered assignment $\bm \pi = \mathcal R_{\mathrm{AR},I}(\pi')$ satisfies $\E[\mathrm{val}_{I}(\bm \pi)] \ge \varepsilon$.
        \item For all $I,\tilde I \in \mathcal I$, we have
        \[
            \SwapDist(\mathcal T_{\mathrm{AR}}(I),\mathcal T_{\mathrm{AR}}(\tilde I)) \le k\cdot \SwapDist(I,\tilde I).
        \]
        \item For every fixed $I \in \mathcal I$ and every two assignments $\pi',\tilde \pi'$ for $\mathcal T_{\mathrm{AR}}(I)$, we have
        \[
            \EMD\bigl(\mathcal R_{\mathrm{AR},I}(\pi'),\mathcal R_{\mathrm{AR},I}(\tilde \pi')\bigr)
            \le \qty(1+\frac{\Delta_U}{\delta_V}) d_{\mathrm H}(\pi',\tilde \pi').
        \]
        \item If $I,\tilde I \in \mathcal I$ differ by one swap, then for every assignment $\pi'$ on the common domain of $\mathcal T_{\mathrm{AR}}(I)$ and $\mathcal T_{\mathrm{AR}}(\tilde I)$, we have
        $\EMD\bigl(\mathcal R_{\mathrm{AR},I}(\pi'),\mathcal R_{\mathrm{AR},\tilde I}(\pi')\bigr) \le \frac{1}{\delta_V}$.
    \end{enumerate}
\end{lemma}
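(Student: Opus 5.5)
The plan is to verify the five items in order, treating the alphabet bound, completeness, and the instance-distance bound as routine and concentrating on soundness and the two recovery-stability bounds. The alphabet bound follows from \Cref{pro:dinur-remark-5.6} with $\delta=\eta^3$, which gives $|\hat\Sigma_V|=O(\eta^{-6})$ and $k=O(\eta^{-6}\log|\Sigma_V|)$.

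\textbf{Completeness and instance distance.} For completeness, given a satisfying $(\pi_U,\pi_V)$ for $I$, label $(v,i)$ by $\mathcal C(\pi_V(v))_i$ and keep the left labels. Every new edge is then satisfied, and since left predicates are transported unchanged, admissibility is preserved. For item~3, first take a predicate swap at $u$: it changes only the single predicate at $u$ in $I'$. Next take a projection swap on $(u,v)$: it changes exactly the $k$ projections $f'_{(u,(v,i))}$, $i\in[k]$. Since the graph is fixed, one source swap costs at most $k$ target swaps. The general statement then follows by applying this along a shortest swap path, using the triangle inequality for $\SwapDist$.

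\textbf{Soundness (item 2).} I will follow Dinur--Harsha's alphabet-reduction soundness argument via \Cref{pro:dinur-fact-5.5}. Because distance is $1-\eta^3$, we have $2\sqrt{\eta^3}<\eta$, so for each $v$ at most $2/\eta$ symbols $b$ have $\mathcal C(b)$ agreeing with $\pi'(v,\cdot)$ on an $\eta$-fraction of coordinates. Call these the \emph{heavy list} $L_v$. For a random edge $(u,v)$ of $I$ (weighted consistently with $E'$, where each original edge has exactly $k$ copies so the edge distributions match), let $b=f_{(u,v)}(\pi'(u))$, assuming $P_u$ holds. The fraction of $i$ satisfied is then the agreement of $\mathcal C(b)$ with $\pi'(v,\cdot)$. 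Split the satisfied mass of $I'$ into two parts: edges with $b\notin L_v$, which contribute at most $\eta$, and edges with $b\in L_v$. This gives $\Pr_{(u,v)}[P_u\wedge f_{(u,v)}(\pi'(u))\in L_v]\ge s_{\mathrm{AR}}-\eta$. Now
\[
\E[\mathrm{val}_I(\bm\pi)]=\E_v\Bigl[\textstyle\sum_b p_v(b)\,\Pr_{u\sim v}[P_u,\ f_{(u,v)}(\pi'(u))=b]\Bigr].
\]
Cauchy--Schwarz over the at most $2/\eta$ heavy symbols gives $\sum_{b\in L_v}p_v(b)^2\ge(\sum_{b\in L_v}p_v(b))^2\eta/2$. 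Plugging in $s_{\mathrm{AR}}=\varepsilon/\sqrt\eta+2\sqrt\eta$ and using Jensen should yield a bound at least $\varepsilon$. Predicate failures need care, since $p_v$ counts inadmissible neighbors too, but they only make the bound easier to check. This is the step I expect to be the main computational obstacle: matching the exact constants in $s_{\mathrm{AR}}$. If the squared-probability route loses a factor, I would instead condition on the event $P_u$ and redo the Cauchy--Schwarz within it.

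\textbf{Stability (items 4 and 5).} For item~4, it suffices by triangle inequality to handle $\pi',\tilde\pi'$ differing in one coordinate. If the differing coordinate is a right vertex $(v,i)$, then $p_v$ does not depend on right labels of $I'$, so the recovery is identical and the distance is $0\le 1$. If it is a left vertex $u$, the left labels differ in one place, costing $1$. Moreover, for each $v$ adjacent to $u$ (at most $\Delta_U$ of them), the distribution $p_v$ changes in total variation by at most $\deg_{(u,v)}/\deg(v)$. Coupling the independent samples maximally then gives an expected right distance of at most $\sum_v \deg_{(u,v)}/\delta_V\le\Delta_U/\delta_V$. Item~5 is the same argument. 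A predicate swap does not affect $p_v$ at all, and left labels are copied. A projection swap on $(u,v)$ changes $p_v$ in total variation by at most $1/\deg(v)\le 1/\delta_V$ and leaves the other vertices untouched, so the maximal coupling gives EMD at most $1/\delta_V$.
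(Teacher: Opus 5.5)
Your treatment of the alphabet bound and items~1, 3, 4 and 5 matches the paper's. Item~2, however, has a genuine gap: the route you outline cannot reach the threshold $s_{\mathrm{AR}}(\varepsilon,\eta)$.

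Let $x_v$ denote the admissible $p_v$-mass on the heavy list $L_v$. Your chain gives $\E_v[x_v]\ge s_{\mathrm{AR}}-\eta$ and then
\[
\E[\val_I(\bm\pi)]\ \ge\ \E_v\!\left[\tfrac{\eta}{2}x_v^2\right]\ \ge\ \tfrac{\eta}{2}\bigl(s_{\mathrm{AR}}-\eta\bigr)^2\ \le\ \tfrac{\eta}{2}s_{\mathrm{AR}}^2=\tfrac{1}{2}(\varepsilon+2\eta)^2 .
\]
In general this is far below $\varepsilon$. In the regime the paper actually uses ($\eta=\varepsilon/4$, $s_{\mathrm{AR}}=3\sqrt{\varepsilon}$) it is only about $\tfrac98\varepsilon^2$.

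The loss is the factor $|L_v|\le 2/\eta$ from Cauchy--Schwarz over the list, and it is built into the route. You bound the satisfied mass on list symbols by $\mathbf 1[b\in L_v]$ instead of by the agreement $a_b:=\Pr_i[\pi'(v,i)=\mathcal C(b)_i]$. That discards the fact that when $p_v$ is spread over many list symbols, their agreements must be small. Your fallback of conditioning on $P_u$ does not touch this issue.

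The missing idea is to keep the weights $a_b$ inside the Cauchy--Schwarz step. Then use the code distance to show that list codewords have nearly disjoint agreement sets. Counting, for a random coordinate $i$, how many $b\in L_v$ agree there, and using $N\le 1+\binom N2$, gives
\[
\sum_{b\in L_v}a_b\le 1+\sum_{\{b_1,b_2\}\subseteq L_v}\Pr_i[\mathcal C(b_1)_i=\mathcal C(b_2)_i]\le 1+\binom{|L_v|}{2}\eta^3\le 1+2\eta .
\]
Let $\alpha_v:=\sum_b p_v(b)a_b$ be the contribution of $v$ to $\val_{I'}(\pi')$, and let $\nu_v:=\sum_b p_v(b)^2$. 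Splitting off the non-list symbols (which contribute at most $\eta$) and using $a_b^2\le a_b$ gives
\[
\alpha_v\le \sqrt{\nu_v\textstyle\sum_{b\in L_v}a_b^2}+\eta\le \sqrt{(1+2\eta)\nu_v}+\eta\le (1+\eta)\sqrt{\nu_v}+\eta\le \frac{\nu_v}{\sqrt\eta}+2\sqrt\eta ,
\]
where the last step is AM--GM, $\sqrt{\nu_v}\le\frac12(\nu_v/\sqrt\eta+\sqrt\eta)$, together with $\eta\le\sqrt\eta$.

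Both $\val_{I'}(\pi')$ and $\E[\val_I(\bm\pi)]$ are degree-weighted averages over $v$ of $\alpha_v$ and $\nu_v$ respectively. Averaging therefore yields $\val_{I'}(\pi')\le \E[\val_I(\bm\pi)]/\sqrt\eta+2\sqrt\eta$, which is item~2. This is the paper's argument.

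With left predicates, replace $p_v$ in $\alpha_v$ by its admissible part $p^{\mathrm{adm}}_v\le p_v$, and use $\sum_b p^{\mathrm{adm}}_v(b)^2\le \sum_b p_v(b)\,p^{\mathrm{adm}}_v(b)$. That is the precise sense in which your remark that predicates ``only make the bound easier'' is correct.
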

\begin{proof}
    The size bound on $\hat \Sigma_V$ follows from \Cref{pro:dinur-remark-5.6} applied with $\delta = \eta^3$.
    For completeness, if $I$ is satisfiable with witness $\pi$, then the assignment $\pi'$ defined by $\pi'(u)=\pi(u)$ for $u\in U$ and $\pi'(v,i)=\mathcal C(\pi(v))_i$ for $(v,i)\in V\times [k]$ satisfies every edge of $\mathcal T_{\mathrm{AR}}(I)$.

    We next analyze Item~2.
    It can be easily checked that the fraction of edges satisfied by the assignment $\pi'$ is
    $\mathrm{val}_{I'}(\pi') \,=\, \sum_{v\in V} \frac{\deg(v)}{|E|}\,\alpha_v$,
    where
    $\alpha_v := \sum_{b \in\Sigma_V}p_v(b) \cdot \Pr_{i \sim [k]}[\pi'(v,i) = \mathcal C(b)_i]$.
    Likewise, the expected fraction of edges satisfied by the recovered assignment $\bm \pi$ is
    $\E\big[\mathrm{val}_I(\bm \pi)\big] \,=\, \sum_{v\in V} \frac{\deg(v)}{|E|}\,\nu_v$,
    where $\nu_v := \sum_{b \in \Sigma_V}p_v(b)^2$.

    It suffices to show that for each $v \in V$, we have $\alpha_v \leq \nu_v/\sqrt{\eta} + 2\sqrt{\eta}$.
    Fix a vertex $v \in V$. Define $\mathrm{list}(v) = \{b \in \Sigma_V \mid \Pr_{i \sim [k]}[\pi'(v,i) = \mathcal C(b)_i] \geq \eta\}$.
    From \Cref{pro:dinur-fact-5.5}, we have that $l:= |\mathrm{list}(v)|\leq 2\eta^{-1}$ since $\eta>2\eta^{3/2}$ for $\eta<1/4$.

    We have
    \begin{align*}
    \alpha_v & =\sum_{b \in \mathrm{list}(v)} p_v(b) \cdot \Pr_i[\pi'(v, i)=\mathcal{C}(b)_i]+\sum_{b \notin \mathrm{list}(v)} p_v(b) \cdot \Pr_i[\pi'(v, i)=\mathcal{C}(b)_i] \\ 
    & \leq \sqrt{\nu_v \cdot \sum_{b \in \mathrm{list}(v)} \Pr_i[\pi'(v, i)=\mathcal{C}(b)_i]} + \eta \\ 
    & \leq \sqrt{\nu_v \cdot \qty(\Pr_i[\exists b \in \mathrm{list}(v), \pi'(v, i)=\mathcal{C}(b)_i]+\sum_{b_1 \neq b_2 \in \mathrm{list}(v)} \Pr_i[\mathcal{C}(b_1)_i=\mathcal{C}(b_2)_i])}+\eta \\ 
    & \leq \sqrt{\nu_v \cdot \qty(1+\binom{l}{2}\eta^3)}+\eta 
    \leq (1+\eta)\sqrt{\nu_v} + \eta 
    \leq \frac{\nu_v}{\sqrt{\eta}} + 2\sqrt{\eta}.
    \end{align*}    
    The first inequality bounds the contribution of $\mathrm{list}(v)$ by Cauchy--Schwarz, using $\sum_b p_v(b)^2 = \nu_v$ and $\Pr_i[\pi'(v, i) = \mathcal{C}(b)_i] \leq 1$, while the second term is at most $\eta$ because $b \notin \mathrm{list}(v)$ implies $\Pr_i[\pi'(v, i)=\mathcal{C}(b)_i] < \eta$.
    The next line applies the union bound together with inclusion--exclusion; the third uses that $\mathcal C$ has distance at least $1-\eta^3$.
    For $\eta \in (0,1/4)$ we have $\binom l2 \eta^3 \leq 2\eta$ and $(1+\eta)\sqrt{\nu_v} \leq \nu_v/\sqrt{\eta} + \sqrt{\eta}$; adding the final $\eta$ term and using $\eta \leq \sqrt{\eta}$ yields $\alpha_v \leq \nu_v/\sqrt{\eta} + 2\sqrt{\eta}$.
    Averaging over $v$ therefore gives
    $\mathrm{val}_{I'}(\pi') \leq \frac{\E[\mathrm{val}_I(\bm \pi)]}{\sqrt{\eta}} + 2\sqrt{\eta}$,
    and hence $\mathrm{val}_{I'}(\pi') \ge s_{\mathrm{AR}}(\varepsilon,\eta)$ implies $\E[\mathrm{val}_I(\bm \pi)] \ge \varepsilon$.

    Item~3 follows because a single source projection swap on an edge $(u,v)$ changes exactly the $k$ transformed projections indexed by $(u,(v,i))$ for $i\in [k]$, while a source left-predicate swap changes exactly one transformed left predicate.
    Applying this step-by-step along a shortest swap path from $I$ to $\tilde I$ gives the stated bound.

    For Item~4, by decomposing the Hamming disagreement into single-coordinate changes and using the triangle inequality for $\EMD$, it suffices to treat the case $d_{\mathrm H}(\pi',\tilde \pi')=1$.
    If $\pi'$ and $\tilde \pi'$ disagree on a right vertex, then the recovery map never reads that label, so $\mathcal R_{\mathrm{AR},I}(\pi')=\mathcal R_{\mathrm{AR},I}(\tilde \pi')$ identically.

    Suppose instead that $\pi'$ and $\tilde \pi'$ disagree on a left vertex $u \in U$ with degree $\deg(u)$.
    Then the recovered assignments disagree on $u$ itself.
    For any neighbor $v$ of $u$, changing $\pi'(u)$ modifies the empirical distribution $p_v$ by moving mass $1/\deg(v)$ from one label to another, so there is a coupling under which the contribution of $v$ to the expected Hamming distance is at most $1/\deg(v) \le 1/\delta_V$.
    Summing over the $\deg(u)$ neighbors of $u$ shows that
    \[
        \EMD\bigl(\mathcal R_{\mathrm{AR},I}(\pi'),\mathcal R_{\mathrm{AR},I}(\tilde \pi')\bigr)
        \le 1 + \frac{\deg(u)}{\delta_V}
        \le 1 + \frac{\Delta_U}{\delta_V}.
    \]

    Finally, we prove Item~5.
    Let $I$ and $\tilde I$ differ by one swap, and fix an assignment $\pi'$ on the common transformed domain.
    If the swap changes a left predicate, then every distribution $p_v$ is unchanged, so the recovered assignments are identical.
    Otherwise the swap changes exactly one source projection, say on an edge $(u,v)$.
    Then the recovered left labels are still identical, and for every $w\neq v$ the distributions $p_w$ agree.
    At the single affected right vertex $v$, the swap changes the empirical distribution by moving mass $1/\deg(v)$ from one label to another, so the total-variation distance between the two distributions is exactly $1/\deg(v)$ and hence the one-coordinate earth mover's distance is at most $1/\deg(v)$.
    Therefore
    \[
        \EMD\bigl(\mathcal R_{\mathrm{AR},I}(\pi'),\mathcal R_{\mathrm{AR},\tilde I}(\pi')\bigr)
        \le \frac{1}{\deg(v)}
        \le \frac{1}{\delta_V}.\qedhere
    \] 
\end{proof}

The point of \Cref{lem:alphabet-reduction} is that the alphabet-reduction step yields both a bound for how the recovery changes under assignment perturbations and a separate bound for how the recovery changes under source swaps. Because $\mathcal R_{\mathrm{AR},I}$ depends on $I$ through the distributions $p_v$, these two bounds are not interchangeable: the first only controls perturbations of assignments on one transformed instance, whereas Item~5 controls what happens when the source instance itself changes.

\subsection{Reduced Label Cover}

In this section, we summarize the combined effect of the applying alphabet reduction and then degree reduction on a label cover instance.
For the soundness thresholds recorded below, we henceforth assume $0<\varepsilon<1/16$, so that $3\sqrt{\varepsilon}$ and $4\sqrt{\varepsilon}$ both lie in $(0,1)$.
Let $I = (U, V, E, \Sigma_U, \Sigma_V, F )$ be a label cover instance with $|U| = n$, $|V| = m$, average left degree $\bar d_U$, average right degree $\bar d_V$, maximum right degree $\Delta_V$, maximum left degree $\Delta_U$, and minimum right degree $\delta_V$.
\Cref{tab:reduced-label-cover} tracks how these parameters change when we first apply the alphabet-reduction map induced by a code $\mathcal C : \Sigma_V \to \sigma^k$ of relative distance $1 - \eta^3$, and then run the degree-reduction step based on $[D_v, d, \lambda]$-expanders tailored to the right-degree $D_v$ of each vertex $v \in V$.

\begin{table}[t!]
    \centering
    \caption{Parameters after the alphabet-then-degree reduction with $0<\varepsilon<1/16$, $\eta=\varepsilon/4$, $k = \Theta(\varepsilon^{-6}\log |\Sigma_V|)$, and $d = \Theta(\varepsilon^{-1})$. Degrees are averages unless indicated otherwise.}\label{tab:reduced-label-cover}
    \begin{tabular}{|l|c|c|c|}
    \hline Parameter & $I$ & alphabet $\rightarrow \sigma$ & degree $\rightarrow d$ \\
    \hline \# left vertices & $n$ & $n$ & $n$ \\
    \# right vertices & $m$ & $\Theta(\varepsilon^{-6} m \log |\Sigma_V|)$ & $\Theta(\varepsilon^{-6} m \bar d_V \log |\Sigma_V|)$ \\
    Average left degree & $\bar d_U$ & $\Theta(\varepsilon^{-6} \bar d_U \log |\Sigma_V|)$ & $\Theta(\varepsilon^{-7} \bar d_U \log |\Sigma_V|)$ \\
    Average right degree & $\bar d_V$ & $\bar d_V$ & $\Theta(\varepsilon^{-1})$ \\
    Left alphabet & $\Sigma_U$ & $\Sigma_U$ & $\Sigma_U$ \\
    Right alphabet & $\Sigma_V$ & $\sigma$ ($|\sigma| = \Theta(\varepsilon^{-6})$) & $\sigma$ \\
    Soundness & $\varepsilon$ & $3\sqrt{\varepsilon}$ & $4\sqrt{\varepsilon}$ \\
    \hline
    \end{tabular}    
\end{table}

We instantiate $\mathcal C$ using \Cref{lem:alphabet-reduction} (alphabet reduction) with $\eta = \varepsilon/4$ (which in turn appeals to \Cref{pro:dinur-remark-5.6}), yielding a right alphabet $\sigma$ of size $\Theta(\varepsilon^{-6})$ and block length $k = \Theta(\varepsilon^{-6} \log|\Sigma_V|)$, while the alphabet-reduction soundness threshold becomes $3\sqrt{\varepsilon}$.
For the degree reduction, we take $\lambda = \sqrt{\varepsilon}$, choose $d = \Theta(\varepsilon^{-1})$, and for each right-degree value $D$ that occurs in the common source graph we fix once and for all the explicit $[D,\Theta(\varepsilon^{-1}),\sqrt{\varepsilon}]$-expander returned by \Cref{pro:ramanujan}; every right vertex of degree $D$ uses that same expander.
With these fixed choices, the degree-reduction recovery depends only on the transformed assignment and the fixed expanders, so the only source-instance dependence in the combined alphabet/degree-reduction recovery comes from the alphabet-reduction step.

\begin{lemma}[Alphabet-then-degree reduction]\label{lem:degree-alphabet-reduction}
    Let $0<\varepsilon<1/16$ and $\mathcal I$ be a swap-closed family of label cover instances $I=(U,V,E,\Sigma_U,\Sigma_V,F)$ with $|U|=n$, $|V|=m$, average left degree $\bar d_U$, average right degree $\bar d_V$, maximum right degree $\Delta_V$, maximum left degree $\Delta_U$, and minimum right degree $\delta_V$.
    For each $I\in\mathcal I$, let $\mathrm{red}^{\circ}_{\varepsilon}(I)$ denote the output obtained by first applying the alphabet-reduction map with $\eta=\varepsilon/4$ and then, if $\delta_V>c/\varepsilon$, applying the degree-reduction map with $d=\lceil c/\varepsilon\rceil$; otherwise we stop after alphabet reduction.
    Then every instance $\mathrm{red}^{\circ}_{\varepsilon}(I)=(U',V',E',\Sigma'_L,\Sigma'_R,F')$ satisfies
    \[
        |U'|=n,
        \qquad
        |V'|=O(\varepsilon^{-6} m\bar d_V \log |\Sigma_V|),
        \qquad
        \bar d_{U'}=O(\bar d_U\varepsilon^{-7} \log |\Sigma_V|),
    \]
    has maximum right degree $O\big(\max\{\Delta_V,\varepsilon^{-1}\}\big)$, and has right alphabet $\sigma$ of size at most $O(\varepsilon^{-6})$.
    Moreover, if $R^{\mathrm{red}}_{\varepsilon,I}$ denotes the natural recovery map from assignments on $\mathrm{red}^{\circ}_{\varepsilon}(I)$ back to assignments on $I$, then the following hold.
    \begin{enumerate}
        \item If $I$ is satisfiable, then $\mathrm{red}^{\circ}_{\varepsilon}(I)$ is satisfiable.
        \item If an assignment $\pi'$ for $\mathrm{red}^{\circ}_{\varepsilon}(I)$ satisfies $\mathrm{val}_{\mathrm{red}^{\circ}_{\varepsilon}(I)}(\pi') \ge 4\sqrt{\varepsilon}$, then $\E[\mathrm{val}_I(R^{\mathrm{red}}_{\varepsilon,I}(\pi'))] \ge \varepsilon$.
        \item For all $I,\tilde I\in\mathcal I$, we have
        \[
            \SwapDist\bigl(\mathrm{red}^{\circ}_{\varepsilon}(I),\mathrm{red}^{\circ}_{\varepsilon}(\tilde I)\bigr)
            \le O\!\left(\varepsilon^{-7}\log |\Sigma_V|\right) \cdot \SwapDist(I,\tilde I).
        \]
        \item For every fixed $I\in\mathcal I$ and every two assignments $\pi',\tilde \pi'$ on $\mathrm{red}^{\circ}_{\varepsilon}(I)$, we have
        \[
            \EMD\bigl(R^{\mathrm{red}}_{\varepsilon,I}(\pi'),R^{\mathrm{red}}_{\varepsilon,I}(\tilde \pi')\bigr)
            \le \qty(1+\frac{\Delta_U}{\delta_V}) d_{\mathrm H}(\pi',\tilde \pi').
        \]
        \item If $I,\tilde I\in\mathcal I$ differ by one swap, then for every assignment $\pi'$ on the common domain of $\mathrm{red}^{\circ}_{\varepsilon}(I)$ and $\mathrm{red}^{\circ}_{\varepsilon}(\tilde I)$, we have
        \[
            \EMD\bigl(R^{\mathrm{red}}_{\varepsilon,I}(\pi'),R^{\mathrm{red}}_{\varepsilon,\tilde I}(\pi')\bigr)
            \le \frac{1}{\delta_V}.
        \]
    \end{enumerate}
\end{lemma}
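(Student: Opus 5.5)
The proof composes the two gadgets already analyzed: \Cref{lem:alphabet-reduction} with $\eta=\varepsilon/4$, followed by the degree-reduction map of \Cref{lem:degree-reduction} with $d=\lceil c/\varepsilon\rceil$ and $\lambda=\sqrt\varepsilon$. The combined recovery is $R^{\mathrm{red}}_{\varepsilon,I}:=\mathcal R_{\mathrm{AR},I}\circ\mathcal R_{\mathrm{DR}}$. Here $\mathcal R_{\mathrm{DR}}$ is the shared-index cloud sampler with the fixed expanders.

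\textbf{Parameters.} These follow by reading off the two constructions, as in \Cref{tab:reduced-label-cover}. Alphabet reduction multiplies the right vertex set by $k=O(\varepsilon^{-6}\log|\Sigma_V|)$. It multiplies the left degree by $k$ and keeps right degrees unchanged. Degree reduction then replaces each right vertex by $D_v$ copies, giving $|V'|=|E_{\mathrm{AR}}|=k\,m\bar d_V$. It also multiplies left degrees by $d=O(\varepsilon^{-1})$. The right degree becomes $d$, or stays at most $\Delta_V$ if we stop early. The right alphabet $\sigma$ of size $O(\varepsilon^{-6})$ is untouched by degree reduction.

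\textbf{Completeness and soundness.} Completeness chains directly. For soundness, suppose $\pi'$ has value at least $4\sqrt\varepsilon$. Degree reduction with $\lambda=\sqrt\varepsilon$ costs an additive $O(\lambda)$; choosing $c$ large, this yields value at least $3\sqrt\varepsilon$ in expectation on the alphabet-reduced instance.

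The alphabet-reduction recovery is then applied in expectation. For this we use its concavity-type form. The proof of \Cref{lem:alphabet-reduction} gives the pointwise inequality $\mathrm{val}_{I'}(\pi')\le \E[\mathrm{val}_I]/\sqrt\eta+2\sqrt\eta$. This inequality is linear, so it survives averaging over random $\pi'$. With $\eta=\varepsilon/4$ it gives $\E[\mathrm{val}_I]\ge \sqrt{\eta}(3\sqrt\varepsilon-2\sqrt\eta)=\varepsilon$, which is exactly the threshold we need.

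\textbf{Swap distance.} A source swap induces at most $k$ swaps after alphabet reduction. Each of these induces at most $d$ swaps after degree reduction. Hence $C_{\mathcal T}=kd=O(\varepsilon^{-7}\log|\Sigma_V|)$.

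\textbf{Item~4 (assignment Lipschitz).} Compose the two Lipschitz bounds by gluing couplings. $\mathcal R_{\mathrm{DR}}$ copies left labels, and in expectation it shrinks right disagreements by a factor $1/\delta'$, so its scalar constant is $1$. $\mathcal R_{\mathrm{AR},I}$ has constant $1+\Delta_U/\delta_V$, with source degrees as stated, because alphabet reduction preserves right degrees and left degrees relevant to $p_v$. Gluing the two couplings gives the product bound.

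\textbf{Item~5 (source drift).} The degree-reduction recovery is source-independent because the expanders and orderings are frozen. The drift therefore comes only from $\mathcal R_{\mathrm{AR}}$. Condition on the output of $\mathcal R_{\mathrm{DR}}(\pi')$ and apply Item~5 of \Cref{lem:alphabet-reduction} pointwise, giving $1/\delta_V$.

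\textbf{Main obstacle.} The hard step is soundness bookkeeping. We must ensure the $O(\lambda)$ loss fits within $4\sqrt\varepsilon-3\sqrt\varepsilon$, and that the alphabet-reduction soundness, stated for deterministic $\pi'$, extends to randomized $\pi'$ via the linear inequality above. The early-stop case, where $\delta_V\le c/\varepsilon$, must also be checked separately; it is immediate, since then only alphabet reduction is applied.
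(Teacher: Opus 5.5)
Your proposal is correct and follows the paper's proof essentially step for step. Both use the same composite recovery $\mathcal R_{\mathrm{AR},I}\circ\mathcal R_{\mathrm{DR}}$ and the same branch on $\delta_V$. Item~3 comes from the factor $kd$, item~4 from composing the constants $1$ and $1+\Delta_U/\delta_V$, and item~5 from the source-independence of $\mathcal R_{\mathrm{DR}}$.

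Your explicit averaging of the affine inequality $\mathrm{val}_{I'}(\pi')\le \E[\mathrm{val}_I]/\sqrt\eta+2\sqrt\eta$ over the random output of $\mathcal R_{\mathrm{DR}}$ tightens a step the paper applies only implicitly. One caveat: \emph{choosing $c$ large} absorbs the $O(\lambda)$ loss only if $\lambda=O(d^{-1/2})=O(\sqrt{\varepsilon/c})$, as supplied by \Cref{pro:ramanujan}. It does not work if $\lambda$ is fixed at exactly $\sqrt\varepsilon$.
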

\begin{proof}
    Fix $0<\varepsilon<1/16$ and set $\eta := \varepsilon/4$.
    For every instance $I \in \mathcal I$ let $I^{(1)} := \mathcal T_{\mathrm{AR}}(I)$.
    Write $\mathcal I^{(1)} := \mathcal T_{\mathrm{AR}}(\mathcal I)$.
    Since all instances in $\mathcal I$ share the same underlying graph, they have the same value of $\delta_V$, so the branch condition $\delta_V \le c/\varepsilon$ versus $\delta_V > c/\varepsilon$ is uniform across the family.
    The code has block length $k = \Theta(\eta^{-6}\log |\Sigma_V|) = \Theta(\varepsilon^{-6}\log |\Sigma_V|)$ and target alphabet size $|\sigma| = \Theta(\eta^{-6}) = \Theta(\varepsilon^{-6})$.
    Consequently, every $I^{(1)}$ satisfies: $|U^{(1)}| = |U| = n$, $|V^{(1)}| = m k$, average left degree $\bar d_U^{(1)} = k \bar d_U$, average right degree $\bar d_V^{(1)} = \bar d_V$, maximum right degree $\Delta_V^{(1)}=\Delta_V$, minimum right degree $\delta_V^{(1)}=\delta_V$, and right alphabet $\sigma$.

    By \Cref{lem:alphabet-reduction}, the map $\mathcal T_{\mathrm{AR}}$ with recovery $\mathcal R_{\mathrm{AR},I}$ has soundness threshold $s_{\mathrm{AR}}(\varepsilon,\eta)=3\sqrt{\varepsilon}$.
    It also has a factor-$k$ bound for how the transformation changes under source swaps, an assignment-perturbation constant $\qty(1 + \Delta_U/\delta_V)$, and a source-swap recovery constant $1/\delta_V$.

    If $\delta_V \leq c/\varepsilon$ for a sufficiently large absolute constant $c$, we simply set
    \[
        \begin{aligned}
            \mathrm{red}^{\circ}_{\varepsilon}(I) &:= I^{(1)}, \\
            R^{\mathrm{red}}_{\varepsilon,I} &:= \mathcal R_{\mathrm{AR},I}.
        \end{aligned}
    \]
    Then all five items follow immediately from the preceding paragraph and from the size, degree, and alphabet bounds recorded above, since
    $k = O(\varepsilon^{-6}\log |\Sigma_V|) \le O(\varepsilon^{-7}\log |\Sigma_V|)$.

    Suppose now that $\delta_V > c/\varepsilon$ and proceed to the degree-reduction branch.
    Let $d := \lceil c/\varepsilon \rceil$; by choice of the branch we have $d \le \delta_V$.
    The family $\mathcal I^{(1)}$ has a common underlying graph and common alphabets, and the expander choices for each right-degree value in that graph were fixed above.
    Thus the degree-reduction gadget from \Cref{sec:degree-reduction} applies uniformly to the instances $I^{(1)}$.
    Let $I^{(2)} := \mathcal T_{\mathrm{DR}}(I^{(1)}; d)$, and let $R_{\mathrm{DR}}$ be the degree-reduction recovery map.
    Define
    \[
        \begin{aligned}
            \mathrm{red}^{\circ}_{\varepsilon}(I) &:= I^{(2)}, \\
            R^{\mathrm{red}}_{\varepsilon,I} &:= \mathcal R_{\mathrm{AR},I} \circ R_{\mathrm{DR}}.
        \end{aligned}
    \]
    Each instance $I^{(2)}$ satisfies
    \[
        |U^{(2)}| = n,
        \qquad
        |V^{(2)}| = k |E| = k m \bar d_V,
        \qquad
        \bar d_U^{(2)} = d k \bar d_U.
    \]
    Moreover, it has right degree $d = \Theta(\varepsilon^{-1})$ and right alphabet $\sigma$.
    Hence the same size, degree, and alphabet bounds claimed in the statement hold in this branch as well.

    We next verify the five items.
    Completeness is preserved by both alphabet reduction and degree reduction, proving Item~1.
    For Item~2, the proof of \Cref{lem:degree-reduction} gives an additive soundness loss of $O(d^{-1/2})$ when recovering from $I^{(2)}$ to $I^{(1)}$.
    Since $d = \Theta(\varepsilon^{-1})$, this loss is $O(\sqrt{\varepsilon})$; choosing the absolute constant $c$ large enough ensures that every assignment of value at least $4\sqrt{\varepsilon}$ on $I^{(2)}$ recovers via $R_{\mathrm{DR}}$ to an assignment of value at least $3\sqrt{\varepsilon}$ on $I^{(1)}$.
    Applying the alphabet-reduction soundness guarantee then yields $\E[\mathrm{val}_I(R^{\mathrm{red}}_{\varepsilon,I}(\pi'))] \ge \varepsilon$.

    For Item~3, alphabet reduction contributes a factor $k$, and degree reduction contributes a further factor $d$.
    Therefore
    $\SwapDist\bigl(\mathrm{red}^{\circ}_{\varepsilon}(I),\mathrm{red}^{\circ}_{\varepsilon}(\tilde I)\bigr)
        \le O(kd)\cdot \SwapDist(I,\tilde I)
        = O\!\left(\varepsilon^{-7}\log |\Sigma_V|\right)\cdot \SwapDist(I,\tilde I)$.

    For Item~4, the proof of \Cref{lem:degree-reduction} gives assignment-perturbation constant $1$ for $R_{\mathrm{DR}}$.
    Composing that with Item~4 of \Cref{lem:alphabet-reduction} yields
    $\EMD\bigl(R^{\mathrm{red}}_{\varepsilon,I}(\pi'),R^{\mathrm{red}}_{\varepsilon,I}(\tilde \pi')\bigr)
        \le \qty(1+\frac{\Delta_U}{\delta_V}) d_{\mathrm H}(\pi',\tilde \pi')$.

    Finally, for Item~5 we use the same fixed expanders on both $I^{(1)}$ and $\tilde I^{(1)}$.
    Under these fixed choices, $R_{\mathrm{DR}}$ depends only on the transformed assignment and on the fixed expanders, so it contributes nothing to the bound for how the recovery changes under source swaps.
    Therefore the only source-swap contribution comes from the alphabet-reduction recovery map.
    Item~5 of \Cref{lem:alphabet-reduction} gives
    \[
        \EMD\bigl(R^{\mathrm{red}}_{\varepsilon,I}(\pi'),R^{\mathrm{red}}_{\varepsilon,\tilde I}(\pi')\bigr)
        \le \frac{1}{\delta_V}.
    \]
    This completes the proof.
\end{proof}

In particular, \Cref{lem:degree-alphabet-reduction} records the alphabet-then-degree reduction as an output map together with its soundness threshold, the bound for how the transformation changes under source swaps, the bound for how the recovery changes under assignment perturbations, and the bound for how the recovery changes under source swaps.
These are the data used later in the framework of \Cref{sec:dh-iteration}, which separately tracks how the recovery map changes under assignment perturbations and under source swaps.
\section{Decodable PCPs and Composition}\label{sec:dh-composition}

This section develops the decodable PCP ingredient used in the low-soundness label cover construction and proves the composition theorem that is applied at each stage of the iteration in \Cref{sec:dh-iteration}. Starting from a suitable outer label cover instance and a decodable PCP (dPCP) decoder, we construct the next label cover instance in the iteration and record the completeness, soundness, and stability properties needed there. In this section, the ``source'' instance is the outer label cover instance fed into the composition, so ``source swaps'' mean swaps on that outer instance. We state the result for left-predicate label cover instances so that the same theorem applies before and after swaps. In the special case of ordinary label cover, where all predicates are identically~$1$, the argument is identical.

\subsection{Decodable PCPs}
For a finite alphabet $\Sigma$, we let $\mathsf{CircuitSAT}_\Sigma$ denote the satisfiability problem for circuits whose inputs range over $\Sigma$: given a circuit $C : \Sigma^k \to \{0,1\}$, the instance is a Yes instance if there exists $y \in \Sigma^k$ with $C(y) = 1$ and a No instance otherwise.
Any such vector $y$ is called a \emph{satisfying assignment} for $C$.
Throughout this section, the \emph{decision complexity} of a circuit means the size of its given circuit representation. In particular, we do not minimize over equivalent circuits, and we use the same size measure for the local decoding maps and admissible-neighborhood circuits below.

Informally, a PCP decoder is given a circuit $C$, an index $j \in [k]$, and oracle access to an auxiliary proof string $\pi \in \sigma^{m(n)}$, where $n$ denotes the decision complexity of $C$. The proof is simply an additional string over the proof alphabet $\sigma$; it is not part of the input circuit, and for our purposes we only need the existence of such a string in the completeness case. Using $r(n)$ random coins, the decoder chooses $q(n)$ locations of $\pi$ to query and then applies a local decoding rule of complexity $s(n)$ to the queried symbols, with the goal of recovering the target symbol $y_j$. Thus $m(n)$ is the proof length, while $q(n)$, $r(n)$, and $s(n)$ are the query complexity, randomness complexity, and local-decoder complexity, respectively.

\begin{definition}[PCP decoder]\label{def:pcp-decoder}
    Fix a proof alphabet $\sigma$.
    A PCP decoder for $\mathsf{CircuitSAT}_\Sigma$ is a probabilistic polynomial-time algorithm $\mathcal D$ that, on input a circuit $C : \Sigma^k \to \{0,1\}$ whose decision complexity is $n$ and an index $j \in [k]$, uses $r(n)$ random coins and outputs:
    \begin{enumerate}
        \item A tuple of query locations $I = (i_1,\ldots,i_q) \in [m(n)]^q$ for some $q = q(n)$ and proof length $m(n)$, and
        \item A deterministic local decoding map $f : \sigma^q \to \Sigma \cup \{\bot\}$ whose decision complexity is at most $s(n)$.
    \end{enumerate}
    After reading the proof symbols at positions $I$, the decoder returns $f(\pi_I)$, where $\pi \in \sigma^{m(n)}$ is the purported proof and $\pi_I := (\pi_{i_1},\ldots,\pi_{i_q})$ is the queried substring.
\end{definition}

\begin{definition}[Decodable PCP]\label{def:dpcp}
    Let $\delta : \mathbb Z_{>0} \to [0,1]$ and $\mathsf L : \mathbb Z_{>0} \to \mathbb Z_{>0}$.
    A PCP decoder $\mathcal D$ for $\mathsf{CircuitSAT}_\Sigma$ is a decodable PCP (dPCP) with soundness error $\delta$ and list size $\mathsf L$ if the following conditions hold for every circuit $C : \Sigma^k \to \{0,1\}$.
    \begin{itemize}
        \itemsep=0pt
        \item \textbf{Completeness:} For every satisfying assignment $y \in \Sigma^k$ there exists a proof $\pi \in \sigma^{m(n)}$ such that, when $j \in [k]$ is chosen uniformly at random and the randomness of $\mathcal D$ is taken over $(I,f)$, we have
        $\Pr_{j,I,f}\bigl[f(\pi_I) = y_j \bigr] = 1$.
        \item \textbf{List-decoding soundness:} For every proof $\pi \in \sigma^{m(n)}$ there exists a list $Y = \{y^{(1)},\ldots,y^{(\ell)}\}$ of satisfying assignments for $C$ with $\ell \leq \mathsf L(n)$ such that
        \begin{equation}\label{eq:decodable-PCP-soundness}
            \Pr_{j,I,f}\bigl[f(\pi_I) \in \{\bot, y^{(1)}_j,\ldots,y^{(\ell)}_j\}\bigr] \ge 1-\delta(n).
        \end{equation}
        \item \textbf{Robust soundness:} Define
        $\mathrm{Good}_Y(f,j) := \bigl\{w \in \sigma^{q} : f(w) \in \{\bot, y^{(1)}_j,\ldots,y^{(\ell)}_j\}\bigr\}$.
        For $w\in \sigma^q$ and $S\subseteq \sigma^q$, let
        $\dist(w,S) := \min_{w'\in S}\ \frac{1}{q}\cdot \mathrm d_H(w,w')$
        be the (normalized) Hamming distance of $w$ to $S$.
        The decoder $\mathcal D$ is \emph{robust} with error $\delta$ if the list from the previous item can be chosen so that for every proof $\pi$,
        $\E_{j,I,f}\bigl[\dist(\pi_I,\mathrm{Good}_Y(f,j))\bigr] \le \delta(n)$.
        Equivalently, on average (over the decoder's randomness and the choice of $j$), one must modify at most a $\delta(n)$-fraction of the queried symbols to make the local view decode to $\bot$ or to a value consistent with the list.
    \end{itemize}
\end{definition}

\begin{lemma}[Theorem 6.5 of~\cite{dinur2013composition}]\label{lem:dinur-harsha-decoder}
    There exist constants $a_1,a_2,\alpha,\gamma > 0$ such that for every $\delta \geq n^{-\alpha}$ and input alphabet $\Sigma$ of size at most $n^\gamma$, $\mathsf{CircuitSAT}_\Sigma$ has a dPCP decoder with soundness error $\delta$ and list size $\mathsf L \le 2/\delta$, query complexity $n^{1/8}$, proof alphabet size at most $n^\gamma$, proof length at most $n^{a_1}$, local-decoder decision complexity at most $n^{a_1}$, and randomness complexity $a_2 \log n$.
\end{lemma}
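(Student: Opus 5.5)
This lemma is imported from Dinur--Harsha, so the plan is to verify that their Theorem~6.5 yields exactly the stated parameters in the conventions of \Cref{def:pcp-decoder,def:dpcp}, rather than to re-prove it. First I would recall their construction. It is a Reed--Muller based decodable PCP for $\mathsf{CircuitSAT}_\Sigma$. The assignment $y\in\Sigma^k$ is encoded via a low-degree extension over a field of size $n^{\Theta(\gamma)}$ on a dimension chosen so that the proof length is polynomial. The circuit's satisfiability is then verified by a zero-on-subcube/sum-check style test, combined with a plane-vs-point (or curve) local decoding of the target coordinate $y_j$.

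\textbf{Matching the parameters.}
\begin{itemize}
\item Randomness is $a_2\log n$, since the verifier samples $O(1)$ field points in a polynomial-size space.
\item Proof length and local-decoder circuit size are $n^{a_1}$.
\item Query complexity is $n^{1/8}$, obtained by reading an entire low-dimensional subspace/curve at once. The query count is set by taking the dimension parameter so that one subspace has $n^{1/8}$ points.
\item The proof alphabet has size $n^{\gamma}$, the field size (points carry field elements, or tuples of them after the usual bundling).
\end{itemize}

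\textbf{Soundness properties.} Completeness is immediate from the honest encoding. For list-decoding soundness, the list-decoding form of the low-degree test (the point-object theorem, \Cref{thm:point-object-seeded}, is the same mechanism) gives at most $O(1/\delta)\le 2/\delta$ global candidates. Each candidate failing the circuit check is rejected except with Schwartz--Zippel probability $O(\deg/|\F|)\le n^{-\alpha}$, which explains the constraint $\delta\ge n^{-\alpha}$. Robust soundness follows because the tests read whole subspace restrictions. By the standard robustness of the low-degree test, a view that is accepted only after few symbol changes is itself close to a low-degree restriction consistent with the list. So the expected distance to $\mathrm{Good}_Y$ is bounded by the same soundness error, after adjusting constants.

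\textbf{Main obstacle.} The hard step is reconciling conventions. Dinur--Harsha state robustness and list-decoding with their own normalizations, and their alphabet for $\Sigma$ of size up to $n^\gamma$ is handled by bundling. I would check that their list-size bound is $\le 2/\delta$ (rather than $O(1/\delta)$), or rescale $\delta$ by a constant to absorb the difference. I would also check that decision complexity measured as given circuit size, as fixed in this section, matches their size measure up to polynomial factors, which are absorbed into $a_1$.
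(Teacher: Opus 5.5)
Your plan matches the paper, which gives no proof of this lemma and simply imports it as Theorem~6.5 of Dinur--Harsha; your sketch of the Reed--Muller construction is background, not something that has to be established. One correction to your hedging: an $O(1/\delta)$ list bound does not give $\mathsf L\le 2/\delta$, and rescaling cannot fix this, because soundness error $\delta$ forces you to use a parameter $\delta'\le\delta$, whence $C/\delta'\ge C/\delta>2/\delta$ whenever $C>2$. So the constant $2$ has to be read off the cited statement, and if that statement only gave $C/\delta$, the right repair would be to carry $C$ into the downstream composition arithmetic (e.g.\ the constant $20$ in $\varepsilon_{i+1}=20\varepsilon_i^{1/4}$).
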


\subsection{Composition}\label{sec:composition}

We next adapt the composition framework of Dinur and Harsha~\cite{dinur2013composition} to left-predicate label cover instances.
The key point is that the inner decoder should see only the set of admissible tuples of neighboring right labels at each outer left vertex, rather than the full left alphabet $\Sigma_U$.
This is captured by the following notion of local description complexity.

\begin{definition}[Admissible Neighborhood Tuples and Local Description Complexity]\label{def:local-description-instance}
    Let
    \[
        I=\big(U,V,E,\Sigma_U,\Sigma_V,\{P_u\}_{u\in U},F=\{f_e\}_{e\in E}\big)
    \]
    be a left-predicate label cover instance.
    For every $u\in U$, fix an ordering
    \[
        N_I(u)=(v_1,\ldots,v_{\deg_I(u)})
    \]
    of the neighbors of $u$.
    A tuple $(b_1,\ldots,b_{\deg_I(u)})\in \Sigma_V^{\deg_I(u)}$ is \emph{admissible at $u$} if there exists $a\in \Sigma_U$ such that
    $P_u(a)=1$ and $f_{(u,v_t)}(a)=b_t$ for every $t\in [\deg_I(u)]$.
    The \emph{admissible neighborhood set} at $u$ is the set of all admissible tuples, denoted by
    \[
        \mathcal N_I(u) \subseteq \Sigma_V^{\deg_I(u)}.
    \]
    We say that $I$ has \emph{local description complexity at most $S$} if, for every $u\in U$, there exists a circuit
    $C_u : \Sigma_V^{\deg_I(u)} \to \{0,1\}$
    of decision complexity at most $S$ such that $C_u^{-1}(1)=\mathcal N_I(u)$.
    A family has local description complexity at most $S$ if every instance in the family does.
\end{definition}

At a high level, the composition procedure proceeds as follows; an illustration is given in \Cref{fig:composition} and a dPCP decoder $\cal D$. Fix an outer instance $I=(U,V,E,\Sigma_U,\Sigma_V,\{P_u\}_{u\in U},F)$. For each outer right vertex $v\in V$ and decoder randomness string $r\in\{0,1\}^{\mathsf r}$, the composition creates one composed left vertex $(v,r)$, and for each outer left vertex $u\in U$ and proof position $t\in[m]$, it creates one composed right vertex $(u,t)$. A label of $(v,r)$ is a $d_V\times m$ matrix, where $d_V$ is the right-degree of $V$, over the proof alphabet $\sigma$: the columns are proof positions, and the rows index the at most $d_V$ neighbors of $v$. If $u_1,\ldots,u_{t(v)}$ are the actual neighbors of $v$ in the outer instance, where $t(v):=\deg_I(v)$, then the rows of this matrix which are indexed by $u_1,\ldots,u_{t(v)}$ are the \emph{active rows}, and the remaining rows are \emph{unused rows}, which are added only so that every assignment has the same size. For every active row $i$ and every proof position $t\in[m]$, we include an edge from $(v,r)$ to $(u_i,t)$, and a projection constraint on this edge. To define the left-predicate for $(v,t)$, the decoder $\cal D$ is run on $(C_{u_i},j_i)$ with randomness $r$, where $C_{u_i}^{-1}(1) = {\cal N}_I(u_i)$ and $v$ is the $j_i$-th neighbor of $u_i$. The output selects a few columns $t_{i,1},\ldots, t_{i,q}$ from each active row of the matrix and uses these to define the left predicate for $(v,r)$. This means that the query pattern affects the left predicates, but not the edges of the resulting label cover instance.

\begin{algorithm}[t!]
    \caption{Composition $I \circledast \mathcal D$}\label{alg:composition}
    \KwIn{A left-predicate label cover instance $I$ together with circuits $\{C_u\}_{u\in U}$ deciding admissible neighborhood tuples, and a dPCP decoder $\mathcal D$ for $\mathsf{CircuitSAT}_{\Sigma_V}$}
    \KwOut{The composed instance $I'$}
    $U' \gets V \times \{0,1\}^{\mathsf r}$\;
    $V' \gets U \times [m]$\;
    $\Sigma'_U \gets \sigma^{d_V m}$; $\Sigma'_V \gets \sigma$\;
    $E' \gets \emptyset$; $F' \gets \emptyset$\;
    Interpret $a \in \Sigma'_U$ as a $d_V \times m$ matrix over $\sigma$\;
    $(a_{i,t})_{i \in [d_V], t \in [m]}$\;
    \For{$v \in V$}{
        Let $t(v):=\deg_I(v)$ \;
        Let $u_1,\ldots,u_{t(v)}$ be the neighbors of $v$ in $I$\;
        \For{$i\in [t(v)]$}{
            Let $N_I(u_i)=(w_{i,1},\ldots,w_{i,\deg_I(u_i)})$ be the fixed ordering of the neighbors of $u_i$\;
            Let $j_i\in[\deg_I(u_i)]$ be such that $w_{i,j_i}=v$\;
        }
        \For{$r \in \{0,1\}^{\mathsf r}$}{
            \For{$i \in [t(v)]$}{
                Run $\mathcal D$ on input $(C_{u_i},j_i)$ using randomness $r$\;
                Obtain query tuple $I_{i,r}=(t_{i,1},\ldots,t_{i,q})\in[m]^q$\;
                and local decoder $g_{i,r} : \sigma^q \to \Sigma_V\cup\{\bot\}$\;
            }
            \For{$i \in [t(v)]$}{
                $z_i(a) \gets g_{i,r}(a_{i,t_{i,1}},\ldots,a_{i,t_{i,q}})$\;
            }
            If $t(v)=0$, set $P'_{(v,r)}(a)=1$ for every $a$\;
            Otherwise, set $P'_{(v,r)}(a)=1$ iff
            $z_1(a)=\cdots=z_{t(v)}(a)\neq \bot$\;
            \For{$i \in [t(v)]$}{
                \For{$t \in [m]$}{
                    Add edge $e = ((v,r),(u_i,t))$ to $E'$\;
                    Define the projection $f'_e : \Sigma'_U\to \Sigma'_V$ by $f'_e(a)=a_{i,t}$\;
                    add $f'_e$ to $F'$\;
                }
            }
        }
    }
    \Return $I'$
\end{algorithm}

\begin{figure}
    \centering
       \begin{subfigure}[b]{0.4\textwidth}
    \centering
        \begin{tikzpicture}[scale=1.1]
            \tikzset{inner sep=0,outer sep=3}
            \tikzstyle{e}=[inner sep=1pt, inner ysep=1pt,outer sep=0.5pt,
            draw=Orange!70!white, fill=yellow!10!white, dashed, thick, rounded corners=6pt, align=center]
            \tikzstyle{a}=[inner sep=4pt, inner ysep=4pt,outer sep=0.5pt,
            draw=black!40!white, fill=Cerulean!10!white, very thick, rounded corners=6pt, align=center]
            \tikzstyle{c}=[inner sep=5pt, inner ysep=5pt,outer sep=0.5pt,
            draw=black!40!white, fill=Cerulean!10!white, very thick, rounded corners=6pt, align=center]
            \tikzstyle{b}=[inner sep=3.5pt, inner ysep=3.5pt,outer sep=0.5pt,
            draw=black!40!white, fill=Cerulean!10!white, very thick, rounded corners=6pt, align=center]
            \tikzstyle{d}=[inner sep=1pt, inner ysep=1pt,outer sep=0.5pt,
            draw=black!40!white, fill=Cerulean!10!white, very thick, rounded corners=6pt, align=center]
                \node[d, circle] (L1) at (-2.5,-1.5) {$u_{t(v)}$};
                \node[a, circle] (L2) at (-2.5,1.5) {$u_1$};
                \node[c, circle] (R1) at (0,0) {$v$};
                \node[b, circle] (R2) at (0,2.5) {\tiny $w_{i,1}$};

                \node[b, circle] (R3) at (0,1.4) {\tiny $w_{i,2}$};

                \node[d, circle] (R4) at (0,-2) {\tiny $w_{i,deg}$};

                \node at (-2.5,0.8) {\large $\vdots$};

                \node at (0,-0.9) {\large $\vdots$};

                \node[a, circle] (L3) at (-2.5,0) {$u_i$};

                \node at (-2.5,-0.6) {\large $\vdots$};
                
                \draw[a] (L1) -- (R1);
                \draw[a] (L2) -- (R1);
                \draw[a] (L3) -- (R1);
                \draw[a] (L3) -- (R2);
                \draw[a] (L3) -- (R3);
                \draw[a] (L3) -- (R4);

                \draw[e] (-4,-0.5) -- (-4,0.5) -- (-3,0.5) -- (-3,-0.5) -- cycle;

                \node at (-3.5,0) {$P_{u_i}$};

                \draw[e] (-4,1) -- (-4,2) -- (-3,2) -- (-3,1) -- cycle;

                \draw[e] (-4,-2) -- (-4,-1) -- (-3,-1) -- (-3,-2) -- cycle;
                
                 \node at (-3.5,1.5) {$P_{u_1}$};

                \node at (-3.5,-1.5) {$P_{u_{t(v)}}$};

                \node at (1,-0.05)  { $=w_{i,j_i} $};

            \end{tikzpicture} 
            \caption{The vertices on which the composed-instance-transformation for right-vertex $v$ depends on. For each $i \in [\mathrm{deg}(v)]$, we run the decoder ${\cal D}\big( (C_{u_i},j_i),r \big)$ to get $(t_{i,1},\ldots, t_{i,q}, g_{i,r})$ and denote $z_i(a) := g_{i,r}(a_{1,t_{i,1}},\ldots, a_{1,t_{i,q}})$.}
    \end{subfigure}
    \begin{subfigure}[b]{0.4\textwidth}
    \centering
        \begin{tikzpicture}[scale=1.1]
            \tikzset{inner sep=0,outer sep=3}
            
            \tikzstyle{a}=[inner sep=1pt, inner ysep=1pt,outer sep=0.5pt,
            draw=black!40!white, fill=Cerulean!10!white, very thick, rounded corners=6pt, align=center]
            \tikzstyle{b}=[inner sep=1pt, inner ysep=1pt,outer sep=0.5pt,
            draw=black!40!white, fill=Cerulean!40!white, very thick, rounded corners=6pt, align=center]
            \tikzstyle{c}=[inner sep=3pt, inner ysep=3pt,outer sep=0.5pt,
            draw=black!40!white, fill=Cerulean!10!white, very thick, rounded corners=6pt, align=center]
            \tikzstyle{d}=[inner sep=1pt, inner ysep=1pt,outer sep=0.5pt,
            draw=black!40!white, fill=Cerulean!10!white, very thick, rounded corners=6pt, align=center]
            \tikzstyle{e}=[inner sep=1pt, inner ysep=1pt,outer sep=0.5pt,
            draw=Orange!70!white, fill=yellow!10!white, dashed, thick, rounded corners=6pt, align=center]

            \tikzstyle{f}=[inner sep=1pt, inner ysep=1pt,outer sep=0.5pt,
            draw=Mulberry!50!white, fill=Mulberry!10!white, thick, rounded corners=6pt, align=center]

            \tikzstyle{g}=[inner sep=1pt, inner ysep=1pt,outer sep=0.5pt,
            draw=Mulberry!50!white, fill=Mulberry!40!white, thick, rounded corners=6pt, align=center]

                  \node[c, circle] (L1) at (3,0.5) {\tiny$(v,r)$};
                    \node[c, circle] (R1) at (6.5,0.5) {\tiny$(u_i,t)$};
                    \node[c, circle] (R2) at (6.5,2.5) {\tiny$(u_1,t)$};
                    \node[d, circle] (R3) at (6.5,-1.5) {\tiny$(u_{t(v)},t)$};
                    \draw[a] (L1) -- (R1);
                    \draw[a] (L1) -- (R2);
                    \draw[a] (L1) -- (R3);

                    \draw[e, rounded corners=4pt] (-0.2,1.4)--(-0.2,2.4)--(3.2,2.4)--(3.2,1.4)--cycle;
                    \node at (1.5,2) {\small $P'_{(v,r)}(a)=$};
                    \node at (1.5,1.7) {\tiny $1[\![z_1(a) = \ldots = z_{t(v)}(a) \neq \bot ]\!]$};
                    
                    \draw[f, rounded corners=0pt] (0.5,-0.75)--(0.5,0.75)--(2,0.75)--(2,-0.75)--cycle;
                    \draw[f, rounded corners=0pt] (0.5,0.25)--(0.5,0.75)--(1,0.75)--(1,0.25)--cycle;
                    \draw[f, rounded corners=0pt] (1,0.25)--(1,0.75)--(1.5,0.75)--(1.5,0.25)--cycle;
                    \draw[f, rounded corners=0pt] (1.5,0.25)--(1.5,0.75)--(2,0.75)--(2,0.25)--cycle;

                    \draw[f, rounded corners=0pt] (0.5,-0.25)--(0.5,0.25)--(1,0.25)--(1,-0.25)--cycle;
                    \draw[f, rounded corners=0pt] (1,-0.25)--(1,0.25)--(1.5,0.25)--(1.5,-0.25)--cycle;
                    \draw[g, rounded corners=0pt] (1.5,-0.25)--(1.5,0.25)--(2,0.25)--(2,-0.25)--cycle;

                    \draw[f, rounded corners=0pt] (0.5,-0.75)--(0.5,-0.25)--(1,-0.25)--(1,-0.75)--cycle;
                    \draw[f, rounded corners=0pt] (1,-0.75)--(1,-0.25)--(1.5,-0.25)--(1.5,-0.75)--cycle;
                    \draw[f, rounded corners=0pt] (1.5,-0.75)--(1.5,-0.25)--(2,-0.25)--(2,-0.75)--cycle; 

                    \node[rotate=34] at (5,1.8) {\small $a \rightarrow a_{i,t}$};
                    
                    \node at (5,0.6) {\small $a \rightarrow a_{i,t}$};

                    \node[rotate=-27] at (5,-.5) {\small $a \rightarrow a_{1,t}$};
                    
                    \node at (0.4,0.9) {$a$};
                    \node at (1.75,1) {$t$};
                    \node at (0.3,0) {$i$};
            \end{tikzpicture}
            \vspace{2em}
            \caption{The resulting label cover instance at $(v,t)$ after composition. The matrix represents the assignment to vertex $(v,r)$.}
    \end{subfigure}
    \caption{The conversion of a vertex $v$ and its neighborhood under composition.}
    \label{fig:composition}
\end{figure}
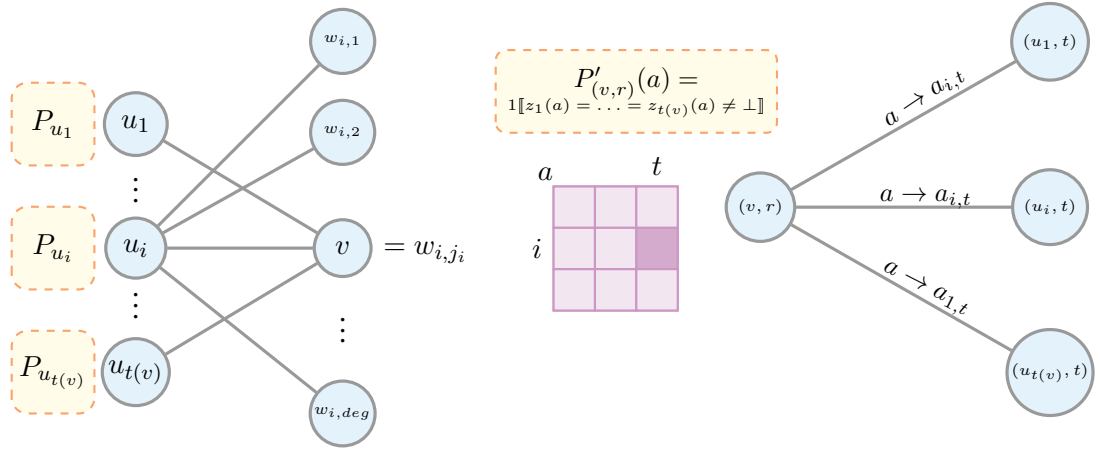

The explicit construction is spelled out in Algorithm~\ref{alg:composition}, and is shown pictorially for a vertex $v$ in \Cref{fig:composition}.
For a fixed outer underlying graph, the vertex and edge sets of the label cover instance obtained by applying composition depend only on that graph and on the decoder parameters, not on the query tuples.
Instance by instance, the only part of the composed verifier that changes is the family of left predicates.
Equivalently, this is a composition on a common padded graph: as in the overview above, the label given to each left vertex is a matrix where active rows are the rows corresponding to actual neighbors, unused rows are padding, and the coordinate-projection edges are fixed in advance across all instances in the family.

The following theorem provides the guarantees that we obtain from composition. 

\begin{theorem}[Composition on a common padded graph]\label{thm:composition}
    Let $\mathcal I$ be a family of left-predicate label cover instances with local description complexity at most $S$
    \[
        I= \big(U,V,E,\Sigma_U,\Sigma_V,\{P_u\}_{u\in U},F=\{f_e\}_{e\in E} \big)
    \]
    whose left degree is at most $d_U$, whose right degree is at most $d_V$, and whose right alphabet is $\Sigma_V$.
    Fix any target outer soundness parameter $\Delta\in[0,1]$.
    Let $N$ be a common upper bound, after padding, on the decision complexities (equivalently, circuit sizes) of the circuits deciding admissible neighborhood tuples from \Cref{def:local-description-instance}.
    Thus, the earlier local-description parameter $S$ enters only through the existence of this common bound $N$: once $N$ is fixed, the decoder parameters depend only on $N$ rather than on $S$ separately.
    Suppose $\mathsf{CircuitSAT}_{\Sigma_V}$ at decision complexity $N$ admits a dPCP decoder $\mathcal D$ with query complexity $q$,  $\mathsf r$ bits of randomness, list size $\mathsf L$, robust soundness error $\delta$, proof alphabet $\sigma$, proof length $m$, and local-decoder decision complexity $s(N)$.
    Let $d$ be any upper bound on the corresponding \emph{decoder query degree}, i.e. on
    \[
        \max_{I\in\mathcal I}\ \max_{u\in U}\ \max_{t\in[m]}
        \Bigl|\bigl\{(j,r,\ell) :
        \text{on input }(C_u,j)\text{ with randomness }r,\text{ slot }\ell\text{ queries }t\bigr\}\Bigr|.
    \]
    This counts only decoder calls in which the proof position $t$ is actually queried. It should be compared with the graph degree of the right vertex $(u,t)$ in Algorithm~\ref{alg:composition}, where edges are added to \emph{all} vertices $t\in[m]$ for every active row. Thus $(u,t)$ has graph degree $\deg_I(u)2^{\mathsf r}$, while $d$ counts only the subset of those incidences coming from decoder calls that truly read proof location~$t$.

For every $I \in \mathcal I$ there is a composed left-predicate label cover instance $I \circledast \mathcal D$, and we write
    $\mathcal I \circledast \mathcal D := \{I \circledast \mathcal D : I \in \mathcal I\}$.
    Then the map
    $\mathcal T_{\mathrm{Comp}} : I \longmapsto I\circledast \mathcal D$
    and the instance-specific recovery maps
    $\mathcal R_{\mathrm{Comp},I}$
    defined below satisfy the following properties.
    Items~(3)--(5) track, respectively, the composition-layer transformation parameter under source swaps, the fixed-source recovery Lipschitz parameter, and the source-swap recovery drift parameter.
    Define
    \[
        C_{T,\mathrm{Comp}}:=d_U2^{\mathsf r},
        \qquad
        C_{R,\mathrm{Comp}}:=d(1+d_V)\,2^{-\mathsf r},
        \qquad
        D_{\mathrm{Comp}}:=d_U(1+d_V)+1.
    \]
    \begin{enumerate}
        \item \textbf{Completeness.}
        If $I$ is satisfiable, then $I\circledast \mathcal D$ is satisfiable.
        \item \textbf{Soundness implication.}
        If a (possibly random) assignment $\bm\pi'$ to $I\circledast \mathcal D$ satisfies
        \[
            \E[\val_{I\circledast \mathcal D}(\bm\pi')] \ge \delta + \mathsf L\Delta,
        \]
        then
        \[
            \E\bigl[\val_I(\mathcal R_{\mathrm{Comp},I}(\bm\pi'))\bigr]\ge \Delta.
        \]
        \item \textbf{Transformation under source swaps.}
        For all $I,\tilde I\in \mathcal I$,
        \[
            \SwapDist\bigl(\mathcal T_{\mathrm{Comp}}(I),\mathcal T_{\mathrm{Comp}}(\tilde I)\bigr)
            \le
            C_{T,\mathrm{Comp}}\,\SwapDist(I,\tilde I).
        \]
        \item \textbf{Recovery under assignment perturbations.}
        For every fixed outer instance $I\in \mathcal I$ and every two assignments $\pi',\tilde\pi'$ to $\mathcal T_{\mathrm{Comp}}(I)$,
        \[
            \EMD\bigl(\mathcal R_{\mathrm{Comp},I}(\pi'),\mathcal R_{\mathrm{Comp},I}(\tilde\pi')\bigr)
            \le
            C_{R,\mathrm{Comp}}\, d_{\mathrm H}(\pi',\tilde\pi').
        \]
        \item \textbf{Recovery under source swaps.}
        If $I$ and $\tilde I$ differ by one outer swap, then for every assignment $\pi'$ to the common composed domain,
        \[
            \EMD\bigl(\mathcal R_{\mathrm{Comp},I}(\pi'),\mathcal R_{\mathrm{Comp},\tilde I}(\pi')\bigr)
            \le
            D_{\mathrm{Comp}},
        \]
        where $D_{\mathrm{Comp}}=d_U(1+d_V)+1$.
        \item \textbf{Structural bounds.}
        For every $I\in \mathcal I$,
        \[
            |U(I\circledast \mathcal D)| = |V(I)|\cdot 2^{\mathsf r},
            \qquad
            |V(I\circledast \mathcal D)| = |U(I)|\cdot m,
        \]
        every left degree is at most $d_V m$, every right degree is at most $d_U 2^{\mathsf r}$, the right alphabet is $\sigma$, the left alphabet is $\sigma^{d_V m}$, and the local description complexity is at most $\mathrm{poly}(d_V,m,s(N))$.
    \end{enumerate}
\end{theorem}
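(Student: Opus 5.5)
The plan is to define the recovery map $\mathcal R_{\mathrm{Comp},I}$ explicitly and then check the six items, roughly in order of difficulty. Let $\pi'=(\pi'_{U'},\pi'_{V'})$ be a composed assignment. For each outer left vertex $u$, the composed right labels $\pi'_{V'}(u,\cdot)\in\sigma^m$ form a purported dPCP proof $\Pi_u$ for the circuit $C_u$.

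\emph{Recovery map.} For each $u$, choose a uniformly random pair $(j,r)$ and run the decoder on $(C_u,j)$ with randomness $r$ against $\Pi_u$. This is the Dinur--Harsha ``decode the left label from the right proof'' step. Concretely, I would instead use the rule that mirrors the alphabet-reduction recovery. Each outer right vertex $v$ gets $\bm\pi(v):=z(\pi'_{U'}(v,r))$ for a uniformly random $r$, where $z$ is the common decoded value when the composed predicate holds and a default value otherwise. Each outer left vertex $u$ gets a uniformly random element of the list $Y_u$ of $\Pi_u$, interpreted as an admissible tuple and lifted to some $a\in\Sigma_U$ with $P_u(a)=1$. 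If the list is empty, $u$ gets a default label.

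\emph{Easy items.} Completeness is routine. Label $(u,t)$ by the honest proof of the satisfying neighborhood tuple of $u$, and label $(v,r)$ by the matrix whose active rows copy these proofs. Every $z_i$ then decodes $\pi(v)$, so all predicates and projections hold.

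Item~(6) is counting directly from Algorithm~\ref{alg:composition}.

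For Item~(3), a projection swap or predicate swap at outer $u$ changes only $C_u$. Hence only the predicates $P'_{(v,r)}$ with $v\in N(u)$ change, which is at most $d_U2^{\mathsf r}$ predicate swaps. The graph and projections are fixed by the padding.

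\emph{Soundness.} I would follow Dinur--Harsha's composition soundness. A satisfied composed edge at $(v,r)$ forces the predicate to hold, so every active row decodes to the same non-$\bot$ value. Robust soundness, with the consistency edges controlling the distance between rows and proofs, implies the following: except with probability $\delta$, the common value equals $y_{j_i}$ for some list element $y\in Y_{u_i}$. A random list choice hits it with probability at least $1/\mathsf L$, which gives the implication from $\delta+\mathsf L\Delta$ to $\Delta$.

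\emph{Perturbation bounds.} For Item~(4), changing one composed left coordinate $(v,r)$ changes $\bm\pi(v)$ only when that $r$ is sampled, which happens with probability $2^{-\mathsf r}$. Changing one right coordinate $(u,t)$ can alter $Y_u$, and hence the label of $u$ and the decodings of the at most $d_V$ neighbouring right vertices that read it. This happens only through the at most $d$ decoder calls that query $t$, each sampled with probability $2^{-\mathsf r}$. Together these give $d(1+d_V)2^{-\mathsf r}$.

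For Item~(5), a source swap at $u$ changes $C_u$, so it can change the labels of $u$ and its neighbours' recoveries, contributing about $d_U(1+d_V)+1$.

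\emph{Main obstacle.} The hard step is making the list-based recovery stable. The list $Y_u$ given by the dPCP is existential and could jump under a one-symbol change. I would fix the problem by defining the recovered left label via the randomized decoder output itself: sample $(j,r)$ and read the decoded value directly, never the list. The list then appears only in the soundness analysis. This keeps Lipschitz constants proportional to query degree over $2^{\mathsf r}$.
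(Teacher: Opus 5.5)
\textbf{The gap is the recovery map.} You never fix a single map to which both your soundness argument and the stability bounds apply.

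Under your concrete rule, outer right labels are read off the composed \emph{left} labels $\pi'_{U'}(v,r)$, and outer left labels are drawn from the list $Y_u$ of $\Pi_u$. The label of $u$ then depends on a canonical but otherwise arbitrary choice of list, which is a global function of all of $\Pi_u$. A single symbol change at $(u,t)$ can change it with probability $1$, and nothing ties this to the decoder calls that read $t$. Meanwhile $d(1+d_V)2^{-\mathsf r}$ may be below $1$, e.g.\ when queries are spread over a long proof.

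Your accounting for item~(4) does not describe this map either. Under your rule no outer right label reads $\Pi_u$. Also, $d_V$ bounds the left neighbours of an outer right vertex, not the right neighbours of $u$, of which there may be $d_U$.

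The proposed repair is ill-typed. A decoder run on $(C_u,j)$ outputs a symbol of $\Sigma_V$, a candidate label for the $j$-th neighbour of $u$, not an element of $\Sigma_U$, so it does not define $\bm\pi_U(u)$. And once the list is dropped, the step ``a random list choice hits it with probability $1/\mathsf L$'' no longer refers to anything in the map.

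\textbf{What the paper does.} It recovers only outer right labels, reads them from the proofs, and makes left labels a deterministic function of them:
\begin{itemize}
\item sample one $r$ shared by all $v$;
\item let $\bm\pi_V(v)$ be the common non-$\bot$ value of $g_{i,r}(w_{i,r})$ over the neighbours $u_i$ of $v$, where $w_{i,r}=(\pi'(u_i,t_{i,1}),\ldots,\pi'(u_i,t_{i,q}))$ is read from the composed \emph{right} vertices (default $b_0$ otherwise);
\item never read composed left labels;
\item let $\bm\pi_U(u)$ be the $\prec$-least $a$ with $P_u(a)=1$ and $f_{(u,v_\ell)}(a)=\bm\pi_V(v_\ell)$ for all $\ell$.
\end{itemize}
Now composed left coordinates are irrelevant. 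A change at $(u,t)$ alters $\bm\pi_V(v)$ only if the sampled $r$ makes the call on $(C_u,j(v))$ query $t$. So at most $d\,2^{-\mathsf r}$ right labels change in expectation, and each changes at most its $d_V$ left neighbours' labels. A swap at $u$ alters only $C_u$, hence only the decodings at the at most $d_U$ neighbours of $u$, their left neighbours, and $u$'s own label. This gives $d_U(1+d_V)+1$.

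\textbf{Separately, your soundness step does not go through on this padded graph.} Edges join each active row of $(v,r)$ to all $m$ columns, so satisfying most edges says nothing about the $q$ queried columns. Suppose each $v$ has a value $b$ such that every neighbour $u_i$ has some $a$ with $P_{u_i}(a)=1$ and $f_{(u_i,v)}(a)=b$. Copy $\Pi_{u_i}$ into row $i$, except at the queried columns, where you plant honest-proof views (from dPCP completeness) decoding to $b$. This satisfies every predicate and at least a $1-q/m$ fraction of edges, whatever the outer value.

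Hence ``consistency edges controlling the distance between rows and proofs'' cannot feed robust soundness. For such an $I$ with $\val(I)<\Delta$ and $\delta+\mathsf L\Delta\le 1-q/m$, no recovery map can satisfy item~(2). The paper's proof, which only cites Dinur--Harsha here, does not address this either.
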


\begin{table}[t]
    \centering
    \caption{Parameter evolution in the composition $I \mapsto I \circledast \mathcal D$.}\label{tab:composition}
    \begin{tabular}{|l|c|c|c|}
        \hline
        Parameter & Outer instance $I$ & Decoder $\mathcal D$ & Composed $I \circledast \mathcal D$ \\
        \hline
        \# left vertices & $|U|$ & $-$ & $|V| \cdot 2^{\mathsf r}$ \\
        \# right vertices & $|V|$ & proof length $m$ & $|U| \cdot m$ \\
        Left degree & $\le d_U$ & proof length $m$ & $\le d_V m$ \\
        Right degree & $\le d_V$ & randomness $2^{\mathsf r}$ & $\le d_U 2^{\mathsf r}$ \\
        Left alphabet & $\Sigma_U$ & $-$ & $\sigma^{d_V m}$ \\
        Right alphabet & $\Sigma_V$ & $\sigma$ & $\sigma$ \\
        Local description complexity & at most $S$ & local maps of size $s(N)$ & $\mathrm{poly}(d_V,m,s(N))$ \\
        Soundness & $\Delta$ & $\delta$ and list size $\mathsf L$ & $\delta + \mathsf L \Delta$ \\
        \hline
    \end{tabular}
\end{table}

Table~\ref{tab:composition} is stated after the decoder has been fixed, so the dependence on the input-side local description complexity has already been absorbed into the choice of the common bound $N$; this is why the composed local description complexity is recorded in terms of $d_V$, $m$, and $s(N)$ rather than carrying $S$ separately.

\begin{proof}
    Write $\mathcal T_{\mathrm{Comp}}$ for Algorithm~\ref{alg:composition}.
    Fix once and for all a total order $\prec$ on $\Sigma_U$ and a default symbol $b_0\in\Sigma_V$.
    For each fixed outer instance $I\in \mathcal I$, let $\mathcal R_{\mathrm{Comp},I}$ denote the following recovery map.
    Given a labeling $\pi' : U' \cup V' \to \Sigma'_U \cup \Sigma'_V$, sample $r \in \{0,1\}^{\mathsf r}$ uniformly and recompute the query tuples $I_{i,r}$ and functions $g_{i,r}$.
    For each $v\in V$, set 
    \[
        \bm\pi_V(v) :=
        \begin{cases}
            b_0  & \text{if }\deg_I(v)=0,\\
            g_{1,r}(w_{1,r}) & \text{if } g_{1,r}(w_{1,r}) = \cdots = g_{t(v),r}(w_{t(v),r}) \neq \bot, \\
            b_0 & \text{otherwise},
        \end{cases}
    \]
    where $u_1,\ldots,u_{t(v)}$ are the neighbors of $v$, $I_{i,r}=(t_{i,1},\ldots,t_{i,q})$ is the query tuple output by $\mathcal D$ on $(C_{u_i},j_i)$ using randomness~$r$, and
    \[
        w_{i,r} := \bigl(\pi'_{(u_i,t_{i,1})},\ldots,\pi'_{(u_i,t_{i,q})}\bigr)\in \sigma^q
    \]
    
    is the local view of the proof at those query locations.
    For each $u\in U$, let $S_u$ be the set of all labels $a\in\Sigma_U$ such that
    $P_u(a)=1$ and $f_{(u,v_\ell)}(a)=\bm\pi_V(v_\ell)$ for all $\ell\in [\deg_I(u)]$.
    Recover the left label by
    \[
        \bm\pi_U(u):=
        \min\nolimits_{\prec} S_u,
    \]
    if $S_u$ is nonempty, and otherwise set $\bm\pi_U(u)$ to the $\prec$-least element of $\Sigma_U$.
    We write $\mathcal R_{\mathrm{Comp},I}(\pi')=(\bm\pi_U,\bm\pi_V)$.

    The same Dinur--Harsha decoding/list-decoding proof from~\cite[Theorem~4.1]{dinur2013composition} applies to the circuits $\{C_u\}_{u\in U}$ deciding admissible neighborhood tuples.
    The only semantic change is that a satisfying assignment for $C_u$ now means an admissible neighborhood labeling around $u$.
    Consequently, whenever the decoded neighborhood labels around a vertex $u$ are consistent with a satisfying assignment of $C_u$, the recovered outer label $\bm\pi_U(u)$ is admissible and satisfies every incident projection.
    This proves items~(1) and~(2).

    \paragraph{Transformation under source swaps.}
    A single outer swap changes either one projection $f_{(u,v)}$ or one left predicate $P_u$.
    In either case, only the circuit $C_u$ deciding admissible neighborhood tuples changes.
    This circuit appears once for each neighbor of $u$, so there are at most $\deg_I(u)\le d_U$ affected composed left vertices for each randomness string.
    Since the composed edge set and the composed projections are fixed coordinate projections, no composed projection changes.
    Hence a single outer swap changes at most
    \[
        d_U 2^{\mathsf r}
    \]
    composed left predicates, which proves item~(3).

    \paragraph{Recovery under assignment perturbations.}
    Couple $\mathcal R_{\mathrm{Comp},I}(\pi')$ and $\mathcal R_{\mathrm{Comp},I}(\tilde\pi')$ by using the same sampled randomness $r$.
    Write
    \[
        \mathcal R_{\mathrm{Comp},I}(\pi')=(\bm\pi_U,\bm\pi_V),\qquad
        \mathcal R_{\mathrm{Comp},I}(\tilde\pi')=(\tilde{\bm\pi}_U,\tilde{\bm\pi}_V)
    \]
    under this coupling.
    If $\pi'$ and $\tilde\pi'$ differ at one left vertex $(v,r_0)\in U'$, then
    \[
        \mathcal R_{\mathrm{Comp},I}(\pi')=\mathcal R_{\mathrm{Comp},I}(\tilde\pi')
    \]
    identically, because the recovery map never reads composed left labels at all.

    If $\pi'$ and $\tilde\pi'$ differ at one right vertex $(u,t)\in V'$, then the common padded graph may still contain many edges incident to $(u,t)$; what matters for the recovery map is only the subset of decoder calls that actually query slot~$t$.
    For each randomness string $\rho\in\{0,1\}^{\mathsf r}$, let
    \[
        Q_{u,t}(\rho)
        :=
        \Bigl\{
            j\in[\deg_I(u)] :
            \begin{array}{l}
                \text{the decoder run on }(C_u,j)\text{ with randomness }\rho \\
                \text{queries slot }t\text{ in at least one position}
            \end{array}
        \Bigr\}.
    \]
    If the sampled randomness equals $\rho$, then only the recovered right labels $\bm\pi_V(v)$ with $v\in N_I(u)$ and neighbor index $j(v)\in Q_{u,t}(\rho)$ can change, where $j(v)$ denotes the position of $v$ in the fixed ordering $N_I(u)$.
    By definition of the query-relevant proof-degree bound,
    \[
        \sum_{\rho\in\{0,1\}^{\mathsf r}} |Q_{u,t}(\rho)| \le d.
    \]
    Therefore under the shared-randomness coupling,
    \[
        \E\big[d_{\mathrm H}(\bm\pi_V,\tilde{\bm\pi}_V)\big]
        \le
        2^{-\mathsf r}\sum_{\rho\in\{0,1\}^{\mathsf r}} |Q_{u,t}(\rho)|
        \le d\cdot 2^{-\mathsf r}.
    \]
    As above, each changed recovered right label can affect at most $d_V$ recovered left labels, hence
    \[
        \E\big[d_{\mathrm H}(\bm\pi_U,\tilde{\bm\pi}_U)+d_{\mathrm H}(\bm\pi_V,\tilde{\bm\pi}_V)\big]
        \le \frac{d(1+d_V)}{2^{\mathsf r}}.
    \]
    Let $h:=d_{\mathrm H}(\pi',\tilde\pi')$ be the Hamming distance between the two assignments, and choose a sequence
    \[
        \pi'=\pi^{(0)},\pi^{(1)},\ldots,\pi^{(h)}=\tilde\pi'
    \]
    in which each consecutive pair differs in exactly one coordinate.
    Applying the Hamming-distance-$1$ bound to each step and then using the triangle inequality for $\EMD$ gives
    \[
        \EMD\bigl(\mathcal R_{\mathrm{Comp},I}(\pi'),\mathcal R_{\mathrm{Comp},I}(\tilde\pi')\bigr)
        \le
        \sum_{s=1}^{h}
        \EMD\bigl(\mathcal R_{\mathrm{Comp},I}(\pi^{(s-1)}),\mathcal R_{\mathrm{Comp},I}(\pi^{(s)})\bigr)
        \le
        \frac{d(1+d_V)}{2^{\mathsf r}}\,h.
    \]
    Thus item~(4) holds with
    \[
        C_{R,\mathrm{Comp}}=\frac{d(1+d_V)}{2^{\mathsf r}}.
    \]

    \paragraph{Recovery under source swaps.}
    Let $I$ and $\tilde I$ differ by one outer swap, and couple $\mathcal R_{\mathrm{Comp},I}(\pi')$ and $\mathcal R_{\mathrm{Comp},\tilde I}(\pi')$ by using the same sampled randomness $r$.
    Write
    \[
        \mathcal R_{\mathrm{Comp},I}(\pi')=(\bm\pi_U,\bm\pi_V),\qquad
        \mathcal R_{\mathrm{Comp},\tilde I}(\pi')=(\widetilde{\bm\pi}_U,\widetilde{\bm\pi}_V)
    \]
    under this coupling.
    Only one circuit deciding admissible neighborhood tuples changes, say at an outer left vertex $u$.
    Therefore the recovered right labels can differ only at right vertices $v$ adjacent to $u$, of which there are at most $d_U$.
    Each changed recovered right label can affect at most $d_V$ recovered left labels, because the recovered left label at a vertex depends only on the labels of its right neighbors.
    In addition, the recovered left label at the swapped vertex $u$ can change directly even if all recovered right labels stay fixed, since the defining feasible set for $\bm\pi_U(u)$ also depends on the local predicate $P_u$ and the incident projections $f_{(u,v)}$.
    Hence the total number of changed recovered labels is at most
    \[
        d_U + d_U d_V + 1 = d_U(1+d_V)+1,
    \]
    and therefore
    \[
        \EMD\bigl(\mathcal R_{\mathrm{Comp},I}(\pi'),\mathcal R_{\mathrm{Comp},\tilde I}(\pi')\bigr)
        \le d_U(1+d_V)+1,
    \]
    proving item~(5).

    \paragraph{Local description complexity.}
    Fix a composed left vertex $(v,r)$ and let $t(v)=\deg_I(v)$.
    A tuple of symbols on the active rows of $(v,r)$, namely the rows indexed by the actual neighbors $u_1,\ldots,u_{t(v)}$ of $v$ in Algorithm~\ref{alg:composition}, extends to an admissible left label in the composed instance if and only if the decoder outputs on the queried coordinates of those rows all agree on the same non-$\bot$ symbol.
    Thus the admissible neighborhood set at $(v,r)$ is recognized by a circuit that evaluates at most $t(v)\le d_V$ local maps $g_{i,r}$, each on coordinates among the $m$ positions of one row, and checks equality of their outputs.
    Its decision complexity is therefore $\mathrm{poly}(d_V,m,s(N))$.

    The remaining structural bounds follow directly from Algorithm~\ref{alg:composition}: the left side is $V\times\{0,1\}^{\mathsf r}$, the right side is $U\times[m]$, each composed left vertex $(v,r)$ has exactly $t(v)m=\deg_I(v)m\le d_V m$ neighbors, and each composed right vertex $(u,t)$ has exactly $\deg_I(u)2^{\mathsf r}\le d_U 2^{\mathsf r}$ neighbors.
    In particular, if $I$ has no isolated vertices on either side, then neither does $I\circledast \mathcal D$.
    This proves item~(6).
\end{proof}

\section{Iteration by Stages}\label{sec:dh-iteration}

This section proves \Cref{thm:intro-label-cover} by applying the Dinur--Harsha iteration stage-by-stage. Beginning with a stage-$0$ family, each subsequent stage consists of applying composition
\[
    \mathcal I_i \xrightarrow{\ \circledast \mathcal D_i\ } \mathcal J_{i+1},
\]
followed by the fixed alphabet-then-degree reduction
\[
    \mathcal J_{i+1} \xrightarrow{\ \mathrm{red}_{\varepsilon_{i+1}}^\circ\ } \mathcal I_{i+1},
\]
where the stage parameter moves from $\varepsilon_i$ to $\varepsilon_{i+1}=20\varepsilon_i^{1/4}$. Throughout each iteration we track stage families on one common underlying graph together with maps from the hard base family and constants $C_T^{(i)},C_R^{(i)},D_i$, which control, respectively, how the target instance changes under one source swap, how the recovery changes when a labeling on a fixed target instance is perturbed, and how the recovery itself drifts under a source swap. \Cref{sec:dh-bookkeeping} isolates the lemmas that show how these constants evolve from one step to the next, \Cref{sec:dh-red-interface} describes the fixed alphabet/degree reduction on a common family, the next two subsections initialize stage~$0$ and formalize the one stage update, and the final subsection completes the proof of \Cref{thm:intro-label-cover}. Because each of these constants grows only polynomially in $1/\varepsilon_i$ or $1/\varepsilon_{i+1}$, the recursion $\varepsilon_{i+1}=20\varepsilon_i^{1/4}$ keeps the cumulative loss polylogarithmic and the final sensitivity lower bound polynomial.

\begin{center}
\small
\begin{tabular}{@{}l p{0.68\linewidth}@{}}
\toprule
Symbol & Meaning \\
\midrule
$N_{\mathrm{base}}$ & Size of the bounded-degree $2$CSP family $\Phi_{N_{\mathrm{base}}}$ from the Fleming--Yoshida lower bound \\
$n_0$ & Initial target size used to initialize the iteration \\
$n_{\mathrm{LC}}$ & Final left-side size of the label cover instance in the proof of \Cref{thm:intro-label-cover} \\
$\varepsilon_i$ & Stage parameter at stage $i$ \\
$\chi_i$ & Recovery threshold at stage $i$, defined by $\chi_i=9\sqrt{\varepsilon_i}$ \\
$\xi_i$ & Decoder soundness parameter at stage $i$, defined by $\xi_i=\varepsilon_i^{1/4}$ \\
$\mathcal I_i,\ \mathcal J_{i+1}$ & Stage-$i$ family after reduction and intermediate family after composition \\
$C_T^{(i)},C_R^{(i)},D_i$ & Bounds for transformation under source swaps, recovery under assignment perturbations, and recovery under source swaps \\
\bottomrule
\end{tabular}
\end{center}

A \emph{source swap} means one swap on the input side of the reduction currently under discussion: globally this is a swap in $\Phi_{N_{\mathrm{base}}}$, while inside one stage it may mean a swap in the preceding label cover family. The stage families inherited from \Cref{sec:robust} still carry left predicates, so we keep that language throughout the iteration to maintain a common underlying graph under swaps; in the ordinary label cover special case, all predicates are identically~$1$.

\subsection{How the Constants Evolve}\label{sec:dh-bookkeeping}
Theorem~\ref{thm:composition} is the concrete one-layer composition statement. The next lemmas extract from it the update rules that are used later: how the transformation changes under source swaps, how the recovery changes under assignment perturbations, and how the recovery changes under source swaps propagate through one more fixed layer. In these lemmas, the input side of the current reduction step is the source family and the image side is the target family; we reserve ``output'' for assignments or distributions produced by algorithms or recovery maps. We begin with the following immediate specialization of \Cref{lem:neighboring-witness}: once a swap-closed family lives on a common underlying graph, a one-swap bound for the recovery drift extends to arbitrary swap distance by the same shortest-path triangle-inequality argument.

\begin{lemma}[Pathwise extension of a one-swap recovery bound]\label{lem:pathwise-drift}
    Let $\mathcal J$ be a swap-closed family of instances whose outputs all live on one common underlying graph.
    For each $I\in \mathcal J$, let $R_I$ be a recovery map from assignments on that common domain to some output space.
    Assume that for every one-swap pair $I,\widetilde I\in \mathcal J$ and every assignment $\pi$ on the common domain,
    \[
        \EMD\bigl(R_I(\pi),R_{\widetilde I}(\pi)\bigr)\le D_{\mathrm{single}}.
    \]
    Then for every $I,\widetilde I\in \mathcal J$ and every assignment $\pi$ on the common domain,
    \[
        \EMD\bigl(R_I(\pi),R_{\widetilde I}(\pi)\bigr)
        \le
        D_{\mathrm{single}}\cdot \SwapDist(I,\widetilde I).
    \]
\end{lemma}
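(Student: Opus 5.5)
The plan is to mirror the proof of \Cref{lem:neighboring-witness}: walk along a shortest swap path and apply the triangle inequality for $\EMD$. Fix $I,\widetilde I\in\mathcal J$ and an assignment $\pi$ on the common domain. We may assume $k:=\SwapDist(I,\widetilde I)$ is finite, since otherwise the bound is vacuous. If $k=0$, then $I=\widetilde I$, so $R_I(\pi)=R_{\widetilde I}(\pi)$ and both sides vanish.

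For $k\ge 1$, I would pick a shortest swap path $I=I^{(0)},I^{(1)},\ldots,I^{(k)}=\widetilde I$ in which each consecutive pair differs by exactly one projection or predicate swap. Because $\mathcal J$ is swap-closed, every intermediate $I^{(t)}$ lies in $\mathcal J$. Every member lives on the same underlying graph, so $\pi$ is a valid assignment for each $I^{(t)}$ and each recovery map $R_{I^{(t)}}(\pi)$ is defined. The one-swap hypothesis therefore applies to each pair $(I^{(t)},I^{(t+1)})$ with the same fixed $\pi$, giving
\[
\EMD\bigl(R_{I^{(t)}}(\pi),R_{I^{(t+1)}}(\pi)\bigr)\le D_{\mathrm{single}}.
\]

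Since $\EMD$ is a metric on distributions over the output space (the $1$-Wasserstein distance with respect to Hamming distance), I would glue optimal couplings along the path, exactly as in the proof of \Cref{lem:sensitivity-region-reduction-transfer}. This yields
\[
\EMD\bigl(R_I(\pi),R_{\widetilde I}(\pi)\bigr)\le\sum_{t=0}^{k-1}\EMD\bigl(R_{I^{(t)}}(\pi),R_{I^{(t+1)}}(\pi)\bigr)\le k\,D_{\mathrm{single}},
\]
which is the claimed bound.

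The argument is essentially immediate. The only point that needs care is that every intermediate instance stays in $\mathcal J$ and shares the common domain, so the hypothesis can be applied with one fixed $\pi$ at every step; swap-closedness together with the common underlying graph guarantees exactly this.
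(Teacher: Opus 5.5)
Your proposal is correct and matches the paper's argument: the paper defines $A_\pi(I):=R_I(\pi)$ and invokes \Cref{lem:neighboring-witness}, whose proof is exactly the shortest-swap-path triangle-inequality argument you carry out directly. Your explicit checks that swap-closedness keeps every intermediate instance in $\mathcal J$ and that $\pi$ remains a valid input at each step are the same points the paper relies on.
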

\begin{proof}
    For the fixed assignment $\pi$, define an algorithm $A_\pi$ on $\mathcal J$ by
    \[
        A_\pi(I):=R_I(\pi).
    \]
    Because all instances in $\mathcal J$ share one common underlying graph, the same assignment $\pi$ is a valid input to every recovery map $R_I$.
    The hypothesis says exactly that for every one-swap pair $I,\widetilde I\in\mathcal J$,
    \[
        \EMD\bigl(A_\pi(I),A_\pi(\widetilde I)\bigr)\le D_{\mathrm{single}}.
    \]
    Applying \Cref{lem:neighboring-witness} to $A_\pi$ therefore gives
    \[
        \EMD\bigl(R_I(\pi),R_{\widetilde I}(\pi)\bigr)
        \le D_{\mathrm{single}}\cdot \SwapDist(I,\widetilde I),
    \]
    as claimed.
\end{proof}

The next lemma gives the update rule that will be used twice in each stage update.
Start with a map $\Phi\mapsto \mathrm{Inst}(\Phi)$ from the base family to some intermediate instances, together with recovery maps $\mathrm{Rec}_\Phi$ back to the base instance.
The constants $(C_T,C_R,D)$ record, respectively, how far the target instance can move under one source swap, how sensitively the recovery depends on perturbing the assignment while the source instance is fixed, and how much the recovery itself can change when the source instance changes but the assignment is held fixed.
Now suppose that, after producing the intermediate instances, we apply one further transformation step
\[
    I\mapsto T(I)
\]
to each intermediate instance $I$, together with a recovery map $R_I$ from assignments to $T(I)$ back to assignments to $I$. Assume that this additional step comes with its own analogous constants $(C_{T,\mathrm{layer}},C_{R,\mathrm{layer}},D_{\mathrm{layer}})$.
The lemma says that after composing the two steps, the new instance map $\Phi\mapsto \mathrm{Inst}^+(\Phi)$ and the composed recoveries $\mathrm{Rec}^+_\Phi$ again satisfy the same three tracking properties, with the expected update rule: the transformation and assignment-sensitivity constants multiply, while the part measuring how the recovery map changes under source swaps contributes an additive term after passing through the old Lipschitz constant $C_R$.

\begin{lemma}[Composing reduction data with one additional layer]\label{lem:compose-reductions}
    Let $\Phi_N$ be a source family, and let $\mathcal K:=\{\mathrm{Inst}(\Phi):\Phi\in\Phi_N\}$ be the corresponding target family.
    Suppose we are given recoveries $\mathrm{Rec}_\Phi$, one for each source instance $\Phi$, and constants $C_T,C_R,D$ such that the following hold.
    Items~(1) and~(3) below are one-source-swap bounds, whereas item~(2) is a fixed-instance Lipschitz bound for arbitrary assignment perturbations.
    \begin{enumerate}
        \item For every one-swap pair $\Phi,\widetilde\Phi\in\Phi_N$,
        \[
            \SwapDist\bigl(\mathrm{Inst}(\Phi),\mathrm{Inst}(\widetilde\Phi)\bigr)\le C_T;
        \]
        \item For every fixed $\Phi$ and every two assignments $x,\widetilde x$ on $\mathrm{Inst}(\Phi)$,
        \[
            \EMD\bigl(\mathrm{Rec}_\Phi(x),\mathrm{Rec}_\Phi(\widetilde x)\bigr)
            \le
            C_R \cdot d_{\mathrm H}(x,\widetilde x);
        \]
        \item For every one-swap pair $\Phi,\widetilde\Phi\in\Phi_N$, the two target instances $\mathrm{Inst}(\Phi)$ and $\mathrm{Inst}(\widetilde\Phi)$ live on a common domain, and for every assignment $x$ on that common domain,
        \[
            \EMD\bigl(\mathrm{Rec}_\Phi(x),\mathrm{Rec}_{\widetilde\Phi}(x)\bigr)\le D.
        \]
    \end{enumerate}
    Larger source distances will later be handled by \Cref{lem:pathwise-drift}, so here we record only the one-swap bounds on the source side.
    Suppose moreover that we are given another transformation step on $\mathcal K$: for each intermediate instance $I\in\mathcal K$, it produces a new instance $T(I)$ together with a recovery map $R_I$ from assignments on $T(I)$ back to assignments on $I$. Assume that this step has associated constants $C_{T,\mathrm{layer}}, C_{R,\mathrm{layer}}, D_{\mathrm{layer}}$ such that:
    \begin{enumerate}
        \item[(a)] for every $I,\widetilde I\in\mathcal K$,
        \[
            \SwapDist\bigl(T(I),T(\widetilde I)\bigr)
            \le
            C_{T,\mathrm{layer}}\SwapDist(I,\widetilde I);
        \]
        \item[(b)] for each fixed $I\in\mathcal K$ and every two assignments $y,\widetilde y$ on $T(I)$,
        \[
            \EMD\bigl(R_I(y),R_I(\widetilde y)\bigr)
            \le
            C_{R,\mathrm{layer}} d_{\mathrm H}(y,\widetilde y);
        \]
        \item[(c)] for every one-swap pair $\Phi,\widetilde\Phi\in\Phi_N$, the two instances $T(\mathrm{Inst}(\Phi))$ and $T(\mathrm{Inst}(\widetilde\Phi))$ live on a common domain, and for every assignment $y$ on that common domain,
        \[
            \EMD\bigl(R_{\mathrm{Inst}(\Phi)}(y),R_{\mathrm{Inst}(\widetilde\Phi)}(y)\bigr)
            \le
            D_{\mathrm{layer}}.
        \]
    \end{enumerate}
    Define
    \[
        \mathrm{Inst}^+(\Phi):=T(\mathrm{Inst}(\Phi)),
        \qquad
        \mathrm{Rec}^+_\Phi:=\mathrm{Rec}_\Phi\circ R_{\mathrm{Inst}(\Phi)}.
    \]
    Then the composed tracking constants satisfy
    \[
        C_T^+=C_T C_{T,\mathrm{layer}},
        \qquad
        C_R^+=C_R C_{R,\mathrm{layer}},
        \qquad
        D^+\le D + C_R D_{\mathrm{layer}},
    \]
    where for every one-swap pair $\Phi,\widetilde\Phi\in\Phi_N$,
    \[
        \SwapDist\bigl(\mathrm{Inst}^+(\Phi),\mathrm{Inst}^+(\widetilde\Phi)\bigr)\le C_T^+,
    \]
    for every fixed $\Phi$ and every two assignments $y,\widetilde y$ on $\mathrm{Inst}^+(\Phi)$,
    \[
        \EMD\bigl(\mathrm{Rec}^+_\Phi(y),\mathrm{Rec}^+_\Phi(\widetilde y)\bigr)
        \le
        C_R^+ d_{\mathrm H}(y,\widetilde y),
    \]
    and for every one-swap pair $\Phi,\widetilde\Phi$ and every assignment $y$ on the common domain of $\mathrm{Inst}^+(\Phi)$ and $\mathrm{Inst}^+(\widetilde\Phi)$,
    \[
        \EMD\bigl(\mathrm{Rec}^+_\Phi(y),\mathrm{Rec}^+_{\widetilde\Phi}(y)\bigr)\le D^+.
    \]
\end{lemma}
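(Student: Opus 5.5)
The proof treats the three composed constants separately. Each one follows from the corresponding hypotheses by chaining couplings together with the triangle inequality for $\EMD$. Throughout, fix a one-swap pair $\Phi,\widetilde\Phi\in\Phi_N$ and write $I:=\mathrm{Inst}(\Phi)$ and $\widetilde I:=\mathrm{Inst}(\widetilde\Phi)$.

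\emph{Transformation constant.} Item~(1) gives $\SwapDist(I,\widetilde I)\le C_T$. Both $I$ and $\widetilde I$ lie in $\mathcal K$, so hypothesis~(a) applies directly and gives
\[
\SwapDist(T(I),T(\widetilde I))\le C_{T,\mathrm{layer}}\SwapDist(I,\widetilde I)\le C_TC_{T,\mathrm{layer}}.
\]
This is the bound for $C_T^+$.

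\emph{Recovery Lipschitz constant.} Fix $\Phi$ and two assignments $y,\widetilde y$ on $\mathrm{Inst}^+(\Phi)$.
\begin{enumerate}
\item Hypothesis~(b) yields a coupling $\Gamma$ of $R_I(y)$ and $R_I(\widetilde y)$ with $\E_\Gamma[d_{\mathrm H}(\bm x,\widetilde{\bm x})]\le C_{R,\mathrm{layer}}\,d_{\mathrm H}(y,\widetilde y)$.
\item Sample $(\bm x,\widetilde{\bm x})\sim\Gamma$. Conditionally on this pair, item~(2) gives an optimal coupling of $\mathrm{Rec}_\Phi(\bm x)$ and $\mathrm{Rec}_\Phi(\widetilde{\bm x})$ with expected distance at most $C_R\,d_{\mathrm H}(\bm x,\widetilde{\bm x})$.
\item Glue the two couplings. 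This is a measurable-selection issue only; the spaces are finite, so it is harmless.
\item Average over $\Gamma$. The expected distance is at most $C_RC_{R,\mathrm{layer}}\,d_{\mathrm H}(y,\widetilde y)$, which is the bound for $C_R^+$.
\end{enumerate}
The same argument shows a general fact: if a recovery map $\mathrm{Rec}_\Phi$ is $C_R$-Lipschitz on deterministic inputs, then it is $C_R$-Lipschitz in $\EMD$ on input distributions, that is, $\EMD(\mathrm{Rec}_\Phi(P),\mathrm{Rec}_\Phi(Q))\le C_R\,\EMD(P,Q)$.

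\emph{Drift.} Fix $y$ on the common domain of $T(I)$ and $T(\widetilde I)$. Insert the intermediate distribution $\mathrm{Rec}_\Phi(R_{\widetilde I}(y))$ and apply the triangle inequality:
\[
\EMD\bigl(\mathrm{Rec}^+_\Phi(y),\mathrm{Rec}^+_{\widetilde\Phi}(y)\bigr)
\le \EMD\bigl(\mathrm{Rec}_\Phi(R_I(y)),\mathrm{Rec}_\Phi(R_{\widetilde I}(y))\bigr)
+\EMD\bigl(\mathrm{Rec}_\Phi(R_{\widetilde I}(y)),\mathrm{Rec}_{\widetilde\Phi}(R_{\widetilde I}(y))\bigr).
\]
For the first term, apply the $\EMD$-lifted Lipschitz fact with the fixed source $\Phi$. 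Hypothesis~(c) gives $\EMD(R_I(y),R_{\widetilde I}(y))\le D_{\mathrm{layer}}$, so the first term is at most $C_RD_{\mathrm{layer}}$.

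For the second term, sample $\bm x\sim R_{\widetilde I}(y)$. Item~(3) applies pointwise because $\bm x$ lies on the common domain of $I$ and $\widetilde I$, so the conditional $\EMD$ is at most $D$ for every $\bm x$. Averaging over $\bm x$ (taking a mixture of the conditional couplings) gives a bound of $D$.

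Summing the two terms gives $D^+\le D+C_RD_{\mathrm{layer}}$.

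The main point to handle carefully is domain bookkeeping. Applying $\mathrm{Rec}_\Phi$ to an output of $R_{\widetilde I}$ requires that $I$ and $\widetilde I$ share a domain. Item~(3) supplies exactly this, and it is why the intermediate distribution $\mathrm{Rec}_\Phi(R_{\widetilde I}(y))$ is well defined. Beyond that, the argument only requires stating the lifting of deterministic Lipschitz bounds to $\EMD$ cleanly.
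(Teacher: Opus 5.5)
Your proof is correct and follows essentially the same route as the paper. The transformation bound chains item~(1) with (a); the Lipschitz bound pushes item~(2) pointwise through a coupling from (b); and the drift bound passes through the intermediate distribution $\mathrm{Rec}_\Phi(R_{\widetilde I}(y))$, with the two terms controlled by $C_R D_{\mathrm{layer}}$ and $D$. The only cosmetic difference is that the paper glues the two conditional couplings inside one optimal coupling of $R_I(y)$ and $R_{\widetilde I}(y)$, whereas you use the triangle inequality for $\EMD$ plus the lifted Lipschitz fact, which amounts to the same gluing.
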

\begin{proof}
    The bound for how the transformation changes under source swaps is immediate:
    \[
        \begin{aligned}
            \SwapDist\bigl(\mathrm{Inst}^+(\Phi),\mathrm{Inst}^+(\widetilde\Phi)\bigr)
            &=
            \SwapDist\bigl(T(\mathrm{Inst}(\Phi)),T(\mathrm{Inst}(\widetilde\Phi))\bigr) \\
            &\le
            C_{T,\mathrm{layer}}\SwapDist\bigl(\mathrm{Inst}(\Phi),\mathrm{Inst}(\widetilde\Phi)\bigr) \\
            &\le
            C_T C_{T,\mathrm{layer}}.
        \end{aligned}
    \]
    Likewise, for fixed $\Phi$ and assignments $y,\widetilde y$ on $\mathrm{Inst}^+(\Phi)$, couple $R_{\mathrm{Inst}(\Phi)}(y)$ and $R_{\mathrm{Inst}(\Phi)}(\widetilde y)$ optimally and apply the bound for how $\mathrm{Rec}_\Phi$ changes under assignment perturbations pointwise under that coupling.
    This gives
    \[
        \begin{aligned}
            \EMD\bigl(\mathrm{Rec}^+_\Phi(y),\mathrm{Rec}^+_\Phi(\widetilde y)\bigr)
            &=
            \EMD\Bigl(
                \mathrm{Rec}_\Phi\bigl(R_{\mathrm{Inst}(\Phi)}(y)\bigr), \mathrm{Rec}_\Phi\bigl(R_{\mathrm{Inst}(\Phi)}(\widetilde y)\bigr)
            \Bigr) \\
            &\le
            C_R\,\EMD\Bigl(
                R_{\mathrm{Inst}(\Phi)}(y),
                R_{\mathrm{Inst}(\Phi)}(\widetilde y)
            \Bigr) \\
            &\le
            C_R C_{R,\mathrm{layer}}\, d_{\mathrm H}(y,\widetilde y).
        \end{aligned}
    \]
    For the bound for how the recovery changes under source swaps, fix a one-swap pair $\Phi,\widetilde\Phi\in\Phi_N$ and an assignment $y$ on the common domain of $\mathrm{Inst}^+(\Phi)$ and $\mathrm{Inst}^+(\widetilde\Phi)$.
    Let $(\bm X,\widetilde{\bm X})$ be an optimal coupling of the two distributions
    \[
        R_{\mathrm{Inst}(\Phi)}(y)
        \qquad\text{and}\qquad
        R_{\mathrm{Inst}(\widetilde\Phi)}(y).
    \]
    Because $\mathrm{Inst}(\Phi)$ and $\mathrm{Inst}(\widetilde\Phi)$ live on a common domain by assumption, we may view $\bm X$ and $\widetilde{\bm X}$ as assignments on that same domain.
    Assumption~(c) gives
    \[
        \E\bigl[d_{\mathrm H}(\bm X,\widetilde{\bm X})\bigr]
        =
        \EMD\bigl(R_{\mathrm{Inst}(\Phi)}(y),R_{\mathrm{Inst}(\widetilde\Phi)}(y)\bigr)
        \le
        D_{\mathrm{layer}}.
    \]
    Conditioning on $\bm X=x$ and $\widetilde{\bm X}=\widetilde x$, choose an optimal coupling of $\mathrm{Rec}_\Phi(x)$ and $\mathrm{Rec}_\Phi(\widetilde x)$, and an optimal coupling of $\mathrm{Rec}_\Phi(\widetilde x)$ and $\mathrm{Rec}_{\widetilde\Phi}(\widetilde x)$.
    Both of these couplings have the same middle marginal, namely the distribution $\mathrm{Rec}_\Phi(\widetilde x)$.
    Hence we may glue them to obtain a joint distribution on $(\bm Z,\bm Z',\widetilde{\bm Z})$ whose $(\bm Z,\bm Z')$-marginal is an optimal coupling of $\mathrm{Rec}_\Phi(x)$ and $\mathrm{Rec}_\Phi(\widetilde x)$, and whose $(\bm Z',\widetilde{\bm Z})$-marginal is an optimal coupling of $\mathrm{Rec}_\Phi(\widetilde x)$ and $\mathrm{Rec}_{\widetilde\Phi}(\widetilde x)$.
    Averaging this joint coupling over $(\bm X,\widetilde{\bm X})$ yields
    \begin{align*}
        \EMD\bigl(\mathrm{Rec}^+_\Phi(y),\mathrm{Rec}^+_{\widetilde\Phi}(y)\bigr)
        &\le
        \E\Bigl[
            \EMD\bigl(\mathrm{Rec}_\Phi(\bm X),\mathrm{Rec}_\Phi(\widetilde{\bm X})\bigr)
            +
            \EMD\bigl(\mathrm{Rec}_\Phi(\widetilde{\bm X}),\mathrm{Rec}_{\widetilde\Phi}(\widetilde{\bm X})\bigr)
        \Bigr] \\
        &\le
        C_R\,\E\big[d_{\mathrm H}(\bm X,\widetilde{\bm X})\bigr] + D \\
        &\le
        D + C_R D_{\mathrm{layer}}. 
    \end{align*}
\end{proof}

\begin{lemma}[Sensitivity pullback when the recovery map depends on the source instance]\label{lem:sensitivity-pullback}
    Let $\mathcal X$ be a source family and $\mathcal Y$ a target family, each equipped with a swap distance and Hamming distance on their outputs.
    For each $I\in \mathcal X$, let $\mathrm{Inst}(I)\in \mathcal Y$ and let $\mathrm{Rec}_I$ be a recovery map from assignments on the target instance $\mathrm{Inst}(I)$ to assignments on $I$.
    Assume there are constants $C_T,C_R,D$ such that for every one-swap pair $I,\widetilde I\in \mathcal X$:
    \begin{enumerate}
        \item
        $\SwapDist\bigl(\mathrm{Inst}(I),\mathrm{Inst}(\widetilde I)\bigr)\le C_T$;
        \item For every fixed $I$ and every two assignments $y,\widetilde y$ on the target instance $\mathrm{Inst}(I)$,
        $\EMD\bigl(\mathrm{Rec}_I(y),\mathrm{Rec}_I(\widetilde y)\bigr)
            \le
            C_R d_{\mathrm H}(y,\widetilde y)$;
        \item The two target instances $\mathrm{Inst}(I)$ and $\mathrm{Inst}(\widetilde I)$ live on a common target domain, and for every assignment $y$ on that domain,
        $\EMD\bigl(\mathrm{Rec}_I(y),\mathrm{Rec}_{\widetilde I}(y)\bigr)\le D$.
    \end{enumerate}
    Let $\mathcal Y'\supseteq \{\mathrm{Inst}(I):I\in \mathcal X\}$ be swap-closed, and let $A$ be an algorithm on $\mathcal Y'$.
    Define
    \[
        B(I):=\mathrm{Rec}_I\bigl(A(\mathrm{Inst}(I))\bigr).
    \]
    Then for every one-swap pair $I,\widetilde I\in \mathcal X$,
    \[
        \EMD\bigl(B(I),B(\widetilde I)\bigr)
        \le
        C_T C_R \SwapSens(A,\mathcal Y') + D.
    \]
    Consequently,
    \[
        \SwapSens(B,\mathcal X)\le C_T C_R \SwapSens(A,\mathcal Y') + D.
    \]
\end{lemma}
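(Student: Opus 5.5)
The plan is to reuse the two-term triangle-inequality decomposition from the proof of \Cref{lem:sensitivity-reduction-transfer}. Fix a one-swap pair $I,\widetilde I\in\mathcal X$, and write $Y:=\mathrm{Inst}(I)$ and $\widetilde Y:=\mathrm{Inst}(\widetilde I)$. By hypothesis~(3) these live on a common target domain, so $A(Y)$ and $A(\widetilde Y)$ are distributions over assignments on that domain, and both $\mathrm{Rec}_I$ and $\mathrm{Rec}_{\widetilde I}$ can be applied to either one. Passing through the intermediate distribution $\mathrm{Rec}_I(A(\widetilde Y))$, the triangle inequality for $\EMD$ gives
\[
    \EMD\bigl(B(I),B(\widetilde I)\bigr)
    \le
    \EMD\bigl(\mathrm{Rec}_I(A(Y)),\mathrm{Rec}_I(A(\widetilde Y))\bigr)
    +
    \EMD\bigl(\mathrm{Rec}_I(A(\widetilde Y)),\mathrm{Rec}_{\widetilde I}(A(\widetilde Y))\bigr).
\]
The second term is bounded by $D$: sample $\bm y\sim A(\widetilde Y)$, couple $\mathrm{Rec}_I(\bm y)$ and $\mathrm{Rec}_{\widetilde I}(\bm y)$ optimally conditioned on $\bm y$, and apply hypothesis~(3) pointwise before averaging.

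For the first term, I would first bound $\EMD(A(Y),A(\widetilde Y))\le C_T\,\SwapSens(A,\mathcal Y')$. By hypothesis~(1) there is a swap path $Y=Y_0,Y_1,\ldots,Y_k=\widetilde Y$ with $k\le C_T$ and consecutive members one swap apart. Since $\mathcal Y'$ is swap-closed and contains $Y$, every $Y_j$ lies in $\mathcal Y'$. Each step therefore costs at most $\SwapSens(A,\mathcal Y')$, and the triangle inequality sums these costs (as in \Cref{lem:neighboring-witness}). Next, take an optimal coupling $(\bm y,\widetilde{\bm y})$ of $A(Y)$ and $A(\widetilde Y)$. Conditioned on the pair, couple $\mathrm{Rec}_I(\bm y)$ and $\mathrm{Rec}_I(\widetilde{\bm y})$ optimally; hypothesis~(2) bounds the conditional cost by $C_R\,d_{\mathrm H}(\bm y,\widetilde{\bm y})$. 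Averaging gives a first term of at most $C_R\cdot C_T\,\SwapSens(A,\mathcal Y')$. Adding the two terms yields the per-pair bound, and maximizing over one-swap pairs gives the $\SwapSens(B,\mathcal X)$ statement, since $\SwapSens$ is defined as a maximum over single swaps.

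The main point needing care is hypothesis~(2). It is stated for assignments on $\mathrm{Inst}(I)$, but $\widetilde{\bm y}$ is drawn from $A(\widetilde Y)$, which is an assignment on $\widetilde Y$. This is legitimate precisely because hypothesis~(3) puts $Y$ and $\widetilde Y$ on one common domain, so $\widetilde{\bm y}$ is also an assignment on $\mathrm{Inst}(I)$ and $\mathrm{Rec}_I$ is defined on it. The other subtlety is measurability of the conditional optimal couplings, which is harmless here because all spaces are finite.
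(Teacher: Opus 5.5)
Your proposal is correct and follows essentially the same route as the paper. Both use the same triangle inequality through $\mathrm{Rec}_I(A(\mathrm{Inst}(\widetilde I)))$, bound the first term via a swap path in the swap-closed $\mathcal Y'$ plus pointwise use of hypothesis~(2) under an optimal coupling, and bound the second term by pointwise use of hypothesis~(3). You also make explicit that hypothesis~(2) can be applied to assignments drawn from $A(\mathrm{Inst}(\widetilde I))$ only because hypothesis~(3) puts both target instances on a common domain, which the paper uses silently.
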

\begin{proof}
    Set
    $S:=\SwapSens(A,\mathcal Y')$.
    Fix a one-swap pair $I,\widetilde I\in \mathcal X$.
    Then
    \begin{align*}
        \EMD\bigl(B(I),B(\widetilde I)\bigr)
        &\le
        \EMD\Bigl(
            \mathrm{Rec}_I\bigl(A(\mathrm{Inst}(I))\bigr),
            \mathrm{Rec}_I\bigl(A(\mathrm{Inst}(\widetilde I))\bigr)
        \Bigr) \\
        &\quad+
        \EMD\Bigl(
            \mathrm{Rec}_I\bigl(A(\mathrm{Inst}(\widetilde I))\bigr),
            \mathrm{Rec}_{\widetilde I}\bigl(A(\mathrm{Inst}(\widetilde I))\bigr)
        \Bigr).
    \end{align*}
    For the first term, because $\mathcal Y'$ is swap-closed and contains both target instances, a shortest swap path
    \[
        J^{(0)}=\mathrm{Inst}(I),\ J^{(1)},\ \ldots,\ J^{(k)}=\mathrm{Inst}(\widetilde I),
        \qquad
        k:=\SwapDist\bigl(\mathrm{Inst}(I),\mathrm{Inst}(\widetilde I)\bigr),
    \]
    lies entirely in $\mathcal Y'$.
    By the definition of
    \[
        S=\SwapSens(A,\mathcal Y'),
    \]
    each step satisfies
    \[
        \EMD\bigl(A(J^{(t)}),A(J^{(t+1)})\bigr)\le S
        \qquad\text{for every }t\in\{0,\ldots,k-1\}.
    \]
    Hence
    \[
        \EMD\bigl(A(\mathrm{Inst}(I)),A(\mathrm{Inst}(\widetilde I))\bigr)
        \le
        kS
        \le
        C_T S.
    \]
    Couple $A(\mathrm{Inst}(I))$ and $A(\mathrm{Inst}(\widetilde I))$
    optimally and apply the bound for how the recovery changes under assignment perturbations pointwise under that coupling.
    This gives
    \[
        \EMD\Bigl(
            \mathrm{Rec}_I\bigl(A(\mathrm{Inst}(I))\bigr),
            \mathrm{Rec}_I\bigl(A(\mathrm{Inst}(\widetilde I))\bigr)
        \Bigr)
        \le
        C_R\,\EMD\bigl(A(\mathrm{Inst}(I)),A(\mathrm{Inst}(\widetilde I))\bigr)
        \le
        C_T C_R S.
    \]
    For the second term, sample
    $\bm y\sim A(\mathrm{Inst}(\widetilde I))$
    and couple $\mathrm{Rec}_I(\bm y)$ with $\mathrm{Rec}_{\widetilde I}(\bm y)$ optimally.
    Assumption~(3) then gives
    \[
        \EMD\Bigl(
            \mathrm{Rec}_I\bigl(A(\mathrm{Inst}(\widetilde I))\bigr),
            \mathrm{Rec}_{\widetilde I}\bigl(A(\mathrm{Inst}(\widetilde I))\bigr)
        \Bigr)
        \le
        D.
    \]
    Combining the two bounds proves the claim.
\end{proof}

\begin{corollary}[Lower-bound consequence]\label{cor:sensitivity-pullback-lb}
    In the setting of \Cref{lem:sensitivity-pullback}, if
    $\SwapSens(B,\mathcal X)\ge L$,
    then
    $\SwapSens(A,\mathcal Y')\ge \frac{L-D}{C_T C_R}$.
    In particular, if $D=o(L)$, then
    $\SwapSens(A,\mathcal Y')=\Omega\!\left(\frac{L}{C_T C_R}\right)$.
\end{corollary}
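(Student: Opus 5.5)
This is a direct rearrangement of \Cref{lem:sensitivity-pullback}. I would argue by contraposition, i.e.\ by plugging one inequality into the other.

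\textbf{Step 1.} Fix the algorithm $A$ on $\mathcal Y'$ and let $B$ be the induced algorithm $B(I)=\mathrm{Rec}_I(A(\mathrm{Inst}(I)))$. \Cref{lem:sensitivity-pullback} gives
\[
\SwapSens(B,\mathcal X)\le C_T C_R\,\SwapSens(A,\mathcal Y')+D .
\]

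\textbf{Step 2.} Combine this with the hypothesis $\SwapSens(B,\mathcal X)\ge L$ to get
\[
L\le C_TC_R\,\SwapSens(A,\mathcal Y')+D .
\]
Subtract $D$ and divide by $C_TC_R$; I assume $C_TC_R>0$, and if $C_TC_R=0$ the displayed inequality is read as the trivial statement.

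\textbf{Step 3.} For the ``in particular'' clause, assume $D=o(L)$. Then $L-D\ge L/2$ for all large parameters, so
\[
\SwapSens(A,\mathcal Y')\ge \frac{L}{2C_TC_R}=\Omega\!\left(\frac{L}{C_TC_R}\right).
\]

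There is no real obstacle. The only point to state carefully is that the hypothesis is meant for the specific induced algorithm $B$, namely the one built from $A$ through the recovery maps. In applications the lower bound $L$ on $B$ comes from the base family's sensitivity lower bound, together with the soundness transfer ensuring that $B$ achieves the required approximation guarantee.
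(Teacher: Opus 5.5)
Your proposal is correct and is the paper's argument: the paper proves the corollary in one line by substituting $\SwapSens(B,\mathcal X)\ge L$ into the bound of Lemma~\ref{lem:sensitivity-pullback} and rearranging. Your handling of $C_TC_R>0$ and the step $L-D\ge L/2$ for the asymptotic clause just makes explicit what the paper leaves implicit; calling it ``contraposition'' is unnecessary, since it is a direct substitution.
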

\begin{proof}
    Rearranging the bound from \Cref{lem:sensitivity-pullback} gives the claim.
\end{proof}

\subsection{The Fixed Alphabet/Degree Reduction on a Common Family}\label{sec:dh-red-interface}
The following definitions and lemmas describe how $\mathrm{red}_\varepsilon^\circ$---alphabet-then-degree reduction from \Cref{lem:degree-alphabet-reduction} ---is used in a single stage update. Later stage statements will also require every left and right vertex to remain non-isolated, so in addition to the usual structural bounds, we observe at the end of this subsection that if the instance began with no isolated vertices, then after alphabet-then-degree reduction, the resultant instance does not have any isolated vertices. 

we record that this fixed cleanup step preserves that property. For each common stage family, we fix the reduction once-and-for-all so that the map  $I\mapsto \mathrm{red}_\varepsilon^\circ(I)$ is fixed and independent of which instance in the family is chosen. The recovery maps still depend on the source instance, however, because the alphabet-reduction recovery map uses the source instance through the distributions and projections from \Cref{sec:alphabet-reduction}.

For the alphabet-then-degree reduction of \Cref{sec:alphabet-reduction}, it is convenient to separate the output of a single instance from the swap closure used in the sensitivity statement.

\begin{definition}[Output of the alphabet-then-degree reduction]\label{def:red-canonical}
    Fix $\varepsilon>0$.
    For a left-predicate label cover instance $I$, let $\mathrm{red}^\circ_\varepsilon(I)$ denote the particular instance obtained from $I$ by the constructions used in the proof of \Cref{lem:degree-alphabet-reduction}, before taking the swap closure.
    In other words, we first apply the alphabet-reduction map from \Cref{sec:alphabet-reduction}, and then apply the degree-reduction map exactly when the input instance has minimum right degree $\delta_V>c/\varepsilon$; if $\delta_V\le c/\varepsilon$, we stop after alphabet reduction.
    For a family $\mathcal I$, write
    \[
        \mathrm{red}^\circ_\varepsilon(\mathcal I)
        :=
        \bigl\{\mathrm{red}^\circ_\varepsilon(I): I\in \mathcal I\bigr\}.
    \]
    Every swap performed before the alphabet-then-degree reduction can be pushed through the fixed gadgets to a corresponding swap after the reduction, coordinate by coordinate.
    Consequently,
    \[
        \mathrm{red}^\circ_\varepsilon\bigl(\SwapClo(\mathcal I)\bigr)
        \subseteq
        \SwapClo\bigl(\mathrm{red}^\circ_\varepsilon(\mathcal I)\bigr).
    \]
    The reverse inclusion need not hold, because the right-hand side also allows additional swaps among the new codeword coordinates and among the cloud copies $(v,i)$ created when degree reduction replaces one original right vertex by a cloud of right vertices; these extra swaps need not descend from a single pre-reduction swap.
\end{definition}

Recall that in the proof of \Cref{lem:degree-alphabet-reduction}, the alphabet-then-degree reduction, we branch on whether the minimum right degree $\delta_V$ of the instance is small or large. If it is large, then we apply degree reduction, and otherwise we do not. The following lemma states that the same choice is made for every instance in a family sharing the same structure.

\begin{lemma}[The reduction takes the same branch on a common family]\label{lem:red-canonical-stagewise}
    Let $\varepsilon>0$, and let $\mathcal J$ be a family of left-predicate label cover instances with a common underlying graph and common alphabets.
    Then the proof of \Cref{lem:degree-alphabet-reduction} takes the same branch for every $I\in\mathcal J$.
\end{lemma}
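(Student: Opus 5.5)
The plan is to identify exactly which quantities the branch decision in the proof of \Cref{lem:degree-alphabet-reduction} reads, and to show that each is an invariant of the common underlying graph and common alphabets, not of the predicates or projections. That proof does two things in sequence. First it applies $\mathcal T_{\mathrm{AR}}$ with $\eta=\varepsilon/4$. The code $\mathcal C$ used there is fixed once and for all as a function of $\Sigma_V$ and $\eta$ only, via \Cref{pro:dinur-remark-5.6}. Second, it checks whether $\delta_V>c/\varepsilon$, with $c$ an absolute constant, and if so applies $\mathcal T_{\mathrm{DR}}$ with $d=\lceil c/\varepsilon\rceil$ and the fixed expander package.

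So the only instance-dependent input to the branch is $\delta_V=\min_{v\in V}\deg(v)$, together with the right alphabet, which fixes $\mathcal C$ and the block length $k$. Both are the same for every $I\in\mathcal J$.

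The key steps, in order, are as follows.

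First, note that $\deg_I(v)$ is computed in the bipartite graph $(U,V,E)$ of $I$, counting multiplicity. By hypothesis this graph is identical across $\mathcal J$, so $\delta_V(I)=\delta_V(\tilde I)$ for all $I,\tilde I\in\mathcal J$.

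Second, observe that swaps change only the maps $f_e$ and the predicates $P_u$. These do not enter the degree computation, and they do not enter the choice of $\eta$, $c$, or $\mathcal C$.

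Third, one should arguably check that the branch test is applied to the instance $\mathcal T_{\mathrm{AR}}(I)$ rather than to $I$ itself. This makes no difference: $\mathcal T_{\mathrm{AR}}$ replaces each right vertex $v$ by $k$ copies $(v,i)$, each of degree $\deg_I(v)$, so $\delta_V^{(1)}=\delta_V$ as recorded in that proof. In addition, $k$ depends only on $|\Sigma_V|$ and $\varepsilon$, so the graphs $\mathcal T_{\mathrm{AR}}(I)$ are again common across $\mathcal J$. Hence the comparison $\delta_V\le c/\varepsilon$ versus $\delta_V>c/\varepsilon$ has the same outcome for every member, and in the second branch the same $d$ and expanders are used.

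There is no real obstacle here. The only care needed is the bookkeeping point that degrees are counted with parallel edges (as in \Cref{thm:left-predicate-label-cover-under-clause-swaps} and composition), and that this multiplicity structure is also part of the common graph. I would state that explicitly and then conclude.
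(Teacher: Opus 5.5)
Your proposal is correct and matches the paper's proof: the branch in \Cref{lem:degree-alphabet-reduction} is decided solely by comparing $\delta_V$ with $c/\varepsilon$, and $\delta_V$ is determined by the common underlying graph, so every member of $\mathcal J$ takes the same branch. Your extra checks, that $\mathcal T_{\mathrm{AR}}$ preserves $\delta_V$ and that parallel-edge multiplicities are part of the common graph, are harmless bookkeeping that the paper leaves implicit.
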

\begin{proof}
    In the proof of \Cref{lem:degree-alphabet-reduction}, we branch on whether the minimum right degree $\delta_V$ of the input instance satisfies
    \[
        \delta_V \le c/\varepsilon
        \qquad\text{or}\qquad
        \delta_V > c/\varepsilon,
    \]
    where $c$ is the absolute constant fixed there.
    Because all instances in $\mathcal J$ share the same underlying graph, they have the same right-degree profile and hence the same value of~$\delta_V$.
    Therefore the same branch is taken for every $I\in\mathcal J$.
\end{proof}

\begin{definition}[Fixing the auxiliary choices for the reduction]\label{rem:red-stagewise-choices}
    In Section~\ref{sec:dh-iteration} we apply $\mathrm{red}^\circ_\varepsilon$ only to stage families $\mathcal J$ for which \Cref{lem:red-canonical-stagewise} holds.
    For such a family we additionally fix:
    \begin{enumerate}
        \item A code $\mathcal C:\Sigma_V\to \widehat\Sigma^k$ promised by \Cref{pro:dinur-remark-5.6} for the common source right alphabet $\Sigma_V$ and the parameter $\eta=\varepsilon/4$;
        \item If the degree-reduction branch is taken, then for each right-degree value $D$ in the common source graph and for $d=\lceil c/\varepsilon\rceil$, the expander $H_D$ returned by the explicit construction of \Cref{pro:ramanujan}, and we use $H_{D_v}$ at every right vertex $v$;
        \item The neighbor orderings, default symbols, and tie-breaking conventions required by the reduction gadgets.
    \end{enumerate}
    These choices affect only the reduction itself.
    They make the map $I\mapsto \mathrm{red}^\circ_\varepsilon(I)$ deterministic on the stage family under discussion, but the alphabet/degree-reduction recovery maps still vary with the source instance.
    We write $R^{\mathrm{red}}_{\varepsilon,I}$ for the recovery map associated with a source instance $I$.
    This recovery still depends on $I$ itself: already the alphabet-reduction recovery $\mathcal R_{\mathrm{AR}}$ uses the source projections of $I$ through the distributions $p_v$ from \Cref{sec:alphabet-reduction}.
\end{definition}

By \Cref{lem:red-canonical-stagewise}, any two instances $I,\widetilde I$ in such a family with the same underlying graph and the same alphabets take the same branch in the reduction. In the case when we stop after alphabet-reduction, the target graph is the common graph $V\times [k]$ determined by the fixed code length $k$, and the target alphabet is the common code alphabet. In the degree-reduction case, the fixed expanders $H_D$ depend only on the common right-degree profile and on $d=\lceil c/\varepsilon\rceil$. Thus the outputs $\mathrm{red}^\circ_\varepsilon(I)$ and $\mathrm{red}^\circ_\varepsilon(\widetilde I)$ always have a common target graph and common alphabets. In particular, on any common stage family $\mathcal J$ with the fixed auxiliary choices from \Cref{rem:red-stagewise-choices}, the whole image family $\mathrm{red}^\circ_\varepsilon(\mathcal J)$ has a common target graph and common alphabets.

\begin{lemma}[Local description complexity survives the alphabet-then-degree reduction]\label{lem:red-local-description}
    Let $\varepsilon>0$ and let $\mathcal I$ be a family of left-predicate label cover instances with common right alphabet $\Sigma_V$, local description complexity at most $S$, left degree at most $\Delta_U$, and right degree at most $\Delta_V$.
    Then $\mathrm{red}^\circ_\varepsilon(\mathcal I)$ has local description complexity at most
    \[
        \mathrm{poly}(S,\varepsilon^{-1},\Delta_U,\Delta_V,|\Sigma_V|).
    \]
    In particular, if $S,\Delta_U,\Delta_V,|\Sigma_V|\le \varepsilon^{-O(1)}$, then the same is true for the local description complexity of $\mathrm{red}^\circ_\varepsilon(\mathcal I)$.
\end{lemma}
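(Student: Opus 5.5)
The plan is to build an explicit circuit for the admissible-neighborhood set at each left vertex of $\mathrm{red}^\circ_\varepsilon(I)$, reducing everything to the circuit $C_u$ for the source instance. Two observations make this work. First, neither gadget changes the left vertex set, the left alphabet, or the left predicates. Second, the neighborhood of $u$ in the reduced instance is a relabeling of copies of its source neighborhood.

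We handle the two layers in order.

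\emph{Alphabet reduction.} After $\mathcal T_{\mathrm{AR}}$, the neighbors of $u$ are the pairs $(v_t,i)$ with $t\in[\deg_I(u)]$ and $i\in[k]$, and the projection on edge $(u,(v_t,i))$ is $a\mapsto \mathcal C(f_{(u,v_t)}(a))_i$. A tuple $(c_{t,i})\in\widehat\Sigma^{k\deg_I(u)}$ is therefore admissible at $u$ if and only if the following two conditions hold:
\begin{itemize}
\item for each $t$ there is a (necessarily unique) $b_t\in\Sigma_V$ with $\mathcal C(b_t)=(c_{t,1},\ldots,c_{t,k})$;
\item $(b_1,\ldots,b_{\deg_I(u)})\in\mathcal N_I(u)$.
\end{itemize}
The uniqueness in the first condition holds because $\mathcal C$ has positive distance.

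The circuit computes each $b_t$ by brute force. It compares the input block against all $|\Sigma_V|$ codewords, which are hardwired, and feeds the unique matching symbol, encoded in binary, into $C_u$. If no codeword matches, it rejects. Each codeword comparison costs $O(k\log|\widehat\Sigma|)$. Selecting $b_t$ is a multiplexer of size $O(|\Sigma_V|\log|\Sigma_V|)$. Hence the total size is
\[
\Delta_U\cdot|\Sigma_V|\cdot \mathrm{poly}(k,\log|\widehat\Sigma|,\log|\Sigma_V|)+S .
\]
By \Cref{lem:alphabet-reduction}, $k=O(\varepsilon^{-6}\log|\Sigma_V|)$ and $|\widehat\Sigma|=O(\varepsilon^{-6})$, so this size is $\mathrm{poly}(S,\varepsilon^{-1},\Delta_U,|\Sigma_V|)$.

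\emph{Degree reduction.} In the branch where $\mathcal T_{\mathrm{DR}}$ is applied, each source edge $(u,(v,i))$ is replaced by the edges $(u,(v,j))$ for the cloud positions $j$ whose expander neighborhood contains the relevant index. So $u$ is joined to a fixed multiset of cloud vertices, and each new edge copies one original projection. Since the expanders are fixed, this assignment of copied edges to source edges is fixed and hardwired.

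A tuple on the new neighborhood is admissible if and only if both of the following hold:
\begin{itemize}
\item all coordinates copied from the same original edge are equal;
\item the common values, arranged in the original edge order, lie in the source admissible set.
\end{itemize}
The circuit checks these equalities and then calls the circuit from the first layer. The number of copies per original edge is at most $d=\lceil c/\varepsilon\rceil$ times the right degree bound, so this adds $\mathrm{poly}(\varepsilon^{-1},\Delta_U,\Delta_V,k,\log|\widehat\Sigma|)$ gates. Left predicates are untouched throughout, so they are already accounted for inside $C_u$.

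The main thing to get right is the degree-reduction step, which is the more delicate point here. The reduced left degree is $d\cdot\deg(u)$ after accounting for the cloud multiplicities. The bound on the number of copies per original edge must therefore come from the expander's regularity and the fixed ordering, not from the graph of any particular instance. Once that is settled, composing the polynomial bounds of the two layers gives the stated $\mathrm{poly}(S,\varepsilon^{-1},\Delta_U,\Delta_V,|\Sigma_V|)$. Substituting $S,\Delta_U,\Delta_V,|\Sigma_V|\le\varepsilon^{-O(1)}$ then yields the final sentence of the lemma.
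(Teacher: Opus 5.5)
Your proposal is correct and follows the paper's proof essentially step for step. For alphabet reduction you decode each block by brute-force comparison against all codewords and then call $C_u$; for degree reduction you test equality within each group of copied edges and then call the first-layer circuit. The one detail the paper makes explicit that you gloss is that $C_u$ reads $\Sigma_V$-valued inputs. Feeding it binary-encoded symbols therefore first requires converting it to a Boolean circuit at cost $\mathrm{poly}(S,\Delta_U,|\Sigma_V|)$ rather than your additive $S$. This does not affect the stated bound.
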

We track this parameter because the next composition step feeds the circuits deciding admissible neighborhood tuples into the dPCP decoder, so their decision complexity must remain polynomially bounded throughout the iteration.
    Here this is an interface condition for the next stage, not an end in itself: after cleanup we still need these predicates to admit circuits of size polynomial in the current stage parameter, so that they can be padded to a common bound $M_i$ and fed back into \Cref{lem:dinur-harsha-decoder,thm:composition} at the next stage.
\begin{proof}
    We inspect the two gadgets used in the proof of \Cref{lem:degree-alphabet-reduction}.

    \paragraph{Alphabet reduction.}
    Let $I^{(1)}=\mathcal T_{\mathrm{AR}}(I)$ be the alphabet-reduced instance built from the fixed code $C: \Sigma_V\to \widehat\Sigma^k$ from \Cref{sec:alphabet-reduction}, where $k=O(\varepsilon^{-6}\log |\Sigma_V|)$.
    Fix a left vertex $u$ of $I$ with ordered neighborhood $N_I(u)=(v_1,\ldots,v_t)$, and let $C_u$ be a circuit deciding admissible tuples of neighboring right labels for $u$, of size at most $S$.
    Fix an injective binary encoding
    \[
        \operatorname{enc}:\Sigma_V\hookrightarrow \{0,1\}^{\ell},
        \qquad
        \ell:=\lceil \log_2 |\Sigma_V|\rceil.
    \]
    Under this encoding, each finite-alphabet gate appearing in $C_u$ can be replaced by a Boolean subcircuit implementing the same local truth table.
    Thus $C_u$ has an equivalent Boolean realization
    \[
        \widehat C_u:\{0,1\}^{t\ell}\to \{0,1\}
    \]
    of size $\mathrm{poly}(S,t,|\Sigma_V|)=\mathrm{poly}(S,\Delta_U,|\Sigma_V|)$.
    In $I^{(1)}$, the neighbors of $u$ are the blocks
    \[
        (v_1,1),\ldots,(v_1,k),\ \ldots,\ (v_t,1),\ldots,(v_t,k).
    \]
    A tuple of block labels extends across $u$ in $I^{(1)}$ if and only if two conditions hold.
    First, for every $r\in[t]$, the $r$th block is a codeword $C(b_r)$ for some $b_r\in \Sigma_V$.
    Second, the decoded tuple $(b_1,\ldots,b_t)$ lies in $\mathcal N_I(u)$.
    For each block index $r\in [t]$, each block label $x_r\in \widehat\Sigma^k$, and each symbol $a\in \Sigma_V$, compare $x_r$ with the codeword $C(a)$.
    Let $\chi_{r,a}(x_r)$ be the indicator that $x_r=C(a)$, implemented by $k$ coordinate-equality tests followed by an AND gate.
    Because $C$ is injective, for each fixed $r$ at most one $\chi_{r,a}(x_r)$ can equal~$1$.
    Hence image-membership for the $r$th block is checked by the OR
    \[
        \bigvee_{a\in\Sigma_V}\chi_{r,a}(x_r).
    \]
    On the valid branch, the encoded bits of the unique decoded symbol are recovered by
    \[
        (\widehat b_r)_j
        :=
        \bigvee_{a\in\Sigma_V:\operatorname{enc}(a)_j=1}\chi_{r,a}(x_r),
        \qquad
        j\in[\ell].
    \]
    This uses size $\mathrm{poly}(k,|\Sigma_V|)$ for each block.
    Running this decoder independently on the $t\le \Delta_U$ blocks produces a bit string
    \[
        (\widehat b_1,\ldots,\widehat b_t)\in \{0,1\}^{t\ell},
    \]
    and the new circuit accepts exactly when every block passes its image-membership test and the bit string $(\widehat b_1,\ldots,\widehat b_t)$ is accepted by $\widehat C_u$.
    This gives a Boolean circuit deciding admissible tuples of neighboring right labels for $u$ in $I^{(1)}$, of size $\mathrm{poly}(S,\varepsilon^{-1},\Delta_U,|\Sigma_V|)$.

    \paragraph{Degree reduction.}
    If the proof of \Cref{lem:degree-alphabet-reduction} stops after alphabet reduction, there is nothing more to prove.
    Otherwise, let $I^{(2)}=\mathrm{TDR}(I^{(1)};d)$ be the degree-reduced instance, with $d=\Theta(\varepsilon^{-1})$.
    The left vertices are unchanged.
    Each original right vertex $v$ is replaced by a cloud of vertices $(v,i)$, and every original edge incident to $u$ is replaced by several edges whose projections are exact copies of the corresponding original projection.
    Therefore, a tuple of labels on the new neighborhood of $u$ extends across $u$ in $I^{(2)}$ if and only if the labels on all copies belonging to the same original neighbor $v$ are equal, and the resulting collapsed tuple belongs to the admissible neighborhood set of $u$ in $I^{(1)}$.
    A circuit can check this by performing equality tests inside each cloud-copy group and then evaluating the circuit for $u$ in $I^{(1)}$.
    Its size is $\mathrm{poly}(S,\varepsilon^{-1},\Delta_U,\Delta_V,|\Sigma_V|)$.

    This proves the stated bound for $\mathrm{red}^\circ_\varepsilon(I)$, uniformly over all $I\in\mathcal I$.
\end{proof}

In Section~\ref{sec:dh-iteration}, every application of \Cref{lem:red-local-description} has input right alphabet size $|\Sigma_V|\le \varepsilon^{-O(1)}$ at the relevant stage, so the polynomial dependence on $|\Sigma_V|$ still yields local description complexity $\varepsilon^{-O(1)}$ throughout the iteration.

Finally, we observe that applying alphabet-then-degree reduction preserves the property of not having isolated vertices. We record this because the stage invariants below require every left and right vertex to have degree at least~$1$. Together with the analogous preservation fact for composition from \Cref{thm:composition}, this lets us carry non-isolatedness through each stage of the iteration, until the final exact-size padding step where isolated left vertices are added deliberately. To see that this is the case, observe that in both branches of $\mathrm{red}^\circ_\varepsilon$, the left vertices are unchanged, and each new right vertex is a copy of an old right vertex attached to copies of its incident edges. Consequently, if the input instance has no isolated vertices on either side, then neither does its reduced image. We record this because the stage invariants below require every left and right vertex to have degree at least~$1$. Together with the analogous preservation fact for composition from \Cref{thm:composition}, this lets us carry non-isolatedness through each stage of the iteration, until the final exact-size padding step where isolated left vertices are added deliberately.

\subsection{Initialization of the Left-Predicate Label Cover under Clause Swaps}

We initialize the stage-$0$ reduction data from the left-predicate label cover under clause swaps supplied by \Cref{sec:robust-label-cover}. The point of this subsection is not merely to produce one family $\mathcal I_0$, but to put the proof into the exact form needed to start the iterative update from \Cref{prop:stage-update}: we need a satisfiable stage-$0$ family on a common target graph and common alphabets, with local description complexity, degrees, alphabets, and the three constants controlling swaps and perturbations already polynomial in $1/\varepsilon_0$, together with a direct recovery that turns value at least $9\sqrt{\varepsilon_0}$ on the target side into source value at least $1-\eta_{\mathrm{FY}}$.

The construction has three steps.
Starting from the robust left-predicate family constructed in \Cref{sec:robust}, we first perform one preliminary cleanup at scale $\tau_0=\varepsilon_0^8$ so that the family is ready for the Dinur--Harsha composition.
We then compose once to obtain the actual stage-$0$ family $\mathcal I_0$.
Finally, we build the stage-$0$ recovery by running a randomized recovery procedure repeatedly and feeding the resulting source assignments into a threshold selector.
Unlike the later stages, the initial family $\mathcal G_0$---obtained by running the robust PCP transformation from \Cref{sec:robust} on the initial family of CNF formulas---still has very large right degree, so feeding it directly into the first composition would result in a large left-degree through the $d_V \cdot m$ bound from \Cref{thm:composition}.
We therefore perform a preliminary cleanup before the first composition, after which the usual stage transition
\[
    \mathcal I_i \xrightarrow{\ \circledast \mathcal D_i\ } \mathcal J_{i+1}
    \xrightarrow{\ \mathrm{red}_{\varepsilon_{i+1}}^\circ\ } \mathcal I_{i+1}.
\]
resumes unchanged. This subsection carries out that preliminary cleanup step, producing the stage-$0$ family $\mathcal I_0$ together with the structural bounds and the three swap/perturbation constants, all already polynomial in $1/\varepsilon_0$.

The remaining ingredient in the initialization is a single recovery map back to the source family; we want this map to have strong perturbation bounds. One run of the sampler from \Cref{prop:robust-heavy-hit}, which selects among tuples with large outer-edge agreement probability, already gives an inverse-polynomial-probability success event, with both the assignment-perturbation bound and the source-swap bound for the recovery map bounded by $\varepsilon_0^{-O(1)}$. We therefore repeat this randomized extraction polynomially many times and then apply a randomized threshold selection map on source assignments. Because this map only compares exact source values against a random threshold in a constant-width interval, its assignment-perturbation constant is linear in the total candidate Hamming perturbation and its source-swap constant is linear in the number of candidates. The next lemma records this map in the form needed for \Cref{prop:init-bootstrap}.

\begin{lemma}[Random-threshold selection map for source assignments]\label{lem:stable-threshold-selector}
    There exists an absolute constant $C_{\mathrm{sel}}\ge 1$ with the following property.
    Fix constants
    $0<\eta_1<\eta_2<1$
    and
    $I:=[1-\eta_2,1-\eta_1]$.
    For a source instance $\Phi\in\Phi_{N_{\mathrm{base}}}$ and a list of assignments
    $\mathbf a=(a^{(1)},\ldots,a^{(T)})$
    on the variable set of $\Phi$,
    let $\mathrm{Sel}_{\Phi}^{I}(\mathbf a)$ be the randomized map that samples $\theta\sim I$ uniformly and returns the first index $j\in[T]$ with
    $\val_\Phi\bigl(a^{(j)}\bigr)\ge \theta$,
    or a fixed default source assignment $a_{\mathrm{def}}$ if no such index exists.
    Then:
    \begin{enumerate}
        \item If some candidate in the list satisfies $\Phi$, then every output of $\mathrm{Sel}_{\Phi}^{I}(\mathbf a)$ has source value at least $1-\eta_2$.
        \item For every fixed $\Phi\in\Phi_{N_{\mathrm{base}}}$ and every two candidate lists
        \[
            \mathbf a=\big(a^{(1)},\ldots,a^{(T)}\big),
            \qquad
            \widetilde{\mathbf a}=\big(\widetilde a^{(1)},\ldots,\widetilde a^{(T)}\big),
        \]
        we have
        \[
            \EMD\bigl(
                \mathrm{Sel}_{\Phi}^{I} \big(\mathbf a),
                \mathrm{Sel}_{\Phi}^{I}(\widetilde{\mathbf a} \big)
            \bigr)
            \le
            C_{\mathrm{sel}}\sum_{j=1}^{T} d_{\mathrm H}\bigl(a^{(j)},\widetilde a^{(j)}\bigr).
        \]
        \item For every one-swap pair $\Phi,\widetilde\Phi\in\Phi_{N_{\mathrm{base}}}$ and every fixed candidate list $\mathbf a=(a^{(1)},\ldots,a^{(T)})$ on the common source domain,
        \[
            \EMD\bigl(
                \mathrm{Sel}_{\Phi}^{I}(\mathbf a),
                \mathrm{Sel}_{\widetilde\Phi}^{I}(\mathbf a)
            \bigr)
            \le
            C_{\mathrm{sel}}T.
        \]
    \end{enumerate}
\end{lemma}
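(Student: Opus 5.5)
Item~(1) is immediate from the definition of the selector. Suppose some candidate $a^{(j^\ast)}$ satisfies $\Phi$, so $\val_\Phi(a^{(j^\ast)})=1\ge\theta$ for every $\theta\in I$. Then the scan always finds an index, and the default is never returned. The returned candidate has value at least $\theta\ge 1-\eta_2$ by construction.

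For items~(2) and~(3) I will use the shared-threshold coupling: run both selectors with the same $\theta\sim I$. Fix the list $\mathbf a$, and write $v_j:=\val_\Phi(a^{(j)})$ and $\widetilde v_j$ for the corresponding value in the other run. If no index $j$ has $\theta$ lying between $v_j$ and $\widetilde v_j$, then the predicates ``$v_j\ge\theta$'' and ``$\widetilde v_j\ge\theta$'' agree for every $j$. In that case both runs select the same index $j_0$ (or both return the default), and the output distance is $d_{\mathrm H}(a^{(j_0)},\widetilde a^{(j_0)})\le\sum_j d_{\mathrm H}(a^{(j)},\widetilde a^{(j)})$. Otherwise we bound the distance trivially by $|\mathrm{Var}(\Phi)|=:n_\Phi$. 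Since $\theta$ is uniform on an interval of width $\eta_2-\eta_1$, a union bound gives
\[
\Pr[\text{some } j \text{ crosses}]\le \frac{1}{\eta_2-\eta_1}\sum_{j}|v_j-\widetilde v_j|.
\]

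For item~(2), $\Phi$ is fixed, so only the assignments change. The base family is bounded degree, so changing one variable alters at most $O(1)$ constraints out of $|E(\Phi)|=\Theta(n_\Phi)$. Hence
\[
|v_j-\widetilde v_j|\le O\!\left(d_{\mathrm H}(a^{(j)},\widetilde a^{(j)})/n_\Phi\right).
\]
Multiplying the crossing probability by the trivial bound $n_\Phi$ gives $O\bigl(\tfrac{1}{\eta_2-\eta_1}\sum_j d_{\mathrm H}(a^{(j)},\widetilde a^{(j)})\bigr)$. Adding the no-crossing contribution yields the claim. Since $\eta_1,\eta_2$ are fixed constants, the factor $1/(\eta_2-\eta_1)$ is absorbed into $C_{\mathrm{sel}}$.

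For item~(3) the list is fixed and the instance changes by one swap. A single constraint swap changes each value by at most $1/|E(\Phi)|=O(1/n_\Phi)$. The crossing probability is therefore at most $O(T/((\eta_2-\eta_1)n_\Phi))$. In the no-crossing case the two runs select the same candidate, so the output distance is zero. Multiplying by $n_\Phi$ gives $O(T)$.

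The only delicate point is the normalization. It requires $|E(\Phi)|=\Theta(|\mathrm{Var}(\Phi)|)$, which holds because $\Phi_{N_{\mathrm{base}}}$ has bounded degree with no isolated variables, and a swap must keep the edge count fixed. If the constants $\eta_1,\eta_2$ are meant to be absolute, then $C_{\mathrm{sel}}$ also carries a factor of order $1/(\eta_2-\eta_1)$.
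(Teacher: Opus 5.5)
Your proof is correct and follows essentially the same route as the paper: a shared-threshold coupling, a union bound over threshold crossings using the $O(1/|\mathrm{Var}(\Phi)|)$-Lipschitz dependence of the source value on one variable change or one constraint swap, and the trivial $|\mathrm{Var}(\Phi)|$ bound on the output distance when a crossing occurs. Your remark that $C_{\mathrm{sel}}$ really scales like $1/(\eta_2-\eta_1)$ matches the paper's constant $1+C_{\mathrm{src}}/(\eta_2-\eta_1)$. That constant is absolute only because the lemma is later applied with $\eta_2-\eta_1=\eta_{\mathrm{FY}}/4$.
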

\begin{proof}
    Because the Fleming--Yoshida source family has constant degree and constant alphabet, there exists an absolute constant $C_{\mathrm{src}}\ge 1$ such that for every source instance $\Phi\in\Phi_{N_{\mathrm{base}}}$ and every two assignments $a,\widetilde a$ on its variable set,
    $\bigl|\val_\Phi(a)-\val_\Phi(\widetilde a)\bigr|
        \le
        C_{\mathrm{src}}\frac{d_{\mathrm H}(a,\widetilde a)}{|\Var(\Phi)|}$.
    Likewise, if $\Phi$ and $\widetilde\Phi$ differ by one source swap, then every fixed assignment changes its source value by at most
    $C_{\mathrm{src}}/|\Var(\Phi)|$,
    because only $O(1)$ constraints change and $|E(\Phi)|=\Theta(|\Var(\Phi)|)$.

    Item~\textup{(1)} is immediate: if one candidate satisfies $\Phi$, then for every threshold $\theta\in I$ there exists at least one index whose value is at least $\theta$, and the returned assignment therefore has value at least $\theta\ge 1-\eta_2$.

    For item~\textup{(2)}, couple the two runs by using the same threshold $\theta\sim I$.
    Let
    $L:=\sum_{j=1}^{T} d_{\mathrm H}\bigl(a^{(j)},\widetilde a^{(j)}\bigr)$.
    If, for every $j\in[T]$, the threshold $\theta$ does not lie between $\val_\Phi(a^{(j)})$ and $\val_\Phi(\widetilde a^{(j)})$, then the indicator of the event ``candidate $j$ is above threshold'' is the same for both lists, hence the selected index is identical.
    On this event the output Hamming distance is at most $L$.
    By the value-Lipschitz bound above, the probability of a threshold accident for a fixed $j$ is at most
    \[
        \frac{C_{\mathrm{src}}}{(\eta_2-\eta_1)|\Var(\Phi)|}
        d_{\mathrm H}\bigl(a^{(j)},\widetilde a^{(j)}\bigr),
    \]
    and summing over $j$ gives total accident probability at most
    \[
        \frac{C_{\mathrm{src}}}{(\eta_2-\eta_1)|\Var(\Phi)|}\,L.
    \]
    Since every source assignment lives on $|\Var(\Phi)|$ coordinates, the expected Hamming distance contributed by the accident event is at most
    \[
        \frac{C_{\mathrm{src}}}{\eta_2-\eta_1}L.
    \]
    Therefore
    \[
        \EMD\bigl(
            \mathrm{Sel}_{\Phi}^{I}(\mathbf a),
            \mathrm{Sel}_{\Phi}^{I}(\widetilde{\mathbf a})
        \bigr)
        \le
        \left(1+\frac{C_{\mathrm{src}}}{\eta_2-\eta_1}\right)L,
    \]
    proving item~\textup{(2)} after enlarging $C_{\mathrm{sel}}$.

    For item~\textup{(3)}, again couple the two runs by using the same threshold $\theta$.
    Now the candidate list itself is identical, so the outputs can differ only if some candidate value crosses the threshold between $\Phi$ and $\widetilde\Phi$.
    The one-swap value perturbation bound gives accident probability at most
    \[
        \frac{C_{\mathrm{src}}T}{(\eta_2-\eta_1)|\Var(\Phi)|}.
    \]
    Multiplying by the maximum possible Hamming distance $|\Var(\Phi)|$ proves
    \[
        \EMD\bigl(
            \mathrm{Sel}_{\Phi}^{I}(\mathbf a),
            \mathrm{Sel}_{\widetilde\Phi}^{I}(\mathbf a)
        \bigr)
        \le
        \frac{C_{\mathrm{src}}}{\eta_2-\eta_1}T,
    \]
    and item~\textup{(3)} follows after one final enlargement of $C_{\mathrm{sel}}$.
\end{proof}

\begin{proposition}[Initialization from the left-predicate label cover under clause swaps]\label{prop:init-bootstrap}
    There exist absolute constants $c_{\mathrm{init}},c_{\mathrm{str}},c_{\mathrm{size}},c_{\mathrm{sens}},\delta>0$ such that the following holds for every sufficiently large target size $n_0$.
    Let
    $\varepsilon_0 := (\log n_0)^{-c_{\mathrm{init}}}$.
    Then there exists a family $\mathcal I_0$ of left-predicate label cover instances with local description complexity at most $\varepsilon_0^{-c_{\mathrm{str}}}$ satisfying the following properties.
    \begin{enumerate}
        \item The local description complexity, the left degree, the right degree, and the right alphabet size are all at most $\varepsilon_0^{-c_{\mathrm{str}}}$, the left alphabet size is at most $\exp(\varepsilon_0^{-c_{\mathrm{str}}})$, and every left and right vertex has degree at least $1$.
        \item The numbers of left and right vertices satisfy
        \[
            |U|,|V| \le n_0\varepsilon_0^{-c_{\mathrm{size}}}.
        \]
        \item The swap sensitivity of the swap closure satisfies
        \[
            \SwapSens_{9\sqrt{\varepsilon_0}}\bigl(\SwapClo(\mathcal I_0)\bigr)
            \ge
            \Omega\!\left(\frac{n_0^\delta}{\varepsilon_0^{-c_{\mathrm{sens}}}}\right).
        \]
    \end{enumerate}
    Moreover, $\mathcal I_0$ is obtained by the reduction chain
    \[
        \Phi_{N_{\mathrm{base}}}
        \longrightarrow
        \mathrm{CNF}_{N_{\mathrm{base}}}
        \longrightarrow
        \mathcal G_0:=\mathcal T_{\mathrm{Robust}}(\mathrm{CNF}_{N_{\mathrm{base}}})
        \longrightarrow
        \widehat{\mathcal G}_0:=\mathrm{red}_{\tau_0}^\circ(\mathcal G_0)
        \longrightarrow
        \mathcal I_0:=\mathrm{Boot}_{\varepsilon_0}(\widehat{\mathcal G}_0),
    \]
    where $\tau_0:=\varepsilon_0^8$, $N_{\mathrm{base}} := \lfloor n_0^{1/B_*}\rfloor$ for a sufficiently large absolute constant $B_*$, $\Phi_{N_{\mathrm{base}}}$ is the bounded-degree satisfiable $2$CSP hard family from~\cite{fleming2026sensitivity}, $\mathrm{CNF}_{N_{\mathrm{base}}}$ is the $3$CNF family obtained from $\Phi_{N_{\mathrm{base}}}$ by replacing each source constraint with a constant-size gadget, $\mathrm{Boot}_{\varepsilon_0}(\widehat{\mathcal G}_0)$ denotes the stage-$0$ bootstrap family obtained by one Dinur--Harsha composition of $\widehat{\mathcal G}_0$, and the direct initialization reduction from $\Phi_{N_{\mathrm{base}}}$ to $\mathcal I_0$ obtained from this chain has total transformation and recovery losses polynomial in $1/\varepsilon_0$.
\end{proposition}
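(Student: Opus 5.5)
We build the chain displayed in the statement one arrow at a time, and then assemble a single direct reduction from $\Phi_{N_{\mathrm{base}}}$ to $\mathcal I_0$ using \Cref{lem:compose-reductions}. Sensitivity is pulled back with \Cref{lem:sensitivity-pullback} and \Cref{cor:sensitivity-pullback-lb}. The source lower bound is the Fleming--Yoshida bound: every algorithm achieving value $1-\eta_{\mathrm{FY}}$ on $\Phi_{N_{\mathrm{base}}}$ has swap sensitivity $N_{\mathrm{base}}^{c_{\mathrm{FY}}}=n_0^{c_{\mathrm{FY}}/B_*}$.

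\textbf{Step 1: the forward transformations.}
\begin{enumerate}
\item Replace each constant-arity constraint by a constant-size 3CNF gadget. One source swap becomes $O(1)$ clause swaps. Any CNF assignment maps back to the source variables by restriction, so this recovery has constant Lipschitz constant and zero drift.
\item Apply \Cref{thm:left-predicate-label-cover-under-clause-swaps} with $n=|\mathrm{Var}(\mathrm{CNF})|=\Theta(N_{\mathrm{base}})$ and soundness $\tau_0$. This requires $\tau_0=\varepsilon_0^8\ge \log^{-c}n$, which holds once $c$ is chosen larger than $8c_{\mathrm{init}}$ (up to the factor $B_*$). A clause swap changes $2q^{8m+c_{\mathrm{lc}}}$ left predicates.
\item Apply $\mathrm{red}^\circ_{\tau_0}$ using \Cref{lem:degree-alphabet-reduction}. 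The minimum right degree is $q^{8m+6}$, so the degree-reduction branch is taken. The right degree becomes $O(\tau_0^{-1})$ and the right alphabet becomes $O(\tau_0^{-6})$.
\item Compose once with the Dinur--Harsha decoder of \Cref{lem:dinur-harsha-decoder}, with soundness error $\xi_0=\varepsilon_0^{1/4}$, via \Cref{thm:composition}. The composed instance has left degree $d_Vm$ and left alphabet $\sigma^{d_Vm}$. Its right degree $d_U2^{\mathsf r}$ and its local description complexity are polynomial in the circuit size $N$.
\end{enumerate}
The circuit size $N$ is polynomial in $\tau_0^{-1}$, $q^{c_{\mathrm{lc}}}$ and $|\Sigma_V|$ by \Cref{lem:red-local-description}. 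To get items (1) and (2) we need $q$ and $d_U=q^{O(1)}$ to be $\varepsilon_0^{-O(1)}$. This works because $q=(\log n)^{a}$ and $\varepsilon_0=(\log n_0)^{-c_{\mathrm{init}}}$, so picking $c_{\mathrm{init}}$ comparable to $a$ makes $q$ polynomial in $1/\varepsilon_0$. Sizes are polynomial in $|U|,|V|\le q^{11m+O(1)}=n^{O(1)}$. Choosing $B_*$ large then gives $|U|,|V|\le n_0\varepsilon_0^{-c_{\mathrm{size}}}$. Non-isolation follows from the remarks after \Cref{thm:composition} and \Cref{lem:red-local-description}.

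\textbf{Step 2: the chain of soundness thresholds.} Start from value $9\sqrt{\varepsilon_0}$ on $\mathcal I_0$. The composition soundness item of \Cref{thm:composition} needs $9\sqrt{\varepsilon_0}\ge \xi_0+\mathsf L\Delta$ with $\mathsf L\le 2/\xi_0$. Taking $\Delta=4\sqrt{\tau_0}=4\varepsilon_0^4$ makes $\mathsf L\Delta\le 8\varepsilon_0^{15/4}$, which is comfortably small. The alphabet/degree recovery then turns value $4\sqrt{\tau_0}$ into value $\tau_0$ on $\mathcal G_0$.

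On $\mathcal G_0$ we apply \Cref{prop:robust-heavy-hit} with parameter $\tau_0/2$. Its hypothesis holds since $\tau_0\gg (mD/q)^{1/4}$ for our choice of $a$. Because the recovered value holds only in expectation, Markov's inequality gives value at least $\tau_0/2$ with probability at least $\tau_0/2$. In that event the extractor outputs a satisfying assignment with probability at least $(\tau_0/2)^{C_{\mathrm{hit}}}$.

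\textbf{Step 3: amplification and selection.} Repeat the whole randomized recovery $T=\varepsilon_0^{-O(1)}$ times independently. With high probability some candidate satisfies the CNF, hence the source CSP. Feed the $T$ candidates into \Cref{lem:stable-threshold-selector}. Expected source value is then at least $1-\eta_{\mathrm{FY}}$, after adjusting $\eta_1,\eta_2$ and absorbing the failure probability.

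\textbf{Step 4: tracking the constants.} We combine the per-layer constants using \Cref{lem:compose-reductions}.
\begin{itemize}
\item \emph{Transformation constant $C_T$.} It is the product of $O(1)$, $2q^{8m+c_{\mathrm{lc}}}$, $O(\tau_0^{-7}\log|\Sigma_V|)$ and $d_U2^{\mathsf r}$. So $C_T$ is $q^{8m}\cdot\varepsilon_0^{-O(1)}$.
\item \emph{Lipschitz constant $C_R$.} Composition contributes a factor $2^{-\mathsf r}$. Alphabet reduction contributes $1+\Delta_U/\delta_V$, and here $\delta_V$ is the huge robust right degree. Item (2) of \Cref{prop:robust-heavy-hit} contributes only $q^{C_{\mathrm{hit}}}$, yet its proof shows that a single coordinate change moves the output with probability only $q^{-3m+O(1)}$.
\item \emph{Drift $D$.} It is a sum of terms each polynomial in $q$ and $1/\varepsilon_0$, multiplied by $T$.
\end{itemize}
\Cref{cor:sensitivity-pullback-lb} then yields the bound $\Omega\bigl(n_0^{c_{\mathrm{FY}}/B_*}/(C_TC_R)\bigr)$, once we check $D=o(n_0^\delta)$.

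\textbf{Main obstacle.} The hard part is the normalization in the step through $\mathcal G_0$. There a clause swap costs $q^{8m}$ predicate swaps, which is a polynomial in $n$ and must not eat the exponent. The plan is to avoid the crude per-coordinate bound in item (2) of \Cref{prop:robust-heavy-hit}. We would instead bound the recovery's EMD directly by the fraction of outer-edge mass that changes. That fraction is at most $q^{8m}/|U|\approx q^{-3m}$ per swap, times $q^{m}$ output coordinates, all divided by $\theta^2$. This is analogous to the left/right accounting in Part~I. If this balancing loses a polynomial factor $n^{\gamma}$, we absorb it by taking $B_*$ large, so that $N_{\mathrm{base}}^{c_{\mathrm{FY}}}$ still dominates. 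We would also take $C_{\mathrm{hit}}$-type exponents to be independent of $B_*$, so that some absolute $\delta>0$ survives.
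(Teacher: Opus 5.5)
Your chain and recovery match the paper's. The single-trial recovery is the composition recovery, then $R^{\mathrm{red}}$, then the sampler of \Cref{prop:robust-heavy-hit}, then the gadget projection; it is repeated $T=\varepsilon_0^{-O(1)}$ times and fed to the threshold selector, and sensitivity is pulled back via \Cref{cor:sensitivity-pullback-lb}. Running the robust layer at threshold $\tau_0/2$ instead of the paper's $\varepsilon_0^{16}/2$ is harmless. However, two steps fail as written.

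\textbf{The bootstrap decoder parameter.} You take decoder soundness error $\xi_0=\varepsilon_0^{1/4}$. Item~(2) of \Cref{thm:composition} only applies when the composed value is at least $\xi_0+\mathsf L\Delta$, and $\varepsilon_0^{1/4}>9\sqrt{\varepsilon_0}$ once $\varepsilon_0<9^{-4}$. So value $9\sqrt{\varepsilon_0}$ on $\mathcal I_0$ yields no guarantee, and your threshold chain breaks at its first link. The $\varepsilon^{1/4}$ scaling belongs to the stage update, where the composed target is $\varepsilon_{i+1}=20\varepsilon_i^{1/4}$; here the composed family is $\mathcal I_0$ itself. Take $\xi_0=\varepsilon_0/2$ as the paper does, which \Cref{lem:dinur-harsha-decoder} permits with $M_0=\varepsilon_0^{-A}$. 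Then $\xi_0+(4/\varepsilon_0)\cdot 4\varepsilon_0^{4}\le\varepsilon_0\le 9\sqrt{\varepsilon_0}$.

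\textbf{The transformation constant.} You are right that one clause swap changes $2q^{8m+c_{\mathrm{lc}}}=N_{\mathrm{base}}^{\Theta(1)}$ left predicates of $\mathcal G_0$, so $C_T\approx q^{8m}\varepsilon_0^{-O(1)}$. The paper's proof instead asserts that the robust step produces only $\varepsilon_0^{-O(1)}$ predicate changes. That contradicts item~3 of \Cref{thm:left-predicate-label-cover-under-clause-swaps}, so the paper's $C_T^{(0)}=\varepsilon_0^{-O(1)}$ is not correct as written, and its pullback needs exactly the compensation you are after.

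Your compensation, however, is only sketched, and the version in your Step~4 fails. Charging the sampler its worst-case per-coordinate effect $q^{-3m+O(1)}$ (the outer-edge mass of an outer right vertex), times $q^m$ outputs, leaves $C_TC_R\approx q^{6m}$. The quantity $q^{8m}/|U|$ you quote is the right one for the drift. For the $C_TC_R$ term you must show that perturbations of an $\mathcal I_0$-labeling reach $G_F$ essentially only at left vertices. This holds for two reasons:
\begin{itemize}
\item $\mathcal R_{\mathrm{AR}}$ never reads right labels, so every right-vertex perturbation of $\widehat G_F$, including those created by $\mathcal R_{\mathrm{Comp}}$, is discarded.
\item A left perturbation at $u$ induces in expectation at most $\Delta_U/\delta_V\le q^{-8m+O(1)}$ right perturbations of $G_F$ (the $1+\Delta_U/\delta_V$ of \Cref{lem:alphabet-reduction}, using the huge minimum right degree of $\mathcal G_0$). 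Each of these has mass at most $q^{-3m+O(1)}$.
\end{itemize}
A left vertex carries outer-edge mass exactly $1/|U|\le q^{-11m+c_{\mathrm{lc}}}$ in every score (\Cref{lem:score-perturbation}). So the whole recovery is $\varepsilon_0^{-O(1)}q^{-11m}|\mathrm{Var}(F)|$-Lipschitz per $\mathcal I_0$-coordinate, which gives $C_TC_R\le\varepsilon_0^{-O(1)}q^{-3m}|\mathrm{Var}(F)|\le 1$. The drift terms pass through the same left-only channel and total $\varepsilon_0^{-O(1)}$, dominated by the selector's $C_{\mathrm{sel}}T$. Making this rigorous requires tracking left and right perturbations separately through the layers below $\mathcal I_0$, as in Part~I, rather than using the scalar constants of \Cref{lem:compose-reductions}.

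Your fallback of absorbing a residual $n^{\gamma}$ loss by enlarging $B_*$ does not work. Since $n=\Theta(N_{\mathrm{base}})$, such a loss competes with the source bound $N_{\mathrm{base}}^{c_{\mathrm{FY}}}$ at the same scale, and $B_*$ only rescales both relative to $n_0$.

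A smaller point, shared with the paper: $|\Sigma_V(\mathcal G_0)|=q^{3m+5}$ is polynomial in $n$. So \Cref{lem:red-local-description} alone does not give admissibility circuits of size $\varepsilon_0^{-O(1)}$. You should fix a linear code in the alphabet reduction, whose membership and message-recovery circuits have size $\mathrm{poly}(k)$.
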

\begin{proof}
    \smallskip\noindent\textbf{Setup / preliminary cleanup scale.}
    Fix a sufficiently large absolute constant $B_*$ and set
    $N_{\mathrm{base}} := \lfloor n_0^{1/B_*}\rfloor$.
    By Fleming and Yoshida~\cite{fleming2026sensitivity}, there exists an absolute constant $\eta_{\mathrm{FY}}\in(0,1)$ and a swap-closed family $\Phi_{N_{\mathrm{base}}}$ of satisfiable bounded-degree $2$CSP instances on $\Theta(N_{\mathrm{base}})$ variables, with constant alphabet and constant degree, such that any algorithm attaining value at least $1-\eta_{\mathrm{FY}}$ on $\Phi_{N_{\mathrm{base}}}$ has swap sensitivity $\Omega(N_{\mathrm{base}})$.

    Encode each binary constraint of $\Phi\in \Phi_{N_{\mathrm{base}}}$ by a constant-size satisfiability-preserving $3$CNF gadget, using $O(1)$ auxiliary variables and $O(1)$ clauses per constraint.
    Because the original alphabet and degree are constant, the resulting family $\mathrm{CNF}_{N_{\mathrm{base}}}$ is swap closed, has $\Theta(N_{\mathrm{base}})$ variables and clauses, and remains satisfiable.
    We write
    $T_{\mathrm{CNF}} : \Phi_{N_{\mathrm{base}}} \to \mathrm{CNF}_{N_{\mathrm{base}}}$
    for this encoding map and
    $R_{\Phi\leftarrow F}$
    for the projection that simply forgets the auxiliary gadget variables and returns an assignment to the underlying $2$CSP instance.
    A single constraint swap in $\Phi$ changes only $O(1)$ clauses of $T_{\mathrm{CNF}}(\Phi)$.
    Moreover, every satisfying Boolean assignment of $F=T_{\mathrm{CNF}}(\Phi)$ maps under $R_{\Phi\leftarrow F}$ to an assignment of $\Phi$ of value $1$.

    Apply \Cref{thm:left-predicate-label-cover-under-clause-swaps} to each $F\in \mathrm{CNF}_{N_{\mathrm{base}}}$, choosing the constant in that theorem large enough that the target soundness
    $\Delta_{\mathrm{lc}} := \varepsilon_0^{16}$
    lies above its threshold.
    Denote the resulting left-predicate family by
    $\mathcal G_0 := \mathcal T_{\mathrm{Robust}}(\mathrm{CNF}_{N_{\mathrm{base}}})$.
    By the parameter bounds in \Cref{thm:left-predicate-label-cover-under-clause-swaps}, the family $\mathcal G_0$ has completeness $1$, soundness at most $\Delta_{\mathrm{lc}}$, left degree $\varepsilon_0^{-O(1)}$, right alphabet size $\varepsilon_0^{-O(1)}$, left alphabet size at most $\exp(\varepsilon_0^{-O(1)})$, minimum right degree at least $N_{\mathrm{base}}^{8-o(1)}$, and at most $N_{\mathrm{base}}^{O(1)}$ vertices on each side.
    Moreover, by construction of $\mathcal T_{\mathrm{Robust}}$, every left vertex of every instance in $\mathcal G_0$ has degree at least $1$, and every right vertex has degree at least $N_{\mathrm{base}}^{8-o(1)}$, so no vertex is isolated.
    Moreover, by the bounded-locality clause of \Cref{thm:rm-eval}, for each left vertex of $\mathcal G_0$ the corresponding admissible-neighborhood circuit depends only on the chart component of the left label at $(\Omega,\tau)$ and one local view of $\mathsf{Eval}_{z,\tau}$, and that local view uses only $q^{c_{\mathrm{in}}}=\varepsilon_0^{-O(1)}$ raw or auxiliary symbols.
    Hence $\mathcal G_0$ has local description complexity at most $\varepsilon_0^{-O(1)}$.

    Define
    \[
        \tau_0:=\varepsilon_0^8.
    \]
    Since $\tau_0^{-1}=\varepsilon_0^{-8}=(\log n_0)^{O(1)}$ while every instance in $\mathcal G_0$ has minimum right degree at least $N_{\mathrm{base}}^{8-o(1)}$, we have
    \[
        N_{\mathrm{base}}^{8-o(1)} > c/\tau_0
    \]
    for all sufficiently large $n_0$.
    The family $\mathcal G_0$ has a common underlying graph and common alphabets by \Cref{thm:left-predicate-label-cover-under-clause-swaps}, so \Cref{lem:red-canonical-stagewise} shows that all instances in $\mathcal G_0$ take the degree-reduction branch of $\mathrm{red}_{\tau_0}^\circ$.
    Fix the auxiliary code and expander choices for this family as in \Cref{rem:red-stagewise-choices}, and define
    \[
        \widehat{\mathcal G}_0:=\mathrm{red}_{\tau_0}^\circ(\mathcal G_0).
    \]
    Because the reduction is fixed on the whole family, all instances in $\widehat{\mathcal G}_0$ have a common target graph and common alphabets.

    Let
    \[
        k_0=\Theta(\tau_0^{-6}\log |\Sigma_V|),
        \qquad
        d_0=\lceil c/\tau_0\rceil.
    \]
    Since $\tau_0=\varepsilon_0^8$ and the source right alphabet size in $\mathcal G_0$ is already $\varepsilon_0^{-O(1)}$, both $k_0$ and $d_0$ are $\varepsilon_0^{-O(1)}$.
    The proof of \Cref{lem:degree-alphabet-reduction} therefore yields that every instance in $\widehat{\mathcal G}_0$ has left degree $\varepsilon_0^{-O(1)}$, right degree $d_0=\varepsilon_0^{-O(1)}$, right alphabet size $\varepsilon_0^{-O(1)}$, left alphabet size at most $\exp(\varepsilon_0^{-O(1)})$, and vertex counts $N_{\mathrm{base}}^{O(1)}\varepsilon_0^{-O(1)}$.
    By construction of the alphabet-then-degree reduction, non-isolatedness on both sides is preserved from $\mathcal G_0$ to $\widehat{\mathcal G}_0$.

    We also record directly that $\widehat{\mathcal G}_0$ has local description complexity at most $\varepsilon_0^{-O(1)}$.
    Fix a left vertex $u$ of an instance in $\mathcal G_0$, and let $C_u$ be an admissible-neighborhood circuit of size $\varepsilon_0^{-O(1)}$ for that vertex.
    In the alphabet-reduction step at parameter $\tau_0$, each incident right label is replaced by a block of length $k_0$, so admissibility at $u$ is checked by verifying that each of the $\deg_{\mathcal G_0}(u)\le \varepsilon_0^{-O(1)}$ blocks is a codeword and then evaluating $C_u$ on the decoded tuple.
    This gives a circuit of size $\varepsilon_0^{-O(1)}$ because both $\deg_{\mathcal G_0}(u)$ and $k_0$ are $\varepsilon_0^{-O(1)}$.
    In the subsequent degree-reduction step, each original incident block contributes only $d_0=\varepsilon_0^{-O(1)}$ copies at $u$.
    Admissibility in the final cleaned instance is therefore checked by equality tests within each copy group followed by the alphabet-reduced circuit, and the resulting circuit size is still $\varepsilon_0^{-O(1)}$.

    Choose a sufficiently large absolute constant $A$ and pad all circuits deciding admissible neighborhood tuples in $\widehat{\mathcal G}_0$ to a common decision complexity
    $M_0 := \varepsilon_0^{-A}$,
    so that the common right alphabet size of $\widehat{\mathcal G}_0$ is at most $M_0^\gamma$ and $\varepsilon_0/2\ge M_0^{-\alpha}$.
    Let $\mathcal D_0$ be the Dinur--Harsha decoder from \Cref{lem:dinur-harsha-decoder} at decision complexity $M_0$ with inner soundness parameter
    $\delta_{\mathrm{boot}} := \varepsilon_0/2$.

    \smallskip\noindent\textbf{Construction of the initialized family.}
    Define
    $\mathcal I_0 := \mathrm{Boot}_{\varepsilon_0}(\widehat{\mathcal G}_0):=\widehat{\mathcal G}_0\circledast \mathcal D_0$.
    Because $\widehat{\mathcal G}_0$ already has a common target graph and common alphabets, \Cref{thm:composition} preserves that property when passing to $\mathcal I_0$.
    The decoder parameters satisfy
    $q,\ m,\ 2^{\mathsf r},\ |\sigma|,\ s(M_0) \le M_0^{O(1)}=\varepsilon_0^{-O(1)}$.
    Because every left vertex of $\widehat{\mathcal G}_0$ already has degree at most $\varepsilon_0^{-O(1)}$, the trivial proof-degree bound gives
    \[
        d\le \varepsilon_0^{-O(1)}\, q \cdot 2^{\mathsf r}.
    \]
    With this bound, \Cref{thm:composition} shows that $\mathcal I_0$ has local description complexity $\varepsilon_0^{-O(1)}$, left degree $\varepsilon_0^{-O(1)}$, right degree $\varepsilon_0^{-O(1)}$, right alphabet size $\varepsilon_0^{-O(1)}$, left alphabet size at most $\exp(\varepsilon_0^{-O(1)})$, and vertex counts $N_{\mathrm{base}}^{O(1)}\varepsilon_0^{-O(1)}$.
    Because every vertex of every instance in $\widehat{\mathcal G}_0$ is non-isolated, the exact degree formulas from Algorithm~\ref{alg:composition} show that every vertex of every instance in $\mathcal I_0$ is non-isolated as well.
    Enlarging $c_{\mathrm{str}}$ if necessary, all local-description-complexity, degree, alphabet, and non-isolatedness bounds in item~(1) follow.
    Since $N_{\mathrm{base}}^{O(1)}\varepsilon_0^{-O(1)} \le n_0\varepsilon_0^{-c_{\mathrm{size}}}$ for sufficiently large $B_*$ after enlarging $c_{\mathrm{size}}$, item~(2) follows as well.
    The gain from this first composition is therefore not a dramatic improvement in the displayed size bounds by itself, but that it puts the initialized family into the same form used later in the iteration, with target recovery threshold $9\sqrt{\varepsilon_0}$ and with the three constants controlling swaps and perturbations already polynomial in $1/\varepsilon_0$.

    \smallskip\noindent\textbf{Recovery from repeated independent trials.}
    For the direct initialization reduction, define the map
    \[
        \mathrm{Inst}_0(\Phi)
        :=
        \mathrm{Boot}_{\varepsilon_0}\!\Bigl(
            \mathrm{red}_{\tau_0}^\circ\bigl(
                \mathcal T_{\mathrm{Robust}}(T_{\mathrm{CNF}}(\Phi))
            \bigr)
        \Bigr)
        \in \mathcal I_0.
    \]
    For each $\Phi\in\Phi_{N_{\mathrm{base}}}$, write
    \[
        F:=T_{\mathrm{CNF}}(\Phi),
        \qquad
        G_F:=\mathcal T_{\mathrm{Robust}}(F),
        \qquad
        \widehat G_F:=\mathrm{red}_{\tau_0}^\circ(G_F).
    \]
    Set
    \[
        R^{\mathrm{red}}_{0,F}
        :=
        R^{\mathrm{red}}_{\tau_0,G_F},
        \qquad
        \Delta_0:=\Delta_{\mathrm{lc}}/2,
    \]
    and let $\mathcal R_{\mathrm{Comp},0,F}$ denote the recovery map from \Cref{thm:composition} for the outer instance $\widehat G_F$.
    One trial of the recovery will first move from $\mathrm{Inst}_0(\Phi)$ back through the composition and cleanup layers to a labeling of $G_F$, then use \Cref{prop:robust-heavy-hit} to produce, with inverse-polynomial probability, a satisfying Boolean assignment of $F$, and finally project away the gadget variables to obtain an assignment of $\Phi$ of value $1$.
    The only reason the final guarantee drops from value $1$ to $1-\eta_{\mathrm{FY}}$ is that this single trial succeeds with only inverse-polynomial probability, so we repeat it independently and then apply the selector to amplify that success event.
    Define the single-trial candidate map
    \[
        \mathrm{Hit}^{(1)}_{0,\Phi}
        :=
        R_{\Phi\leftarrow F}
        \circ
        \mathcal H_{\mathrm{lc},F,\Delta_0}
        \circ
        R^{\mathrm{red}}_{0,F}
        \circ
        \mathcal R_{\mathrm{Comp},0,F}.
    \]
    Fix
    \[
        \eta_1:=\eta_{\mathrm{FY}}/4,
        \qquad
        \eta_2:=\eta_{\mathrm{FY}}/2,
        \qquad
        I_0:=[1-\eta_2,1-\eta_1],
    \]
    and let $\mathrm{Sel}_{\Phi}^{I_0}$ be the selection map on source assignments from \Cref{lem:stable-threshold-selector}.

    Let $A$ be any algorithm for $\SwapClo(\mathcal I_0)$ that attains expected value at least $9\sqrt{\varepsilon_0}$ on every satisfiable instance in $\SwapClo(\mathcal I_0)$, and let
    $S:=\SwapSens(A,\SwapClo(\mathcal I_0))$.
    For a fixed source instance $\Phi\in\Phi_{N_{\mathrm{base}}}$, let
    $\bm y:=A(\mathrm{Inst}_0(\Phi))$.
    One trial of the candidate map first applies the composition recovery $\mathcal R_{\mathrm{Comp},0,F}$ to move from $\mathrm{Inst}_0(\Phi)=\widehat G_F\circledast \mathcal D_0$ back to $\widehat G_F$, then applies $R^{\mathrm{red}}_{0,F}$ to move from $\widehat G_F$ back to $G_F$, and finally runs the sampler from \Cref{prop:robust-heavy-hit} at threshold $\Delta_0$.
    The candidate list map $\mathrm{Cand}_{0,\Phi}$ will consist of $T$ independent copies of this single trial, and the final recovery will be
    $\mathrm{Rec}_{0,\Phi}:=\mathrm{Sel}_{\Phi}^{I_0}\circ \mathrm{Cand}_{0,\Phi}$.

    We first lower bound the success probability of a single trial.
    Because $\mathrm{Inst}_0(\Phi)$ is satisfiable and lies in $\SwapClo(\mathcal I_0)$, the output $\bm y$ of $A$ has expected value at least $9\sqrt{\varepsilon_0}$.
    Set the outer threshold for the composition layer to
    \[
        4\sqrt{\tau_0}=4\varepsilon_0^4.
    \]
    Since $\mathsf L\le 2/\delta_{\mathrm{boot}}\le 4/\varepsilon_0$, we have
    \[
        \delta_{\mathrm{boot}} + \mathsf L\cdot 4\sqrt{\tau_0}
        \le
        \frac{\varepsilon_0}{2} + \frac{4}{\varepsilon_0}\cdot 4\varepsilon_0^4
        \le
        \varepsilon_0
    \]
    for all sufficiently small $\varepsilon_0$.
    Because $\varepsilon_0\le 9\sqrt{\varepsilon_0}$ for all sufficiently small $\varepsilon_0$, \Cref{thm:composition} applied to the outer instance $\widehat G_F$ yields a random labeling
    \[
        \widehat{\bm\pi}_F:=\mathcal R_{\mathrm{Comp},0,F}(\bm y)
    \]
    of $\widehat G_F$ satisfying
    \[
        \E\bigl[\val_{\widehat G_F}(\widehat{\bm\pi}_F)\bigr]\ge 4\sqrt{\tau_0}.
    \]
    Now apply the cleanup recovery map $R^{\mathrm{red}}_{0,F}$ to $\widehat{\bm\pi}_F$. The soundness statement for $\mathrm{red}_{\tau_0}^\circ$ from \Cref{sec:dh-red-interface} then gives a random labeling
    \[
        \bm\pi_F:=R^{\mathrm{red}}_{0,F}(\widehat{\bm\pi}_F)
    \]
    of $G_F$ satisfying
    \[
        \E\bigl[\val_{G_F}(\bm\pi_F)\bigr]\ge \tau_0.
    \]
    Since $\tau_0=\varepsilon_0^8\ge \Delta_{\mathrm{lc}}=\varepsilon_0^{16}$ for all sufficiently small $\varepsilon_0$, the same labeling also satisfies
    \[
        \E\bigl[\val_{G_F}(\bm\pi_F)\bigr]\ge \Delta_{\mathrm{lc}}.
    \]
    Since $\val_{G_F}(\bm\pi_F)\in[0,1]$ and $\Delta_0=\Delta_{\mathrm{lc}}/2$, we have
    $\Pr\bigl[\val_{G_F}(\bm\pi_F)\ge \Delta_0\bigr]\ge \Delta_0$.
    For all sufficiently large $n_0$, the threshold $\Delta_0$ satisfies the hypotheses of \Cref{prop:robust-heavy-hit}; indeed,
    $\Delta_0=\varepsilon_0^{16}$ dominates both the point-object threshold and the light-bad-tuple threshold once the exponent from \Cref{thm:left-predicate-label-cover-under-clause-swaps} and the initialization constant $c_{\mathrm{init}}$ are fixed large enough.
    Conditional on the event $\val_{G_F}(\bm\pi_F)\ge \Delta_0$, \Cref{prop:robust-heavy-hit} therefore returns a satisfying Boolean assignment to $F$ with probability at least $\Delta_0^{C_{\mathrm{hit}}}$.
    Projecting away the auxiliary gadget variables via $R_{\Phi\leftarrow F}$ then gives an assignment of $\Phi$ of value $1$.
    The later target $1-\eta_{\mathrm{FY}}$ comes only after averaging over repeated independent trials and the selector, since a single trial succeeds with only inverse-polynomial probability.
    Hence one independent trial of $\mathrm{Hit}^{(1)}_{0,\Phi}$ succeeds with probability at least
    $p_0:=\Delta_0^{C_{\mathrm{hit}}+1}=\varepsilon_0^{O(1)}$.

    Choose
    $T:=\left\lceil C\,p_0^{-1}\log\frac{4}{\eta_{\mathrm{FY}}}\right\rceil$
    for a sufficiently large absolute constant $C$.
    Let
    \[
        \mathrm{Cand}_{0,\Phi}(\bm y)
        :=
        \bigl(
            b^{(1)},\ldots,b^{(T)}
        \bigr)
    \]
    be the $T$-tuple of independent outputs of $\mathrm{Hit}^{(1)}_{0,\Phi}$ on the common input labeling $\bm y$.
    Then
    \[
        \Pr\bigl[\exists j\in[T]\text{ with }\val_\Phi(b^{(j)})=1\bigr]
        \ge
        1-(1-p_0)^T
        \ge
        1-\eta_{\mathrm{FY}}/4.
    \]
    By item~\textup{(1)} of \Cref{lem:stable-threshold-selector}, whenever this event occurs the selector $\mathrm{Sel}_{\Phi}^{I_0}$ outputs an assignment of source value at least $1-\eta_2=1-\eta_{\mathrm{FY}}/2$.
    Therefore
    \[
        \E\bigl[\val_\Phi(\mathrm{Rec}_{0,\Phi}(\bm y))\bigr]
        \ge
        \left(1-\eta_{\mathrm{FY}}/4\right)\left(1-\eta_{\mathrm{FY}}/2\right)
        >
        1-\eta_{\mathrm{FY}}.
    \]
    Thus the algorithm
    $B(\Phi):=\mathrm{Rec}_{0,\Phi}\bigl(A(\mathrm{Inst}_0(\Phi))\bigr)$
    attains value at least $1-\eta_{\mathrm{FY}}$ on every source instance $\Phi\in\Phi_{N_{\mathrm{base}}}$.

    \smallskip\noindent\textbf{Update of $C_T,C_R,D$.}
    The transformation map $\Phi\mapsto \mathrm{Inst}_0(\Phi)$ has instance-distance constant
    $C_T^{(0)}=\varepsilon_0^{-O(1)}$.
    Indeed, one source swap in $\Phi$ changes only $O(1)$ clauses of $T_{\mathrm{CNF}}(\Phi)$.
    By \Cref{thm:left-predicate-label-cover-under-clause-swaps,thm:rm-eval}, the robust label-cover step turns those clause changes into only $\varepsilon_0^{-O(1)}$ left-predicate changes.
    The cleanup $\mathrm{red}_{\tau_0}^\circ$ then multiplies that by at most $\varepsilon_0^{-O(1)}$, and item~\textup{(3)} of \Cref{thm:composition} contributes one more factor $\varepsilon_0^{-O(1)}$.

    For the recovery side, let $L_{\mathrm{hit}}$ be the constant for how the single-sample map $\mathrm{Hit}^{(1)}_{0,\Phi}$ changes under assignment perturbations on a fixed source instance $\Phi$.
    To bound $L_{\mathrm{hit}}$, follow one sample through the maps in the order
    \[
        \mathcal R_{\mathrm{Comp},0,F},
        \qquad
        R^{\mathrm{red}}_{0,F},
        \qquad
        \mathcal H_{\mathrm{lc},F,\Delta_0},
        \qquad
        R_{\Phi\leftarrow F}.
    \]
    The first map changes by at most $\varepsilon_0^{-O(1)}$ under assignment perturbations by \Cref{thm:composition}, the second changes by at most $\varepsilon_0^{-O(1)}$ by \Cref{sec:dh-red-interface}, the sampler from \Cref{prop:robust-heavy-hit} has assignment-perturbation constant $q^{O(1)}=\varepsilon_0^{-O(1)}$ by item~\textup{(2)} of that proposition, and changing one Boolean input coordinate of the final projection changes only $O(1)$ source coordinates.
    Hence
    $L_{\mathrm{hit}}=\varepsilon_0^{-O(1)}$.
    The $T$-tuple map $\mathrm{Cand}_{0,\Phi}$ therefore has assignment-perturbation constant at most $T\,L_{\mathrm{hit}}$ with respect to the sum of the coordinatewise Hamming distances on the candidate list.
    Applying item~\textup{(2)} of \Cref{lem:stable-threshold-selector} yields
    \[
        C_R^{(0)}
        \le
        C_{\mathrm{sel}}\,T\,L_{\mathrm{hit}}
        =
        \varepsilon_0^{-O(1)}.
    \]

    For the bound for how the recovery changes under source swaps, let $\Phi,\widetilde\Phi\in\Phi_{N_{\mathrm{base}}}$ differ by one source swap, and write
    \[
        \widetilde F:=T_{\mathrm{CNF}}(\widetilde\Phi),
        \qquad
        G_{\widetilde F}:=\mathcal T_{\mathrm{Robust}}(\widetilde F),
        \qquad
        \widehat G_{\widetilde F}:=\mathrm{red}_{\tau_0}^\circ(G_{\widetilde F}).
    \]
    Again follow one sample through the maps in order.
    On the common cleaned family $\widehat{\mathcal G}_0$, the composition recovery changes by at most $\varepsilon_0^{-O(1)}$ under one source swap.
    On the common raw family $\mathcal G_0$, the cleanup recovery changes by at most $\varepsilon_0^{-O(1)}$ under one source swap.
    Finally, one source swap in $\Phi$ changes $F$ by only $O(1)$ clause swaps, so item~\textup{(3)} of \Cref{prop:robust-heavy-hit} contributes another $\varepsilon_0^{-O(1)}$ single-sample source-swap bound.
    Composing these bounds with the projection $R_{\Phi\leftarrow F}$, which changes only $O(1)$ source coordinates when one Boolean input coordinate changes, gives a single-sample source-swap constant
    $D_{\mathrm{hit}}=\varepsilon_0^{-O(1)}$
    for $\mathrm{Hit}^{(1)}_{0,\Phi}$.
    Consequently, the candidate-list map $\mathrm{Cand}_{0,\Phi}$ has source-swap constant at most $T D_{\mathrm{hit}}$ under the sum metric on the list coordinates.
    Using item~\textup{(2)} of \Cref{lem:stable-threshold-selector} on a fixed source instance and then item~\textup{(3)} on the selector itself yields
    \[
        D_0
        \le
        C_{\mathrm{sel}}\,T D_{\mathrm{hit}} + C_{\mathrm{sel}}\,T
        =
        \varepsilon_0^{-O(1)}.
    \]
    \smallskip\noindent\textbf{Pulling back the sensitivity lower bound.}
    The direct initialization reduction therefore satisfies the hypotheses of \Cref{cor:sensitivity-pullback-lb}.
    Here the source family is $\Phi_{N_{\mathrm{base}}}$, the target family is $\SwapClo(\mathcal I_0)$, and the constants are $C_T^{(0)},C_R^{(0)},D_0$.
    Since $B$ attains value at least $1-\eta_{\mathrm{FY}}$ on $\Phi_{N_{\mathrm{base}}}$, the Fleming--Yoshida lower bound gives
    $\SwapSens(B,\Phi_{N_{\mathrm{base}}})\ge \Omega(N_{\mathrm{base}})$.
    Applying \Cref{cor:sensitivity-pullback-lb} yields
    \[
        S\ge \frac{\Omega(N_{\mathrm{base}})-D_0}{C_T^{(0)} C_R^{(0)}}.
    \]
    Because $D_0=\varepsilon_0^{-O(1)}=(\log n_0)^{O(1)}=o(N_{\mathrm{base}})$ after enlarging $B_*$ if necessary, this implies
    $S \ge \Omega\!\left(\frac{N_{\mathrm{base}}}{\varepsilon_0^{-O(1)}}\right)$.
    Because $A$ was arbitrary, we conclude that
    \[
        \SwapSens_{9\sqrt{\varepsilon_0}}\bigl(\SwapClo(\mathcal I_0)\bigr)
        \ge
        \Omega\!\left(\frac{N_{\mathrm{base}}}{\varepsilon_0^{-O(1)}}\right).
    \]
    Since $N_{\mathrm{base}}=n_0^{1/B_*+o(1)}$, after shrinking $\delta$ below $1/B_*$ and enlarging $c_{\mathrm{sens}}$, this is exactly item~(3).
\end{proof}

\subsection{One-Stage Update of the Reduction Data}

We now formalize stage-$i$ reduction data and the one-stage update.
Fix the absolute constants from \Cref{prop:init-bootstrap}, and denote them by
\[
    c_{\mathrm{init}},c_{\mathrm{str}},c_{\mathrm{size}},c_{\mathrm{sens}},\delta.
\]
For every sufficiently large base size $n_0$, set
\[
    \varepsilon_0 := (\log n_0)^{-c_{\mathrm{init}}}.
\]
Let $\mathcal I_0$ be the initialized family supplied by \Cref{prop:init-bootstrap}.
Throughout the rest of this subsection, $n_0$ denotes the initial target size of this family.
Let $N_{\mathrm{base}}$ denote the corresponding base $2$CSP size.
The update uses
\[
    \xi_i := \varepsilon_i^{1/4}
    \qquad\text{and}\qquad
    \varepsilon_{i+1} := 20 \xi_i.
\]

\begin{definition}[Stage-$i$ reduction data]\label{def:stage-direct-package}
    Fix $i\ge 0$.
    Let $\mathcal I_i$ be a family of left-predicate label cover instances whose instances all have a common target graph and common alphabets.
    We say that the tuple
    \[
        \bigl(\mathcal I_i,\ \mathrm{Inst}_i,\ \{\mathrm{Rec}_{i,\Phi}\}_{\Phi\in\Phi_{N_{\mathrm{base}}}},\ C_T^{(i)},\ C_R^{(i)},\ D_i\bigr)
    \]
    is \emph{stage-$i$ reduction data} on $\mathcal I_i$ if:
    \begin{enumerate}
        \item $\mathrm{Inst}_i(\Phi)$ is satisfiable for every $\Phi\in\Phi_{N_{\mathrm{base}}}$.
        \item If a labeling of $\mathrm{Inst}_i(\Phi)$ has expected value at least $9\sqrt{\varepsilon_i}$, then $\mathrm{Rec}_{i,\Phi}$ recovers an assignment to $\Phi$ of expected value at least $1-\eta_{\mathrm{FY}}$.
        \item For every one-swap pair $\Phi,\widetilde\Phi\in\Phi_{N_{\mathrm{base}}}$,
        \[
            \SwapDist\bigl(\mathrm{Inst}_i(\Phi),\mathrm{Inst}_i(\widetilde\Phi)\bigr)\le C_T^{(i)}.
        \]
        \item For every fixed $\Phi\in\Phi_{N_{\mathrm{base}}}$ and every two assignments $\pi,\widetilde\pi$ on $\mathrm{Inst}_i(\Phi)$,
        \[
            \EMD\bigl(\mathrm{Rec}_{i,\Phi}(\pi),\mathrm{Rec}_{i,\Phi}(\widetilde\pi)\bigr)
            \le
            C_R^{(i)} d_{\mathrm H}(\pi,\widetilde\pi).
        \]
        \item For every one-swap pair $\Phi,\widetilde\Phi\in\Phi_{N_{\mathrm{base}}}$, the two target instances $\mathrm{Inst}_i(\Phi)$ and $\mathrm{Inst}_i(\widetilde\Phi)$ have a common target graph and common alphabets, and for every assignment $\pi$ on that common target domain,
        \[
            \EMD\bigl(\mathrm{Rec}_{i,\Phi}(\pi),\mathrm{Rec}_{i,\widetilde\Phi}(\pi)\bigr)\le D_i.
        \]
    \end{enumerate}
\end{definition}

The initialized family from \Cref{prop:init-bootstrap} therefore supplies stage-$0$ reduction data.
The next table summarizes the formal stage transition proved below. In using this update, we repeatedly rely on two facts: once the relevant swap closure has a common target graph and common alphabets, \Cref{lem:pathwise-drift} upgrades a single-swap bound for how the recovery changes under source swaps to a uniform bound over the family, and on each stage family the alphabet-then-degree reduction from \Cref{sec:alphabet-reduction} is fixed after the auxiliary code and expander choices are frozen even though the recovery map $R_{\varepsilon,I}^{\mathrm{red}}$ still depends on the source instance.

\begin{table}[H]
    \centering
    \small
    \caption{Summary of the stage-$i$ update.}\label{tab:stage-update}
    \begin{tabular}{@{}l l l l@{}}
        \toprule
        Family & Threshold update & Structure & Tracking \\
        \midrule
        $\mathcal I_i$
        &
        \parbox[t]{0.18\linewidth}{\raggedright Target value $\chi_i=9\sqrt{\varepsilon_i}$}
        &
        \parbox[t]{0.31\linewidth}{\raggedright Common target graph and alphabets; no isolated vertices; local description complexity, degrees, and right alphabet $\le \varepsilon_i^{-c_{\mathrm{str}}}$; size $\le n_0\prod_{t=0}^{i}\varepsilon_t^{-c_{\mathrm{size}}}$}
        &
        \parbox[t]{0.20\linewidth}{\raggedright Stage-$i$ reduction data $(C_T^{(i)},C_R^{(i)},D_i)$}
        \\
        $\mathcal J_{i+1}=\mathcal I_i\circledast \mathcal D_i$
        &
        \parbox[t!]{0.18\linewidth}{\raggedright Decoder soundness $\xi_i=\varepsilon_i^{1/4}$; value $\ge \varepsilon_{i+1}=20\xi_i$ recovers to $\chi_i$}
        &
        \parbox[t]{0.31\linewidth}{\raggedright Common target graph and alphabets persist; no isolated vertices persist; all composition blow-ups are $\varepsilon_i^{-O(1)}$}
        &
        \parbox[t]{0.20\linewidth}{\raggedright Intermediate reduction data after composition}
        \\
        $\mathcal I_{i+1}=\mathrm{red}_{\varepsilon_{i+1}}^\circ(\mathcal J_{i+1})$
        &
        \parbox[t]{0.18\linewidth}{\raggedright Next-stage target value $9\sqrt{\varepsilon_{i+1}}$}
        &
        \parbox[t]{0.31\linewidth}{\raggedright Common target graph and alphabets persist; no isolated vertices persist; local description complexity, degrees, and right alphabet $\le \varepsilon_{i+1}^{-c_{\mathrm{str}}}$; size $\le n_0\prod_{t=0}^{i+1}\varepsilon_t^{-c_{\mathrm{size}}}$}
        &
        \parbox[t]{0.20\linewidth}{\raggedright Updated reduction data $(C_T^{(i+1)},C_R^{(i+1)},D_{i+1})$}
        \\
        \bottomrule
    \end{tabular}
\end{table}

The next proposition advances such data by one stage through composition followed by the fixed alphabet-then-degree reduction, and \Cref{lem:iterative-family} then iterates this update from stage~$0$.

\begin{proposition}[One-stage update of the reduction data]\label{prop:stage-update}
    There exists an absolute constant $c_0>0$ with the following property.
    Fix $i\ge 0$ and suppose $\varepsilon_i<c_0$.
    Assume that $\mathcal I_i$ is a family of left-predicate label cover instances whose local description complexity, left degree, right degree, and right alphabet size are all at most $\varepsilon_i^{-c_{\mathrm{str}}}$, whose left alphabet size is at most $\exp(\varepsilon_i^{-c_{\mathrm{str}}})$, whose vertex counts satisfy
    \[
        |U|, |V| \le n_0 \cdot \prod_{t=0}^{i} \varepsilon_t^{-c_{\mathrm{size}}}.
    \]
    Assume further that every left and right vertex has degree at least $1$.
    Assume also that all instances in $\mathcal I_i$ are defined on a common graph and domain, and that
    \[
        \bigl(\mathcal I_i,\ \mathrm{Inst}_i,\ \{\mathrm{Rec}_{i,\Phi}\}_{\Phi\in\Phi_{N_{\mathrm{base}}}},\ C_T^{(i)},\ C_R^{(i)},\ D_i\bigr)
    \]
    is stage-$i$ reduction data on $\mathcal I_i$, with
    \[
        C_T^{(i)},\ C_R^{(i)},\ D_i \le \prod_{t=0}^{i}\varepsilon_t^{-O(1)}.
    \]
    Choose a sufficiently large absolute constant $A$ and define
    \[
        M_i := \varepsilon_i^{-A},
        \qquad
        \xi_i := \varepsilon_i^{1/4},
        \qquad
        \varepsilon_{i+1} := 20\xi_i.
    \]
    Let $\mathcal D_i$ be the Dinur--Harsha decoder from \Cref{lem:dinur-harsha-decoder} at decision complexity $M_i$ and inner soundness parameter $\xi_i$, and define
    \[
        \mathcal J_{i+1}
        :=
        \mathcal I_i\circledast \mathcal D_i
        =
        \{I\circledast \mathcal D_i : I\in \mathcal I_i\},
        \qquad
        \mathcal I_{i+1}
        :=
        \mathrm{red}_{\varepsilon_{i+1}}^\circ(\mathcal J_{i+1}).
    \]
    Then there exist a map
    \[
        \mathrm{Inst}_{i+1}:\Phi_{N_{\mathrm{base}}}\to \mathcal I_{i+1},
    \]
    recoveries, one for each source instance,
    \[
        \mathrm{Rec}_{i+1,\Phi},
        \qquad \Phi\in\Phi_{N_{\mathrm{base}}},
    \]
    and constants
    \[
        C_T^{(i+1)},\ C_R^{(i+1)},\ D_{i+1}
    \]
    such that
    \[
        \bigl(\mathcal I_{i+1},\ \mathrm{Inst}_{i+1},\ \{\mathrm{Rec}_{i+1,\Phi}\}_{\Phi\in\Phi_{N_{\mathrm{base}}}},\ C_T^{(i+1)},\ C_R^{(i+1)},\ D_{i+1}\bigr)
    \]
    is stage-$(i+1)$ reduction data on $\mathcal I_{i+1}$, and
    \[
        C_T^{(i+1)},\ C_R^{(i+1)},\ D_{i+1}
        \le
        \prod_{t=0}^{i+1}\varepsilon_t^{-O(1)}.
    \]
    In particular, after enlarging $c_{\mathrm{str}},c_{\mathrm{size}},c_{\mathrm{sens}}$ if necessary, items~(1)--(3) of \Cref{lem:iterative-family} hold at stage $i+1$.
\end{proposition}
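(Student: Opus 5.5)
The plan is to build stage-$(i+1)$ reduction data by applying \Cref{lem:compose-reductions} twice to the stage-$i$ data: first with the composition layer $T=\circledast\mathcal D_i$, $R_I=\mathcal R_{\mathrm{Comp},I}$, and then with the cleanup layer $T=\mathrm{red}^\circ_{\varepsilon_{i+1}}$, $R_J=R^{\mathrm{red}}_{\varepsilon_{i+1},J}$. The resulting maps are
\[
    \mathrm{Inst}_{i+1}(\Phi):=\mathrm{red}^\circ_{\varepsilon_{i+1}}(\mathrm{Inst}_i(\Phi)\circledast\mathcal D_i),
    \qquad
    \mathrm{Rec}_{i+1,\Phi}:=\mathrm{Rec}_{i,\Phi}\circ\mathcal R_{\mathrm{Comp},\mathrm{Inst}_i(\Phi)}\circ R^{\mathrm{red}}_{\varepsilon_{i+1},\mathrm{Inst}_i(\Phi)\circledast\mathcal D_i}.
\]
Before applying the lemma, I will pad all admissible-neighborhood circuits of $\mathcal I_i$ to the common size $M_i=\varepsilon_i^{-A}$. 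Since the right alphabet size is at most $\varepsilon_i^{-c_{\mathrm{str}}}\le M_i^{\gamma}$ and $\xi_i\ge M_i^{-\alpha}$ once $A$ is large, \Cref{lem:dinur-harsha-decoder} applies. Its parameters $q,m,2^{\mathsf r},|\sigma|,s(M_i)$ and the list size $\mathsf L\le 2/\xi_i$ are then all $\varepsilon_i^{-O(1)}$.

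\emph{Completeness and soundness.} Completeness is item~(1) of \Cref{thm:composition} followed by item~(1) of \Cref{lem:degree-alphabet-reduction}. For soundness, start from value at least $9\sqrt{\varepsilon_{i+1}}\ge 4\sqrt{\varepsilon_{i+1}}$ on $\mathcal I_{i+1}$. Item~(2) of \Cref{lem:degree-alphabet-reduction} recovers value $\varepsilon_{i+1}=20\xi_i$ on $\mathcal J_{i+1}$. I then apply item~(2) of \Cref{thm:composition} with outer threshold $\Delta=9\sqrt{\varepsilon_i}$, which needs
\[
    \xi_i+\mathsf L\cdot 9\sqrt{\varepsilon_i}\le \varepsilon_i^{1/4}+18\,\varepsilon_i^{1/4}\le 20\xi_i .
\]
This recovers value $9\sqrt{\varepsilon_i}$ on $\mathcal I_i$, and the stage-$i$ recovery then gives $1-\eta_{\mathrm{FY}}$. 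The choice $\varepsilon_{i+1}=20\varepsilon_i^{1/4}$ is exactly what makes this arithmetic close. I also need $\varepsilon_{i+1}<1/16$, which is where the hypothesis $\varepsilon_i<c_0$ enters.

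\emph{Structural invariants.} From item~(6) of \Cref{thm:composition}, $\mathcal J_{i+1}$ has:
\begin{itemize}
    \item left degree at most $d_Vm$, right degree at most $d_U2^{\mathsf r}$, and right alphabet $\sigma$;
    \item local description complexity $\mathrm{poly}(d_V,m,s)$, all of which are $\varepsilon_i^{-O(1)}$;
    \item left alphabet $\sigma^{d_Vm}=\exp(\varepsilon_i^{-O(1)})$;
    \item vertex counts multiplied by $\varepsilon_i^{-O(1)}$.
\end{itemize}
The cleanup then reduces right degree to $O(\max\{\Delta_V,\varepsilon_{i+1}^{-1}\})$ and the right alphabet to $O(\varepsilon_{i+1}^{-6})$. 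It multiplies left degree and vertex counts by $\varepsilon_{i+1}^{-O(1)}$, and by \Cref{lem:red-local-description} local description complexity stays polynomial. Since $\varepsilon_i^{-1}=\varepsilon_{i+1}^{-O(1)}$ (because $\varepsilon_i=(\varepsilon_{i+1}/20)^4$), every bound becomes $\varepsilon_{i+1}^{-c_{\mathrm{str}}}$ once $c_{\mathrm{str}}$ is enlarged. Sizes pick up a factor $\varepsilon_{i+1}^{-c_{\mathrm{size}}}$. Non-isolatedness and the common graph and alphabets persist by the remarks after \Cref{thm:composition} and in \Cref{sec:dh-red-interface}, with the auxiliary choices frozen as in \Cref{rem:red-stagewise-choices}.

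\emph{Tracking constants.} For composition, use the trivial query-degree bound $d\le d_U q2^{\mathsf r}$. This gives $C_{T,\mathrm{Comp}}=d_U2^{\mathsf r}$, $C_{R,\mathrm{Comp}}\le d_Uq(1+d_V)$ and $D_{\mathrm{Comp}}=d_U(1+d_V)+1$, all $\varepsilon_i^{-O(1)}$. For cleanup, \Cref{lem:degree-alphabet-reduction} gives $O(\varepsilon_{i+1}^{-7}\log|\Sigma_V|)$, $1+\Delta_U/\delta_V\le 1+\Delta_U$ (using non-isolatedness, so $\delta_V\ge1$), and $1/\delta_V\le 1$.

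\emph{Main obstacle.} The subtle step is hypothesis~(c) of \Cref{lem:compose-reductions}, which is stated for one-swap pairs $\Phi,\widetilde\Phi$ of the base family. There, $\mathrm{Inst}_i(\Phi)$ and $\mathrm{Inst}_i(\widetilde\Phi)$ differ by up to $C_T^{(i)}$ swaps rather than one. I handle this with \Cref{lem:pathwise-drift}, applied on $\SwapClo(\mathcal I_i)$ and then on $\SwapClo(\mathcal J_{i+1})$, each of which has a common graph. This inflates the layer drift to $D_{\mathrm{layer}}\cdot C_T^{(i)}$ for composition, and to $D_{\mathrm{layer}}\cdot C_T^{(i)}C_{T,\mathrm{Comp}}$ for cleanup.

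For the cleanup, I must also ensure that the one-swap drift bound on $\mathcal J_{i+1}$ holds on its swap closure, not merely on images. This is where fixed auxiliary choices are essential: the branch of the reduction and the gadget are uniform across $\SwapClo(\mathcal J_{i+1})$ by \Cref{lem:red-canonical-stagewise}.

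With these in place, the constants update as $C_T^{(i+1)}=C_T^{(i)}\varepsilon_{i+1}^{-O(1)}$, $C_R^{(i+1)}=C_R^{(i)}\varepsilon_{i+1}^{-O(1)}$, and
\[
    D_{i+1}\le D_i+C_R^{(i)}C_T^{(i)}\varepsilon_{i+1}^{-O(1)}+C_R^{(i)}C_{R,\mathrm{Comp}}C_T^{(i)}\varepsilon_{i+1}^{-O(1)}.
\]
Each of these is at most $\prod_{t\le i+1}\varepsilon_t^{-O(1)}$, provided the $O(1)$ exponents are fixed uniformly in $i$. Checking that uniformity carefully is the remaining bookkeeping.
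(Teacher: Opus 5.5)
Your construction is the paper's proof almost line for line. You use the same $\mathrm{Inst}_{i+1}$ and $\mathrm{Rec}_{i+1,\Phi}$, two applications of \Cref{lem:compose-reductions} with \Cref{lem:pathwise-drift} inflating the layer drifts by $C_T^{(i)}$ and $C_T^{(i)}C_{T,\mathrm{Comp}}$, and the same soundness chain $\xi_i+\frac{2}{\xi_i}\cdot 9\sqrt{\varepsilon_i}\le 19\,\varepsilon_i^{1/4}\le\varepsilon_{i+1}$. Your cleanup drift $1/\delta_V$ is slightly tighter than the paper's $1+\Delta_U/\delta_V$, and your remark that $\varepsilon_i<c_0$ keeps $\varepsilon_{i+1}<1/16$ makes explicit something the paper leaves implicit.

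What you omit is item~(3) of \Cref{lem:iterative-family}, which the statement asserts. The paper obtains it as follows:
\begin{itemize}
\item take an arbitrary algorithm $\mathcal A$ on $\SwapClo(\mathcal I_{i+1})$ achieving $9\sqrt{\varepsilon_{i+1}}$;
\item form $B(\Phi):=\mathrm{Rec}_{i+1,\Phi}(\mathcal A(\mathrm{Inst}_{i+1}(\Phi)))$, which has value at least $1-\eta_{\mathrm{FY}}$ by your soundness transfer;
\item invoke the Fleming--Yoshida bound $\SwapSens(B,\Phi_{N_{\mathrm{base}}})\ge\Omega(N_{\mathrm{base}})$ together with \Cref{cor:sensitivity-pullback-lb};
\item absorb $D_{i+1}$ using $\varepsilon_t\ge\varepsilon_0^{4^{-t}}$, which gives $\prod_{t\le i+1}\varepsilon_t^{-O(1)}\le(1/\varepsilon_0)^{O(1)}=(\log n_0)^{O(1)}=o(N_{\mathrm{base}})$.
\end{itemize}
You need to add this step.

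More seriously, the uniformity you defer as ``remaining bookkeeping'' does not close for the structural exponent. The paper's proof makes the same move (``every polynomial in $\varepsilon_i^{-1}$ is also a polynomial in $\varepsilon_{i+1}^{-1}$''). Since $\varepsilon_i^{-1}=(20/\varepsilon_{i+1})^{4}$, a bound $\varepsilon_i^{-c}$ becomes $\Theta(\varepsilon_{i+1}^{-4c})$ on the new scale.
\begin{itemize}
\item The composed left degree $\deg_{\mathcal I_i}(v)\,m_i$ is only known to be at most $\varepsilon_i^{-c_{\mathrm{str}}}M_i^{a_1}=\varepsilon_i^{-c_{\mathrm{str}}-Aa_1}$, where $A\ge c_{\mathrm{str}}$ is forced by padding to $M_i$. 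So what you can certify at stage $i+1$ has exponent $4(c_{\mathrm{str}}+Aa_1)>4c_{\mathrm{str}}$.
\item After alphabet reduction, the local description complexity is at least this left degree, because every block must be checked to be a codeword. This is the quantity that must fit under $M_{i+1}=\varepsilon_{i+1}^{-A}$.
\end{itemize}
Hence ``after enlarging $c_{\mathrm{str}}$ if necessary'' holds only with a stage-dependent constant. \Cref{lem:iterative-family}, by contrast, needs one absolute $c_{\mathrm{str}}$, one $A$, and one exponent in $C_T,C_R,D$ across all stages.

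This is not just slack in the estimates. On the padded graph, $(v,r)$ has degree exactly $\deg(v)\,m_i$: it carries the full proof length, not the query complexity $q_i=M_i^{1/8}$ that Dinur--Harsha's unpadded composition uses to shrink parameters. Nor can $m_i$ collapse between stages. The unqueried columns are unconstrained, so the next admissible sets have at least $|\sigma_i|^{m_i-q_i}$ elements. Perfect completeness forces distinct satisfying inputs to have distinct proofs, so $|\sigma_{i+1}|^{m_{i+1}}\ge|\sigma_i|^{m_i-q_i}$. Thus left degrees essentially never decrease, while the uniform invariant shrinks the permitted left degree to roughly its fourth root at every stage.

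Closing the iteration therefore requires one of two fixes. Either accept stage-dependent exponents and recompute the final size and degree bounds from that recursion, or use a composition whose left degree does not carry $m_i$.
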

\begin{proof}
    Assume the stated stage-$i$ reduction data.
    We first build $\mathcal J_{i+1}$ and $\mathcal I_{i+1}$ and record the bounds on local description complexity, degrees, alphabet sizes, and vertex counts needed for the next stage.
    We then define $\mathrm{Inst}_{i+1}$ and $\mathrm{Rec}_{i+1,\Phi}$, check satisfiability, and prove the soundness statement, verifying items~(1) and~(2) in the definition.
    After that we use \Cref{lem:compose-reductions} twice, together with \Cref{lem:pathwise-drift}, to update the tracking constants through the composition layer and then the fixed alphabet-then-degree reduction, which verifies items~(3)--(5).
    Finally we pull back the sensitivity lower bound through the new reduction data to obtain item~(3) of \Cref{lem:iterative-family} at stage $i+1$.

    \smallskip\noindent\textbf{Setup / decoder parameters.}
    Since the local description complexity, the left degree, and the right alphabet of $\mathcal I_i$ are all at most $\varepsilon_i^{-c_{\mathrm{str}}}$, every circuit deciding admissible neighborhood tuples in $\mathcal I_i$ can be padded to decision complexity at most $M_i$.
    Taking $A$ large enough ensures both $|\Sigma_V|\le M_i^\gamma$ and
    \[
        \xi_i = \varepsilon_i^{1/4} \ge M_i^{-\alpha}
    \]
    for all sufficiently small $\varepsilon_i$.
    Let $\mathcal D_i$ be the Dinur--Harsha decoder from \Cref{lem:dinur-harsha-decoder} at decision complexity $M_i$ and inner soundness parameter $\xi_i$.
    Its parameters satisfy
    \[
        q_i,\ m_i,\ 2^{\mathsf r_i},\ |\sigma_i|,\ s(M_i) \le M_i^{O(1)} = \varepsilon_i^{-O(1)}.
    \]
    The trivial proof-degree bound gives
    \[
        d_i \le \varepsilon_i^{-c_{\mathrm{str}}} q_i 2^{\mathsf r_i} \le \varepsilon_i^{-O(1)}.
    \]
    Let $d_{U,i}$ and $d_{V,i}$ be the degree bounds for $\mathcal I_i$, so
    \[
        d_{U,i},d_{V,i}\le \varepsilon_i^{-c_{\mathrm{str}}}.
    \]

    \smallskip\noindent\textbf{Construction of the next family.}
    Because the stage family $\mathcal I_i$ already has a common graph and domain, \Cref{thm:composition} shows that $\mathcal J_{i+1}$ also has a common graph and domain.
    By \Cref{thm:composition}, the family $\mathcal J_{i+1}$ has local description complexity at most $\varepsilon_i^{-O(1)}$, has left degree at most $\varepsilon_i^{-O(1)}$, right degree at most $\varepsilon_i^{-O(1)}$, right alphabet size at most $\varepsilon_i^{-O(1)}$, left alphabet size at most $\exp(\varepsilon_i^{-O(1)})$, and its vertex counts are obtained by multiplying those of $\mathcal I_i$ by at most $\varepsilon_i^{-O(1)}$.
    Because every vertex of every instance in $\mathcal I_i$ is non-isolated, the exact degree formulas from Algorithm~\ref{alg:composition} show that every vertex of every instance in $\mathcal J_{i+1}$ is non-isolated as well.
    Since
    \[
        \varepsilon_i = \left(\frac{\varepsilon_{i+1}}{20}\right)^4,
    \]
    every polynomial in $\varepsilon_i^{-1}$ is also a polynomial in $\varepsilon_{i+1}^{-1}$.
    The family $\mathcal I_{i+1}$ is therefore defined by taking the common branch from \Cref{lem:red-canonical-stagewise} together with the fixed auxiliary choices from \Cref{rem:red-stagewise-choices}.
    Since $\mathcal J_{i+1}$ itself has a common graph and domain, fixing these reduction choices across the family gives the whole image family $\mathcal I_{i+1}$ a common target graph and common alphabets.
    Hence $\mathcal I_{i+1}$ has local description complexity at most $\varepsilon_{i+1}^{-O(1)}$ by \Cref{lem:red-local-description}; here the input right alphabet size of $\mathcal J_{i+1}$ is already $\varepsilon_i^{-O(1)}$, so the polynomial dependence on $|\Sigma_V|$ is still polynomial in $\varepsilon_{i+1}^{-1}$.
    The same construction and the bounds from \Cref{sec:alphabet-reduction} imply that its left degree, right degree, right alphabet, and vertex blow-up are all bounded by $\varepsilon_{i+1}^{-O(1)}$, while the left alphabet remains at most $\exp(\varepsilon_{i+1}^{-O(1)})$.
    By construction of the alphabet-then-degree reduction, non-isolatedness on both sides is preserved from $\mathcal J_{i+1}$ to $\mathcal I_{i+1}$.
    Enlarging $c_{\mathrm{str}}$ and $c_{\mathrm{size}}$ if necessary establishes items~(1) and~(2) of \Cref{lem:iterative-family} at stage $i+1$.

    For each $\Phi\in \Phi_{N_{\mathrm{base}}}$, write
    \[
        J_{i+1,\Phi}
        :=
        \mathrm{Inst}_i(\Phi)\circledast \mathcal D_i,
        \qquad
        R^{\mathrm{red}}_{i+1,\Phi}
        :=
        R^{\mathrm{red}}_{\varepsilon_{i+1},J_{i+1,\Phi}}.
    \]
    Define
    \[
        \widehat{\mathrm{Inst}}_{i+1}(\Phi)
        :=
        J_{i+1,\Phi},
        \qquad
        \mathrm{Inst}_{i+1}(\Phi)
        :=
        \mathrm{red}_{\varepsilon_{i+1}}^\circ(J_{i+1,\Phi}).
    \]
    For each $\Phi\in\Phi_{N_{\mathrm{base}}}$, let $\mathcal R_{\mathrm{Comp},i,\Phi}$ denote the recovery map from \Cref{thm:composition} for the outer instance $\mathrm{Inst}_i(\Phi)$.
    Define
    \[
        \widehat{\mathrm{Rec}}_{i+1,\Phi}
        :=
        \mathrm{Rec}_{i,\Phi}
        \circ
        \mathcal R_{\mathrm{Comp},i,\Phi},
        \qquad
        \mathrm{Rec}_{i+1,\Phi}
        :=
        \widehat{\mathrm{Rec}}_{i+1,\Phi}
        \circ
        R^{\mathrm{red}}_{i+1,\Phi}.
    \]
    Because $\mathrm{Inst}_i(\Phi)$ is satisfiable, item~(1) of \Cref{thm:composition} shows that $J_{i+1,\Phi}$ is satisfiable, and the fixed alphabet-then-degree reduction preserves satisfiability.
    Thus $\mathrm{Inst}_{i+1}(\Phi)$ is satisfiable for every $\Phi\in\Phi_{N_{\mathrm{base}}}$.
    This defines $\mathrm{Inst}_{i+1}$ and $\mathrm{Rec}_{i+1,\Phi}$ and verifies item~(1) in the definition.

    \smallskip\noindent\textbf{Soundness transfer.}
    We next verify item~(2) in the definition.
    For soundness, fix $\Phi\in\Phi_{N_{\mathrm{base}}}$ and suppose a labeling of $\mathrm{Inst}_{i+1}(\Phi)$ has expected value at least $9\sqrt{\varepsilon_{i+1}}$.
    Since $4\sqrt{\varepsilon_{i+1}}\le 9\sqrt{\varepsilon_{i+1}}$, the recovery map $R^{\mathrm{red}}_{i+1,\Phi}$ yields a labeling of $J_{i+1,\Phi}$ with expected value at least $\varepsilon_{i+1}$.
    Apply \Cref{thm:composition} to the outer instance $\mathrm{Inst}_i(\Phi)$ with
    \[
        \chi_i := 9\sqrt{\varepsilon_i}
    \]
    and
    \[
        \xi_i := \varepsilon_i^{1/4}.
    \]
    Since $\mathsf L\le 2/\xi_i$, we have
    \[
        \xi_i + \frac{2}{\xi_i}\cdot \chi_i
        \le \varepsilon_i^{1/4} + 18\varepsilon_i^{1/4}
        \le 20\varepsilon_i^{1/4}
        = \varepsilon_{i+1}.
    \]
    Therefore $\mathcal R_{\mathrm{Comp},i,\Phi}$ recovers a labeling of $\mathrm{Inst}_i(\Phi)$ with expected value at least $9\sqrt{\varepsilon_i}$.
    Applying the stage-$i$ soundness guarantee for $\mathrm{Rec}_{i,\Phi}$ now gives an assignment to $\Phi$ of expected value at least $1-\eta_{\mathrm{FY}}$.

    \smallskip\noindent\textbf{Update of $C_T,C_R,D$.}
    We now verify items~(3)--(5) in the definition.
    Let
    \[
        C_{T,\mathrm{Comp}}^{(i)}:=d_{U,i}2^{\mathsf r_i},
        \qquad
        C_{R,\mathrm{Comp}}^{(i)}:=d_i(1+d_{V,i})2^{-\mathsf r_i},
        \qquad
        D_{\mathrm{Comp}}^{(i),\mathrm{single}}:=d_{U,i}(1+d_{V,i})+1.
    \]
    By the stage-$i$ bounds on degrees, alphabet sizes, and local description complexity, together with the decoder bounds, all three are at most $\varepsilon_i^{-O(1)}$.
    For every one-swap pair $\Phi,\widetilde\Phi\in\Phi_{N_{\mathrm{base}}}$, the stage-$i$ reduction data give
    \[
        \SwapDist\bigl(\mathrm{Inst}_i(\Phi),\mathrm{Inst}_i(\widetilde\Phi)\bigr)\le C_T^{(i)}.
    \]
    Because the whole stage family $\mathcal I_i$ has a common target graph and common alphabets, so does its swap closure $\SwapClo(\mathcal I_i)$, and the same one-outer-swap calculation for how the composition recovery changes under source swaps from item~(5) of \Cref{thm:composition} applies throughout that swap closure.
    Applying \Cref{lem:pathwise-drift} to the swap-closed family $\SwapClo(\mathcal I_i)$ therefore extends that single-swap bound to a uniform composition bound for how the recovery changes under source swaps:
    \[
        D_{\mathrm{Comp}}^{(i),\mathrm{fam}}
        \le
        C_T^{(i)} D_{\mathrm{Comp}}^{(i),\mathrm{single}}.
    \]

    We first treat the composition layer, which produces intermediate reduction data with $\widehat{\mathrm{Inst}}_{i+1}$, $\widehat{\mathrm{Rec}}_{i+1,\Phi}$, $\widehat{C}_T^{(i+1)}$, $\widehat{C}_R^{(i+1)}$, and $\widehat{D}^{(i+1)}$.
    Apply \Cref{lem:compose-reductions} to the stage-$i$ reduction data and the composition layer $I\mapsto I\circledast \mathcal D_i$.
    This yields the intermediate tracking constants
    \[
        \widehat{C}_T^{(i+1)}
        =
        C_T^{(i)} C_{T,\mathrm{Comp}}^{(i)},
        \qquad
        \widehat{C}_R^{(i+1)}
        =
        C_R^{(i)} C_{R,\mathrm{Comp}}^{(i)},
    \]
    and
    \[
        \widehat{D}^{(i+1)}
        \le
        D_i + C_R^{(i)} D_{\mathrm{Comp}}^{(i),\mathrm{fam}}.
    \]

    Let $C_{T,\mathrm{red}}^{(i+1)},C_{R,\mathrm{red}}^{(i+1)}\le \varepsilon_{i+1}^{-O(1)}$ denote the constants supplied by the existing locality bounds for how the transformation changes under source swaps and how the recovery changes under assignment perturbations for the alphabet-then-degree reduction.
    Let
    \[
        D_{\mathrm{red}}^{(i+1),\mathrm{single}}
    \]
    denote the bound for how the reduction recovery layer changes under a single pre-reduction swap on the family $\mathcal J_{i+1}$.
    By \Cref{sec:dh-red-interface}, the reduction $\mathrm{red}_{\varepsilon_{i+1}}^\circ$ itself is fixed on the stage family $\mathcal J_{i+1}$, even though the recovery maps $R^{\mathrm{red}}_{i+1,\Phi}$ still depend on the source instance.
    For a single pre-reduction swap, the same locality calculation as in the alphabet-reduction proof changes the recovered assignment by at most $1+\Delta_U/\delta_V$, where $\Delta_U$ and $\delta_V$ are the maximum left degree and minimum right degree of the relevant pre-reduction instance in $\mathcal J_{i+1}$.
    Because $\Delta_U\le \varepsilon_i^{-O(1)}$ on $\mathcal J_{i+1}$ and, as noted above, every right vertex there has degree at least $1$, this gives
    \[
        D_{\mathrm{red}}^{(i+1),\mathrm{single}}\le \varepsilon_{i+1}^{-O(1)}.
    \]
    Moreover, for every one-swap pair $\Phi,\widetilde\Phi\in\Phi_{N_{\mathrm{base}}}$,
    \[
        \SwapDist\bigl(J_{i+1,\Phi},J_{i+1,\widetilde\Phi}\bigr)
        \le
        C_T^{(i)} C_{T,\mathrm{Comp}}^{(i)}.
    \]
    Since $\mathcal J_{i+1}$ has a common target graph and common alphabets, so does its swap closure $\SwapClo(\mathcal J_{i+1})$, and the same fixed reduction choices apply on that swap-closed family.
    Therefore \Cref{lem:pathwise-drift} extends the corresponding single-swap bound to a uniform reduction bound for how the recovery changes under source swaps:
    \[
        D_{\mathrm{red}}^{(i+1),\mathrm{fam}}
        \le
        C_T^{(i)} C_{T,\mathrm{Comp}}^{(i)} D_{\mathrm{red}}^{(i+1),\mathrm{single}}.
    \]

    We then pass from these intermediate reduction data to the final ones via the alphabet-then-degree reduction.
    Apply \Cref{lem:compose-reductions} again, now to the intermediate reduction data
    \[
        \widehat{\mathrm{Inst}}_{i+1}(\Phi)=J_{i+1,\Phi},
        \qquad
        \widehat{\mathrm{Rec}}_{i+1,\Phi}
        =
        \mathrm{Rec}_{i,\Phi}\circ \mathcal R_{\mathrm{Comp},i,\Phi},
    \]
    and the reduction layer.
    This gives
    \[
        C_T^{(i+1)}
        =
        \widehat{C}_T^{(i+1)} C_{T,\mathrm{red}}^{(i+1)},
        \qquad
        C_R^{(i+1)}
        =
        \widehat{C}_R^{(i+1)} C_{R,\mathrm{red}}^{(i+1)},
    \]
    and
    \[
        D_{i+1}
        \le
        \widehat{D}^{(i+1)} + \widehat{C}_R^{(i+1)} D_{\mathrm{red}}^{(i+1),\mathrm{fam}}.
    \]
    Because every factor on the right-hand sides above is polynomial in $\varepsilon_i^{-1}$ or $\varepsilon_{i+1}^{-1}$, and
    \[
        \varepsilon_i=\left(\frac{\varepsilon_{i+1}}{20}\right)^4,
    \]
    we conclude that
    \[
        C_T^{(i+1)},\ C_R^{(i+1)},\ D_{i+1}
        \le
        \prod_{t=0}^{i+1}\varepsilon_t^{-O(1)}.
    \]
    Thus the constructed data satisfy items~(3)--(5) in the definition with the stated uniform bound over the family.

    \smallskip\noindent\textbf{Pulling back the sensitivity lower bound.}
    Let $A$ be any algorithm on $\SwapClo(\mathcal I_{i+1})$ attaining expected value at least $9\sqrt{\varepsilon_{i+1}}$ on every satisfiable instance, and let
    \[
        S:=\SwapSens(A,\SwapClo(\mathcal I_{i+1})).
    \]
    Define
    \[
        B(\Phi):=\mathrm{Rec}_{i+1,\Phi}\bigl(A(\mathrm{Inst}_{i+1}(\Phi))\bigr).
    \]
    Since $\mathrm{Inst}_{i+1}(\Phi)\in \mathcal I_{i+1}\subseteq \SwapClo(\mathcal I_{i+1})$ and is satisfiable, the algorithm $A$ is applicable there, and the soundness transfer just proved shows that $B$ attains value at least $1-\eta_{\mathrm{FY}}$ on $\Phi_{N_{\mathrm{base}}}$.
    The stage-$(i+1)$ data just constructed therefore satisfy the hypotheses of \Cref{cor:sensitivity-pullback-lb} with source family $\Phi_{N_{\mathrm{base}}}$, target family $\SwapClo(\mathcal I_{i+1})$, and constants $C_T^{(i+1)},C_R^{(i+1)},D_{i+1}$.
    The Fleming--Yoshida lower bound on $\Phi_{N_{\mathrm{base}}}$ gives
    \[
        \SwapSens(B,\Phi_{N_{\mathrm{base}}})\ge \Omega(N_{\mathrm{base}}).
    \]
    Applying \Cref{cor:sensitivity-pullback-lb} yields
    \[
        S\ge \frac{\Omega(N_{\mathrm{base}})-D_{i+1}}{C_T^{(i+1)} C_R^{(i+1)}}.
    \]
    Since $\varepsilon_{t+1}\ge \varepsilon_t^{1/4}$ for every $t$, we have
    \[
        \prod_{t=0}^{i+1}\varepsilon_t^{-O(1)}
        \le
        (1/\varepsilon_0)^{O(\sum_{t\ge 0}4^{-t})}
        =
        (\log n_0)^{O(1)}
        =
        o(N_{\mathrm{base}}).
    \]
    Hence the additive term $D_{i+1}$ is absorbed, and we obtain
    \[
        S
        \ge
        \Omega\!\left(
            \frac{n_0^\delta}{\prod_{t=0}^{i+1} \varepsilon_t^{-O(1)}}
        \right).
    \]
    Because $A$ was arbitrary, this proves
    \[
        \SwapSens_{9\sqrt{\varepsilon_{i+1}}}\bigl(\SwapClo(\mathcal I_{i+1})\bigr)
        \ge
        \Omega\!\left(
            \frac{n_0^\delta}{\prod_{t=0}^{i+1} \varepsilon_t^{-O(1)}}
        \right).
    \]
    Enlarging $c_{\mathrm{sens}}$ absorbs the polynomial losses and establishes item~(3) of \Cref{lem:iterative-family} at stage $i+1$.
    This completes the verification of the stage-$(i+1)$ reduction data.
\end{proof}

\subsection{Proof of \texorpdfstring{\Cref{thm:intro-label-cover}}{Theorem~\ref*{thm:intro-label-cover}}}

\begin{lemma}\label{lem:iterative-family}
    Define a sequence $(\varepsilon_i)_{i\ge 0}$ by
    \[
        \varepsilon_0 := (\log n_0)^{-c_{\mathrm{init}}}
        \qquad\text{and}\qquad
        \varepsilon_{i+1} := 20\cdot \varepsilon_i^{1/4}\ \ \text{for } i\ge 0.
    \]
    There exists an absolute constant $c_0>0$ such that, whenever $\varepsilon_i < c_0$, there is a family $\mathcal I_i$ of left-predicate label cover instances with local description complexity at most $\varepsilon_i^{-c_{\mathrm{str}}}$ and with the following properties:
    \begin{enumerate}
        \item The local description complexity, the left degree, the right degree, and the right alphabet size are all at most $\varepsilon_i^{-c_{\mathrm{str}}}$, the left alphabet size is at most $\exp(\varepsilon_i^{-c_{\mathrm{str}}})$, and every left and right vertex has degree at least $1$.
        \item The numbers of left vertices and right vertices satisfy
        \[
            |U|, |V| \le n_0 \cdot \prod_{t=0}^{i} \varepsilon_t^{-c_{\mathrm{size}}}.
        \]
        \item The swap sensitivity of the swap closure satisfies
        \[
            \SwapSens_{9\sqrt{\varepsilon_i}}\bigl(\SwapClo(\mathcal I_i)\bigr) \ge \Omega\left(
                \frac{n_0^\delta}{\prod_{t=0}^{i} \varepsilon_t^{-c_{\mathrm{sens}}}}
            \right).
        \]
    \end{enumerate}
\end{lemma}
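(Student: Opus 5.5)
The proof is an induction on $i$. The base case is \Cref{prop:init-bootstrap}, and the inductive step is \Cref{prop:stage-update}. The induction hypothesis will be stronger than items (1)--(3): we carry along stage-$i$ reduction data
\[
    \bigl(\mathcal I_i,\ \mathrm{Inst}_i,\ \{\mathrm{Rec}_{i,\Phi}\}_{\Phi\in\Phi_{N_{\mathrm{base}}}},\ C_T^{(i)},\ C_R^{(i)},\ D_i\bigr)
\]
in the sense of \Cref{def:stage-direct-package}. We also require that all instances of $\mathcal I_i$ live on a common graph and domain, and that
\[
    C_T^{(i)},\ C_R^{(i)},\ D_i\le \prod_{t=0}^{i}\varepsilon_t^{-O(1)},
\]
with a fixed implicit constant. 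Items (1)--(3) alone are not enough to apply \Cref{prop:stage-update}, which needs the tracking data; so the strengthened statement is what actually propagates.

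\textbf{Base case.} For $i=0$, \Cref{prop:init-bootstrap} gives items (1)--(3) directly. Its proof constructs $\mathrm{Inst}_0$ and $\mathrm{Rec}_{0,\Phi}=\mathrm{Sel}^{I_0}_\Phi\circ\mathrm{Cand}_{0,\Phi}$, shows that value $9\sqrt{\varepsilon_0}$ recovers to source value $1-\eta_{\mathrm{FY}}$, and shows $C_T^{(0)},C_R^{(0)},D_0=\varepsilon_0^{-O(1)}$. The common target graph comes from the fixed reductions and \Cref{thm:composition}. So stage-$0$ reduction data exists.

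\textbf{Inductive step.} Assume the strengthened hypothesis at stage $i$ and that $\varepsilon_{i+1}<c_0$. The recursion $\varepsilon_{i+1}=20\varepsilon_i^{1/4}$ makes the sequence increasing while it stays small, so $\varepsilon_i<\varepsilon_{i+1}<c_0$ and the smallness hypothesis of \Cref{prop:stage-update} holds at stage $i$. That proposition then yields stage-$(i+1)$ reduction data with the structural bounds, vertex counts, non-isolatedness, common domain, and tracking constants at most $\prod_{t\le i+1}\varepsilon_t^{-O(1)}$. Its final paragraph pulls the Fleming--Yoshida bound back through \Cref{cor:sensitivity-pullback-lb} to give item (3) at stage $i+1$.

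\textbf{Main obstacle.} The delicate point is that the constants $c_{\mathrm{str}},c_{\mathrm{size}},c_{\mathrm{sens}}$ and the $O(1)$ in the tracking exponent must be fixed once, independent of $i$. \Cref{prop:stage-update} says only ``after enlarging if necessary,'' and an enlargement at every step would break the induction.

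\emph{Degrees, alphabets, local description.} I would check that enlargement is never needed here. At each step these quantities are $\varepsilon_i^{-O(1)}$, coming from the decoder bound $M_i^{O(1)}$ of \Cref{lem:dinur-harsha-decoder} with an absolute constant. Since $\varepsilon_i=(\varepsilon_{i+1}/20)^4$, this becomes $\varepsilon_{i+1}^{-4O(1)}$ up to absolute factors. The exponent multiplies by $4$ only once per step, not cumulatively, because the stage-$(i+1)$ parameters are freshly bounded in terms of $\varepsilon_i$ alone. So one sufficiently large absolute $c_{\mathrm{str}}$ works uniformly.

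\emph{Vertex counts.} Each stage multiplies the counts by a factor $\varepsilon_{i+1}^{-O(1)}$ with a uniform exponent, so a single $c_{\mathrm{size}}$ suffices.

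\emph{Sensitivity and tracking constants.} Here the losses do accumulate, as products over $t\le i+1$. I would set
\[
    c_{\mathrm{sens}}:=2\kappa+1,
\]
where $\kappa$ is the uniform per-stage exponent in $C_T,C_R$. The additive term $D_{i+1}$ is harmless because
\[
    \prod_{t}\varepsilon_t^{-O(1)}\le (\log n_0)^{O(\sum_t 4^{-t})}=o(N_{\mathrm{base}}),
\]
using $\varepsilon_t\ge\varepsilon_0^{4^{-t}}$, so $D_{i+1}$ is negligible against $\Omega(N_{\mathrm{base}})$.

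With these constants fixed once, the induction closes and gives items (1)--(3) for every $i$ with $\varepsilon_i<c_0$.
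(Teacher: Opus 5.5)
Your induction is the paper's: the base case comes from \Cref{prop:init-bootstrap}, the inductive step from \Cref{prop:stage-update}, and the hypothesis that actually propagates is the stage-$i$ reduction data of \Cref{def:stage-direct-package}. The paper's inductive hypothesis is likewise ``stage-$i$ reduction data on a family satisfying items (1)--(3)''. Your monotonicity remark correctly turns $\varepsilon_{i+1}<c_0$ into the proposition's hypothesis $\varepsilon_i<c_0$.

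The gap is in what you add beyond the paper, namely the claim that one $c_{\mathrm{str}}$ works at every stage.

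\emph{What resets and what does not.} The cleanup does reset the right degree and right alphabet to $O(\varepsilon_{i+1}^{-1})$ and $O(\varepsilon_{i+1}^{-6})$. It does not reset the left degree, the left alphabet, or the local description complexity. These are bounded only by powers of $M_i=\varepsilon_i^{-A}$. At a left vertex $(v,r)$, the left degree of $\mathcal I_{i+1}$ is $d\,k\,\deg_{\mathcal I_i}(v)\,m_i$ with $m_i\le M_i^{a_1}$. The local description complexity is $\mathrm{poly}(d_{V,i},m_i,s(M_i))$.

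\emph{Why the exponent compounds.} $A$ is not absolute. $M_i$ must dominate the padded admissibility circuits of $\mathcal I_i$, and their size is only known to be at most $\varepsilon_i^{-c_{\mathrm{str}}}$, so $A\ge c_{\mathrm{str}}$. The best bound these lemmas give at stage $i+1$ is therefore $\varepsilon_i^{-\Theta(A)}=\varepsilon_{i+1}^{-4\Theta(A)}$ with $A\ge c_{\mathrm{str}}$. So the required exponent grows by a constant factor per round, and the growth compounds. Your phrase ``freshly bounded in terms of $\varepsilon_i$ alone'' overlooks exactly this.

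\emph{Why no contraction is available.} A fixed $c_{\mathrm{str}}$ would need each composition-plus-cleanup round to shrink these quantities from about $\varepsilon_i^{-c_{\mathrm{str}}}$ to about $\varepsilon_i^{-c_{\mathrm{str}}/4}$. None of \Cref{lem:dinur-harsha-decoder}, \Cref{thm:composition} or \Cref{lem:red-local-description} provides such a contraction, and it appears to be essentially blocked:
\begin{itemize}
\item every composed left vertex is adjacent to all $m_i$ proof positions of each active row;
\item honest dPCP proofs of distinct satisfying assignments must differ;
\item for $i\ge 1$ the unqueried entries of the stage-$i$ matrix labels are unconstrained.
\end{itemize}
Together these force $m_i$ to be at least about $m_{i-1}$, up to logarithmic factors.

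\emph{Consequence.} The number of rounds is unbounded for small $g$; for constant $g$ it is $\Theta(\log\log\log n)$. So the loss cannot be absorbed into one constant. Your uniform $c_{\mathrm{size}}$ and $c_{\mathrm{sens}}=2\kappa+1$ inherit the same problem, because $\kappa$ is built from the same decoder and degree bounds.

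The paper's proof does not supply this uniformity either. It is the same induction, with the uniformity delegated to the clause ``after enlarging $c_{\mathrm{str}},c_{\mathrm{size}},c_{\mathrm{sens}}$ if necessary'' in \Cref{prop:stage-update}. So the obstacle you identified is real, your resolution of it does not work, and citing the proposition leaves it open.
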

\begin{proof}
    Let $c_0$ be the constant from \Cref{prop:stage-update}.

    \textbf{Base case.}
    For stage $0$, \Cref{prop:init-bootstrap} provides the initialized family $\mathcal I_0$ together with the direct initialization map $\mathrm{Inst}_0$, the recoveries $\mathrm{Rec}_{0,\Phi}$, and constants
    \[
        C_T^{(0)},\ C_R^{(0)},\ D_0 \le \varepsilon_0^{-O(1)},
    \]
        satisfying stage-$0$ reduction data.
    Its three conclusions are exactly items~(1), (2), and (3) at $i=0$.

    \textbf{Inductive step.}
    Suppose stage-$i$ reduction data have been constructed on a family $\mathcal I_i$ satisfying items~(1)--(3), and that $\varepsilon_i<c_0$.
    Then \Cref{prop:stage-update} advances them to stage-$(i+1)$ reduction data on $\mathcal I_{i+1}$, with $\mathrm{Inst}_{i+1}$, the recoveries $\mathrm{Rec}_{i+1,\Phi}$, and constants $C_T^{(i+1)},C_R^{(i+1)},D_{i+1}$.
    In particular, items~(1), (2), and (3) hold at stage $i+1$.
    Iterating this update from the initialized family $\mathcal I_0$ proves the lemma.
\end{proof}

\begin{proof}[Proof of \Cref{thm:intro-label-cover}]
    Fix a function $g : \mathbb Z_{>0} \to (1,\infty)$ with $g(t) \le (\log t)^B$, where $B>0$ is the constant from the theorem statement and is chosen so that $2B < c_{\mathrm{init}}$.
    For clarity, write $n_{\mathrm{LC}}$ for the final left-side size in this proof, and let $n_{\mathrm{LC}}$ be sufficiently large.
    From \Cref{lem:iterative-family} we use only three outputs at stage $i$: the threshold $9\sqrt{\varepsilon_i}$, the bounds on degrees, alphabet sizes, local description complexity, and vertex counts that are polynomial in $\varepsilon_i^{-1}$, and the sensitivity lower bound
    \[
        \Omega\!\left(\frac{n_0^\delta}{\prod_{t=0}^{i} \varepsilon_t^{-c_{\mathrm{sens}}}}\right).
    \]
    We therefore choose the last stage whose threshold is at most $1/g(n_{\mathrm{LC}})$ and then match the resulting size and alphabet bounds to the target parameter $n_{\mathrm{LC}}$.

    Choose a sufficiently large absolute constant $K$ and set
    \[
        n_0 := \left\lfloor \frac{n_{\mathrm{LC}}}{(\log n_{\mathrm{LC}})^K}\right\rfloor,
        \qquad
        \varepsilon_0 := (\log n_0)^{-c_{\mathrm{init}}}.
    \]
    Define
    \[
        \tau(n_{\mathrm{LC}})
        :=
        \min\!\left\{
            \frac{1}{81 g(n_{\mathrm{LC}})^2},
            \frac{c_0}{2}
        \right\}.
    \]
    Then $\log n_0 = \Theta(\log n_{\mathrm{LC}})$.
    Since $g(n_{\mathrm{LC}})\le (\log n_{\mathrm{LC}})^B$ and $2B<c_{\mathrm{init}}$, we have
    \[
        \varepsilon_0 = (\log n_0)^{-c_{\mathrm{init}}} \le \frac{1}{81 g(n_{\mathrm{LC}})^2}
    \]
    for all sufficiently large $n_{\mathrm{LC}}$.
    Also $\varepsilon_0\to 0$, so $\varepsilon_0\le c_0/2$ for all sufficiently large $n_{\mathrm{LC}}$.
    Hence
    \[
        \varepsilon_0 \le \tau(n_{\mathrm{LC}})
    \]
    once $n_{\mathrm{LC}}$ is large enough.
    Define $(\varepsilon_i)_{i\ge 0}$ as in \Cref{lem:iterative-family}.

    Choose the largest integer $i^{\star} \ge 0$ such that
    \[
        \varepsilon_{i^{\star}} \le \tau(n_{\mathrm{LC}}).
    \]
    Because $\varepsilon_0 \le \tau(n_{\mathrm{LC}})$, the defining set is non-empty.
    We therefore have both
    \[
        \varepsilon_{i^{\star}} < c_0
    \]
    and
    \[
        9\sqrt{\varepsilon_{i^{\star}}} \le 9\sqrt{\tau(n_{\mathrm{LC}})} \le \frac{1}{g(n_{\mathrm{LC}})}.
    \]
    Maximality and the recursion $\varepsilon_{i+1}=20\varepsilon_i^{1/4}$ imply
    \begin{equation}
        \label{eq:eps-window-final}
        \left(\frac{\tau(n_{\mathrm{LC}})}{20}\right)^{4} < \varepsilon_{i^{\star}} \le \tau(n_{\mathrm{LC}}).
    \end{equation}

    Apply \Cref{lem:iterative-family} with the index $i^{\star}$ to obtain the family $\mathcal I_{i^{\star}}$ with local description complexity at most $\varepsilon_{i^\star}^{-c_{\mathrm{str}}}$.
    Because $9\sqrt{\varepsilon_{i^{\star}}} \le 1/g(n_{\mathrm{LC}})$, every algorithm for $\mathsf{LabelCover}_{1/g(n_{\mathrm{LC}})}$ on $\SwapClo(\mathcal I_{i^{\star}})$ is also admissible for the threshold in item~(3) of \Cref{lem:iterative-family}.

    By item~(1) of \Cref{lem:iterative-family} and the lower side of \eqref{eq:eps-window-final},
    \[
        \varepsilon_{i^{\star}}^{-c_{\mathrm{str}}}
        <
        \left(\frac{20}{\tau(n_{\mathrm{LC}})}\right)^{4c_{\mathrm{str}}}.
    \]
    If $\tau(n_{\mathrm{LC}})=1/(81 g(n_{\mathrm{LC}})^2)$ then the right-hand side is $g(n_{\mathrm{LC}})^{O(1)}$, while if $\tau(n_{\mathrm{LC}})=c_0/2$ then it is an absolute constant.
    Hence the left and right degrees and the right alphabet are all bounded by
    \[
        \varepsilon_{i^{\star}}^{-c_{\mathrm{str}}} = g(n_{\mathrm{LC}})^{O(1)},
    \]
    while the left alphabet is bounded by
    \[
        \exp\bigl(\varepsilon_{i^{\star}}^{-c_{\mathrm{str}}}\bigr)=\exp\bigl(g(n_{\mathrm{LC}})^{O(1)}\bigr).
    \]

    Item~(2) of the lemma gives
    \[
        |U|,|V| \le n_0 \prod_{t=0}^{i^{\star}} \varepsilon_t^{-c_{\mathrm{size}}}.
    \]
    Since $\varepsilon_{t+1} = 20\varepsilon_t^{1/4} \ge \varepsilon_t^{1/4}$, we have $\varepsilon_t \ge \varepsilon_0^{1/4^t}$.
    Therefore, for every absolute constant $c>0$,
    \begin{equation}\label{eq:stage-product-polylog}
        \prod_{t=0}^{i^{\star}} \varepsilon_t^{-c}
        \le
        \prod_{t=0}^{i^{\star}} \varepsilon_0^{-c/4^t}
        =
        (1/\varepsilon_0)^{c\sum_{t=0}^{i^{\star}} 4^{-t}}
        \le
        (1/\varepsilon_0)^{4c/3}
        =
        (\log n_0)^{O(1)}.
    \end{equation}
    Hence
    \[
        |U|,|V| \le n_0(\log n_0)^{O(1)} \le n_{\mathrm{LC}}
    \]
    once $K$ is chosen large enough.

    Pad each instance in $\SwapClo(\mathcal I_{i^{\star}})$ by adding $n_{\mathrm{LC}}-|U|$ isolated left vertices whose left predicates always accept a distinguished symbol.
    This does not change satisfiability or the value of any assignment on the original edges, and it preserves swap sensitivity.
    We therefore obtain a family $\widehat{\mathcal I}_{n_{\mathrm{LC}}}$ with exact left-side size $|U|=n_{\mathrm{LC}}$.
    The right side is unchanged, so
    \[
        |V| \le n_{\mathrm{LC}}(\log n_{\mathrm{LC}})^{O(1)}.
    \]

    Finally, item~(3) of \Cref{lem:iterative-family} implies
    \[
        \SwapSens_{1/g(n_{\mathrm{LC}})}(\widehat{\mathcal I}_{n_{\mathrm{LC}}})
        \ge
        \SwapSens_{9\sqrt{\varepsilon_{i^{\star}}}}\bigl(\SwapClo(\mathcal I_{i^{\star}})\bigr)
        \ge
        \Omega\!\left(\frac{n_0^\delta}{\prod_{t=0}^{i^{\star}} \varepsilon_t^{-c_{\mathrm{sens}}}}\right).
    \]
    Using \eqref{eq:stage-product-polylog} and $n_0=n_{\mathrm{LC}}/(\log n_{\mathrm{LC}})^K$, we obtain
    \[
        \SwapSens_{1/g(n_{\mathrm{LC}})}(\widehat{\mathcal I}_{n_{\mathrm{LC}}})
        \ge
        \Omega\!\left(\frac{n_{\mathrm{LC}}^\delta}{(\log n_{\mathrm{LC}})^{O(1)}}\right).
    \]
    The right-hand side is at least $n_{\mathrm{LC}}^{\delta'}$ for some absolute constant $\delta'\in(0,\delta)$ and all sufficiently large $n_{\mathrm{LC}}$.
    Replacing $\delta$ by $\delta'$ proves the theorem.
\end{proof}

\part{From Label Cover to Covering Problems}
\section{Set Cover}\label{sec:set-cover}
This section proves the two main set-cover sensitivity lower bounds used later in the locality argument, one via the IKW route and one via the Dinur--Harsha route.
The former gives a lower bound for logarithmic-approximation algorithms, while the latter yields a polynomial sensitivity lower bound.
In both cases, the hard instances have bounded set size and frequency.
Both start from the same Feige--Lund--Yannakakis reduction from label cover to set cover, so we first isolate the common transformation and recovery statements.

In the \emph{set cover problem}, an instance is a pair $(U,\mathcal F=(S_1,\ldots,S_m))$, where $U$ is a finite ground set and $\mathcal F$ is an indexed family of subsets of $U$.
The goal is to find a subfamily $\mathcal S \subseteq [m]$ of minimum size whose union covers $U$, i.e., $\bigcup_{i \in \mathcal S} S_i = U$.
We call an instance \emph{feasible} if some subfamily covers $U$, equivalently if $\bigcup_{i=1}^m S_i = U$.

For two instances $I=(U,\mathcal F=(S_1,\ldots,S_m))$ and $\tilde I=(U,\tilde{\mathcal F}=(\tilde S_1,\ldots,\tilde S_m))$ over the same ground set and indexed family, we define the swap distance by
$\SwapDist(I,\tilde I)=\sum_{i=1}^{m} |S_i \triangle \tilde S_i|$.
That is, $\SwapDist(I,\tilde I)$ counts the number of \emph{single-element membership changes} across all sets (equivalently, the Hamming distance between the two incidence matrices).
We identify a subfamily $\mathcal S \subseteq [m]$ with its indicator vector, and define the Hamming distance by
$\Ham(\mathcal S,\tilde{\mathcal S}) := |\mathcal S \triangle \tilde{\mathcal S}|$.
For a (possibly randomized) algorithm $A$ and an instance $I$, define the swap sensitivity by
$\SwapSens(A,I) := \max_{\tilde I} \frac{\EMD(A(I),A(\tilde I))}{\SwapDist(I,\tilde I)}$,
where the maximum ranges over instances $\tilde I$ on the same ground set and index set with $\SwapDist(I,\tilde I)>0$, and the earth mover's distance uses $\Ham$ as the underlying metric on outputs.
For a family $\mathcal I$, set $\SwapSens(A,\mathcal I) := \max_{I \in \mathcal I} \SwapSens(A,I)$.
For $\alpha \ge 1$, let $\SwapSens_\alpha(\mathcal I)$ be the minimum $\SwapSens(A,\mathcal I)$ over all algorithms $A$ defined on all set cover instances and satisfying the $\alpha$-approximation guarantee on every feasible $I\in\mathcal I$.

\paragraph{Parameter guide.}
In the reduction proofs below, $n_{\mathrm{LC}}$ denotes the size parameter of the source label cover instance, $N_{\mathrm{SC}}$ the size of the set-cover universe, and $m_{\mathrm{SC}}:=|\mathcal F|$ the number of sets.
For the IKW route, $n_{\mathrm{LC}}$ refers to the total number of label-cover vertices; for the Dinur--Harsha route below, we later normalize it to the left-side size.
Bare $n$ is reserved for theorem statements.

\subsection{Reduction from Label Cover to Set Cover}

We follow the basic reduction from label cover to set cover~\cite{feige1998threshold} based on the $(m,l)$-set system~\cite{lund1994hardness,feige1998threshold}, defined below.
An \emph{$(m, l)$-set system} consists of a universe $B$ and collection of subsets $\mathcal C = \{C_1, \ldots , C_m\}$ such that, if a collection of at most $l$ sets in $\{C_1, \ldots , C_m, \bar C_1, \ldots , \bar C_m\}$ covers $B$, then the collection must contain both $C_i$ and $\bar C_i$ for some $i$.

\begin{lemma}[\cite{lund1994hardness}]\label{lem:m-l-set-system}
    An $(m, l)$-set system with a universe size $|B|= O(2^{2l}m^2)$ exists, and can be constructed in $2^{O(l)}m^{O(1)}$ time.
\end{lemma}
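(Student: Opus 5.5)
The plan is to reduce the covering property to a combinatorial ``universality'' condition on the incidence matrix, and then to instantiate that condition with an explicit almost $l$-wise independent sample space. Identify each element $b\in B$ with a vector $x^b\in\{0,1\}^m$ and set $C_i:=\{b : x^b_i=1\}$, so that $\bar C_i=\{b: x^b_i=0\}$.

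Suppose a collection of at most $l$ sets from $\{C_1,\ldots,C_m,\bar C_1,\ldots,\bar C_m\}$ contains no complementary pair. Then it has the form $\{C_{i_1}^{\sigma_1},\ldots,C_{i_j}^{\sigma_j}\}$ with distinct indices $i_1,\ldots,i_j$, $j\le l$, and signs $\sigma_t\in\{0,1\}$ (with $C^1=C$, $C^0=\bar C$). An element $b$ escapes every set in the collection exactly when $x^b_{i_t}=1-\sigma_t$ for all $t$. Hence it suffices to build vectors $\{x^b\}_{b\in B}$ with the following property: for every set $T\subseteq[m]$ of size at most $l$, the restrictions $\{x^b|_T\}$ realize all $2^{|T|}$ patterns. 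Such a collection is an $(m,l)$-universal set. It is enough to handle $|T|=l$ exactly, since if all patterns on a superset of $T$ appear then all patterns on $T$ appear; the degenerate case $l>m$ is handled by taking $B=\{0,1\}^m$.

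For the construction, I will take $\{x^b\}$ to be an $\varepsilon$-biased sample space over $\{0,1\}^m$ with $\varepsilon:=2^{-l-1}$, using the explicit Alon--Goldreich--H{\aa}stad--Peralta construction. This space has size $O\bigl((m/\varepsilon)^2\bigr)=O(2^{2l}m^2)$, and each point can be generated in time $\mathrm{poly}(m,l)$. The whole construction therefore runs in time $2^{O(l)}m^{O(1)}$, which matches both claimed bounds. The one thing to check is that an $\varepsilon$-biased space is $(m,l)$-universal. Fix $T$ with $|T|=l$ and a pattern $p\in\{0,1\}^T$. By Fourier expansion over the nonempty subsets $S\subseteq T$, together with the bias bound on each parity $\bigoplus_{i\in S}x_i$, the probability that $x|_T=p$ differs from $2^{-l}$ by at most $\varepsilon$, i.e.\ by at most $2^{-l-1}$ (the standard Vazirani XOR argument). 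This probability is therefore at least $2^{-l-1}>0$, so some element realizes the pattern $p$.

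The step I expect to need the most care is matching the exact size bound $O(2^{2l}m^2)$ rather than, say, $2^{O(l)}\log m$. The powering construction gives size $O((n/\varepsilon)^2)$ for $n$-bit outputs, which is exactly the stated bound, so I will use that version. As a fallback, the simpler randomized argument also works: pick $t=\Theta(2^l l\log m)$ uniformly random vectors and union bound over the at most $(2m)^l$ pairs $(T,p)$. This gives the size bound immediately, but making it deterministic by conditional expectations would cost $m^{l}$ time, which is why I will rely on the explicit biased space instead.
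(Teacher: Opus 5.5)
Your proof is correct. The paper gives no proof of this lemma and only cites Lund--Yannakakis; your argument (an AGHP $\varepsilon$-biased space over $\{0,1\}^m$ with $\varepsilon<2^{-l}$, which the XOR lemma shows is $(m,l)$-universal, with $C_i$ the points whose $i$-th bit is $1$) is the standard small-bias construction, and it yields exactly the stated $O(2^{2l}m^2)$ bound. The one step you left implicit is finding an irreducible polynomial of degree $k=O(l+\log m)$ to implement $\mathrm{GF}(2^k)$; brute-force enumeration does this in $2^{O(l)}m^{O(1)}$ time, so it fits the claimed running time.
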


Fix $m$ and $l$, and let $(B,\{C_1,\ldots,C_m\})$ be an $(m,l)$-set system.
We define a transformation $\mathcal T_{\mathrm{SC}}$ that maps a label cover instance $I=(U,V,E,\Sigma_U,\Sigma_V,P,F)$, where $P=\{P_u\}_{u\in U}$ and $|\Sigma_V|=m$, to a set cover instance $I' = (E\times B,\mathcal F)$.
Let
\[
    \Lambda_I := (V\times \Sigma_V)\sqcup(U\times \Sigma_U).
\]
We create the indexed family $\mathcal F=(S_\lambda)_{\lambda\in \Lambda_I}$ of subsets of $E \times B$, writing $S_{v,x}$ for $S_{(v,x)}$ and $S_{u,y}$ for $S_{(u,y)}$.
When the source instance carries left predicates, we encode them directly in the left-label sets below so that inadmissible left labels become useless for covering.
After the reduction to set cover, that distinction has been absorbed into the covering instance itself.
\begin{itemize}
    \item For every vertex $v \in V$ and label $x \in \Sigma_V$, set
    $S_{v,x} =
\bigcup_{e \ni v} \{e\} \times C_x$.
    \item For every vertex $u \in U$ and label $y \in \Sigma_U$, set
    \[
        S_{u,y} =
        \begin{cases}
            \bigcup_{e \ni u} \{e\} \times \overline{C_{f_e(y)}} & \text{if } P_u(y)=1,\\
            \emptyset & \text{if } P_u(y)=0.
        \end{cases}
    \]
\end{itemize}
Repeated subsets with different indices remain distinct members of the indexed family $\mathcal F$.
If $\Delta(I)$ denotes the maximum degree of the label cover instance, then every set in $\mathcal F$ contains at most $\Delta(I)\cdot |B|$ elements, and each element of $E \times B$ is contained in at most $|\Sigma_U|+|\Sigma_V|$ many indexed sets.

Next, we describe the recovery map $\mathcal R_{\mathrm{SC},I}$, which depends on the source instance $I$.
Given a subfamily $\mathcal S \subseteq \Lambda_I$, define label sets
$L_u = \{ y \in \Sigma_U : (u,y) \in \mathcal S \text{ and } P_u(y)=1\}$ for each $u\in U$,
and
$L_v = \{ x \in \Sigma_V : (v,x) \in \mathcal S \}$ for each $v\in V$.
Intuitively, $L_u$ (resp.\ $L_v$) records the labels that the set cover solution assigns to $u$ (resp.\ $v$).
To obtain a randomized assignment $\bm \pi$, we independently pick $\bm \pi(u)$ uniformly from $L_u$ and $\bm \pi(v)$ uniformly from $L_v$ for every $u \in U$ and $v \in V$ (if some $L_u$ or $L_v$ is empty we choose an arbitrary fixed label).
When the source instance is fixed, we may suppress the subscript $I$.

\begin{lemma}\label{lem:sc-recovery}
    Let $I=(U,V,E,\Sigma_U,\Sigma_V,P,F)$ be a label cover instance, and let $I'=\mathcal T_{\mathrm{SC}}(I)$.
    Then the following hold.
    \begin{itemize}
        \item If $I$ is satisfiable, then $I'$ has a cover of size at most $|U|+|V|$.
        \item Let $\mathcal S\subseteq \Lambda_I$ be any cover of $I'$.
        After deleting from $\mathcal S$ all selected indices whose associated sets are empty, the cover property is unchanged and the size does not increase.
        For a fixed edge $e=(u,v)$, the nonempty selected sets on the slice $\{e\}\times B \subset E \times B$ are exactly the sets
        \[
            C_x \quad\text{with }x\in L_v
            \qquad\text{and}\qquad
            \overline{C_{f_e(y)}} \quad\text{with }y\in L_u.
        \]
        \item For a fixed source instance $I$ and any two subfamilies $\mathcal S,\tilde{\mathcal S}\subseteq \Lambda_I$,
        $\EMD(\mathcal R_{\mathrm{SC},I}(\mathcal S),\mathcal R_{\mathrm{SC},I}(\tilde{\mathcal S}))
            \le \Ham(\mathcal S,\tilde{\mathcal S})$.
        \item If $I$ and $\tilde I$ differ only by one predicate swap at a single left vertex $u_0$, then for any fixed subfamily $\mathcal S$ on the common index set,
        $\EMD(\mathcal R_{\mathrm{SC},I}(\mathcal S),\mathcal R_{\mathrm{SC},\tilde I}(\mathcal S))
            \le 1$.
        If $I$ and $\tilde I$ differ only by a projection swap, then $\mathcal R_{\mathrm{SC},I}=\mathcal R_{\mathrm{SC},\tilde I}$ on the common index set.
    \end{itemize}
\end{lemma}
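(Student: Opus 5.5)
I will verify the four bullets of the lemma in order, each directly from the definitions of $\mathcal T_{\mathrm{SC}}$ and $\mathcal R_{\mathrm{SC},I}$. The only external ingredient is the $(m,l)$-set-system property of \Cref{lem:m-l-set-system}, and in fact the bullets as stated do not even need it: they concern completeness, the slice structure, and stability of the recovery map. The soundness use of the set system comes later.

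\textbf{Completeness.} Fix a satisfying assignment $\pi$ of $I$ and select $\mathcal S:=\{(u,\pi(u)):u\in U\}\cup\{(v,\pi(v)):v\in V\}$, which has $|U|+|V|$ indices. Take any element $(e,b)$ with $e=(u,v)$. Since $\pi$ satisfies $e$, we have $P_u(\pi(u))=1$ and $f_e(\pi(u))=\pi(v)=:x$. Hence $S_{u,\pi(u)}\supseteq\{e\}\times\overline{C_x}$ and $S_{v,x}\supseteq\{e\}\times C_x$, and these two sets together cover $\{e\}\times B$.

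\textbf{Slice structure.} Deleting indices whose sets are empty removes no covered element, so the cover property is kept and the size can only shrink. The empty left sets are exactly the $(u,y)$ with $P_u(y)=0$, so after deletion the selected left labels at $u$ are precisely $L_u$. Right sets are never empty when $\deg(v)\ge1$; if $v$ is isolated the set does not meet any slice. Now fix $e=(u,v)$. By construction, $S_{v',x}$ meets $\{e\}\times B$ only if $v'=v$, in which case the intersection is $\{e\}\times C_x$. Likewise $S_{u',y}$ meets the slice only if $u'=u$ and $P_u(y)=1$, in which case the intersection is $\{e\}\times\overline{C_{f_e(y)}}$. This gives exactly the claimed list.

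\textbf{Fixed-source Lipschitz bound.} By the triangle inequality for EMD it suffices to treat $\Ham(\mathcal S,\tilde{\mathcal S})=1$. Suppose the differing index lies at vertex $w$. Then $L_{w'}$ is unchanged for every $w'\neq w$. Couple the two runs by using identical independent draws at all vertices other than $w$. The recovered assignments then differ only at $w$, so their Hamming distance is at most $1$.

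\textbf{Source drift.} For a predicate swap at $u_0$, the lists $L_v$ and $L_u$ for $u\ne u_0$ do not involve $P_{u_0}$, so they are identical. Only $L_{u_0}$ can change, and the same shared-draw coupling gives EMD at most $1$. For a projection swap, the definitions of $L_u$ and $L_v$ depend only on $\mathcal S$ and the predicates, not on $F$. Hence the two recovery maps coincide on the common index set.

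The only care point is bookkeeping in the slice bullet: it must be checked that the common index set $\Lambda_I$ is unchanged under swaps, which holds because $\Lambda_I$ depends only on $U,V,\Sigma_U,\Sigma_V$.
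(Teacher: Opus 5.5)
Your proposal is correct and follows essentially the same route as the paper. Completeness holds because the satisfying labels give $C_x\cup\overline{C_x}=B$ on each slice, and the slice description is read directly off the definitions of $S_{v,x}$ and $S_{u,y}$. The Lipschitz bound comes from the shared-draw coupling that changes only one label set. For the drift bullet, a predicate swap affects only $L_{u_0}$, and a projection swap leaves the recovery unchanged. One small imprecision: $S_{u,y}$ is also empty when $u$ is isolated, not only when $P_u(y)=0$, but such $u$ meets no slice, so this is harmless.
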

\begin{proof}
    If $I$ is satisfiable, choose a satisfying label for each $u\in U$ and $v\in V$.
    The corresponding family of $|U|+|V|$ sets covers every slice $\{e\}\times B$ because for each edge $e=(u,v)$ the chosen labels satisfy $f_e(y)=x$, so $C_x\cup \overline{C_{f_e(y)}}=B$.

    Let $\mathcal S\subseteq \Lambda_I$ be any cover of $I'$, and delete from it all selected indices whose associated sets are empty to obtain $\mathcal S'$.
    The set covered by $\mathcal S'$ is unchanged and $|\mathcal S'|\le |\mathcal S|$.
    Fix an edge $e=(u,v)$.
    On the slice $\{e\}\times B$, a selected right-label set contributes exactly $C_x$ for some $x\in L_v$, while a selected left-label set contributes either $\overline{C_{f_e(y)}}$ for some admissible $y\in L_u$ or nothing at all.
    After the empty indices are removed, the nonempty selected sets on $\{e\}\times B$ are therefore exactly the sets
    \[
        C_x \quad\text{with }x\in L_v
        \qquad\text{and}\qquad
        \overline{C_{f_e(y)}} \quad\text{with }y\in L_u.
    \]

    For fixed $I$, toggling one selected index changes only one label set $L_u$ or $L_v$.
    Hence it affects at most one output coordinate.
    Coupling the uniform choices identically on all unchanged coordinates gives
    \[
        \EMD(\mathcal R_{\mathrm{SC},I}(\mathcal S),\mathcal R_{\mathrm{SC},I}(\tilde{\mathcal S}))
        \le \Ham(\mathcal S,\tilde{\mathcal S}).
    \]

    If $I$ and $\tilde I$ differ only by a predicate swap at $u_0$, then the admissibility filter changes only the label set at $u_0$: all $L_v$ are identical and all $L_u$ with $u\neq u_0$ are identical.
    Hence the recovered assignments can differ only at coordinate $u_0$, which gives
    \[
        \EMD(\mathcal R_{\mathrm{SC},I}(\mathcal S),\mathcal R_{\mathrm{SC},\tilde I}(\mathcal S))
        \le 1.
    \]
	    For a projection swap the predicates $P_u$ do not change, so the recovery map itself is unchanged.
	\end{proof}

\begin{lemma}[Slice soundness for $\mathcal T_{\mathrm{SC}}$]\label{lem:sc-slice-soundness}
    Let $I=(U,V,E,\Sigma_U,\Sigma_V,P,F)$ be a label cover instance, let $I'=\mathcal T_{\mathrm{SC}}(I)$, and let $\mathcal S\subseteq \Lambda_I$ be a cover of $I'$.
    Fix an edge $e=(u,v)\in E$, and write
    \[
        t_e:=|L_u|+|L_v|
    \]
    for the recovered label-set sizes associated with $\mathcal S$.
    If $t_e\le l$, then for $\bm \pi\sim \mathcal R_{\mathrm{SC},I}(\mathcal S)$,
    \[
        \Pr[\bm \pi \text{ satisfies } e]\ge \frac{4}{l^2}.
    \]
\end{lemma}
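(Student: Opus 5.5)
The plan is to use the defining property of the $(m,l)$-set system on the single slice $\{e\}\times B$, and then count the satisfying label pairs.

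First, since $\mathcal S$ is a cover of $I'$, it covers the slice $\{e\}\times B$. By \Cref{lem:sc-recovery}, after discarding empty indices, the nonempty selected sets restricted to this slice are exactly the sets $C_x$ for $x\in L_v$ and the sets $\overline{C_{f_e(y)}}$ for $y\in L_u$. So these sets cover $B$. Their number is at most $|L_v|+|\{f_e(y):y\in L_u\}|\le t_e\le l$. The defining property of the $(m,l)$-set system then gives some index $x$ such that both $C_x$ and $\overline{C_x}$ appear. Concretely, there exist $x^\ast\in L_v$ and $y^\ast\in L_u$ with $f_e(y^\ast)=x^\ast$. Every $y\in L_u$ satisfies $P_u(y)=1$ by definition of $L_u$, so $y^\ast$ is admissible.

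Second, the recovery chooses $\bm\pi(u)$ uniformly from $L_u$ and $\bm\pi(v)$ uniformly from $L_v$, independently. Both sets are nonempty, since they contain $y^\ast$ and $x^\ast$ respectively. The event $\bm\pi(u)=y^\ast$ and $\bm\pi(v)=x^\ast$ therefore has probability $1/(|L_u||L_v|)$, and on this event the edge $e$ is satisfied: the predicate holds and the projection matches. With $a=|L_u|$, $b=|L_v|$ and $a+b=t_e\le l$, AM--GM gives $ab\le (a+b)^2/4\le l^2/4$. Hence
\[
\Pr[\bm\pi\text{ satisfies }e]\ge \frac{1}{|L_u||L_v|}\ge \frac{4}{l^2}.
\]

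The only step that needs care is the first one. We must check that the collection fed to the set-system property really consists of at most $l$ sets drawn from $\{C_i,\overline{C_i}\}$. Repeated sets only reduce the count, because several $y$ can share the same $f_e(y)$. Empty sets from inadmissible labels are already excluded, since they are never placed in $L_u$ and they contribute nothing to the cover anyway. We also need the covering condition to hold on the restricted slice, and it does, because $\mathcal S$ covers all of $E\times B$.
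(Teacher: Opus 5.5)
Your proof is correct and follows the paper's argument essentially step for step. You restrict the cover to the slice $\{e\}\times B$, apply the $(m,l)$-set-system property to obtain $x\in L_v$ and $y\in L_u$ with $f_e(y)=x$ and $P_u(y)=1$, and then bound the satisfaction probability by $1/(|L_u||L_v|)\ge 4/(|L_u|+|L_v|)^2\ge 4/l^2$ using AM--GM.
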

\begin{proof}
    By the second bullet of \Cref{lem:sc-recovery}, after deleting from $\mathcal S$ all selected indices whose associated sets are empty, the nonempty selected sets on the slice $\{e\}\times B$ are exactly
    \[
        \{C_x:x\in L_v\}\cup \{\overline{C_{f_e(y)}}:y\in L_u\}.
    \]
    These sets still cover $B$, and there are at most $t_e\le l$ of them.
    Hence the defining property of the $(m,l)$-set system implies that some index $i\in[m]$ appears on both sides: both $C_i$ and $\overline{C_i}$ occur among the selected sets on this slice.
    Consequently there exist $x\in L_v$ and $y\in L_u$ with
    \[
        x=i=f_e(y).
    \]
    Since $y\in L_u$ also implies $P_u(y)=1$, the edge $e$ is satisfied whenever $\bm \pi(v)=x$ and $\bm \pi(u)=y$.
    Under the independent uniform choices in $\mathcal R_{\mathrm{SC},I}$, this happens with probability at least
    \[
        \frac{1}{|L_u||L_v|}
        \ge
        \frac{4}{(|L_u|+|L_v|)^2}
        \ge
        \frac{4}{l^2},
    \]
    where the middle inequality is the arithmetic--geometric mean bound.
\end{proof}

\subsection{IKW-Based Bound}

\begin{theorem}\label{thm:app-set-cover-IKW}
    There exist absolute constants $C_{\mathrm{SC}}>0$, $c_0>0$, and an integer $\Gamma_0\ge 2$ such that the following holds.
    For all sufficiently large target ground-set sizes $n$ and every integer $\Gamma\ge \Gamma_0$ satisfying
    \[
        \log \Gamma\cdot (\log\log \Gamma)^3 \le c_0 \log n,
    \]
    any algorithm $A_{\mathrm{SC}}$ for set cover on ground-set size $n$ that returns a $C_{\mathrm{SC}}\log \Gamma$-approximate cover on every feasible instance must have swap sensitivity at least
    \[
        2^{\Omega\!\left(\frac{\log n}{(\log\log \Gamma)^3}\right)}.
    \]
    Moreover, the lower bound is already witnessed by neighboring instances $J_0,J_1$ on the same ground set and the same indexed family such that
    \[
        |\mathcal F(J_0)|=|\mathcal F(J_1)|=O(n),
    \]
    every set in $J_0$ and $J_1$ has size at most $\Gamma$, every element has frequency at most $\Gamma$, and
    \[
        \EMD\!\left(A_{\mathrm{SC}}(J_0),A_{\mathrm{SC}}(J_1)\right)
        \ge
        2^{\Omega\!\left(\frac{\log n}{(\log\log \Gamma)^3}\right)}.
    \]
\end{theorem}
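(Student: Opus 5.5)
The plan is to feed the IKW label cover family of \Cref{cor:IKW-label-cover} into $\mathcal T_{\mathrm{SC}}$ and pull the sensitivity lower bound back along $\mathcal R_{\mathrm{SC},I}$. The one genuinely new ingredient is converting a $C_{\mathrm{SC}}\log\Gamma$-approximate cover into a label cover assignment of value $\exp(-C\sqrt k)$.

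\textbf{Parameters.} Choose $k$ so that the IKW alphabets and degrees, which are $2^{O(k)}$, multiplied by $|B|$, are at most $\Gamma$. Also require the soundness $\exp(-C\sqrt k)$ to be at most $4/l^2$ scaled appropriately, with $l=\Theta(C_{\mathrm{SC}}\log\Gamma)$. This requires $l^2 \lesssim e^{C\sqrt k}$, i.e.\ $\sqrt k \gtrsim \log\log\Gamma$, while $2^{O(k)}\le\Gamma$ requires $k\lesssim\log\Gamma$. I would take $k=\Theta((\log\log\Gamma)^2)$.

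The $(m,l)$-set system of \Cref{lem:m-l-set-system} has $m=|\Sigma_V|=2^{O(\sqrt k)}$ and $|B|=2^{O(l)}m^2$. With $l=\Theta(\log\Gamma)$ we get $|B|=\Gamma^{O(C_{\mathrm{SC}})}$. Choosing $C_{\mathrm{SC}}$ small then keeps set sizes $\Delta\cdot|B|$ and frequencies $|\Sigma_U|+|\Sigma_V|$ below $\Gamma$, after adjusting constants and using $\Gamma\ge\Gamma_0$. The ground set size is $n=|E|\cdot|B|=N_{\mathrm{IKW}}^{\Theta(k)}\cdot\Gamma^{O(1)}$. The hypothesis $\log\Gamma(\log\log\Gamma)^3\le c_0\log n$ ensures the $\Gamma^{O(1)}$ factor is absorbed and that $k=o(\log N_{\mathrm{IKW}})$. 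The sensitivity $n_{\mathrm{LC}}^{\Omega(1/k)}$ then becomes $2^{\Omega(\log n/k)}$. The stated exponent $(\log\log\Gamma)^3$ leaves slack, which I will spend on the losses below.

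\textbf{Soundness.} Suppose $A_{\mathrm{SC}}$ outputs a cover $\mathcal S$ with $\E|\mathcal S|\le C_{\mathrm{SC}}\log\Gamma\cdot(|U|+|V|)$. The label cover instance is biregular, so the edge-weighted average of $t_e=|L_u|+|L_v|$ is within an $O(1)$ factor of the degree-normalized label counts. Precisely, $\E_e[|L_u|]=|\mathcal S\cap(U\times\Sigma_U)|/|U|$, and similarly on the right.

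The obstacle is that $|U|$ and $|V|$ differ, so the cost must be normalized per side. I would attach weights so that both sides are equally important, or simply note that only $\E_e t_e$ matters. Replicating the right-vertex sets by the ratio $|U|/|V|$ is not allowed, but after degree reduction $|V'|=|E_{\mathrm{IKW}}|$, which is comparable to $|U|$ up to $2^{O(k)}$. I would argue that $\E_e[t_e]\le 2^{O(k)}C_{\mathrm{SC}}\log\Gamma$ using biregularity, though I expect the bound to come out as $O(C_{\mathrm{SC}}\log\Gamma)$ when degrees are balanced. This balancing is the step I expect to be the main obstacle.

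By Markov's inequality, half the edges have $t_e\le l:=2\E t_e$. \Cref{lem:sc-slice-soundness} then gives value at least $\tfrac12\cdot 4/l^2\ge\exp(-C\sqrt k)$. This holds in expectation over the randomness of $\mathcal S$, with another Markov step and a Jensen step on $1/l^2$.

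\textbf{Sensitivity transfer.} One label cover swap, whether a predicate swap at $u$ or a projection swap at $e$, changes $S_{u,y}$ for $y\in\Sigma_U$ on at most $\deg(u)\cdot|B|$ elements each. This gives set cover swap distance at most $2^{O(k)}|B|\le\Gamma^{O(1)}$. The recovery map is $1$-Lipschitz by \Cref{lem:sc-recovery}, with drift at most $1$. By the argument of \Cref{lem:sensitivity-pullback} and \Cref{cor:sensitivity-pullback-lb}, sensitivity $s$ for set cover yields label cover sensitivity at most $\Gamma^{O(1)}s+1$. Hence $s\ge 2^{\Omega(\log n/k)}/\Gamma^{O(1)}=2^{\Omega(\log n/(\log\log\Gamma)^3)}$ under the hypothesis.

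The neighboring witnesses $J_0,J_1$ come from \Cref{lem:neighboring-witness} applied along the swap path. The bound $|\mathcal F|=O(n)$ follows because $|\Lambda_I|\le(|U|+|V|)\Gamma\le n$.
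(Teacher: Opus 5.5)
\textbf{Main gap: the label cover instance is unbalanced.} This is the step you flag as the main obstacle and then leave open, and without it your soundness argument fails. After right-degree reduction $|V'|=|E_{\mathrm{IKW}}|$, so $|V'|/|U|$ equals the IKW left degree, which grows with $k$. For a cover with $|\mathcal S|\le\alpha(|U|+|V'|)$, your identity only gives $\E_e t_e=A/|U|+B/|V'|\le\alpha(1+|V'|/|U|)$. Nothing prevents the cover from spending its whole budget on labels of the smaller left side.

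Your parameter $l$ cannot absorb this $k$-dependent factor, because it is constrained from both sides. On one side, the recovered value $4/l^2$ must beat $\exp(-C\sqrt k)$, where $C$ is fixed by \Cref{cor:IKW-label-cover}. On the other side, $2^{O(l)}m^{O(1)}$ times the degree must stay below $\Gamma$. So your fallback $\E_e t_e\le 2^{O(k)}C_{\mathrm{SC}}\log\Gamma$ does not yield an $\exp(-C\sqrt k)$-satisfying label cover algorithm.

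The missing idea is to balance the label cover instance \emph{before} applying $\mathcal T_{\mathrm{SC}}$. Duplicate every vertex of the smaller side $r=\lceil |V'|/|U|\rceil$ times, with the same predicate and incident projections; this duplicates label-cover vertices, not sets. This construction has the properties you need:
\begin{itemize}
\item The instance stays biregular, and the two sides are within a factor $2$.
\item Recovering by picking a uniformly random copy of each duplicated vertex preserves expected value and is $1$-Lipschitz.
\item A source swap becomes at most $r\le 2^{O(k)}$ swaps. This loss is absorbed into $2^{\Omega(\log n/k)}$, since $k^2=o(\log n)$ in the stated regime.
\end{itemize}
Only then does \Cref{lem:sc-recovery-ikw-soundness} give $\E_e t_e\le 3l/8$ and recovered value at least $1/l^2\ge\exp(-C\sqrt k)$.

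\textbf{Exact ground-set size.} The theorem needs hard instances of ground-set size exactly $n$ for every large $n$. Your ground set has size $|E|\cdot|B|=N_{\mathrm{IKW}}^{\Theta(k)}\Gamma^{O(1)}$, which takes only special values, and you have no control over how close it is to $n$. Padding with fresh elements and dedicated sets alone fails when the core is much smaller than $n$: the padding then dominates $\opt$, and the approximation guarantee says nothing about the core. The paper instead takes $\rho=\lfloor n/N_{\mathrm{core}}\rfloor$ disjoint copies of the core, plus fewer than $N_{\mathrm{core}}$ padding elements costing at most $|U|+|V|$ extra sets. It recovers from a uniformly random copy, whose Lipschitz constant $1/\rho$ exactly cancels the $\rho$-fold growth of the swap distance.

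\textbf{Bound on $|\mathcal F|$.} Your justification of $|\mathcal F|=O(n)$ is wrong. The inequality $(|U|+|V|)\Gamma\le|E||B|$ contradicts your own constraint $\max\{d_U,d_V\}\,|B|\le\Gamma$: together they give $|U|+|V|\le\min\{|U|,|V|\}$. The correct count is $|U||\Sigma_U|+|V||\Sigma_V|\le(|U|+|V|)2^{O(k)}$. This is $O(|E||B|)$ once $|B|$ is inflated to $\Gamma^{1-o(1)}$ by taking disjoint copies of the set system, as the paper does.

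Your choice $k=\Theta((\log\log\Gamma)^2)$ is fine, and slightly better than the paper's $(\log\log\Gamma)^3$.
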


\begin{lemma}[IKW-friendly soundness for $\mathcal T_{\mathrm{SC}}$]\label{lem:sc-recovery-ikw-soundness}
    Let $I=(U,V,E,\Sigma_U,\Sigma_V,\mathcal P,F)$ be a left-predicate bi-regular label cover instance satisfying
    \[
        \max\{|U|,|V|\}\le 2\min\{|U|,|V|\},
    \]
    and let
    \[
        I'=(E\times B,\mathcal F):=\mathcal T_{\mathrm{SC}}(I)
    \]
    be built from an $(m,l)$-set system with $m=|\Sigma_V|$.
    If $S$ is a cover of $I'$ with
    \[
        |S|\le \frac{l}{8}(|U|+|V|),
    \]
    then for $\bm\pi\sim \mathcal R_{\mathrm{SC},I}(S)$,
    \[
        \E[\val_I(\bm\pi)]\ge \frac{2}{l^2}.
    \]
\end{lemma}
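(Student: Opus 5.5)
The plan is a standard averaging argument built on \Cref{lem:sc-slice-soundness}. Given the cover $S$, I first delete all selected indices whose sets are empty; by \Cref{lem:sc-recovery} this keeps $S$ a cover, does not increase its size, and leaves the recovered label sets $L_u,L_v$ (and hence $\mathcal R_{\mathrm{SC},I}(S)$) unchanged, since those sets already ignore inadmissible left labels. Then $\sum_{u\in U}|L_u|+\sum_{v\in V}|L_v|\le |S|\le \frac{l}{8}(|U|+|V|)$.

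Call an edge $e=(u,v)$ \emph{good} if $t_e=|L_u|+|L_v|\le l$. By \Cref{lem:sc-slice-soundness}, every good edge is satisfied with probability at least $4/l^2$. It therefore suffices to show that at least half of the edges are good, because then $\E[\val_I(\bm\pi)]\ge \frac12\cdot\frac{4}{l^2}=\frac{2}{l^2}$.

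To count bad edges I use biregularity. Let $d_U,d_V$ be the left and right degrees, so $|E|=|U|d_U=|V|d_V$. Pick a uniformly random edge; its left endpoint is uniform on $U$ and its right endpoint is uniform on $V$. Hence
\[
\E_e[t_e]=\E_{u}|L_u|+\E_v|L_v|=\frac{\sum_u|L_u|}{|U|}+\frac{\sum_v|L_v|}{|V|}\le \frac{|S|}{\min\{|U|,|V|\}}.
\]
By the balance hypothesis $|U|+|V|\le 3\min\{|U|,|V|\}$, so $\E_e[t_e]\le \frac{3l}{8}$. Markov's inequality then gives $\Pr_e[t_e>l]\le \frac38<\frac12$. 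So more than half of the edges are good, which proves the lemma.

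The only delicate point is this averaging step. Biregularity is what converts the global budget on $|S|$ into a bound on the average $t_e$ over edges, and the balance condition is what keeps the constant below $1/2$. With the constant $1/8$ the slack is comfortable ($3/8<1/2$), so I expect no real obstacle. I also need to check that deleting empty-set indices does not change the recovery distribution, which holds because $L_u$ already filters to $P_u(y)=1$.
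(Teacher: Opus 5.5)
Your proposal is correct and follows essentially the same route as the paper's proof. Both use biregularity and the balance hypothesis to bound $\E_e[t_e]\le 3l/8$, apply Markov to obtain a $5/8$ fraction of edges with $t_e\le l$, and then average \Cref{lem:sc-slice-soundness} over those edges; the paper keeps the $5/8$ to get $5/(2l^2)\ge 2/l^2$, while you round down to one half.
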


\begin{proof}
    Delete from $S$ all selected indices whose associated sets are empty.
    By \Cref{lem:sc-recovery}, the cover property is unchanged and the size does not increase, so we may assume every selected index corresponds to a nonempty set.
    For each $u\in U$ and $v\in V$, let $L_u\subseteq \Sigma_U$ and $L_v\subseteq \Sigma_V$ be the label sets recovered from $S$.
    If we write
    \[
        A:=\sum_{u\in U}|L_u|,
        \qquad
        B:=\sum_{v\in V}|L_v|,
    \]
    then $A+B=|S|$, and hence
    \[
        A+B
        \le
        \frac{l}{8}(|U|+|V|)
    \]
    For a uniformly random edge $e=(u,v)\in E$, bi-regularity implies that $u$ is uniform in $U$ and $v$ is uniform in $V$.
    Writing
    \[
        t_e:=|L_u|+|L_v|,
    \]
    and setting
    \[
        m:=\min\{|U|,|V|\},
    \]
    we obtain
    \[
        \E[t_e]
        =
        \frac{A}{|U|}
        +
        \frac{B}{|V|}
        \le
        \frac{A+B}{m}
        \le
        \frac{l}{8}\cdot \frac{|U|+|V|}{m}
        \le
        \frac{3l}{8}.
    \]
    Hence Markov's inequality gives
    \[
        \Pr[t_e\le l]\ge \frac{5}{8}.
    \]
    For every edge $e=(u,v)$ with $t_e\le l$, \Cref{lem:sc-slice-soundness} gives
    \[
        \Pr_{\bm \pi\sim \mathcal R_{\mathrm{SC},I}(S)}[\bm \pi \text{ satisfies } e]
        \ge
        \frac{4}{l^2}.
    \]
    Averaging over the good $5/8$ fraction of edges gives
    \[
        \E[\val_I(\bm\pi)]
        \ge
        \frac{5}{8}\cdot \frac{4}{l^2}
        =
        \frac{5}{2l^2}
        \ge
        \frac{2}{l^2}. \qedhere
    \]
\end{proof}

\begin{proof}[Proof of \Cref{thm:app-set-cover-IKW}]
    Let $C_{\mathrm{IKW}}>0$ be the soundness constant from \Cref{cor:IKW-label-cover}.
    Fix a sufficiently large target ground-set size $n$ and an integer $\Gamma$ satisfying
    \[
        \log \Gamma\cdot (\log\log \Gamma)^3 \le c_0 \log n.
    \]
    Set
    \[
        N_{\mathrm{SC}}:=n,
        \qquad
        \alpha_\Gamma:=C_{\mathrm{SC}}\log \Gamma.
    \]

    We choose absolute constants $C_k$ sufficiently large and $c_l>0$ sufficiently small, and then shrink $C_{\mathrm{SC}},c_0>0$ and enlarge $\Gamma_0$ finitely many times so that every displayed estimate below holds for all $\Gamma\ge \Gamma_0$.

    \paragraph{Step 1: Choose $k$ and $l$ from $\Gamma$.}
    Define
    \[
        k:=\left\lceil C_k(\log\log \Gamma)^3\right\rceil,
        \qquad
        l:=\left\lceil c_l \log \Gamma\right\rceil.
    \]
    For sufficiently large $\Gamma$, we have $c_l\log \Gamma\ge 1$ and hence
    \[
        l\le 2c_l\log \Gamma.
    \]
    Therefore
    \[
        2\log l \le 2\log(2c_l\log \Gamma)=O(\log\log \Gamma).
    \]
    On the other hand,
    \[
        \sqrt{k}\ge \sqrt{C_k}\,(\log\log \Gamma)^{3/2}.
    \]
    Since $(\log\log \Gamma)^{3/2}$ dominates $\log\log \Gamma$, choosing $C_k$ sufficiently large ensures that for all sufficiently large $\Gamma$,
    \[
        C_{\mathrm{IKW}}\sqrt{k}\ge 2\log l.
    \]
    Equivalently,
    \[
        \exp(-C_{\mathrm{IKW}}\sqrt{k}) \le \frac{1}{l^2}.
    \]

    \paragraph{Step 2: Choose the IKW source size.}
    By \Cref{cor:IKW-label-cover}, there exist absolute constants $a,b,\eta>0$ such that for all sufficiently large $N_{\mathrm{IKW}}$ and all $k\le o(\log N_{\mathrm{IKW}})$ there is a hard family of satisfiable left-predicate biregular label-cover instances with
    \[
        n_{\mathrm{LC}}\in [N_{\mathrm{IKW}}^{ak},N_{\mathrm{IKW}}^{bk}],
    \]
    sensitivity lower bound at least $n_{\mathrm{LC}}^{\eta/k}$, soundness threshold $\exp(-C_{\mathrm{IKW}}\sqrt{k})$, and alphabet and degree parameters at most $2^{O(k)}$.

    Set
    \[
        N_{\mathrm{IKW}}
        :=
        \left\lfloor
            \left(\frac{N_{\mathrm{SC}}}{100\Gamma}\right)^{1/(bk)}
        \right\rfloor.
    \]
    The regime hypothesis implies
    \[
        \frac{\log \Gamma}{\log N_{\mathrm{SC}}}
        \le
        \frac{c_0}{(\log\log \Gamma)^3}
        =
        o(1),
    \]
    so after shrinking $c_0$ and enlarging $\Gamma_0$ if needed we have
    \[
        \log\!\left(\frac{N_{\mathrm{SC}}}{100\Gamma}\right)\ge \frac12 \log N_{\mathrm{SC}}.
    \]
    For all sufficiently large $N_{\mathrm{SC}}$, the quantity inside the floor is at least $2$, and hence
    \[
        N_{\mathrm{IKW}}
        \ge
        \frac12\left(\frac{N_{\mathrm{SC}}}{100\Gamma}\right)^{1/(bk)}.
    \]
    Therefore
    \[
        \log N_{\mathrm{IKW}}
        \ge
        \frac{1}{bk}\log\!\left(\frac{N_{\mathrm{SC}}}{100\Gamma}\right)-\log 2
        =
        \Omega\!\left(\frac{\log N_{\mathrm{SC}}}{k}\right).
    \]
    Moreover,
    \[
        \frac{\log N_{\mathrm{SC}}}{k^2}
        =
        \Omega\!\left(\frac{\log N_{\mathrm{SC}}}{(\log\log \Gamma)^6}\right)
        =
        \Omega\!\left(\frac{\log \Gamma}{(\log\log \Gamma)^3}\right)\to\infty,
    \]
    where the second bound again uses the regime assumption. Combining the last two displays gives
    \[
        \frac{k}{\log N_{\mathrm{IKW}}}\to 0.
    \]
    Hence $k\le o(\log N_{\mathrm{IKW}})$ for all sufficiently large $N_{\mathrm{SC}}$, so \Cref{cor:IKW-label-cover} applies.

    The resulting hard family satisfies
    \[
        n_{\mathrm{LC}}\le N_{\mathrm{IKW}}^{bk}\le \frac{N_{\mathrm{SC}}}{100\Gamma}.
    \]
    For the lower bound, the estimate on $N_{\mathrm{IKW}}$ gives
    \[
        \log n_{\mathrm{LC}}
        \ge
        ak\log N_{\mathrm{IKW}}
        \ge
        \frac{a}{b}\log\!\left(\frac{N_{\mathrm{SC}}}{100\Gamma}\right)-ak\log 2.
    \]
    Since
    $
        k=O((\log\log \Gamma)^3)=o(\log N_{\mathrm{SC}})
    $,
    after shrinking $c_0$ once more we obtain
    \[
        \log n_{\mathrm{LC}}=\Omega(\log N_{\mathrm{SC}}).
    \]

    \paragraph{Step 3: Balance the two sides.}
    Let
    \[
        I=(U,V,E,\Sigma_U,\Sigma_V,\mathcal P,F)
    \]
    be an instance from this hard family.
    Since $I$ is biregular, if $d_U$ and $d_V$ denote its common left and right degrees, then
    \[
        |U|d_U=|V|d_V.
    \]
    Every vertex has positive degree, and both $d_U$ and $d_V$ are at most $2^{O(k)}$, so
    \[
        \frac{|V|}{|U|}=\frac{d_U}{d_V}\le 2^{O(k)}
        \qquad\text{and}\qquad
        \frac{|U|}{|V|}=\frac{d_V}{d_U}\le 2^{O(k)}.
    \]

    If $|U|\le |V|$, let
    \[
        r:=\left\lceil \frac{|V|}{|U|}\right\rceil
    \]
    and duplicate each left vertex $u\in U$ into $r$ copies carrying the same predicate and the same incident projections.
    If $|V|<|U|$, do the symmetric operation on the right side.
    Let $I^{\mathrm{bal}}$ denote the resulting instance.
    Then $I^{\mathrm{bal}}$ remains left-predicate and biregular, its two sides differ by at most a factor $2$, its alphabets are unchanged, and its maximum degree grows by at most a factor $r\le 2^{O(k)}$.
    Hence there is an absolute quantity
    \[
        D_\Gamma:=2^{O(k)}
    \]
    that simultaneously bounds the maximum degree of $I^{\mathrm{bal}}$ and the quantity $|\Sigma_U|+|\Sigma_V|$.
    Also,
    \[
        |U(I^{\mathrm{bal}})|+|V(I^{\mathrm{bal}})|=\Theta(n_{\mathrm{LC}}).
    \]

    Given an algorithm on the balanced family, recover an assignment for the original family by choosing one copy uniformly and independently for each duplicated original vertex on the copied side and leaving the untouched side unchanged.
    This recovery preserves expected value, because each original edge is represented by the same number of copied edges.
    It is $1$-Lipschitz in Hamming distance.
    A single source swap in the original family becomes at most $r\le 2^{O(k)}$ swaps after balancing.
    Therefore \Cref{cor:sensitivity-pullback-lb} yields a balanced hard family with sensitivity lower bound
    \[
        L_{\mathrm{LC}}
        \ge
        \frac{n_{\mathrm{LC}}^{\eta/k}-1}{2^{O(k)}}
        =
        2^{\Omega(\log n_{\mathrm{LC}}/k)}
        =
        2^{\Omega(\log N_{\mathrm{SC}}/k)}.
    \]
    The last equality uses $\log n_{\mathrm{LC}}=\Omega(\log N_{\mathrm{SC}})$ and $k=o(\log n_{\mathrm{LC}})$.

    \paragraph{Step 4: Build the set-cover core.}
    We now choose the set-system parameters so that the resulting core instance has maximum set size and maximum element frequency at most $\Gamma$.
    Since
    \[
        \log D_\Gamma = O(k)=O((\log\log \Gamma)^3)=o(\log \Gamma),
    \]
    we have
    $D_\Gamma=\Gamma^{o(1)}$.
    Let
    \[
        m:=|\Sigma_V|.
    \]
    By \Cref{lem:m-l-set-system}, there is an $(m,l)$-set system
    $(B_0,\{C_1^{(0)},\ldots,C_m^{(0)}\})$
    with
    $|B_0| \le 2^{O(l)}m^{O(1)}=\Gamma^{O(c_l)+o(1)}$.
    Choosing $c_l$ sufficiently small and then enlarging $\Gamma_0$ if needed, we may assume
    \[
        D_\Gamma |B_0| \le \frac{\Gamma}{32}.
    \]

    Set
    \[
        q:=\left\lceil \frac{\Gamma}{32D_\Gamma |B_0|}\right\rceil,
        \qquad
        B:=[q]\times B_0.
    \]
    For each $i\in [m]$, define
    \[
        C_i:=[q]\times C_i^{(0)}\subseteq B.
    \]
    This is still an $(m,l)$-set system: any collection of at most $l$ sets from $\{C_1,\ldots,C_m,\bar C_1,\ldots,\bar C_m\}$ that covers $B$ also covers each copy $\{j\}\times B_0$, so already on one copy it must contain both $C_i^{(0)}$ and $\overline{C_i^{(0)}}$ for some $i$.
    Since
    \[
        \frac{\Gamma}{32D_\Gamma |B_0|}\ge 1,
    \]
    the choice of $q$ gives
    \[
        \frac{\Gamma}{32D_\Gamma}
        \le
        |B|
        \le
        \left(
            \frac{\Gamma}{32D_\Gamma |B_0|}+1
        \right)|B_0|
        \le
        \frac{\Gamma}{16D_\Gamma}.
    \]

    Let
    \[
        I_{\mathrm{core}}=(E\times B,\mathcal F_{\mathrm{core}}):=\mathcal T_{\mathrm{SC}}(I^{\mathrm{bal}}).
    \]
    By the reduction facts before \Cref{lem:sc-recovery},
    \[
        s_{\mathrm{core}}
        :=
        \max_{S\in \mathcal F_{\mathrm{core}}}|S|
        \le
        \Delta(I^{\mathrm{bal}})\cdot |B|
        \le
        D_\Gamma |B|
        \le
        \Gamma
    \]
    and
    \[
        f_{\mathrm{core}}
        :=
        \max_{x\in E\times B}\#\{S\in \mathcal F_{\mathrm{core}}:x\in S\}
        \le
        |\Sigma_U|+|\Sigma_V|
        \le
        D_\Gamma
        \le
        \Gamma
    \]
    for all sufficiently large $\Gamma$.

    Writing
    \[
        N_{\mathrm{core}}:=|E|\cdot |B|,
        \qquad
        m_{\mathrm{core}}:=|\mathcal F_{\mathrm{core}}|,
    \]
    positivity of the biregular degrees gives
    \[
        |E|\ge \max\{|U|,|V|\}\ge \frac{n_{\mathrm{LC}}}{2},
    \]
    while the maximum-degree bound gives
    \[
        |E|\le D_\Gamma n_{\mathrm{LC}}.
    \]
    Hence
    \[
        \frac{n_{\mathrm{LC}}\Gamma}{64D_\Gamma}
        \le
        N_{\mathrm{core}}
        \le
        \frac{n_{\mathrm{LC}}\Gamma}{16}
        \le
        \frac{N_{\mathrm{SC}}}{16}.
    \]
    Also,
    \[
        m_{\mathrm{core}}
        =
        |V||\Sigma_V|+|U||\Sigma_U|
        =
        O(n_{\mathrm{LC}}D_\Gamma)
        =
        O(N_{\mathrm{core}}),
    \]
    since $D_\Gamma=\Gamma^{o(1)}$.

    \paragraph{Step 5: Exact-size padding.}
    Let
    \[
        \rho:=\left\lfloor \frac{N_{\mathrm{SC}}}{N_{\mathrm{core}}}\right\rfloor,
        \qquad
        t:=N_{\mathrm{SC}}-\rho N_{\mathrm{core}}.
    \]
    Since $N_{\mathrm{core}}\le N_{\mathrm{SC}}/16$, we have $\rho\ge 1$.
    Form $I_{\mathrm{rep}}$ as the disjoint union of $\rho$ indexed copies of $I_{\mathrm{core}}$.
    Partition the $t$ fresh elements into blocks of size at most $s_{\mathrm{core}}$ and add one dedicated set for each block.
    Let
    \[
        r:=\left\lceil \frac{t}{s_{\mathrm{core}}}\right\rceil
    \]
    be the number of padding sets, and write the resulting exact-size instance as
    \[
        I_{\mathrm{pad}}=(U_{\mathrm{pad}},\mathcal F_{\mathrm{pad}}).
    \]
    Then
    \[
        |U_{\mathrm{pad}}|=N_{\mathrm{SC}}.
    \]
    Moreover,
    \[
        r
        \le
        \frac{N_{\mathrm{core}}}{s_{\mathrm{core}}}+1
        \le
        \frac{|E||B|}{\Delta(I^{\mathrm{bal}})|B|}+1
        \le
        \min\{|U|,|V|\}+1
        \le
        |U|+|V|.
    \]
    Therefore
    \[
        |\mathcal F_{\mathrm{pad}}|
        =
        \rho m_{\mathrm{core}}+r
        =
        O(N_{\mathrm{SC}}).
    \]
    Every core set still has size at most $\Gamma$.
    Every padding set has size at most $s_{\mathrm{core}}\le \Gamma$, and each padding element has frequency exactly $1$.
    Hence every set in $I_{\mathrm{pad}}$ has size at most $\Gamma$, and every element has frequency at most $\Gamma$.

    \paragraph{Step 6: Recovery and approximation lift.}
    Define an algorithm $A_{\mathrm{LC}}$ on the balanced IKW family as follows.
    On input $I^{\mathrm{bal}}$, construct $I_{\mathrm{pad}}$, run $A_{\mathrm{SC}}$ on $I_{\mathrm{pad}}$, delete the dedicated padding sets, choose a uniformly random core copy, restrict to that copy to obtain a cover $\bm{\mathcal S}_{\mathrm{core}}$, and then apply $\mathcal R_{\mathrm{SC},I^{\mathrm{bal}}}$.

    If $I^{\mathrm{bal}}$ is satisfiable, then \Cref{lem:sc-recovery} gives
    \[
        \opt(I_{\mathrm{core}})\le |U|+|V|.
    \]
    Hence
    \[
        \opt(I_{\mathrm{pad}})
        \le
        \rho(|U|+|V|)+r.
    \]
    Since $I_{\mathrm{pad}}$ is feasible and $A_{\mathrm{SC}}$ satisfies the $\alpha_\Gamma$-approximation guarantee, the output $A_{\mathrm{SC}}(I_{\mathrm{pad}})$ is a cover of $I_{\mathrm{pad}}$ almost surely and
    \[
        \E[|A_{\mathrm{SC}}(I_{\mathrm{pad}})|]
        \le
        \alpha_\Gamma\opt(I_{\mathrm{pad}})
        \le
        \alpha_\Gamma\bigl(\rho(|U|+|V|)+r\bigr).
    \]
    Conditioning on the output of $A_{\mathrm{SC}}$, the random core copy satisfies
    \[
        \E[|\bm{\mathcal S}_{\mathrm{core}}|\mid A_{\mathrm{SC}}(I_{\mathrm{pad}})]
        \le
        \frac{|A_{\mathrm{SC}}(I_{\mathrm{pad}})|}{\rho}.
    \]
    Therefore
    \[
        \E[|\bm{\mathcal S}_{\mathrm{core}}|]
        \le
        \frac{\E[|A_{\mathrm{SC}}(I_{\mathrm{pad}})|]}{\rho}
        \le
        \alpha_\Gamma\left(|U|+|V|+\frac{r}{\rho}\right)
        \le
        2\alpha_\Gamma(|U|+|V|),
    \]
    since $\rho\ge 1$ and $r\le |U|+|V|$.

    After shrinking $C_{\mathrm{SC}}$ if necessary, we may assume
    \[
        2\alpha_\Gamma(|U|+|V|)
        \le
        \frac{l}{32}(|U|+|V|).
    \]
    Indeed, $l=\lceil c_l\log \Gamma\rceil$ and $\alpha_\Gamma=C_{\mathrm{SC}}\log \Gamma$, so it suffices to choose $C_{\mathrm{SC}}$ small enough compared with $c_l$.
    Markov's inequality then yields
    \[
        \Pr\!\left[
            |\bm{\mathcal S}_{\mathrm{core}}|
            \le
            \frac{l}{8}(|U|+|V|)
        \right]
        \ge
        \frac{3}{4}.
    \]
    On this event, \Cref{lem:sc-recovery-ikw-soundness} applies because the balanced family has its two sides within a factor $2$, and gives
    \[
        \E[\val_{I^{\mathrm{bal}}}(\bm \pi)\mid |\bm{\mathcal S}_{\mathrm{core}}|\le \tfrac{l}{8}(|U|+|V|)]
        \ge
        \frac{2}{l^2}.
    \]
    Consequently,
    \[
        \E[\val_{I^{\mathrm{bal}}}(\bm \pi)]
        \ge
        \frac{1}{l^2}
        \ge
        \exp(-C_{\mathrm{IKW}}\sqrt{k}),
    \]
    where the last inequality is exactly the estimate from Step~1.
    Thus the approximation scale $C_{\mathrm{SC}}\log \Gamma$ is strong enough for the slice-soundness decoding, and $A_{\mathrm{LC}}$ is a valid algorithm on the balanced IKW family.

    \paragraph{Step 7: Sensitivity transfer and neighboring witnesses.}
    Since $A_{\mathrm{LC}}$ is valid on the balanced hard family, its sensitivity is at least $L_{\mathrm{LC}}$.
    Let $\mathcal R^{\oplus}_{\mathrm{SC},I^{\mathrm{bal}}}$ denote the repeated-copy recovery that deletes the padding sets, chooses a uniform random copy, and then applies $\mathcal R_{\mathrm{SC},I^{\mathrm{bal}}}$ on that copy.

    For a neighboring balanced source pair $I^{\mathrm{bal}},\widetilde I^{\mathrm{bal}}$, the corresponding core instances satisfy
    \[
        \SwapDist(I_{\mathrm{core}},\widetilde I_{\mathrm{core}})
        \le
        |\Sigma_U|\Gamma.
    \]
    Indeed, for a projection swap only the sets $S_{u,y}$ on one $\{e\}\times B$ slice can change, which contributes at most $|\Sigma_U||B|\le |\Sigma_U|\Gamma$.
    For a predicate swap at one left vertex $u_0$, each of the $|\Sigma_U|$ left-label sets can change on at most
    \[
        \deg(u_0)|B|
        \le
        \Delta(I^{\mathrm{bal}})|B|
        \le
        \Gamma
    \]
    elements, giving the same bound.
    Repeating the core $\rho$ times multiplies the swap distance by $\rho$, and the padding gadget is fixed, so
    \[
        \SwapDist(I_{\mathrm{pad}},\widetilde I_{\mathrm{pad}})
        \le
        \rho |\Sigma_U|\Gamma.
    \]

    Coupling the random copy choice identically and using \Cref{lem:sc-recovery}, we obtain the fixed-source Lipschitz bound
    \[
        \EMD\!\left(
            \mathcal R^{\oplus}_{\mathrm{SC},I^{\mathrm{bal}}}(S),
            \mathcal R^{\oplus}_{\mathrm{SC},I^{\mathrm{bal}}}(\widetilde S)
        \right)
        \le
        \frac{1}{\rho}\Ham(S,\widetilde S).
    \]
    The same lemma gives one-swap drift at most $1$ for predicate swaps and $0$ for projection swaps, so taking $D=1$ uniformly, \Cref{cor:sensitivity-pullback-lb} yields
    \[
        \SwapSens(A_{\mathrm{SC}})
        \ge
        \frac{L_{\mathrm{LC}}-1}{|\Sigma_U|\Gamma}.
    \]

    \paragraph{Step 8: Final asymptotic simplification.}
    Since $|\Sigma_U|\le D_\Gamma=\Gamma^{o(1)}$, we have
    \[
        |\Sigma_U|\Gamma=\Gamma^{1+o(1)}.
    \]
    Hence, for some absolute constant $c_1>0$,
    \[
        \SwapSens(A_{\mathrm{SC}})
        \ge
        \frac{2^{c_1\log N_{\mathrm{SC}}/k}}{\Gamma^{1+o(1)}}
        =
        2^{c_1\log N_{\mathrm{SC}}/k-(1+o(1))\log \Gamma}.
    \]
    Because
    \[
        k=\Theta((\log\log \Gamma)^3),
    \]
    the regime assumption implies
    \[
        \log \Gamma
        \le
        O(c_0)\cdot \frac{\log N_{\mathrm{SC}}}{k}.
    \]
    After shrinking $c_0$ if necessary, the subtraction from $|\Sigma_U|\Gamma$ is absorbed into the main exponent, and therefore
    \[
        \SwapSens(A_{\mathrm{SC}})
        \ge
        2^{\Omega(\log N_{\mathrm{SC}}/k)}
        =
        2^{\Omega(\log N_{\mathrm{SC}}/(\log\log \Gamma)^3)}.
    \]

    Finally, \Cref{lem:neighboring-witness} applied to the exact-size padded family produces neighboring instances $J_0,J_1$ on the same ground set and the same indexed family such that
    \[
        \EMD\!\left(A_{\mathrm{SC}}(J_0),A_{\mathrm{SC}}(J_1)\right)
        \ge
        2^{\Omega(\log N_{\mathrm{SC}}/(\log\log \Gamma)^3)}.
    \]
    By construction,
    \[
        |\mathcal F(J_0)|=|\mathcal F(J_1)|=O(N_{\mathrm{SC}}),
    \]
    every set in $J_0$ and $J_1$ has size at most $\Gamma$, and every element has frequency at most $\Gamma$.
\end{proof}

\subsection{Dinur--Harsha-Based Bound}

\begin{theorem}\label{thm:intro-set-cover}
    There exist absolute constants $B_{\log}^{\mathrm{SC}}, c_{\mathrm{blowup}}^{\mathrm{SC}}, C_{\mathrm{approx}}^{\mathrm{SC}}, \delta_{\mathrm{sens}}^{\mathrm{SC}} > 0$ such that the following holds.
    For every function $q=q(n)$ with
    \[
        2 \le q(n) \le (\log n)^{B_{\log}^{\mathrm{SC}}}
    \]
    and all sufficiently large $n$, any algorithm $A$ defined on all set cover instances of ground-set size $n$ and returning a $C_{\mathrm{approx}}^{\mathrm{SC}}\sqrt{q(n)}$-approximate cover on every feasible such instance must have sensitivity $\Omega(n^{\delta_{\mathrm{sens}}^{\mathrm{SC}}})$.
    Moreover, the lower bound already holds on instances in which every set has size at most $\exp(q(n)^{c_{\mathrm{blowup}}^{\mathrm{SC}}})$, every element belongs to at most $\exp(q(n)^{c_{\mathrm{blowup}}^{\mathrm{SC}}})$ sets, and the indexed family size satisfies $|\mathcal F| = O(n)$.
\end{theorem}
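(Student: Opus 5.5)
The plan mirrors the proof of \Cref{thm:app-set-cover-IKW}, with \Cref{thm:intro-label-cover} as the source in place of \Cref{cor:IKW-label-cover}. Given $q=q(n)$, set $l:=\lceil c_l\sqrt{q}\rceil$ for a small absolute $c_l>0$. Choose the soundness function $g$ so that $1/g(n_{\mathrm{LC}})\le 1/l^2$, i.e.\ $g=\Theta(q)$; this satisfies $g\le(\log n_{\mathrm{LC}})^B$ once $B_{\log}^{\mathrm{SC}}<B$. Taking $n_{\mathrm{LC}}=n^{\Theta(1)}$ left vertices, \Cref{thm:intro-label-cover} gives a satisfiable left-predicate family with sensitivity $\Omega(n_{\mathrm{LC}}^\delta)$ against $1/g$-satisfying algorithms. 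This family has $|V|=n_{\mathrm{LC}}(\log n_{\mathrm{LC}})^{O(1)}$, degrees and $|\Sigma_V|$ at most $q^{O(1)}$, and $|\Sigma_U|\le\exp(q^{O(1)})$.

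Two preprocessing steps follow. First, the family is not biregular and its sides differ by a polylog factor, so I need a variant of \Cref{lem:sc-recovery-ikw-soundness}. I would weight vertices by degree: bound $\E_e[t_e]$ using $\E_{e}[|L_u|]\le \Delta_U\sum_u|L_u|/|E|$ and the analogous bound for $V$. Alternatively, I would duplicate vertices as in Step~3 of the IKW proof to make the instance near-regular and balanced, accepting a $\mathrm{poly}(q,\log n)$ blow-up in degree and swap distance. Either way, a cover of size $\frac{l}{8}\cdot\mathrm{OPT}$ recovers via \Cref{lem:sc-slice-soundness} to expected value $\Omega(1/l^2)\ge 1/g$. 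Second, build the $(m,l)$-set system of \Cref{lem:m-l-set-system} with $m=|\Sigma_V|=q^{O(1)}$ and $|B|=2^{O(l)}m^2=\exp(O(\sqrt q))$. Then apply $\mathcal T_{\mathrm{SC}}$. Set sizes are at most $\Delta\cdot|B|=\exp(q^{O(1)})$. Element frequency is at most $|\Sigma_U|+|\Sigma_V|\le\exp(q^{O(1)})$, which is where $c_{\mathrm{blowup}}^{\mathrm{SC}}$ comes from. The ground set has size $|E||B|=n_{\mathrm{LC}}\cdot\exp(q^{O(1)})=n_{\mathrm{LC}}^{1+o(1)}$, using $q\le(\log n)^{B_{\log}^{\mathrm{SC}}}$ with $B_{\log}^{\mathrm{SC}}$ small enough that $\exp(q^{O(1)})=n^{o(1)}$. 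This also requires $|\Sigma_U|$ to stay subpolynomial.

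Next, pad to exact size $n$ by $\rho$ disjoint core copies plus dedicated padding sets, exactly as in Step~5. This keeps $|\mathcal F|=O(n)$, provided $|\mathcal F_{\mathrm{core}}|=|U||\Sigma_U|+|V||\Sigma_V|$ is $O(N_{\mathrm{core}})$. That holds because $|\Sigma_U|\le \exp(q^{O(1)})$ and $|B|$ can be inflated by the $[q']\times B_0$ trick to dominate $|\Sigma_U|$. If instead $|\Sigma_U|$ dominates, I would only get $|\mathcal F|=n^{1+o(1)}$ and would need to enlarge $B$, still within $\exp(q^{O(1)})$. For the approximation lift, if the source is satisfiable then $\mathrm{OPT}\le\rho(|U|+|V|)+r$. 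A $C^{\mathrm{SC}}_{\mathrm{approx}}\sqrt q$-approximation then gives, after choosing a random copy and applying Markov, a core cover of size at most $\frac{l}{8}(|U|+|V|)$ with probability $3/4$, provided $C^{\mathrm{SC}}_{\mathrm{approx}}\ll c_l$. The recovery then gives value at least $1/g$, so the composed algorithm is valid.

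For sensitivity transfer, use \Cref{cor:sensitivity-pullback-lb} with $C_T\le\rho|\Sigma_U|\cdot\Delta|B|$, $C_R\le 1/\rho$, and $D\le1$ (from \Cref{lem:sc-recovery}). This gives sensitivity at least $(n_{\mathrm{LC}}^\delta-1)/\exp(q^{O(1)})=\Omega(n^{\delta'})$, because $\exp(q^{O(1)})=n^{o(1)}$. The main obstacle is the parameter bookkeeping that keeps every $\exp(q^{O(1)})$ factor (left alphabet, $|B|$, frequency, swap blow-up) at $n^{o(1)}$ while preserving $|\mathcal F|=O(n)$, together with the non-biregular soundness averaging. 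I would settle the averaging first by duplication balancing, since its recovery is $1$-Lipschitz and its swap cost is only $\mathrm{poly}(q,\log n)$.
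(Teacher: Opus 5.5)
Your overall plan matches the paper's proof. You feed \Cref{thm:intro-label-cover} with $g=\Theta(q)$ into $\mathcal T_{\mathrm{SC}}$ using an $(m,l)$-set system with $l=\Theta(\sqrt q)$. You pad with $\rho$ disjoint copies plus dedicated sets, apply Markov to a random copy, and pull back with $C_T\approx\rho\exp(q^{O(1)})$, $C_R=1/\rho$, $D=1$. Inflating $B$ to get $|\mathcal F_{\mathrm{core}}|\le N_{\mathrm{core}}$ is equivalent to the paper's repetition of every edge $R_*$ times.

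\textbf{The gap is the regularization step.} You leave it open, and neither of your two options handles it.

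\emph{Option (a) fails.} Degree weighting loses the factor $\max_u d(u)/\bar d_U$ (and its right-side analogue) in $\E_e[t_e]$. \Cref{thm:intro-label-cover} does not bound this ratio by a constant, only by $g^{O(1)}$. You cannot absorb it by enlarging $l$: the degrees are themselves $g^{O(1)}$ while $g=\Theta(l^2)$, so the requirement $l\gtrsim \Delta\sqrt q$ is circular.

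\emph{Option (b), as you describe it, also fails.} It is Step~3 of the IKW proof, which multiplies one side of an \emph{already biregular} instance by a uniform factor. That does not remove any degree variation. The side balancing it does achieve costs a factor $|V|/|U|=(\log n)^{O(1)}$ in the degrees of the other side; this is the ``$\mathrm{poly}(q,\log n)$ blow-up in degree'' you accept. Since set sizes in $\mathcal T_{\mathrm{SC}}$ are $\deg(v)\cdot|B|$, this violates the ``moreover'' clause. When $q$ is small (e.g.\ $q\equiv 2$), the bound $\exp(q^{c_{\mathrm{blowup}}^{\mathrm{SC}}})$ is an absolute constant, so set sizes and frequencies must not grow with $n$. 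The dominating-set application relies on exactly this, via $M_\Delta^{\mathrm{SC}}+1\le\Gamma$. Your bookkeeping only keeps these factors at $n^{o(1)}$, which suffices for the sensitivity loss but not for this clause.

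\textbf{The missing idea is the degree-proportional regularization of \Cref{lem:lc-balance}.}
\begin{enumerate}
\item First delete the isolated left vertices that the hard family contains by construction. Otherwise each one adds $|\Sigma_U|$ empty sets to $\mathcal F$, and their number relative to $|E|$ is not controlled by $q$.
\item Then replace every vertex $w$ by $d(w)$ copies, and join each copy of $u$ to each copy of $v$ by $K/(d(u)d(v))$ parallel edges, with $K=\operatorname{lcm}(1,\ldots,\Delta)^2$.
\end{enumerate}
Because $\sum_u d(u)=|E|=\sum_v d(v)$, this balances the two sides for free ($|\widehat U|=|\widehat V|=|E|$). Every degree becomes exactly $K=\exp(q^{O(1)})$, independent of $n$. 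The random-copy projection preserves value exactly and is $1$-Lipschitz, and a source swap becomes at most $K$ swaps. With that lemma in place, \Cref{lem:sc-recovery-dh-soundness} replaces your averaging step, and the rest of your plan goes through as written.
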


We prove \Cref{thm:intro-set-cover} by starting from the hard family of \Cref{thm:intro-label-cover}.
That family is not stated to be bi-regular, and after trimming isolated left vertices we therefore first balance the remaining instance by the following regularization lemma before applying the common set-cover reduction.

\begin{lemma}\label{lem:lc-balance}
    Let $I=(U,V,E,\Sigma_U,\Sigma_V,P,F)$ be a label cover instance with no isolated vertices.
    For $u\in U$ and $v\in V$, write $d(u):=\deg_I(u)$ and $d(v):=\deg_I(v)$, and let
    \[
        \Delta := \max\{\max_{u\in U} d(u),\max_{v\in V} d(v)\},
        \qquad
        M_\Delta := \operatorname{lcm}(1,\ldots,\Delta),
        \qquad
        K := M_\Delta^2.
    \]
    Define the balanced regularization $\widehat I=\operatorname{Bal}(I)$ by
    \[
        \widehat U := \{(u,i) : u\in U,\ i\in[d(u)]\},
        \qquad
        \widehat V := \{(v,j) : v\in V,\ j\in[d(v)]\},
    \]
    and for every edge $e=(u,v)\in E$, every $i\in[d(u)]$, every $j\in[d(v)]$, and every
    $t\in [K/(d(u)d(v))]$, add the copied edge
    \[
        \widehat e=(e,i,j,t)
    \]
    between $(u,i)$ and $(v,j)$ with projection $f_{\widehat e}:=f_e$.
    For each copied left vertex $(u,i)\in \widehat U$, set $\widehat P_{(u,i)}:=P_u$.
    Then $|\widehat U|=|\widehat V|=|E|$, every left and right vertex of $\widehat I$ has degree exactly $K$, and $|\widehat E|=K|E|$.
    Moreover, the following hold.
    \begin{itemize}
        \item If $I$ is satisfiable, then $\widehat I$ is satisfiable.
        \item For any deterministic assignment $\widehat\pi$ to $\widehat I$, let $\operatorname{Proj}_I(\widehat\pi)$ be the distribution on assignments to $I$ obtained by independently choosing a uniform copy in $[d(u)]$ for each $u\in U$ and a uniform copy in $[d(v)]$ for each $v\in V$, and then reading off the labels of the chosen copies.
        Then
        \[
            \E_{\bm \pi\sim \operatorname{Proj}_I(\widehat\pi)}[\val_I(\bm \pi)]
            = \val_{\widehat I}(\widehat\pi).
        \]
        By linearity, the same identity holds for randomized assignments $\bm{\widehat\pi}$.
        \item For any two assignments $\widehat\pi,\widetilde{\widehat\pi}$ to $\widehat I$,
        \[
            \EMD(\operatorname{Proj}_I(\widehat\pi),\operatorname{Proj}_I(\widetilde{\widehat\pi}))
            \le \Ham(\widehat\pi,\widetilde{\widehat\pi}).
        \]
        \item Viewing $\operatorname{Proj}_I$ as a map on distributions over assignments to $\widehat I$, for any two such distributions $\mu,\nu$,
        \[
            \EMD(\operatorname{Proj}_I(\mu),\operatorname{Proj}_I(\nu))
            \le \EMD(\mu,\nu).
        \]
    \end{itemize}
\end{lemma}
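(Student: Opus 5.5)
The proof is a direct verification from the definition of $\operatorname{Bal}(I)$, carried out in four steps: structural counts first, then completeness, then the value identity, and finally the two Lipschitz bounds.

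\textbf{Structure and completeness.} The counts $|\widehat U|=\sum_u d(u)=|E|=\sum_v d(v)=|\widehat V|$ are immediate. For degrees, fix a copied left vertex $(u,i)$. For each of the $d(u)$ edges $e=(u,v)$ at $u$, it receives $d(v)\cdot K/(d(u)d(v))=K/d(u)$ copied edges. Summing over these edges gives degree $K$. The right side is symmetric, and $|\widehat E|=\sum_e d(u)d(v)\cdot K/(d(u)d(v))=K|E|$. Integrality of $K/(d(u)d(v))$ holds because $d(u),d(v)\le\Delta$ both divide $M_\Delta$, so $d(u)d(v)\mid M_\Delta^2$. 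For completeness, copy a satisfying assignment $\pi$ to every copy: set $\widehat\pi(u,i)=\pi(u)$ and $\widehat\pi(v,j)=\pi(v)$. Each copied edge then carries $f_e$ and $\widehat P_{(u,i)}=P_u$, so it is satisfied.

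\textbf{Value identity.} This is the only step needing care. Fix a deterministic $\widehat\pi$ and an edge $e=(u,v)$. Under $\operatorname{Proj}_I$ the copy indices $i\sim[d(u)]$ and $j\sim[d(v)]$ are independent and uniform, so
\[
\Pr[\bm\pi \text{ satisfies } e]=\frac{1}{d(u)d(v)}\sum_{i,j}\mathbf 1[\widehat\pi \text{ satisfies the copy }((u,i),(v,j)) \text{ with projection } f_e].
\]
Every pair $(i,j)$ appears in $\widehat E$ with the same multiplicity $K/(d(u)d(v))$. Hence the fraction of satisfied edges among the $K$ copies of $e$ equals this probability, and the copies of each original edge make up exactly a $1/|E|$ fraction of $\widehat E$. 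Averaging over $e$ gives $\E[\val_I(\bm\pi)]=\val_{\widehat I}(\widehat\pi)$. The randomized case follows by linearity. The main subtlety is that the satisfaction of distinct edges is correlated through shared vertices, but expectation is linear over edges, so this correlation is harmless. The equal multiplicity per pair $(i,j)$ is exactly what the factor $K/(d(u)d(v))$ was designed to provide.

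\textbf{Lipschitz bounds.} For the fixed-assignment bound, couple $\operatorname{Proj}_I(\widehat\pi)$ and $\operatorname{Proj}_I(\widetilde{\widehat\pi})$ by sharing the copy choices. An original vertex $w$ can receive different labels only if its chosen copy is one on which the assignments differ. This happens with probability at most (number of differing copies of $w$)$/d(w)\le$ (number of differing copies of $w$). Summing over $w$ bounds the expected Hamming distance by $\Ham(\widehat\pi,\widetilde{\widehat\pi})$. For the distributional statement, take an optimal coupling of $\mu$ and $\nu$, apply the pointwise bound conditionally, and glue the couplings. The expected distance is then at most $\E\,\Ham=\EMD(\mu,\nu)$.
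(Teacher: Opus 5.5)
Your proposal is correct and follows the paper's proof essentially step for step. Both arguments use the same ingredients:
\begin{itemize}
\item integrality of $K/(d(u)d(v))$ from $d(u),d(v)\le\Delta$;
\item per-edge block counting for the degrees and for $|\widehat E|=K|E|$;
\item averaging over copy pairs $(i,j)$ with uniform multiplicity $K/(d(u)d(v))$ for the value identity;
\item the shared-copy coupling, followed by conditional gluing over an optimal coupling of $\mu,\nu$, for the two EMD bounds.
\end{itemize}
Your degree count sums over incident edges rather than over neighbors, which is slightly cleaner because it also covers parallel edges in the source instance.
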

\begin{proof}
    Since every degree lies in $[\Delta]$, the number $K/(d(u)d(v))$ is an integer for every edge $(u,v)\in E$.
    The size identity is immediate from
    \[
        |\widehat U| = \sum_{u\in U} d(u) = |E| = \sum_{v\in V} d(v) = |\widehat V|.
    \]
    Fix a copied left vertex $(u,i)\in \widehat U$.
    For each neighbor $v\in N(u)$, the block above the edge $(u,v)$ contributes
    $d(v)\cdot K/(d(u)d(v)) = K/d(u)$ incident copied edges to $(u,i)$.
    Summing over the $d(u)$ neighbors of $u$ shows that $(u,i)$ has degree exactly $K$.
    The same calculation on the right proves that every copied right vertex also has degree $K$.
    Therefore $|\widehat E|=K|\widehat U|=K|E|$.
    If $I$ is satisfiable, assign every copy of a vertex the satisfying label of the original vertex, which satisfies every copied edge.

    Fix a deterministic assignment $\widehat\pi$ to $\widehat I$.
    For an original edge $e=(u,v)$, the projection $\operatorname{Proj}_I$ chooses $(u,i)$ and $(v,j)$ uniformly and independently from the corresponding copy fibers, so
    \[
        \Pr_{\bm \pi\sim \operatorname{Proj}_I(\widehat\pi)}[\bm \pi \text{ satisfies } e]
        = \frac{1}{d(u)d(v)}
        \sum_{i=1}^{d(u)}\sum_{j=1}^{d(v)}
        \mathbf 1\bigl[P_u(\widehat\pi(u,i))=1 \text{ and } f_e(\widehat\pi(u,i))=\widehat\pi(v,j)\bigr].
    \]
    Because $\widehat P_{(u,i)}=P_u$, a copied edge $\widehat e=(e,i,j,t)$ is satisfied by $\widehat\pi$ if and only if
    \[
        P_u(\widehat\pi(u,i))=1
        \qquad\text{and}\qquad
        f_e(\widehat\pi(u,i))=\widehat\pi(v,j).
    \]
    Every pair $(i,j)$ appears in exactly $K/(d(u)d(v))$ copied edges above $e$, so the same quantity is
    the fraction of satisfied copied edges lying above $e$.
    Since each original edge contributes exactly $K$ copied edges, averaging over $e\in E$ gives
    \[
        \begin{aligned}
            \E_{\bm \pi\sim \operatorname{Proj}_I(\widehat\pi)}[\val_I(\bm \pi)]
            &=
            \frac{1}{|E|}
            \sum_{e=(u,v)\in E}
            \frac{1}{d(u)d(v)}
            \sum_{i=1}^{d(u)}\sum_{j=1}^{d(v)}
            \\
            &\qquad\qquad
            \mathbf 1\bigl[P_u(\widehat\pi(u,i))=1 \text{ and } f_e(\widehat\pi(u,i))=\widehat\pi(v,j)\bigr] \\
            &=
            \val_{\widehat I}(\widehat\pi).
        \end{aligned}
    \]
    The randomized case follows by linearity.

    For the deterministic Lipschitz claim, couple the random representative choices identically in the two projections.
    If $\widehat\pi$ and $\widetilde{\widehat\pi}$ differ on $k_u$ copies of an original left vertex $u$, then the projected labels at $u$ differ with probability $k_u/d(u)$; similarly, if they differ on $\ell_v$ copies of an original right vertex $v$, then the projected labels at $v$ differ with probability $\ell_v/d(v)$.
    Hence under this coupling the expected Hamming distance of the projected assignments is
    \[
        \sum_{u\in U} \frac{k_u}{d(u)} + \sum_{v\in V} \frac{\ell_v}{d(v)}
        \le \sum_{u\in U} k_u + \sum_{v\in V} \ell_v
        = \Ham(\widehat\pi,\widetilde{\widehat\pi}),
    \]
    which proves the earth mover bound.

    For the distributional statement, take an optimal coupling $(\bm{\widehat\pi},\widetilde{\bm{\widehat\pi}})$ of $\mu$ and $\nu$.
    Conditional on a sampled pair of deterministic assignments, apply the coupling from the previous paragraph to their projections.
    The resulting pair has marginals $\operatorname{Proj}_I(\mu)$ and $\operatorname{Proj}_I(\nu)$, and its expected Hamming distance is at most
    \[
        \E\bigl[\Ham(\bm{\widehat\pi},\widetilde{\bm{\widehat\pi}})\bigr]
        = \EMD(\mu,\nu).
    \]
    Taking the infimum over couplings on the projected side proves the claim.
\end{proof}

\begin{lemma}[DH-friendly soundness for $\mathcal T_{\mathrm{SC}}$]\label{lem:sc-recovery-dh-soundness}
    Let $I=(U,V,E,\Sigma_U,\Sigma_V,P,F)$ be a bi-regular label cover instance with $|U|=|V|$, and let $I'=\mathcal T_{\mathrm{SC}}(I)$.
    If $\mathcal S\subseteq \Lambda_I$ is a cover of $I'$ with
    \[
        |\mathcal S|\le \frac{l}{8}(|U|+|V|),
    \]
    then for $\bm \pi\sim \mathcal R_{\mathrm{SC},I}(\mathcal S)$,
    \[
        \E[\val_I(\bm \pi)]\ge \frac{2}{l^2}.
    \]
\end{lemma}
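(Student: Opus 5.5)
This is essentially the argument of \Cref{lem:sc-recovery-ikw-soundness}, specialized to the balanced case $|U|=|V|$. The steps are a trimming of empty selections, an averaging bound on slice sizes, Markov's inequality, and then \Cref{lem:sc-slice-soundness} on the good edges.

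\textbf{Step 1: trim and count.} First, use the second bullet of \Cref{lem:sc-recovery} to delete every selected index whose associated set is empty. This keeps $\mathcal S$ a cover and does not increase its size. After trimming, each selected left index $(u,y)$ has $P_u(y)=1$, so $|\mathcal S|=A+B$ where $A:=\sum_{u}|L_u|$ and $B:=\sum_v|L_v|$. The recovery map $\mathcal R_{\mathrm{SC},I}$ depends only on the sets $L_u,L_v$, and these are unchanged by the trimming.

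\textbf{Step 2: average slice size.} Pick a uniformly random edge $e=(u,v)$. By bi-regularity, $u$ is uniform on $U$ and $v$ is uniform on $V$. Set $t_e:=|L_u|+|L_v|$. Using $|U|=|V|=:N$,
\[
\E[t_e]=\frac{A}{N}+\frac{B}{N}=\frac{|\mathcal S|}{N}\le \frac{l}{8}\cdot\frac{2N}{N}=\frac{l}{4}.
\]
Markov's inequality then gives $\Pr[t_e\le l]\ge 3/4$.

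\textbf{Step 3: apply slice soundness.} For every edge with $t_e\le l$, \Cref{lem:sc-slice-soundness} gives $\Pr[\bm\pi\text{ satisfies }e]\ge 4/l^2$. Averaging over edges,
\[
\E[\val_I(\bm\pi)]\ge \frac34\cdot\frac{4}{l^2}=\frac{3}{l^2}\ge\frac{2}{l^2}.
\]

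There is no real obstacle here. The only points needing care are that uniform edges give uniform endpoints, which is exactly what bi-regularity provides, and that the trimming leaves the recovery map unchanged, which it does because it only removes inadmissible left labels.
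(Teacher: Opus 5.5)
Your proposal is correct and matches the paper's proof essentially step for step. Both arguments trim the empty selections, use bi-regularity and $|U|=|V|$ to get $\E[t_e]\le l/4$, apply Markov's inequality to obtain a $3/4$ fraction of edges with $t_e\le l$, and then invoke \Cref{lem:sc-slice-soundness} to conclude $\E[\val_I(\bm\pi)]\ge 3/l^2\ge 2/l^2$. One small remark: your claim that trimming leaves $L_u,L_v$ unchanged tacitly assumes no vertex is isolated. You do not actually need that claim, because $A+B\le|\mathcal S|$ follows directly from the definition of the label sets.
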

\begin{proof}
    Let $\mathcal S\subseteq \Lambda_I$ be any cover of $I'$, and delete from it all selected indices whose associated sets are empty to obtain $\mathcal S'$.
    Write
    \[
        A:=\sum_{u\in U}|L_u|,
        \qquad
        B:=\sum_{v\in V}|L_v|.
    \]
    Because $\mathcal S'$ contains no selected index whose associated set is empty, each selected index contributes to exactly one term of $A$ or $B$, and hence
    \[
        A+B=|\mathcal S'|\le |\mathcal S|\le \frac{l}{8}(|U|+|V|)=\frac{l}{4}|U|.
    \]
    For a uniformly random edge $e=(u,v)\in E$, bi-regularity implies that $u$ is uniform in $U$ and $v$ is uniform in $V$, so if we write $t_e:=|L_u|+|L_v|$, then
    \[
        \E_e[t_e]
        = \frac{1}{|U|}\sum_{u\in U}|L_u| + \frac{1}{|V|}\sum_{v\in V}|L_v|
        = \frac{A+B}{|U|}
        \le \frac{l}{4}.
    \]
    Therefore at least a $3/4$ fraction of the edges satisfy $t_e\le l$.
    For every edge $e=(u,v)$ with $t_e\le l$, \Cref{lem:sc-slice-soundness} gives
    \[
        \Pr_{\bm \pi\sim \mathcal R_{\mathrm{SC},I}(\mathcal S)}[\bm \pi \text{ satisfies } e]
        \ge
        \frac{4}{l^2}.
    \]
    Averaging over the good $3/4$ fraction of edges gives
    \[
        \E_{\bm \pi\sim \mathcal R_{\mathrm{SC},I}(\mathcal S)}[\val_I(\bm \pi)]
        \ge \frac{3}{4}\cdot \frac{4}{l^2}
        = \frac{3}{l^2}
        \ge \frac{2}{l^2}. \qedhere
    \]
\end{proof}

\begin{proof}[Proof of~\Cref{thm:intro-set-cover}]
    Let $B_0,\delta_{\mathrm{LC}}>0$ be the constants from \Cref{thm:intro-label-cover}.
    Choose an absolute constant $c_{\mathrm{blowup}}^{\mathrm{SC}}\ge 1$ large enough that every factor of the form
    $q^{O(1)}$ or $\exp(q^{O(1)})$ arising below is bounded by $\exp(q^{c_{\mathrm{blowup}}^{\mathrm{SC}}})$ once $q\ge 2$.
    Next choose
    \[
        B_{\log}^{\mathrm{SC}}\in \bigl(0,B_0/4\bigr)
        \qquad\text{with}\qquad
        2B_{\log}^{\mathrm{SC}}c_{\mathrm{blowup}}^{\mathrm{SC}}<1,
        \qquad
        \delta_{\mathrm{sens}}^{\mathrm{SC}} := \delta_{\mathrm{LC}}/2,
    \]
    and finally fix $C_{\mathrm{approx}}^{\mathrm{SC}}\ge 1$ large enough for the approximation-transfer estimates below.

    Fix any function $q(\cdot)$ with $2 \le q(t) \le (\log t)^{B_{\log}^{\mathrm{SC}}}$, fix a sufficiently large ground-set size $n$, and suppose there exists an algorithm $A'$ defined on all set cover instances of ground-set size $n$ whose output is a $C_{\mathrm{approx}}^{\mathrm{SC}}\sqrt{q(n)}$-approximate cover on every feasible input of that size.
    Write
    \[
        q_* := q(n),
        \qquad
        l := \left\lceil 64 C_{\mathrm{approx}}^{\mathrm{SC}}\sqrt{q_*}\right\rceil,
        \qquad
        g_* := l^2,
        \qquad
        \Lambda_* := \left\lceil \exp(q_*^{c_{\mathrm{blowup}}^{\mathrm{SC}}}) \right\rceil.
    \]
    Choose
    \[
        n_{\mathrm{LC}} := \left\lfloor \frac{n}{8\Lambda_*^2}\right\rfloor.
    \]
    Because $q_* \le (\log n)^{B_{\log}^{\mathrm{SC}}}$ and $2B_{\log}^{\mathrm{SC}}c_{\mathrm{blowup}}^{\mathrm{SC}}<1$, we have $\Lambda_*^2 = n^{o(1)}$, hence $n_{\mathrm{LC}}\to\infty$ and
    \[
        q_* \le (\log n_{\mathrm{LC}})^{B_0}
    \]
    for all sufficiently large $n$.
    Since $g_* = O(q_*)$, \Cref{thm:intro-label-cover} therefore applies to the constant function $g\equiv g_*$ at size $n_{\mathrm{LC}}$.

    Given a satisfiable left-predicate label cover instance
    $I=(U,V,E,\Sigma_U,\Sigma_V,P,F)$ from the hard family supplied by \Cref{thm:intro-label-cover} at size $n_{\mathrm{LC}}$, fix default labels $\sigma_U^\star\in \Sigma_U$ and $\sigma_V^\star\in \Sigma_V$ once and for all, delete all isolated vertices to obtain the trimmed instance $I^\circ=(U^\circ,V^\circ,E^\circ,\Sigma_U,\Sigma_V,P^\circ,F^\circ)$, and form the balanced regularization
    \[
        \widehat I_0:=\operatorname{Bal}(I^\circ).
    \]
    The hard family of \Cref{thm:intro-label-cover} is nontrivial, so the trimmed instance is nonempty and $|E^\circ|\ge 1$.

    \paragraph{Core parameters and the exact-size target instance.}
    Let
    \[
        \Delta := \max\Bigl\{\max_{u\in U^\circ}\deg_{I^\circ}(u),\max_{v\in V^\circ}\deg_{I^\circ}(v)\Bigr\},
        \qquad
        M_\Delta := \operatorname{lcm}(1,\ldots,\Delta),
        \qquad
        K:=M_\Delta^2.
    \]
    Since $I^\circ$ is a subinstance of $I$, \Cref{thm:intro-label-cover} gives
    \[
        \Delta = q_*^{O(1)},
        \qquad
        |E^\circ| \le |U|\,\Delta = n_{\mathrm{LC}}\,q_*^{O(1)},
        \qquad
        |\Sigma_V| = q_*^{O(1)},
        \qquad
        |\Sigma_U| \le \exp(q_*^{O(1)}).
    \]
    Let
    \[
        m_{\mathrm{sys}}:=|\Sigma_V|,
    \]
    and let $(B_{\mathrm{sys}},\{C_1,\ldots,C_{m_{\mathrm{sys}}}\})$ be the $(m_{\mathrm{sys}},l)$-set system from \Cref{lem:m-l-set-system}.
    Then
    \[
        K = \exp(q_*^{O(1)}),
        \qquad
        |B_{\mathrm{sys}}| = 2^{O(l)}m_{\mathrm{sys}}^{O(1)} = \exp(q_*^{O(1)}).
    \]
    Define
    \[
        R_* := \max\Bigl\{1,\Bigl\lceil \frac{|\Sigma_U|+|\Sigma_V|}{K|B_{\mathrm{sys}}|}\Bigr\rceil\Bigr\},
        \qquad
        K_* := R_* K.
    \]
    Repeat every copied edge of $\widehat I_0$ exactly $R_*$ times, and denote the resulting bi-regular instance again by $\widehat I$.
    All conclusions of \Cref{lem:lc-balance} remain valid for $\widehat I$, because every block above each original edge is scaled uniformly.
    In particular, $\widehat I$ is $K_*$-regular, $|\widehat U|=|\widehat V|=|E^\circ|$, and $|\widehat E|=K_*|E^\circ|$.
    By the choice of $c_{\mathrm{blowup}}^{\mathrm{SC}}$, all of
    \[
        \Delta,\qquad K_*|B_{\mathrm{sys}}|,\qquad |\Sigma_U|+|\Sigma_V|,\qquad C_T:=|\Sigma_U|\Delta K_*|B_{\mathrm{sys}}|
    \]
    are at most $\Lambda_*$.
    Therefore the core set cover instance
    \[
        I_{\mathrm{core}}=(\widehat E\times B_{\mathrm{sys}},\mathcal F_{\mathrm{core}})
        := \mathcal T_{\mathrm{SC}}(\widehat I)
    \]
    has universe size
    \[
        N_{\mathrm{core}} := |\widehat E||B_{\mathrm{sys}}| = K_*|E^\circ||B_{\mathrm{sys}}| \le \Lambda_* |E^\circ| \le \Lambda_*^2 n_{\mathrm{LC}} \le \frac{n}{8}.
    \]
    Every set in $I_{\mathrm{core}}$ has size at most $K_*|B_{\mathrm{sys}}|\le \Lambda_*$, every element belongs to at most $|\Sigma_U|+|\Sigma_V|\le \Lambda_*$ sets, and
    \[
        |\mathcal F_{\mathrm{core}}|
        =
        |\widehat V||\Sigma_V| + |\widehat U||\Sigma_U|
        =
        |E^\circ|(|\Sigma_U|+|\Sigma_V|)
        \le |E^\circ|K_*|B_{\mathrm{sys}}|
        = N_{\mathrm{core}}.
    \]
    Let
    \[
        \rho := \left\lfloor \frac{n}{N_{\mathrm{core}}}\right\rfloor,
        \qquad
        t:=n-\rho N_{\mathrm{core}},
        \qquad
        r:=\left\lceil\frac{t}{\Lambda_*}\right\rceil.
    \]
    Form $I_{\mathrm{rep}}$ as the disjoint union of $\rho$ indexed copies of $I_{\mathrm{core}}$, and then pad it to exact ground-set size $n$ by adjoining $t$ fresh elements partitioned into blocks of size at most $\Lambda_*$, together with one dedicated set covering each block and no other element.
    Denote the final instance by $I_{\mathrm{pad}}=(U_{\mathrm{pad}},\mathcal F_{\mathrm{pad}})$.
    Since $t < N_{\mathrm{core}} \le \Lambda_*|E^\circ|$, we have
    \[
        r \le \frac{N_{\mathrm{core}}}{\Lambda_*}+1 \le |E^\circ|+1 \le 2|E^\circ|.
    \]
    The disjoint-union part contributes at most $\rho |\mathcal F_{\mathrm{core}}| \le \rho N_{\mathrm{core}} \le n$ indexed sets, and the padding gadget adds $r=O(n/\Lambda_*)=O(n)$ further sets, so $|\mathcal F_{\mathrm{pad}}|=O(n)$.

    \paragraph{Approximation transfer.}
    Algorithm $A$ on input $I$ performs the exact-size transformation above, runs $A'$ on $I_{\mathrm{pad}}$ to obtain a random subfamily $\bm{\mathcal S}_{\mathrm{pad}}$, deletes the dedicated padding sets to obtain $\bm{\mathcal S}_{\mathrm{rep}}$, chooses a uniform random copy index $\kappa\in[\rho]$, restricts to the $\kappa$-th copy to obtain $\bm{\mathcal S}_{\mathrm{core}}$, applies $\mathcal R_{\mathrm{SC},\widehat I}$ to $\bm{\mathcal S}_{\mathrm{core}}$, projects via $\operatorname{Proj}_{I^\circ}$, and finally inserts the default labels on the deleted isolated vertices.

    If $I$ is satisfiable, then $I^\circ$ is satisfiable, hence $\widehat I_0$ and $\widehat I$ are satisfiable by \Cref{lem:lc-balance}.
    By the completeness part of \Cref{lem:sc-recovery}, each copy of $I_{\mathrm{core}}$ has an optimal cover of size at most
    \[
        |\widehat U|+|\widehat V| = 2|E^\circ|.
    \]
    Therefore
    \[
        \opt(I_{\mathrm{pad}})\le 2\rho |E^\circ|+r.
    \]
    Since $I_{\mathrm{pad}}$ is feasible and $A'$ satisfies the $C_{\mathrm{approx}}^{\mathrm{SC}}\sqrt{q_*}$-approximation guarantee, the random subfamily $\bm{\mathcal S}_{\mathrm{pad}}$ is a cover of $I_{\mathrm{pad}}$ almost surely and
    \[
        \E[|\bm{\mathcal S}_{\mathrm{pad}}|]
        \le
        C_{\mathrm{approx}}^{\mathrm{SC}}\sqrt{q_*}\,\opt(I_{\mathrm{pad}})
        \le
        C_{\mathrm{approx}}^{\mathrm{SC}}\sqrt{q_*}\,(2\rho |E^\circ|+r).
    \]
    For each copy index $a\in[\rho]$, let $\bm{\mathcal S}^{(a)}$ be the restriction of $\bm{\mathcal S}_{\mathrm{rep}}$ to the $a$-th copy.
    Because the copies are disjoint, every $\bm{\mathcal S}^{(a)}$ is a cover of $I_{\mathrm{core}}$ almost surely.
    Conditional on $\bm{\mathcal S}_{\mathrm{pad}}$, the random index $\kappa$ is uniform, so
    \[
        \E\bigl[|\bm{\mathcal S}_{\mathrm{core}}| \,\big|\, \bm{\mathcal S}_{\mathrm{pad}}\bigr]
        = \frac{|\bm{\mathcal S}_{\mathrm{rep}}|}{\rho}
        \le \frac{|\bm{\mathcal S}_{\mathrm{pad}}|}{\rho}.
    \]
    Taking expectations gives
    \[
        \E[|\bm{\mathcal S}_{\mathrm{core}}|]
        \le
        \frac{\E[|\bm{\mathcal S}_{\mathrm{pad}}|]}{\rho}
        \le
        C_{\mathrm{approx}}^{\mathrm{SC}}\sqrt{q_*}\left(2|E^\circ| + \frac{r}{\rho}\right)
        \le
        4C_{\mathrm{approx}}^{\mathrm{SC}}\sqrt{q_*}\,|E^\circ|,
    \]
    where the last inequality uses $r/\rho \le r \le 2|E^\circ|$.
    Since
    \[
        \frac{l}{8}\,(|\widehat U|+|\widehat V|) = \frac{l}{4}|E^\circ|,
    \]
    Markov's inequality gives
    \[
        \Pr\!\left[
            |\bm{\mathcal S}_{\mathrm{core}}|
            >
            \frac{l}{8}\,(|\widehat U|+|\widehat V|)
        \right]
        \le
        \frac{4C_{\mathrm{approx}}^{\mathrm{SC}}\sqrt{q_*}|E^\circ|}{(l/4)|E^\circ|}
        \le \frac14.
    \]
    On the complementary event, \Cref{lem:sc-recovery-dh-soundness} yields
    \[
        \E_{\bm{\widehat\pi}\sim \mathcal R_{\mathrm{SC},\widehat I}(\bm{\mathcal S}_{\mathrm{core}})}
        [\val_{\widehat I}(\bm{\widehat\pi}) \mid |\bm{\mathcal S}_{\mathrm{core}}|\le \tfrac{l}{8}(|\widehat U|+|\widehat V|)]
        \ge \frac{2}{l^2}.
    \]
    Averaging over the good event therefore gives
    \[
        \E_{\bm{\widehat\pi}\sim \mathcal R_{\mathrm{SC},\widehat I}(\bm{\mathcal S}_{\mathrm{core}})}
        [\val_{\widehat I}(\bm{\widehat\pi})]
        \ge \frac34\cdot \frac{2}{l^2}
        \ge \frac{1}{g_*}.
    \]
    Applying the projection identity from \Cref{lem:lc-balance} and then extending back to the isolated vertices shows that $A$ is a valid algorithm for $\mathsf{LabelCover}_{1/g_*}$ on the hard label cover family.

    \paragraph{Sensitivity transfer.}
    By \Cref{thm:intro-label-cover}, the induced label cover algorithm $A$ has swap sensitivity at least $\Omega(n_{\mathrm{LC}}^{\delta_{\mathrm{LC}}})$.
    Fix a neighboring witness pair $I,\tilde I$ from \Cref{lem:neighboring-witness}.
    Because a source swap changes only one predicate or one projection and never the underlying graph, the instances $I^\circ$ and $\tilde I^\circ$ have the same core graph, hence the same parameters $\Delta,K,R_*,K_*,N_{\mathrm{core}},\rho$, and the same padding gadget.
    For a single core copy, the transformation distance under one source swap is bounded by
    \[
        C_T = |\Sigma_U|\,\Delta\,K_*\,|B_{\mathrm{sys}}| \le \Lambda_*.
    \]
    Repeating the core $\rho$ times multiplies this by $\rho$, so
    \[
        \SwapDist(I_{\mathrm{pad}},\tilde I_{\mathrm{pad}}) \le \rho C_T.
    \]

    Let $\operatorname{Rec}^{\oplus}_I$ denote the recovery that deletes the dedicated padding sets, chooses a uniform copy, applies $\mathcal R_{\mathrm{SC},\widehat I}$ on that copy, projects via $\operatorname{Proj}_{I^\circ}$, and then extends by the fixed default labels on the isolated vertices.
    For any two subfamilies on the common index set of $I_{\mathrm{pad}}$, couple the random copy choice identically.
    The fixed-target Lipschitz bounds from \Cref{lem:sc-recovery,lem:lc-balance} then give
    \[
        \EMD\bigl(\operatorname{Rec}^{\oplus}_I(\mathcal S),\operatorname{Rec}^{\oplus}_I(\widetilde{\mathcal S})\bigr)
        \le \frac{1}{\rho}\Ham(\mathcal S,\widetilde{\mathcal S}),
    \]
    so the repeated-copy recovery has Lipschitz constant $C_R=1/\rho$ for the fixed source instance.
    Under one source swap, the same coupling of the random copy choice together with the predicate-swap / projection-swap analysis for one core copy shows that the recovery map still changes by at most $D=1$.
    Hence \Cref{cor:sensitivity-pullback-lb} gives
    \[
        \SwapSens(A')
        \ge
        \Omega\left(\frac{n_{\mathrm{LC}}^{\delta_{\mathrm{LC}}}}{(\rho C_T)(1/\rho)}\right)
        =
        \Omega\left(\frac{n_{\mathrm{LC}}^{\delta_{\mathrm{LC}}}}{C_T}\right)
        \ge
        \Omega\left(\frac{n_{\mathrm{LC}}^{\delta_{\mathrm{LC}}}}{\Lambda_*}\right).
    \]
    Because $n_{\mathrm{LC}} = \Theta(n/\Lambda_*^2)$ and $\Lambda_* = n^{o(1)}$, the right-hand side is at least $\Omega(n^\delta)$ after shrinking $\delta$ once more if necessary.
    This proves the claimed sensitivity lower bound at ground-set size $n$.
\end{proof}
\section{Dominating Set}\label{sec:dominating-set}
This section proves the two main dominating-set sensitivity lower bounds: the IKW-based bound \Cref{thm:intro-dominating-set-IKW} and the Dinur--Harsha-based bound \Cref{thm:intro-dominating-set}.
We first isolate the common reduction from set cover to dominating set and then specialize it in the two approximation regimes.

\paragraph{Parameter guide.}
In the lower-bound proofs below, $N_{\mathrm{SC}}$ denotes the size of the source set-cover universe, $m_{\mathrm{SC}}:=|\mathcal F|$ the number of sets, $n_{\mathrm{DS}}$ the target number of vertices, and $\Gamma$ the parameter bounding both set size and element frequency in the dominating-set reduction.
Bare $n$ is reserved for theorem statements.

\subsection{Reduction from Set Cover to Dominating Set}

Fix a parameter $\Gamma \ge 1$.
We begin with the classical reduction from set cover to dominating set due to~\cite{chlebik2008approximation}; we refer to it as $\mathcal T_{\mathrm{DS}}$.
Given a set cover instance $I = (U = \{u_1,\ldots,u_{N_{\mathrm{SC}}}\},\mathcal F=(S_1,\ldots,S_{m_{\mathrm{SC}}}))$ whose maximum set size and maximum element frequency are at most $\Gamma$, form the bipartite incidence graph $G_0 = (U \cup \mathcal F, E_0)$ with $E_0 = \{(u,S) \mid u \in S\}$.
Introduce an additional set
\[
    W=\{w_1,\ldots,w_{\lceil m_{\mathrm{SC}}/\Gamma \rceil}\}
\]
of helper vertices and connect each set vertex $S_i$ to the helper vertex $w_{\lceil i/\Gamma \rceil}$.
Then every helper vertex has degree at most $\Gamma$, and the choice of helper edges is deterministic once the index set $[m_{\mathrm{SC}}]$ and the parameter $\Gamma$ are fixed.
Let $G = \mathcal T_{\mathrm{DS}}(I)$ denote the resulting graph.
It is bipartite and satisfies
\[
    \Delta(G)\le \Gamma+1,
    \qquad
    |V(G)| = N_{\mathrm{SC}} + m_{\mathrm{SC}} + \left\lceil \frac{m_{\mathrm{SC}}}{\Gamma}\right\rceil.
\]

Let $\mathrm{sc}(I)$ be the optimum set cover size of $I$, and let $\mathrm{ds}(G)$ be the optimum dominating set size of $G$.
For approximation guarantees we only apply this reduction to feasible set-cover instances, and the hard instances from \Cref{sec:set-cover} are feasible.
For the later sensitivity pullback, however, it is convenient to keep the recovery map defined on arbitrary neighboring instances as well.
For every set-cover instance $I=(U,\mathcal F)$ and every element $u\in U$, define
\[
    \sigma_I(u)\in [m_{\mathrm{SC}}]\cup\{\bot\}
\]
by a fixed deterministic rule: if $u$ belongs to at least one set, let $\sigma_I(u)$ be the smallest index $i$ with $u\in S_i$; otherwise set $\sigma_I(u):=\bot$.
When $I$ is feasible, $\sigma_I(u)\in [m_{\mathrm{SC}}]$ for every $u\in U$.
Define the recovery map $\mathcal R_{\mathrm{DS},I}$ on vertex subsets $D\subseteq V(G)$ by
\[
    \mathcal R_{\mathrm{DS},I}(D)
    :=
    \{ i\in [m_{\mathrm{SC}}] : S_i\in D\}
    \cup
    \{ \sigma_I(u) : u\in D\cap U,\ \sigma_I(u)\neq \bot\}.
\]
In words, we keep all selected set-vertices, add one fixed incident set for each selected element-vertex when such a set exists, and ignore helper vertices.

\begin{lemma}\label{lem:ds-recovery}
    Let $I=(U,\mathcal F)$ be a set-cover instance and let $G=\mathcal T_{\mathrm{DS}}(I)$.
    Then the following hold.
    \begin{itemize}
        \item If $I$ is feasible and $D$ is a dominating set of $G$, then $\mathcal R_{\mathrm{DS},I}(D)$ is a set cover of $I$.
        \item For every vertex subset $D\subseteq V(G)$,
        $|\mathcal R_{\mathrm{DS},I}(D)| \le |D|$.
        \item Define $\phi_I(S_i)=i$, $\phi_I(u)=\sigma_I(u)$, and $\phi_I(w)=\bot$.
        Then $\mathcal R_{\mathrm{DS},I}(D)=\phi_I(D)\setminus\{\bot\}$.
        Consequently, for any two vertex subsets $D,\tilde D\subseteq V(G)$,
        \[
            \mathcal R_{\mathrm{DS},I}(D)\triangle \mathcal R_{\mathrm{DS},I}(\tilde D)
            \subseteq
            \phi_I(D\triangle \tilde D),
        \]
        and therefore
        $|\mathcal R_{\mathrm{DS},I}(D)\triangle \mathcal R_{\mathrm{DS},I}(\tilde D)|
            \le |D\triangle \tilde D|$.
        \item If $I$ and $\tilde I$ differ by toggling exactly one membership relation $(u_0,S_j)$, and the designated choices $\sigma_I,\sigma_{\tilde I}$ are defined by the same deterministic rule, then for any fixed vertex subset $D$ we have
        \[
            |\mathcal R_{\mathrm{DS},I}(D)\triangle \mathcal R_{\mathrm{DS},\tilde I}(D)| \le 2.
        \]
    \end{itemize}
\end{lemma}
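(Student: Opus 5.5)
The plan is to verify the four bullets directly from the definitions of $\mathcal T_{\mathrm{DS}}$, $\sigma_I$, and $\phi_I$, in the order stated.

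\emph{Cover property.} Let $I$ be feasible and $D$ a dominating set of $G$. Fix an element $u\in U$. In $G$, the neighbours of $u$ are exactly the set-vertices $S_i$ with $u\in S_i$; helper vertices are adjacent only to set-vertices. Since $D$ dominates $u$, either $u\in D$ or some $S_i\ni u$ lies in $D$.
\begin{itemize}
    \item In the second case, $i\in\mathcal R_{\mathrm{DS},I}(D)$ and $S_i$ covers $u$.
    \item In the first case, feasibility gives $\sigma_I(u)\neq\bot$ with $u\in S_{\sigma_I(u)}$, and $\sigma_I(u)$ is added to $\mathcal R_{\mathrm{DS},I}(D)$.
\end{itemize}
In both cases $u$ is covered. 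Helper vertices impose no constraint on the recovered family, so they are ignored.

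\emph{Size bound and Lipschitz bound.} First check that $\mathcal R_{\mathrm{DS},I}(D)=\phi_I(D)\setminus\{\bot\}$ by unfolding the definitions. The image of a set under a map has at most as many elements as the set, so $|\mathcal R_{\mathrm{DS},I}(D)|\le|\phi_I(D)|\le|D|$.

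For the symmetric difference, let $i\in\phi_I(D)\setminus\phi_I(\tilde D)$ with $i\neq\bot$. Then $i$ has a preimage in $D$, and no preimage of $i$ lies in $\tilde D$. That preimage is therefore in $D\setminus\tilde D$, so $i\in\phi_I(D\triangle\tilde D)$. The symmetric case is identical. Taking cardinalities gives $|\mathcal R_{\mathrm{DS},I}(D)\triangle\mathcal R_{\mathrm{DS},I}(\tilde D)|\le|D\triangle\tilde D|$.

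\emph{Swap drift.} Here the graph changes, but the vertex set $U\cup\mathcal F\cup W$ and the helper edges do not. A toggle of the relation $(u_0,S_j)$ leaves $\sigma_I(u)=\sigma_{\tilde I}(u)$ for every $u\neq u_0$, since each such $\sigma$ depends only on the memberships of $u$. The set-vertex part $\{i:S_i\in D\}$ is identical for the two instances.

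Hence the two recoveries differ only through the contribution of $u_0$, and only when $u_0\in D$. In that case one recovery contains the contribution $\{\sigma_I(u_0)\}\setminus\{\bot\}$ and the other contains $\{\sigma_{\tilde I}(u_0)\}\setminus\{\bot\}$. Replacing one possible element by another changes the symmetric difference by at most $2$.

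The only subtle point is that an index $\sigma_I(u_0)$ might also be contributed by a selected set-vertex or by another element. That only shrinks the difference, because removing or adding a single contribution to a union changes it by at most one element each.
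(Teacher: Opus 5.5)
Your proposal is correct and follows the paper's proof essentially step by step. It uses the same two-case domination argument for the cover property, the same identity $\mathcal R_{\mathrm{DS},I}(D)=\phi_I(D)\setminus\{\bot\}$ for the size and Lipschitz bounds, and the same observation that a single toggle can only change $\sigma(u_0)$, so at most one index is inserted, deleted, or replaced. Your preimage argument for $\mathcal R_{\mathrm{DS},I}(D)\triangle \mathcal R_{\mathrm{DS},I}(\tilde D)\subseteq\phi_I(D\triangle\tilde D)$ is slightly more explicit than the paper, which simply asserts that inclusion.
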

\begin{proof}
    Assume first that $I$ is feasible, let $D$ be a dominating set of $G$, and fix any element $u\in U$.
    If $u\in D$, then the set index $\sigma_I(u)$ belongs to $\mathcal R_{\mathrm{DS},I}(D)$ and covers $u$.
    If $u\notin D$, then because $D$ dominates $u$ and only set-vertices are adjacent to $u$, some selected set-vertex $S_i\in D$ contains $u$, so $i\in \mathcal R_{\mathrm{DS},I}(D)$ and $u$ is covered.
    Thus $\mathcal R_{\mathrm{DS},I}(D)$ is a set cover.

    Every selected set-vertex contributes at most one set index, every selected element-vertex contributes at most one set index, and helper vertices contribute nothing.
    Duplicates only decrease the image size, so $|\mathcal R_{\mathrm{DS},I}(D)|\le |D|$.

    The identity $\mathcal R_{\mathrm{DS},I}(D)=\phi_I(D)\setminus\{\bot\}$ is immediate from the definition of $\phi_I$.
    Now fix another vertex subset $\tilde D\subseteq V(G)$.
    Therefore
    \[
        \mathcal R_{\mathrm{DS},I}(D)\triangle \mathcal R_{\mathrm{DS},I}(\tilde D)
        \subseteq
        \phi_I(D\triangle \tilde D),
    \]
    which implies
    \[
        |\mathcal R_{\mathrm{DS},I}(D)\triangle \mathcal R_{\mathrm{DS},I}(\tilde D)|
        \le |D\triangle \tilde D|.
    \]

    If $I$ and $\tilde I$ differ by toggling one membership relation $(u_0,S_j)$, then the deterministic rule can change only the designated choice for $u_0$; all other values of $\phi_I$ and $\phi_{\tilde I}$ agree.
    Hence, for a fixed $D$, the image of $D$ can change only through the contribution of $u_0$, and at worst one set index is inserted, deleted, or replaced by another.
    Therefore
    $|\mathcal R_{\mathrm{DS},I}(D)\triangle \mathcal R_{\mathrm{DS},\tilde I}(D)| \le 2$.
\end{proof}

In particular, for every feasible set-cover instance $I$,
$\mathrm{sc}(I) \le \mathrm{ds}(G) \le \mathrm{sc}(I) + |W| \le \mathrm{sc}(I) + \frac{|\mathcal F|}{\Gamma} + 1$.
The lower bound follows by applying $\mathcal R_{\mathrm{DS},I}$ to an optimal dominating set, while the upper bound is obtained from an optimal set cover by selecting the corresponding set-vertices and all helper vertices.

For later exact-size padding, we also fix the following gadget.
For integers $\Delta\ge 2$ and $t\ge 0$, let $\mathcal P_{\Delta,t}$ denote any fixed star forest on $t$ vertices whose component sizes are all at most $\Delta$ and whose dominating number is exactly $\lceil t/\Delta\rceil$; for example, write $t=a\Delta+b$ with $0\le b<\Delta$, take $a$ copies of $K_{1,\Delta-1}$, and if $b>0$ add one copy of $K_{1,b-1}$ (interpreting $K_{1,0}$ as an isolated vertex).

Whenever a target size $n_{\mathrm{DS}}$ is fixed and $n_{\mathrm{DS}}\ge |V(G)|$, write
\[
    t_I := n_{\mathrm{DS}}-|V(G)|,
    \qquad
    H_I := G \sqcup \mathcal P_{\Gamma+1,t_I}.
\]
Then $H_I$ has exactly $n_{\mathrm{DS}}$ vertices and maximum degree at most $\Gamma+1$.
Define the padded recovery map on vertex subsets $D\subseteq V(H_I)$ by
\[
    \mathcal R_{\mathrm{DS},I}^{\mathrm{pad}}(D)
    :=
    \mathcal R_{\mathrm{DS},I}(D\cap V(G)).
\]

\begin{lemma}\label{lem:ds-padded-recovery}
    Fix $\Gamma\ge 1$ and a target size $n_{\mathrm{DS}}$.
    Let $I=(U,\mathcal F)$ be a set-cover instance with maximum set size and maximum element frequency at most $\Gamma$, let $G=\mathcal T_{\mathrm{DS}}(I)$, and let $H_I$ and $\mathcal R_{\mathrm{DS},I}^{\mathrm{pad}}$ be defined as above.
    Then the following hold.
    \begin{itemize}
        \item If $I$ is feasible and $D$ is a dominating set of $H_I$, then $\mathcal R_{\mathrm{DS},I}^{\mathrm{pad}}(D)$ is a set cover of $I$ and
        \[
            |\mathcal R_{\mathrm{DS},I}^{\mathrm{pad}}(D)| \le |D|.
        \]
        \item For any two vertex subsets $D,\tilde D\subseteq V(H_I)$,
        \[
            |\mathcal R_{\mathrm{DS},I}^{\mathrm{pad}}(D)\triangle \mathcal R_{\mathrm{DS},I}^{\mathrm{pad}}(\tilde D)|
            \le
            |D\triangle \tilde D|.
        \]
        \item If $I=(U,\mathcal F)$ and $\tilde I=(U,\tilde{\mathcal F})$ differ by toggling exactly one membership relation, and the designated choices $\sigma_I,\sigma_{\tilde I}$ are defined by the same deterministic rule, then for any fixed vertex subset $D$ on the common vertex set,
        \[
            |\mathcal R_{\mathrm{DS},I}^{\mathrm{pad}}(D)\triangle \mathcal R_{\mathrm{DS},\tilde I}^{\mathrm{pad}}(D)| \le 2.
        \]
        \item If $I=(U,\mathcal F)$ and $\tilde I=(U,\tilde{\mathcal F})$ differ by toggling exactly one membership relation, then $H_I$ and $H_{\tilde I}$ live on a common vertex set and differ by exactly one edge.
    \end{itemize}
\end{lemma}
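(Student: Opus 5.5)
The plan is to reduce everything to \Cref{lem:ds-recovery}, using the observation that the padding star forest $\mathcal P_{\Gamma+1,t_I}$ is a separate connected component of $H_I$. The recovery $\mathcal R^{\mathrm{pad}}_{\mathrm{DS},I}$ only reads $D\cap V(G)$. So I would first record that if $D$ dominates $H_I$, then $D\cap V(G)$ dominates $G$. The reason is that no vertex of $G$ has a neighbor in the padding gadget, so every vertex of $G$ outside $D$ must have a neighbor in $D\cap V(G)$.

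The first bullet of \Cref{lem:ds-recovery} then shows that $\mathcal R_{\mathrm{DS},I}(D\cap V(G))$ is a set cover of the feasible instance $I$. The second bullet of \Cref{lem:ds-recovery} gives the size bound
\[
    |\mathcal R^{\mathrm{pad}}_{\mathrm{DS},I}(D)| \le |D\cap V(G)| \le |D|.
\]

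For the Lipschitz bullet, I would apply the third bullet of \Cref{lem:ds-recovery} to the restricted sets $D\cap V(G)$ and $\tilde D\cap V(G)$. Intersection with $V(G)$ commutes with symmetric difference:
\[
    (D\cap V(G))\triangle(\tilde D\cap V(G)) = (D\triangle\tilde D)\cap V(G),
\]
and the cardinality of the right-hand side is at most $|D\triangle \tilde D|$.

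For the drift bullet, I first need $I$ and $\tilde I$ to live on a common vertex set, so that $D$ means the same thing for both. One membership toggle leaves $N_{\mathrm{SC}}$, $m_{\mathrm{SC}}$ and $\Gamma$ unchanged. Hence $|V(G)|$, $t_I=t_{\tilde I}$ and the padding gadget are identical, and the vertex sets $U\cup\mathcal F\cup W$ of $G$ and $\tilde G$ coincide. Once that is established, the restriction $D\cap V(G)$ is the same set for both instances, and the fourth bullet of \Cref{lem:ds-recovery} gives the bound $2$.

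For the last bullet, I would compare edge sets directly. Helper edges $S_i\text{--}w_{\lceil i/\Gamma\rceil}$ depend only on $m_{\mathrm{SC}}$ and $\Gamma$, and the padding forest depends only on $\Gamma$ and $t_I$, so both are shared. The incidence edges are exactly $\{(u,S_i): u\in S_i\}$, so toggling the single membership $(u_0,S_j)$ adds or removes exactly the edge $(u_0,S_j)$.

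The only point needing care is this common-vertex-set issue, namely that $t_I$ and $|W|$ do not depend on the memberships. It is immediate from the formula for $|V(G)|$, so I expect no real obstacle.
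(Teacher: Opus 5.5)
Your proposal is correct and follows the paper's proof essentially step for step: restrict to $D\cap V(G)$, note that no edges join $G$ to the padding forest, and invoke the four bullets of \Cref{lem:ds-recovery}, with the common-vertex-set and one-edge claims coming from the fact that $|W|$, $t_I$, and the padding gadget depend only on $N_{\mathrm{SC}}$, $m_{\mathrm{SC}}$, and $\Gamma$. Your justification that $D\cap V(G)$ dominates $G$ is slightly more accurate than the paper's: you use only the absence of edges between $G$ and the padding, whereas the paper says $G$ is a connected component, and $G$ need not be connected.
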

\begin{proof}
    Write $P:=\mathcal P_{\Gamma+1,t_I}$ for the padding forest, so $H_I=G\sqcup P$.
    If $D$ dominates $H_I$, then $D\cap V(G)$ dominates $G$ because $G$ is a connected component of the disjoint union.
    Hence \Cref{lem:ds-recovery} implies that $\mathcal R_{\mathrm{DS},I}^{\mathrm{pad}}(D)$ is a set cover of $I$.
    Moreover,
    \[
        |\mathcal R_{\mathrm{DS},I}^{\mathrm{pad}}(D)|
        =
        |\mathcal R_{\mathrm{DS},I}(D\cap V(G))|
        \le
        |D\cap V(G)|
        \le
        |D|.
    \]

    For any two vertex subsets $D,\tilde D\subseteq V(H_I)$, deleting the padding component cannot increase symmetric difference, so
    \[
        |(D\cap V(G))\triangle (\tilde D\cap V(G))|
        \le
        |D\triangle \tilde D|.
    \]
    Applying \Cref{lem:ds-recovery} to the core graph $G$ gives
    \[
        |\mathcal R_{\mathrm{DS},I}^{\mathrm{pad}}(D)\triangle \mathcal R_{\mathrm{DS},I}^{\mathrm{pad}}(\tilde D)|
        \le
        |(D\cap V(G))\triangle (\tilde D\cap V(G))|
        \le
        |D\triangle \tilde D|.
    \]

    If $I$ and $\tilde I$ differ by toggling one membership relation on the same ground set and indexed family, then the two core graphs share the same vertex set $U\cup \mathcal F\cup W$.
    Since the indexed family size is unchanged, the same padding forest is attached in both padded graphs.
    Hence for any fixed $D$ on the common vertex set,
    \[
        |\mathcal R_{\mathrm{DS},I}^{\mathrm{pad}}(D)\triangle \mathcal R_{\mathrm{DS},\tilde I}^{\mathrm{pad}}(D)|
        =
        |\mathcal R_{\mathrm{DS},I}(D\cap V(G))\triangle \mathcal R_{\mathrm{DS},\tilde I}(D\cap V(G))|
        \le 2
    \]
    by \Cref{lem:ds-recovery}.
    Under the same hypothesis, the core graphs differ by exactly one incidence edge and the padding forest is identical, so $H_I$ and $H_{\tilde I}$ differ by exactly one edge.
\end{proof}

\subsection{IKW-Based $O(\log \Delta)$ Bound}

\begin{theorem}\label{thm:intro-dominating-set-IKW}
    There exist an absolute constant $C>0$, an integer $\Delta_0\ge 2$, and an absolute constant $c_0>0$ such that for all sufficiently large $n$ and every integer $\Delta\ge \Delta_0$ satisfying
    \[
        \log \Delta\cdot (\log\log \Delta)^3 \le c_0 \log n,
    \]
    any algorithm for the dominating set problem on graphs with $n$ vertices and maximum degree at most $\Delta$ with approximation ratio
    \[
        C\log \Delta
    \]
    requires sensitivity at least
    \[
        2^{\Omega\!\left(\frac{\log n}{(\log\log \Delta)^3}\right)}.
    \]
\end{theorem}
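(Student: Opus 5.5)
The plan is to combine \Cref{thm:app-set-cover-IKW} with the set-cover-to-dominating-set reduction $\mathcal T_{\mathrm{DS}}$ and its padded recovery $\mathcal R^{\mathrm{pad}}_{\mathrm{DS},I}$ from \Cref{lem:ds-padded-recovery}, and to transfer sensitivity through \Cref{lem:sensitivity-pullback}/\Cref{cor:sensitivity-pullback-lb}.

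\textbf{Parameters.} Given $n=n_{\mathrm{DS}}$ and $\Delta$, set $\Gamma:=\Delta-1$ and choose the set-cover ground-set size $N_{\mathrm{SC}}:=\lfloor n/C_1\rfloor$ for a large absolute constant $C_1$. This choice guarantees that $|V(\mathcal T_{\mathrm{DS}}(J))|=N_{\mathrm{SC}}+m_{\mathrm{SC}}+\lceil m_{\mathrm{SC}}/\Gamma\rceil\le n$, using the bound $m_{\mathrm{SC}}=O(N_{\mathrm{SC}})$ from \Cref{thm:app-set-cover-IKW}. Since the regime hypothesis is unchanged up to constants ($\log N_{\mathrm{SC}}=\Theta(\log n)$ and $\log\Gamma=\Theta(\log\Delta)$), \Cref{thm:app-set-cover-IKW} applies and supplies hard feasible instances with set size and frequency at most $\Gamma$. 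Every padded graph $H_J$ then has exactly $n$ vertices and maximum degree at most $\Gamma+1=\Delta$.

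\textbf{Approximation transfer.} Suppose $A_{\mathrm{DS}}$ is a $C\log\Delta$-approximation for dominating set. Define the set-cover algorithm $A_{\mathrm{SC}}(J):=\mathcal R^{\mathrm{pad}}_{\mathrm{DS},J}(A_{\mathrm{DS}}(H_J))$, which by \Cref{lem:ds-padded-recovery} outputs a cover of size at most $|D|$. The optimum bound is
\[
\mathrm{ds}(H_J)\le \mathrm{sc}(J)+m_{\mathrm{SC}}/\Gamma+1+\lceil t_J/\Delta\rceil .
\]
The main obstacle is that the additive helper and padding terms, of order $n/\Gamma$, must be comparable to $\mathrm{sc}(J)$. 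In the IKW core, $\mathrm{sc}\ge$ the number of label-cover vertices, roughly $N_{\mathrm{core}}/|B|\cdot(\text{degree})^{-1}$, which is about $N_{\mathrm{SC}}/\Gamma$, since $|B|\approx\Gamma/D_\Gamma$ and each core copy has $N_{\mathrm{core}}/s_{\mathrm{core}}$ essentially equal to its vertex count. The padding sets in $I_{\mathrm{pad}}$ are also forced into any cover, contributing $r\ge t/\Gamma$. I would therefore first prove the lower bound $\mathrm{sc}(J)\ge c\,N_{\mathrm{SC}}/\Gamma$ for the hard instances. One way is to inspect Step~4 of the IKW set-cover proof: there $N_{\mathrm{core}}\le n_{\mathrm{LC}}\Gamma/16$, every set has size at most $\Gamma$, so any cover has at least $N_{\mathrm{SC}}/\Gamma$ sets. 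This holds simply because every set has size at most $\Gamma$, so $\mathrm{sc}(J)\ge N_{\mathrm{SC}}/\Gamma$ always. Consequently $\mathrm{ds}(H_J)\le O(C_1)\,\mathrm{sc}(J)$, since $m_{\mathrm{SC}}=O(N_{\mathrm{SC}})$ and $t_J\le n$. Hence $A_{\mathrm{SC}}$ is an $O(C_1C)\log\Delta$-approximation, and choosing $C$ small relative to $C_{\mathrm{SC}}$ makes it admissible for \Cref{thm:app-set-cover-IKW}.

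\textbf{Sensitivity transfer.} By \Cref{thm:app-set-cover-IKW}, there are neighboring $J_0,J_1$, differing in one membership, with $\EMD(A_{\mathrm{SC}}(J_0),A_{\mathrm{SC}}(J_1))\ge 2^{\Omega(\log n/(\log\log\Delta)^3)}$. By \Cref{lem:ds-padded-recovery}, $H_{J_0}$ and $H_{J_1}$ differ by one edge, so $C_T=1$ in the graph-edit sensitivity notion. The recovery is $1$-Lipschitz, so $C_R=1$, and its drift is $D\le 2$. Applying \Cref{lem:sensitivity-pullback} gives
\[
\EMD(A_{\mathrm{DS}}(H_{J_0}),A_{\mathrm{DS}}(H_{J_1}))\ge 2^{\Omega(\log n/(\log\log\Delta)^3)}-2 .
\]
Absorbing the constant $2$ yields the stated bound.
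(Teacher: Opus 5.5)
Your proposal is correct and follows essentially the same route as the paper. The shared steps are: $\Gamma=\Delta-1$ and $N_{\mathrm{SC}}=\Theta(n)$; the trivial bound $\mathrm{sc}(J)\ge N_{\mathrm{SC}}/\Gamma$ (sets have size at most $\Gamma$), which absorbs the helper and star-forest padding terms into $O(1)\cdot\mathrm{sc}(J)$; and the pullback through $\mathcal R^{\mathrm{pad}}_{\mathrm{DS}}$ with $C_T=C_R=1$, $D=2$. Your detour through the IKW core structure is unnecessary, as you yourself conclude.

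The one detail to add is that \Cref{thm:app-set-cover-IKW} requires $A_{\mathrm{SC}}$ to be a $C_{\mathrm{SC}}\log\Gamma$-approximation on \emph{every} feasible instance of ground-set size $N_{\mathrm{SC}}$. Your $A_{\mathrm{DS}}(H_J)$ carries no guarantee when $J$ has sets larger than $\Gamma$, frequencies above $\Gamma$, or too many sets, since then $H_J$ is not an $n$-vertex graph of maximum degree at most $\Delta$. The paper fixes this by letting $A_{\mathrm{SC}}$ output an optimal cover on such instances. This is harmless, because the sensitivity bound is non-computational and the witnesses $J_0,J_1$ lie in the bounded class.
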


Set $\Gamma:=\Delta-1$.
Apply \Cref{thm:app-set-cover-IKW} with this value of $\Gamma$, and then compose it with the common set-cover $\to$ dominating-set reduction above.
This is exactly the IKW label-cover $\to$ set-cover $\to$ dominating-set route, where $\Gamma=\Delta-1$ is the bound on maximum set size and element frequency in the set-cover step.
Combined with the sensitivity-to-locality transfer theorem, it also yields \Cref{thm:intro-locality-dominating-set-IKW}.
For larger $\Delta$, the locality lower bound in \Cref{thm:intro-locality-dominating-set-IKW} is trivial because its right-hand side is already below $1$.

\begin{proof}[Proof of \Cref{thm:intro-dominating-set-IKW}]
    Let $C_{\mathrm{SC}},c_0^{\mathrm{SC}}>0$ and $\Gamma_0\ge 2$ be the constants from \Cref{thm:app-set-cover-IKW}.
    Fix a sufficiently large integer $n$ and an integer $\Delta\ge \Delta_0$, where $\Delta_0:=\Gamma_0+1$, satisfying
    \[
        \log \Delta\cdot (\log\log \Delta)^3 \le c_0 \log n
    \]
    for a sufficiently small absolute constant $c_0\le c_0^{\mathrm{SC}}/2$.
    Set
    \[
        \Gamma := \Delta-1.
    \]
    Assume there exists an absolute constant $C_{\text{approx}}>0$ and an algorithm $A'$ for dominating set on $n$-vertex graphs of maximum degree at most $\Delta$ such that
    \[
        r(\Delta):=C_{\text{approx}}\log \Delta.
    \]
    Moreover, for every such graph $G$, the output $A'(G)$ is a dominating set almost surely and
    \[
        \E[|A'(G)|]\le r(\Delta)\,\mathrm{ds}(G).
    \]

    \paragraph{Step 1: The bounded set-cover family on which we run the reduction.}
    Fix an absolute constant $C_{|\mathcal F|}\ge 1$ that witnesses the $|\mathcal F|=O(N_{\mathrm{SC}})$ bound from \Cref{thm:app-set-cover-IKW}, and define
    \[
        C_{\mathrm{pad}} := 8(C_{|\mathcal F|}+1),
        \qquad
        C_{\mathrm{lift}}^{\mathrm{DS}} := C_{|\mathcal F|}+2C_{\mathrm{pad}}+3,
        \qquad
        N_{\mathrm{SC}} := \left\lfloor \frac{n}{C_{\mathrm{pad}}}\right\rfloor.
    \]
    Since $N_{\mathrm{SC}}=\Theta(n)$, after decreasing $c_0$ further if necessary we also have
    \[
        \log \Gamma\cdot (\log\log \Gamma)^3 \le c_0^{\mathrm{SC}}\log N_{\mathrm{SC}}.
    \]
    Let $\mathcal Y_{N_{\mathrm{SC}},\Gamma}^{\mathrm{SC}}$ denote the family of set-cover instances
    \[
        J=(U,\mathcal F)
    \]
    satisfying
    \[
        |U|=N_{\mathrm{SC}},
        \qquad
        |\mathcal F|\le C_{|\mathcal F|}N_{\mathrm{SC}},
        \qquad
        \text{maximum set size}\le \Gamma,
        \qquad
        \text{maximum frequency}\le \Gamma.
    \]

    \paragraph{Step 2: Exact graph size and degree.}
    Fix the target size $n_{\mathrm{DS}}:=n$.
    For any instance $J=(U,\mathcal F)\in \mathcal Y_{N_{\mathrm{SC}},\Gamma}^{\mathrm{SC}}$, write $m_{\mathrm{SC}}:=|\mathcal F|$ and form the core graph
    \[
        G_J := \mathcal T_{\mathrm{DS}}(J)
    \]
    with parameter $\Gamma$.
    The common reduction facts above give
    \[
        \Delta(G_J)\le \Gamma+1=\Delta,
        \qquad
        n_{\mathrm{core}}
        :=
        |V(G_J)|
        =
        N_{\mathrm{SC}} + m_{\mathrm{SC}} + \left\lceil \frac{m_{\mathrm{SC}}}{\Gamma}\right\rceil.
    \]
    Since $m_{\mathrm{SC}}\le C_{|\mathcal F|}N_{\mathrm{SC}}$, we get
    \[
        n_{\mathrm{core}}
        \le
        (1+2C_{|\mathcal F|})N_{\mathrm{SC}}+1.
    \]
    By the choice of $C_{\mathrm{pad}}$ and $N_{\mathrm{SC}}=\lfloor n/C_{\mathrm{pad}}\rfloor$, this implies
    \[
        n_{\mathrm{core}}\le \frac{n}{2}
    \]
    for all sufficiently large $n$.
    Let
    \[
        t := n_{\mathrm{DS}}-n_{\mathrm{core}}
    \]
    and let
    \[
        H_J := G_J \sqcup \mathcal P_{\Delta,t}.
    \]
    Then every such $H_J$ has exactly $n_{\mathrm{DS}}=n$ vertices and maximum degree at most $\Delta$.

    \paragraph{Step 3: Approximation transfer.}
    We define a set-cover algorithm $A$ on all instances of ground-set size $N_{\mathrm{SC}}$ as follows.
    If
    \[
        J\in \mathcal Y_{N_{\mathrm{SC}},\Gamma}^{\mathrm{SC}},
    \]
    then construct $H_J$, run $A'$ on $H_J$, and apply the padded recovery map $\mathcal R_{\mathrm{DS},J}^{\mathrm{pad}}$ to the output.
    If $J\notin \mathcal Y_{N_{\mathrm{SC}},\Gamma}^{\mathrm{SC}}$ and is feasible, let $A(J)$ be an optimal set cover; if $J$ is infeasible, define $A(J)$ arbitrarily.

    We now verify that $A$ is a valid approximation algorithm.
    Let $J$ be a feasible input of ground-set size $N_{\mathrm{SC}}$.
    If $J\notin \mathcal Y_{N_{\mathrm{SC}},\Gamma}^{\mathrm{SC}}$, then $A(J)$ is optimal by definition.
    Assume henceforth that
    \[
        J\in \mathcal Y_{N_{\mathrm{SC}},\Gamma}^{\mathrm{SC}}.
    \]
    Since every set in $J$ has size at most $\Gamma$,
    \[
        \mathrm{sc}(J)\ge \frac{N_{\mathrm{SC}}}{\Gamma}.
    \]
    Therefore
    \[
        \frac{m_{\mathrm{SC}}}{\Gamma}+1
        \le
        \frac{C_{|\mathcal F|}N_{\mathrm{SC}}}{\Gamma}+1
        \le
        (C_{|\mathcal F|}+1)\mathrm{sc}(J).
    \]
    Also,
    \[
        \mathrm{ds}(\mathcal P_{\Delta,t})
        =
        \left\lceil \frac{t}{\Delta}\right\rceil
        \le
        \frac{n}{\Delta}+1.
    \]
    Since $n\le 2C_{\mathrm{pad}}N_{\mathrm{SC}}$ for all sufficiently large $n$ and $\Delta=\Gamma+1$, this gives
    \[
        \mathrm{ds}(\mathcal P_{\Delta,t})
        \le
        \frac{2C_{\mathrm{pad}}N_{\mathrm{SC}}}{\Gamma+1}+1
        \le
        (2C_{\mathrm{pad}}+1)\mathrm{sc}(J).
    \]
    By \Cref{lem:ds-recovery},
    \[
        \mathrm{ds}(G_J)
        \le
        \mathrm{sc}(J)+\frac{m_{\mathrm{SC}}}{\Gamma}+1.
    \]
    Combining the last three displays yields
    \[
        \mathrm{ds}(H_J)
        =
        \mathrm{ds}(G_J)+\mathrm{ds}(\mathcal P_{\Delta,t})
        \le
        C_{\mathrm{lift}}^{\mathrm{DS}}\mathrm{sc}(J).
    \]
    Since $A'(H_J)$ is a dominating set of $H_J$ almost surely, \Cref{lem:ds-padded-recovery} implies that $A(J)$ is a set cover of $J$ almost surely and
    \[
        \E[|A(J)|]
        \le
        \E[|A'(H_J)|]
        \le
        r(\Delta)\,\mathrm{ds}(H_J)
        \le
        C_{\mathrm{lift}}^{\mathrm{DS}}\,r(\Delta)\,\mathrm{sc}(J).
    \]

    Choosing $C_{\text{approx}}$ sufficiently small (as an absolute constant depending only on $C_{\mathrm{SC}}$ and $C_{\mathrm{lift}}^{\mathrm{DS}}$) ensures
    \[
        C_{\mathrm{lift}}^{\mathrm{DS}}\,r(\Delta)
        \le
        C_{\mathrm{SC}}\log \Gamma
    \]
    for all sufficiently large $n$.
    Therefore $A$ is a valid set-cover algorithm on ground-set size $N_{\mathrm{SC}}$.

    Applying \Cref{thm:app-set-cover-IKW} to $A$, we obtain neighboring instances
    \[
        J_0,J_1
    \]
    of ground-set size $N_{\mathrm{SC}}$ such that
    \[
        |\mathcal F(J_0)|=|\mathcal F(J_1)|=O(N_{\mathrm{SC}}),
    \]
    every set in $J_0$ and $J_1$ has size at most $\Gamma$, every element has frequency at most $\Gamma$, and
    \[
        \EMD(A(J_0),A(J_1))
        \ge
        L
        :=
        2^{\Omega(\log N_{\mathrm{SC}}/(\log\log \Gamma)^3)}.
    \]
    Hence, for all sufficiently large $n$, we have
    \[
        J_0,J_1\in \mathcal Y_{N_{\mathrm{SC}},\Gamma}^{\mathrm{SC}}.
    \]

    \paragraph{Step 4: Sensitivity transfer back to dominating set.}
    Because the witness path keeps the ground set and indexed family fixed, the neighboring instances $J_0$ and $J_1$ satisfy the hypotheses of \Cref{lem:ds-padded-recovery}.
    That lemma yields transformation constant $C_T=1$, fixed-source Lipschitz constant $C_R=1$, and one-toggle drift at most $D=2$ for the padded reduction.
    Hence \Cref{cor:sensitivity-pullback-lb} gives
    \[
        \SwapSens(A')
        \ge
        L-2.
    \]

    Finally, since $N_{\mathrm{SC}}=\Theta(n)$,
    \[
        L
        =
        2^{\Omega(\log n/(\log\log \Delta)^3)}.
    \]
    Because every padded graph $H_J$ has exactly $n$ vertices and maximum degree at most $\Delta$,
    this proves \Cref{thm:intro-dominating-set-IKW}.
\end{proof}

\subsection{Dinur--Harsha-Based Polylogarithmic-in-Degree Bound}

\begin{theorem}\label{thm:intro-dominating-set}
    There exist $\beta,\delta,\kappa \in (0,1)$, an absolute constant $C>0$, and an integer $\Delta_0\ge 2$ such that for all sufficiently large $n$ and every integer $\Delta$ satisfying
    \[
        \Delta_0 \le \Delta \le \exp(C(\log n)^\kappa),
    \]
    any $O((\log \Delta)^\beta)$-approximation algorithm for the dominating set problem on $n$-vertex graphs with maximum degree at most $\Delta$ requires sensitivity $\Omega(n^\delta)$.
\end{theorem}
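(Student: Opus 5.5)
The plan mirrors the proof of \Cref{thm:intro-dominating-set-IKW}, with \Cref{thm:intro-set-cover} in place of \Cref{thm:app-set-cover-IKW}. Set $\Gamma:=\Delta-1$. Given $\Delta$ in the stated range, choose the function $q$ for \Cref{thm:intro-set-cover} so that the degree bound $\exp(q^{c_{\mathrm{blowup}}^{\mathrm{SC}}})$ there equals $\Gamma$, i.e.
\[
    q(n_{\mathrm{SC}}) := \max\bigl\{2,(\log \Gamma)^{1/c_{\mathrm{blowup}}^{\mathrm{SC}}}\bigr\}.
\]
With $\kappa := B_{\log}^{\mathrm{SC}} c_{\mathrm{blowup}}^{\mathrm{SC}}$ (shrunk if needed so $\kappa<1$) and $C$ small, the hypothesis $\Delta\le \exp(C(\log n)^\kappa)$ gives $q\le (\log N_{\mathrm{SC}})^{B_{\log}^{\mathrm{SC}}}$ for $N_{\mathrm{SC}}=\Theta(n)$, so \Cref{thm:intro-set-cover} applies. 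It forbids $C_{\mathrm{approx}}^{\mathrm{SC}}\sqrt{q}=\Theta((\log\Gamma)^{1/(2c_{\mathrm{blowup}}^{\mathrm{SC}})})$-approximation at sensitivity $o(N_{\mathrm{SC}}^{\delta^{\mathrm{SC}}_{\mathrm{sens}}})$. We therefore take $\beta:=1/(2c_{\mathrm{blowup}}^{\mathrm{SC}})$ (so $\beta<1$ since $c_{\mathrm{blowup}}^{\mathrm{SC}}\ge 1$). When $\Delta$ is so small that $q=2$, the degree bound is a constant, which is absorbed by choosing $\Delta_0$ large.

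Next, run the same chain of steps as in the IKW proof. Fix $N_{\mathrm{SC}}:=\lfloor n/C_{\mathrm{pad}}\rfloor$ and define the bounded family $\mathcal Y^{\mathrm{SC}}_{N_{\mathrm{SC}},\Gamma}$ with $|\mathcal F|\le C_{|\mathcal F|}N_{\mathrm{SC}}$ and set size and frequency at most $\Gamma$. Build $H_J=\mathcal T_{\mathrm{DS}}(J)\sqcup\mathcal P_{\Delta,t}$ with exactly $n$ vertices and maximum degree at most $\Delta$. Define a set-cover algorithm $A$ by running the assumed $r(\Delta)=c(\log\Delta)^\beta$-approximate dominating-set algorithm $A'$ on $H_J$ and applying $\mathcal R^{\mathrm{pad}}_{\mathrm{DS},J}$; off the family, $A$ outputs an optimal cover. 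The lift estimate $\mathrm{ds}(H_J)\le C^{\mathrm{DS}}_{\mathrm{lift}}\,\mathrm{sc}(J)$ is derived exactly as before, using $\mathrm{sc}(J)\ge N_{\mathrm{SC}}/\Gamma$ to absorb the helper vertices and the padding forest. Hence $A$ is a $C^{\mathrm{DS}}_{\mathrm{lift}}r(\Delta)$-approximation, and this is at most $C_{\mathrm{approx}}^{\mathrm{SC}}\sqrt q$ once the hidden constant in $O((\log\Delta)^\beta)$ is small enough. The $O(\cdot)$ in the statement should be read with a sufficiently small absolute constant; to get an arbitrary constant instead, decrease $\beta$ slightly, so that the ratio $\sqrt q/(\log \Delta)^{\beta}\to\infty$.

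\Cref{thm:intro-set-cover} then yields sensitivity $\Omega(N_{\mathrm{SC}}^{\delta^{\mathrm{SC}}_{\mathrm{sens}}})$ for $A$. By \Cref{lem:neighboring-witness}, this is witnessed by a neighboring pair $J_0,J_1$. We must check that the witnessing instances lie in $\mathcal Y^{\mathrm{SC}}_{N_{\mathrm{SC}},\Gamma}$. The theorem guarantees $|\mathcal F|=O(N_{\mathrm{SC}})$ and the $\exp(q^{c_{\mathrm{blowup}}^{\mathrm{SC}}})=\Gamma$ bounds, so membership holds, but only if the pair produced in its proof keeps the ground set and the indexed family fixed, as in \Cref{thm:app-set-cover-IKW}. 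I expect this to be the main obstacle: \Cref{thm:intro-set-cover} does not state an explicit neighboring-witness clause. I would extract one from its proof, which pads to exact size with a fixed gadget and changes only memberships under source swaps, and then apply \Cref{lem:neighboring-witness}. A single-membership-toggle path between the witnesses stays within the bounded family, since each intermediate instance has the same supports up to one element.

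Finally, \Cref{lem:ds-padded-recovery} gives $C_T=C_R=1$ and drift $D=2$, so \Cref{cor:sensitivity-pullback-lb} yields
\[
    \SwapSens(A')\ge \Omega(N_{\mathrm{SC}}^{\delta^{\mathrm{SC}}_{\mathrm{sens}}})-2=\Omega(n^\delta)
\]
with $\delta:=\delta^{\mathrm{SC}}_{\mathrm{sens}}$.
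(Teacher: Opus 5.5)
Your proposal is correct and uses the same chain as the paper: \Cref{thm:intro-set-cover}, then $\mathcal T_{\mathrm{DS}}$ with star-forest padding to exactly $n$ vertices, then \Cref{cor:sensitivity-pullback-lb} with $C_T=C_R=1$ and $D=2$. The genuine difference is how you calibrate $q$, and therefore how you keep the witness pair inside the degree-$\Delta$ regime.

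The paper takes $q_\Delta$ only as large as the approximation transfer needs, $q_\Delta\asymp(\log\Delta)^{2\beta}$ with $2\beta c_{\mathrm{blowup}}^{\mathrm{SC}}<1$. The set-cover blowup $M_\Delta^{\mathrm{SC}}$ then sits far below $\Gamma$. The slack $M_\Delta^{\mathrm{SC}}+1\le\Gamma$ absorbs a membership toggle, so the paper runs the reduction on $\mathcal H_{N_{\mathrm{SC}}}\cup\partial\mathcal H_{N_{\mathrm{SC}}}$ without inspecting which memberships change.

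You instead saturate the budget, $\exp(q^{c_{\mathrm{blowup}}^{\mathrm{SC}}})=\Gamma$. This reaches the endpoint exponent $\beta=1/(2c_{\mathrm{blowup}}^{\mathrm{SC}})$ when the hidden constant is small. For an arbitrary constant, both routes need $\beta<1/(2c_{\mathrm{blowup}}^{\mathrm{SC}})$; the paper conservatively takes $\beta<1/(4c_{\mathrm{blowup}}^{\mathrm{SC}})$. Your choice also keeps $C$ independent of the hidden constant. The cost is zero slack, so the containment step you flagged carries all the weight.

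Your stated reason for containment, that intermediate instances have ``the same supports up to one element,'' is not the operative fact. With zero slack, one extra element already gives a set of size $\Gamma+1$ and a vertex of degree $\Delta+1$, and one-element drifts along a long path could accumulate. The argument you need is as follows.

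\begin{itemize}
\item In the proof of \Cref{thm:intro-set-cover}, the large-$\EMD$ neighboring pair for the set-cover algorithm lies on a shortest toggle path between $I_{\mathrm{pad}}$ and $\widetilde I_{\mathrm{pad}}$, the images of a neighboring label-cover pair.
\item Such a path toggles only coordinates where the endpoints differ. So every intermediate incidence matrix lies coordinatewise between the two endpoints.
\item In $\mathcal T_{\mathrm{SC}}$, for every source instance, each $S_{u,y}$ is contained in $\bigcup_{e\ni u}\{e\}\times B$ and each $S_{v,x}$ in $\bigcup_{e\ni v}\{e\}\times B$. An element $(e,b)$ can belong only to the at most $|\Sigma_U|+|\Sigma_V|$ sets indexed by labels of the endpoints of $e$.
\item Hence every intermediate instance has set sizes at most $K_*|B_{\mathrm{sys}}|\le\Lambda_*=\Gamma$ and frequencies at most $|\Sigma_U|+|\Sigma_V|\le\Gamma$. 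It also has the same padding gadget and the same index set.
\end{itemize}

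So the whole path lies in $\mathcal Y^{\mathrm{SC}}_{N_{\mathrm{SC}},\Gamma}$. It may pass through infeasible instances, but on those $A$ and $\mathcal R^{\mathrm{pad}}_{\mathrm{DS},J}$ are still defined and $H_J$ still has maximum degree at most $\Delta$. With this made explicit, your pullback goes through.
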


Combined with the sensitivity-to-locality transfer theorem \cite{fleming2026sensitivity}, \Cref{thm:intro-dominating-set} immediately implies \Cref{thm:intro-locality-dominating-set}, with the only wrinkle being that the upper bound $\Delta \le \exp(C(\log n)^\kappa)$ can be removed at the cost of replacing the constant $\beta$ in the approximation factor $\log^{\beta}(\Delta)$ with the worse constant $\beta\kappa$.

To prove \Cref{thm:intro-dominating-set}, we start from the exact-size set-cover hard family supplied by \Cref{thm:intro-set-cover}, choose $q_\Delta$ from the target degree $\Delta$, set $\Gamma:=\Delta-1$, and then apply the common reduction and exact-size padding above.
The parameter choice is made so that the lifted dominating-set approximation matches the $O(\sqrt q)$ threshold in \Cref{thm:intro-set-cover}. 

\begin{proof}[Proof of \Cref{thm:intro-dominating-set}]
    Let $B_{\log}^{\mathrm{SC}},c_{\mathrm{blowup}}^{\mathrm{SC}},C_{\mathrm{approx}}^{\mathrm{SC}},\delta_{\mathrm{sens}}^{\mathrm{SC}}>0$ be the constants from \Cref{thm:intro-set-cover}.
    Since \Cref{thm:intro-set-cover} remains valid after decreasing the exponent $B_{\log}^{\mathrm{SC}}$, we may assume
    \[
        0 < B_{\log}^{\mathrm{SC}} < 1.
    \]
    Set
    \[
        \delta := \min\left\{\delta_{\mathrm{sens}}^{\mathrm{SC}},\frac12\right\},
    \]
    and choose
    \[
        \beta \in \left(0,\min\left\{\frac12,\frac{1}{4c_{\mathrm{blowup}}^{\mathrm{SC}}},\frac{B_{\log}^{\mathrm{SC}}}{4}\right\}\right),
        \qquad
        \kappa := \frac{B_{\log}^{\mathrm{SC}}}{2},
    \]
    so in particular $\beta,\delta,\kappa\in(0,1)$.
    Let $C_{\mathrm{approx}}>0$ absorb the hidden constant in the assumed $O((\log \Delta)^\beta)$-approximation guarantee for dominating set.
    Let $C_{|\mathcal F|}\ge 1$ be an absolute constant witnessing the linear-size bound $|\mathcal F|=O(N_{\mathrm{SC}})$ in \Cref{thm:intro-set-cover}, and define
    \[
        C_{\mathrm{pad}} := 8(C_{|\mathcal F|}+1),
        \qquad
        C_{\mathrm{lift}}^{\mathrm{DS}} := C_{|\mathcal F|} + 2C_{\mathrm{pad}} + 3.
    \]

    Fix a sufficiently large $n$ and an integer $\Delta$ with
    \[
        \Delta_0 \le \Delta \le \exp(C(\log n)^\kappa),
    \]
    where $\Delta_0$ is chosen large enough, and $C>0$ small enough, so that the estimates below all hold.
    Set
    \[
        \Gamma := \Delta-1,
        \qquad
        q_\Delta := \max\left\{2,\left\lceil \left(\frac{64 C_{\mathrm{approx}} C_{\mathrm{lift}}^{\mathrm{DS}}}{C_{\mathrm{approx}}^{\mathrm{SC}}}\right)^2 (\log \Delta)^{2\beta}\right\rceil\right\},
        \qquad
        M_{\Delta}^{\mathrm{SC}} := \left\lceil \exp(q_\Delta^{c_{\mathrm{blowup}}^{\mathrm{SC}}})\right\rceil.
    \]
    Because $2\beta c_{\mathrm{blowup}}^{\mathrm{SC}}<1$, enlarging $\Delta_0$ if necessary ensures
    \[
        M_{\Delta}^{\mathrm{SC}} + 1 \le \Gamma.
    \]
    The upper bound $\Delta \le \exp(C(\log n)^\kappa)$, the choice $\kappa=B_{\log}^{\mathrm{SC}}/2$, and the inequality $\beta<1/2$ imply
    \[
        q_\Delta \le (\log n)^{B_{\log}^{\mathrm{SC}}/2}
    \]
    after possibly decreasing $C$ and increasing $\Delta_0$, because $2\beta\kappa = \beta B_{\log}^{\mathrm{SC}} < B_{\log}^{\mathrm{SC}}/2$.
    Therefore, once we set
    \[
        N_{\mathrm{SC}} := \left\lfloor \frac{n}{C_{\mathrm{pad}}}\right\rfloor,
    \]
    we also have
    \[
        q_\Delta \le (\log N_{\mathrm{SC}})^{B_{\log}^{\mathrm{SC}}}
    \]
    after increasing the lower threshold on $n$ if necessary.

    Apply \Cref{thm:intro-set-cover} to the constant function $q\equiv q_\Delta$ at ground-set size $N_{\mathrm{SC}}$.
    Unpacking the proof of that theorem, we may choose the resulting family of hard instances $\mathcal H_{N_{\mathrm{SC}}}$ so that all instances have universe size $N_{\mathrm{SC}}$, share the same indexed family size
    \[
        m_{\mathrm{SC}}:=|\mathcal F| \le C_{|\mathcal F|} N_{\mathrm{SC}},
    \]
    differ only in their membership relations, every set has size at most $M_{\Delta}^{\mathrm{SC}}$, every element belongs to at most $M_{\Delta}^{\mathrm{SC}}$ sets, and every $C_{\mathrm{approx}}^{\mathrm{SC}}\sqrt{q_\Delta}$-approximation algorithm on those instances has sensitivity $\Omega(N_{\mathrm{SC}}^\delta)$.

    Let
    \[
        \partial \mathcal H_{N_{\mathrm{SC}}}
        :=
        \{\widetilde I : \exists I\in \mathcal H_{N_{\mathrm{SC}}}\text{ with }\SwapDist(I,\widetilde I)=1\}
    \]
    be the distance-$1$ neighborhood of this family under one membership toggle.
    Since a single membership toggle changes the size of at most one set and the frequency of at most one element by at most $1$, and since $M_{\Delta}^{\mathrm{SC}}+1\le \Gamma$, every instance in $\mathcal H_{N_{\mathrm{SC}}}\cup \partial \mathcal H_{N_{\mathrm{SC}}}$ still has maximum set size and maximum frequency at most $\Gamma$.

    Fix the target size $n_{\mathrm{DS}}:=n$.
    For each $I\in \mathcal H_{N_{\mathrm{SC}}}\cup \partial \mathcal H_{N_{\mathrm{SC}}}$, let
    \[
        G_I:=\mathcal T_{\mathrm{DS}}(I)
    \]
    with parameter $\Gamma$.
    The common reduction facts above give
    \[
        \Delta_{\mathrm{fin}}:=\Gamma+1=\Delta,
        \qquad
        n_{\mathrm{core}}
        :=
        |V(G_I)|
        =
        N_{\mathrm{SC}} + m_{\mathrm{SC}} + \left\lceil \frac{m_{\mathrm{SC}}}{\Gamma}\right\rceil
        \le
        (1+2C_{|\mathcal F|})N_{\mathrm{SC}}+1.
    \]
    Since $N_{\mathrm{SC}}=\lfloor n/C_{\mathrm{pad}}\rfloor$ and $C_{\mathrm{pad}}=8(C_{|\mathcal F|}+1)$, we may assume $n_{\mathrm{core}}\le n/2$ for every such $I$ after increasing the lower threshold on $n$ if necessary.
    We may also assume
    \[
        n \le 2 C_{\mathrm{pad}} N_{\mathrm{SC}}.
    \]
    Let
    \[
        t:=n_{\mathrm{DS}}-n_{\mathrm{core}}
    \]
    and
    \[
        H_I := G_I \sqcup \mathcal P_{\Delta,t}.
    \]
    All padded graphs $H_I$ therefore have exactly $n_{\mathrm{DS}}=n$ vertices and maximum degree at most $\Delta$.

    Suppose there exists an algorithm $A'$ that, on every $n$-vertex graph of maximum degree at most $\Delta$, outputs a dominating set almost surely and satisfies
    \[
        \E[|A'(G)|]
        \le
        C_{\mathrm{approx}}(\log \Delta)^\beta\,\mathrm{ds}(G)
    \]
    for every such graph $G$.
    We convert $A'$ into a set-cover algorithm $A$ on all ground sets of size $N_{\mathrm{SC}}$ as follows.
    If $J\in \mathcal H_{N_{\mathrm{SC}}}\cup \partial \mathcal H_{N_{\mathrm{SC}}}$, construct $H_J$, run $A'$ on $H_J$, and then apply the padded recovery map $\mathcal R_{\mathrm{DS},J}^{\mathrm{pad}}$.
    If $J$ lies outside $\mathcal H_{N_{\mathrm{SC}}}\cup \partial \mathcal H_{N_{\mathrm{SC}}}$ and is feasible, let $A(J)$ be an optimal set cover; if $J$ is infeasible, define $A(J)$ arbitrarily.

    We now verify that $A$ is a $C_{\mathrm{approx}}^{\mathrm{SC}}\sqrt{q_\Delta}$-approximation algorithm on every feasible input of ground-set size $N_{\mathrm{SC}}$.
    Let $J$ be such an input.
    If $J\notin \mathcal H_{N_{\mathrm{SC}}}\cup \partial \mathcal H_{N_{\mathrm{SC}}}$, then $A(J)$ is optimal by definition.

    Assume henceforth that $J\in \mathcal H_{N_{\mathrm{SC}}}\cup \partial \mathcal H_{N_{\mathrm{SC}}}$.
    Every set of $J$ has size at most $\Gamma$, so
    \[
        \mathrm{sc}(J)\ge \frac{N_{\mathrm{SC}}}{\Gamma}.
    \]
    Since $m_{\mathrm{SC}}\le C_{|\mathcal F|} N_{\mathrm{SC}}$ and $\mathrm{sc}(J)\ge 1$, it follows that
    \[
        \frac{m_{\mathrm{SC}}}{\Gamma}+1
        \le
        C_{|\mathcal F|}\frac{N_{\mathrm{SC}}}{\Gamma}+1
        \le
        (C_{|\mathcal F|}+1)\mathrm{sc}(J).
    \]
    Moreover,
    \[
        \mathrm{ds}(\mathcal P_{\Delta,t})
        =
        \left\lceil \frac{t}{\Delta}\right\rceil
        \le
        \frac{n}{\Delta}+1
        \le
        \frac{2C_{\mathrm{pad}}N_{\mathrm{SC}}}{\Gamma+1}+1
        \le
        (2C_{\mathrm{pad}}+1)\mathrm{sc}(J),
    \]
    because $\Delta=\Gamma+1$ and $n\le 2C_{\mathrm{pad}}N_{\mathrm{SC}}$.
    Combining these estimates with
    \[
        \mathrm{ds}(G_J)\le \mathrm{sc}(J)+\frac{m_{\mathrm{SC}}}{\Gamma}+1
    \]
    gives
    \[
        \mathrm{ds}(H_J)
        =
        \mathrm{ds}(G_J)+\mathrm{ds}(\mathcal P_{\Delta,t})
        \le
        C_{\mathrm{lift}}^{\mathrm{DS}}\mathrm{sc}(J).
    \]
    Since $A'(H_J)$ is a dominating set of $H_J$ almost surely, \Cref{lem:ds-padded-recovery} implies that $A(J)$ is a set cover of $J$ almost surely and
    \[
        \E[|A(J)|]
        \le
        \E[|A'(H_J)|]
        \le
        C_{\mathrm{approx}}(\log \Delta)^\beta\,\mathrm{ds}(H_J)
        \le
        C_{\mathrm{approx}} C_{\mathrm{lift}}^{\mathrm{DS}}(\log \Delta)^\beta \cdot \mathrm{sc}(J).
    \]
    By the definition of $q_\Delta$,
    \[
        C_{\mathrm{approx}} C_{\mathrm{lift}}^{\mathrm{DS}}(\log \Delta)^\beta \le C_{\mathrm{approx}}^{\mathrm{SC}}\sqrt{q_\Delta}.
    \]
    Thus $A$ is a $C_{\mathrm{approx}}^{\mathrm{SC}}\sqrt{q_\Delta}$-approximation algorithm on every feasible set-cover instance of ground-set size $N_{\mathrm{SC}}$.

    Applying \Cref{thm:intro-set-cover} to $A$, we obtain a feasible instance $I\in \mathcal H_{N_{\mathrm{SC}}}$ with
    \[
        \SwapSens(A,I)\ge \Omega(N_{\mathrm{SC}}^\delta).
    \]
    By \Cref{lem:neighboring-witness}, there exists a neighboring instance $\widetilde I\in \partial\mathcal H_{N_{\mathrm{SC}}}$, obtained from $I$ by toggling one membership relation, such that
    \[
        \EMD(A(I),A(\widetilde I))\ge \Omega(N_{\mathrm{SC}}^\delta).
    \]

    Since $I$ and $\widetilde I$ live on the same ground set and indexed family, \Cref{lem:ds-padded-recovery} applies to their padded graphs.
    It yields transformation constant $C_T=1$, fixed-source Lipschitz constant $C_R=1$, and one-toggle drift at most $D=2$.
    Therefore \Cref{cor:sensitivity-pullback-lb} yields
    \[
        \SwapSens(A')\ge \Omega(N_{\mathrm{SC}}^\delta)
    \]
    on $n$-vertex graphs of maximum degree at most $\Delta$.
    Since $N_{\mathrm{SC}}=\Theta(n)$, this implies
    \[
        \SwapSens(A')\ge \Omega(n^\delta),
    \]
    which proves the theorem.
\end{proof}
 
\bibliographystyle{abbrv}
\bibliography{main}

\end{document}